\author{Sumeet Khatri}
\date{\today}
\newcommand{\I}{\text{i}}
\numberwithin{equation}{chapter}
\begin{document}

\hypersetup{pageanchor=false}

\begin{titlepage}

	\begin{center}
		{\fontsize{16}{19}\selectfont\bfseries TOWARDS A GENERAL FRAMEWORK FOR PRACTICAL QUANTUM NETWORK PROTOCOLS}\vspace*{\fill}
		
		{A Dissertation\\[0.5cm]Submitted to the Graduate Faculty of the\\Louisiana State University and\\Agricultural and Mechanical College\\in partial fulfillment of the\\requirements for the degree of\\Doctor of Philosophy\\[0.5cm]in\\[0.5cm] The Department of Physics and Astronomy}\vspace*{\fill}
		
		{by\\Sumeet Khatri\\B.Sc., University of Waterloo, 2014\\M.Sc., University of Waterloo, 2016\\May 2021}
		
	\end{center}
	
\end{titlepage}

\hypersetup{pageanchor=true}

\begin{frontmatter} 
\addtocounter{page}{1}
\pagestyle{plain}

\chapter*{ACKNOWLEDGMENTS}
\addcontentsline{toc}{chapter}{ACKNOWLEDGMENTS}

		I start by thanking Jonathan Dowling, under whose guidance and encouragement a large portion of the research that this thesis is based on was conducted. He got me interested in quantum networks, and he also provided me with several opportunities to travel to conferences. By working with Jon, I learned that the serious endeavor of research can be fun and productive at the same time. 

	I also thank Mark Wilde for his patience, support, and guidance throughout my time here.
Mark showed great confidence in me and gave me the opportunity to discover and pursue my
interests, and also provided me with many opportunities to travel to conferences and to collaborate
with researchers outside of LSU.
	
	The Quantum Science and Technologies group at LSU has been a great place to work for the past four years, and I thank all members of the group, past and present, for making it so. I have learned a lot from the friends I have made here, and they have all enriched my life in innumerable ways. I am especially thankful to the following individuals for their friendship, valuable discussions about work and life, and fun times: Sushovit Adhikari, Anthony Brady, Siddhartha Das, Kevin Valson Jacob, Vishal Katariya, Eneet Kaur, Kunal Sharma, Elisha Siddiqui, and Chenglong You. I also had the greatest pleasure of working with Aliza Siddiqui, Ren\'{e}e Desporte, Manon Bart, and Corey Matyas. I also thank the other graduate students in the Physics and Astronomy Department at LSU for their friendship, particularly Anshuman Bhardwaj, Sahil Saini, Karunya Shirali, Siddharth Soni, and Eklavya Thareja.
	
	During the summer of 2018, I had the opportunity to attend the Quantum Computing Summer School at Los Alamos National Laboratory (LANL). I thank the organizers Patrick Coles, Lukasz Cincio, and Carelton Coffrin, as well as all of the students at the summer school for a memorable experience. I especially thank Ryan LaRose and Alexander Poremba for a great research collaboration. I also thank Patrick for having me back at LANL for two research visits in 2019.

	In November 2019, I went to Brazil to deliver lectures on quantum information theory. I thank Mark for encouraging me to pursue this opportunity. I also thank B\'{a}rbara Amaral and Rafael Chaves for hosting me at the International Institute of Physics in Natal, and B\'{a}rbara especially for her hospitality and for showing me around Natal outside working hours. I acknowledge the American Physical Society for providing funding for the trip through the Brazil--U.S. Student \& Postdoc Visitation Program.
	
	I also thank all of the researchers outside of LSU that I had the pleasure of collaborating with during my PhD: Gerardo Adesso, David Ding, Marco Cerezo, Lukasz Cincio, Patrick Coles, Saikat Guha, Ludovico Lami, Ryan LaRose, Alexander Poremba, Yihui Quek, Peter Shor, George Siopsis, Andrew Sornborger, and Xin Wang.
	
	For financial support during my PhD, I acknowledge the National Science Foundation, Grant No. 1714215. I also acknowledge funding over the past three years from the Natural Sciences and Engineering Research Council of Canada Postgraduate Scholarship. 
	
	
	Finally, I thank my family for their constant love and support, and for allowing me to pursue my interests. None of this would be possible without them.

\newpage





\tableofcontents 
	\cleardoublepage\phantomsection 
	
\newpage




\chapter*{ABSTRACT}
\addcontentsline{toc}{chapter}{ABSTRACT}

	The quantum internet is one of the frontiers of quantum information science. It will revolutionize the way we communicate and do other tasks, and it will allow for tasks that are not possible using the current, classical internet. The backbone of a quantum internet is entanglement distributed globally in order to allow for such novel applications to be performed over long distances. Experimental progress is currently being made to realize quantum networks on a small scale, but much theoretical work is still needed in order to understand how best to distribute entanglement and to guide the realization of large-scale quantum networks, and eventually the quantum internet, especially with the limitations of near-term quantum technologies. This work provides an initial step towards this goal. The main contribution of this thesis is a mathematical framework for entanglement distribution protocols in a quantum network, which allows for discovering optimal protocols using reinforcement learning. We start with a general development of quantum decision processes, which is the theoretical backdrop of reinforcement learning. Then, we define the general task of entanglement distribution in a quantum network, and we present ground- and satellite-based quantum network architectures that incorporate practical aspects of entanglement distribution. We combine the theory of decision processes and the practical quantum network architectures into an overall entanglement distribution protocol. We also define practical figures of merit to evaluate entanglement distribution protocols, which help to guide experimental implementations.

\newpage

\end{frontmatter}

\cleardoublepage 

\begin{mainmatter}

\chapter{INTRODUCTION}\label{chap-introduction}

	Quantum mechanics is the study of physical phenomena that occur at the atomic and molecular levels. It is well known by now that superposition, entanglement, and tunneling are phenomena that arise at such microscopic levels and are not predicted by the theory of classical mechanics. The existence of these phenomena have profound consequences for science, technology, and humanity as a whole, one of which is the possibility of a \textit{quantum internet}, which is the subject of this thesis.
	
	The quantum internet \cite{Kim08,Sim17,Cast18,Wehn+18,Dowling_book2} is envisioned to be a global-scale interconnected network of devices that exploit these uniquely quantum-mechanical phenomena, particularly superposition and entanglement. By operating in tandem with today's internet, it will allow people all over the world to perform \textit{quantum communication} tasks.
	
	``Quantum communication'', broadly speaking, refers to the task of sending information that is encoded into the states of a quantum-mechanical system. The study of quantum communication arises from the merging of the fields of information theory and quantum mechanics, often called quantum information theory, or quantum information science more broadly. The main idea is to think of quantum systems as carriers of information. Then, just as the ``bit'' is the basic unit of information in classical information theory, the ``qubit'' is the basic unit of information in quantum information theory. From an information-theoretic perspective, a qubit is \textit{any} two-level quantum system, and the details of how the qubit is realized in practice are typically not important. Examples of physical realizations of qubits include the ground state and first excited state of an atom, or a single photon in one of two modes of light.
	
	One of the most remarkable practical applications in quantum communication, and one of the primary use cases of the quantum internet in the near term, is quantum key distribution (QKD) \cite{BB84,Eke91,GRG+02,SBPC+09,XXQ+20,PAB+19}. QKD is a quantum communication protocol that uses qubits to enable secure (classical) communication. QKD offers, in principle, unconditional (information-theoretic) security, even against adversaries with a quantum computer. This level of security is important because most current global communication systems rely on encryption protocols that are secure only under computational hardness assumptions, and the discovery of Shor \cite{Shor94,Shor97,MVZJ18} tells us that a quantum computer is capable of breaking them. 
	
	We are now at the point where we can begin to realize a global-scale quantum internet. Indeed, with several metropolitan-scale QKD systems already in place \cite{PPA+09,CWL+10,MP10,SLB+11,SFI+11,WCY+14,BLL+18,ZXCPP18}, and with the development of quantum computers proceeding at a steady pace \cite{LSB+19,BCMS19,arute2019quantum}, the time is right to begin transitioning to quantum communication systems before quantum computers render current communication systems defenseless \cite{Mos15,GM18,MP19}. In addition to QKD, a quantum internet will allow for the execution of other quantum-information-processing tasks, such as quantum teleportation \cite{BBC+93,Vaidman94,BFK00}, quantum clock synchronization \cite{JADW00,Preskill00,UD02,DB+18}, distributed quantum computation \cite{CEHM99}, distributed quantum metrology and sensing \cite{DRC17,ZZS18,XZCZ19}. A quantum internet will also allow for exploring fundamental physics \cite{Bruschi+14}, and for forming an international time standard \cite{KKBJ+14}.
	
	Building the quantum internet is a major challenge. All of the aforementioned tasks make use of shared entanglement between distant locations on the earth, which is typically distributed using single-photonic qubits sent through either the atmosphere or optical fibers. These schemes require reliable single-photon sources, quantum memories with high coherence times, and quantum gate operations with low error. From the theoretical side, it is well known that optical signals transmitted through either the atmosphere or optical fibers undergo an exponential decrease in the transmission success probability with distance \cite{SveltoBook,KJK_book,KGMS88_book}, limiting direct transmission distances to roughly hundreds of kilometers. Therefore, one of the central research questions in the theory of quantum networks is how to overcome this exponential loss and thus to distribute entanglement over long distances efficiently and at high rates. 
	
	The initial theoretical proposal \cite{BDC98,DBC99,SSR+11} for long-distance entanglement distribution involves placing devices called \textit{quantum repeaters} at intermediate points between the end points of a communication line, whose task is to mitigate the effects of loss and noise along the communication line, thereby making the quantum information transmission more reliable. A vast body of literature now exists on a variety of theoretical quantum repeater schemes \cite{BDC98,DBC99,DLCZ01,CJKK07,SRM+07,SSM+07,BPv11,ZDB12,KKL15,ZBD16,EKB16,WZM+16,LZH+17,VK17,MMG19,ZPD+18,PWD18,DKD18,WPZD19,PD19,GI19b,GI19c,HBE20,GKL+03,RHG05,JTN+09,FWH+10,MSD+12,MKL+14,NJKL16,MLK+16,MEL17}. (See also Refs.~\cite{SSR+11,MATN15,VanMeter_book} and the references therein.). All of these proposals deal almost exclusively with a single transmission line connecting a sender and a receiver. However, for a quantum internet, we need to go beyond a single transmission line, and we need to consider multiple transmission lines operating in parallel. A proper theoretical framework needs to be established in order to guide real-world implementations. The first step in this direction is to consider small-scale quantum networks.
	
	\begin{figure}
		\centering
		\includegraphics[width=0.8\textwidth]{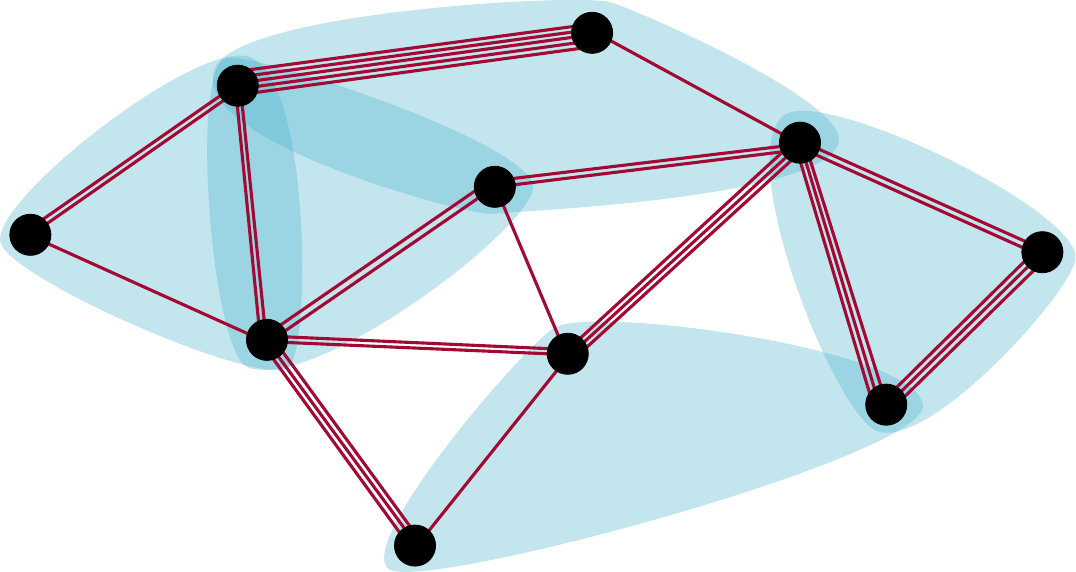}
		\caption{Representation of a quantum network as a hypergraph. The vertices represent the nodes (senders and receivers) of the network, and the edges represent entangled states shared by the corresponding nodes. (We provide a brief overview of graph theory in Section~\ref{sec-graph_theory}.) Edges between two nodes (shown in red) represent bipartite entanglement, while hyperedges (consisting of more than two nodes and indicated by a blue bubble) represent multipartite entanglement. Nodes can be connected by multiple edges, indicating that they can share multiple entangled states simultaneously.}\label{fig-network_model}
	\end{figure}
	
	As shown in Figure~\ref{fig-network_model}, a quantum network can be modeled as a graph. The vertices of the graph represent the senders and receivers in the network, and the edges represent \textit{elementary links}, which are entangled states shared by the corresponding nodes. The edges can be between two nodes only, as indicated by the red lines, or they can be hyperedges connecting three or more nodes, as indicated by the blue bubbles. Groups of nodes can be connected by more than one edge, and in this case the graph is called a multigraph. Multiple edges between nodes are shown explicitly in Figure~\ref{fig-network_model} for two-node edges, although we can also have multiple hyperedges between a set of adjacent nodes. Each of these edges is regarded as a distinct edge in the graph. The goal of entanglement distribution in a quantum network is to use the elementary links to form \textit{virtual links}, i.e., entangled states shared by nodes that are not physically connected to each other by elementary links.
	
	Considering a quantum network such as the one in Figure~\ref{fig-network_model}, as opposed to just one line between a sender and a receiver, is a much more complicated setting that leads to questions about, e.g., routing \cite{GI17,SMI+17,GI18,PKT+19,HPE19,CRDW19,Pir19,Pir19b,LLLC20,LBD+20} and multicast communication (simultaneous communication between several senders and receivers). Consequently, protocols in a general quantum network can be much more varied than protocols along a linear chain of nodes. As done in classical networking, it is possible to develop a so-called ``quantum network stack'', which divides the various steps of a quantum network protocol into distinct layers of functionality \cite{AME11,PWD18,PD19,DSM+19}. Along these lines, quantum network protocols have been described from an information-theoretic perspective in Refs.~\cite{AML16,AK17,BA17,RKB+18,Pir19,Pir19b,DBWH19}, and limits on communication in quantum networks have been explored in Refs.~\cite{BCHW15,AML16,STW16,TSW17,LP17,AK17,BA17,CM17,RKB+18,BAKE20,Pir19,Pir19b,DBWH19}. Linear programs, and other techniques for obtaining optimal entanglement distribution rates in a quantum network, have been explored in \cite{BAKE20,DPW20,CERW20,GEW20}.
	
	In order to physically realize quantum networks, and the quantum internet more generally, the continual challenge to is to bridge theoretical statements about what can be achieved to statements that are directly useful for the purpose of implementation. This link between theory and reality should also take into account the limitations of current and near-term quantum technologies, which include imperfect sources of entanglement, quantum memories with relatively low coherence times, and imperfect measurements and gate operations. Many of the aforementioned theoretical works do not explicitly take these practical limitations into account. What is currently lacking is a formal theoretical framework for quantum network protocols that incorporates both the limitations of near-term quantum technologies and is general enough to allow for optimization of protocol parameters. The purpose of this thesis is to begin such a development.
	
	The core conceptual contribution of this thesis is to propose viewing entanglement distribution protocols in quantum networks from the lens of decision processes \cite{Put14_book}. In a decision process, an agent interacts with its environment through a sequence of actions, and it receives rewards from the environment based on these actions. The goal of the agent is to devise a policy that maximizes its expected total reward. In this thesis, we consider a particular quantum generalization of a decision process given in Ref.~\cite{BBA14} (see also Ref.~\cite{YY18}), called a \textit{quantum partially observable Markov decision process}. In such a decision process, the agent's action at each time step results in a transformation of the quantum state of the environment, and the agent receives both partial (classical) information about the new quantum state of the environment along with a reward.
	
	\begin{figure}
		\centering
		\includegraphics[width=0.95\textwidth]{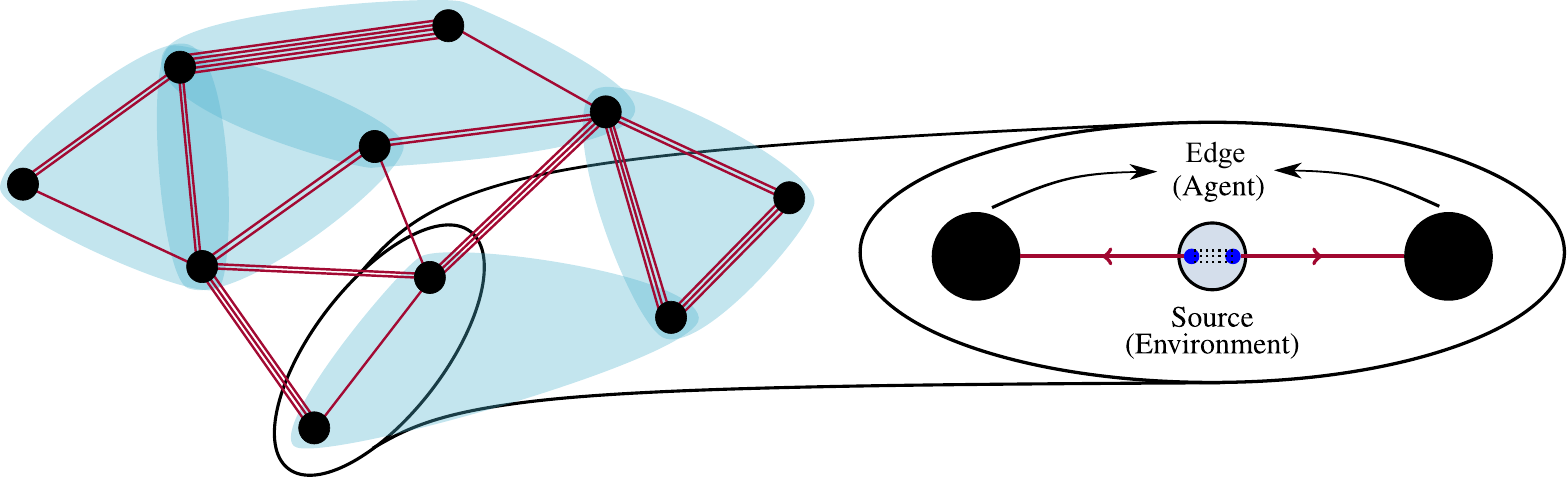}
		\caption{Given a quantum network represented as a graph, in this thesis we develop a quantum network protocol based on decision processes in which we associate to every edge an agent. At every time step, the agent decides to either request an entangled state from a source station, or to keep the entangled state it currently has in its quantum memory. The collection of policies for all of the agents then forms an element of the overall quantum network protocol. The protocol is outlined in detail in Chapter~\ref{chap-network_QDP}.}\label{fig-network_agent_env_intro}
	\end{figure}
	
	Using the language and formalism of quantum decision processes, in this thesis we propose a general framework for entanglement distribution protocols in quantum networks. The basic idea is illustrated in Figure~\ref{fig-network_agent_env_intro}. Given a quantum network represented as a graph, we associate to every edge an agent. At every time step, the agent decides to either request an entangled state from a source station, or to keep the entangled state it currently has in its quantum memory. A quantum network protocol then corresponds to a collection of policies for the agents.
	
	One advantage of the approach taken in this thesis is based on the fact that, in the classical setting, decision processes form the theoretical foundation for reinforcement learning \cite{Sut18_book} and artificial intelligence \cite{RN09_book}. Therefore, due to this connection between decision processes and reinforcement learning in the classical case, the main message of this thesis is that reinforcement learning can be used to discover optimal quantum network protocols, and the developments of this thesis provide the tools needed to do so. (See Ref.~\cite{WMDB19} for related work on machine learning for quantum communication.) Another advantage of our approach is that, even though reinforcement learning techniques cannot always be applied efficiently to large-scale problems, decision processes provide us with a systematic framework for combining optimal small-scale protocols in order to create large-scale protocols; see Ref.~\cite{JTKL07} for similar ideas. Our framework also allows for a systematic consideration of agents with local and global knowledge of the network, as well as agents that are independent and/or cooperate with each other. We do not go through all of these aspects in detail in this thesis; however, we provide the starting point for the future exploration of some of these ideas in Appendix~\ref{sec-future_work}.
	
	The remainder of this thesis is organized as follows.
	\begin{itemize}
		\item \textit{Chapter~\ref{chap-prelim}.} We provide a brief introduction to quantum mechanics from an information-theoretic perspective. In particular, the concept of LOCC (``local operations and classical communication'') quantum channels plays a crucial role in the developments of this thesis, and we outline several examples of such channels as they arise in quantum communication and quantum networks. We also provide brief introductions to quantum key distribution and graph theory.
		 
		\item \textit{Chapter~\ref{chap-QDP}.} We develop the theory of quantum decision processes in which both the agent and the environment are described by quantum systems. This development extends prior work \cite{BBA14,YY18} on quantum partially observable Markov decision processes. Due to the wide applicability of quantum decision processes in quantum information processing tasks (as explained in Section~\ref{sec-QDP_q_info_tasks}), the material of this chapter is expected to be of independent interest outside of the quantum network context.
		
		\item \textit{Chapter~\ref{chap-network_setup}.} We present a mathematical model for describing quantum networks, as well as quantum states and channels in a quantum network. We also present a theoretical framework for entanglement distribution in a quantum network based on graph transformations, as in Refs.~\cite{SMI+17,CRDW19}, and we present practical ground- and satellite-based network architectures that take into account the limitations of near-term quantum technologies. 
		
			The content of this chapter is based on the work of Refs.~\cite{DKD18,KMSD19,KBD+19}.
		
		\item \textit{Chapter~\ref{chap-network_QDP}.} Using the formalism of quantum decision processes developed in Chapter~\ref{chap-QDP}, and based on the task of entanglement distribution as outlined in Chapter~\ref{chap-network_setup}, in this chapter we provide an explicit quantum network protocol using quantum decision processes to model elementary link generation under practical settings. We also define practical figures of merit for evaluating policies.
		
			The content of this chapter is based on the work of Ref.~\cite{Kha20,KMSD19}.
		
		\item \textit{Chapter~\ref{chap-mem_cutoff}.} We consider an explicit example of a policy for elementary link generation, called the memory-cutoff policy. This is a natural policy to consider for near-term quantum networks, and it has been considered extensively in prior work, although not explicitly in the language of quantum decision processes. For this policy, we provide expressions for the key elements of the corresponding quantum decision process, and we also evaluate the figures of merit defined in Chapter~\ref{chap-network_QDP}. 
		
			The content of this chapter is based on the work of Refs.~\cite{KMSD19,Kha20}.
		
		\item \textit{Chapter~\ref{chap-sats}.} In this chapter, we bring together all of the elements considered in this thesis and provide a complete example of elementary link generation using satellite-to-ground photon transmission. We first provide estimates for secret key rates for quantum key distribution as a function of satellite altitude and ground distance separation. Then, we consider the memory-cutoff policy for elementary link generation. Finally, we perform policy optimization using the algorithms described in Chapter~\ref{chap-QDP}.
		
			The content of this chapter is based on the work of Ref.~\cite{KBD+19}.
	\end{itemize}

\chapter{PRELIMINARIES}\label{chap-prelim}

	We start with a brief exposition of some of the background material needed for this thesis. We start in Section~\ref{sec-states_channels} with a summary of the basics of quantum mechanics from an information-theoretic perspective, with definitions of quantum states, measurements, and channels. The concept of local operations and classical communication (LOCC) plays an important role in quantum communication and quantum networks, so we devote Section~\ref{sec-LOCC_channels} to a detailed explanation of LOCC channels, with three specific examples that are relevant for communication in a quantum network. Another important element of quantum networks, particularly from a practical perspective, is quantum key distribution (QKD), which is a protocol for secure quantum communication (see, e.g., Refs.~\cite{GRG+02,SBPC+09,XXQ+20,PAB+19} for reviews). In Section~\ref{sec-QKD}, we provide a brief overview of the basics of both device-dependent and device-independent QKD protocols. Finally, because quantum networks can be visualized using (mathematical) graphs, in Section~\ref{sec-graph_theory}, we provide a summary of the basic definitions and concepts of graph theory as they pertain to the contents of this thesis.

\section{Quantum states, measurements, and channels}\label{sec-states_channels}
	
	The formal, mathematical treatment of the theory of quantum mechanics was initiated by von Neumann \cite{Neu32_book,Neu18_book_new} and Dirac \cite{Dirac_book}, who presented an axiomatic approach to quantum theory. An advantage of such an axiomatic approach, especially for theoretical work, is that the specific details pertaining to the physical systems involved are not important. This approach has proved to be very fruitful in quantum information and quantum computing, and it is now the standard approach to teaching these subjects; see Refs.~\cite{NC00_book,Hol12_book,Wilde17_book,Wat18_book,KW20_book}, which we also refer to for detailed treatments of all of the topics presented in this section.

	The foundation on which the mathematics of quantum mechanics is built is the Hilbert space. A Hilbert space is an inner product space (vector space equipped with an inner product) that is complete (see, e.g., Ref.~\cite{Reed_Simon_book} for details). Throughout this thesis, we are concerned only with finite-dimensional complex Hilbert spaces, which means that every $d$-dimensional Hilbert space, $d\geq 1$, can be thought of as the vector space $\mathbb{C}^d$ equipped with the Euclidean inner product\footnote{All finite-dimensional inner product spaces are trivially complete, and thus are Hilbert spaces.}. The theory of linear algebra and matrix analysis (see, e.g., Refs.~\cite{Axler_book,Horn_Johnson_book}) are therefore sufficient for our purposes.
	
	We denote elements of $\mathbb{C}^d$ by $\ket{\psi}$, and every such $\ket{\psi}\in\mathbb{C}^d$ can be written as
	\begin{equation}
		\ket{\psi}=\sum_{j=0}^{d-1}\alpha_j\ket{j},
	\end{equation}
	where $\alpha_j\in\mathbb{C}$ and $\{\ket{j}\}_{j=0}^{d-1}$ is the (standard) orthonormal basis for $\mathbb{C}^d$, whose elements can be represented as columns vectors as follows:
	\begin{equation}
		\ket{0}=\begin{pmatrix} 1\\ 0\\0\\ \vdots\\ 0\end{pmatrix},\quad\ket{1}=\begin{pmatrix}0\\1\\0\\\vdots\\0\end{pmatrix},\quad\dotsb,\quad\ket{d-1}=\begin{pmatrix}0\\0\\0\\\vdots\\1\end{pmatrix}.
	\end{equation}
	The Euclidean inner product of $\ket{\psi}$ and another vector $\mathbb{C}^d\ni\ket{\phi}=\sum_{j=0}^{d-1}\beta_j\ket{j}$ is
	\begin{equation}
		\braket{\phi}{\psi}=\sum_{j=0}^{d-1}\conj{\alpha_j}\beta_j,
	\end{equation}
	where $\conj{\alpha_j}$ denotes the complex conjugate of $\alpha_j$.

\subsubsection*{Quantum states}

	To every quantum system, call it $A$, we associate a Hilbert space, which we denote by $\mathcal{H}_A$, and we denote its (finite) dimension by $d_A$. A \textit{qudit} is a quantum system with $d\geq 2$ levels and associated Hilbert space $\mathbb{C}^d$. When $d=2$, we use the term \textit{qubit}, and when $d=3$, we use the term \textit{qutrit}.

	Given a quantum system $A$, the wavefunctions that describe the system are unit-norm vectors (often called \textit{state vectors}) in the Hilbert space, and we denote them by $\ket{\psi}_A$. Note that the system label $A$ is placed in the subscript of the vector, which is helpful when dealing with composite quantum systems. Every state vector has the form
	\begin{equation}\label{eq-state_vector}
		\ket{\psi}_A=\sum_{k=0}^{d-1}\alpha_k\ket{k},\quad \alpha_k\in\mathbb{C},\quad\sum_{k=0}^{d-1}\abs{\alpha_k}^2=1.
	\end{equation}
	
	More generally, the state of a quantum system is given by a \textit{density operator}, which is a linear operator that is positive semi-definite and has unit trace.
	
	\begin{definition}[Quantum state]
		Given a quantum system $A$, the \textit{quantum state} of $A$ is given by a density operator acting on $\mathcal{H}_A$: a linear operator $\rho_A:\mathcal{H}_A\to\mathcal{H}_A$ that is positive semi-definite and has unit trace, i.e., $\rho_A\geq 0$ and $\Tr[\rho_A]=1$. We let $\Density(\mathcal{H}_A)$ denote the set of all quantum states of the system $A$.~\defqed
	\end{definition}
	
	Unit-rank density operators of the form $\rho_A=\ket{\psi}\bra{\psi}_A$, where $\ket{\psi}_A$ is a state vector as in \eqref{eq-state_vector}, are called \textit{pure states}. We often use the abbreviation $\psi_A=\ket{\psi}\bra{\psi}_A$. In general, every density operator $\rho_A$ can be decomposed as a convex combination of pure states, i.e.,
	\begin{equation}\label{eq-state_decomp_pure}
		\rho_A=\sum_{x\in\mathcal{X}} p(x)\ket{\psi^x}\bra{\psi^x}_A,
	\end{equation}
	where $\mathcal{X}$ is some finite set, $\{\ket{\psi^x}\}_{x\in\mathcal{X}}$ is a set of state vectors, and $p:\mathcal{X}\to[0,1]$ is a probability mass function, meaning that $0\leq p(x)\leq 1$ for all $x\in\mathcal{X}$ and $\sum_{x\in\mathcal{X}}p(x)=1$. The decomposition in \eqref{eq-state_decomp_pure} can be thought of as arising from a probabilistic preparation of the quantum system $A$, such that the system $A$ is prepared in the pure state $\psi_A^x$ with probability $p(x)$, for some $x\in\mathcal{X}$, but the person in possession of the quantum system does not know $x$. The state of the quantum system is therefore the average (expectation) with respect to all $x\in\mathcal{X}$.

	Bipartite quantum systems are quantum systems with two constituent subsystems. Multipartite quantum systems are quantum systems with two or more constituent subsystems. The Hilbert space of a multipartite quantum system is the tensor product of the Hilbert spaces of the constituent subsystems. Specifically, if $A_1,A_2,\dotsc,A_k$ are $k\geq 2$ distinct quantum systems, then the composite $k$-partite quantum system containing all $k$ systems is denoted by $A_1A_2\dotsb A_k$, and its associated Hilbert space is
	\begin{equation}
		\mathcal{H}_{A_1A_2\dotsb A_k}=\mathcal{H}_{A_1}\otimes\mathcal{H}_{A_2}\otimes\dotsb\otimes\mathcal{H}_{A_k}.
	\end{equation}
	If $\{\ket{j_1}_{A_1}\}_{j_1=0}^{d_{A_1}-1}$, $\{\ket{j_2}_{A_2}\}_{j_2=0}^{d_{A_2}-1}$, \ldots $\{\ket{j_k}_{A_k}\}_{j_k=0}^{d_{A_k}-1}$ are orthonormal bases for $\mathcal{H}_{A_1},\mathcal{H}_{A_2},\dotsc,\mathcal{H}_{A_k}$, then
	\begin{equation}
		\left\{\ket{j_1}_{A_1}\otimes\ket{j_2}_{A_2}\otimes\dotsb\otimes\ket{j_k}_{A_k}:0\leq j_1\leq d_{A_1}-1,\,0\leq j_2\leq d_{A_2}-1,\dotsc,0\leq j_k\leq d_{A_k}-1\right\}
	\end{equation}
	is an orthonormal basis for $\mathcal{H}_{A_1A_2\dotsb A_k}$. For brevity, we often write
	\begin{equation}
		\ket{j_1,j_2,\dotsc,j_k}_{A_1A_2\dotsb A_k}\equiv\ket{j_1}_{A_1}\otimes\ket{j_2}_{A_2}\otimes\dotsb\otimes\ket{j_k}_{A_k}.
	\end{equation}
	
	\begin{definition}[Fidelity~\cite{Uhl76}]
		Given two quantum states $\rho$ and $\sigma$, the \textit{fidelity} between them is defined to be
		\begin{equation}
			F(\rho,\sigma)\coloneqq\left(\Tr\left[\sqrt{\sqrt{\rho}\sigma\sqrt{\rho}}\right]\right)^2.~\defqedspec
		\end{equation}
	\end{definition}
	
	The fidelity quantifies the closeness of two quantum states. In particular, $F(\rho,\sigma)=1$ if and only if $\rho=\sigma$, and $F(\rho,\sigma)=0$ if and only if $\rho$ and $\sigma$ are supported on orthogonal subspaces. If one of the states, say $\sigma$, is pure, then it is straightforward to show that
	\begin{equation}
		F(\rho,\ket{\psi}\bra{\psi})=\bra{\psi}\rho\ket{\psi}.
	\end{equation}
	The fidelity is also \textit{multiplicative}, meaning that
	\begin{equation}\label{eq-fidelity_multiplicative}
		F(\rho_1\otimes\rho_2,\sigma_1\otimes\sigma_2)=F(\rho_1,\sigma_1)F(\rho_2,\sigma_2),
	\end{equation}
	for all states $\rho_1,\rho_2,\sigma_1,\sigma_2$.
	
	\begin{definition}[Separable and entangled states]
		Let $A$ and $B$ be two quantum systems. We say that a state $\rho_{AB}\in\Density(\mathcal{H}_{AB})$ is \textit{separable} if it can be written in the form
		\begin{equation}\label{eq-separable_state}
			\rho_{AB}=\sum_{x\in\mathcal{X}}p(x) \sigma_A^x\otimes\tau_B^x,
		\end{equation}
		where $\mathcal{X}$ is some finite set, $\{\sigma_A^x\}_{x\in\mathcal{X}}$ and $\{\tau_B^x\}_{x\in\mathcal{X}}$ are sets of quantum states, and $p:\mathcal{X}\to[0,1]$ is a probability mass function. If $\rho_{AB}$ does not have a decomposition as in \eqref{eq-separable_state}, then $\rho_{AB}$ is \textit{entangled}.~\defqed
	\end{definition}
	
	An important example of a set of entangled states is the four two-qubit \textit{Bell states} $\Phi_{AB}^{\pm}=\ket{\Phi^{\pm}}\bra{\Phi^{\pm}}_{AB}$ and $\Psi_{AB}^{\pm}=\ket{\Psi^{\pm}}\bra{\Psi^{\pm}}_{AB}$, where
	\begin{align}
		\ket{\Phi^{\pm}}_{AB}&\coloneqq\frac{1}{\sqrt{2}}\left(\ket{0,0}_{AB}\pm\ket{1,1}_{AB}\right),\\
		\ket{\Psi^{\pm}}_{AB}&\coloneqq\frac{1}{\sqrt{2}}\left(\ket{0,1}_{AB}\pm\ket{1,0}_{AB}\right).
	\end{align}
	
	In the case of two-qubit states (as well as qubit-qutrit states), it is well known that a quantum state $\rho_{AB}$ is separable if and only if its \textit{partial transpose} $\rho_{AB}^{\t_B}$ is positive semi-definite \cite{Peres96,HHH96}, where the partial transpose is defined as follows:
	\begin{equation}
		\rho_{AB}=\sum_{i,j,k,\ell=0}^1\alpha_{\substack{i,j\\k,\ell}}\ket{i,j}\bra{k,\ell}_{AB}\quad\Longrightarrow\quad\rho_{AB}^{\t_B}=\sum_{i,j,k,\ell=0}^1\alpha_{\substack{i,j\\k,\ell}}\ket{i,\ell}\bra{k,j}_{AB}.
	\end{equation}
	In other words, $\rho_{AB}$ is entangled if and only if the partial transpose $\rho_{AB}^{\t_B}$ contains a negative eigenvalue\footnote{The same statement holds if one instead considers the partial transpose $\rho_{AB}^{\t_A}$ with respect to the system $A$.}. We emphasize that this is known to be true only for qubit-qubit and qubit-qutrit states. In general, if a quantum state $\rho_{AB}$ is separable, then its partial transpose is positive semi-definite, but the converse statement does not necessarily hold.
	 
	The \textit{log-negativity} \cite{ZHS+98,VW02,Plen05} is an entanglement measure that can be used to quantify the extent to which the partial transpose of a quantum state $\rho_{AB}$ has negative eigenvalues. It is defined as
	\begin{equation}\label{eq-log_negativity}
		E_N(\rho_{AB})=\log_2\left(\norm{\rho_{AB}^{\t_B}}_1\right),
	\end{equation}
	where $\norm{X}_1\coloneqq\Tr[\sqrt{X^\dagger X}]$ is the \textit{trace norm} of a linear operator $X$, which is equivalent to the sum of the singular values of $X$. As a result of the previous paragraph, the log-negativity is a faithful entanglement measure for all qubit-qubit and qubit-qutrit states: if $A$ and $B$ are both qubit systems, or if one is a qubit and the other a qutrit, then a state $\rho_{AB}$ is entangled if and only if $E_N(\rho_{AB})>0$.
		
	A simple, yet important set of quantum states that we consider in Chapter~\ref{chap-sats} of this thesis is the set of two-qubit Bell-diagonal states:
	\begin{equation}\label{eq-Bell_diag_state}
		\rho_{AB}=p_1\Phi_{AB}^++p_2\Phi_{AB}^-+p_3\Psi_{AB}^++p_4\Psi_{AB}^-,
	\end{equation}
	where $0\leq p_1,p_2,p_3,p_4\leq 1$ and $p_1+p_2+p_3+p_4=1$. In this case, it is straightforward to show that the log-negativity is
	\begin{equation}\label{eq-log_neg_Bell_diag}
		E_N(\rho_{AB})=\log_2\left(\frac{1}{2}\abs{1-2p_1}+\frac{1}{2}\abs{1-2p_2}+\frac{1}{2}\abs{1-2p_3}+\frac{1}{2}\abs{1-2(p_1+p_2+p_3)}\right).
	\end{equation}
	In the case that
	\begin{equation}\label{eq-Bell_diag_spec}
		p_1=\alpha+\beta,\quad p_2=\alpha-\beta,\quad p_3=p_4=\gamma,
	\end{equation}
	the log-negativity simplifies to
	\begin{equation}\label{eq-log_neg_Bell_diag_spec}
		E_N(\rho_{AB})=\log_2\left(\abs{\frac{1}{2}-(\alpha+\beta)}+\frac{1}{2}+\alpha+\beta\right).
	\end{equation}
	This means that, if the coefficients of the Bell-diagonal state in \eqref{eq-Bell_diag_state} are constrained as in \eqref{eq-Bell_diag_spec}, then $\rho_{AB}$ is entangled if and only if $\alpha+\beta>\frac{1}{2}$. Observing that $\alpha+\beta=p_1=\bra{\Phi^+}\rho_{AB}\ket{\Phi^+}$, we see that the entanglement of $\rho_{AB}$ is given simply by its fidelity to the state $\Phi^+$; i.e., $\rho_{AB}$ is entangled if and only if $F(\rho_{AB},\Phi_{AB}^+)>\frac{1}{2}$.
	
	A type of quantum state that we frequently encounter in this thesis is a classical-quantum state.
	
	\begin{definition}[Classical-quantum state]
		A quantum state $\rho_{XB}$ is called a \textit{classical-quantum state} if it is of the form
		\begin{equation}\label{eq-classical_quantum_state}
			\rho_{XA}=\sum_{x\in\mathcal{X}}p(x)\ket{x}\bra{x}_X\otimes\rho_A^x,
		\end{equation}
		where $\mathcal{X}$ is a finite set, $p:\mathcal{X}\to[0,1]$ is a probability mass function, and $\{\rho_A^x\}_{x\in\mathcal{X}}$ is a set of quantum states.~\defqed
	\end{definition}
	
	Classical-quantum states can be used to model scenarios in which classical information accompanies the state of a quantum system. Specifically, if a quantum system $A$ is prepared in a state from the set $\{\rho_A^x\}_{x\in\mathcal{X}}$ according to the probability distribution given by $p:\mathcal{X}\to[0,1]$, then knowledge of the label $x\in\mathcal{X}$ is stored in the classical register $X$.

\subsubsection*{Quantum measurements}

	A measurement of a quantum system is a procedure that is used to extract classical information from the system. Mathematically, measurements in quantum mechanics are defined as follows.
	
	\begin{definition}[Quantum measurement]\label{def-quantum_measurement}
		Let $A$ be a quantum system with associated Hilbert space $\mathcal{H}_A$. A \textit{measurement} of $A$ is defined by a finite set $\{M_A^x\}_{x\in\mathcal{X}}$ of linear operators, called a \textit{positive operator-valued measure (POVM)}, that satisfies the following two properties.
		\begin{itemize}
			\item $M_A^x \geq 0$ for all $x\in\mathcal{X}$;
			\item $\displaystyle \sum_{x\in\mathcal{X}} M_A^x=\mathbbm{1}_A$.
		\end{itemize}
		Elements of the set $\mathcal{X}$ label the possible outcomes of the measurement. Given a state $\rho_A$, the probability of obtaining the outcome $x\in\mathcal{X}$ is given by the Born rule as $\Tr[M_A^x\rho_A]$.~\defqed
	\end{definition}
	
	A special case of a measurement as defined above is the measurement of an observable (i.e., a Hermitian operator). Let $R$ be a Hermitian operator. From linear algebra, it is known that $R$ can be diagonalized, meaning that it has a \textit{spectral decomposition}:
	\begin{equation}
		R=\sum_{x\in\mathcal{X}}\lambda_x\Pi^x,
	\end{equation}
	where $\mathcal{X}$ is some finite set, $\{\lambda_x\}_{x\in\mathcal{X}}$ are the (distinct) eigenvalues of $R$, and $\Pi^x$ are the corresponding spectral projections, which satisfy $\sum_{x\in\mathcal{X}}\Pi^x=\mathbbm{1}$ and $\Pi^x\Pi^{x'}=\delta_{x,x'}\Pi^x$ for all $x,x'\in\mathcal{X}$. The measurement of $R$ is then described mathematically by the POVM $\{\Pi^x\}_{x\in\mathcal{X}}$.

\subsubsection*{Quantum channels}

	A quantum channel is a mathematical description of the evolution of a quantum system. Let $\Lin(\mathcal{H}_A)$ denote the vector space of linear operators acting on the Hilbert space $\mathcal{H}_A$. A linear map $\mathcal{T}:\Lin(\mathcal{H}_A)\to\Lin(\mathcal{H}_B)$ is often called a \textit{superoperator}, and it is such that
	\begin{equation}
		\mathcal{T}(\alpha X+\beta Y)=\alpha\mathcal{T}(X)+\beta\mathcal{T}(Y)
	\end{equation}
	for all $\alpha,\beta\in\mathbb{C}$ and all $X,Y\in\Lin(\mathcal{H}_A)$. It is often helpful to explicitly indicate the input and output Hilbert spaces of a superoperator $\mathcal{T}:\Lin(\mathcal{H}_A)\to\Lin(\mathcal{H}_B)$ by writing $\mathcal{T}_{A\to B}$. The identity superoperator is denoted by $\id_A$, and it satisfies $\id_A(X)=X$ for all $X\in\Lin(\mathcal{H}_A)$. A quantum channel is a particular kind of superoperator.
	
	\begin{definition}[Quantum channel]
		A quantum channel $\mathcal{N}_{A\to B}$ is a linear, \textit{completely positive}, and \textit{trace-preserving} superoperator acting on the vector space $\Lin(\mathcal{H}_A)$ of linear operators of the Hilbert space $\mathcal{H}_A$ of the quantum system $A$. Given an input state $\rho_A$ of the system $A$, the output is the state of a new quantum system $B$ given by $\mathcal{N}_{A\to B}(\rho_A)$.
		\begin{itemize}
			\item A superoperator $\mathcal{N}$ is \textit{completely positive} if the map $\id_{k}\otimes\mathcal{N}$ is positive for all $k\in\mathbb{N}$, where $\id_k:\Lin(\mathbb{C}^k)\to\Lin(\mathbb{C}^k)$ is the identity superoperator. In other words, $(\id_k\otimes\mathcal{N})(X)\geq 0$ for all linear operators $X\geq 0$.
			\item A superoperator $\mathcal{N}$ is \textit{trace preserving} if $\Tr[\mathcal{N}(X)]=\Tr[X]$ for all linear operators $X$.~\defqed
		\end{itemize}
	\end{definition}
	\medskip
	\begin{theorem}[Choi~\cite{Choi75}, Kraus~\cite{Kraus_book}, and Stinespring~\cite{Stinespring55}]
		Given two Hilbert spaces $\mathcal{H}_A$ and $\mathcal{H}_B$, a superoperator $\mathcal{N}:\Lin(\mathcal{H}_A)\to\Lin(\mathcal{H}_B)$ is a quantum channel if and only if:
		\begin{itemize}
			\item (Choi) The operator
				\begin{equation}
					\Gamma_{AB}^{\mathcal{N}}\coloneqq (\id_A\otimes\mathcal{N}_{A'\to B})(\Gamma_{AA'})
				\end{equation}
				is positive semi-definite and $\Tr_{B}[\Gamma_{AB}^{\mathcal{N}}]=\mathbbm{1}_A$, where
				\begin{equation}
					\Gamma_{AA'}\coloneqq\sum_{i,i'=0}^{d_A-1}\ket{i,i}\bra{i',i'}_{AA'},
				\end{equation}
				and $A'$ is a system with the same dimension as $A$.
				
			\item (Kraus) There exists a set $\{K_i:\Lin(\mathcal{H}_A)\to\Lin(\mathcal{H}_B)\}_{i=1}^r$ such that
				\begin{equation}
					\mathcal{N}_{A\to B}(X_A)=\sum_{i=1}^r K_iX_AK_i^\dagger
				\end{equation}
				for all linear operators $X_A$, where $r\geq\text{rank}(\Gamma_{AB}^{\mathcal{N}})$ and $\sum_{i=1}^r K_i^\dagger K_i=\mathbbm{1}_A$.
				
			\item (Stinespring) There exists a linear operator $V:\mathcal{H}_A\to\mathcal{H}_B\otimes\mathcal{H}_E$ satisfying $V^\dagger V=\mathbbm{1}_A$, with $d_E\geq\text{rank}(\Gamma_{AB}^{\mathcal{N}})$, such that
				\begin{equation}
					\mathcal{N}_{A\to B}(X_A)=\Tr_E[VX_AV^\dagger]
				\end{equation}
				for all linear operators $X_A$.
		\end{itemize}
	\end{theorem}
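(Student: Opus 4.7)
The plan is to establish the cyclic chain of implications $\mathcal{N}$ CPTP $\implies$ Choi $\implies$ Kraus $\implies$ Stinespring $\implies \mathcal{N}$ CPTP, which proves each of the three conditions is individually equivalent to $\mathcal{N}$ being a quantum channel. The first step is immediate from the definition: since $\Gamma_{AA'} = \ket{\Gamma}\bra{\Gamma}_{AA'}$ with $\ket{\Gamma}_{AA'} = \sum_{i} \ket{i,i}_{AA'}$ is positive semi-definite, complete positivity of $\mathcal{N}_{A'\to B}$ forces $\Gamma_{AB}^{\mathcal{N}} \geq 0$, and trace preservation applied to the building blocks $\ket{i}\bra{j}_{A'}$ yields $\Tr_B[\Gamma_{AB}^{\mathcal{N}}] = \sum_{i,j}\ket{i}\bra{j}_A \Tr[\mathcal{N}(\ket{i}\bra{j}_{A'})] = \sum_i \ket{i}\bra{i}_A = \mathbbm{1}_A$.

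For Choi $\implies$ Kraus, I would apply the spectral theorem to the positive semi-definite operator $\Gamma_{AB}^{\mathcal{N}}$, writing it as $\sum_{i=1}^{r} \ket{\phi_i}\bra{\phi_i}_{AB}$ with $r = \text{rank}(\Gamma_{AB}^{\mathcal{N}})$. The standard vector--operator correspondence represents each $\ket{\phi_i}_{AB}$ uniquely as $(\mathbbm{1}_A \otimes K_i)\ket{\Gamma}_{AA'}$ for some linear operator $K_i : \mathcal{H}_A \to \mathcal{H}_B$. Substituting this decomposition into the standard inversion of the Choi map (which writes $\mathcal{N}(X)$ as a partial trace of $\Gamma_{AB}^{\mathcal{N}}$ against a transposed copy of $X$) then gives $\mathcal{N}_{A \to B}(X_A) = \sum_{i} K_i X_A K_i^\dagger$, and the partial trace condition $\Tr_B[\Gamma_{AB}^{\mathcal{N}}] = \mathbbm{1}_A$ translates immediately into the completeness relation $\sum_i K_i^\dagger K_i = \mathbbm{1}_A$. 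Any $r \geq \text{rank}(\Gamma_{AB}^{\mathcal{N}})$ is achieved by padding with zero Kraus operators.

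For Kraus $\implies$ Stinespring, I would construct the dilation explicitly: introduce an environment system $E$ with orthonormal basis $\{\ket{i}_E\}_{i=1}^{r}$ and define $V \coloneqq \sum_{i=1}^{r} K_i \otimes \ket{i}_E$. A direct computation gives $V^\dagger V = \sum_{i,j} K_i^\dagger K_j \braket{i}{j}_E = \sum_i K_i^\dagger K_i = \mathbbm{1}_A$, confirming $V$ is an isometry, and $\Tr_E[V X_A V^\dagger] = \sum_i K_i X_A K_i^\dagger = \mathcal{N}_{A\to B}(X_A)$; enlarging $E$ by trivial directions accommodates any $d_E \geq \text{rank}(\Gamma_{AB}^{\mathcal{N}})$. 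To close the cycle with Stinespring $\implies \mathcal{N}$ CPTP, trace preservation follows from cyclicity, $\Tr[V X V^\dagger] = \Tr[V^\dagger V X] = \Tr[X]$, while for complete positivity, conjugation by $\mathbbm{1}_k \otimes V$ preserves positivity of any $Y \geq 0$ on $\mathbb{C}^k \otimes \mathcal{H}_A$, as does the partial trace over $E$, so $\id_k \otimes \mathcal{N}$ is positive for every $k$.

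The main obstacle I anticipate is not conceptual but notational: the bookkeeping around the Choi--Jamio\l{}kowski correspondence requires careful tracking of which copy of $A$ plays the role of reference versus input, which factor carries the transpose, and in which basis it is fixed. A single clean statement of the identity $(M \otimes \mathbbm{1})\ket{\Gamma}_{AA'} = (\mathbbm{1} \otimes M^{\text{T}})\ket{\Gamma}_{AA'}$, anchored once in the standard basis, reduces every remaining step in the Choi $\Leftrightarrow$ Kraus passage to a short matrix manipulation; with that identity in hand, the entire theorem reduces to spectral decomposition plus the isometric dilation construction.
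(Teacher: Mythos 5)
Your cyclic chain of implications (CPTP $\Rightarrow$ Choi $\Rightarrow$ Kraus $\Rightarrow$ Stinespring $\Rightarrow$ CPTP) is correct and complete: each step is the standard argument, and the one place requiring care --- the Choi-to-Kraus passage via spectral decomposition, the vector--operator correspondence $\ket{\phi_i}_{AB}=(\mathbbm{1}_A\otimes K_i)\ket{\Gamma}_{AA'}$, and the inversion formula for recovering $\mathcal{N}$ from its Choi operator --- is handled properly, including the harmless transpose in deducing $\sum_i K_i^\dagger K_i=\mathbbm{1}_A$ from $\Tr_B[\Gamma_{AB}^{\mathcal{N}}]=\mathbbm{1}_A$. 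The paper itself states this theorem without proof, deferring to the cited references, and your argument is precisely the canonical one found there.
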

	\medskip
	\begin{remark}
		Our definition of a quantum channel is in the spirit of the axiomatic treatment of quantum mechanics. Another approach is to start with the Schr\"{o}dinger equation for (closed) quantum system evolution:
		\begin{equation}\label{eq-Schrodinger_eq}
			\I\hbar\frac{\text{d}}{\text{d}t}\ket{\psi(t)}=H\ket{\psi(t)},
		\end{equation}
		where $H$ is the Hamiltonian of the system and $\ket{\psi(t)}$ is the state vector of the system at time $t\in\mathbb{R}$. The corresponding equation for density operators is often known as the \textit{von Neumann equation}:
		\begin{equation}\label{eq-von_Neumann_eq}
			\I\hbar\frac{\text{d}\rho(t)}{\text{d}t}=[H,\rho(t)],
		\end{equation}
		where $[H,\rho(t)]=H\rho(t)-\rho(t) H$ is the commutator of $H$ and $\rho(t)$; see, e.g., Ref.~\cite{Sakurai_book}. With \eqref{eq-von_Neumann_eq} as the starting point, one can obtain both the Kraus and Stinespring forms of a quantum channel; see Refs.~\cite{Nakajima58,Zwanzig60,GKS76,Lin76}, as well as Refs.~\cite{Kok15,Lidar19,MM20,BP_book} for pedagogical reviews.~\defqed
	\end{remark}
	
	A simple example of a quantum channel is the \textit{partial trace}, which corresponds to the physical operation of discarding a quantum system. Given a linear operator $X_{AB}$, the partial trace over $B$ is a quantum channel denoted by $\Tr_B$, and it has the following Kraus representation:
	\begin{equation}
		\Tr_B[X_{AB}]\coloneqq\sum_{j=0}^{d_B-1} (\mathbbm{1}_A\otimes\bra{j}_B)X_{AB}(\mathbbm{1}_A\otimes\ket{j}_B).
	\end{equation}
	Similarly, the partial trace over $A$ is a quantum channel denoted by $\Tr_A$, and it has the following Kraus representation:
	\begin{equation}
		\Tr_A[X_{AB}]\coloneqq\sum_{j=0}^{d_A-1}(\bra{j}_A\otimes\mathbbm{1}_B)X_{AB}(\ket{j}_A\otimes\mathbbm{1}_B).
	\end{equation}
	We often write $X_A\equiv\Tr_B[X_{AB}]$ and $X_B\equiv\Tr_A[X_{AB}]$ to denote the operators at the output of the partial trace channels over $B$ and $A$, respectively.
	
	Another example of a quantum channel, which arises very frequently in this thesis, is the quantum instrument channel.
	
	\begin{definition}[Quantum instrument]\label{def-quantum_instrument}
		A \textit{quantum instrument} is a finite set $\{\mathcal{M}^x\}_{x\in\mathcal{X}}$ of completely positive trace non-increasing maps\footnote{A trace non-increasing map $\mathcal{N}_{A\to B}$ satisfies $\Tr[\mathcal{N}_{A\to B}(X_A)]\leq\Tr[X_A]$ for all positive semi-definite linear operators $X_A$.} such that the sum $\sum_{x\in\mathcal{X}}\mathcal{M}^x$ is a trace-preserving map, and thus a quantum channel. The \textit{quantum instrument channel} $\mathcal{M}$ associated to the quantum instrument $\{\mathcal{M}^x\}_{x\in\mathcal{X}}$ is defined as
		\begin{equation}\label{eq-quantum_instrument_channel}
			\mathcal{M}(\cdot)\coloneqq\sum_{x\in\mathcal{X}} \ket{x}\bra{x}\otimes\mathcal{M}^x(\cdot).~\defqedspec
		\end{equation}
	\end{definition}
	
	A quantum instrument $\{\mathcal{M}^x\}_{x\in\mathcal{X}}$ can be thought of as a generalized form of a measurement, in which the completely positive maps $\mathcal{M}^x$ represent the evolution of the quantum system conditioned on the outcome $x$. The trace non-increasing property of the maps $\mathcal{M}^x$ represents the fact that the outcome $x$ occurs probabilistically. Specifically, the probability of obtaining the outcome $x$ is equal to $\Tr[\mathcal{M}^x(\rho)]$, which can be thought of as a generalized form of the Born rule given in Definition~\ref{def-quantum_measurement}. The quantum instrument channel in \eqref{eq-quantum_instrument_channel} can be thought of as an operation that stores both the outcome $x$ of the instrument in the classical register as well as the corresponding output state.

\subsection{LOCC channels}\label{sec-LOCC_channels}

	Consider two parties, Alice and Bob, who are spatially separated. Suppose that they have the ability to perform arbitrary quantum operations (quantum channels, measurements, instruments) in their respective labs and that they are connected by a classical communication channel. It is often the case that Alice and Bob are also connected by a quantum channel and/or share an entangled quantum state, and their task is to make use of these \textit{resources} as sparingly as possible in order to accomplish their desired goal. Their local operations and classical communication (abbreviated ``LOCC'') can be used freely to help with achieving the goal\footnote{From a resource-theoretic perspective, LOCC channels are regarded as the free operations of the resource theory of entanglement; see, e.g., Refs.~\cite{HHHH09,CG19} for more information about the resource theory of entanglement and resources theories more generally.}, which could be feedback-assisted quantum communication \cite{BDSW96} (see also Ref.~\cite{KW20_book}), which includes quantum teleportation and entanglement swapping \cite{BBC+93,Vaidman94,BFK00,ZZH93}, or it could be entanglement distillation \cite{BDSW96}. In the network setting, the task is \textit{repeater-assisted quantum communication}, and we explain this in detail in Chapter~\ref{chap-network_setup}. For now, we focus on the basic mathematical definition of an LOCC channel and provide some examples of LOCC channels. For more mathematical details about LOCC channels, we refer to Ref.~\cite{CLM+14}.
	
	\begin{definition}[LOCC channel]
		An \textit{LOCC channel} $\mathcal{L}_{AB\to\hat{A}\hat{B}}$ is a quantum channel with input quantum systems $A$ and $B$ and output quantum systems $\hat{A}$ and $\hat{B}$ that has the form
		\begin{equation}\label{eq-LOCC_channel}
			\mathcal{L}_{AB\to\hat{A}\hat{B}}(\rho_{AB})\coloneqq\sum_{x\in\mathcal{X}}(\mathcal{S}_{A\to\hat{A}}^x\otimes\mathcal{T}_{B\to\hat{B}}^x)(\rho_{AB})
		\end{equation}
		for all quantum states $\rho_{AB}$. Here, $\{\mathcal{S}^x\}_{x\in\mathcal{X}}$ and $\{\mathcal{T}^x\}_{x\in\mathcal{X}}$ are completely positive trace non-increasing maps such that the sum $\sum_{x\in\mathcal{X}}\mathcal{S}^x\otimes\mathcal{T}^x$ is a trace-preserving map, and thus a quantum channel.~\defqed
	\end{definition}
	
	\begin{remark}
		Quantum channels with the form shown in \eqref{eq-LOCC_channel} are known more generally as \textit{separable channels}. It is important to remark that not all separable channels are LOCC channels. Only those separable channels that can be realized by an LOCC protocol, as described below, are LOCC channels. See Ref.~\cite[Section 3.2.11]{KW20_book} for an example of a separable channel that is not an LOCC channel.~\defqed
	\end{remark}
	
	In order to understand why LOCC channels are defined in this way, consider the scenario shown in Figure~\ref{fig-LOCC}, which is an \textit{LOCC protocol} over $t$ rounds. The corresponding channel from $A_0B_0$ to $A_tB_t$ can then be derived as follows.
	
	\begin{figure}
		\centering
		\includegraphics[width=0.95\textwidth]{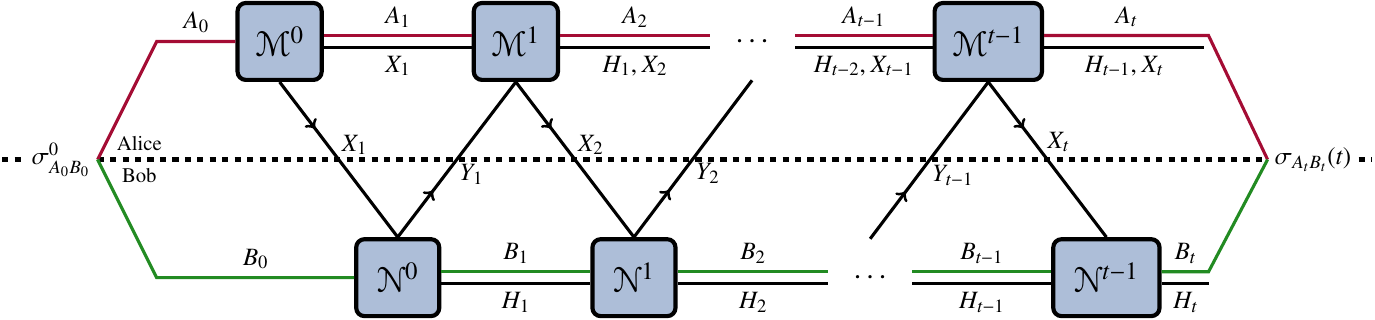}
		\caption{Depiction of a $t$-round LOCC protocol between Alice and Bob. The channels $\mathcal{M}^0,\mathcal{M}^1,\dotsc,\mathcal{M}^{t-1}$ represent Alice's local operations, the channels $\mathcal{N}^0,\mathcal{N}^1,\dotsc,\mathcal{N}^{t-1}$ represent Bob's local operations, and the registers $H_1,H_2,\dotsc,H_t$ represent the classical communication to and from Alice and Bob. The final quantum state $\sigma_{A_tB_t}(t)$ shared by Alice and Bob has the form shown in \eqref{eq-LOCC_protocol_final_state}.}\label{fig-LOCC}
	\end{figure}
	
	We start with the initial state $\sigma_{A_0B_0}^0$ shared by Alice and Bob. Then, in the first round, Alice acts on her system $A_0$ with a quantum instrument channel $\mathcal{M}_{A_0\to X_1A_1}^0$, which leads to the following output:
	\begin{equation}
		\sigma_{A_0B_0}^0\mapsto \mathcal{M}_{A_0\to X_1A_1}(\sigma_{A_0B_0}^0)=\sum_{x_1\in\mathcal{X}_1}\ket{x_1}\bra{x_1}_{X_1}\otimes\mathcal{M}_{A_0\to A_1}^{0;x_1}(\sigma_{A_0B_0}^0),
	\end{equation}
	where $\mathcal{X}_1$ is a finite set and $\{\mathcal{M}_{A_0\to A_1}^{0;x_1}\}_{x_1\in\mathcal{X}_1}$ is a quantum instrument. The classical register $X_1$ is then communicated to Bob, who applies the conditional quantum instrument channel given by
	\begin{equation}
		\mathcal{N}_{X_1B_0\to X_1Y_1B_1}^0(\ket{x_1}\bra{x_1}_{X_1}\otimes \tau_{B_0})=\ket{x_1}\bra{x_1}_{X_1}\otimes\mathcal{N}_{B_0\to Y_1B_1}^{0;x_1}(\tau_{B_0}),
	\end{equation}
	where $\mathcal{N}_{B_0\to Y_1B_1}^{0;x_1}$ is a quantum instrument channel, i.e.,
	\begin{equation}
		\mathcal{N}_{B_0\to Y_1B_1}^{0;x_1}(\tau_{B_0})=\sum_{y_1\in\mathcal{Y}_1}\ket{y_1}\bra{y_1}_{Y_1}\otimes\mathcal{N}_{B_0\to B_1}^{0;x_1,y_1}(\tau_{B_0}),
	\end{equation}
	with $\{\mathcal{N}_{B_0\to B_1}^{0;x_1,y_1}\}_{y_1\in\mathcal{Y}_1}$ being a quantum instrument, i.e., every $\mathcal{N}_{B_0\to B_1}^{0;x_1,y_1}$ is a completely positive trace non-increasing map and $\sum_{y_1\in\mathcal{Y}_1}\mathcal{N}_{B_0\to B_1}^{0;x_1,y_1}$ is a trace-preserving map. Bob sends the outcome $y_1\in\mathcal{Y}_1$ of the quantum instrument to Alice. This completes the first round, and the quantum state shared by Alice and Bob is
	\begin{align}
		\widehat{\sigma}_{X_1Y_1A_1B_1}(1)&\coloneqq\left(\mathcal{M}_{A_0\to X_1A_1}^0\otimes\mathcal{N}_{X_1B_0\to X_1Y_1B_1}^0\right)(\sigma_{A_0B_0}^0)\\
		&=\sum_{\substack{x_1\in\mathcal{X}_1\\y_1\in\mathcal{Y}_1}}\ket{x_1,y_1}\bra{x_1,y_1}_{X_1Y_1}\otimes\left(\mathcal{M}_{A_0\to A_1}^{0;x_1}\otimes\mathcal{N}_{B_0\to B_1}^{0;x_1,y_1}\right)(\sigma_{A_0B_0}^0)\\
		&=\sum_{\substack{x_1\in\mathcal{X}_1\\y_1\in\mathcal{Y}_1}}\ket{x_1,y_1}\bra{x_1,y_1}_{X_1Y_1}\otimes\widetilde{\sigma}_{A_1B_1}(1;x_1,y_1),
	\end{align}
	where in the last line we let
	\begin{equation}
		\widetilde{\sigma}_{A_1B_1}(1;x_1,y_1)\coloneqq\left(\mathcal{M}_{A_0\to A_1}^{0;x_1}\otimes\mathcal{N}_{B_0\to B_1}^{0;x_1,y_1}\right)(\sigma_{A_0B_0}^0).
	\end{equation}
	
	Now, in the second round, Alice applies the conditional quantum instrument channel given by
	\begin{equation}
		\mathcal{M}_{X_1Y_1A_1\to X_1Y_1X_2A_2}^1(\ket{x_1,y_1}\bra{x_1,y_1}_{X_1Y_1}\otimes\rho_{A_1})\coloneqq\ket{x_1,y_1}\bra{x_1,y_1}_{X_1Y_1}\otimes\mathcal{M}_{A_1\to X_2A_2}^{1;x_1,y_1}(\rho_{A_1}),
	\end{equation}
	where $\mathcal{M}_{A_1\to X_2A_2}^{1;x_1,y_1}$ is a quantum instrument channel in which the underlying quantum instrument $\{\mathcal{M}_{A_1\to A_2}^{1;x_1,y_1,x_2}\}_{x_2\in\mathcal{X}_2}$ is conditioned on the histories $\{(x_1,y_1):x_1\in\mathcal{X}_1,\,y_1\in\mathcal{Y}_1\}$ of hers and Bob's prior outcomes. She sends the outcome $x_2\in\mathcal{X}_2$ of the quantum instrument to Bob, who then applies the conditional quantum instrument channel given by
	\begin{multline}
		\mathcal{N}_{X_1Y_1X_2B_1\to X_1Y_1X_2Y_2B_2}^1(\ket{x_1,y_1,x_2}\bra{x_1,y_1,x_2}_{X_1Y_1X_2}\otimes\tau_{B_1})\\\coloneqq\ket{x_1,y_1,x_2}\bra{x_1,y_1,x_2}_{X_1Y_1X_2}\otimes\mathcal{N}_{B_1\to Y_2B_2}^{1;x_1,y_1,x_2}(\tau_{B_1}),
	\end{multline}
	where $\mathcal{N}_{B_1\to Y_2B_2}^{1;x_1,y_1,x_2}$ is a quantum instrument channel in which the underlying quantum instrument $\{\mathcal{N}_{B_1\to B_2}^{1;x_1,y_1,x_2,y_2}\}_{y_2\in\mathcal{Y}_2}$ depends on the prior outcomes $x_1\in\mathcal{X}_1,\,y_1\in\mathcal{Y}_1,\,x_2\in\mathcal{X}_2$. The outcome of the instrument is $y_2\in\mathcal{Y}_2$, so that, at the end of the second round, the state shared by Alice and Bob is
	\begin{align}
		\widehat{\sigma}_{X_1Y_1X_2Y_2A_2B_2}(2)&\coloneqq\sum_{\substack{x_1\in\mathcal{X}_1\\y_1\in\mathcal{Y}_1}}\sum_{\substack{x_2\in\mathcal{X}_2\\y_2\in\mathcal{Y}_2}}\ket{x_1,y_1,x_2,y_2}\bra{x_1,y_1,x_2,y_2}_{X_1Y_1X_2Y_2}\nonumber\\
		&\qquad\qquad\qquad\otimes\left(\mathcal{M}_{A_1\to A_2}^{1;x_1,y_1,x_2}\circ\mathcal{M}_{A_0\to A_1}^{0;x_1}\otimes\mathcal{N}_{B_1\to B_2}^{1;x_1,y_1,x_2,y_2}\circ\mathcal{N}_{B_0\to B_1}^{0;x_1,y_1}\right)(\sigma_{A_0B_0}^0)\\
		&=\sum_{h^2}\ket{h^2}\bra{h^2}_{H_2}\otimes\widetilde{\sigma}_{A_2B_2}(2;h^2),
	\end{align}
	where in the last line we used the abbreviations
	\begin{align}
		H_2&\equiv X_1Y_1X_2Y_2,\\
		h^2&\equiv (x_1,y_1,x_2,y_2),\quad x_1\in\mathcal{X}_1,\,x_2\in\mathcal{X}_2,\,y_1\in\mathcal{Y}_1,\,y_2\in\mathcal{Y}_2.
	\end{align}
	
	Proceeding in the manner presented above, at each step $j\geq 1$ of the protocol, Alice and Bob apply conditional quantum instrument channels of the form
	\begin{align}
		&\mathcal{M}_{H_{j}A_j\to H_jX_{j+1}A_{j+1}}^j\left(\ket{h^j}\bra{h^j}_{H_j}\otimes\rho_{A_j}\right)=\ket{h^j}\bra{h^j}_{H_j}\otimes\mathcal{M}_{A_j\to X_{j+1}A_{j+1}}^{j;h^j}(\rho_{A_j}),\\[0.2cm]
		&\mathcal{N}_{H_jX_{j+1}B_j\to H_{j+1}B_{j+1}}^{j}\left(\ket{h^j,x_{j+1}}\bra{h^j,x_{j+1}}_{H_jX_{j+1}}\otimes\sigma_{B_j}\right)=\ket{h^j,x_{j+1}}\bra{h^j,x_{j+1}}_{H_jX_{j+1}}\otimes\mathcal{N}_{B_j\to Y_{j+1}B_{j+1}}^{j;h^j,x_{j+1}}(\sigma_{B_j}),
	\end{align}
	where
	\begin{align}
		\mathcal{M}_{A_j\to X_{j+1}A_{j+1}}^{j;h^j}(\rho_{A_j})&=\sum_{x_{j+1}\in\mathcal{X}_{j+1}}\ket{x_{j+1}}\bra{x_{j+1}}_{X_{j+1}}\otimes\mathcal{M}_{A_j\to A_{j+1}}^{j;h^j,x_{j+1}}(\rho_{A_j}),\\[0.2cm]
		\mathcal{N}_{B_j\to Y_{j+1}B_{j+1}}^{j;h^j,x_{j+1}}(\sigma_{B_j})&=\sum_{y_{j+1}\in\mathcal{Y}_{j+1}}\ket{y_{j+1}}\bra{y_{j+1}}_{Y_{j+1}}\otimes\mathcal{N}_{B_j\to B_{j+1}}^{j;h^j,x_{j+1},y_{j+1}}(\sigma_{B_j}).
	\end{align}
	Therefore, at the end of the $t^{\text{th}}$ round, the classical-quantum state shared by Alice and Bob is
	\begin{align}
		\widehat{\sigma}_{H_tA_tB_t}(t)&=\left(\mathcal{N}_{H_{t-1}X_tB_{t-1}\to H_tB_t}^{t-1}\circ\mathcal{M}_{H_{t-1}A_{t-1}\to H_{t-1}X_tA_t}^{t-1}\circ\dotsb\right.\nonumber\\
		&\qquad\qquad\left.\circ\mathcal{N}_{H_1X_2B_1\to H_2B_2}^1\circ\mathcal{M}_{H_1A_1\to H_1X_2A_2}^1\circ\mathcal{N}_{X_1B_0\to H_1B_1}^0\circ\mathcal{M}_{A_0\to X_1A_1}^0\right)(\sigma_{A_0B_0}^0)\\
		&=\sum_{h^t}\ket{h^t}\bra{h^t}_{H_t}\otimes\widetilde{\sigma}_{A_tB_t}(t;h^t),\label{eq-cq_state_LOCC}
	\end{align}
	where $h^t=(x_1,y_1,x_2,y_2,\dotsc,x_t,y_t)$ and
	\begin{align}
		\widetilde{\sigma}(t;h^t)&=\left(\left(\mathcal{M}_{A_{t-1}\to A_1}^{t-1;h_{t-1}^t,x_t}\circ\dotsb\circ\mathcal{M}_{A_1\to A_2}^{1;h_1^t,x_2}\circ\mathcal{M}_{A_0\to A_1}^{0;x_1}\right)\right.\nonumber\\&\qquad\qquad\qquad\qquad\left.\otimes\left(\mathcal{N}_{B_{t-1}\to B_t}^{t-1;h^t}\circ\dotsb\circ\mathcal{N}_{B_1\to B_2}^{1;h_2^t}\circ\mathcal{N}_{B_0\to B_1}^{0;h_1^t}\right)\right)(\sigma_{A_0B_0}^0)\label{eq-cq_state_LOCC_2}\\[0.2cm]
		&=\left(\mathcal{S}_{A_0\to A_t}^{t;h^t}\otimes\mathcal{T}_{B_0\to B_t}^{t;h^t}\right)(\sigma_{A_0B_0}^0),\label{eq-cq_state_LOCC_3}
	\end{align}
	and we have defined
	\begin{equation}
		h_j^t\coloneqq (x_1,y_1,x_2,y_2,\dotsc,x_j,y_j),\quad 1\leq j\leq t-1.
	\end{equation}
	Also, in \eqref{eq-cq_state_LOCC_3}, we have defined 
	\begin{align}
		\mathcal{S}_{A_0\to A_t}^{t;h^t}&\coloneqq\mathcal{M}_{A_{t-1}\to A_t}^{t-1;h_{t-1}^t,x_t}\circ\dotsb\circ\mathcal{M}_{A_1\to A_2}^{1;h_1^t,x_2}\circ\mathcal{M}_{A_0\to A_1}^{0;x_1},\\[0.2cm]
		\mathcal{T}_{B_0\to B_t}^{t;h^t}&\coloneqq\mathcal{N}_{B_{t-1}\to B_t}^{t-1;h^t}\circ\dotsb\circ\mathcal{N}_{B_1\to B_2}^{1;h_2^t}\circ\mathcal{N}_{B_0\to B_1}^{0;h_1^t},
	\end{align}
	
	Now, if Alice and Bob forget the history of their outcomes, then this corresponds to discarding the classical history register $H_t$, and it results in the state
	\begin{equation}\label{eq-LOCC_protocol_final_state}
		\sigma_{A_tB_t}(t)\coloneqq\Tr_{H_t}[\widehat{\sigma}_{H_tA_tB_t}(t)]=\sum_{h^t}\widetilde{\sigma}_{A_tB_t}(t;h^t)=\sum_{h^t}(\mathcal{S}_{A_0\to A_t}^{t;h^t}\otimes\mathcal{T}_{B_0\to B_t}^{t;h^t})(\sigma_{A_0B_0}^0),
	\end{equation}
	which is precisely of the form in \eqref{eq-LOCC_channel}. In particular, observe that the sum map $\sum_{h^t}\mathcal{S}^{t;h^t}\otimes\mathcal{T}^{t;h^t}$ is indeed trace preserving, because for all $1\leq j\leq t-1$, all histories $h^{j-1}$ up to time $j-1$, and all linear operators $X_{A_{j-1}B_{j-1}}$,
	\begin{align}
		&\sum_{x_j,y_j}\Tr\left[\left(\mathcal{M}_{A_{j-1}\to A_j}^{j-1;h^{j-1},x_j}\otimes\mathcal{N}_{B_{j-1}\to B_j}^{h^{j-1},x_j,y_j}\right)(X_{A_{j-1}B_{j-1}})\right]\nonumber\\
		&\qquad=\sum_{x_j}\Tr\left[\left(\mathcal{M}_{A_{j-1}\to A_j}^{j-1;h^{j-1},x_j}\otimes\sum_{y_j}\mathcal{N}_{B_{j-1}\to B_j}^{h^{j-1},x_j,y_j}\right)(X_{A_{j-1}B_{j-1}})\right]\\
		&\qquad=\sum_{x_j}\Tr_{A_j}\left[\mathcal{M}_{A_{j-1}\to A_j}^{j-1;h^{j-1},x_j}(\Tr_{B_{j-1}}[X_{A_{j-1}B_{j-1}}])\right]\\
		&\qquad=\Tr_{A_j}\left[\sum_{x_j}\mathcal{M}_{A_{j-1}\to A_j}^{j-1;h^{j-1},x_j}(\Tr_{B_{j-1}}[X_{A_{j-1}B_{j-1}}])\right]\\
		&\qquad=\Tr[X_{A_{j-1}B_{j-1}}],
	\end{align}
	where we have used the fact that $\sum_{y_j}\mathcal{N}_{B_{j-1}\to B_j}^{h^{j-1},x_j,y_j}$ and $\sum_{x_j}\mathcal{M}_{A_{j-1}\to A_j}^{j-1;h^{j-1},x_j}$ are trace-preserving maps. Therefore, for all linear operators $X_{A_0B_0}$,
	\begin{align}
		&\Tr\left[\sum_{h^t}(\mathcal{S}_{A_0\to A_t}^{t;h^t}\otimes\mathcal{T}_{A_0\to A_t}^{t;h^t})(X_{A_0B_0})\right]\nonumber\\
		&\qquad=\sum_{h^t_{t-1}}\sum_{x_t,y_t}\Tr\left[\left(\mathcal{M}_{A_{t-1}\to A_t}^{t-1;h^t_{t-1},x_t}\otimes\mathcal{N}_{B_{t-1}\to B_t}^{t-1;h_{t-1}^t,x_t,y_t}\right)\left(\mathcal{S}^{t-1;h_{t-1}^t}_{A_0\to A_{t-1}}\otimes\mathcal{T}_{B_0\to B_{t-1}}^{t-1;h_{t-1}^t}\right)(X_{A_0B_0})\right]\\
		&\qquad=\sum_{h_{t-1}^{t}}\Tr\left[\left(\mathcal{S}_{A_0\to A_{t-1}}^{t-1;h_{t-1}^t}\otimes\mathcal{T}_{A_0\to A_{t-1}}^{t-1;h_{t-1}^t}\right)(X_{A_0B_0})\right]\\
		&\qquad=\sum_{h_{t-2}^t}\sum_{x_{t-1},y_{t-1}}\Tr\left[\left(\mathcal{M}_{A_{t-2}\to A_{t-1}}^{t-2;h^t_{t-2},x_{t-1}}\otimes\mathcal{N}_{B_{t-2}\to B_{t-1}}^{t-2;h_{t-2}^t,x_{t-1},y_{t-1}}\right)\left(\mathcal{S}^{t-2;h_{t-2}^t}_{A_0\to A_{t-2}}\otimes\mathcal{T}_{B_0\to B_{t-2}}^{t-2;h_{t-2}^t}\right)(X_{A_0B_0})\right]\\
		&\qquad=\sum_{h_{t-2}^t}\Tr\left[\left(\mathcal{S}^{t-2;h_{t-2}^t}_{A_0\to A_{t-2}}\otimes\mathcal{T}_{B_0\to B_{t-2}}^{t-2;h_{t-2}^t}\right)(X_{A_0B_0})\right]\\
		&\quad\quad\,\,\,\vdots\\
		&\qquad=\sum_{x_1,y_1}\Tr\left[\left(\mathcal{M}_{A_0\to A_1}^{0;x_1}\otimes\mathcal{N}_{B_0\to B_1}^{0;x_1,y_1}\right)(X_{A_0B_0})\right]\\
		&\qquad=\Tr[X_{A_0B_0}].
	\end{align}
	So we conclude that the sum map $\sum_{h^t}\mathcal{S}^{t;h^t}\otimes\mathcal{T}^{t;h^t}$ is trace preserving.
	
	\begin{remark}[LOCC instruments]
		From \eqref{eq-cq_state_LOCC} and \eqref{eq-cq_state_LOCC_3}, the classical-quantum state $\widehat{\sigma}_{H_tA_tB_t}(t)$ after $t$ rounds of an LOCC protocol is
		\begin{equation}
			\widehat{\sigma}_{H_tA_tB_t}(t)=\sum_{h^t}\ket{h^t}\bra{h^t}_{H_t}\otimes\left(\mathcal{S}_{A_0\to A_t}^{t;h^t}\otimes\mathcal{T}_{B_0\to B_t}^{t;h^t}\right)(\sigma_{A_0B_0}^0).
		\end{equation}
		Observe that this state has exactly the form of the output state of a quantum instrument channel. In particular, letting
		\begin{equation}
			\mathcal{L}_{A_0B_0\to A_tB_t}^{t;h^t}\coloneqq\mathcal{S}_{A_0\to A_t}^{t;h^t}\otimes\mathcal{T}_{B_0\to B_t}^{t;h^t},
		\end{equation}
		we see that the state $\widehat{\sigma}_{H_tA_tB_t}(t)$ can be regarded as the output state of an \textit{LOCC instrument}, i.e., a finite set $\left\{\mathcal{L}_{AB\to\hat{A}\hat{B}}^{x}\right\}_{x\in\mathcal{X}}$ of completely positive trace non-increasing LOCC maps such that sum $\sum_{x\in\mathcal{X}}\mathcal{L}_{AB\to \hat{A}\hat{B}}^{x}$ is a trace-preserving map, and thus an LOCC quantum channel.~\defqed
	\end{remark}
	
	Let us now consider some examples of LOCC channels.
	
	\begin{example}[Entanglement swapping protocol]\label{ex-ent_swap}
		Let $\rho_{A\vec{R}_1\vec{R}_2\dotsb\vec{R}_nB}$ be a multipartite quantum state, where $n\geq 1$ and $\vec{R}_j\equiv R_j^1R_j^2$ is an abbreviation for two the quantum systems $R_j^1$ and $R_j^2$. The entanglement swapping protocol with $n$ intermediate nodes is defined by a Bell-basis measurement of the systems $\vec{R}_j$, i.e., a measurement described by the POVM $\{\Phi^{z,x}:0\leq z,x\leq d-1\}$, where $\Phi^{z,x}=\ket{\Phi^{z,x}}\bra{\Phi^{z,x}}$ and
		\begin{equation}
			\ket{\Phi^{z,x}}\coloneqq(Z^zX^x\otimes\mathbbm{1})\ket{\Phi^+}
		\end{equation}
		are the qudit Bell states. The operators $Z$ and $X$ are the discrete Weyl operators \cite{Wat18_book}, which are defined as
		\begin{equation}
			Z\coloneqq\sum_{k=0}^{d-1}\e^{\frac{2\pi\I k}{d}}\ket{k}\bra{k},\quad X\coloneqq\sum_{k=0}^{d-1}\ket{k+1}\bra{k}.
		\end{equation}
		Conditioned on the outcomes $(z_j,x_j)$ of the Bell measurement on $\vec{R}_j$, the unitary $Z_B^{z_1+\dotsb+z_n}X_B^{x_1+\dotsb+x_n}$ is applied to the system $B$, where the addition is performed modulo $d$. Let $[d]=\{0,1,\dotsc,d-1\}$, and let $\vec{z},\vec{x}\in[d]^{\times n}$. Also, let
		\begin{align}
			M_{\vec{R}_1\vec{R}_2\dotsb\vec{R}_n}^{\vec{z},\vec{x}}&\coloneqq \Phi_{\vec{R}_1}^{z_1,x_1}\otimes\Phi_{\vec{R}_2}^{z_2,x_2}\otimes\dotsb\otimes\Phi_{\vec{R}_n}^{z_n,x_n},\\[0.2cm]
			W_B^{\vec{z},\vec{x}}&\coloneqq Z_B^{z_1+\dotsb+z_n}X_B^{x_1+\dotsb+x_n},
		\end{align}
		where the addition in the second line is performed modulo $d$. Then, the LOCC quantum channel corresponding to the entanglement swapping protocol with $n\geq 1$ intermediate nodes is
		\begin{equation}\label{eq-ent_swap_channel}
			\mathcal{L}_{A\vec{R}_1\vec{R}_2\dotsb\vec{R}_nB\to AB}^{\text{ES};n}\left(\rho_{A\vec{R}_1\vec{R}_2\dotsb\vec{R}_nB}\right)\coloneqq\sum_{\vec{z},\vec{x}\in[d]^n}\Tr_{\vec{R}_1\vec{R}_2\dotsb\vec{R}_n}\left[M_{\vec{R}_1\vec{R}_2\dotsb\vec{R}_n}^{\vec{z},\vec{x}}W_B^{\vec{z},\vec{x}}\left(\rho_{A\vec{R}_1\vec{R}_2\dotsb\vec{R}_nB}\right)\left(W_B^{\vec{z},\vec{x}}\right)^\dagger\right].
		\end{equation}
		
		\begin{figure}
			\centering
			\includegraphics[scale=1]{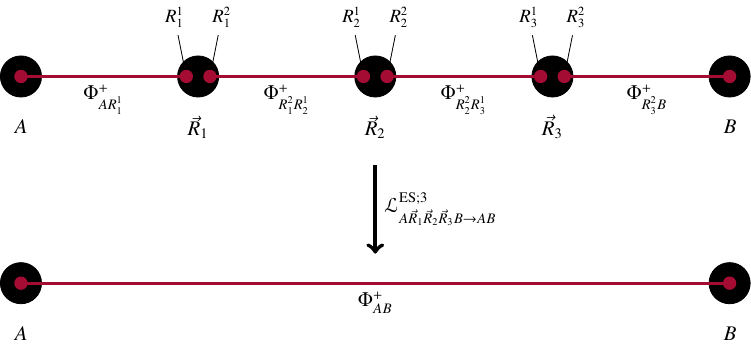}
			\caption{A chain of five nodes corresponding to the entanglement swapping protocol with $n=3$ intermediate nodes. The red lines represent maximally entangled states. The goal of the entanglement swapping protocol is to establish entanglement between $A$ and $B$. The protocol proceeds by first performing a Bell-basis measurement on the systems at the nodes $\vec{R}_j$, $1\leq j\leq n$, and communicating the results of the measurement to $B$, who applies a correction operation based on the outcomes.}\label{fig-ent_swap_chain}
		\end{figure}
		
		The standard entanglement swapping protocol \cite{ZZH93} corresponds to the input state
		\begin{equation}
			\rho_{A\vec{R}_1\vec{R}_2\dotsb\vec{R}_nB}=\Phi_{AR_1^1}^+\otimes \Phi_{R_1^2R_2^1}^+\otimes\dotsb\otimes\Phi_{R_{n-1}^2R_n^1}^+\otimes\Phi_{R_n^2B}^+.
		\end{equation}
		This scenario is shown in Figure~\ref{fig-ent_swap_chain}. Indeed, it can be shown that 
		\begin{equation}
			\mathcal{L}_{A\vec{R}_1\vec{R}_2\dotsb\vec{R}_nB\to AB}^{\text{ES};n}\left(\Phi_{AR_1^1}^+\otimes \Phi_{R_1^2R_2^1}^+\otimes\dotsb\otimes\Phi_{R_{n-1}^2R_n^1}^+\otimes\Phi_{R_n^2B}^+\right)=\Phi_{AB}^+,
		\end{equation}
		Furthermore, the standard teleportation protocol \cite{BBC+93} corresponds to $n=1$ and the input state
		\begin{equation}
			\rho_{A\vec{R}_1B}=\sigma_{R_1^1}\otimes\Phi_{R_1^2B}^+,
		\end{equation}
		where $A=\varnothing$ is a trivial (one-dimensional) system and $\sigma_{R_1^1}$ is an arbitrary $d$-dimensional quantum state, so that
		\begin{equation}
			\mathcal{L}_{\vec{R}_1\to B}^{\text{ES};1}(\sigma_{R_1^1}\otimes\Phi_{R_1^2B}^+)=\sigma_B,
		\end{equation}
		as expected.~\defqed.
	\end{example}
	\medskip
	\begin{example}[GHZ entanglement swapping protocol]\label{ex-GHZ_ent_swap}
		The previous example takes a chain of Bell states and transforms them into a Bell state shared by the end nodes of the chain. In this example, we look at a protocol that takes the same chain of Bell states and transforms them instead to a multi-qubit GHZ state, which is defined as \cite{GHZ89}
		\begin{equation}\label{eq-GHZ_state}
			\ket{\text{GHZ}_n}\coloneqq\frac{1}{\sqrt{2}}(\ket{\underbrace{0,0,\dotsc,0}_{n\text{ times}}}+\ket{1,1,\dotsc,1}).
		\end{equation}
		We call this protocol the \textit{GHZ entanglement swapping protocol}.
		
		The protocol for taking a chain of two Bell states to a three-party GHZ state is shown in Figure~\ref{fig-ent_swap_GHZ}. First, the two qubits $R_1^1$ and $R_1^2$ in the central node are entangled with a CNOT gate, followed by a measurement of $R_1^2$ in the standard basis (with corresponding POVM $\{\ket{0}\bra{0},\ket{1}\bra{1}\}$. The result $x\in\{0,1\}$ is communicated to $B$, where the correction operation $X_B^x$ is applied. The LOCC channel corresponding to this protocol is
		\begin{equation}\label{eq-GHZ_ent_swap_channel_1}
			\mathcal{L}_{A\vec{R}_1B}^{\text{GHZ};1}\left(\rho_{A\vec{R}_1B}\right)=\sum_{x=0}^1 \left(K_{\vec{R}}^x\otimes X_B^x\right)\rho_{A\vec{R}_1B}\left(K_{\vec{R}}^x\otimes X_B^x\right)^\dagger,
		\end{equation}
		where
		\begin{equation}
			K_{\vec{R}}^x\coloneqq\bra{x}_{R_1^2}\text{CNOT}_{\vec{R}_1},\quad \text{CNOT}_{\vec{R}_1}\coloneqq\ket{0}\bra{0}_{R_1^1}\otimes\mathbbm{1}_{R_1^2}+\ket{1}\bra{1}_{R_1^1}\otimes X_{R_1^2}.
		\end{equation}
		
		\begin{figure}
			\centering
			\includegraphics[scale=1]{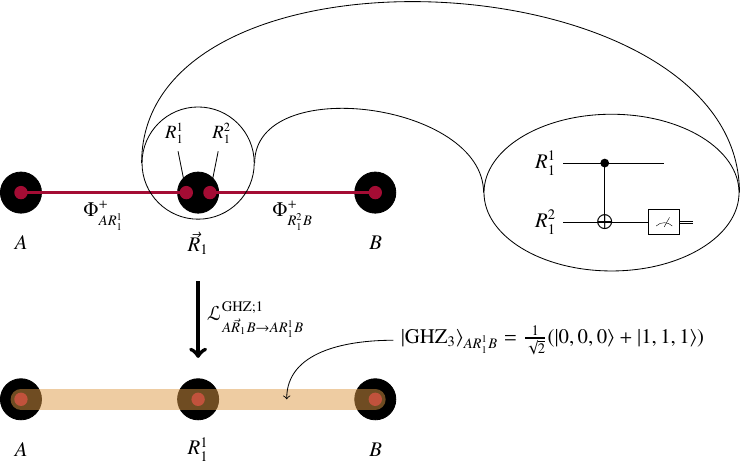}
			\caption{The GHZ entanglement swapping protocol with one intermediate node. The two qubits in the central node are entangled using the CNOT gate, after which the qubit $R_1^2$ is measured in the standard basis. The result $x\in\{0,1\}$ of the measurement is communicated to $B$, where the gate $X_B^x$ is applied; i.e., if the result of the measurement is $x=0$, then no correction is needed, and if the result is $x=1$, then $X_B$ (the Pauli-$x$ gate) is applied.}\label{fig-ent_swap_GHZ}
		\end{figure}
		
		The protocol shown in Figure~\ref{fig-ent_swap_GHZ}, with corresponding LOCC quantum channel in \eqref{eq-GHZ_ent_swap_channel_1}, can be easily extended to a scenario with $n>1$ intermediate nodes. In this case, the node $\vec{R}_1$ starts by applying the gate $\text{CNOT}_{\vec{R}_1}$ to its qubits and then measuring the qubit $R_1^2$ in the standard basis. The outcome of this measurement is sent to the node $\vec{R}_2$, and the corresponding correction operation is applied to the qubit $R_2^1$. Then, the gate $\text{CNOT}_{\vec{R}_2}$ is applied to the qubits at $\vec{R}_2$, followed by a standard-basis measurement of $R_2^2$ and communication of the outcome to $\vec{R}_3$ and a correction operation on $R_3^1$. This proceeds in sequence until the $n^{\text{th}}$ intermediate node $\vec{R}_n$, which sends its measurement outcome to $B$, which applies the appropriate correction operation. The LOCC channel for this protocol is
		\begin{equation}\label{eq-GHZ_ent_swap_channel}
			\mathcal{L}_{A\vec{R}_1\dotsb\vec{R}_nB\to AR_1^1\dotsb R_n^1B}^{\text{GHZ};n}\left(\rho_{A\vec{R}_1\dotsb\vec{R}_nB}\right)\coloneqq\sum_{\vec{x}\in\{0,1\}^n} P_{\vec{R}_1\dotsb\vec{R}_nB}^{\vec{x}}\left(\rho_{A\vec{R}_1\dotsb\vec{R}_nB}\right)P_{\vec{R}_1\dotsb\vec{R}_nB}^{\vec{x}~\dagger},
		\end{equation}
		where
		\begin{equation}
			P_{\vec{R}_1\dotsb\vec{R}_nB}^{\vec{x}}\coloneqq K_{\vec{R}_1}^{x_1}\otimes K_{\vec{R}_2}^{x_2}X_{R_2^1}^{x_1}\otimes\dotsb\otimes K_{\vec{R}_n}^{x_n}X_{R_n^1}^{x_{n-1}}\otimes X_B^{x_n}
		\end{equation}
		for all $\vec{x}\in\{0,1\}^n$. If the input state to this channel is
		\begin{equation}
			\rho_{A\vec{R}_1\dotsb\vec{R}_nB}=\Phi_{AR_1^1}^+\otimes \Phi_{R_1^2R_2^1}^+\otimes\dotsb\otimes\Phi_{R_{n-1}^2R_n^1}^+\otimes\Phi_{R_n^2B}^+,
		\end{equation}
		then the output is a $(n+2)$-party GHZ state given by the state vector $\ket{\text{GHZ}_{n+2}}_{AR_1^1\dotsb R_n^1B}$ as defined in \eqref{eq-GHZ_state}, i.e.,
		\begin{equation}
			\mathcal{L}_{A\vec{R}_1\dotsb\vec{R}_nB\to AR_1^1\dotsb R_n^1B}^{\text{GHZ};n}\left(\Phi_{AR_1^1}^+\otimes \Phi_{R_1^2R_2^1}^+\otimes\dotsb\otimes\Phi_{R_{n-1}^2R_n^1}^+\otimes\Phi_{R_n^2B}^+\right)=\ket{\text{GHZ}_{n+2}}\bra{\text{GHZ}_{n+2}}.~\defqedspec
		\end{equation}
	\end{example}
	\medskip
	\begin{example}[Graph state distribution protocol]\label{ex-graph_state_dist}
		We now consider an example of distributing an arbitrary graph state, which can be viewed as a special case of the procedure considered in Ref.~\cite{MMG19}. A graph state \cite{BR01,RB01,Briegel09} is a multi-qubit quantum state defined using graphs. (In Section~\ref{sec-graph_theory} below, we provide a brief review of graph theory.)
		
		Consider a graph $G=(V,E)$, which consists of a set $V$ of vertices and a set $E$ of edges. For the purposes of this example, $G$ is an undirected graph, and $E$ is a set of two-element subsets of $V$. The graph state $\ket{G}$ is an $n$-qubit quantum state $\ket{G}_{A_1\dotsb A_n}$, with $n=|V|$, that is defined as
		\begin{equation}
			 \ket{G}_{A_1\dotsb A_n}\coloneqq\sum_{\vec{\alpha}\in\{0,1\}^n}(-1)^{\frac{1}{2}\vec{\alpha}^{\t}A(G)\vec{\alpha}}\ket{\vec{\alpha}},
		\end{equation}
		where $A(G)$ is the adjacency matrix of $G$, which is defined as
		\begin{equation}
			A(G)_{i,j}=\left\{\begin{array}{l l} 1 & \text{if } \{v_i,v_j\}\in E,\\ 0 & \text{otherwise}, \end{array}\right.
		\end{equation}
		and $\vec{\alpha}$ is the column vector $(\alpha_1,\dotsc,\alpha_n)^{\t}$. It is easy to show that
		\begin{equation}
			\ket{G}_{A_1\dotsb A_n}=\text{CZ}(G)(\ket{+}_{A_1}\otimes\dotsb\otimes\ket{+}_{A_n}),
		\end{equation}
		where $\ket{+}\coloneqq\frac{1}{\sqrt{2}}(\ket{0}+\ket{1})$ and
		\begin{equation}
			\text{CZ}(G)\coloneqq\bigotimes_{\{v_i,v_j\}\in E}\text{CZ}_{A_iA_j},
		\end{equation}
		with $\text{CZ}_{A_iA_j}\coloneqq\ket{0}\bra{0}_{A_i}\otimes\mathbbm{1}_{A_j}+\ket{1}\bra{1}_{A_i}\otimes Z_{A_j}$ being the controlled-$Z$ gate.
		
		\begin{figure}
			\centering
			\includegraphics[width=0.97\textwidth]{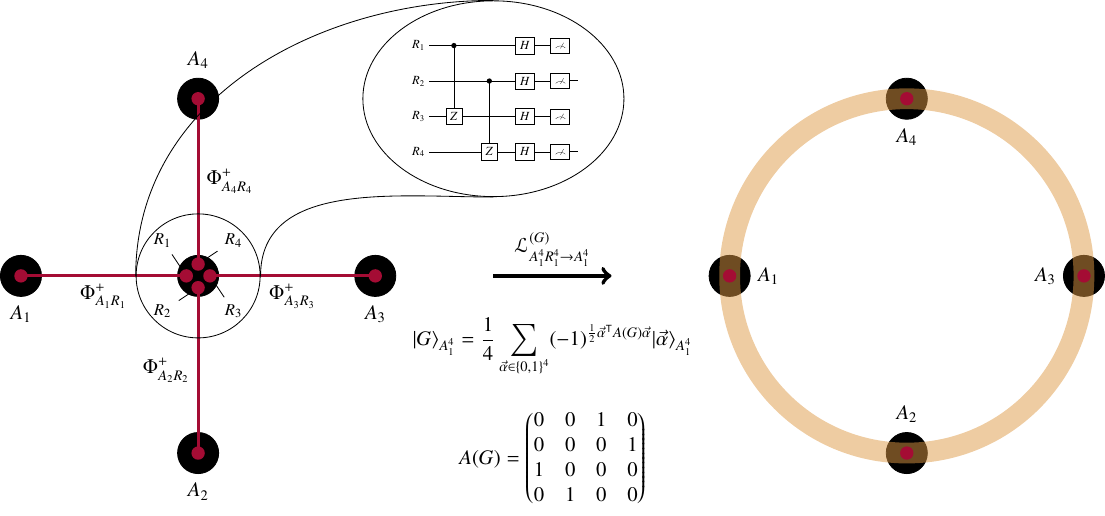}
			\caption{Depiction of a protocol for distributing a graph state among four nodes $A_1,A_2,A_3,A_4$, all of which initially share Bell states with the central node.}\label{fig-graph_state_dist}
		\end{figure}
		
		Now, consider the scenario depicted in Figure~\ref{fig-graph_state_dist}, in which $n=4$ nodes share Bell states with a central node. The task is for the central node to distribute the graph state $\ket{G}$ to the outer nodes. One possible procedure is for the central node to locally prepare the graph state and then to teleport the individual qubits using the Bell states. However, it is possible to perform a slightly simpler procedure that does not require the additional qubits needed to prepare the graph state locally. In fact, the following deterministic procedure produces the required graph state $\ket{G}$ shared by the nodes $A_1,\dotsc,A_n$.
		\begin{enumerate}
			\item The central node applies $\text{CZ}(G)$ to the qubits $R_1,\dotsc,R_n$.
			\item On each of the qubits $R_1,\dotsc,R_n$, the central node performs the measurement given by the POVM $\{\ket{+}\bra{+},\ket{-}\bra{-}\}$, where $\ket{\pm}=\frac{1}{\sqrt{2}}(\ket{0}\pm\ket{1})$. The outcome is an $n$-bit string $\vec{x}=(x_1,\dotsc,x_n)$, where $x_i=0$ corresponds to the ``$+$'' outcome and $x_i=1$ corresponds to the ``$-$'' outcome. The central node communicates outcome $x_i$ to the node $A_i$.
			\item The nodes $A_i$ apply $Z^{x_i}$ to their qubit. In other words, if $x_i=0$, then $A_i$ does nothing, and if $x_i=1$, then $A_i$ applies $Z$ to their qubit.
		\end{enumerate}
		Let us prove that this protocol achieves the desired outcome. First, observe that
		\begin{equation}
			\ket{\Phi^+}_{A_1R_1}\otimes\dotsb\otimes\ket{\Phi^+}_{A_nR_n}=\frac{1}{\sqrt{2^n}}\sum_{\vec{\alpha}\in\{0,1\}^n}\ket{\vec{\alpha}}_{A_1\dotsb A_n}\ket{\vec{\alpha}}_{R_1\dotsb R_n}.
		\end{equation}
		Then, after the first step, the state is
		\begin{equation}
			\frac{1}{\sqrt{2}^n}\sum_{\vec{\alpha}\in\{0,1\}^n}\ket{\vec{\alpha}}_{A_1\dotsb A_n}\text{CZ}(G)\ket{\vec{\alpha}}_{R_1\dotsb R_n}=\frac{1}{\sqrt{2^n}}\sum_{\vec{\alpha}\in\{0,1\}^n}(-1)^{\frac{1}{2}\vec{\alpha}^{\t}A(G)\vec{\alpha}}\ket{\vec{\alpha}}_{A_1\dotsb A_n}\ket{\vec{\alpha}}_{R_1\dotsb R_n},
		\end{equation}
		where we have used the fact that
		\begin{equation}
			\text{CZ}(G)\ket{\vec{\alpha}}=(-1)^{\sum_{i,j:\{v_i,v_j\}\in E}\alpha_i\alpha_j}=(-1)^{\frac{1}{2}\vec{\alpha}^{\t}A(G)\vec{\alpha}}.
		\end{equation}
		Then, using the fact that $\braket{+}{\alpha}=\frac{1}{\sqrt{2}}$ for all $\alpha\in\{0,1\}$ and $\braket{-}{\alpha}=\frac{1}{\sqrt{2}}(-1)^{\alpha}$ for all $\alpha\in\{0,1\}$, we find that for every outcome string $(x_1,\dotsc,x_n)$ of the measurement on the qubits $R_1,\dotsc,R_n$ the corresponding (unnormalized) post-measurement state is
		\begin{equation}
			\frac{1}{\sqrt{2^n}}\sum_{\vec{\alpha}\in\{0,1\}^n}(-1)^{\frac{1}{2}\vec{\alpha}^{\t}A(G)\vec{\alpha}}\frac{1}{\sqrt{2^n}}(-1)^{\alpha_1x_1+\dotsb+\alpha_nx_n}\ket{\vec{\alpha}}_{A_1\dotsb A_n}.
		\end{equation}
		Then, using the fact that $Z^x\ket{\alpha}=(-1)^{\alpha x}\ket{\alpha}$ for all $x,\alpha\in\{0,1\}$, we find that at the end of the second step the (unnormalized) state is
		\begin{equation}
			\frac{1}{\sqrt{2^n}}(\sigma_z^{x_1}\otimes\dotsb\otimes\sigma_z^{x_n})\frac{1}{\sqrt{2^n}}\sum_{\vec{\alpha}\in\{0,1\}^n}(-1)^{\frac{1}{2}\vec{\alpha}^{\t}A(G)\vec{\alpha}}\ket{\vec{\alpha}}_{A_1\dotsb A_n}=\frac{1}{\sqrt{2^n}}(Z_{A_1}^{x_1}\otimes\dotsb\otimes Z_{A_n}^{x_n})\ket{G}_{A_1\dotsb A_n}
		\end{equation}
		for all $(x_1,\dotsc,x_n)\in\{0,1\}^n$. From this, we see that, up to local Pauli-$z$ corrections, the post-measurement state is equal to the desired graph state $\ket{G}$ with probability $\frac{1}{2^n}$ for every measurement outcome string $(x_1,\dotsc,x_n)$. Once all of the nodes $A_i$ receive their corresponding outcome $x_i$ and apply the correction $Z_{A_i}^{x_i}$, the nodes $A_1,\dotsc,A_n$ share the graph state $\ket{G}$. As a result of the classical communication of the measurement outcomes and the subsequent correction operations, the protocol is deterministic.
		
		This protocol has the following representation as an LOCC channel:
		\begin{equation}
			\mathcal{L}_{A_1^nR_1^n\to A_1^n}^{(G)}\left(\rho_{A_1^nR_1^n}\right)\coloneqq\sum_{\vec{x}\in\{0,1\}^n}\left(Z_{A_1^n}^{\vec{x}}\otimes\bra{\vec{x}}_{R_1^n}H^{\otimes n}\text{CZ}(G)_{R_1^n}\right)\left(\rho_{A_1^nR_1^n}\right)\left(Z_{A_1^n}^{\vec{x}}\otimes \text{CZ}(G)_{R_1^n}^{\dagger}H^{\otimes n}\ket{\vec{x}}_{R_1^n}\right),
		\end{equation}
		for all states $\rho_{A_1^nR_1^n}$, where $H=\ket{+}\bra{0}+\ket{-}\bra{1}$ is the Hadamard operator, and we have let
		\begin{equation}
			Z_{A_1\dotsb A_n}^{\vec{x}}\coloneqq Z_{A_1}^{x_1}\otimes\dotsb\otimes Z_{A_n}^{x_n}.
		\end{equation}
		We have also used the abbreviation $A_1^n\equiv A_1A_2\dotsb A_n$, and similarly for $R_1^n$. Using the fact that
		\begin{equation}\label{eq-graph_state_x_0}
			\text{CZ}(G)H^{\otimes n}\ket{\vec{x}}=Z^{\vec{x}}\ket{G}
		\end{equation}
		for all $\vec{x}\in\{0,1\}^n$, and letting
		\begin{equation}\label{eq-graph_state_x}
			\ket{G^{\vec{x}}}\coloneqq Z^{\vec{x}}\ket{G},
		\end{equation}
		we can write the channel in the following simpler form:
		\begin{equation}\label{eq-graph_state_dist_channel}
			\mathcal{L}_{A_1^nR_1^n\to A_1^n}^{(G)}\left(\rho_{A_1^nR_1^n}\right)=\sum_{\vec{x}\in\{0,1\}^n} \left(Z_{A_1^n}^{\vec{x}}\otimes\bra{G^{\vec{x}}}_{R_1^n}\right)\left(\rho_{A_1^nR_1^n}\right)\left(Z_{A_1^n}^{\vec{x}}\otimes\ket{G^{\vec{x}}}_{R_1^n}\right).
		\end{equation}
		From this, we see that the protocol can be thought of as measuring the systems $R_1,\dotsc, R_n$ jointly with the POVM $\left\{\ket{G^{\vec{x}}}\bra{G^{\vec{x}}}\right\}_{\vec{x}\in\{0,1\}^n}$ and, conditioned on the outcome $\vec{x}$, applying the correction operation $Z^{\vec{x}}$ to the systems $A_1,\dotsc, A_n$. Note that $\left\{\ket{G^{\vec{x}}}\bra{G^{\vec{x}}}\right\}_{\vec{x}\in\{0,1\}^n}$ is indeed a POVM due to the fact that
		\begin{equation}
			\ket{G^{\vec{x}}}=\text{CZ}(G)H^{\otimes n}\ket{\vec{x}}
		\end{equation}
		for all $\vec{x}\in\{0,1\}^n$, which follows from \eqref{eq-graph_state_x_0} and \eqref{eq-graph_state_x}, so that
		\begin{equation}
			\sum_{\vec{x}\in\{0,1\}^n}\ket{G^{\vec{x}}}\bra{G^{\vec{x}}}=\text{CZ}(G)H^{\otimes n}\underbrace{\sum_{\vec{x}\in\{0,1\}^n}\ket{\vec{x}}\bra{\vec{x}}}_{\mathbbm{1}}H^{\otimes n}\text{CZ}(G)^{\dagger}=\mathbbm{1}. \quad \defqedspec
		\end{equation}
	\end{example}
	\medskip
	\begin{example}[Entanglement distillation]\label{ex-ent_distill}
		The term ``entanglement distillation'' refers to the task of taking many copies of a given quantum state $\rho_{AB}$ and transforming them, via an LOCC protocol, to several (fewer) copies of the maximally entangled state $\Phi_{AB}^+$. Typically, with only a finite number of copies of the initial state $\rho_{AB}$, it is not possible to perfectly obtain copies of the maximally entangled state, so we aim instead for a state $\sigma_{AB}$ whose fidelity $F(\Phi_{AB},\sigma_{AB})$ to the maximally entangled state is higher than the fidelity $F(\Phi_{AB},\rho_{AB})$ of the initial state. Mathematically, the task of entanglement distillation corresponds to the transformation
		\begin{equation}
			\rho_{AB}^{\otimes n}\mapsto\mathcal{L}_{A^nB^n\to A^m B^m}(\rho_{AB}^{\otimes n})=\sigma_{AB}^{\otimes m},
		\end{equation}
		where $n,m\in\mathbb{N}$, $m<n$, and $\mathcal{L}_{A^nB^n\to A^m B^m}$ is an LOCC channel.
		
		Typically, in practice, we have $n=2$ and $m=1$, with the task being to transform two two-qubit states $\rho_{A_1B_1}^1$ and $\rho_{A_2B_2}^2$ to a two-qubit state $\sigma_{A_1B_1}$ with a higher fidelity. Protocols achieving this aim are typically probabilistic in practice, meaning that the state $\sigma_{A_1B_1}$ with higher fidelity is obtained only with some non-unit probability.
		
		\begin{figure}
			\centering
			\includegraphics[scale=1]{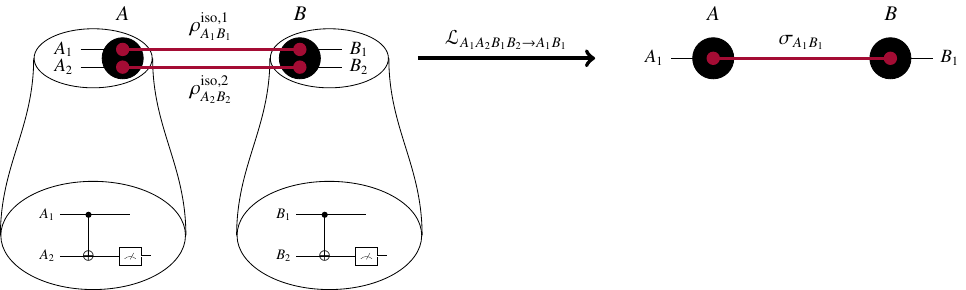}
			\caption{Depiction of the simple entanglement distillation protocol from Ref.~\cite{BBP96}. The protocol takes two isotropic states $\rho_{A_jB_j}^{\text{iso},j}$, $j\in\{1,2\}$ (see \eqref{eq-isotropic_state}), and transforms them probabilistically to a state $\sigma_{A_1B_1}$ with higher fidelity.}\label{fig-ent_distill_example}
		\end{figure}
		
		We are not concerned with any particular entanglement distillation protocol in this thesis. All we are concerned with is their mathematical structure. In particular, entanglement distillation protocols that are probabilistic can be described mathematically as an LOCC instrument, which we now demonstrate with a simple example, depicted in Figure~\ref{fig-ent_distill_example}, which comes from Ref.~\cite{BBP96}. In this protocol, Alice and Bob first apply the CNOT gate to their qubits and follow it with a measurement of their second qubit in the standard basis. They then communicate the results of their measurement to each other. The protocol is considered successful if they both obtain the same outcome, and a failure otherwise. This protocol has the following corresponding LOCC instrument channel:
		\begin{align}
			\mathcal{L}_{A_1A_2B_1B_2\to A_1B_1}\left(\rho_{A_1B_1}^1\otimes\rho_{A_2B_2}^2\right)&=\ket{0}\bra{0}\otimes\left((K_{A}^0\otimes K_{B}^1)(\rho_{A_1B_1}^{\text{iso},1}\otimes\rho_{A_2B_2}^{\text{iso},2})(K_{A}^0\otimes K_{B}^1)^{\dagger}\right.\nonumber\\
			&\qquad\qquad\qquad\left.+(K_{A}^1\otimes K_{B}^0)(\rho_{A_1B_1}^{\text{iso},1}\otimes\rho_{A_2B_2}^{\text{iso},2})(K_{A}^1\otimes K_{B}^0)^{\dagger}\right)\nonumber\\
			&+\ket{1}\bra{1}\otimes\left((K_{A}^0\otimes K_{B}^0)(\rho_{A_1B_1}^{\text{iso},1}\otimes\rho_{A_2B_2}^{\text{iso},2})(K_{A}^0\otimes K_{B}^0)^{\dagger}\right.\nonumber\\
			&\qquad\qquad\qquad\left.+(K_{A}^1\otimes K_{B}^1)(\rho_{A_1B_1}^{\text{iso},1}\otimes\rho_{A_2B_2}^{\text{iso},2})(K_{A}^1\otimes K_{B}^1)^{\dagger}\right)\label{eq-ent_distill_channel}
		\end{align}
		where
		\begin{align}
			K_A^{x}&\equiv K_{A_1A_2\to A_1}^x \coloneqq\bra{x}_{A_2}\text{CNOT}_{A_1A_2}\quad\forall~x\in\{0,1\},\\
			K_B^x&\equiv K_{B_1B_2\to B_1}^x\coloneqq\bra{x}_{B_2}\text{CNOT}_{B_1B_2}\quad\forall~x\in\{0,1\}.
		\end{align}
		Furthermore, the states $\rho_{A_iB_i}^{\text{iso},j}$, $j\in\{1,2\}$ are defined as
		\begin{equation}\label{eq-isotropic_state}
			\rho_{A_jB_j}^{\text{iso},j}\coloneqq\mathcal{T}_{A_jB_j}^U(\rho_{A_jB_j}^j)\coloneqq\int_U \left(U_{A_j}\otimes\conj{U}_{B_j}\right)(\rho_{A_jB_j}^j)\left(U_{A_j}\otimes\conj{U}_{B_j}\right)^{\dagger},
		\end{equation}
		where $\mathcal{T}^U$ is the \textit{isotropic twirling channel}; see, e.g., \cite[Example~7.25]{Wat18_book}.
		
		In general, the isotropic twirling channel $\mathcal{T}_{AB}^U$, with $d_A=d_B=d\geq 2$, is an LOCC channel that is defined by an LOCC protocol in which Alice (who holds system $A$) picks a unitary $U$ uniformly at random (i.e., from the Haar distribution), applies it to her system, and communicates her choice to Bob, who applies the complex conjugate $\conj{U}$ of the unitary to his system. The choice of the unitary is then forgotten. For every linear operator $X_{AB}$, the output of the isotropic twirling channel has the following form \cite[Example~7.25]{Wat18_book}:
		\begin{equation}
			\mathcal{T}_{AB}^U(X_{AB})=\widetilde{\alpha}(X_{AB})\Phi_{AB}^++\widetilde{\beta}(X_{AB})\frac{\mathbbm{1}_{AB}}{d^2},
		\end{equation}
		where
		\begin{align}
			\widetilde{\alpha}(X_{AB})&\coloneqq\frac{d^2\bra{\Phi^+}X_{AB}\ket{\Phi^+}-\Tr[X_{AB}]}{d^2-1},\\
			\widetilde{\beta}(X_{AB})&\coloneqq\frac{d^2\left(\Tr[X_{AB}]-\bra{\Phi^+}X_{AB}\ket{\Phi^+}\right)}{d^2-1}.
		\end{align}
		In other words, the isotropic twirling channel takes a linear operator and makes it ``isotropic'', i.e., invariant under the action of $U\otimes\conj{U}$ for all unitaries $U$.
		
		It is a straightforward calculation to show that if $f_1=F(\Phi_{A_1B_1}^+,\rho_{A_1B_1}^1)$ and $f_2=F(\Phi_{A_2B_2}^+,\rho_{A_2B_2}^2)$ are the fidelities of the initial states, then the protocol depicted in Figure~\ref{fig-ent_distill_example}, with corresponding LOCC channel given by \eqref{eq-ent_distill_channel}, succeeds with probability
		\begin{equation}
			p_{\text{succ}}=\frac{8}{9}f_1f_2-\frac{2}{9}(f_1+f_2)+\frac{5}{9},
		\end{equation}
		and the fidelity of the output state $\sigma_{A_1B_1}$ (conditioned on success) is
		\begin{equation}
			\frac{1}{p_{\text{succ}}}\left(\frac{10}{9}f_1f_2-\frac{1}{9}(f_1+f_2)+\frac{1}{9}\right).
		\end{equation}
		
		In general, every probabilistic entanglement distillation protocol has a corresponding LOCC quantum instrument channel with a form analogous to \eqref{eq-ent_distill_channel}, namely,
		\begin{equation}
			\mathcal{L}(\cdot)=\ket{0}\bra{0}\otimes\mathcal{L}^0(\cdot)+\ket{1}\bra{1}\otimes\mathcal{L}^1(\cdot),
		\end{equation}
		where the classical register indicates failure (`0') and success (`1'), and $\mathcal{L}^0$ and $\mathcal{L}^1$ are the corresponding completely positive trace non-increasing LOCC maps such that $\mathcal{L}^0+\mathcal{L}^1$ is an LOCC channel.~\defqed
	\end{example}

\section{Quantum key distribution}\label{sec-QKD}

	Quantum key distribution (QKD) is a method for creating a secret key using the principles of quantum mechanics, which can be used for private communication (e.g., via the one-time pad). A secret key is a random string of bits shared by the desired trusted parties that is not known to any eavesdropper or adversary. Intuitively, the underlying quantum-mechanical principles that make quantum key distribution secure are no-cloning, the Heisenberg uncertainty relation, and the fact that measurements disturb quantum systems. Long-distance QKD is one of the most prominent applications of quantum networks and of the quantum internet in general. The purpose of this section is to provide a brief summary of the basic concepts and language of QKD in order to use them later in Chapter~\ref{chap-sats} when we discuss satellite-based entanglement distribution. We refer to Refs.~\cite{GRG+02,SBPC+09,Lut14,MyhrThesis,kaur2020,XXQ+20,PAB+19} for pedagogical introductions and reviews of state-of-the-art QKD research.
	
	Let us consider the following scenario of so-called entanglement-based QKD. Suppose that Alice and Bob have access to a source that distributes entangled states $\rho_{AB}$ to them, and that their task is to use many copies of this quantum state to distill a secret key. The general strategy of Alice and Bob is to measure their quantum systems. Based on their measurement statistics, they decide whether or not to use their classical measurement data to distill a secret key. The measurement statistics are of the form
	\begin{equation}\label{eq-QKD_correlation}
		p_{AB}(x,y|a,b)\coloneqq\Tr\left[\left(\Pi_{A}^{a,x}\otimes \Lambda_B^{b,y}\right)\rho_{AB}\right],\quad x\in\mathcal{X},\,y\in\mathcal{Y},\,a\in\mathcal{A},\,b\in\mathcal{B},
	\end{equation}
	where $\mathcal{A}$ and $\mathcal{B}$ are finite sets of POVMs, such that $\{\Pi_A^{a,x}\}_{x\in\mathcal{X}}$ is a POVM for Alice's measurement for every $a\in\mathcal{A}$ and $\{\Lambda_B^{b,y}\}_{y\in\mathcal{Y}}$ is a POVM for Bob's measurement for every $b\in\mathcal{B}$. 
	
	From a cryptographic perspective, we would like to make as few assumptions as possible about what can be trusted when proving the security of QKD, i.e., when determining how much eavesdropping can be tolerated and what rates are achievable in principle. One basic assumption is that the source itself is untrusted, which means that Alice and Bob cannot base their protocol on any prior knowledge of the state $\rho_{AB}$. Beyond this basic assumption, one can ask whether the measurement devices being used by Alice and Bob in order to obtain the correlation in \eqref{eq-QKD_correlation} can be trusted. In this context, we consider two scenarios.
	
	\begin{itemize}
		\item \textit{Device-dependent scenario}: The devices implementing the measurement are known and trusted. Alice and Bob can thus use not only their measurement statistics, but their knowledge of the specific measurements being implemented, in order to distill a secret key.
		
		\item\textit{Device-independent scenario}: The devices implementing the measurement are assumed to be untrusted, and thus, from a cryptographic perspective, are part of the eavesdropper's domain. Furthermore, no assumption on the dimensions of the systems $A$ and $B$ are made (except that they are finite). This means that, in this scenario, Alice and Bob can use only their measurement statistics to distill a secret key. We refer to Ref.~\cite{AF20} for a general introduction to device-independent quantum information processing.
	\end{itemize}
	
	Proving security in the device-independent scenario means that Alice and Bob can distill a secret key even without trusting their devices, which is ideal from a practical perspective because the future quantum internet will be comprised of devices made by third-party manufacturers, which may not be trustworthy and might be susceptible to hacking by adversaries.

\subsubsection*{Device-dependent protocols}

	Two well-known device-dependent protocols that we discuss here are the BB84 \cite{BB84} and six-state \cite{B98,BG99} protocols. The original formulation of these protocols is as so-called prepare-and-measure protocols, which do not require Alice and Bob to share entanglement. However, these protocols can be viewed from an entanglement-based point of view, in which Alice and Bob possess an entangled state; see Ref.~\cite{MyhrThesis} for a discussion on the equivalence of entanglement-based and prepare-and-measure-based protocols, and Ref.~\cite{TL17} for a more general discussion of the security of prepare-and-measure-based and entanglement-based QKD protocols. In this device-dependent scenario, we explicitly assume that the state $\rho_{AB}$ is a two-qubit state, and the correlation in \eqref{eq-QKD_correlation} is given by measurement of the qubit Pauli observables $X$, $Z$, and $Y=\I XZ$. In other words, the sets $\mathcal{A}$ and $\mathcal{B}$ indicate which observable to be measured, and the sets $\mathcal{X}$ and $\mathcal{Y}$ contain the outcomes of the measurements. It can be shown via certain symmetrization procedures that, without loss of generality, $\rho_{AB}$ is a Bell-diagonal state; see Refs.~\cite{MyhrThesis,WKKG19} for details. It then suffices to estimate the following three quantities, called \textit{quantum bit-error rates (QBERs)}, in order to characterize the eavesdropper's knowledge:
	\begin{align}
		Q_x&\coloneqq\Tr[(\ket{+}\bra{+}_A\otimes\ket{-}\bra{-}_B)\rho_{AB}]+\Tr[(\ket{-}\bra{-}_A\otimes\ket{+}\bra{+}_B)\rho_{AB}]\label{eq-rhoAB_Qx}\\
		&=\frac{1}{2}(1-\Tr[(X\otimes X)\rho_{AB}]),\\
		Q_y&\coloneqq\Tr[(\ket{+\I}\bra{+\I}_A\otimes\ket{-\I}\bra{-\I}_B)\rho_{AB}]+\Tr[(\ket{-\I}\bra{-\I}_A\otimes\ket{+\I}\bra{+\I}_B)\rho_{AB}]\label{eq-rhoAB_Qy}\\
		&=\frac{1}{2}(1+\Tr[(Y\otimes Y)\rho_{AB}]),\\
		Q_z&\coloneqq\Tr[(\ket{0}\bra{0}_A\otimes\ket{1}\bra{1}_B)\rho_{AB}]+\Tr[(\ket{1}\bra{1}_A\otimes\ket{0}\bra{0}_B)\rho_{AB}]\label{eq-rhoAB_Qz}\\
		&=\frac{1}{2}(1-\Tr[(Z\otimes Z)\rho_{AB}]),
	\end{align}
	where $\ket{\pm}=\frac{1}{\sqrt{2}}(\ket{0}\pm\ket{1})$ and $\ket{\pm\I}=\frac{1}{\sqrt{2}}(\ket{0}\pm\I\ket{1})$. For example, $Q_x$ is simply the probability that Alice and Bob's measurement outcomes disagree when they both measure the observable $X$, and similarly for $Q_y$ and $Q_z$.
	
	A standard figure of merit for QKD protocols is the number of secret key bits obtained per copy of the source state; see, e.g., Ref.~\cite{WKKG19} for precise definitions. For the BB84 protocol, the asymptotic secret key rate is \cite{MY98,LC99,BBB+00,SP00,Mayers01,BBB+06}
	\begin{equation}\label{eq-key_rate_BB84}
		K_{\text{BB84}}(\rho_{AB})=1-2h_2(Q),
	\end{equation}
	where $Q=\frac{1}{2}(Q_x+Q_z)$ and
	\begin{equation}
		h_2(Q)\coloneqq-Q\log_2(Q)-(1-Q)\log_2(1-Q)
	\end{equation}
	is the binary entropy.
		
	For the six-state protocol, the asymptotic secret key rate is \cite{B98,Lo01}.
	\begin{equation}\label{eq-key_rate_6state}
		K_{\text{6-state}}(Q)=1+\left(1-\frac{3Q}{2}\right)\log_2\left(1-\frac{3Q}{2}\right)+\frac{3Q}{2}\log_2\left(\frac{Q}{2}\right)
	\end{equation}
	where $Q=\frac{1}{3}(Q_x+Q_y+Q_z)$.
	
	\begin{remark}
		The QBERs $Q_x,Q_y,Q_z$ in \eqref{eq-rhoAB_Qx}, \eqref{eq-rhoAB_Qy}, and \eqref{eq-rhoAB_Qz} have a useful interpretation in terms of the fidelity of an arbitrary two-qubit state $\rho_{AB}$ to the maximally entangled state $\Phi_{AB}^+$. In particular,
		\begin{equation}\label{eq-rhoAB_fid_Bell_QBERs}
			F(\rho_{AB},\Phi_{AB}^+)=1-\frac{1}{2}(Q_x+Q_y+Q_z)
		\end{equation}
		for all two-qubit states $\rho_{AB}$. It is easy to see this by noting that
		\begin{equation}
			\Phi_{AB}^+=\frac{1}{4}\left(\mathbbm{1}_A\otimes\mathbbm{1}_B+X_A\otimes X_B-Y_A\otimes Y_B+Z_A\otimes Z_B\right).
		\end{equation}
		Then, using the definitions in \eqref{eq-rhoAB_Qx}, \eqref{eq-rhoAB_Qy}, and \eqref{eq-rhoAB_Qz}, we obtain \eqref{eq-rhoAB_fid_Bell_QBERs}.~\defqed
	\end{remark}

\subsubsection*{Device-independent protocols}
	
	The device-independent protocol that we present here is the one introduced in Refs.~\cite{AMP06,ABG+07}, and the basic idea behind the protocol comes from the protocol in Ref.~\cite{Eke91}. The security of the protocol is based on violation of a Bell inequality, specifically the CHSH inequality \cite{CHSH69} (see Ref.~\cite{Scar13} for a pedagogical introduction). In this protocol, unlike the device-dependent protocols shown above, it is not required to assume that $\rho_{AB}$ is a two-qubit state. However, like the device-dependent protocols considered above, there are symmetrization procedures and other reductions from which it can be argued that $\rho_{AB}$ is a two-qubit Bell-diagonal state without loss of generality; see Refs.~\cite{AMP06,ABG+07} for details. The correlation in \eqref{eq-QKD_correlation} is given by measurement of observables $P_A^0,P_A^1,P_A^2$ for system $A$ and observables $Q_B^1,Q_B^2$ for system $B$, and we assume that they all have spectral decompositions of the form
	\begin{align}
		P_A^j&=\Pi_A^{j,0}-\Pi_A^{j,1},\quad j\in\{0,1,2\},\\
		T_B^k&=\Lambda_B^{k,0}-\Lambda_B^{k,1},\quad k\in\{1,2\}.
	\end{align}
	In other words, $\mathcal{A}=\{0,1,2\}$, $\mathcal{B}=\{1,2\}$, and $\mathcal{X}=\mathcal{Y}=\{0,1\}$.
	
	Two quantities in this case characterize the secret key rate:
	\begin{equation}
		S\coloneqq \Tr\left[\left(P_A^1\otimes T_B^1+P_A^1\otimes T_B^2+P_A^2\otimes T_B^1-P_A^2\otimes T_B^2\right)\rho_{AB}\right],
	\end{equation}
	and the quantum bit-error rate (QBER) $Q$, which is defined as
	\begin{equation}
		Q\coloneqq \Tr[(\Pi_A^{0,0}\otimes\Lambda_B^{1,1})\rho_{AB}]+\Tr[(\Pi_A^{0,1}\otimes\Lambda_B^{1,0})\rho_{AB}].
	\end{equation}
	As with the QBERs defined previously, the QBER here is the probability that the outcomes of Alice and Bob disagree when a measurement of $P_A^0$ is performed by Alice and a measurement of $T_B^1$ is performed by Bob. The asymptotic key rate is then \cite{ABG+07,PAB+09}
	\begin{equation}\label{eq-key_rate_DIQKD}
		K_{\text{DI}}(Q,S)=1-h_2(Q)-h_2\left(\frac{1+\sqrt{(S/2)^2-1}}{2}\right).
	\end{equation}
	
	In Figure~\ref{fig-key_rates}, we plot the key rates in \eqref{eq-key_rate_BB84} and \eqref{eq-key_rate_6state} for the BB84 and six-state protocols, as well as the rate in \eqref{eq-key_rate_DIQKD} for the device-independent protocol. As one might expect, due to the stronger trust assumptions on the device-independent protocol, its key rate is lower than both the device-dependent BB84 and six-state protocols.

	\begin{figure}
		\centering
		\includegraphics[scale=1]{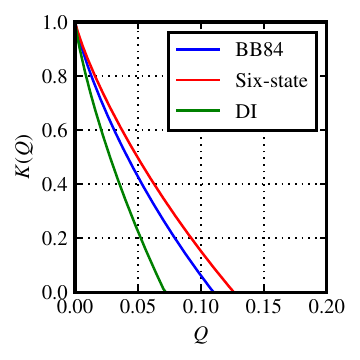}
		\caption{Plots of the secret key rates for device-dependent and device-independent (DI) QKD protocols. Shown is the secret key rate in \eqref{eq-key_rate_BB84} for the BB84 protocol, the secret key rate in \eqref{eq-key_rate_6state} for the six-state protocol, and the secret key rate in \eqref{eq-key_rate_DIQKD} for the DI protocol. For the DI protocol, we plot the secret key rate for the example considered in Ref.~\cite{ABG+07}, in which $P_A^0=T_B^1=Z$, $T_B^2=X$, $P_A^1=\frac{1}{\sqrt{2}}(Z+X)$, and $P_A^2=\frac{1}{\sqrt{2}}(Z-X)$, and $S=2\sqrt{2}(1-2Q)$. The BB84 secret key rate vanishes at $Q\approx 11\%$, while the six-state secret key rate vanishes at $Q\approx 12.7\%$ and the DI secret key rate vanishes at $Q\approx 7.1\%$.}\label{fig-key_rates}
	\end{figure}
	
	\begin{remark}
		We note that the key rate formulas presented here in \eqref{eq-key_rate_BB84}, \eqref{eq-key_rate_6state}, and \eqref{eq-key_rate_DIQKD} are achievable rates in the IID (i.e., collective attack) and asymptotic setting under certain assumptions on the nature of the protocol. Also, to be clear, let us state that these formulas represent the number of secret key bits that can be obtained per copy of $\rho_{AB}$ in the asymptotic and IID setting. More sophisticated procedures, such as noisy pre-processing and advantage distillation \cite{Maurer93}, can be used to increase the key rate threshold. See Refs.~\cite{KGR05,RGK05,KBR07,KR08} for noisy pre-processing and Refs.~\cite{GL03,Chau02,WMUK07,KBR07,BA07,MRDL09,KL17} for advantage distillation for the BB84 and six-state protocols; see Ref.~\cite{TSB+20} for noisy pre-processing and Ref.~\cite{TLR20} for advantage distillation for device-independent protocols.
	\end{remark}

\section{Graph theory}\label{sec-graph_theory}

	We end this chapter with a brief overview of graph theory. Graphs provide a visual representation of quantum networks (see Chapter~\ref{chap-network_setup}), and many graph-theoretic results are relevant for determining rates of entanglement distribution in a quantum network. In this section, we provide the basic definitions and theorems that are needed for this thesis. For further details, and proofs of the theorems presented here, please consult, e.g., Refs.~\cite{Bollobas98_book,BM08_book}. Throughout this thesis, we consider only undirected graphs.
	
	An undirected graph is a pair $G=(V,E)$, where $V$ is a set of objects, called \textit{vertices} or \textit{nodes}, and $E\subseteq\text{P}(V)\setminus\{\varnothing\}$ is a set of \textit{edges}, where $\text{P}(V)$ denotes the power set (set of subsets) of $V$. For an ordinary graph, each element $e\in E$ is a two-element set, i.e., $e=\{v_i,v_j\}$ for $v_i,v_j\in V$, while in the case of a \textit{hypergraph}, each element $e\in E$ is a set $\{v_{i_1},v_{i_2},\dotsc,v_{i_k}\}$ of $k\geq 2$ vertices and is called a \textit{hyperedge}. The \textit{degree} $d(v)$ of a vertex $v\in V$ is defined to be the number of edges that are incident to $v$, i.e., $d(v)=|\{e\in E:v\in e\}|$. The size of the graph $G$, denoted by $|G|$, is equal to the number of edges, i.e., $|G|\coloneqq |E|$. The term \textit{multigraph} is used if vertices can be connected to each other by multiple edges, often called parallel edges. In this case, the graph is specified by a triple, $G=(V,E,c)$, where $V$ and $E\subseteq\text{P}(V)\setminus\{\varnothing\}$ are defined as before, and $c:E\to\mathbb{N}$ is a function such that $ c(e)$ is equal to the number of parallel edges connecting the vertices in $e$. For a given $e\in E$, we let $e^1,e^2,\dotsc,e^{ c(e)}$ denote the $ c(e)$ parallel edges connecting the vertices in $e$, and we regard each of these parallel edges as distinct objects. The set of edges is then $\{e^j:1\leq j\leq  c(e),\,e\in E\}$ rather than $E$, which means that the size of a multigraph is $|G|=\sum_{e\in E} c(e)$, and the degree of a node $v\in V$ is $d(v)=\sum_{e\in E:v\in e} c(e)$. We often suppress the dependence of a multigraph $G$ on the function $c$ when there is no chance of confusion.
	
	Graphs are depicted visually by drawing the vertices as dots and the edges as lines connecting the appropriate vertices. Hyperedges are drawn as a bubble surrounding the appropriate vertices. See Figure~\ref{fig-graph} for a visual representation of a multigraph.
	
	\begin{figure}
		\centering
		\includegraphics[scale=1]{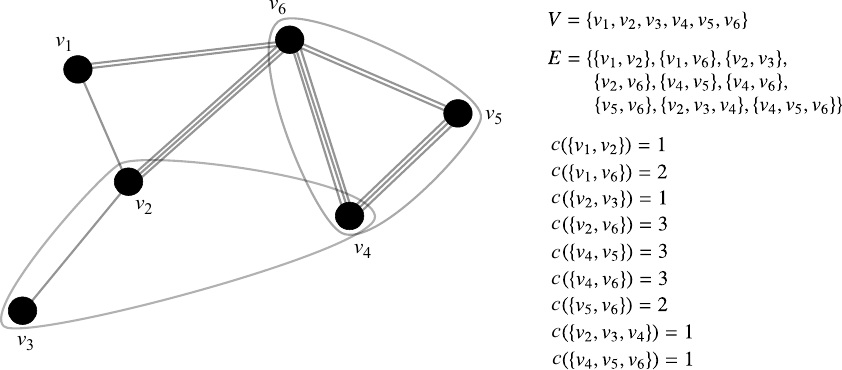}
		\caption{Visual representation of a multigraph $G=(V,E,c)$.}\label{fig-graph}
	\end{figure}
	
	For every graph $G=(V,E)$ consisting of only two-element edges, its \textit{adjacency matrix} is a $|V|\times|V|$ matrix denoted by $A(G)$, and it is defined as
	\begin{equation}
		A(G)_{i,j}\coloneqq\left\{\begin{array}{l l} 1 & \text{if }\{v_i,v_j\}\in E,\\0 & \text{otherwise},\end{array}\right.
	\end{equation}
	for all $1\leq i,j\leq |V|$.
	
	Given a graph $G=(V,E)$, for all $v,v'\in V$ we write $v\leftrightarrow v'$ if there exists a finite sequence $W=(v,e_1,w_1,e_2,w_2,\dotsc, w_{n-1},e_n,v')$ of edges $e_i\in E$ and vertices $w_i\in V$ such that $v,w_1\in e_1$, $w_1,w_2\in e_2$, etc., and $w_{n-1},v'\in e_n$. The sequence $W$ is called a \textit{walk} from $v$ to $v'$. A \textit{path} between two distinct vertices is a walk in which all vertices and edges are distinct.
	
	The relation ``$\leftrightarrow$'' defined in the previous paragraph is an equivalence relation. Indeed, it satisfies the following three properties:
	\begin{itemize}
		\item $v\leftrightarrow v$ for all $v\in V$ by taking $W=(v)$.
		\item If $v\leftrightarrow v'$, then $v'\leftrightarrow v$ by reversing the walk from $v$ to $v'$.
		\item If $v\leftrightarrow v'$ and $v'\leftrightarrow v''$, then $v\leftrightarrow v''$, which follows by concatenating the walks from $v$ to $v'$ and from $v'$ to $v''$.
	\end{itemize}
	We say that $v$ and $v'$ are \textit{connected} if $v\leftrightarrow v'$. Since ``$\leftrightarrow$'' is an equivalence relation, it can be used to partition the set $V$ of vertices. In particular, we let
	\begin{equation}
		[v]\coloneqq\{v'\in V: v'\leftrightarrow v\}
	\end{equation}
	denote the equivalence class corresponding to $v\in V$, i.e., the set of all vertices connected to $V$. Then, $V=\bigsqcup [v]$, i.e., $V$ is the disjoint union of the equivalence classes under ``$\leftrightarrow$''. We denote the set of edges connecting all of the vertices in $[v]$ by $E_{[v]}$, and we note that $E_{[v]}\subseteq E$ for all $v\in V$. Consequently, $C_v(G)\coloneqq ([v],E_{[v]})$ is a subgraph of $G$.
	
	Given a graph $G$, consider a subgraph $G'\coloneqq (V',E')$ of $G$, so that $V'\subseteq V$ and $E'\subseteq E$. We call $G'$ a \textit{connected subgraph} if $v\leftrightarrow w$ for all $v,w\in V'$. A \textit{maximal connected subgraph} is a connected subgraph such that no element of $V\setminus V'$ is connected to a vertex in $V'$. A maximal connected subgraph is also called a \textit{(connected) component}. The connected components partition a graph, and thus we can label every connected component by the equivalence classes under ``$\leftrightarrow$''. In particular, observe that the subgraph $C_v(G)$ is precisely a connected component of $G$. We let
	\begin{equation}
		\sfrac{G}{\leftrightarrow}\coloneqq\{C_v(G):v\in V\}
	\end{equation}
	be the set of all connected components of $G$, and we let $S^{\max}(G)$ denote the size of the largest connected component of $G$, i.e.,
	\begin{equation}\label{eq-largest_connected_component}
		S^{\max}(G)\coloneqq\max\{|C_v(G)|:v\in V\}=\max\{|C|:C\in\sfrac{G}{\leftrightarrow}\}.
	\end{equation}
	
	An important problem in graph theory, which is also relevant in the context of bipartite entanglement distribution in quantum networks, is to determine the number of edge-disjoint paths (i.e., paths that have no edges in common) between vertices in graphs and hypergraphs; see, e.g., Ref.~\cite{Nis90}. In the case of graphs, the solution is provided by Menger's theorem, and the analogous statement holds for hypergraphs; see, e.g., Ref.~\cite[Theorem~1.11]{Kir03}.
	\medskip
	\begin{theorem}[Menger~\cite{Men27}]\label{thm-Menger}
		Let $G$ be an undirected graph and let $v$ and $v'$ be distinct vertices of $G$. Then, the maximum number of edge-disjoint paths between $v$ and $v'$ is equal to the minimum number of edges that separates $v$ and $v'$.
	\end{theorem}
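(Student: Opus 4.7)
The plan is to prove the two inequalities separately. Let $p(v,v')$ denote the maximum number of edge-disjoint paths between $v$ and $v'$, and let $c(v,v')$ denote the minimum size of an edge set $F\subseteq E$ such that $v$ and $v'$ lie in different components of the subgraph $(V,E\setminus F)$. The goal is to show $p(v,v')=c(v,v')$.

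First I would dispatch the easy inequality $p(v,v')\leq c(v,v')$. If $P_1,\dotsc,P_m$ are edge-disjoint $v$-$v'$ paths and $F$ is any separating edge set, then each $P_i$ must contain at least one edge of $F$; because the paths are edge-disjoint these chosen edges are distinct, so $m\leq |F|$, and minimizing over $F$ gives the bound.

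For the nontrivial direction $p(v,v')\geq c(v,v')$, I would reduce to the max-flow min-cut theorem. Construct a directed graph $\vec{G}$ on the same vertex set by replacing each undirected edge $\{x,y\}\in E$ by two antiparallel arcs $(x,y)$ and $(y,x)$, each of capacity $1$. A short verification shows that the minimum $v$-to-$v'$ cut in $\vec{G}$ has value exactly $c(v,v')$: any undirected separator of size $k$ yields a directed cut of capacity at most $k$ by orienting each chosen edge appropriately, and a minimum directed cut projects to an undirected separator of no larger size. By the integral max-flow min-cut theorem (e.g.\ via Ford--Fulkerson on a unit-capacity network), there is an integer flow of value $c(v,v')$ in $\vec{G}$, and by the standard flow-decomposition lemma this flow is a sum of $c(v,v')$ unit flows along directed $v$-$v'$ paths plus some cycles. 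Discarding the cycles yields $c(v,v')$ arc-disjoint directed paths in $\vec{G}$.

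The main obstacle is converting these arc-disjoint directed paths into edge-disjoint undirected paths, since two directed paths could in principle traverse the two opposite orientations of one underlying undirected edge. I would handle this with a swap/cancellation argument: whenever path $P_i$ uses $(x,y)$ and path $P_j$ uses $(y,x)$, exchange the tails of $P_i$ and $P_j$ at $x$ and $y$; the two resulting walks still begin at $v$ and end at $v'$, but the edge $\{x,y\}$ is used zero times in total and the combined arc-length strictly decreases. Iterating this operation terminates and eliminates all such conflicts; removing any remaining cycles from each walk then produces $c(v,v')$ genuine edge-disjoint undirected $v$-$v'$ paths, establishing $p(v,v')\geq c(v,v')$ and completing the proof.
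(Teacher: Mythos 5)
Your proof is correct, but note that the paper itself does not prove this theorem: it is stated as a classical result with the reader referred to the cited graph-theory texts for a proof, so there is no in-paper argument to compare against. On its own terms, your route through the integral max-flow min-cut theorem is a standard and sound way to get the edge version of Menger's theorem. The easy inequality is handled correctly, the identification of minimum directed cuts in the bidirected unit-capacity network with minimum undirected separators is right (the key observation being that a minimum undirected separator can be taken to be the edge boundary $\partial S$ of the component $S$ of $v$ in $G-F$), and your tail-swapping argument with the total arc-length as a decreasing potential does resolve the antiparallel-arc issue. Two small remarks: the antiparallel conflict is more commonly dispatched one step earlier, by cancelling opposite unit flows on each pair of antiparallel arcs before decomposing (this preserves flow value and conservation and immediately yields a flow using at most one orientation of each edge); and if one wants a self-contained argument that does not import max-flow min-cut as a black box, the classical alternative is a direct induction or augmenting-path argument, which is essentially Ford--Fulkerson specialized to this setting. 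Either way, your proof establishes the statement.
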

	
	A \textit{tree} (sometimes called a \textit{Steiner tree}) in a (hyper)graph $G$ is a connected acyclic sub-(hyper)graph. Given a subset $V'\subseteq V$, a \textit{spanning tree} for $V'$ is a tree in $G$ whose vertices contain the vertices in $V'$. See Ref.~\cite{Brazil2015} for more information on trees in graphs and hypergraphs. When $G$ is a graph, we have the following result pertaining to the number of edge-disjoint spanning trees of the entire graph $G$, and it is relevant for distributing multipartite entanglement in a quantum network.
	\medskip
	\begin{theorem}[Tutte~\cite{Tutte61} and Nash-Williams~\cite{NW61}]\label{thm-Tutte_NW}
		An undirected graph $G$ contains $k$ edge-disjoint spanning trees if and only if $E_G(\mathcal{P})\geq k(|\mathcal{P}|-1)$ holds for every partition $\mathcal{P}$ of $V$ into non-empty subsets, where $E_G(\mathcal{P})$ denotes the number of edges connecting distinct member of $\mathcal{P}$.
	\end{theorem}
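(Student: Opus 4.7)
The plan is to prove the two directions separately. For necessity, I would argue by contraction. Suppose $G$ contains $k$ edge-disjoint spanning trees $T_1,\ldots,T_k$, and let $\mathcal{P}=\{V_1,\ldots,V_m\}$ be any partition of $V$ into non-empty subsets. Contracting each $V_i$ to a single vertex transforms each $T_j$ into a connected (multi)graph on $m$ vertices, so $T_j$ must supply at least $m-1$ edges whose endpoints lie in distinct parts of $\mathcal{P}$. Since the trees are edge-disjoint, summing over $j$ gives $E_G(\mathcal{P})\geq k(m-1)=k(|\mathcal{P}|-1)$, as required.

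For sufficiency, I would invoke matroid theory. Let $M$ denote the graphic matroid of $G$, whose independent sets are the forests of $G$ and whose rank function is $r(F)=|V|-c(F)$, with $c(F)$ the number of connected components of the spanning subgraph $(V,F)$. A subset of $E$ decomposes into $k$ edge-disjoint forests if and only if it is independent in the $k$-fold matroid union $M^{\vee k}$. By the Edmonds--Nash-Williams matroid union theorem, the rank of $M^{\vee k}$ equals
\begin{equation}
r_{M^{\vee k}}(E)=\min_{F\subseteq E}\bigl[|E\setminus F|+k\,r(F)\bigr]=\min_{F\subseteq E}\bigl[|E\setminus F|+k(|V|-c(F))\bigr].
\end{equation}
Since a set of $k$ edge-disjoint spanning trees has total size $k(|V|-1)$ and is necessarily a maximum-size independent set in $M^{\vee k}$, the graph $G$ admits such a collection if and only if $r_{M^{\vee k}}(E)\geq k(|V|-1)$, equivalently $|E\setminus F|\geq k(c(F)-1)$ for every $F\subseteq E$.

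To complete the proof I would establish equivalence of the partition condition and this matroid-union condition. Given a partition $\mathcal{P}$ into non-empty parts, take $F$ to be the set of edges whose endpoints lie in a common part of $\mathcal{P}$; then $c(F)\geq|\mathcal{P}|$ and $|E\setminus F|=E_G(\mathcal{P})$, so $E_G(\mathcal{P})\geq k(|\mathcal{P}|-1)$ implies $|E\setminus F|\geq k(c(F)-1)$. Conversely, given any $F\subseteq E$, let $\mathcal{P}_F$ be the partition of $V$ into the connected components of $(V,F)$; then $|\mathcal{P}_F|=c(F)$ and $E_G(\mathcal{P}_F)\leq|E\setminus F|$, yielding the reverse implication. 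The hypothesis $E_G(\mathcal{P})\geq k(|\mathcal{P}|-1)$ for all partitions therefore certifies that $r_{M^{\vee k}}(E)\geq k(|V|-1)$, producing the desired $k$ edge-disjoint spanning trees.

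The main obstacle will be justifying the matroid union formula itself, which is a nontrivial min-max theorem of combinatorial optimization proved by an augmenting-sequence argument analogous to that used for matroid intersection. For the purposes of this thesis I would cite it as a known result rather than reproduce its proof; alternatively, one can give a direct combinatorial proof along the lines of Nash-Williams' original argument, iteratively building a packing of forests and exchanging edges along alternating paths whenever a forest fails to span, but this is substantially more intricate than the matroid-theoretic derivation sketched above.
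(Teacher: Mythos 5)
The paper does not prove this statement at all: it is quoted as a classical result of Tutte and Nash--Williams, with proofs explicitly deferred to the cited graph-theory references, so there is no in-paper argument to compare against. Your sketch is the standard modern proof and is essentially sound. The necessity direction by contracting the parts of $\mathcal{P}$ is correct and complete. The sufficiency direction via the Nash--Williams/Edmonds matroid union rank formula for the $k$-fold union of the graphic matroid is the usual route, and the reduction you actually need --- that the partition condition for all $\mathcal{P}$ implies $|E\setminus F|\geq k(c(F)-1)$ for all $F\subseteq E$, obtained by applying the hypothesis to the partition $\mathcal{P}_F$ into connected components of $(V,F)$ --- is argued correctly in your ``Conversely'' sentence. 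One small logical slip: in the first half of your equivalence you claim that $E_G(\mathcal{P})\geq k(|\mathcal{P}|-1)$ together with $c(F)\geq|\mathcal{P}|$ implies $|E\setminus F|\geq k(c(F)-1)$; this inequality points the wrong way (when $c(F)>|\mathcal{P}|$ the conclusion is \emph{stronger} than the hypothesis), and what that construction actually establishes is the reverse implication, namely that the matroid condition implies the partition condition. Since that direction is redundant (necessity was already handled by contraction, and sufficiency only uses the $\mathcal{P}_F$ argument), the proof still goes through, but the sentence should be restated. The remaining dependence is the matroid union rank formula itself, which, as you say, is a substantial theorem in its own right; citing it, or citing the theorem outright as the thesis does, is the appropriate choice here.
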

	
	For more information on the general problem of finding edge-disjoint Steiner trees in (hyper)graphs (often called the \textit{Steiner tree packing problem}), we refer to Refs.~\cite{JMS03,Kaski04,Lau07}. Specifically, in Ref.~\cite{Kaski04}, it is shown that the Steiner tree packing problem is NP-complete, and Refs.~\cite{Kaski04,Lau07} provide algorithms for approximately determining the maximum number of edge-disjoint Steiner trees.

\chapter{QUANTUM DECISION PROCESSES}\label{chap-QDP}

	In this chapter, we detail the main mathematical tool that we use to develop quantum network protocols, namely, quantum decision processes. The notion of quantum decision process that we define in this chapter builds on the notion of a \textit{quantum partially observable Markov decision process} as defined in Ref.~\cite{BBA14} (see also Refs.~\cite{Cid16,YY18}). The name ``quantum partially observable Markov decision process'' is based on classical (partially observable) Markov decision processes \cite{KLC98,Put14_book}, which provide a simple mathematical model for agent-environment interactions and have been widely studied \cite{Bell54,Bell57,Bell57_book,Blackwell62,Aas65,WL95,Maha96,KLM96,KLC98,CYZ01}. Markov decision processes also provide the mathematical foundation for reinforcement learning \cite{Sut18_book} and artificial intelligence \cite{RN09_book}.
	
	A classical Markov decision process is depicted in the left panel of Figure~\ref{fig-MDP}. At each time step, the environment is assumed to be in one of a discrete set $\mathcal{X}$ of states, and the agent can perform an action from a discrete set $\mathcal{A}$ of actions. Starting at time $t=1$, the agent perceives a state $x_1$ of the environment. Based on this knowledge, the agent picks an action $a_1\in\mathcal{A}$, and sends it to the environment. The environment then transitions to the state $x_2\in\mathcal{X}$, and this transition is governed by a function $T_1(x_1,a_1,x_2)$, which gives the probability that the environment transitions to the state $x_2\in\mathcal{X}$ at time $t=2$ given that the state at time $t=1$ was $x_1\in\mathcal{X}$ and the action taken at that time was $a_1\in\mathcal{A}$. The agent also receives a reward $R(1)(x_1,a_1,x_2)$, which depends on the current state of the environment, the action taken, and the state that the environment transitions to. This process continues either indefinitely, until a given amount of time has elapsed, or until some stopping criterion is satisfied. The agent's goal is to devise a \textit{policy}, i.e., a mapping from states to actions, such that its expected reward is maximized. Markov decision problems can be seen as a special case of dynamic programming/control theory/scheduling problems in which the set of actions is discrete instead of continuous. Also, the term ``schedule'' is sometimes used instead of ``policy''.
	
	\begin{figure}
		\centering
		\includegraphics[width=0.46\textwidth]{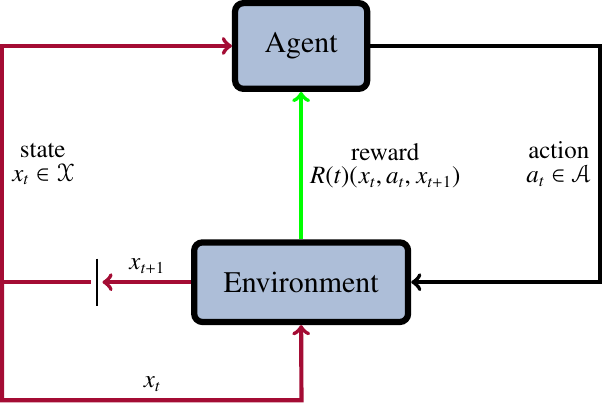}\qquad
		\includegraphics[width=0.46\textwidth]{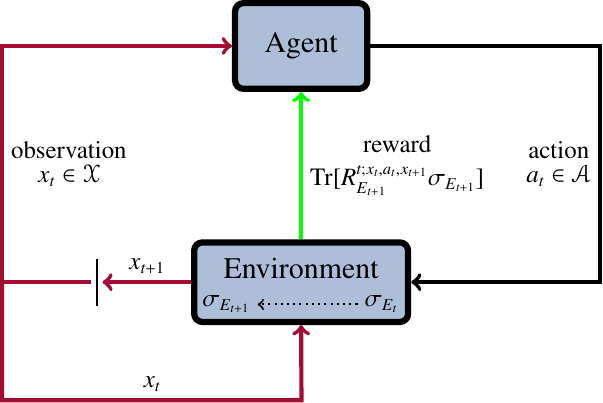}
		\caption{Schematic diagrams of classical (left) and quantum partially observable (right) Markov decision processes. See Refs.~\cite{KLC98,Put14_book} for details on classical Markov decision processes. The quantum generalization shown on the right has been defined previously in Ref.~\cite{BBA14} (see also Refs.~\cite{Cid16,YY18}), and it involves making the environment a quantum system and defining the reward as the expectation value of a Hermitian operator $R^{t;x_t,a_t,x_{t+1}}$. Note that the agent is still classical, in the sense that it sends the classical values in $\mathcal{A}$ to the environment.}\label{fig-MDP}
	\end{figure}
	
	A straightforward quantum generalization of a Markov decision process is shown in the right panel of Figure~\ref{fig-MDP}, and it is essentially a quantum partially observable Markov decision process as defined in Ref.~\cite{BBA14}. (See Refs.~\cite{Cid16,YY18} for similar quantum generalizations.) The environment is now a quantum system, which we call $E$. The state of the environment at time $t\geq 1$ is given by a density operator $\sigma_{E_t}$. What the agent receives at each time step is not this quantum state but some \textit{observations} pertaining to the environment's quantum state, which is classical data that contains some desired classical property of the environment and is captured by the value $x_t\in\mathcal{X}$. The agent is still classical, and it sends an action $a_t\in\mathcal{A}$ to the environment. Based on the agent's action, the environment transitions to a new quantum state $\sigma_{E_{t+1}}$, while outputting a new observation $x_{t+1}\in\mathcal{X}$. The agent's reward is given by $\Tr[R_{E_{t+1}}^{t;x_t,a_t,x_{t+1}}\sigma_{E_{t+1}}]$, where $R_{E_{t+1}}^{t;x_t,a_t,x_{t+1}}$ is a Hermitian operator that depends on $x_t,a_t,x_{t+1}$. As before, the agent's goal is to determine an optimal policy, i.e., a mapping from observations to actions such that its expected reward is maximized. This quantum generalization of a Markov decision process is called ``partially observable'' because the agent never receives full information about the quantum state of the environment. Strictly speaking, therefore, quantum partially observable Markov decision processes should be thought of as a quantum generalization of classical partially observable Markov decision processes as defined in Refs.~\cite{Aas65,KLC98}.
	
	\begin{figure}
		\centering
		\includegraphics[width=0.5\textwidth]{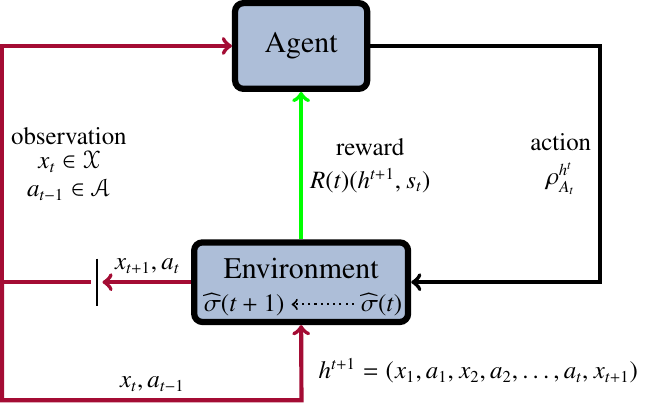}
		\caption{Schematic diagram of a quantum decision process as we define it in Definition~\ref{def-QDP}. They key difference between this formulation and the one in the right panel of Figure~\ref{fig-MDP} is that the agent is allowed to possess a quantum system, so that its action can correspond to a quantum state rather that just a classical value from the set $\mathcal{A}$. This so-called \textit{action state} is measured by the environment, and its outcome is sent to the agent as part of the observation. We also allow the reward to take a more general form, and we keep track of the full history of the agent-environment interaction via the classical quantum states $\widehat{\sigma}(t)$.}\label{fig-QDP_0}
	\end{figure}
	
	In this chapter, we define a slightly more general notion of a quantum partially observable Markov decision process, and we call it simply a \textit{quantum decision process}; see Figure~\ref{fig-QDP_0} and Definition~\ref{def-QDP} below. Roughly speaking, instead of sending one out of a set of possible (classical) actions to the environment at each time step, in a quantum decision process we allow the agent to send a quantum state, which we call an \textit{action state}, to the environment. The agent's observation of the environment remains classical, but it consists now of the pair $(x_{t+1},a_t)$, where $x_{t+1}\in\mathcal{X}$ as before, but the value $a_t\in\mathcal{A}$ results from the environment's measurement of the action state sent by the agent. In this case, a policy is a sequence of action states, and we focus specifically on the problem of obtaining an optimal policy with respect to the expected reward criterion. Unlike classical Markov decision processes, our results in Section~\ref{sec-opt_pol_analytical} tell us that the optimal policy cannot necessarily be described in terms of a classical probabilistic decision among the different possible actions. Instead, we find that, under certain conditions, the optimal decision at each time can involve a \textit{superposition} of the different possible actions. Under these conditions, an agent equipped with a quantum system is necessary to achieve the optimal expected reward. A purely classical agent, therefore, cannot in general achieve the optimal reward in the presence of a quantum environment.
	
	Just as classical Markov decision processes provide the mathematical framework for reinforcement learning, so too do quantum decision processes provide the mathematical framework for \textit{quantum reinforcement learning}. A general framework for fully-quantum reinforcement learning has been laid out in Refs.~\cite{DCLT08,DTB16} (see also Refs.~\cite{DTB15,DB17,DTB17}). In particular, in Ref.~\cite{DTB16}, the authors provide general conditions under which a quantum-enhanced learning agent that has access to a quantum-oracular variant of a classical environment can outperform a classical agent acting in the same (classical) environment, and these results have been used to demonstrate exponential speed-ups in learning efficiency for particular classes of Markov decision processes \cite{DLWT17} (see also Ref.~\cite{PDM+14}). Other examples of quantum reinforcement learning have been considered in Refs.~\cite{L17,CLRS17,SWG+18}. These prior works deal primarily with quantum strategies for solving \textit{classical} reinforcement problems. The developments of this chapter provide the theoretical framework for reinforcement learning in a purely quantum setting. We discuss this point further in Appendix~\ref{sec-future_work}. Specifically, in the context of this thesis, the developments of this chapter provide the tools needed in order to perform reinforcement learning of practical quantum network protocols.

\section{Definition and basic properties}\label{sec-QDP}

	Let us start by first formally defining a classical Markov decision process, as depicted in the left panel of Figure~\ref{fig-MDP}. We follow the definition presented in Ref.~\cite[Chapter~2]{Put14_book}. A Markov decision process is a sequence of interactions between an agent and its environment that is defined by the following elements.
	\begin{itemize}
		\item A finite set $\mathcal{X}$ of states of the environment, with associated random variables $X(t)$ for all $t\geq 1$ whose values are contained in $\mathcal{X}$. We also have a finite set $\mathcal{A}$ of actions of the agent, with associated random variables $A(t)$ for all $t\geq 1$ whose values are contained in $\mathcal{A}$. The sequence
			\begin{equation}
				H(t)\coloneqq (X(1),A(1),X(2),A(2),\dotsc,A(t-1),X(t))
			\end{equation}
			of state and action random variables tells us the \textit{history} of the agent-environment interaction up to some time $t\geq 1$. Every realization of the history is a sequence of the form
			\begin{equation}\label{eq-link_history}
				h^t\coloneqq (x_1,a_1,x_2,a_2,\dotsc,a_{t-1},x_t),
			\end{equation}
			where $x_j\in\mathcal{X}$ and $a_j\in\mathcal{A}$. Given a history $h^t$ of the form shown above, we let
			\begin{equation}
				h^t_j\coloneqq (x_1,a_1,x_2,a_2,\dotsc,a_{j-1},x_j)
			\end{equation}
			denote the history up to time $j\geq 2$. For $j=1$, we let $h^t_1=x_1$. Then, we can regard the state and action random variables as functions such that, for every history $h^t$ as in \eqref{eq-link_history},
			\begin{equation}
				X(j)(h^t)=x_j,\quad A(j)(h^t)=a_j
			\end{equation}
			for all $1\leq j\leq t$. We let $\Omega(t)\coloneqq\{\mathcal{X}\times\mathcal{A}\}^{t-1}\times\mathcal{X}$ denote the set of all histories up to time $t\geq 1$.
			
		\item Transition functions $T_t:\mathcal{X}\times\mathcal{A}\times\mathcal{X}\to[0,1]$ for all $t\geq 1$ such that $T_t(x_t,a_t,x_{t+1})=\Pr[X(t+1)=x_{t+1}|X(t)=x_t,A(t)=a_t]$. In other words, the transition function gives us the probability that, at time $t$, the environment transitions to a particular state at time $t+1$ given its state at time $t$ and the agent's action at time $t$.
		
		\item Reward functions $R(t):\mathcal{X}\times\mathcal{A}\times\mathcal{X}\to\mathbb{R}$ for all $t\geq 1$ such that $R(t)(x_{t},a_{t},x_{t+1})$ is the reward received by the agent at time $t$ based on the state $x_{t}$ of the environment at time $t$, the agent's action $a_{t}$ at time $t$, and the new state $x_{t+1}$ of the environment at time $t+1$ based on the agent's action.
		
		\item Decision functions $d_t:\Omega(t)\times\mathcal{A}\to[0,1]$ for all $t\geq 1$ such that
			\begin{equation}\label{eq-MDP_decision_functions}
				d_t(h^t)(a_t)\coloneqq\Pr[A(t)=a_t|H(t)=h^t].
			\end{equation}
			In other words, the decision function gives us the probability that, at time $t$, the agent takes the action $a_t$ conditioned on the history $h^t$ of the interaction up to time $t$. The sequence $\pi\coloneqq (d_1,d_2,\dotsc)$ is called a \textit{policy} for the agent, and it tells us how action decisions are made at each time step.
	\end{itemize}
	The agent's goal is to perform actions that maximize its long-term reward. Specifically, in the \textit{finite-horizon setting}, the agent's goal is to maximize the expected value of the quantity $\sum_{t=1}^T R(t)$ up to a given amount $T<\infty$ of time, called the \textit{horizon time}. In the \textit{infinite-horizon setting}, the agent's goal is to maximize the expected value of the quantity $\sum_{t=1}^{\infty}\gamma^{t-1} R(t)$, where $\gamma\in(0,1]$ is a \textit{discount factor}. A thorough introduction to classical Markov decision processes can be found in Refs.~\cite{KLC98,Put14_book}. Note that what makes a classical Markov decision process Markovian is the fact that the transition function and the reward function at each time step depend only on the state and action of the previous time step. However, the decision functions can in general depend on the entire history of the interaction, even in a Markov decision process.
	
	By the basic rules of probability, the probability of every history $h^t$ is given by
	\begin{align}
		\Pr[H(t)=h^t]&=\Pr[X(t)=x_t|H(t-1)=h_{t-1}^t,A(t-1)=a_{t-1}]\cdot\nonumber\\
		&\qquad\qquad \Pr[A(t-1)=a_{t-1}|H(t-1)=h_{t-1}^t]\cdot\Pr[H(t-1)=h_{t-1}^t]\\
		&=\Pr[X(1)=x_1]\prod_{j=2}^t \left(\Pr[X(j)=x_j|H(j-1)=h_{j-1}^t,A(j-1)=a_{j-1}]\right.\cdot\nonumber\\
		&\qquad\qquad\qquad\qquad\left.\Pr[A(j-1)=a_{j-1}|H(j-1)=h_{j-1}^t]\right)\label{eq-decision_process_prob}\\
		&=\Pr[X(1)=x_1]\prod_{j=2}^t\left(T_{j-1}(x_{j-1},a_{j-1},x_j)\cdot d_{j-1}(h_{j-1}^t)(a_{j-1})\right).
	\end{align}

	Having recalled the definition of a classical Markov decision process, let us now turn to the quantum case.
	
	\begin{definition}[Quantum Decision Process]\label{def-QDP}
		A \textit{quantum decision process (QDP)} is defined as a pair $\mathsf{Q}=(\mathsf{E},\mathsf{A})$, where
		\begin{equation}
			\mathsf{E}=\left(E,\sigma_{E_0},\mathcal{X},\mathcal{A},\{\mathcal{E}^0\}\cup\{\mathcal{E}^t\}_{t\geq 1},\{\mathcal{R}^t\}_{t\geq 1},\{R(t)\}_{t\geq 1}\right)
		\end{equation}
		is the \textit{environment} of the QDP, and it consists of the following elements.
		\begin{enumerate}
			\item A quantum system $E$. We let $E_t$ denote the quantum system at time $t\geq 0$. The state of the environment at time $t=0$ is $\sigma_{E_0}$.
			
			\item A finite set $\mathcal{X}$ of \textit{observations} (or classical states) of the quantum system, which correspond to some classical property of the quantum system. To the set $\mathcal{X}$ there corresponds a Hilbert space $\mathcal{H}_{\mathcal{X}}\coloneqq\text{span}\{\ket{x}:x\in\mathcal{X}\}$ defined by the orthonormal basis $\{\ket{x}\}_{x\in\mathcal{X}}$. We denote the corresponding classical register storing the observations at time $t\geq 1$ by $X_t$.
			
			\item A finite set $\mathcal{A}$ of \textit{actions} that can be performed by the agent. To this set corresponds a Hilbert space $\mathcal{H}_{\mathcal{A}}\coloneqq\text{span}\{\ket{a}:a\in\mathcal{A}\}$ defined by the orthonormal basis $\{\ket{a}\}_{a\in\mathcal{A}}$. We associate a quantum system $A_t$ to this Hilbert space and a classical register $\overline{A}_t$ for all $t\geq 1$.
			
				For all $t\geq 1$, a \textit{history} of the QDP is given by
				\begin{equation}
					h^t\coloneqq(x_1,a_1,x_2,a_2,\dotsc,a_{t-1},x_t)\in\{\mathcal{X}\times\mathcal{A}\}^{t-1}\times\mathcal{X}\eqqcolon\Omega(t),
				\end{equation}
				where $\Omega(t)$ is the set of all histories. We let $h_1^t\coloneqq x_1$, and
				\begin{equation}\label{eq-QDP_histories}
					h_j^t\coloneqq (x_1,a_1,x_2,a_2,\dotsc,a_{j-1},x_j),\quad 2\leq j\leq t-1
				\end{equation}
				denote ``slices'' of the history up to time $j$. Note that $h_j^t\in\Omega(j)$ for all $1\leq j\leq t-1$.

				For all $t\geq 1$, we define the Hilbert space
				\begin{equation}
					\mathcal{H}_{\Omega(t)}\coloneqq (\mathcal{H}_{\mathcal{X}}\otimes\mathcal{H}_{\mathcal{A}})^{\otimes t-1}\otimes\mathcal{H}_{\mathcal{X}},
				\end{equation}
				and an associated classical register $H_t$ corresponding to the history given by $H_t=(X_1,\overline{A}_1,\allowbreak\dotsc,\overline{A}_{t-1},X_t)$. We have that $\mathcal{H}_{\Omega(t)}=\text{span}\{\ket{h^t}:h^t\in\Omega(t)\}$, where
				\begin{equation}
					\ket{h^t}_{H_t}=\ket{x_1,a_1,\dotsc,a_{t-1},x_t}_{H_t}\equiv\ket{x_1}_{X_1}\otimes\ket{a_1}_{\overline{A}_1}\otimes\dotsb\otimes\ket{a_{t-1}}_{\overline{A}_{t-1}}\otimes\ket{x_t}_{X_t}
				\end{equation}
				for all $h^t\in\Omega(t)$, with $x_j\in\mathcal{X}$ for all $1\leq j\leq t$ and $a_j\in\mathcal{A}$ for all $1\leq j\leq t-1$.
				
			\item A set $\{\mathcal{E}^0\}\cup\{\mathcal{E}^t\}_{t\geq 1}$ of \textit{(environment) response channels}. The channel $\mathcal{E}_{E_0\to H_1E_1}^0$ is a quantum instrument channel defined by
				\begin{equation}
					\mathcal{E}_{E_0\to H_1E_1}^0(\sigma_{E_0})\coloneqq\sum_{x_1\in\mathcal{X}}\ket{x_1}\bra{x_1}_{H_1}\otimes\mathcal{T}_{E_0\to E_1}^{0;x_1}(\sigma_{E_0}),
				\end{equation}
				with $\left\{\mathcal{T}_{E_0\to E_1}^{0:x_1}\right\}_{x_1\in\mathcal{X}}$ being a quantum instrument, which we recall from Definition~\ref{def-quantum_instrument} is a set of completely positive trace non-increasing transition maps such that the sum $\sum_{x_1\in\mathcal{X}}\mathcal{T}_{E_0\to E_1}^{0;x_1}$ is a trace-preserving map. For $t\geq 1$, we have
				\begin{multline}\label{eq-env_response_chan}
					\mathcal{E}_{H_tA_tE_t\to H_{t+1}E_{t+1}}^t(\ket{h^t}\bra{h^t}_{H_t}\otimes\rho_{A_t}\otimes\sigma_{E_t})\\\coloneqq\sum_{a_t\in\mathcal{A}}\sum_{x_{t+1}\in\mathcal{S}}\Tr[M_{A_t}^{a_t}\rho_{A_t}]\ket{h^t}\bra{h^t}_{H_t}\otimes\ket{a_t,x_{t+1}}\bra{a_t,x_{t+1}}_{\overline{A}_tX_{t+1}}\otimes\mathcal{T}_{E_t\to E_{t+1}}^{t;h^t,a_t,x_{t+1}}(\sigma_{E_t}),
				\end{multline}
				where $\left\{\mathcal{T}_{E_t\to E_{t+1}}^{t;h^t,a_t,x_{t+1}}\right\}_{t\geq 1}$ is a set of \textit{transition maps} and $\{M_{A_t}^{t;a_t}:a_t\in\mathcal{A}\}$ is a POVM for every $t\geq 1$, i.e., $M_{A_t}^{t;a_t}\geq 0$ for all $a_t\in\mathcal{A}$ and $\sum_{a_t}M_{A_t}^{t;a_t}=\mathbbm{1}_{A_t}$.
				
				The transition maps $\left\{\mathcal{T}_{E_t\to E_{t+1}}^{t;h^t,a_t,x_{t+1}}\right\}_{x_{t+1}\in\mathcal{X}}$ constitute a quantum instrument for all $t\geq 1$, $h^t\in\Omega(t)$, and $a_t\in\mathcal{A}$.
				
			\item A set $\{\mathcal{R}^t\}_{t\geq 1}$ of \textit{reward channels}, which are defined as
				\begin{multline}\label{eq-QDP_reward_channel}
					\mathcal{R}_{H_{t+1}E_{t+1}\to H_{t+1}\overline{R}_tE_{t+1}}^t(\ket{h^{t+1}}\bra{h^{t+1}}_{H_{t+1}}\otimes\sigma_{E_{t+1}})\\=\ket{h^{t+1}}\bra{h^{t+1}}_{H_{t+1}}\otimes\sum_{s_t\in\mathcal{Y}_t}\ket{s_t}\bra{s_t}_{\overline{R}_t}\otimes\mathcal{R}^{t;h^{t+1},s_t}(\sigma_{E_{t+1}}),
				\end{multline}
				where $\mathcal{Y}_t$ is a finite set and $\left\{\mathcal{R}^{t;h^{t+1},s_t}\right\}_{s_t\in\mathcal{Y}_t}$ is a quantum instrument of \textit{reward maps}.
				
				Associated to the reward channels are functions (random variables) $R(t):\Omega(t+1)\times\mathcal{Y}_t\to\mathbb{R}$ such that $R(t)(h^{t+1},s_t)$ is the value of the reward at time $t$.
		\end{enumerate}
		
		\noindent Interacting with the environment is an \textit{agent}, which is defined as
		\begin{equation}
			\mathsf{A}\coloneqq\left(T,\pi_T\right),
		\end{equation}
		where the elements are defined as follows.
		\begin{itemize}
			\item $T\geq 1$ is the \textit{horizon time}. If $T<\infty$, then we refer to $\mathsf{Q}$ as a \textit{finite-horizon} QDP; otherwise, if $T=\infty$, then $\mathsf{Q}$ is called an \textit{infinite-horizon} QDP.
			
			\item $\pi_T\coloneqq(\rho_{A_t}^{h_t^T})_{t=1}^T$ is a \textit{$T$-step policy}, which is a sequence of \textit{action states} $\rho_{A_t}^{h_t^T}$. The action states are in one-to-one correspondence with \textit{decision channels} $\mathcal{D}_{H_t\to H_tA_t}^t$, which are defined as
				\begin{equation}
					\mathcal{D}_{H_t\to H_tA_t}^t(\ket{h^t}\bra{h^t}_{H_t})=\ket{h^t}\bra{h_t}_{H_t}\otimes\rho_{A_t}^{h^t}
				\end{equation}
				for all $t\geq 1$ and all histories $h^t\in\Omega(t)$.
		\end{itemize}
		
		\noindent The \textit{$t$-step QDP channel} is defined as
		\begin{equation}\label{eq-QDP_channel}
			\mathcal{P}_{E_0\to H_tE_t}^{(\mathsf{E},\mathsf{A});t}\coloneqq \mathcal{E}_{H_{t-1}A_{t-1}E_{t-1}\to H_tE_t}^{t-1}\circ\mathcal{D}_{H_{t-1}\to H_{t-1}A_{t-1}}^{t-1}\circ\dotsb\circ\mathcal{E}_{H_1A_1E_1\to H_2E_2}^1\circ\mathcal{D}_{H_1\to H_1A_1}^1\circ\mathcal{E}_{E_0\to H_1E_1}^0
		\end{equation}
		for all $t\geq 1$, and the \textit{classical-quantum state of the QDP} is defined as
		\begin{equation}\label{eq-QDP_cq_state_0}
			\widehat{\sigma}_{H_t E_t}^{(\mathsf{E},\mathsf{A})}(t)=\mathcal{P}_{E_0\to H_tE_t}^{(\mathsf{E},\mathsf{A});t}(\sigma_{E_0})
		\end{equation}
		for all $t\geq 1$.~\defqed
	\end{definition}
	
	\begin{remark}\label{rem-QDP_def}
		We make several remarks about our definition of a quantum decision process.
		\begin{itemize}
			\item The environment quantum system $E$, although specified ostensibly as a single quantum system, is allowed to be a multipartite quantum system. In particular, then, the transition maps and the reward channels can be defined such that they act either independently or jointly only on some subset of the subsystems comprising $E$, which can be used to model the situation in which the agent might have access only to a part of the environment.
	
			\item As stated earlier, our definition of a quantum decision process differs slightly from the definitions of quantum partially observable Markov decision processes provided in Refs.~\cite{BBA14,Cid16,YY18} due to the fact that the agent, in our definition, is equipped with a quantum system and can send decisions to the environment that are encoded in a quantum state. In other words, it is possible, for example, for the agent to send a \textit{superposition} of different action basis elements to the environment. Note that this goes beyond classical probabilistic policies in general (as we show explicitly later), because the measurements involved in the environment response channels need not be given by projective measurements onto the action basis elements, but instead could be described by a more general POVM. Furthermore, in our definition both the transition channels and the rewards can depend on the entire history of the interaction.
		
				On the other hand, our definition of a quantum decision process falls into the general framework of agent-environment interactions developed in Ref.~\cite{DTB16} (see also Ref.~\cite{DTB15}). These works consider in detail the interaction of a classical or quantum agent in a classical or quantum environment. In this framework, both the agent and environment are given by sequences of quantum channels. Specifically, it is pointed out that every agent-environment interaction is an example of a quantum strategy/quantum comb/quantum causal network \cite{CDP08,CDP09,GW07} (see also Ref.~\cite{VW16}). Furthermore, the concept of a history of the interaction is captured by a so-called ``tester'', which is a controlled unitary acting on the interface between the agent and the environment. Our definition of a quantum decision process does indeed fall into this general framework, because the agent and environment are specified by sequences of quantum channels, and the classical history is encoded in the individual completely positive maps constituting the quantum instruments of the environment. Also, because the action values of the history are given by measurements of the quantum systems $A_t$ that contain the action states, according to the classification in Ref.~\cite{DTB15}, the agent-environment interaction in a quantum decision process is purely classical.
				
				Furthermore, because the agent-environment interaction in a quantum decision process is simply a sequence of quantum channels, based on the observation in Ref.~\cite{DTB16} we can view every quantum decision process as a quantum causal network, and we illustrate this in Figure~\ref{fig-QDP}.
				
	\begin{figure}
		\centering
		\includegraphics[scale=1.25]{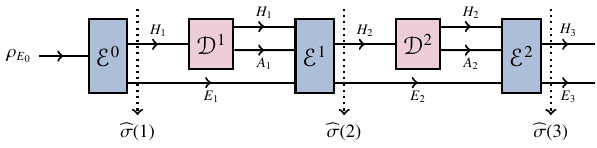}
		\caption{A quantum decision process (see Definition~\ref{def-QDP}) as a quantum causal network. The channels in red correspond to the actions of the agent, and the channels in blue correspond to the transitions of the environment. The classical registers $H_j$ contain the history of the interaction up to time $j$, the quantum registers $A_j$ contain the action of the agent at time $j$, and the quantum registers $E_j$ contain the state of the environment at time $j$. Shown is the agent-environment interaction up to time $t=3$.}\label{fig-QDP}
	\end{figure}
	
			\item Our notion of a quantum decision process essentially corresponds to a communication scenario between the agent and the environment in which the agent has the ability to send a quantum system to the environment through an ideal, noiseless quantum channel, but the environment provides only classical feedback to the agent. Furthermore, we assume that the agent does not make use of a quantum memory in order to determine future actions, only the classical memory that stores the history of observations from the environment.
		
			\item Important special cases of the reward channel are when they are given by POVMs and by Hermitian operators. We deal exclusively with the latter case in this thesis, so let us look at it more closely. Let $\{R^{t;h^{t+1}}_{E_{t+1}}:t\geq 1,\,h^{t+1}\in\Omega(t+1)\}$ be a set of Hermitian operators such that
				\begin{equation}
					R_{E_{t+1}}^{t;h^{t+1}}=\sum_{s_t\in\mathcal{Y}_t}\lambda_{s_t}^{t;h^{t+1}}\Pi_{E_{t+1}}^{t;h^{t+1},s_t}
				\end{equation}
				is a spectral decomposition of $R_{E_{t+1}}^{t;h^{t+1}}$. In this case, we define the completely positive maps $\mathcal{R}_{E_{t+1}}^{t;h^{t+1},s_t}$ in \eqref{eq-QDP_reward_channel} as
				\begin{equation}
					\mathcal{R}_{E_{t+1}}^{t;h^{t+1},s_t}(\cdot)\coloneqq \Pi_{E_{t+1}}^{t;h^{t+1},s_t}(\cdot)\Pi_{E_{t+1}}^{t;h^{t+1},s_t},
				\end{equation}
				and we define the functions $R(t)$ for $t\geq 1$ as
				\begin{equation}\label{eq-QDP_reward_Herm}
					R(t)(h^{t+1},s_t)\coloneqq\lambda_{s_t}^{t;h^{t+1}}.
				\end{equation}
				In other words, the reward values are simply the eigenvalues of the corresponding Hermitian operator.
				
				In addition to rewards being defined via Hermitian operators, another special case that arises later is one in which the reward should be non-zero if and only if the classical state at time $t+1$ is equal to a particular value $x_{t+1}\in\mathcal{X}$. In this case, the definition in \eqref{eq-QDP_reward_Herm} gets modified to
				\begin{equation}
					R(t)(h^{t+1},s_t)=\delta_{X(t+1)(h^{t+1}),x_{t+1}}\lambda_{s_t}^{t;h^{t+1}}.
				\end{equation}
			
			\item In our definition of the environment of a quantum decision process, the transition and reward maps can depend both on the entire history of the interaction as well as on time. The POVMs defining the environment response channels can also be time dependent. There are two special and simplified settings that are widely considered.
				\begin{itemize}
					\item\textit{Markovian setting}: In this case, both the transition and reward maps depend only on the previous classical state of the environment and on the previous action taken, i.e.,
						\begin{align}
							\mathcal{T}_{E_{t}\to E_{t+1}}^{t;h^{t},a_{t},s_{t+1}}&=\mathcal{T}_{E_{t}\to E_{t+1}}^{t;x_{t},a_{t},s_{t+1}}\quad\forall~t\geq 1,\\
							\mathcal{R}_{E_{t+1}}^{t;h^{t+1},s_t}&=\mathcal{R}_{E_{t+1}}^{t;x_{t},a_{t},x_{t+1},s_t}\quad\forall~t\geq 1.
						\end{align}
						
					\item\textit{Stationary setting}: The transition and reward maps, and the POVMs $\{M_{A_t}^{t;a_t}:a_t\in\mathcal{A},t\geq 1\}$ are time independent and Markovian, meaning that
						\begin{align}
							\mathcal{T}_{E_{t}\to E_{t+1}}^{t;h^{t},a_{t},x_{t+1}}&=\mathcal{T}_{E_{t}\to E_{t+1}}^{x_{t},a_{t},x_{t+1}}\quad\forall~t\geq 1,\\
							\mathcal{R}_{E_{t+1}}^{t;h^{t+1},s_t}&=\mathcal{R}_{E_{t+1}}^{x_{t},a_{t},x_{t+1},s_t}\quad\forall ~t\geq 1,\\
							M_{A_t}^{t;a_t}&=M_{A_t}^{a_t}\quad\forall~t\geq 1.
						\end{align}
				\end{itemize}
				Note that even when the transition and reward maps are Markovian, the evolution of the environment can still be non-Markovian in general because the quantum comb defined by the quantum channels $(\mathcal{E}^0,\mathcal{E}^1,\mathcal{E}^2,\dotsc)$ (see Figure~\ref{fig-QDP}) contains the memory systems $E_t$.
				
			\item A special case of the POVMs $\{M_{A_t}^{t;a_t}\}_{a_t\in\mathcal{A}}$ in the definition of the environment response channel is when they are equal simply to rank-one projections onto the action basis elements, i.e., $M_{A_t}^{t;a_t}=\ket{a_t}\bra{a_t}_{A_t}$ for all $t\geq 1$ and all $a_t\in\mathcal{A}$. In this case, it suffices to consider classical action states, i.e., action states that are diagonal in the action basis. Specifically, without loss of generality, every action state $\rho_{A_t}^{h^t}$ can be taken to be of the form
				\begin{equation}
					\rho_{A_t}^{h^t}=\sum_{a_t\in\mathcal{A}}d_t(h^t)(a_t)\ket{a_t}\bra{a_t}_{A_t},
				\end{equation}
				where we recall the decision functions $d_t$ that we defined in \eqref{eq-MDP_decision_functions} for classical Markov decision processes. This scenario then corresponds to a classical decision process in which the transition probabilities are non-Markovian. In particular,
				\begin{equation}\label{eq-QDP_transition_prob_gen}
					\Pr[X(t+1)=x_{t+1}|H(t)=h^t,A(t)=a_t]=\Tr\left[\mathcal{T}_{E_t\to E_{t+1}}^{t;h^t,a_t,x_{t+1}}(\sigma_{E_t}(t|h^t))\right],
				\end{equation}
				where the state $\sigma_{E_t}(t|h^t)$ is defined in \eqref{eq-QDP_cond_states} below.~\defqed
		\end{itemize}
	\end{remark}
	\bigskip
	\begin{proposition}\label{prop-QDP_cq_state}
		Let $\mathsf{Q}=(\mathsf{E},\mathsf{A})$ be a quantum decision process. The classical-quantum state of $\mathsf{Q}$, as defined in \eqref{eq-QDP_cq_state_0}, is given by
		\begin{equation}\label{eq-QDP_cq_state}
			\widehat{\sigma}^{(\mathsf{E},\mathsf{A})}_{H_tE_t}(t)\coloneqq\sum_{h^t\in\Omega(t)}\ket{h^t}\bra{h^t}_{H_t}\otimes\widetilde{\sigma}_{E_t}^{(\mathsf{E},\mathsf{A})}(t;h^t)
		\end{equation}
		for all $t\geq 1$, where
		\begin{align}
			\widetilde{\sigma}_{E_t}^{(\mathsf{E},\mathsf{A})}(t;h^t)&\coloneqq\left(\prod_{j=1}^{t-1}\Tr\left[M_{j;A_j}^{a_j}\rho_{A_j}^{h_j^t}\right]\right)\left(\mathcal{T}_{E_{t-1}\to E_t}^{t-1;h_{t-1}^t,a_{t-1},x_t}\circ\dotsb\circ\mathcal{T}_{E_2\to E_3}^{2;h_2^t,a_2,x_3}\circ\mathcal{T}_{E_1\to E_2}^{1;x_1,a_1,x_2}\right)(\widetilde{\sigma}_{E_1}(1;x_1))\label{eq-QDP_cond_state}\\
			&=\left(\prod_{j=1}^{t-1}\Tr\left[M_{A_j}^{j;a_j}\rho_{A_j}^{h_j^t}\right]\right)\sigma_{E_t}^{(\mathsf{E})}(t;h^t),
		\end{align}
		and
		\begin{align}
			\widetilde{\sigma}_{E_t}^{(\mathsf{E})}(t;h^t)&\coloneqq\left(\mathcal{T}_{E_{t-1}\to E_t}^{t-1;h_{t-1}^t,a_{t-1},x_t}\circ\dotsb\circ\mathcal{T}_{E_2\to E_3}^{2;h_2^t,a_2,x_3}\circ\mathcal{T}_{E_1\to E_2}^{1;x_1,a_1,x_2}\right)(\widetilde{\sigma}_{E_1}(1;x_1))\label{eq-QDP_cond_state_alt}\\
			&=\mathcal{P}_{E_0\to E_t}^{(\mathsf{E}),t;h^t}(\sigma_{E_0}),
		\end{align}
		where
		\begin{equation}
			\mathcal{P}_{E_0\to E_t}^{(\mathsf{E}),t;h^t}\coloneqq\mathcal{T}_{E_{t-1}\to E_t}^{t-1;h_{t-1}^t,a_{t-1},x_t}\circ\dotsb\circ\mathcal{T}_{E_2\to E_3}^{2;h_2^t,a_2,x_3}\circ\mathcal{T}_{E_1\to E_2}^{1;x_1,a_1,x_2}\circ\mathcal{T}_{E_0\to E_1}^{0;x_1}.
		\end{equation}
	\end{proposition}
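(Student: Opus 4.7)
The plan is to prove \eqref{eq-QDP_cq_state} by induction on $t\geq 1$, unfolding the defining composition in \eqref{eq-QDP_channel} one decision-plus-response pair at a time. For the base case $t=1$, the channel $\mathcal{P}^{(\mathsf{E},\mathsf{A});1}_{E_0\to H_1E_1}$ reduces to $\mathcal{E}^0_{E_0\to H_1E_1}$, whose defining formula immediately gives $\sum_{x_1\in\mathcal{X}}\ket{x_1}\bra{x_1}_{H_1}\otimes\mathcal{T}^{0;x_1}_{E_0\to E_1}(\sigma_{E_0})$. This matches \eqref{eq-QDP_cq_state} at $t=1$ with the empty product in \eqref{eq-QDP_cond_state} contributing the scalar factor $1$ and with $\widetilde{\sigma}_{E_1}(1;x_1)=\mathcal{T}^{0;x_1}_{E_0\to E_1}(\sigma_{E_0})$.

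For the inductive step, assume \eqref{eq-QDP_cq_state} holds at time $t$. By \eqref{eq-QDP_channel}, $\widehat{\sigma}^{(\mathsf{E},\mathsf{A})}_{H_{t+1}E_{t+1}}(t+1)=(\mathcal{E}^t\circ\mathcal{D}^t)\bigl(\widehat{\sigma}^{(\mathsf{E},\mathsf{A})}_{H_tE_t}(t)\bigr)$. Applying $\mathcal{D}^t_{H_t\to H_tA_t}$ first: since $\mathcal{D}^t$ sends $\ket{h^t}\bra{h^t}_{H_t}$ to $\ket{h^t}\bra{h^t}_{H_t}\otimes\rho^{h^t}_{A_t}$ and acts trivially on $E_t$, it simply tensors the action state $\rho^{h^t}_{A_t}$ into every summand without otherwise altering the inductive-hypothesis decomposition. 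Next, applying $\mathcal{E}^t_{H_tA_tE_t\to H_{t+1}E_{t+1}}$ according to \eqref{eq-env_response_chan}, each summand spawns a sum over $a_t\in\mathcal{A}$ and $x_{t+1}\in\mathcal{X}$, picks up the scalar weight $\Tr[M^{t;a_t}_{A_t}\rho^{h^t}_{A_t}]$ from the POVM measurement, extends the classical register to $\ket{h^t,a_t,x_{t+1}}\bra{h^t,a_t,x_{t+1}}_{H_{t+1}}$, and composes $\mathcal{T}^{t;h^t,a_t,x_{t+1}}_{E_t\to E_{t+1}}$ onto the $E_t$ factor.

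I then relabel the triple sum as a single sum over $h^{t+1}=(h^t,a_t,x_{t+1})\in\Omega(t+1)$, using the identifications $h^t=h^{t+1}_t$ and $h_j^t=h_j^{t+1}$ for $1\leq j\leq t$. The classical register collapses to $\ket{h^{t+1}}\bra{h^{t+1}}_{H_{t+1}}$; the scalar prefactor from the hypothesis, namely $\prod_{j=1}^{t-1}\Tr[M^{j;a_j}_{A_j}\rho^{h_j^{t+1}}_{A_j}]$, absorbs the new factor $\Tr[M^{t;a_t}_{A_t}\rho^{h_t^{t+1}}_{A_t}]$ to become $\prod_{j=1}^{t}\Tr[M^{j;a_j}_{A_j}\rho^{h_j^{t+1}}_{A_j}]$; and the quantum factor is $\mathcal{T}^{t;h_t^{t+1},a_t,x_{t+1}}_{E_t\to E_{t+1}}$ composed with the nested chain of transition maps supplied by the hypothesis acting on $\widetilde{\sigma}_{E_1}(1;x_1)$, which is exactly \eqref{eq-QDP_cond_state} at time $t+1$. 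This closes the induction.

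No serious obstacle arises here: the proposition is essentially a bookkeeping identity that unpacks \eqref{eq-QDP_channel} one layer at a time. The main care points are (i) that $\mathcal{D}^t$ only appends an action state conditioned on the observed history $h^t$ and does not disturb either $H_t$ or $E_t$, (ii) that the POVM measurement embedded in $\mathcal{E}^t$ produces exactly the scalar $\Tr[M^{t;a_t}_{A_t}\rho^{h^t}_{A_t}]$ that feeds into the growing product, and (iii) that the triple $(h^t,a_t,x_{t+1})$ sweeps over all of $\Omega(t+1)$ as its components vary independently, so no summation is missed or double counted.
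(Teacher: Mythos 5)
Your proof is correct and follows essentially the same route as the paper's: the paper computes $\widehat{\sigma}(1)$, $\widehat{\sigma}(2)$, $\widehat{\sigma}(3)$ explicitly by alternately applying $\mathcal{D}^t$ and $\mathcal{E}^t$ and then says ``proceeding in this manner,'' which is exactly the induction you have written out formally. Your version is, if anything, slightly cleaner for making the inductive step and the relabeling $(h^t,a_t,x_{t+1})\mapsto h^{t+1}\in\Omega(t+1)$ explicit.
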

	\medskip
	\begin{remark}\label{rem-QDP_cond_states}
		Note that the operators $\widetilde{\sigma}_{E_t}^{(\mathsf{E},\mathsf{A})}(t;h^t)$ are unnormalized. The (normalized) conditional states are given by
		\begin{equation}\label{eq-QDP_cond_states}
			\sigma_{E_t}^{(\mathsf{E},\mathsf{A})}(t|h^t)\coloneqq\frac{\widetilde{\sigma}_{E_t}^{(\mathsf{E},\mathsf{A})}(t;h^t)}{\Tr[\widetilde{\sigma}_{E_t}^{(\mathsf{E},\mathsf{A})}(t;h^t)]}.
		\end{equation}
		The joint probability distribution of the history $H_t$ is given by
		\begin{equation}
			\Pr[H(t)=h^t]_{(\mathsf{E},\mathsf{A})}=\Tr[\widetilde{\sigma}_{E_t}^{(\mathsf{E},\mathsf{A})}(t;h^t)]
		\end{equation}
		for all histories $h^t\in\Omega(t)$. Also, the \textit{expected quantum state} of the QDP is defined as
		\begin{equation}\label{eq-QDP_exp_state}
			\sigma_{E_t}^{(\mathsf{E},\mathsf{A})}(t)\coloneqq\Tr_{H_t}\left[\widehat{\sigma}_{H_tE_t}^{(\mathsf{E},\mathsf{A})}(t)\right]=\sum_{h^t\in\Omega(t)}\widetilde{\sigma}_{E_t}^{(\mathsf{E},\mathsf{A})}(t;h^t).~\defqedspec
		\end{equation}
	\end{remark}
	\medskip
	\begin{proof}[Proof of Proposition~\ref{prop-QDP_cq_state}]
		It is relatively simple to see this by making use of Figure~\ref{fig-QDP} and the definitions of the decision channels and environment response channels in Definition~\ref{def-QDP}.
		
		With $\sigma_{E_0}$ being the starting state of the environment, the state $\widehat{\sigma}_{H_1E_1}^{(\mathsf{E},\mathsf{A})}(1)$ at time $t=1$ is
		\begin{equation}
			\widehat{\sigma}_{H_1E_1}^{(\mathsf{E},\mathsf{A})}(1)=\mathcal{E}_{E_0\to H_1E_1}^0(\sigma_{E_0})=\sum_{x_1\in\mathcal{X}}\ket{x_1}\bra{x_1}_{H_1}\otimes\widetilde{\sigma}_{E_1}(1;x_1),
		\end{equation}
		where
		\begin{equation}\label{eq-QDP_cq_state_pf1}
			\widetilde{\sigma}_{E_1}(1;x_1)\coloneqq\mathcal{T}_{E_0\to E_1}^{0;x_1}(\sigma_{E_0}).
		\end{equation}
		This proves the desired result for $t=1$.
		
		Now, suppose that the agent's policy is given by the actions states $\rho_{A_t}^{h^t}$ for all $t\geq 1$, $h^t\in\Omega(t)$. Then, the agent's first decision, given by the action states $\rho_{A_1}^{x_1}$, $x_1\in\mathcal{X}$, and the corresponding channel $\mathcal{D}_{H_1\to H_1A_1}^1$, leads to
		\begin{equation}
			\widehat{\sigma}_{H_1E_1}^{(\mathsf{E},\mathsf{A})}(1)\mapsto \sum_{x_1\in\mathcal{X}}\ket{x_1}\bra{x_1}_{H_1}\otimes\rho_{A_1}^{x_1}\otimes\widetilde{\sigma}_{E_1}(1;x_1).
		\end{equation}
		Then, using \eqref{eq-env_response_chan}, we obtain
		\begin{equation}
			\widehat{\sigma}_{H_2E_2}^{(\mathsf{E},\mathsf{A})}(2)=\sum_{x_1,x_2\in\mathcal{X}}\sum_{a_1\in\mathcal{A}}\ket{x_1,a_1,x_2}\bra{x_1,a_1,x_2}_{H_2}\otimes\Tr\left[M_{A_1}^{1;a_1}\rho_{A_1}^{x_1}\right]\mathcal{T}_{E_1\to E_2}^{1;x_1,a_1,x_2}(\widetilde{\sigma}_{E_1}(1;x_1)).
		\end{equation}
		Using \eqref{eq-QDP_histories} and \eqref{eq-QDP_cq_state_pf1}, we can write this as
		\begin{equation}
			\widehat{\sigma}_{H_2E_2}^{(\mathsf{E},\mathsf{A})}(2)=\sum_{h^2\in\Omega(2)}\ket{h^2}\bra{h^2}_{H_2}\otimes\widetilde{\sigma}_{E_2}(2;h^2),
		\end{equation}
		where
		\begin{align}
			\widetilde{\sigma}_{E_2}^{(\mathsf{E},\mathsf{A})}(2;h^2)&=\Tr\left[M_{A_1}^{1;a_1}\rho_{A_1}^{x_1}\right]\mathcal{T}_{E_1\to E_2}^{1;x_1,a_1,x_2}(\widetilde{\sigma}_{E_1}(1;s_1))\\
			&=\Tr\left[M_{A_1}^{1;a_1}\rho_{A_1}^{x_1}\right]\left(\mathcal{T}_{E_1\to E_2}^{1;x_1,a_1,x_2}\circ\mathcal{T}_{E_0\to E_1}^{0;x_1}\right)(\sigma_{E_0}),
		\end{align}
		which proves the desired result for $t=2$.
		
		The agent's next decision, given by the action states $\rho_{A_2}^{h^2}$, $h^2\in\Omega(2)$, and the corresponding channel $\mathcal{D}_{H_2\to H_2A_2}^2$, then leads to
		\begin{equation}
			\widehat{\sigma}_{H_2E_2}^{(\mathsf{E},\mathsf{A})}(2;h^2)\mapsto\sum_{h^2\in\Omega(2)}\ket{h^2}\bra{h^2}_{H_2}\otimes\rho_{A_2}^{h^2}\otimes\widetilde{\sigma}_{E_2}^{(\mathsf{E},\mathsf{A})}(2;h^2),
		\end{equation}
		so that
		\begin{equation}
			\widehat{\sigma}_{H_3E_3}^{(\mathsf{E},\mathsf{A})}(3)=\sum_{h^3\in\Omega(3)}\ket{h^3}\bra{h^3}_{H_3}\otimes\widetilde{\sigma}_{E_3}^{(\mathsf{E},\mathsf{A})}(3;h^3),
		\end{equation}
		where
		\begin{equation}
			\widetilde{\sigma}_{E_3}^{(\mathsf{E},\mathsf{A})}(3;h^3)\coloneqq\Tr\left[M_{A_2}^{2;a_2}\rho_{A_2}^{h^2}\right]\Tr\left[M_{A_1}^{1;a_1}\rho_{A_1}^{x_1}\right]\left(\mathcal{T}_{E_2\to E_3}^{h^2,a_2,x_3}\circ\mathcal{T}_{E_1\to E_2}^{1;x_1,a_1,x_2}\circ\mathcal{T}_{E_0\to E_1}^{0;x_1}\right)(\sigma_{E_0}),
		\end{equation}
		proving the result for $t=3$. Proceeding in this manner for $t>3$, it is clear that the expression in \eqref{eq-QDP_cq_state} holds.
	\end{proof}
	
	By observing that the classical-quantum state in \eqref{eq-QDP_cq_state} has the form of the classical-quantum state in \eqref{eq-cq_state_LOCC} that is obtained at the output of an arbitrary LOCC protocol, we immediately obtain the following result.
	
	\begin{corollary}\label{lem-QDP_LOCC}
		For all $t\geq 1$, the quantum channel $\mathcal{P}_{E_0\to H_tE_t}^{(\mathsf{E},\mathsf{A}),t}$ defined in \eqref{eq-QDP_channel} is an LOCC quantum instrument channel for all quantum decision processes $\mathsf{Q}=(\mathsf{E},\mathsf{A})$.
	\end{corollary}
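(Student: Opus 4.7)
The plan is to leverage the structural similarity between the expression \eqref{eq-QDP_cq_state} derived in Proposition~\ref{prop-QDP_cq_state} and the generic output state \eqref{eq-cq_state_LOCC} of a $t$-round LOCC protocol. I identify the environment with one party (call it Bob), who holds the quantum system $E$, and the agent with the other party (call it Alice), whose only quantum register is the action space, together with a trivial one-dimensional initial system playing the role of Alice's share of the ``pre-protocol'' bipartite state.

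First, I rewrite the unnormalized conditional state $\widetilde{\sigma}_{E_t}^{(\mathsf{E},\mathsf{A})}(t;h^t)$ given in \eqref{eq-QDP_cond_state} in the separable form $(\mathcal{S}^{t;h^t}\otimes\mathcal{T}^{t;h^t})(\sigma_{E_0})$, where Alice's component is the scalar completely positive trace non-increasing map on the trivial system given by $\mathcal{S}^{t;h^t}(\cdot)\coloneqq\prod_{j=1}^{t-1}\Tr[M_{A_j}^{j;a_j}\rho_{A_j}^{h_j^t}]\,(\cdot)$, and Bob's component is the completely positive trace non-increasing map $\mathcal{T}^{t;h^t}\coloneqq \mathcal{T}_{E_{t-1}\to E_t}^{t-1;h_{t-1}^t,a_{t-1},x_t}\circ\dotsb\circ\mathcal{T}_{E_1\to E_2}^{1;x_1,a_1,x_2}\circ\mathcal{T}_{E_0\to E_1}^{0;x_1}$. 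This makes \eqref{eq-QDP_cq_state} literally an instance of \eqref{eq-cq_state_LOCC} (up to the relabeling $A\leftrightarrow\text{Alice}$, $B\leftrightarrow\text{Bob}$).

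Since a bare separable decomposition is not in general enough to certify an LOCC channel (as emphasized in the remark following the definition of LOCC), the second step is to exhibit an explicit LOCC protocol that realizes $\mathcal{P}^{(\mathsf{E},\mathsf{A}),t}_{E_0\to H_tE_t}$. Bob begins by applying the underlying quantum instrument of $\mathcal{E}^0$ locally to $E_0$, obtaining outcome $x_1\in\mathcal{X}$, which he communicates to Alice. In each subsequent round $t\geq 1$, Alice locally prepares the action state $\rho_{A_t}^{h^t}$ (conditioned on the classical history she has accumulated), immediately measures it in her own lab with the POVM $\{M_{A_t}^{t;a_t}\}_{a_t\in\mathcal{A}}$, and transmits the outcome $a_t$ classically to Bob. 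Bob then applies the conditional instrument $\{\mathcal{T}_{E_t\to E_{t+1}}^{t;h^t,a_t,x_{t+1}}\}_{x_{t+1}\in\mathcal{X}}$ to $E_t$ and returns the outcome $x_{t+1}$ to Alice. By the cyclicity of the trace, the probability $\Tr[M_{A_t}^{t;a_t}\rho_{A_t}^{h^t}]$ of Alice's local outcome is exactly the factor that appears on the right-hand side of \eqref{eq-QDP_cond_state}, so the classical-quantum state produced after $t$ alternating rounds coincides with $\widehat{\sigma}_{H_tE_t}^{(\mathsf{E},\mathsf{A})}(t)$. This protocol is precisely of the general alternating-instrument form derived earlier in the LOCC section, so the induced map is an LOCC instrument channel, and discarding (or retaining) the classical history register gives the LOCC channel (or LOCC instrument channel) claimed.

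Finally, I would verify trace preservation of $\sum_{h^t}\mathcal{S}^{t;h^t}\otimes\mathcal{T}^{t;h^t}$ by the same telescoping argument used for general LOCC protocols in the section: for each fixed prefix $h_j^t$, one has $\sum_{a_j}\Tr[M_{A_j}^{j;a_j}\rho_{A_j}^{h_j^t}]=\Tr[\rho_{A_j}^{h_j^t}]=1$ because the POVM sums to the identity, and for each fixed $(h_{j-1}^t,a_{j-1})$ the sum $\sum_{x_j}\mathcal{T}_{E_{j-1}\to E_j}^{j-1;h_{j-1}^t,a_{j-1},x_j}$ is trace preserving by virtue of being a quantum instrument. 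I expect the only real obstacle here to be bookkeeping: carefully matching the sequence of alternating conditional instruments constructed above with the generic scheme from which \eqref{eq-cq_state_LOCC} was derived, and checking that the ``trivial system'' simulation of the agent's quantum channel to the environment is indeed admissible, which it is because the action state $\rho_{A_t}^{h^t}$ depends only on classical data available to Alice.
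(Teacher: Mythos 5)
Your proposal is correct and follows essentially the same route as the paper's own proof: both identify the output state $\sum_{h^t}\ket{h^t}\bra{h^t}_{H_t}\otimes\widetilde{\sigma}_{E_t}^{(\mathsf{E},\mathsf{A})}(t;h^t)$ with the generic form \eqref{eq-cq_state_LOCC}--\eqref{eq-cq_state_LOCC_2} by absorbing the scalar factors $\prod_j\Tr[M_{A_j}^{j;a_j}\rho_{A_j}^{h_j^t}]$ into completely positive trace non-increasing maps acting only on the agent's side (the paper uses $\mathcal{M}_{A_j\to\varnothing}^{j;a_j}(\cdot)=\Tr_{A_j}[M_{A_j}^{j;a_j}(\cdot)]$ applied to the locally prepared action states, which is your scalar map on a trivial system in different notation). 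Your extra step of explicitly exhibiting the alternating prepare-measure-communicate protocol is a welcome refinement rather than a different argument: the paper's proof merely asserts that the form matches that of a $t$-round LOCC protocol, implicitly relying on the fact that the decomposition arises from conditional local instruments in the alternating pattern of Section~\ref{sec-LOCC_channels}, whereas you correctly note that a bare separable decomposition would not suffice and close that gap explicitly.
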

	
	\begin{proof}
		We have that
		\begin{align}
			\mathcal{P}_{E_0\to H_tE_t}^{(\mathsf{E},\mathsf{A}),t}(\sigma_{E_0})&=\sum_{h^t\in\Omega(t)}\ket{h^t}\bra{h^t}_{H_t}\otimes\widetilde{\sigma}_{E_t}^{(\mathsf{E},\mathsf{A})}(t;h^t)\\
			&=\sum_{h^t\in\Omega(t)}\ket{h^t}\bra{h^t}_{H_t}\otimes\left(\prod_{j=1}^{t-1}\Tr\left[M_{A_j}^{j;a_j}\rho_{A_j}^{h_j^t}\right]\right)\mathcal{P}_{E_0\to E_t}^{(\mathsf{E}),t;h^t}(\sigma_{E_0}).
		\end{align}
		Now, we can write the factors $\prod_{j=1}^{t-1}\Tr\left[M_{A_j}^{j;a_j}\rho_{A_j}^{h_j^t}\right]$ as follows:
		\begin{equation}
			\prod_{j=1}^{t-1}\Tr\left[M_{A_j}^{j;a_j}\rho_{A_j}^{h_j^t}\right]=\mathcal{M}_{A_{t-1}\to\varnothing}^{t-1;a_{t-1}}(\rho_{A_{t-1}}^{h_{t-1}^t})\circ\dotsb\circ\mathcal{M}_{A_1\to\varnothing}(\rho_{A_1}^{h_1^t}),
		\end{equation}
		where
		\begin{equation}\label{eq-QDP_agent_meas_map}
			\mathcal{M}_{A_j\to\varnothing}^{j;a_j}(\cdot)\coloneqq\Tr_{A_j}[M_{A_j}^{j;a_j}(\cdot)]
		\end{equation}
		is a completely positive map for all $1\leq j\leq t-1$ such that the sum $\sum_{a_j\in\mathcal{A}_j}\mathcal{M}_{A_j\to\varnothing}^{t;a_j}$ is a trace-preserving map. Therefore, letting
		\begin{equation}
			\mathcal{Q}^{(\mathsf{A}),t;h^t}\coloneqq\mathcal{M}_{A_{t-1}\to\varnothing}^{t-1;a_{t-1}}(\rho_{A_{t-1}}^{h_{t-1}^t})\circ\dotsb\circ\mathcal{M}_{A_1\to\varnothing}^{1;a_1}(\rho_{A_1}^{h_1^t}),
		\end{equation}
		we have that
		\begin{equation}
			\mathcal{P}_{E_0\to H_tE_t}^{(\mathsf{E},\mathsf{A});t}(\sigma_{E_0})=\sum_{h^t\in\Omega(t)}\ket{h^t}\bra{h^t}_{H_t}\otimes\left(\mathcal{Q}^{(\mathsf{A}),t;h^t}\otimes\mathcal{P}_{E_0\to E_t}^{(\mathsf{E}),t;h^t}\right)(\sigma_{E_0}),
		\end{equation}
		which has exactly the form of the output state in \eqref{eq-cq_state_LOCC} and \eqref{eq-cq_state_LOCC_2} of a $t$-round LOCC protocol, because the maps $\mathcal{Q}^{(\mathsf{A}),t;h^t}$ and $\mathcal{P}_{E_0\to E_t}^{(\mathsf{E}),t;h^t}$ act only the agent's and environment's systems, respectively. Therefore, $\mathcal{P}_{E_0\to H_tE_t}^{(\mathsf{E},\mathsf{A});t}$ is an LOCC quantum instrument channel.
	\end{proof}

\subsubsection*{The expected reward}

	The figure of merit that we use to evaluate policies and to perform policy optimization is the expected reward at the horizon time $T$. For simplicity, we consider policy optimization only in the finite-horizon setting, so that $T<\infty$ throughout.
	
	Consider an arbitrary QDP $\mathsf{Q}=(\mathsf{E},\mathsf{A})$, with $\mathsf{A}=(T,\pi_T)$. Then, the expected reward at time $T$ is
	\begin{align}
		F(\mathsf{E},(T,\pi_T))&\coloneqq\mathbb{E}[R(T)]_{\pi_T}\label{eq-exp_reward_def}\\
		&=\sum_{h^{T+1}\in\Omega(T+1)}\sum_{s\in\mathcal{Y}_T}R(T)(h^{T+1},s)\Pr\left[H(T+1)=h^{T+1},s\right]_{(\mathsf{E},(T,\pi_T))}\\
		&=\sum_{h^{T+1}\in\Omega(T+1)}\sum_{s\in\mathcal{Y}_T}R(T)(h^{T+1},s) \Tr\left[\mathcal{R}_{E_{T+1}}^{T;h^{T+1},s}\left(\widetilde{\sigma}_{E_{T+1}}^{(\mathsf{E},(T,\pi_T))}(T+1;h^{T+1})\right)\right]\\
		&=\sum_{h^{T+1}\in\Omega(T+1)}\Tr\left[\widetilde{\mathcal{R}}_{E_{T+1}}^{T;h^{T+1}}\left(\widetilde{\sigma}_{E_{T+1}}^{(\mathsf{E},(T,\pi_T))}(T+1;h^{T+1})\right)\right]\label{eq-QDP_exp_reward_formula_1}\\
		&=\Tr\left[\widehat{\mathcal{R}}_{H_{T+1}E_{T+1}\to E_{T+1}}^{(\mathsf{E})}(T)\left(\widehat{\sigma}_{H_{T+1}E_{T+1}}^{(\mathsf{E},(T,\pi_T))}(T+1)\right)\right],\label{eq-QDP_exp_reward_formula_2}
	\end{align}
	where in the second-last line we defined the map
	\begin{equation}\label{eq-QDP_reward_map_tilde}
		\widetilde{\mathcal{R}}_{E_{T+1}}^{T;h^{T+1}}\coloneqq\sum_{s\in\mathcal{Y}_T}R(T)(h^{T+1},s)\mathcal{R}_{E_{T+1}}^{T;h^{T+1},s},
	\end{equation}
	and in the last line we defined the map $\widehat{\mathcal{R}}_{H_{T+1}E_{T+1}\to E_{T+1}}^{(\mathsf{E})}(T)$ given by
	\begin{equation}\label{eq-QDP_reward_map_hat}
		\widehat{\mathcal{R}}_{H_{T+1}E_{T+1}\to E_{T+1}}^{(\mathsf{E})}(T)\left(\ket{h^{T+1}}\bra{h^{T+1}}_{H_{T+1}}\otimes\sigma_{E_{T+1}}\right)\coloneqq\widetilde{\mathcal{R}}_{E_{T+1}}^{T;h^{T+1}}(\sigma_{E_{T+1}})
	\end{equation}
	for all $h^{T+1}\in\Omega(T+1)$ and all linear operators $\sigma_{E_{t+1}}$. A straightforward consequence of this formula, along with \eqref{eq-QDP_cond_state} and \eqref{eq-QDP_cond_state_alt} is the following algorithm for policy evaluation.
	
	\begin{proposition}[Finite-horizon policy evaluation]\label{prop-QDP_policy_eval}
		Let $\mathsf{Q}=(\mathsf{E},\mathsf{A})$ be a quantum decision process, where $\mathsf{A}=(T,\pi_T)$, $T<\infty$, and $\pi_T=(\rho_{A_t}^{h^t}:1\leq t\leq T,\,h^t\in\Omega(t))$. Then,
		\begin{equation}
			F(\mathsf{E},\mathsf{A})=\sum_{\substack{x_1\in\mathcal{X}\\a_1\in\mathcal{A}}}\Tr\left[M_{A_1}^{1;a_1}\rho_{A_1}^{x_1}\right]v_2^{(\mathsf{E},\mathsf{A})}(x_1,a_1),
		\end{equation}
		where
		\begin{equation}\label{eq-QDP_policy_eval_1}
			v_t^{(\mathsf{E},\mathsf{A})}(h^{t-1},a_{t-1})\coloneqq\sum_{\substack{x_t\in\mathcal{X}\\a_t\in\mathcal{A}}}\Tr\left[M_{A_t}^{t;a_t}\rho_{A_t}^{h^{t-1},a_{t-1},x_t}\right]v_{t+1}^{(\mathsf{E},\mathsf{A})}(h^{t-1},a_{t-1},x_t,a_t)
		\end{equation}
		for all $2\leq t\leq T$, $h^{t-1}\in\Omega(t-1)$, and $a_{t-1}\in\mathcal{A}$, and
		\begin{equation}\label{eq-QDP_policy_eval_2}
			v_{T+1}^{(\mathsf{E},\mathsf{A})}(h^T,a_T)\coloneqq\sum_{x_{T+1}\in\mathcal{X}}\Tr\left[\widetilde{\mathcal{R}}_{E_{T+1}}^{T;h^T,a_T,x_{T+1}}\left(\widetilde{\sigma}^{(\mathsf{E})}_{E_{T+1}}(T+1;h^T,a_T,x_{T+1})\right)\right]
		\end{equation}
		for all $h^T\in\Omega(T)$ and $a_T\in\mathcal{A}$.
	\end{proposition}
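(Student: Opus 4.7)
The plan is to start from the closed-form expression for the expected reward in \eqref{eq-QDP_exp_reward_formula_1}, substitute the explicit factored form of the unnormalized conditional states given by Proposition~\ref{prop-QDP_cq_state}, and then reorganize the sum over histories as a nested sum that collapses step by step into the backward recursion claimed in \eqref{eq-QDP_policy_eval_1}--\eqref{eq-QDP_policy_eval_2}.

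More concretely, first I would combine \eqref{eq-QDP_exp_reward_formula_1} with \eqref{eq-QDP_cond_state} to write
\begin{equation}
F(\mathsf{E},\mathsf{A}) = \sum_{h^{T+1}\in\Omega(T+1)} \left(\prod_{j=1}^{T}\Tr\!\left[M_{A_j}^{j;a_j}\rho_{A_j}^{h_j^{T+1}}\right]\right) \Tr\!\left[\widetilde{\mathcal{R}}_{E_{T+1}}^{T;h^{T+1}}\!\left(\widetilde{\sigma}_{E_{T+1}}^{(\mathsf{E})}(T+1;h^{T+1})\right)\right],
\end{equation}
noting that the policy dependence has been fully isolated into the product of trace factors, since $\widetilde{\sigma}_{E_{T+1}}^{(\mathsf{E})}$ from \eqref{eq-QDP_cond_state_alt} depends only on the environment maps, not on the action states. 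Expanding the history $h^{T+1}=(x_1,a_1,\dotsc,a_T,x_{T+1})$, the single sum over $\Omega(T+1)$ becomes a nested sum over $(x_1,a_1),(x_2,a_2),\dotsc,(x_T,a_T),x_{T+1}$.

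Next, I would identify the innermost block: the sum over $x_{T+1}$ of $\Tr[\widetilde{\mathcal{R}}_{E_{T+1}}^{T;h^T,a_T,x_{T+1}}(\widetilde{\sigma}_{E_{T+1}}^{(\mathsf{E})}(T+1;h^T,a_T,x_{T+1}))]$ is by definition exactly $v_{T+1}^{(\mathsf{E},\mathsf{A})}(h^T,a_T)$, which establishes \eqref{eq-QDP_policy_eval_2} as the base case of the recursion. Then, working outward, the block indexed by $(x_T,a_T)$ contributes a factor $\Tr[M_{A_T}^{T;a_T}\rho_{A_T}^{h^{T-1},a_{T-1},x_T}]$ times $v_{T+1}^{(\mathsf{E},\mathsf{A})}(h^{T-1},a_{T-1},x_T,a_T)$; summing this block over $(x_T,a_T)$ gives precisely $v_T^{(\mathsf{E},\mathsf{A})}(h^{T-1},a_{T-1})$ as defined in \eqref{eq-QDP_policy_eval_1}. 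Iterating this argument, equivalently a backward induction on $t$ from $T+1$ down to $2$, collapses successive layers of the nested sum into the value functions $v_t^{(\mathsf{E},\mathsf{A})}$, and the final outermost sum over $(x_1,a_1)$ matches the stated formula for $F(\mathsf{E},\mathsf{A})$.

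The only real obstacle is bookkeeping: one must carefully verify that the slicing notation $h^t_j$ introduced in \eqref{eq-QDP_histories} is consistent under the reorganization, so that the trace factor at step $j$ truly depends only on the restriction $h^{T+1}_j=(x_1,a_1,\dotsc,x_j)$ of the full history and not on later coordinates. Once this is observed, the product $\prod_{j=1}^{T}\Tr[M_{A_j}^{j;a_j}\rho_{A_j}^{h_j^{T+1}}]$ factorizes cleanly across the nested summations, and the induction step goes through without any further computation.
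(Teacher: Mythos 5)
Your proposal is correct and follows essentially the same route as the paper's proof: both start from \eqref{eq-QDP_exp_reward_formula_1}, substitute the factorized form of $\widetilde{\sigma}^{(\mathsf{E},\mathsf{A})}$ from Proposition~\ref{prop-QDP_cq_state} to isolate the policy dependence into the product $\prod_{t=1}^{T}\Tr[M_{A_t}^{t;a_t}\rho_{A_t}^{h_t^{T+1}}]$, and then reorganize the sum over $\Omega(T+1)$ into nested sums that collapse into the value functions $v_t^{(\mathsf{E},\mathsf{A})}$. The only difference is that you collapse the nesting from the innermost layer ($v_{T+1}$) outward, whereas the paper peels from the outermost layer ($v_2$) inward; this is purely a matter of exposition.
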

	
	\begin{remark}	
		The functions $v_t^{(\mathsf{E},\mathsf{A})}$ that we have defined in the statement of the proposition can be thought of as analogous to \textit{action-value functions} in classical Markov decision processes; see, e.g., Refs.~\cite{Put14_book,Sut18_book}. Also, as in the classical case, observe that \eqref{eq-QDP_policy_eval_1} and \eqref{eq-QDP_policy_eval_2} specify a \textit{backward recursion algorithm} for evaluating a given policy. For every history $h^T\in\Omega(T)$ and action $a_T\in\mathcal{A}$, the algorithm proceeds by first evaluating the function $v_{T+1}^{(\mathsf{E},\mathsf{A})}$, then proceeding backwards, calculating $v_t^{(\mathsf{E},\mathsf{A})}$ for all $T\geq t\geq 2$ in order to finally obtain $F(\mathsf{E},\mathsf{A})$.~\defqed
	\end{remark}
	
	\begin{proof}[Proof of Proposition~\ref{prop-QDP_policy_eval}]
		Using \eqref{eq-QDP_cond_state} and \eqref{eq-QDP_cond_state_alt}, we have that
		\begin{align}
			F(\mathsf{E},(T,\pi_T))&=\sum_{h^{T+1}\in\Omega(T+1)}\Tr\left[\widetilde{\mathcal{R}}_{E_{T+1}}^{T;h^{T+1}}\left(\widetilde{\sigma}_{E_{T+1}}^{(\mathsf{E},\mathsf{A})}(T+1;h^{T+1})\right)\right]\\
			&=\sum_{\substack{x_1,\dotsc,x_T,x_{T+1}\in\mathcal{X}\\a_1,\dotsc,a_T\in\mathcal{A}}}\left(\prod_{t=1}^T\Tr\left[M_{A_t}^{t;a_t}\rho_{A_t}^{h^{T+1}_t}\right]\right)\Tr\left[\widetilde{\mathcal{R}}^{T;h^{T+1}}_{E_{T+1}}\left(\widetilde{\sigma}_{E_{T+1}}^{(\mathsf{E})}(T+1;h^{T+1})\right)\right].\label{eq-QDP_policy_eval_pf1}
		\end{align}
		From this, we see that
		\begin{equation}
			F(\mathsf{E},\mathsf{A})=\sum_{\substack{x_1\in\mathcal{X}\\a_1\in\mathcal{A}}}\Tr\left[M_{A_1}^{1;a_1}\rho_{A_1}^{x_1}\right]v_2^{(\mathsf{E},\mathsf{A})}(x_1,a_1),
		\end{equation}
		where
		\begin{equation}
			v_2(T;x_1,a_1)=\sum_{\substack{x_2,\dotsc,x_T,x_{T+1}\in\mathcal{X}\\a_2,\dotsc,a_T\in\mathcal{A}}}\left(\prod_{t=2}^T\Tr\left[M_{A_t}^{t;a_t}\rho_{A_t}^{h^{T+1}_t}\right]\right)\Tr\left[\widetilde{\mathcal{R}}^{T;h^{T+1}}_{E_{T+1}}\left(\widetilde{\sigma}_{E_{T+1}}^{(\mathsf{E})}(T+1;h^{T+1})\right)\right].
		\end{equation}
		Then, separating the sum over $x_2\in\mathcal{X}$ and $a_2\in\mathcal{A}$ in the above equation leads to
		\begin{equation}
			v_2^{(\mathsf{E},\mathsf{A})}(x_1,a_1)=\sum_{\substack{x_2\in\mathcal{X}\\a_2\in\mathcal{A}}}\Tr\left[M_{A_2}^{2;a_2}\rho_{A_2}^{h^{T+1}_2}\right]v_3^{(\mathsf{E},\mathsf{A})}(\underbrace{x_1,a_1,x_2}_{h^{T+1}_2},a_2),
		\end{equation}
		where
		\begin{equation}
			v_3^{(\mathsf{E},\mathsf{A})}(h^{T+1}_2,a_2)=\sum_{\substack{x_3,\dotsc,x_T,x_{T+1}\in\mathcal{X}\\a_3,\dotsc,a_T\in\mathcal{A}}}\left(\prod_{t=3}^T\Tr\left[M_{A_t}^{t;a_t}\rho_{A_t}^{h^{T+1}_t}\right]\right)\Tr\left[\widetilde{\mathcal{R}}^{T;h^{T+1}}_{E_{T+1}}\left(\widetilde{\sigma}_{E_{T+1}}^{(\mathsf{E})}(T+1;h^{T+1})\right)\right].
		\end{equation}
		Proceeding in this manner, we define functions $v_t^{(\mathsf{E},\mathsf{A})}(h^{t-1},a_{t-1})$ for $2\leq t\leq T$, and we observe that
		\begin{equation}
			v_t^{(\mathsf{E},\mathsf{A})}(h^{t-1},a_{t-1})\coloneqq\sum_{\substack{x_t\in\mathcal{X}\\a_t\in\mathcal{A}}}\Tr\left[M_{A_t}^{t;a_t}\rho_{A_t}^{h^{t-1},a_{t-1},x_t}\right]v_{t+1}^{(\mathsf{E},\mathsf{A})}(h^{t-1},a_{t-1},x_t,a_t).
		\end{equation}
		In particular, for $t=T$, we have
		\begin{align}
			v_T^{(\mathsf{E},\mathsf{A})}(h^{T-1},a_{T-1})&=\sum_{\substack{x_T,x_{T+1}\in\mathcal{X}\\a_T\in\mathcal{X}}}\Tr\left[M_{A_T}^{T;a_T}\rho_{A_T}^{h^{T-1},a_{T-1},x_T}\right]\Tr\left[\widetilde{\mathcal{R}}^{T;h^{T+1}}_{E_{T+1}}\left(\widetilde{\sigma}_{E_{T+1}}^{(\mathsf{E})}(T+1;h^{T+1})\right)\right]\\
			&=\sum_{\substack{x_T\in\mathcal{X}\\a_T\in\mathcal{A}}}\Tr\left[M_{A_T}^{T;a_T}\rho_{A_T}^{h^{T-1},a_{T-1},x_T}\right] v_{T+1}^{(\mathsf{E},\mathsf{A})}(h^T,a_T,x_{T+1}),
		\end{align}
		where
		\begin{equation}
			v_{T+1}^{(\mathsf{E},\mathsf{A})}(h^T,a_T)=\sum_{x_{T+1}\in\mathcal{X}}\Tr\left[\widetilde{\mathcal{R}}_{E_{T+1}}^{T;h^T,a_T,x_{T+1}}\left(\widetilde{\sigma}_{E_{T+1}}^{(\mathsf{E})}(h^T,a_T,x_{T+1})\right)\right].
		\end{equation}
		This completes the proof.
	\end{proof}
	
	\begin{remark}\label{rem-QDP_policy_eval_2}
		There is an advantage to using the backward recursion algorithm (as specified by Proposition~\ref{prop-QDP_policy_eval}) to evaluate a policy, rather than simply using the definition of the expected reward in \eqref{eq-QDP_exp_reward_formula_1} or \eqref{eq-QDP_exp_reward_formula_2}. This advantage comes from the fact that the function $v_{T+1}^{(\mathsf{E},\mathsf{A})}$ defined in \eqref{eq-QDP_policy_eval_2} is independent of the policy---it depends only on the elements of the environment and on the horizon time. Therefore, for a given environment $\mathsf{E}$ and a given horizon time $T$, the function values $v_{T+1}^{(\mathsf{E},\mathsf{A})}(h^T,a_T)$ can be computed once and need never be computed again. Then, given any $T$-step policy $\pi_T$, the backward recursion algorithm can be used to quickly evaluate the expected reward.~\defqed
	\end{remark}
	
	\begin{remark}[Episodic and non-episodic decision processes]
		We have assumed in our definition of a quantum decision process that the reward is given entirely at the end of the agent-environment interaction, i.e., at the horizon time $T$. Such decision processes are called \textit{episodic}. We consider only episodic processes in this thesis.
						
		In general, rewards can be given at intermediate times as well. In this case, we define the agent more generally as follows:
		\begin{equation}
			\mathsf{A}\coloneqq\left(T,\mathcal{S}\subseteq[T],\pi_T\right),
		\end{equation}
		where $\mathcal{S}\subseteq[T]\coloneqq\{1,2,\dotsc,T\}$ is a countable subset of \textit{rewarding times}, and we let $r_{\mathcal{S}}\coloneqq\{s_t:t\in\mathcal{S},\,s_t\in\mathcal{Y}_t\}$ be the set of reward labels for the rewarding times and $\overline{R}_{\mathcal{S}}=(\overline{R}_t:t\in\mathcal{S})$ the corresponding classical registers. An episodic QDP corresponds to taking $\mathcal{S}=\{T\}$.
		
		Allowing for rewards at intermediate times leads to a change in the definition of the $t$-step QDP channel given in \eqref{eq-QDP_channel}. To see how, suppose that $\mathcal{S}=\{1,3,5\}$, so that rewards are given at times $t=1,3,5$. Then, the new definition of the $t$-step QDP channel is
		\begin{multline}
			\mathcal{P}^{(\mathsf{E},\mathsf{A});t}=\mathcal{E}^{t-1}\circ\mathcal{D}^{t-1}\circ\dotsb\circ\mathcal{E}^6\circ\mathcal{D}^6\circ\mathcal{R}^5\circ\mathcal{E}^5\circ\mathcal{D}^5\\\circ\mathcal{E}^4\circ\mathcal{D}^4\circ\mathcal{R}^3\circ\mathcal{E}^3\circ\mathcal{D}^3\circ\mathcal{E}^2\circ\mathcal{D}^2\circ\mathcal{R}^1\circ\mathcal{E}^1\circ\mathcal{D}^1\circ\mathcal{E}^0,
		\end{multline}
		where for brevity we have suppressed the system labels in the subscripts of the channels. In other words, we place a reward channel at every time step in $\mathcal{S}$. In general, if $\mathcal{S}=\{t_1,t_2,\dotsc,t_n\}$, with $t_1<t_2<\dotsb<t_n$ and $n\leq T$, then
		\begin{multline}
			\mathcal{P}^{(\mathsf{E},\mathsf{A});t}=\mathcal{E}^{t-1}\circ\mathcal{D}^{t-1}\circ\dotsb\circ\mathcal{R}^{t_n}\circ\left(\bigcirc_{j=t_{n-1}+1}^{t_n}\mathcal{E}^j\circ\mathcal{D}^j\right)\circ\dotsb\\\circ\mathcal{R}^{t_2}\circ\left(\bigcirc_{j=t_1+1}^{t_2}\mathcal{E}^j\circ\mathcal{D}^j\right)\circ\mathcal{R}^{t_1}\circ\left(\bigcirc_{j=1}^{t_1}\mathcal{E}^j\circ\mathcal{D}^j\right)\circ\mathcal{E}^0.
		\end{multline}
		The figure of merit for the agent, for the purpose of policy evaluation and optimization, is then the expected total reward at time $T$, i.e., $\mathbb{E}[R^{\text{tot}}_{(\mathsf{E},\mathsf{A})}]$, where $R^{\text{tot}}_{(\mathsf{E},\mathsf{A})}(T)=\sum_{t\in\mathcal{S}}R(t)$. It can be shown that a backward recursion algorithm analogous to the one given in Proposition~\ref{prop-QDP_policy_eval} can be used to evaluate policies in this setting.~\defqed
	\end{remark}

\section{Policy optimization}\label{sec-QDP_pol_opt}

	We now discuss methods for obtaining optimal policies. As mentioned earlier, throughout this thesis, we focus our attention on finite-horizon episodic decision processes, i.e., decision processes in which the reward is given only at the horizon time $T<\infty$.
	
	Given an environment $\mathsf{E}$ and horizon time $T<\infty$, the task is to maximize the expected reward at time $T$ (as defined in \eqref{eq-exp_reward_def}--\eqref{eq-QDP_exp_reward_formula_2}) with respect to $T$-step policies $\pi_T$. In other words, the task is to solve the following optimization problem:	
	\begin{equation}\label{eq-QDP_problem}
		\begin{array}{l l} \text{maximize} & F(\mathsf{E},(T,\pi_T)) \\[0.2cm] \text{subject to} & \pi_T=\left(\rho_{A_t}^{h^t}:1\leq t\leq T,\,h^t\in\Omega(t)\right), \\[0.2cm] & \rho_{A_t}^{h^t}\geq 0,\quad \Tr\left[\rho_{A_t}^{h^t}\right]=1\quad\forall~1\leq t\leq T,\, h^t\in\Omega(t). \end{array}
	\end{equation}
	In general, we could add additional constraints to this optimization problem if the policies are required to be constrained. Throughout this thesis, we consider only the basic optimization problem in \eqref{eq-QDP_problem}.

\subsection{Backward recursion}\label{sec-opt_pol_analytical}

	Using the result of Proposition~\ref{prop-QDP_policy_eval} allows us to determine an optimal policy via the backward recursion algorithm.
	
	\begin{theorem}[Finite-horizon policy optimization]\label{thm-QDP_opt_policy}
		Let $\mathsf{E}$ be the environment corresponding to a quantum decision process, and let $T<\infty$. Then,
		\begin{equation}
			\max_{\pi_T}F(\mathsf{E},(T,\pi_T))=\sum_{x_1\in\mathcal{X}}\lambda_{\max}(\widetilde{M}_{A_1}^{1;x_1}),
		\end{equation}
		where
		\begin{align}
			& \forall~1\leq t\leq T-1:\left\{\begin{array}{l l} \displaystyle \widetilde{M}_{A_t}^{t;h^t}\coloneqq\sum_{a_t\in\mathcal{A}}M_{A_t}^{t;a_t}w_{t+1}^{(\mathsf{E},T)}(h^t,a_t),\\[0.8cm]
				\displaystyle w_{t+1}^{(\mathsf{E},T)}(h^t,a_t)=\sum_{x_{t+1}\in\mathcal{X}}\lambda_{\max}(\widetilde{M}_{A_{t+1}}^{t+1;h^t,a_t,x_{t+1}}),\end{array}\right.\label{eq-QDP_opt_pol_back_recur_1}\\[0.5cm]
			& \widetilde{M}_{A_T}^{T;h^T}\coloneqq\sum_{a_T\in\mathcal{A}}M_{A_T}^{T;a_T}w_{T+1}^{(T)}(h^T,a_T),\label{eq-QDP_opt_pol_back_recur_2}\\
			& w_{T+1}(h^T,a_T)=\sum_{x_{T+1}\in\mathcal{X}}\Tr\left[\widetilde{\mathcal{R}}_{E_{T+1}}^{T;h^T,a_T,x_{T+1}}\left(\widetilde{\sigma}_{E_{T+1}}^{(\mathsf{E})}(T+1;h^T,a_T,x_{T+1})\right)\right].\label{eq-QDP_opt_pol_back_recur_3}
		\end{align}
		Furthermore, the optimal policy is
		\begin{equation}
			\pi_T^{\ast}=\left(\ket{\lambda_{\max}(\widetilde{M}_{A_t}^{t;h^t})}\bra{\lambda_{\max}(\widetilde{M}_{A_t}^{t;h^t})}:h^t\in\Omega(t),\,1\leq t\leq T\right),
		\end{equation}
		where $\ket{\lambda_{\max}(H)}$ denotes an eigenvector corresponding the largest eigenvalue of the Hermitian operator $H$.
	\end{theorem}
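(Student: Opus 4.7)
The plan is to apply the backward recursion from Proposition~\ref{prop-QDP_policy_eval} and invoke, at each stage, the standard variational characterization of the largest eigenvalue of a Hermitian operator, namely that for any Hermitian $H$,
\begin{equation}
	\max_{\rho\geq 0,\,\Tr[\rho]=1}\Tr[H\rho]=\lambda_{\max}(H),
\end{equation}
with the maximum attained by the rank-one projector onto a top eigenvector of $H$. The crucial structural fact that makes this work is that the action-value functions $v_t^{(\mathsf{E},\mathsf{A})}$ at time $t$ depend on the action states only at times $t,t+1,\dotsc,T$, while the terminal function $v_{T+1}^{(\mathsf{E},\mathsf{A})}=v_{T+1}^{(T)}$ does not depend on the policy at all. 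This decoupling is what allows the maximization to be executed one time-step at a time, starting from $t=T$ and proceeding backwards.

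First, I would observe that at the final stage, since the right-hand side of \eqref{eq-QDP_policy_eval_2} depends on the environment alone, the quantity $w_{T+1}^{(\mathsf{E},T)}(h^T,a_T)$ defined in \eqref{eq-QDP_opt_pol_back_recur_3} equals $v_{T+1}^{(\mathsf{E},\mathsf{A})}(h^T,a_T)$ regardless of the policy. Substituting this into \eqref{eq-QDP_policy_eval_1} for $t=T$ yields
\begin{equation}
	v_T^{(\mathsf{E},\mathsf{A})}(h^{T-1},a_{T-1})=\sum_{x_T\in\mathcal{X}}\Tr\!\left[\widetilde{M}_{A_T}^{T;h^T}\rho_{A_T}^{h^T}\right],
\end{equation}
where $\widetilde{M}_{A_T}^{T;h^T}$ is the Hermitian operator defined in \eqref{eq-QDP_opt_pol_back_recur_2} (Hermiticity follows from the POVM elements being positive semi-definite and the coefficients $w_{T+1}^{(T)}(h^T,a_T)$ being real). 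Since the action states $\rho_{A_T}^{h^T}$ for different histories $h^T$ appear in separate summands and are otherwise independent free variables, the maximization decouples across $x_T$, giving
\begin{equation}
	\max_{\{\rho_{A_T}^{h^T}\}}v_T^{(\mathsf{E},\mathsf{A})}(h^{T-1},a_{T-1})=\sum_{x_T\in\mathcal{X}}\lambda_{\max}\!\left(\widetilde{M}_{A_T}^{T;h^T}\right)=w_T^{(\mathsf{E},T)}(h^{T-1},a_{T-1}),
\end{equation}
attained by taking $\rho_{A_T}^{h^T}=\ket{\lambda_{\max}(\widetilde{M}_{A_T}^{T;h^T})}\bra{\lambda_{\max}(\widetilde{M}_{A_T}^{T;h^T})}$.

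Next, I would proceed by a downward induction on $t$: assuming that the optimal value of $v_{t+1}^{(\mathsf{E},\mathsf{A})}(h^t,a_t)$, after jointly maximizing over all action states $\{\rho_{A_s}^{h^s}\}_{s\geq t+1}$, is equal to $w_{t+1}^{(\mathsf{E},T)}(h^t,a_t)$, one substitutes into \eqref{eq-QDP_policy_eval_1} and obtains
\begin{equation}
	\max_{\{\rho_{A_s}^{h^s}\}_{s\geq t+1}}v_t^{(\mathsf{E},\mathsf{A})}(h^{t-1},a_{t-1})=\sum_{x_t\in\mathcal{X}}\Tr\!\left[\widetilde{M}_{A_t}^{t;h^t}\rho_{A_t}^{h^t}\right],
\end{equation}
after which the same eigenvalue argument, applied independently for each $h^t$, produces $w_t^{(\mathsf{E},T)}(h^{t-1},a_{t-1})$ with optimizer $\rho_{A_t}^{h^t,\ast}=\ket{\lambda_{\max}(\widetilde{M}_{A_t}^{t;h^t})}\bra{\lambda_{\max}(\widetilde{M}_{A_t}^{t;h^t})}$. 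Iterating down to $t=1$ and applying the same reasoning one final time to the outer expression for $F(\mathsf{E},(T,\pi_T))$ from Proposition~\ref{prop-QDP_policy_eval}, I obtain $\max_{\pi_T}F(\mathsf{E},(T,\pi_T))=\sum_{x_1}\lambda_{\max}(\widetilde{M}_{A_1}^{1;x_1})$ with the claimed optimal policy.

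The only subtle point—and the main thing to verify carefully—is the legitimacy of interchanging the maximization with the sums over histories at each stage. This is ultimately justified by the fact that the action state $\rho_{A_t}^{h^t}$ is indexed by the full history $h^t$, so different histories correspond to genuinely independent free variables in the optimization problem \eqref{eq-QDP_problem}, and no single $\rho_{A_t}^{h^t}$ participates in more than one of the $\Tr[\widetilde{M}_{A_t}^{t;h^t}\rho_{A_t}^{h^t}]$ terms in the sum over $x_t$. Beyond this bookkeeping, the proof is essentially Bellman-style dynamic programming grafted onto the quantum variational principle, with no further analytical obstacles.
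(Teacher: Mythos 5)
Your proposal is correct and follows essentially the same route as the paper's own proof: backward recursion from the policy-evaluation proposition, decoupling of the maximization across independent action states indexed by histories, and the variational identity $\max_{\rho\geq 0,\Tr[\rho]=1}\Tr[H\rho]=\lambda_{\max}(H)$ applied at each stage. Your explicit attention to why the maximum can be interchanged with the sums over histories is the same observation the paper makes when it notes that each iteration reduces to an optimization over independent variables $\{\rho^s\}_{s\in\mathcal{S}}$.
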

	
	\begin{proof}
		We start with the backward recursion algorithm given in Proposition~\ref{prop-QDP_policy_eval}. Let
		\begin{equation}
			\pi_T=\left(\rho_{A_t}^{h^t}:1\leq t\leq T,\,h^t\in\Omega(t)\right)
		\end{equation}
		denote an arbitrary $T$-step policy, and let us define
		\begin{equation}
			\pi_T^{(t\to T)}\coloneqq\left(\rho_{A_j}^{h^j}:t\leq j\leq T,\,h^j\in\Omega(j)\right)
		\end{equation}
		to be the ``slices'' of $\pi_T$ from time $t$ onwards. By observing that $v_t^{(\mathsf{E},\mathsf{A})}$ depends only on the policy from time $t$ onwards, i.e., on $\pi_T^{t\to T}$, we find that
		\begin{equation}
			\max_{\pi_T}F(\mathsf{E},(T,\pi_T))=\max_{\left\{\rho_{A_1}^{x_1}\right\}_{x_1}}\sum_{\substack{x_1\in\mathcal{X}\\a_1\in\mathcal{A}}}\Tr\left[M_{A_1}^{1;a_1}\rho_{A_1}^{x_1}\right]\max_{\pi_T^{(2\to T)}}v_2^{(\mathsf{E},\mathsf{A})}(x_1,a_1),
		\end{equation}
		Then, using \eqref{eq-QDP_policy_eval_1}, we have that
		\begin{equation}
			\max_{\pi_T^{(t\to T)}}v_t^{(\mathsf{E},\mathsf{A})}(h^{t-1},a_{t-1})=\max_{\left\{\rho_{A_t}^{x_t}\right\}_{x_t}}\sum_{\substack{x_t\in\mathcal{X}\\a_t\in\mathcal{A}}}\Tr\left[M_{A_t}^{t;a_t}\rho_{A_t}^{x_t}\right]\max_{\pi_T^{(t+1\to T)}}v_{T+1}^{(\mathsf{E},\mathsf{A})}(h^{t-1},a_{t-1},x_t,a_t)
		\end{equation}
		for all $2\leq t\leq T$, $h^{t-1}\in\Omega(t-1)$, and $a_{t-1}\in\mathcal{A}$. By defining the quantities
		\begin{align}
			w_t^{(\mathsf{E},T)}&\coloneqq\max_{\pi_T^{(t\to T)}}v_t^{(\mathsf{E},\mathsf{A})}\quad\forall~2\leq t\leq T,\\
			w_{T+1}^{(\mathsf{E},T)}&\coloneqq v_{T+1}^{(\mathsf{E},\mathsf{A})},
		\end{align}
		(recall that $v_{T+1}^{(\mathsf{E},\mathsf{A})}$ does not depend on $\pi_T$; see Remark~\ref{rem-QDP_policy_eval_2}) we see that the optimization problem reduces to the following recursive procedure:
		\begin{multline}\label{eq-QDP_opt_pol_pf1}
			\max_{\pi_T}F(\mathsf{E},(T,\pi_T))\\=\max\left\{\sum_{\substack{x_1\in\mathcal{X}\\a_1\in\mathcal{A}}}\Tr\left[M_{A_1}^{1;a_1}\rho_{A_1}^{x_1}\right] w_2^{(\mathsf{E},T)}(x_1,a_1): \rho_{A_1}^{x_1}\geq 0,\,\Tr[\rho_{A_1}^{x_1}]=1,\,x_1\in\mathcal{X}\right\},
		\end{multline}
		where
		\begin{equation}\label{eq-QDP_opt_pol_pf2}
			w_t^{(\mathsf{E},T)}(h^{t-1},a_{t-1})\coloneqq\max\left\{\sum_{\substack{x_t\in\mathcal{X}\\a_t\in\mathcal{A}}}\Tr\left[M_{A_t}^{t;a_t}\rho_{A_t}^{x_t}\right]w_{t+1}^{(\mathsf{E},T)}(h^t,a_t):\rho_{A_t}^{x_t}\geq 0,\,\Tr[\rho_{A_t}^{x_t}]=1,\,x_t\in\mathcal{X}\right\}
		\end{equation}
		for all $2\leq t\leq T$, $h^{t-1}\in\Omega(t-1)$, and $a_{t-1}\in\mathcal{A}$.
		
		Now, observe that the objective function in \eqref{eq-QDP_opt_pol_pf1} can be written as
		\begin{align}
			\sum_{\substack{x_1\in\mathcal{X}\\a_1\in\mathcal{A}}}\Tr\left[M_{A_1}^{1;a_1}\rho_{A_1}^{x_1}\right]w_2^{(\mathsf{E},T)}(x_1,a_1)&=\sum_{x_1\in\mathcal{X}}\Tr\left[\left(\sum_{a_1\in\mathcal{A}}M_{A_1}^{1;a_1}w_2^{(\mathsf{E},T)}(x_1,a_1)\right)\rho_{A_1}^{x_1}\right]\\
			&=\sum_{x_1\in\mathcal{X}}\Tr\left[\widetilde{M}_{A_1}^{1;x_1}\rho_{A_1}^{x_1}\right],
		\end{align}
		where in the last line we defined $\widetilde{M}_{A_1}^{t;x_1}\coloneqq\sum_{a_1\in\mathcal{A}_1}M_{A_1}^{t;a_1}w_2^{(\mathsf{E},T)}(x_1,a_1)$. Similarly, in \eqref{eq-QDP_opt_pol_pf2}, the objective function can be written as
		\begin{align}
			\sum_{\substack{x_t\in\mathcal{X}\\a_t\in\mathcal{A}}}\Tr\left[M_{A_t}^{t;a_t}\rho_{A_t}^{x_t}\right]w_{t+1}^{(\mathsf{E},T)}(h^t,a_t)=\sum_{x_t\in\mathcal{X}}\Tr\left[\widetilde{M}_{A_t}^{t;h^t}\rho_{A_t}^{x_t}\right],
		\end{align}
		where $\widetilde{M}_{A_t}^{t;h^t}\coloneqq\sum_{a_t\in\mathcal{A}}M_{A_t}^{t;a_t}w_{t+1}^{(\mathsf{E},T)}(h^t,a_t)$. Therefore, we have
		\begin{align}
			\max_{\pi_T}F(\mathsf{E},(T,\pi_T))&=\max\left\{\sum_{x_1\in\mathcal{X}}\Tr\left[\widetilde{M}_{A_1}^{1;x_1}\rho_{A_1}^{x_1}\right]:\rho_{A_1}^{x_1}\geq 0,\,\Tr[\rho_{A_1}^{x_1}]=1,\,x_1\in\mathcal{X}_1\right\},\label{eq-QDP_opt_pol_pf3}\\
			w_t^{(\mathsf{E},T)}(h^{t-1},a_{t-1})&=\max\left\{\sum_{x_t\in\mathcal{X}}\Tr\left[\widetilde{M}_{A_t}^{t;h^t}\rho_{A_t}^{x_t}\right]:\rho_{A_t}^{x_t}\geq 0,\,\Tr[\rho_{A_t}^{x_t}]=1,\,x_t\in\mathcal{X}_t\right\},\label{eq-QDP_opt_pol_pf4}
		\end{align}
		for all $2\leq t\leq T$, $h^{t-1}\in\Omega(t-1)$, and $a_{t-1}\in\mathcal{A}$. Now, notice that at every iteration the optimization problem is of the form
		\begin{equation}\label{eq-opt_reward_generic}
			\begin{array}{l l} \text{maximize} & \displaystyle \sum_{s\in\mathcal{S}}\Tr[H^s\rho^s] \\[0.6cm] \text{subject to} & \rho^s\geq 0,\,\Tr[\rho^s]=1\quad\forall~s\in\mathcal{S}, \end{array}
		\end{equation}
		where $\mathcal{S}$ is some finite set and $\{H^s\}_{s\in\mathcal{S}}$ is some set of Hermitian operators. Because the optimization is with respect to the independent variables $\{\rho^s\}_{s\in\mathcal{S}}$, the maximum can be brought inside the sum, so that the solution to the optimization problem is simply
		\begin{equation}
			\sum_{s\in\mathcal{S}}\lambda_{\max}(H^s)
		\end{equation}
		where we have used to fact that, for all Hermitian operators $H$,
		\begin{equation}\label{eq-max_EV_SDP}
			\max_{\rho:\rho\geq 0,\Tr[\rho]=1}\Tr[H\rho]=\lambda_{\max}(H),
		\end{equation}
		where $\lambda_{\max}(H)$ denotes the largest eigenvalue of $H$ \cite{BV96}. A state $\rho$ achieving the maximum is given by an eigenvector $\ket{\lambda_{\max}(H)}$ corresponding to the largest eigenvalue, i.e., by the state $\ket{\lambda_{\max}(H)}\bra{\lambda_{\max}(H)}$. Applying this result to \eqref{eq-QDP_opt_pol_pf3} and \eqref{eq-QDP_opt_pol_pf4} leads to the desired result. 
		
		Note that the optimal solution at the $t^{\text{th}}$ iteration gives us the optimal decision states at time $t$. Specifically, the optimal decision state at time $t$ for the history $h^t=(h^{t-1},a_{t-1},x_t)\in\Omega(t)$ is given by the optimal solution corresponding to the function $w_t^{(\mathsf{E},T)}(h^{t-1},a_{t-1})$.	
	\end{proof}
	
	Theorem~\ref{thm-QDP_opt_policy} tells us that we can determine the optimal policy by going \textit{backwards} in time: we first optimize the actions for the horizon time $T$ by finding the largest eigenvalue of the operators $\widetilde{M}_{A_{T}}^{T;h^{T}}$, then optimize the actions for the time $T-1$ by finding the largest eigenvalue of the operators $\widetilde{M}_{A_{T-1}}^{T-1;h^{T-1}}$, and so on, until finally we optimize the actions for the first time step by finding the largest eigenvalue of the operators $\widetilde{M}_{A_1}^{1;x_1}$. This technique is well known from dynamic programming and is used also for obtaining optimal policies in classical Markov decision processes. Furthermore, this procedure is related to the density-matrix renormalization group (DMRG) algorithm \cite{White92,Sch05}, which is used for tensor network optimization. In fact, as pointed out in Ref.~\cite{GRG20}, which uses tensor network techniques to solve classical Markov decision problems, the backward recursion procedure is nothing but one reverse sweep in the DMRG algorithm, in which the policies are viewed as tensors. Intuitively, the main reason why optimization via a backward recursion procedure can be done is that each term in the objective function is a product of terms, with each term in the product corresponding to an independent action state. This allows the joint optimization over all action states to be split up into optimizations over the individual factors in a successive fashion starting from the actions states for the final time.

\subsection{Forward recursion}\label{sec-QDP_forward_recursion}

	Observe that the backward recursion algorithm presented in Theorem~\ref{thm-QDP_opt_policy} is exponentially slow in the horizon time because of the fact that the number of histories grows exponentially in time---the number of histories up to time $t$ is $|\Omega(t)|=|\mathcal{X}|^t|\mathcal{A}|^{t-1}$. Therefore, given a finite horizon time $T$, the functions $w_t^{(\mathsf{E},T)}$ used to determine the optimal action states have an exponentially increasing number of values. For this reason, it is useful to have efficient methods for estimating the maximum expected reward. One such method is the following.

	Instead of starting from the horizon time and finding the optimal actions by going backwards, we could instead find the optimal actions by going forwards, i.e., by selecting the action such that the immediate expected reward is maximized. Such a ``forward recursion'' approach is more natural from the perspective of a real-world learning agent, who has to make decisions in real time and does not necessarily have complete knowledge of the environment in order to perform the backward recursion algorithm. However, the forward recursion algorithm will not necessarily lead to a globally optimal policy. In fact, a globally optimal policy can be obtained using the backward recursion algorithm, which we proved in Theorem~\ref{thm-QDP_opt_policy}. Nevertheless, it is worthwhile to briefly discuss the forward recursion algorithm because many reinforcement learning algorithms are based on it, and they give efficiently computable lower bounds on the maximum expected reward. In Appendix~\ref{app-QDP_SDP}, we investigate upper bounds on the maximum expected reward that are based on semi-definite programming.
	
	Consider an arbitrary QDP $\mathsf{Q}=(\mathsf{E},\mathsf{A})$. The classical-quantum state of $\mathsf{Q}$ at time $t\geq 1$ is
	\begin{equation}
		\widehat{\sigma}_{H_tE_t}^{(\mathsf{E},\mathsf{A})}(t)=\sum_{h^t\in\Omega(t)}\ket{h^t}\bra{h^t}_{H_t}\otimes\widetilde{\sigma}_{E_t}^{(\mathsf{E},\mathsf{A})}(t;h^t).
	\end{equation}
	Now, the forward recursion algorithm is defined by the task of determining the action state such that the immediate expected reward is maximized. After one step of the agent-environment interaction, we have
	\begin{equation}
		\widehat{\sigma}_{H_{t+1}E_{t+1}}^{(\mathsf{E},\mathsf{A})}(t+1)=\sum_{\substack{h^t\in\Omega(t)\\a_t\in\mathcal{A}\\x_{t+1}\in\mathcal{X}}}\ket{h^t,a_t,x_{t+1}}\bra{h^t,a_t,x_{t+1}}_{H_{t+1}}\otimes\Tr\left[M_{A_t}^{t;a_t}\rho_{A_t}^{h^t}\right]\mathcal{T}_{E_t\to E_{t+1}}^{t;h^t,a_t,x_{t+1}}\left(\widetilde{\sigma}_{E_t}^{(\mathsf{E},\mathsf{A})}(t;h^t)\right).
	\end{equation}
	The expected reward is then
	\begin{equation}
		\sum_{\substack{h^t\in\Omega(t)\\a_t\in\mathcal{A}\\x_{t+1}\in\mathcal{X}}}\Tr\left[M_{A_t}^{t;a_t}\rho_{A_t}^{h^t}\right]\Tr\left[\left(\widetilde{\mathcal{R}}_{E_{t+1}}^{t;h^t,a_t,x_{t+1}}\circ\mathcal{T}_{E_t\to E_{t+1}}^{t;h^t,a_t,x_{t+1}}\right)\left(\widetilde{\sigma}_{E_t}^{(\mathsf{E},\mathsf{A})}(t;h^t)\right)\right]=\sum_{h^t\in\Omega(t)}\Tr\left[\widetilde{N}_{A_t}^{h^t}\rho_{A_t}^{h^t}\right],
	\end{equation}
	where
	\begin{equation}
		\widetilde{N}_{A_t}^{h^t}\coloneqq \sum_{\substack{a_t\in\mathcal{A}\\x_{t+1}\in\mathcal{X}}} M_{A_t}^{t;a_t}\Tr\left[\left(\widetilde{\mathcal{R}}_{E_{t+1}}^{t;h^t,a_t,x_{t+1}}\circ\mathcal{T}_{E_t\to E_{t+1}}^{t;h^t,a_t,x_{t+1}}\right)\left(\widetilde{\sigma}_{E_t}^{(\mathsf{E},\mathsf{A})}(t;h^t)\right)\right].
	\end{equation}
	Maximizing the expected reward is therefore an optimization problem of the form \eqref{eq-opt_reward_generic}, so that the maximum expected reward is simply
	\begin{equation}
		\sum_{h^t\in\Omega(t)}\lambda_{\max}(\widetilde{N}_{A_t}^{h^t}),
	\end{equation}
	with the optimal decision states being $\ket{\lambda_{\max}(\widetilde{N}^{h^t})}\bra{\lambda_{\max}(\widetilde{N}^{h^t})}$. In other words, given that the agent observes a particular history $h^t\in\Omega(t)$, the decision state $\rho_{A_t}^{h^t}$ that maximizes its reward is $\rho_{A_t}^{h^t}=\ket{\lambda_{\max}(\widetilde{N}^{h^t})}\bra{\lambda_{\max}(\widetilde{N}^{h^t})}$.
	
	If the POVM $\{M_{A_t}^{t;a_t}\}_{a_t\in\mathcal{A}_t}$ is a set of rank-one projections onto action states, i.e., $M_{A_t}^{t;a_t}=\ket{a_t}\bra{a_t}_{A_t}$ for all $a_t\in\mathcal{A}$, then
	\begin{equation}
		\widetilde{N}_{A_t}^{h^t}=\sum_{a_t\in\mathcal{A}}\ket{a_t}\bra{a_t}\left(\sum_{x_{t+1}\in\mathcal{X}_{t+1}}\Tr\left[\left(\widetilde{\mathcal{R}}_{E_{t+1}}^{t;h^t,a_t,x_{t+1}}\circ\mathcal{T}_{E_t\to E_{t+1}}^{t;h^t,a_t,x_{t+1}}\right)\left(\widetilde{\sigma}_{E_t}^{(\mathsf{E},\mathsf{A})}(t;h^t)\right)\right]\right),
	\end{equation}
	so that the maximum expected reward is
	\begin{equation}
		\sum_{h^t\in\Omega(t)} \max_{a_t\in\mathcal{A}}\left(\sum_{x_{t+1}\in\mathcal{X}}\Tr\left[\left(\widetilde{\mathcal{R}}_{E_{t+1}}^{t;h^t,a_t,x_{t+1}}\circ\mathcal{T}_{E_t\to E_{t+1}}^{t;h^t,a_t,x_{t+1}}\right)\left(\widetilde{\sigma}_{E_t}^{(\mathsf{E},\mathsf{A})}(t;h^t)\right)\right]\right),
	\end{equation}
	and the optimal (deterministic) action is
	\begin{equation}
		\argmax_{a_t\in\mathcal{A}}\left(\sum_{x_{t+1}\in\mathcal{X}}\Tr\left[\left(\widetilde{\mathcal{R}}_{E_{t+1}}^{t;h^t,a_t,x_{t+1}}\circ\mathcal{T}_{E_t\to E_{t+1}}^{t;h^t,a_t,x_{t+1}}\right)\left(\widetilde{\sigma}_{E_t}^{(\mathsf{E},\mathsf{A})}(t;h^t)\right)\right]\right).
	\end{equation}
	
	We note that an optimal policy obtained via the forward recursion algorithm is often called a ``greedy'' policy; see, e.g., Ref.~\cite{Sut18_book}.

\section{Classical vs. quantum agents}\label{sec-QDP_classical_v_quantum_agents}

	At the end of the previous subsection, we briefly discussed, in the context of the forward recursion algorithm, the case that the environment response channels are defined by POVMs $\{M_{A_t}^{t;a_t}\}_{a_t\in\mathcal{A}}$ such that $M_{A_t}^{t;a_t}=\ket{a_t}\bra{a_t}_{A_t}$ for all $a_t\in\mathcal{A}$. In other words, the POVMs consist of rank-one projections onto the action basis elements. Let us now look at this special case in the context of the backward recursion algorithm and use Theorem~\ref{thm-QDP_opt_policy} to explicitly write down that algorithm for this case.
	
	When $M_{A_t}^{t;a_t}=\ket{a_t}\bra{a_t}_{A_t}$ for all $1\leq t\leq T$ and $a_t\in\mathcal{A}$, from \eqref{eq-QDP_opt_pol_back_recur_1} and \eqref{eq-QDP_opt_pol_back_recur_2} we obtain
	\begin{equation}
		\widetilde{M}_{A_t}^{t;h^t}=\sum_{a_t\in\mathcal{A}}w_{t+1}^{(\mathsf{E},T)}(h^t,a_t)\ket{a_t}\bra{a_t}_{A_t}\quad\forall~1\leq t\leq T,\,h^t\in\Omega(t)
	\end{equation}
	These operators are diagonal in the action basis, so that
	\begin{equation}
		\lambda_{\max}(\widetilde{M}_{A_t}^{t;h^t})=\max_{a_t\in\mathcal{A}}w_{t+1}^{(\mathsf{E},T)}(h^t,a_t)\quad\forall~1\leq t\leq T,\,h^t\in\Omega(t).
	\end{equation}
	The result of Theorem~\ref{thm-QDP_opt_policy} therefore simplifies as follows:
	\begin{equation}\label{eq-QDP_opt_pol_classical_agent_1}
		\max_{\pi_T}F(\mathsf{E},(T,\pi_T))=\sum_{x_1\in\mathcal{X}}\max_{a_1\in\mathcal{A}}w_2^{(\mathsf{E},T)}(x_1,a_1),
	\end{equation}
	where
	\begin{align}
		w_t^{(\mathsf{E},T)}(h^{t-1},a_{t-1})&=\sum_{x_t\in\mathcal{X}}\max_{a_t\in\mathcal{A}}w_{t+1}^{(\mathsf{E},T)}(h^{t-1},a_{t-1},x_t,a_t)\quad\forall~2\leq t\leq T,\label{eq-QDP_opt_pol_classical_agent_2}\\[0.2cm]
		w_{T+1}(h^T,a_T)&=\sum_{x_{T+1}\in\mathcal{X}}\Tr\left[\widetilde{\mathcal{R}}_{E_{T+1}}^{T;h^T,a_T,x_{T+1}}\left(\widetilde{\sigma}_{E_{T+1}}^{(\mathsf{E})}(T+1;h^T,a_T,x_{T+1})\right)\right].\label{eq-QDP_opt_pol_classical_agent_3}
	\end{align}
	Furthermore, the optimal policy is
	\begin{equation}\label{eq-QDP_opt_pol_classical_agent_4}
		\pi_T^*=\left(\ket{d_t^*(h^t)}\bra{d_t^*(h^t)}_{A_t}:1\leq t\leq T,\,h^t\in\Omega(t)\right),
	\end{equation}
	where
	\begin{equation}\label{eq-QDP_opt_pol_classical_agent_5}
		d_t^*(h^t)\coloneqq\argmax_{a_t\in\mathcal{A}}w_{t+1}^{(\mathsf{E},T)}(h^t,a_t)\quad\forall~1\leq t\leq T.
	\end{equation}
	
	Notice that the optimal action states are nothing more than rank-one projections onto the action basis elements. This means that the optimal actions are purely classical and deterministic, meaning that when the POVMs of the environment response channels are rank-one projections onto the action basis elements, a classical agent is sufficient for achieving the optimal expected reward. A classical agent is similarly sufficient if the POVM elements have rank greater than one but are diagonal in the action basis.
	
	On the other hand, when the POVM elements $M_{A_j}^{j;a_j}$ are not diagonal in the action basis elements, the optimal action states need not be diagonal in the action basis. In particular, from Theorem~\ref{thm-QDP_opt_policy}, we see that, in general, the optimal action states will involve superpositions of action basis elements. In this case, therefore, a quantum agent is needed to obtain the optimal expected reward.

\section{Connections to quantum information tasks}\label{sec-QDP_q_info_tasks}

	We pointed out in Remark~\ref{rem-QDP_def}, and showed explicitly in Figure~\ref{fig-QDP}, that every quantum decision process can be viewed as a quantum causal network/quantum comb. Drawing this connection to quantum causal networks is particularly powerful because quantum causal networks are general enough that they arise in virtually all quantum information processing tasks that contain a causal structure. Thus, by viewing quantum decision processes as an example of a quantum causal network, we come to the realization that many quantum information processing tasks that involve sequential interaction between different parties can be thought of as a quantum decision process. Let us briefly consider some examples for which this is the case.
	\begin{itemize}
		\item \textit{Device-independent quantum information processing.} This includes device-independent quantum key distribution \cite{ABG+07,BP12,LCQ12}; see Ref.~\cite[Chapter~6]{AF20} for a broad introduction. In this case, the blue channels in Figure~\ref{fig-QDP} should be thought of as a device at different points in time, such that the agent interacting with the device has access only to the input and output systems $H_j,A_j$ and $H_{j+1}$, while the systems $E_j$ represent the device's internal memory, which is inaccessible to the agent. The agent's goal is then to interact optimally with the device relative to a given task.
			
			The CHSH game \cite{Bell64,CHSH69} (see Ref.~\cite{Scar13} for a pedagogical introduction) is an important subroutine in many device-independent quantum information processing tasks (see, e.g., Refs.~\cite{ADF+18,ARV19}). In this case, the environment is the underlying quantum system(s) describing the device, and the agents' actions represent measurement settings for the quantum system. The classical outputs from the device are labels corresponding to the outcome of the measurement. A policy in this setting is a mapping from prior measurement outcomes to future measurement setting choices, and the reward is the score of the CHSH game.
		
		\item \textit{Quantum games.} These are quantum generalizations of stochastic games \cite{Shap53,SV15,LS2017}, which themselves are multi-agent generalizations of (classical) Markov decision processes \cite{Tan93,HW98,BBS08,FV97_book}. The agents are the players in the game, who may or may not cooperate, and the environment can be thought of as the referee, which assigns a reward/score to the players in every round.
			
			See Refs.~\cite{GVW99,Meyer99,EWL99,Meyer00,EW00,BH01,BH01b,EP02,DLX+02,LJ03,WAN20} for examples of quantum games, and see Refs.~\cite{GZK08,KSBH18} for reviews. Although the presentation in this chapter has been restricted to quantum decision processes with a single agent, it is possible to generalize Definition~\ref{def-QDP} to multiple agents.
		
		\item \textit{Quantum control.} See, e.g., Refs.~\cite{MDS+18,ZWA+19}. In these works, the agent is an experimentalist whose task is to ``control'' a quantum system, i.e., to drive the quantum system to a particular state, and the environment is simply the quantum system itself. The actions of the agent correspond to the physical evolutions of the quantum system, and the reward is typically the fidelity (or some other distinguishability measure) with respect to the target state.
		
		\item \textit{Quantum-enhanced parameter estimation.} See, e.g., Refs.~\cite{HS10a,HS11,PWS16,PWS16b,PWS17}. In these works, in particular Refs.~\cite{PWS16,PWS17}, the task of the agent is to estimate an unknown phase shift applied to a quantum system (which represents the environment). In each round, the agent makes an estimate of this phase, and the classical outputs from the environment are particular measurement outcomes. The agent's policy is a function from prior measurement outcomes to future estimates, and the reward is the variance of the estimate; see Ref.~\cite{PWS17} for details.
		
		\item \textit{Quantum error correction.} See, e.g., Refs.~\cite{FTP+18,SKLE18,NDV+19}. In these works, roughly speaking, the environment is the quantum system to be protected by an error-correction code, and the agent's task is to perform actions (which correspond to physical evolutions of the quantum system) in order to keep the state of the quantum system in the codespace.
		
	\end{itemize}

\section{Summary}

	In this chapter, we developed the primary concept that we use throughout the rest of this thesis, namely, the concept of quantum decision processes. Our notion of a quantum decision process builds on the notion of a quantum partially observable Markov decision process as defined in Refs.~\cite{BBA14,YY18}, which itself generalizes the concept of a classical (partially observable) Markov decision process. In a quantum decision process, an agent (which could be described by a classical or a quantum system) interacts with its environment (which is described by a quantum system) via a sequence of actions. Every action of the agent changes the quantum state of the environment, and in return the agent receives (classical) observations about the environment, along with a reward. The agent uses the observations it receives in order to decide its next action, and its goal is to maximize the reward it receives after a pre-specified, finite horizon time. Several quantum information processing tasks can be thought of in the context of quantum decision processes, and we provide examples of some of these tasks in Section~\ref{sec-QDP_q_info_tasks}.
	
	We began the chapter with the formal definition of a quantum decision process in Section~\ref{sec-QDP}, and thereafter we determined some basic facts, such as the classical-quantum state and the expected quantum state of a quantum decision process. Then, in Section~\ref{sec-QDP_pol_opt}, we considered finite-horizon policy optimization, in which the goal is to determine the optimal sequence of actions that should be performed by an agent in order to maximize its expected reward at the horizon time. The backward recursion algorithm presented in Section~\ref{sec-opt_pol_analytical} gives us the optimal policy, but it is rarely used in practice because it is exponentially slow in the horizon time. The forward recursion algorithm, presented in Section~\ref{sec-QDP_forward_recursion}, is generally sub-optimal and provides only a lower bound on the maximum expected reward, but it is at the basis of algorithms that are more efficient than backward recursion. In Section~\ref{sec-QDP_classical_v_quantum_agents}, we showed how the backward recursion algorithm, and the resulting optimal policy, simplifies when the agent can be described classically.
	
	Decision processes form the theoretical foundation for reinforcement learning. To be precise, reinforcement learning algorithms take as their underlying model for the environment a (classical) decision process, and they generally provide lower bounds on the maximum expected reward of the agent. In fact, the forward recursion algorithm presented in Section~\ref{sec-QDP_forward_recursion} is at the basis of many reinforcement learning algorithms. We refer to Ref.~\cite{Sut18_book} for more information about reinforcement learning. Due to the connection between decision processes and reinforcement learning in the classical case, we can use quantum decision processes as a basis for a quantum generalization of reinforcement learning. Although reinforcement learning is not the primary subject of this thesis, the developments of this chapter, and of this thesis as a whole, provide the tools needed for reinforcement learning of practical network protocols; we explain this in more detail in Appendix~\ref{sec-future_work}.

\chapter{QUANTUM NETWORKS}\label{chap-network_setup}

	In this chapter, we begin the study of quantum networks. We discuss how to describe a quantum network mathematically. Then, we define the general task of entanglement distribution in a quantum network along with a figure of merit for entanglement distribution protocols in a quantum network. From these general considerations, we motivate a more practically-oriented model for entanglement distribution. The developments of this chapter prepare us for devising practical quantum network protocols using quantum decision processes in Chapter~\ref{chap-network_QDP}.

	What exactly do we mean by the term ``quantum network''? Generally speaking, the term ``network'' can be used to describe any collection of entities that interact with each other and whose behavior must be described collectively rather than with the individual components \cite{Barabasi16_book}. In a communication context, like the networks that comprise the current internet, a quantum network is simply a collection of spatially-separated quantum-mechanical devices, some of which are directly connected to each other via communication channels. Consequently, every quantum network (specifically, its physical layout) has a natural association to a (multi)graph $G=(V,E,c)$, in which the nodes of the network are associated to the vertices of the graph, and the nodes that are directly connected to each other in the network with a quantum channel are represented in the graph by an edge connecting the corresponding vertices. As we describe in Section~\ref{sec-network_architecture}, for the types of networks that we consider in this thesis the quantum channels are used to distribute entangled states to the nodes connected by the channels. Consequently, we can equivalently think of the edges of the graph as representing an entangled state shared by the corresponding nodes. We refer to these entangled states associated with the edges as \textit{elementary links}.
	
	\begin{figure}
		\centering
		\includegraphics[width=0.45\textwidth]{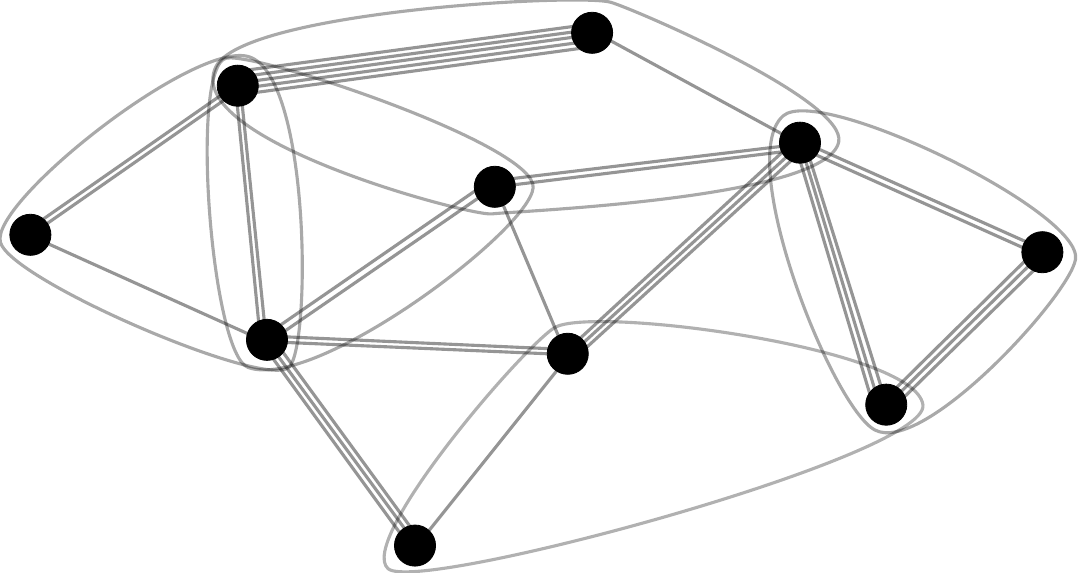}\quad
		\includegraphics[width=0.45\textwidth]{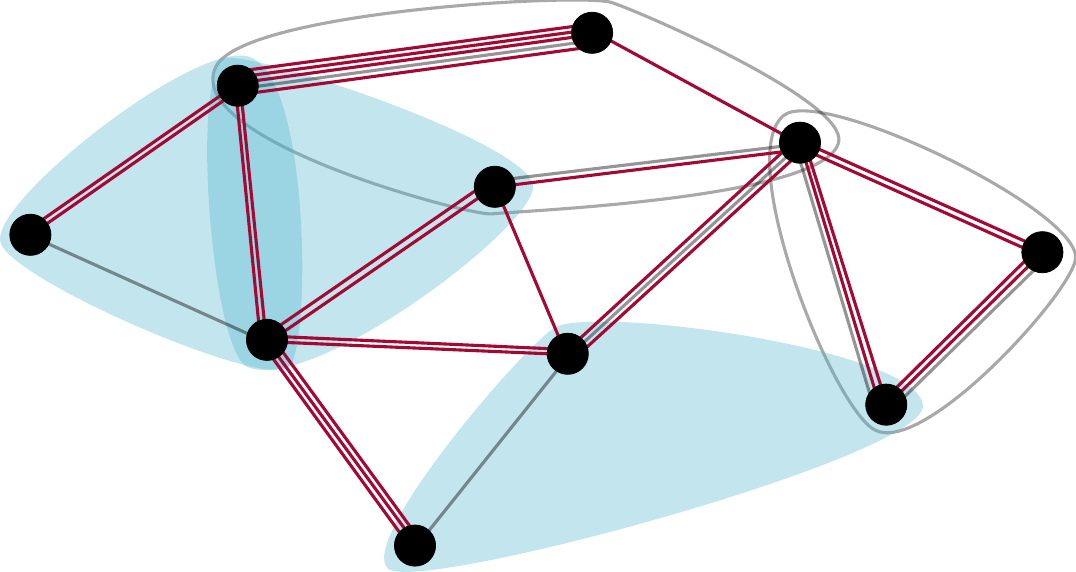}
		\caption{Graphical description of a quantum network. (Left) The physical layout of the quantum network is described by a hypergraph $G$, which should be thought of as fixed, in which the vertices represent the nodes (senders and receivers) in the network and the edges represent quantum channels that are used to distribute entangled states (elementary links) shared by the corresponding nodes. (Right) At any point in time only a certain number of elementary links in the network may be active. By ``active'', we mean that an entangled state has been distributed successfully to the nodes and the corresponding quantum systems stored in the respective quantum memories. Active bipartite links are indicated by a red line, and active $k$-partite elementary links, $k\geq 3$, corresponding to the hyperedges are indicated by a blue bubble. The active elementary links collectively constitute a subgraph of $G$. In Section~\ref{sec-network_architecture}, we describe how information transmission in a quantum network is typically probabilistic in practice, which means that the subgraphs are random.}\label{fig-network_physical}
	\end{figure}
	
	In Figure~\ref{fig-network_physical}, we illustrate an arbitrary quantum network using its corresponding graph $G$. The left panel of the figure depicts the physical layout of the network, which we assume to be fixed. Every gray edge indicates a quantum channel, and whenever the quantum channel is used to successfully distribute an entangled state to the corresponding nodes, we color the edge red (in the case of bipartite states) or blue (in the case of multipartite states), and we refer to all such edges as ``active elementary links''. The collection of active elementary links in the network forms a subgraph of $G$. In this way, we can think of $G$ as describing the ``physical (elementary) links'' in the network. The overall goal in a quantum network is to use the elementary links to create ``virtual links'', which is entanglement shared by nodes that are not physically connected, allowing them to accomplish a desired task, such as quantum teleportation or quantum key distribution. In Section~\ref{sec-ent_dist_general}, we formally define entanglement distribution in a quantum network in terms of graph transformations, and we define a figure of merit for evaluating entanglement distribution protocols.
	
	In Section~\ref{sec-network_architecture}, we explain why the generation of elementary links is typically probabilistic in practice, and how the finite coherence time of quantum memories limits the time duration of active elementary links. These facts necessitate thinking of the subgraphs of active elementary links as random variables, which we formalize in Section~\ref{sec-random_subgraphs}, and we define the corresponding quantum state of the network in Section~\ref{sec-network_q_state_practical}.
	
	\begin{remark}
		An alternative mathematical description of a quantum network as a graph is one in which every edge represents a vector space and every vertex represents a linear transformation of the vector spaces corresponding to the edges incident to the vertex. This type of object is typically called a \textit{tensor network} (see, e.g., Ref.~\cite{Orus14,BB17,BC17}), because the linear transformations at the vertices can be thought of as tensors. The usual diagrammatic representation of quantum circuits in quantum computing (see, e.g., Ref.~\cite{NC00_book}) can be thought of as a tensor network. More generally, quantum (causal) networks, as defined in Ref.~\cite{CDP09}, are tensor networks, because the edges represent Hilbert spaces and the vertices represent completely positive maps, and the latter can be thought of as a tensor via the Choi representation.~\defqed
	\end{remark}

\section{Quantum states and channels}\label{sec-network_q_state}
	
	Along with the abstraction of a quantum network as a graph, which we have put into place above and is summarized by Figure~\ref{fig-network_physical}, in order to have a formal theoretical study of quantum networks, we need a systematic way of referring to the quantum systems and channels that are involved in the physical functioning of the network. We can then talk about, for example, the overall quantum state of the network and communication between different nodes in the network via the underlying quantum channels.

	Given a graph $G=(V,E)$ corresponding to a quantum network, we label the quantum systems in the network as $A^v_e$, with $v\in V$ and $e\in E$, which tells us that the system $A^v_e$ is located at the vertex $v$ and is associated with the edge $e$ incident to $v$, i.e., $v\in e$; see Figure~\ref{fig-network_q_systems} for an example. We can label quantum systems in the same way for a multigraph, i.e., in the case that a pair of vertices is connected by multiple edges, because we regard each of the parallel edges as a distinct edge.
	
	\begin{figure}
		\centering
		\includegraphics[scale=1]{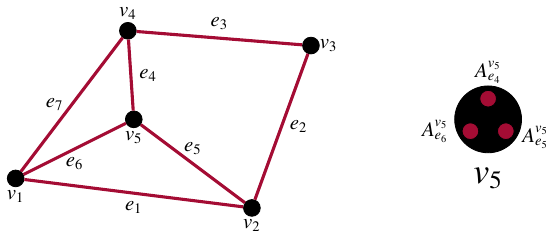}
		\caption{(Left) A graph with 5 vertices and 7 edges corresponding to a quantum network in which all of the adjacent nodes share bipartite entanglement. (Right) Focusing on the vertex $v_5$, we define at least three quantum systems located at that vertex, one system for each edge attached to the vertex.}\label{fig-network_q_systems}
	\end{figure}

	As described at the beginning of this chapter, every active elementary link of the network corresponds to a quantum state, which we denote by
	\begin{equation}
		\rho_e\equiv \rho_{\{A_e^v:v\in V,\,v\in e\}}.
	\end{equation}
	In other words, $\rho_e$ is the quantum state for the systems located at the nodes that are attached to $e$. Note that if $e$ is a hyperedge, with $k\geq 2$ vertices attached to it, then $\rho_e$ is a $k$-partite quantum state. As an example, let us refer to Figure~\ref{fig-network_q_systems}. We have
	\begin{equation}
		\rho_{e_1}\equiv \rho_{A_{e_1}^{v_1}A_{e_1}^{v_2}},\quad\rho_{e_2}\equiv\rho_{A_{e_2}^{v_2}A_{e_2}^{v_3}},
	\end{equation}
	and similarly for all of the other edges. Given any subgraph $G'=(V',E')$ of $G$, consisting of the active elementary links and their associated nodes, we define the quantum state of $G'$ as
	\begin{equation}\label{eq-network_total_state}
		\sigma_{G'}\equiv\sigma_{\{A_e^v:e\in E',\,v\in V'\}}\coloneqq\bigotimes_{e\in E'}\rho_e.
	\end{equation}
	Note that in this formulation, the quantum state of any subgraph is always a tensor-product state with respect to the different edges.
	
	We also define the following partial trace functions:
	\begin{align}
		\Tr_e\coloneqq\Tr_{\{A_e^v:v\in V,\,v\in e\}} \quad\forall~e\in E,\\
		\Tr_v\coloneqq\Tr_{\{A_e^v:e\in E,\,v\in e\}} \quad\forall~v\in V.
	\end{align}
	For example, referring again to Figure~\ref{fig-network_q_systems}, we have that
	\begin{align}
		\Tr_{v_2}&=\Tr_{A_{e_1}^{v_2}A_{e_5}^{v_2}A_{e_2}^{v_2}},\\
		\Tr_{e_2}&=\Tr_{A_{e_2}^{v_2}A_{e_2}^{v_3}}.
	\end{align}
	More generally, for subsets $V'\subseteq V$ and $E'\subseteq E$, we have
	\begin{align}
		\Tr_{V'}&\coloneqq\Tr_{\{A_e^v:v\in V',\,v\in e,\,e\in E\}},\\
		\Tr_{E'}&\coloneqq\Tr_{\{A_e^v:e\in E',\,v\in e,\,v\in V\}}.
	\end{align}
	Using these definitions, we can define reduced/marginal states of the overall quantum state of the network.
	
	All of the definitions above apply to hypergraphs and multigraphs. In the following, however, let us restrict ourselves to multigraphs with bipartite edges. In this case, a path in the graph can be specified by a sequence $(v_1,e_1,v_2,e_2,v_3,\dotsb,v_{n-1},e_{n-1},v_n)$, such that $v_i,v_{i+1}\in e_i$ for all $1\leq i\leq n$.
	
	Let us now discuss how to describe channels for joining elementary links, starting with the entanglement swapping channel defined in \eqref{eq-ent_swap_channel}. For $n>2$, given a particular path $(v_1,e_1,v_2,e_2,\allowbreak v_3,\dotsb,v_{n-1},e_{n-1},v_n)$ between the nodes $v_1$ and $v_n$, such that there are $n-2$ intermediate nodes, we define the channel $\mathcal{L}_{v_1e_1v_2e_2v_3\dotsb v_{n-1}e_{n-1}v_n\to \{v_1,v_n\}}^{\text{ES};{n-2}}$ to be the entanglement swapping channel given in \eqref{eq-ent_swap_channel} that connects the vertices $v_1$ and $v_n$ at the ends of the path, and thus creates a new edge $e'=\{v_1,v_n\}$, so that
	\begin{equation}\label{eq-ent_swap_path}
		\mathcal{L}_{v_1e_1v_2e_2v_3\dotsb v_{n-1}e_{n-1}v_n\to \{v_1,v_n\}}^{\text{ES};{n-2}} \coloneqq \mathcal{L}_{A_{e_1}^{v_1}A_{e_1}^{v_2}A_{e_2}^{v_2}\dotsb A_{e_{n-1}}^{v_{n-1}}A_{e_{n-1}}^{v_n}\to A_{e'}^{v_1}A_{e'}^{v_n}}^{\text{ES};{n-2}}.
	\end{equation}
	We sometimes refer to the quantum state corresponding to the edge $e'$ as a \textit{virtual link} when we want to distinguish it from an elementary link. Note that, instead of the output systems being $A_{e_1}^{v_1}A_{e_{n-1}}^{v_n}$, we have assigned new systems $A_{e'}^{v_1}A_{e'}^{v_n}$ for the output in order to signify the creation of a new (virtual) link. In other words, the output state of the channel is put into two entirely new quantum systems, which allows for the systems $A_{e_1}^{v_1}$ and $A_{e_{n-1}}^{v_n}$ to be used again as part of the elementary links. As an example, for just two elementary links $e_1$ and $e_2$ and the path $(v_1,e_1,v_2,e_2,v_3)$, we have
	\begin{equation}
		\mathcal{L}_{v_1e_1v_2e_2v_3\to \{v_1,v_3\}}^{\text{ES};1}\coloneqq \mathcal{L}_{A_{e_1}^{v_1}A_{e_1}^{v_2}A_{e_2}^{v_2}A_{e_2}^{v_3}\to A_{e'}^{v_1}A_{e'}^{v_3}}^{\text{ES};1},
	\end{equation}
	where $e'=\{v_1,v_3\}$.
	
	We make an analogous definition for the GHZ entanglement swapping channel defined in \eqref{eq-GHZ_ent_swap_channel}. Given $n>2$ and a path $(v_1,e_1,v_2,e_2,v_3,\dotsb,v_{n-1},e_{n-1},v_n)$ between the nodes $v_1$ and $v_n$, such that there are $n-2$ intermediate nodes, we define the channel $\mathcal{L}_{v_1e_1v_2e_2v_3\dotsb v_{n-1}e_{n-1}v_n\to \{v_1,v_2,\dotsc,v_n\}}^{\text{GHZ};n-2}$ to be the GHZ entanglement swapping channel that connects the vertices $v_1,v_2,\dotsc,v_n$ and thus creates a new edge $e'=\{v_1,v_2,\dotsc,v_n\}$, so that
	\begin{equation}\label{eq-ent_swap_GHZ_path}
		\mathcal{L}_{v_1e_1v_2e_2v_3\dotsb v_{n-1}e_{n-1}v_n\to \{v_1,v_2,\dotsc,v_n\}}^{\text{GHZ};n-2} \coloneqq \mathcal{L}_{A_{e_1}^{v_1}A_{e_1}^{v_2}A_{e_2}^{v_2}\dotsb A_{e_{n-1}}^{v_{n-1}}A_{e_{n-1}}^{v_n}\to A_{e'}^{v_1}A_{e'}^{v_2}\dotsb A_{e'}^{v_n}}^{\text{GHZ};{n-2}}.
	\end{equation}
	Like before, we assign the output of the channel to new quantum systems in order to signify the creation of a new (virtual) link, as well as to allow for the creation of new elementary links with the systems that were consumed by measurements in the GHZ entanglement swapping protocol. As an example, consider again just two elementary links $e_1$ and $e_2$ and the path $(v_1,e_1,v_2,e_2,v_3)$. Then,
	\begin{equation}
		\mathcal{L}_{v_1e_1v_2e_2v_3\to \{v_1,v_2,v_3\}}^{\text{GHZ};1}\coloneqq \mathcal{L}_{A_{e_1}^{v_1}A_{e_1}^{v_2}A_{e_2}^{v_2}A_{e_2}^{v_3}\to A_{e'}^{v_1}A_{e'}^{v_2}A_{e'}^{v_3}}^{\text{GHZ};1},
	\end{equation}
	where $e'=\{v_1,v_2,v_3\}$.

	We describe distillation channels in a manner similar to how we did for entanglement swapping channels. Given a hyperedge $e=\{v_1,\dotsc,v_k\}$, with $k\geq 2$, such that $ c(e)=n$ is the number of parallel edges connecting the vertices in $e$, we denote by
	\begin{equation}
		\mathcal{D}_{e^1\dotsb e^n\to e^1\dotsb e^{n'}}^{e}
	\end{equation}
	the entanglement distillation channel that takes all $n$ parallel edges and transforms them to a number $n'<n$ of parallel edges.

\section{The entanglement distribution task}\label{sec-ent_dist_general}

	The central task that we consider in this thesis is entanglement distribution, which we define to be the task of transforming elementary links, which are entangled states shared by the nodes that are physically connected to each other, to virtual links, which are entangled states shared by non-adjacent nodes in the network. This transformation takes places through the execution of an LOCC protocol; see Figure~\ref{fig-network_transformation} for a generic depiction. Importantly, in a real network, there are multiple simultaneous user requests that have to be accommodated, not just a request between one set (e.g., pair) of users, and protocols for entanglement distribution should accommodate such multi-user requests~\cite{CERW20}.
	
	\begin{figure}
		\centering
		\includegraphics[scale=1]{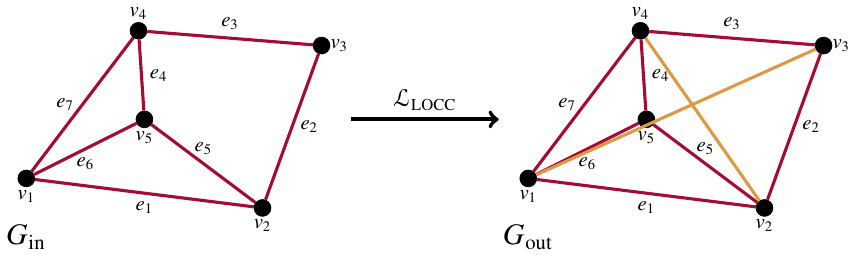}
		\caption{Depiction of entanglement distribution as a transformation of a graph $G_{\text{in}}$ of active elementary links of the physical graph to a new graph $G_{\text{out}}$ consisting of the edges of $G_{\text{in}}$ plus two additional virtual edges indicated in orange. The transformation occurs through an LOCC protocol executed by the nodes in the network, which is described by the LOCC channel $\mathcal{L}_{\text{LOCC}}$. This formulation of the task of entanglement distribution in terms of transformations of graphs is based on Refs.~\cite{SMI+17,CRDW19}.}\label{fig-network_transformation}
	\end{figure}
	
	Entanglement distribution protocols can be described in terms of graph transformations, as done in Ref.~\cite{SMI+17,CRDW19} and depicted in Figure~\ref{fig-network_transformation}. We start with the quantum state corresponding to a collection of elementary links in a network, as in \eqref{eq-network_total_state}. This collection of elementary links corresponds to a graph $G_{\text{in}}=(V_{\text{in}},E_{\text{in}})$. Note that the graph $G_{\text{in}}$ could be a subgraph of the full graph $G$ consisting of all of the physical links, as described at the beginning of this chapter and shown in right panel of Figure~\ref{fig-network_physical}. If $\sigma_{G_{\text{in}}}$ is the state of the graph $G_{\text{in}}$, as in \eqref{eq-network_total_state}, then a general entanglement distribution protocol is an LOCC protocol among the nodes $V_{\text{in}}$, with corresponding LOCC channel $\mathcal{L}_{\text{LOCC}}$, such that
	\begin{equation}\label{eq-ent_dist_protocol_graphs_trans}
		\sigma_{G_{\text{in}}}\mapsto \sigma_{G_{\text{out}}}\coloneqq\mathcal{L}_{\text{LOCC}}(\sigma_{G_{\text{in}}}^{\otimes n}).
	\end{equation}
	The output state $\sigma_{G_{\text{out}}}$ corresponds to a new graph $G_{\text{out}}=(V_{\text{in}},E_{\text{out}})$, which has the general form in \eqref{eq-network_total_state}, namely
	\begin{equation}\label{eq-ent_dist_protocol_output_state}
		\sigma_{G_{\text{out}}}=\bigotimes_{e\in E_{\text{out}}}\rho_e^{\text{out}},
	\end{equation}
	where $\{\rho_e^{\text{out}}:e\in E_{\text{out}}\}$ is a set of quantum states corresponding to the edges of $G_{\text{out}}$. Observe that the new graph $G_{\text{out}}$ has the same set $V_{\text{in}}$ of nodes as the input graph $G_{\text{in}}$. Also, in \eqref{eq-ent_dist_protocol_graphs_trans}, we allow for the protocol to take $n\in\mathbb{N}=\{1,2,\dotsc\}$ copies of the state $\sigma_{G_{\text{in}}}$.
	
	A basic example of an entanglement distribution protocol consists of the following three steps.
	\begin{enumerate}
		\item Generate elementary links in the physical graph.
		\item Perform entanglement distillation protocols, such as the one in Example~\ref{ex-ent_distill}, on (parallel) elementary links.
		\item Perform joining protocols, such as the ones in Examples~\ref{ex-ent_swap}, \ref{ex-GHZ_ent_swap}, and \ref{ex-graph_state_dist}, to create virtual links.
	\end{enumerate}
	
	Typically, the task of entanglement distribution in a quantum network is defined not only by the physical graph $G_{\text{in}}$ consisting of active elementary links, but also by a target graph $G_{\text{target}}$ and its associated quantum state $\sigma_{G_{\text{target}}}$, which has the general form
	\begin{equation}\label{eq-ent_dist_protocol_target}
		\sigma_{G_{\text{target}}}=\bigotimes_{e\in E_{\text{target}}}\rho^{\text{target}}_e.
	\end{equation}
	This is the case, for example, in a real-world scenario in which multiple user requests for entanglement are made, and thus the topology of the graph $G_{\text{target}}$ is given as part of the problem. The goal is then to have the output state $\sigma_{G_{\text{out}}}$ be close to the quantum state of the target graph. We discuss this further in Section~\ref{sec-figures_of_merit_general} below.
	
	\begin{figure}
		\centering
		\includegraphics[scale=1]{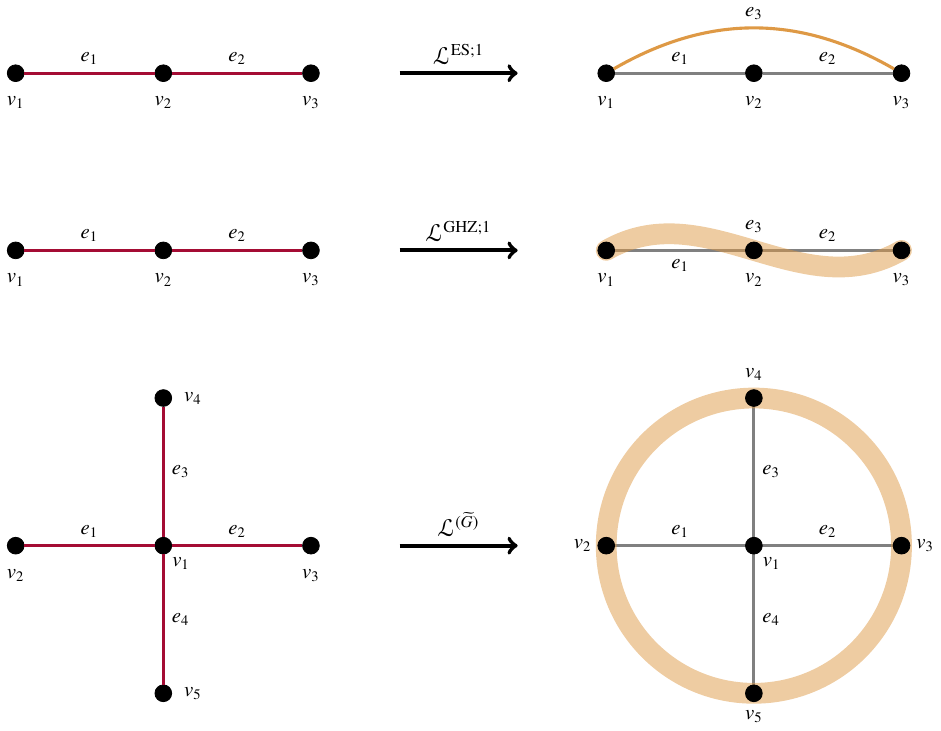}
		\caption{The three basic LOCC protocols defined in Section~\ref{sec-LOCC_channels} can be thought of as graph transformations. (Top) The physical graph of three vertices and two active elementary links can be transformed into a graph of just one edge $e_3=\{v_1,v_2\}$ via the entanglement swapping channel $\mathcal{L}^{\text{ES};1}$ defined in \eqref{eq-ent_swap_channel}. (Center) The same physical graph of three vertices and two active elementary links can be transformed into a graph of one hyperedge $e_3=\{v_1,v_2,v_3\}$ via the GHZ entanglement swapping channel $\mathcal{L}^{\text{GHZ};1}$ defined in \eqref{eq-GHZ_ent_swap_channel}. (Bottom) The five-node physical graph with four active elementary links can be transformed it into a graph with the single hyperedge $e_5=\{v_2,v_3,v_4,v_5\}$ via the graph state distribution channel $\mathcal{L}^{(\widetilde{G})}$ defined in \eqref{eq-graph_state_dist_channel}, where $\widetilde{G}$ is an arbitrary four-vertex graph.}\label{fig-network_simple_transformations}
	\end{figure}
	
	Note that all three of the LOCC protocols considered in Section~\ref{sec-LOCC_channels} (specifically, the ones in Examples~\ref{ex-ent_swap}, \ref{ex-GHZ_ent_swap}, and \ref{ex-graph_state_dist}) are nothing more than special cases of entanglement distribution protocols as we have described them here, and they can be thought of in terms of a graph transformation; see Figure~\ref{fig-network_simple_transformations}. Indeed, let us first consider the entanglement swapping channel defined in \eqref{eq-ent_swap_channel}. As shown in the top panel of Figure~\ref{fig-network_simple_transformations}, we start with the physical graph of two elementary links. The entanglement swapping channel $\mathcal{L}^{\text{ES};1}$ takes these two elementary links and transforms them to a virtual link between the end nodes. Similarly, in the central panel of Figure~\ref{fig-network_simple_transformations}, we see that the same physical graph of two elementary links can be transformed to a single hyperedge of the three nodes via the GHZ entanglement swapping channel $\mathcal{L}^{\text{GHZ};1}$ defined in \eqref{eq-GHZ_ent_swap_channel}. Finally, as shown in the bottom panel of Figure~\ref{fig-network_simple_transformations}, the graph state distribution channel $\mathcal{L}^{(\widetilde{G})}$ defined in \eqref{eq-graph_state_dist_channel} takes the graph of four physical, elementary links and transforms them to a hyperedge consisting of the four outer nodes in the network. Note that, unlike what is shown in Figure~\ref{fig-network_simple_transformations}, the elementary links used in the transformation do not have to be inactive after the transformation. We can incorporate the fresh preparation of those elementary links as part of the transformation, so that the graphs $G_{\text{out}}$ on the right-hand side have active elementary links in addition to the new virtual link.
	
	In Chapter~\ref{chap-network_QDP}, we use quantum decision processes to provide an explicit form for the LOCC channel in \eqref{eq-ent_dist_protocol_graphs_trans} corresponding to a quantum network protocol.
	
	\begin{remark}[Quantum repeaters]\label{rem-quantum_repeaters}
		In Chapter~\ref{chap-introduction}, we briefly mentioned quantum repeaters as devices, originally conceived in Refs.~\cite{BDC98,DBC99}, whose purpose is to mitigate the effects of loss and noise along a path connecting a sender and a receiver, thereby making the quantum information transmission more reliable. Specifically, quantum repeaters perform entanglement distillation \cite{BBP96,DAR96,BDSW96} (or some other form of quantum error correction) and entanglement swapping \cite{BBC+93,ZZH93} to iteratively extend the entanglement range to the desired distance. In the network setting, any node in the network that is not either a sender or a receiver can function as a quantum repeater, and it can be thought of simply as a helper node that functions in a manner similar to the original proposal, except that the repeater would in general have the additional task of routing incoming signals in the required directions using the appropriate measurement settings; see Figure~\ref{fig-grid_architecture} below for an example.~\defqed
	\end{remark}

\subsection{Figure of merit}\label{sec-figures_of_merit_general}

	How do we quantify the performance of a given quantum network protocol? Given a graph $G_{\text{in}}=(V_{\text{in}},E_{\text{in}})$ of elementary links, as described above, along with a protocol whose associated LOCC quantum channel is $\mathcal{L}_{\text{LOCC}}$, we would like the output state $\mathcal{L}_{\text{LOCC}}(\sigma_{G_{\text{in}}}^{\otimes n})$ of the protocol to be close to the state $\sigma_{G_{\text{target}}}$ corresponding to a given target graph $G_{\text{target}}=(V_{\text{in}},E_{\text{target}})$. A basic figure of merit is then simply the fidelity between $\mathcal{L}_{\text{LOCC}}(\sigma_{G_{\text{in}}}^{\otimes n})$ and $\sigma_{G_{\text{target}}}$, i.e.,
	\begin{equation}\label{eq-network_protocol_fidelity}
		F\left(\mathcal{L}_{\text{LOCC}}(\sigma_{G_{\text{in}}}^{\otimes n}),\sigma_{G_{\text{target}}}\right).
	\end{equation}
	
	Now, the quantum network protocol should be such that the graph $G_{\text{out}}$ corresponding to the state $\mathcal{L}_{\text{LOCC}}(\sigma_{G_{\text{in}}}^{\otimes n})$ has the same edge set as the target graph $G_{\text{target}}$, i.e., we require $E_{\text{out}}=E_{\text{target}}$. The goal is then to calculate the fidelity between the states in \eqref{eq-ent_dist_protocol_output_state} and \eqref{eq-ent_dist_protocol_target}. Using multiplicativity of fidelity (see \eqref{eq-fidelity_multiplicative}), we immediately obtain
	\begin{align}
		F\left(\mathcal{L}_{\text{LOCC}}(\sigma_{G_{\text{in}}}^{\otimes n}),\sigma_{G_{\text{target}}}\right)&=F\left(\bigotimes_{e\in E_{\text{target}}}\rho_e^{\text{out}},\bigotimes_{e\in E_{\text{target}}}\rho_e^{\text{target}}\right)\label{eq-network_protocol_fidelity_1}\\[0.2cm]
		&=\prod_{e\in E_{\text{target}}}F\left(\rho_e^{\text{out}},\rho_e^{\text{target}}\right).\label{eq-network_protocol_fidelity_2}
	\end{align}
	
	\begin{remark}
		Protocols for achieving the task of entanglement distribution as we have described it here are a special case of general quantum network protocols that have been described in Refs.~\cite{AML16,AK17,BA17,RKB+18,Pir19,Pir19b,DBWH19,TKRW19,BAKE20}. The figure of merit considered in all of these works is the trace distance to the target state instead of fidelity (however, Ref.~\cite{DBWH19} uses fidelity). Furthermore, the primary goal in these works is to determine the optimal rates at which the target state can be created. This involves determining the maximum number of edge-disjoint paths in the network between the desired non-adjacent target nodes, which is when theorems such as Menger's theorem (Theorem~\ref{thm-Menger}) and the theorem of Tutte and Nash-Williams (Theorem~\ref{thm-Tutte_NW}) are helpful, and allow for the problem to be formulated as a flow-optimization problem, which typically have formulations as linear programs; see Ref.~\cite{Brazil2015} for more information. We also mention that multi-user requests are dealt with explicitly in \cite{Pir19b,BAKE20,CERW20}.
		
		In this thesis, we are primarily concerned with developing explicit forms for the channel $\mathcal{L}_{\text{LOCC}}$ in \eqref{eq-network_protocol_fidelity} under practical conditions rather than with optimal rates, as we outline in detail in Section~\ref{sec-network_architecture} below. However, in Section~\ref{sec-practical_figures_merit}, we explain how the methods developed here can be used to determine optimal rates using the aforementioned flow-optimization techniques.~\defqed
	\end{remark}
	
	Let us now discuss one simple strategy for achieving the desired target graph $G_{\text{target}}$ and its associated state $\sigma_{G_{\text{target}}}$. For concreteness, let us suppose that input graph $G_{\text{in}}$ contains only two-element edges, so that the elementary links correspond to bipartite entanglement (see, e.g., the left panel of Figure~\ref{fig-network_transformation}). Then, a simple strategy is to first perform entanglement distillation of the elementary links (in order to increase the fidelity to the target state), then to execute joining protocols (i.e., protocols to create virtual links) along appropriate paths. The latter requires an algorithm for finding paths in the network that can accommodate multi-user requests. For example, in the right panel of Figure~\ref{fig-network_transformation}, the virtual link $\{v_1,v_3\}$ can be established along the path $(v_1,e_1,v_2,e_2,v_3)$, and the virtual link $\{v_2,v_4\}$ can be established along the path $(v_2,e_5,v_5,e_4,v_4)$. Once a path to create a particular virtual link has been identified, then the appropriate joining protocol is performed along that path in order to create the virtual link. If we let $e'\in E_{\text{target}}$ denote this new virtual link and we let $w$ denote the path, then $\rho_{e'}^{\text{out}}$ is simply the output of the corresponding joining channel, i.e.,
	\begin{equation}
		\rho_{e'}^{\text{out}}=\mathcal{L}_{w\to e'}\left(\bigotimes_{e\in w}\rho_e^{\text{in}}\right),
	\end{equation}
	where we have made use of the notation defined in Section~\ref{sec-network_q_state} for the joining channel, which is an LOCC channel. Then, from \eqref{eq-network_protocol_fidelity_2}, the task is then to calculate the fidelity between the state $\rho_{e'}^{\text{out}}$ and the corresponding target state $\rho_{e'}^{\text{target}}$. The target state is typically a pure entangled state, i.e., $\rho_{e'}^{\text{target}}=\ket{\psi}\bra{\psi}_{e'}$, which means that the task is to calculate
	\begin{equation}\label{eq-fidelity_after_joining}
		\bra{\psi}_{e'}\mathcal{L}_{w\to e'}\left(\bigotimes_{e\in w}\rho_e^{\text{in}}\right)\ket{\psi}_{e'}.
	\end{equation}

	As examples, let us consider calculating the fidelity in \eqref{eq-fidelity_after_joining} for the three LOCC joining channels defined in Section~\ref{sec-LOCC_channels}. In each case, we find that fidelity of the output state of the channel with respect to the target state can be written in terms of the fidelities of the bipartite input states with respect to the maximally entangled state. We provide the proofs in Appendix~\ref{app-proofs}.
	
	\begin{proposition}\label{lem-ent_swap_post_fidelity}
		For all $n\geq 1$, and for all states $\rho_{AR_1^1}^1,\rho_{R_1^2R_2^1}^2,\dotsc,\rho_{R_n^2B}^{n+1}$, the fidelity with respect to the maximally entangled state of the state after entanglement swapping of $\rho_{AR_1^1}^1,\rho_{R_1^2R_2^1}^2,\dotsc,\rho_{R_n^2B}^{n+1}$ is given by
		\begin{multline}\label{eq-ent_swap_post_fidelity}
			\bra{\Phi^+}_{AB}\mathcal{L}_{A\vec{R}_1\dotsb\vec{R}_nB\to AB}^{\text{ES};n}\left(\rho_{AR_1^1}^1\otimes\rho_{R_1^2R_2^1}^2\otimes\dotsb\otimes\rho_{R_n^2B}^{n+1}\right)\ket{\Phi^+}_{AB}\\=\sum_{\substack{x_1,\dotsc,x_n=0\\z_1,\dotsc,z_n=0}}^{d-1}\bra{\Phi_{z_1+\dotsb+z_n,x_1+\dotsb+x_n}}\rho_{AR_1^1}^1\ket{\Phi_{z_1+\dotsb+z_n,x_1+\dotsb+x_n}}\bra{\Phi_{z_1,x_1}}\rho_{R_1^2R_2^1}^2\ket{\Phi_{z_1,x_1}}\dotsb\bra{\Phi_{z_n,x_n}}\rho_{R_n^2B}^{n+1}\ket{\Phi_{z_n,x_n}}.
		\end{multline}
	\end{proposition}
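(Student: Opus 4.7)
I would prove the identity by induction on $n$, after first strengthening the claim so that it closes under induction. Specifically, I would establish that for every $(z_0,x_0)\in [d]\times [d]$,
\begin{multline*}
\bra{\Phi^{z_0,x_0}}_{AB}\,\mathcal{L}_{A\vec{R}_1\dotsb\vec{R}_nB\to AB}^{\text{ES};n}\!\left(\rho^{1}_{AR_1^1}\otimes\dotsb\otimes\rho^{n+1}_{R_n^2B}\right)\ket{\Phi^{z_0,x_0}}_{AB}\\
=\sum_{\vec{z},\vec{x}\in[d]^n}\bra{\Phi^{z_0+\sum_jz_j,\,x_0+\sum_jx_j}}\rho^1\ket{\Phi^{z_0+\sum_jz_j,\,x_0+\sum_jx_j}}\prod_{j=1}^{n}\bra{\Phi^{z_j,x_j}}\rho^{j+1}\ket{\Phi^{z_j,x_j}},
\end{multline*}
from which the proposition follows as the case $(z_0,x_0)=(0,0)$. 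The strengthening is essential because the inductive hypothesis must be invoked on an intermediate state at \emph{arbitrary} Bell indices, not just at $(0,0)$.

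For the base case $n=1$, I would unfold the definition of $\mathcal{L}^{\text{ES};1}$ from \eqref{eq-ent_swap_channel} and use cyclicity of trace to rewrite the fidelity as
\[
\sum_{z,x}\Tr\!\left[\left((W_B^{z,x})^{\dagger}\Phi^{z_0,x_0}_{AB}W_B^{z,x}\otimes\Phi^{z,x}_{R_1^1R_1^2}\right)(\rho^1_{AR_1^1}\otimes\rho^2_{R_1^2B})\right].
\]
The key intermediate identity, $(W_B^{z,x})^{\dagger}\Phi^{z_0,x_0}_{AB}W_B^{z,x}=\Phi^{z_0+z,\,x_0+x}_{AB}$ (up to conventions on $(z,x)$), follows from the transpose identity $(\mathbbm{1}_A\otimes M_B)\ket{\Phi^+}_{AB}=(M^{\t}_A\otimes\mathbbm{1}_B)\ket{\Phi^+}_{AB}$ together with the Weyl commutation relation $Z^zX^x=\omega^{zx}X^xZ^z$, with any remaining phase cancelling between bra and ket. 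Expanding the two Bell projectors in the computational basis and executing the four partial traces then factorizes each summand as $\bra{\Phi^{z_0+z,\,x_0+x}}\rho^1\ket{\Phi^{z_0+z,\,x_0+x}}\bra{\Phi^{z,x}}\rho^2\ket{\Phi^{z,x}}$.

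For the inductive step, I would rely on the fact that the cumulative correction $Z_B^{\sum_jz_j}X_B^{\sum_jx_j}$ coincides with the sequential product $\prod_{j=1}^{n}Z_B^{z_j}X_B^{x_j}$ up to an overall phase from iterated applications of the Weyl relation, and such a phase is invisible under conjugation $U(\cdot)U^{\dagger}$. This lifts to the channel-level decomposition
\[
\mathcal{L}^{\text{ES};n}_{A\vec{R}_1\dotsb\vec{R}_nB\to AB}=\mathcal{L}^{\text{ES};1}_{A\vec{R}_nB\to AB}\circ\left(\mathcal{L}^{\text{ES};n-1}_{A\vec{R}_1\dotsb\vec{R}_{n-1}R_n^1\to AR_n^1}\otimes\mathrm{id}_{R_n^2B}\right).
\]
Applying the base case to the outer swap yields a sum over $(z_n,x_n)$ of products of diagonal Bell matrix elements of the intermediate state and of $\rho^{n+1}$; invoking the strengthened inductive hypothesis at the shifted Bell index $(z_0+z_n,\,x_0+x_n)$ evaluates the diagonal Bell matrix elements of the intermediate state as the prescribed sum over $(z_1,x_1),\dotsc,(z_{n-1},x_{n-1})$, and collecting the indices yields the identity at level $n$.

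The main obstacle will be the phase bookkeeping in the base case: the transpose trick and the Weyl commutation relation each introduce $\omega$-phases, and one must verify that the only surviving phase is the one that cancels between bra and ket, so that the trace factorizes exactly as claimed with no residual phases. A secondary subtlety is verifying rigorously that the reordering of the cumulative correction into a sequential product is a genuine channel-level identity (rather than one holding only up to global phases on a Stinespring dilation), so that the compositional decomposition of $\mathcal{L}^{\text{ES};n}$ holds as stated.
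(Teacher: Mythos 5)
Your overall strategy --- strengthening the claim to arbitrary Bell indices $(z_0,x_0)$, proving the two-link case, and then inducting via the decomposition of $\mathcal{L}^{\text{ES};n}$ into an $(n-1)$-node swap followed by a one-node swap --- is genuinely different from the paper's proof, which is a single monolithic computation (carried out for $d=2$ only) that rewrites $\ket{\Phi_{a,b}}_{AB}\otimes\bigotimes_{j}\ket{\Phi_{z_j,x_j}}_{\vec{R}_j}$ in the Bell basis of the re-paired systems $AR_1^1,R_1^2R_2^1,\dotsc,R_n^2B$ and eliminates the cross terms using $\sum_{\vec{\gamma}}(-1)^{\vec{\gamma}^{\t}\vec{x}}=2^n\delta_{\vec{x},\vec{0}}$. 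Your induction is an attractive way to organize the argument, and the strengthened statement is correct. However, there is a concrete error in your base case.

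The claim that executing the partial traces ``factorizes \emph{each summand}'' as $\bra{\Phi^{z_0+z,x_0+x}}\rho^1\ket{\Phi^{z_0+z,x_0+x}}\bra{\Phi^{z,x}}\rho^2\ket{\Phi^{z,x}}$ is false. Take $d=2$, $(z_0,x_0)=(0,0)$, and $\rho^1=\rho^2=\Phi^+$: the $(z,x)=(0,0)$ summand is $\Tr\left[\left(\Phi^{0,0}_{AB}\otimes\Phi^{0,0}_{R_1^1R_1^2}\right)\left(\Phi^+_{AR_1^1}\otimes\Phi^+_{R_1^2B}\right)\right]=\tfrac{1}{4}$, whereas the claimed product equals $1$; the other three summands each contribute $\tfrac{1}{4}$ rather than $0$. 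Only the \emph{sum} over $(z,x)$ agrees with the right-hand side. The reason is that $\Phi^{a,b}_{AB}\otimes\Phi^{z,x}_{R_1^1R_1^2}$, expanded in the Bell basis of the re-paired systems $AR_1^1$ and $R_1^2B$, is a coherent superposition, so each individual summand also contains off-diagonal Bell matrix elements $\bra{\Phi^{z',x'}}\rho^i\ket{\Phi^{z'',x''}}$ with $(z',x')\neq(z'',x'')$, and these cancel only after summing over the correlated outcomes $(z,x)$. That cancellation is the actual crux of the base case --- it is precisely the role played by $\sum_{\vec{\gamma}}(-1)^{\vec{\gamma}^{\t}\vec{x}}=2^n\delta_{\vec{x},\vec{0}}$ in the paper's computation --- and it is absent from your sketch; the phase bookkeeping you flag as the main obstacle is comparatively routine. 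Once the base case is repaired to prove the summed identity, your inductive step (the deferred-correction decomposition of the channel, with the inductive hypothesis invoked at the shifted index) goes through, subject to the sign conventions you already note, which the paper itself sidesteps by restricting to $d=2$.
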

	
	\begin{proof}
		See Appendix~\ref{app-ent_swap_post_fidelity_pf}.
	\end{proof}
	
	
	\begin{proposition}\label{lem-ent_swap_GHZ_post_fidelity}
		For all $n\geq 1$, and for all states $\rho_{AR_1^1}^1,\rho_{R_1^2R_2^1}^2,\dotsc,\rho_{R_n^2B}^{n+1}$, the fidelity with respect to the $(n+2)$-party GHZ state of the state after the GHZ entanglement swapping of $\rho_{AR_1^1}^1,\rho_{R_1^2R_2^1}^2,\dotsc,\rho_{R_n^2B}^{n+1}$ is
		\begin{multline}
			\bra{\text{GHZ}_{n+2}}\mathcal{L}_{A\vec{R}_1\dotsb\vec{R}_nB\to AR_1^1\dotsb R_n^1B}^{\text{GHZ};n}\left(\rho_{AR_1^1}^1\otimes\rho_{R_1^2R_2^1}^2\otimes\dotsb\otimes\rho_{R_n^2B}^{n+1}\right)\ket{\text{GHZ}_{n+2}}\\=\sum_{z_1,\dotsc,z_n=0}^1\bra{\Phi_{z_1+\dotsb+z_n,0}}\rho_{AR_1^1}\ket{\Phi_{z_1+\dotsb+z_n,0}}\bra{\Phi_{z_1,0}}\rho_{R_1^2R_2^1}^2\ket{\Phi_{z_1,0}}\dotsb\bra{\Phi_{z_n,0}}\rho_{R_n^2B}^{n+1}\ket{\Phi_{z_n,0}}.
		\end{multline}
	\end{proposition}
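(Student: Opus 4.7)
The strategy is to mimic the proof of the previous proposition (the fidelity formula for standard entanglement swapping) while adapting it to the structural differences of the GHZ protocol, in which the intermediate measurements are in the standard basis and the corrections are $X$-type only. First, I would expand each input state in the two-qubit Bell basis,
\begin{equation*}
\rho^j = \sum_{(z_j,x_j),(z'_j,x'_j)} \langle \Phi^{z_j,x_j} | \rho^j | \Phi^{z'_j,x'_j}\rangle \, |\Phi^{z_j,x_j}\rangle\langle\Phi^{z'_j,x'_j}|.
\end{equation*}
Using $|\Phi^{z,x}\rangle = (Z^z X^x \otimes I)|\Phi^+\rangle$ together with linearity of $\mathcal{L}^{\text{GHZ};n}$, the fidelity reduces to a multilinear combination whose individual terms have the form
\begin{equation*}
\langle\text{GHZ}_{n+2}|\,\mathcal{L}^{\text{GHZ};n}\!\bigl(\tilde{P}^{\vec{z},\vec{x}}\,(|\Phi^+\rangle\langle\Phi^+|)^{\otimes(n+1)}\,\tilde{P}^{\vec{z}',\vec{x}'\,\dagger}\bigr)\,|\text{GHZ}_{n+2}\rangle,
\end{equation*}
where $\tilde{P}^{\vec{z},\vec{x}}$ is a product of single-qubit Paulis $Z^{z_j}X^{x_j}$, each acting on one chosen qubit of the $j$-th Bell pair.

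The second step is to propagate these Pauli factors through the Kraus operators $P^{\vec{y}}$ of the channel in \eqref{eq-GHZ_ent_swap_channel}. Using
\begin{equation*}
\text{CNOT}(I\otimes Z)\text{CNOT} = Z\otimes Z, \qquad \text{CNOT}(I\otimes X)\text{CNOT} = I\otimes X,
\end{equation*}
together with $\text{CNOT}(Z\otimes I)\text{CNOT} = Z\otimes I$ and $\text{CNOT}(X\otimes I)\text{CNOT} = X\otimes X$, one shows that for each Bell pair whose ``measured'' qubit carries the pre-applied Pauli, the $Z^{z_j}$ factor splits across the CNOT, depositing a $Z$ on the surviving output qubit while hitting the projector $\langle y_j|$ with a phase $(-1)^{y_jz_j}$; the $X^{x_j}$ commutes through the CNOT and gets absorbed into the projector as the shift $y_j \mapsto y_j \oplus x_j$. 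After re-indexing the Kraus sum $\vec{y} \mapsto \vec{y} \oplus \vec{x}$ (a bijection leaving the sum invariant), the $X$-factors reappear as extra $X$-corrections on the output qubits further down the chain, so that the residual Paulis on the final $n+2$ output qubits form a deterministic Pauli string $\mathcal{W}^{\vec{z},\vec{x}}$ determined entirely by $\vec{z}$ and $\vec{x}$, while the outer factor placed on $A$ rides through the protocol untouched. Since the protocol with perfect $|\Phi^+\rangle$ inputs produces exactly $|\text{GHZ}_{n+2}\rangle$ (Example~\ref{ex-GHZ_ent_swap}), each term collapses to an expression proportional to $\langle\text{GHZ}_{n+2}|\mathcal{W}^{\vec{z},\vec{x}}|\text{GHZ}_{n+2}\rangle\,\langle\text{GHZ}_{n+2}|\mathcal{W}^{\vec{z}',\vec{x}'\,\dagger}|\text{GHZ}_{n+2}\rangle$.

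At this stage the stabilizer structure of the GHZ state finishes the argument. A single-qubit $Z$ takes $|\text{GHZ}_{n+2}\rangle$ to an orthogonal state, while $Z_iZ_j|\text{GHZ}_{n+2}\rangle = |\text{GHZ}_{n+2}\rangle$ for all pairs; similarly, $X^{\otimes(n+2)}|\text{GHZ}_{n+2}\rangle = |\text{GHZ}_{n+2}\rangle$, but any $X$-string on a proper nontrivial subset of qubits produces an orthogonal state. These relations force $\vec{x} = \vec{x}' = \vec{0}$ (since the $X$-corrections never form the complete $X^{\otimes(n+2)}$ string in the support of the channel output), together with $\vec{z} = \vec{z}'$ and the constraint that the $Z$-index of $\rho^1$ equals the sum $z_1 + \cdots + z_n \pmod{2}$ of the others (so that the total count of $Z$-factors on the output is even). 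The diagonal matrix elements $\langle\Phi^{z_j,0}|\rho^j|\Phi^{z_j,0}\rangle$ surviving in the sum then reassemble into the claimed expression.

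The main obstacle will be Step~2: one must carefully track (i) the sign phases $(-1)^{y_jz_j}$ arising whenever a pre-applied $Z$ meets a standard-basis projector, and (ii) the way the post-measurement $X^{y_j}$ corrections combine with the leftover $X^{x_j}$ factors after the variable shift, so as to verify that the final Pauli string $\mathcal{W}^{\vec{z},\vec{x}}$ has exactly the structure needed to invoke the stabilizer orthogonality of Step~3. Once this bookkeeping is in place, the orthogonality argument is immediate and the identity follows.
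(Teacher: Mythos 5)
Your argument is correct in outline but takes a genuinely different route from the paper's. The paper never Bell-expands the inputs: it expands $\bra{\text{GHZ}_{n+2}}$ in the computational basis, pushes that bra through the explicit Kraus operators $L_j^{x_j}=\bra{x_j}_{R_j^2}\text{CNOT}_{\vec{R}_j}X_{R_{j+1}^1}^{x_j}$ so that it factors into computational-basis bras regrouped onto the input pairs $(AR_1^1),(R_1^2R_2^1),\dotsc,(R_n^2B)$, and only then converts to the Bell basis, where the single identity $\sum_{\vec{\gamma}}(-1)^{\vec{\gamma}\cdot\vec{x}}=2^n\delta_{\vec{x},\vec{0}}$ does all the phase cancellation at once. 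You instead Bell-expand the inputs and run a Pauli-propagation/stabilizer argument; what this buys is a structural explanation of the selection rules — the $Z$-errors on the measured qubits teleport onto the surviving qubits and must pair into $Z_iZ_j$ stabilizers of the GHZ state (hence the parity constraint tying the index of $\rho^1$ to $z_1+\dotsb+z_n$), while the $X$-errors can never assemble the logical $X^{\otimes(n+2)}$ because $R_1^1$ never acquires an $X$, which is exactly why only the $x=0$ Bell-diagonal elements survive. The cost is that the bookkeeping you defer in Step~2 is where essentially all of the work lives, and it is more delicate than your sketch suggests: after absorbing $X^{x_j}$ into the projector as $\bra{y_j\oplus x_j}$, the correction actually applied downstream is still $X^{y_j}$ rather than $X^{y_j\oplus x_j}$, and pushing the resulting mismatch through the next node's CNOT re-introduces outcome-dependence into both the projector indices and the signs picked up when $X$-corrections cross the deposited $Z^{z_j}$'s on the same wire; one must verify that this outcome-dependence collapses to pure phases whose sum over $\vec{y}$ enforces $\vec{z}=\vec{z}'$, so that the residual Pauli string really is deterministic before your Step~3 can be invoked. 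Once that verification is in place the stabilizer step is immediate and both proofs land on the same formula; the paper's computational-basis route is less illuminating but sidesteps this Pauli-frame tracking entirely.
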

	
	\begin{proof}
		See Appendix~\ref{app-ent_swap_GHZ_post_fidelity_pf}.
	\end{proof}
	
	
	\begin{proposition}\label{prop-graph_state_dist_post_fidelity}
		For all $n\geq 2$, all graphs $G$ with $n$ vertices, and all two-qubit states $\rho_{A_1R_1}^1$, $\rho_{A_2R_2}^2,\dotsc,\allowbreak\rho_{A_nR_n}^n$, the fidelity with respect to the graph state $\ket{G}$ of the state after the graph state distribution channel applied to $\rho_{A_1R_1}^1$, $\rho_{A_2R_2}^2,\dotsc,\allowbreak\rho_{A_nR_n}^n$ is
		\begin{multline}
			\bra{G}\mathcal{L}_{A_1^nR_1^n\to A_1^n}^{(G)}\left(\rho_{A_1R_1}^1\otimes\dotsb\otimes\rho_{A_nR_n}^n\right)\ket{G}\\=\sum_{\vec{x}\in\{0,1\}^n}\bra{\Phi_{z_1,x_1}}\rho_{A_1R_1}^1\ket{\Phi_{z_1,x_1}}\bra{\Phi_{z_2,x_2}}\rho_{A_2R_2}^2\ket{\Phi_{z_2,x_2}}\dotsb\bra{\Phi_{z_n,x_n}}\rho_{A_nR_n}^n\ket{\Phi_{z_n,x_n}},
		\end{multline}
		where the column vector $\vec{z}=(z_1,\dotsc,z_n)^{\t}$ is given by $\vec{z}=A(G)\vec{x}$, with $A(G)$ the adjacency matrix of $G$.
	\end{proposition}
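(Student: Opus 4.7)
The plan is to evaluate the fidelity directly from the Kraus representation of $\mathcal{L}^{(G)}$ given in~\eqref{eq-graph_state_dist_channel} and to reduce the resulting expression to a sum of products of Bell-basis diagonal matrix elements of the $\rho^i$'s. First, using Hermiticity of $Z^{\vec{x}}$ together with $\ket{G^{\vec{x}}}=Z^{\vec{x}}\ket{G}$, so that $\bra{G}Z^{\vec{x}}=\bra{G^{\vec{x}}}$, the fidelity reduces to
\begin{equation}
\sum_{\vec{x}\in\{0,1\}^n}\left(\bra{G^{\vec{x}}}_{A_1^n}\otimes\bra{G^{\vec{x}}}_{R_1^n}\right)\left(\rho^1\otimes\dotsb\otimes\rho^n\right)\left(\ket{G^{\vec{x}}}_{A_1^n}\otimes\ket{G^{\vec{x}}}_{R_1^n}\right).
\end{equation}
Because each $\rho^i$ is supported on the pair $A_iR_i$, the remaining task is to decompose the $2n$-qubit doubled graph state $\ket{\Psi^{\vec{x}}}\coloneqq\ket{G^{\vec{x}}}_{A_1^n}\otimes\ket{G^{\vec{x}}}_{R_1^n}$ as a superposition of tensor products of Bell states across the pairs $A_iR_i$.

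I would carry out this decomposition by expanding $\ket{G^{\vec{x}}}=2^{-n/2}\sum_{\vec{\alpha}}(-1)^{f(\vec{\alpha})+\vec{x}\cdot\vec{\alpha}}\ket{\vec{\alpha}}$ with $f(\vec{\alpha})\coloneqq\tfrac{1}{2}\vec{\alpha}^{\t}A(G)\vec{\alpha}$, regrouping $\ket{\vec{\alpha}}_A\ket{\vec{\beta}}_R=\bigotimes_i\ket{\alpha_i,\beta_i}_{A_iR_i}$, and applying the elementary identity
\begin{equation}
\ket{\alpha}_{A_i}\ket{\beta}_{R_i}=\frac{1}{\sqrt{2}}\sum_{z\in\{0,1\}}(-1)^{z\alpha}\ket{\Phi_{z,\alpha+\beta}}_{A_iR_i}
\end{equation}
to each pair. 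The change of variable $\vec{\gamma}\coloneqq\vec{\alpha}+\vec{\beta}\pmod 2$, together with the quadratic-form identity $f(\vec{\alpha})+f(\vec{\beta})\equiv f(\vec{\gamma})+\vec{\alpha}^{\t}A(G)\vec{\gamma}\pmod 2$ (verified by expanding $\beta_i\beta_j=(\alpha_i\oplus\gamma_i)(\alpha_j\oplus\gamma_j)$ modulo $2$), linearizes the dependence on $\vec{\alpha}$. The sum over $\vec{\alpha}$ then produces the Kronecker delta $\delta_{\vec{z},A(G)\vec{\gamma}}$, giving
\begin{equation}
\ket{\Psi^{\vec{x}}}=\frac{1}{\sqrt{2^n}}\sum_{\vec{\gamma}\in\{0,1\}^n}(-1)^{f(\vec{\gamma})+\vec{x}\cdot\vec{\gamma}}\bigotimes_{i=1}^n\ket{\Phi_{(A(G)\vec{\gamma})_i,\gamma_i}}_{A_iR_i}.
\end{equation}

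Substituting this decomposition back into the fidelity produces a double sum over $\vec{\gamma}$ and $\vec{\gamma}'$; the remaining sum over $\vec{x}$ contributes $2^n\delta_{\vec{\gamma},\vec{\gamma}'}$, collapsing the bra-ket to its diagonal, and relabeling $\vec{\gamma}\to\vec{x}$ with $\vec{z}=A(G)\vec{x}$ yields the claimed identity. The main obstacle is the quadratic-form identity modulo $2$ together with the Fourier-style $\vec{\alpha}$-summation that forces $\vec{z}=A(G)\vec{\gamma}$; once this is in hand, the remaining manipulations are bookkeeping analogous to the proofs of Propositions~\ref{lem-ent_swap_post_fidelity} and~\ref{lem-ent_swap_GHZ_post_fidelity}.
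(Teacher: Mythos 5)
Your proposal is correct and follows essentially the same route as the paper's proof in Appendix~\ref{app-graph_state_dist_post_fidelity_pf}: both reduce the fidelity to $\sum_{\vec{x}}\bra{G^{\vec{x}}}\otimes\bra{G^{\vec{x}}}(\rho^1\otimes\dotsb\otimes\rho^n)\ket{G^{\vec{x}}}\otimes\ket{G^{\vec{x}}}$, expand the graph states in the computational basis, pass to the Bell basis pairwise via \eqref{eq-comp_to_Bell}, exploit the symmetry of $A(G)$ to linearize the quadratic phase, and use the character sum \eqref{eq-bit_string_sum_spec} to force $\vec{z}=A(G)\vec{x}$ and collapse the off-diagonal terms. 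The only difference is organizational — you package the Bell-pair decomposition of the doubled graph state as a standalone intermediate identity before summing over the Kraus index, whereas the paper performs the full expansion at once — which is a cosmetic reordering of the same computation.
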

	
	\begin{proof}
		See Appendix~\ref{app-graph_state_dist_post_fidelity_pf}.
	\end{proof}

\section{Practical network architecture}\label{sec-network_architecture}

	In Section~\ref{sec-ent_dist_general}, we presented the task of entanglement distribution in a quantum network in general terms. Essentially, the task corresponds to the transformation of a graph of elementary links into a graph with virtual links (shared entanglement between non-adjacent nodes in the network). We also introduced a figure of merit that quantifies (using fidelity) how close the quantum state of the transformed graph is to the quantum state of a target graph. Let us now point out three important facts that are relevant in practice but are not explicitly taken into account in the general formulation presented in Section~\ref{sec-ent_dist_general}:
	\begin{itemize}
		\item  The generation of elementary links, as well as entanglement swapping and entanglement distillation protocols, are all typically probabilistic.

		\item The quantum memories at the nodes of the network have limited coherence times.
		
		\item Every node has a limited number of quantum memories.
	\end{itemize}
	All of these practical limitations mean that the LOCC channel $\mathcal{L}_{\text{LOCC}}$ in \eqref{eq-ent_dist_protocol_graphs_trans} that defines the quantum network protocol for entanglement distribution cannot be too general. In this section, we consider these practical limitations in detail. We start with probabilistic elementary link generation in Section~\ref{sec-practical_elem_link_generation}, and we consider both ground-based and satellite-based models. Probabilistic elementary link generation implies that the subgraphs $G_{\text{in}}$ of the physical graph of the network, which are the inputs to the quantum network protocol, need to be thought of as random variables, and in Section~\ref{sec-random_subgraphs} we formalize this. Finally, in Section~\ref{sec-network_q_state_practical}, we define the quantum state of the network, as well as the joining and distillation channels, within the model of probabilistic elementary link generation. 


\subsection{Elementary link generation}\label{sec-practical_elem_link_generation}

	Our basic model for probabilistic elementary link generation is illustrated in Figure~\ref{fig-net_architecture}, and it has been considered in Refs.~\cite{AJP+03,JKR+16,DKD18,KMSD19}. For every physical link in the network, there is a source station that prepares and distributes an entangled state to the corresponding nodes. In general, all of these source stations operate independently of each other, distributing entangled states as they are requested.
	
	\begin{figure}
		\centering
		\includegraphics[scale=0.65]{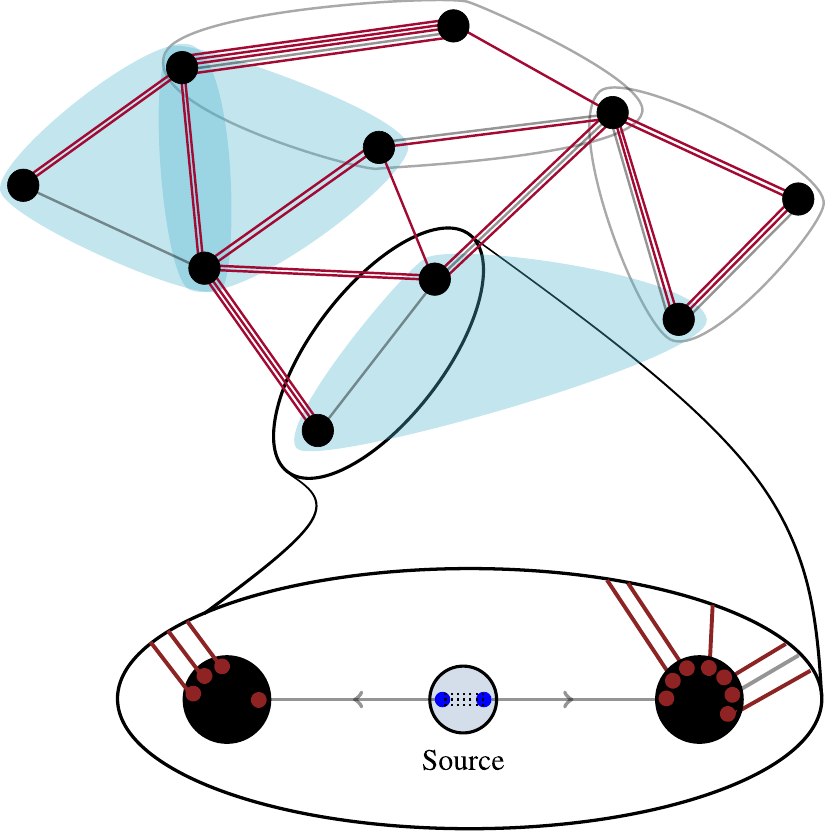}
		\caption{Our model for elementary link generation in a quantum network consists of source stations associated to every physical (elementary) link that distributes entangled states to the corresponding nodes \cite{AJP+03,JKR+16,DKD18,KMSD19}.}\label{fig-net_architecture}
	\end{figure}

	Let $G=(V,E)$ be the graph of physical links, and let $e\in E$ be an arbitrary (hyper)edge with $k\geq 2$ nodes $v_1,v_2,\dotsc,v_k$. The source corresponding to this physical link prepares a $k$-partite quantum state $\rho_e^S\equiv\rho_{A_e^{v_1}A_e^{v_2}\dotsb A_e^{v_k}}^S$. This quantum state is distributed to the nodes by sending each quantum system $A_e^{v_j}$ through a quantum channel $\mathcal{S}_{e,j}\equiv\mathcal{S}_{A_e^{v_j}}$, so that the state after transmission is
	\begin{equation}
		\rho_e^{S,\text{out}}\coloneqq\left(\mathcal{S}_{e,1}\otimes\mathcal{S}_{e,2}\otimes\dotsb\otimes\mathcal{S}_{e,k}\right)(\rho_e^S)=\mathcal{S}_e(\rho_e^S),
	\end{equation}
	where
	\begin{equation}
		\mathcal{S}_e\coloneqq\mathcal{S}_{e,1}\otimes\mathcal{S}_{e,2}\otimes\dotsb\otimes\mathcal{S}_{e,k}.
	\end{equation}
	
	After transmission from the source to the nodes, the nodes typically have to execute a \textit{heralding procedure}, which is an LOCC protocol executed by the nodes that confirms whether all of the nodes received their quantum systems; the reason for this will be clarified below, when we consider specific transmission channels. If the heralding procedure succeeds, then the nodes store their quantum systems in a quantum memory. Mathematically, the heralding procedure can be described by a quantum instrument $\{\mathcal{M}^0,\mathcal{M}^1\}$, which we recall means that $\mathcal{M}^0$ and $\mathcal{M}^1$ are completely positive trace non-increasing maps such that $\mathcal{M}^0+\mathcal{M}^1$ is trace preserving. The map $\mathcal{M}^0$ corresponds to failure of the heralding procedure, and the map $\mathcal{M}^1$ corresponds to success. The outcome of the heralding procedure can then be captured by the following transformation of the state $\rho_e^{S,\text{out}}$ to a classical-quantum state:
	\begin{equation}\label{eq-link_cq_state_initial}
		\rho_e^{S,\text{out}}\mapsto \ket{0}\bra{0}\otimes\mathcal{M}_e^0(\rho_e^{S,\text{out}})+\ket{1}\bra{1}\otimes\mathcal{M}_e^1(\rho_e^{S,\text{out}})=\ket{0}\bra{0}\otimes \widetilde{\sigma}_e(0)+\ket{1}\bra{1}\otimes\widetilde{\sigma}_e(1),
	\end{equation}
	where the classical register holds the binary outcome of the heralding procedure (`1' for success and `0' for failure) and the quantum register holds the quantum state of the nodes corresponding to the outcome. In particular,
	\begin{equation}\label{eq-initial_state_tilde_0}
		\widetilde{\sigma}_e(0)\coloneqq(\mathcal{M}_e^0\circ\mathcal{S}_e)(\rho_e^S)
	\end{equation}
	is the (unnormalized) quantum state corresponding to failure, and
	\begin{equation}\label{eq-initial_state_tilde_1}
		\widetilde{\sigma}_e(1)\coloneqq (\mathcal{M}_e^1\circ\mathcal{S}_e)(\rho_e^{S})
	\end{equation}
	is the (unnormalized) quantum state corresponding to success. The quantum states conditioned on success and failure, respectively, are defined to be
	\begin{equation}\label{eq-initial_link_states}
		\rho_e^0\coloneqq\frac{\widetilde{\sigma}_e(1)}{\Tr[\widetilde{\sigma}_e(1)]}, \quad \tau_e^{\varnothing}\coloneqq\frac{\widetilde{\sigma}_e(0)}{\Tr[\widetilde{\sigma}_e(0)]}.
	\end{equation}
	The superscript `0' in $\rho_e^0$ indicates that the quantum memories of the nodes are in their initial state immediately after success of the heradling procedure; we expand on this below. We let
	\begin{equation}\label{eq-elem_link_success_prob}
		p_e\coloneqq \Tr[\widetilde{\sigma}_e(1)]
	\end{equation}
	denote the overall probability of success of the transmission from the source and of the heralding procedure, and we call it the \textit{transmission-heralding success probability} from now on. We also let
	\begin{align}
		\widehat{\sigma}_e&\coloneqq\ket{0}\bra{0}\otimes(\mathcal{M}_e^0\circ\mathcal{S}_e)(\rho_e^S)+\ket{1}\bra{1}\otimes(\mathcal{M}_e^1\circ\mathcal{S}_e)(\rho_e^S)\\
		&=\ket{0}\bra{0}\otimes\widetilde{\sigma}_e(0)+\ket{1}\bra{1}\otimes\widetilde{\sigma}_e(1)\label{eq-elem_link_initial_cq_state}
	\end{align}
	denote the quantum state of the elementary link corresponding to the edge $e\in E$ immediately after transmission and heralding.
	
	Now, as mentioned above, once the heralding procedure succeeds, the nodes store their quantum systems in their local quantum memory. Quantum memories have been made using trapped ions \cite{SDS09}, Rydberg atoms \cite{ZMHZ10,HHH+10}, atom-cavity systems \cite{LSSH01,RR15}, NV centers in diamond \cite{DMD+13,WCT+14,NTD+16,DHR17,RGR+18,RYG+18}, individual rare-earth ions in crystals \cite{KLW+18}, and superconducting processors \cite{KLS18}. The quantum memories are in general imperfect, which means that the quantum systems decohere over time. We describe this decoherence by a quantum channel $\mathcal{N}_{e,j}$ acting on each quantum system $A_e^{v_j}$ of the elementary link, $j\in\{1,2,\dotsc,k\}$. The decoherence channel is applied at every time step in which the quantum system is in memory. The overall quantum channel acting on all of the quantum systems in the elementary link is
	\begin{equation}\label{eq-elem_link_decoherence_channel}
		\mathcal{N}_e\coloneqq\mathcal{N}_{e,1}\otimes\mathcal{N}_{e,2}\otimes\dotsb\otimes\mathcal{N}_{e,k}.
	\end{equation}
	The quantum state of the elementary link after $m$ time steps in the memories is therefore given by
	\begin{equation}\label{eq-elem_link_state_in_mem}
		\rho_e(m)\coloneqq\mathcal{N}_e^{\circ m}(\rho_e^0),
	\end{equation}
	where $\mathcal{N}_e^{\circ m}=\mathcal{N}_e\circ\mathcal{N}_e\circ\dotsb\circ\mathcal{N}_e$ ($m$ times). For a particular target/desired quantum state of the elementary link, which we assume to be a pure state $\psi_e=\ket{\psi}\bra{\psi}_e$, we let
	\begin{equation}\label{eq-fidelity_decay}
		f_m^e(\rho_e^0;\psi_e)\coloneqq \bra{\psi}_e\rho_e(m)\ket{\psi}_e=\bra{\psi}_e\mathcal{N}_e^{\circ m}(\rho_e^0)\ket{\psi}_e
	\end{equation}
	denote the fidelity of the state $\rho_e(m)$ with respect to the target state $\psi_e$. For brevity, we suppress the dependence of $f_m^e$ on the target state $\psi_e$ whenever it is understood or is unimportant. We also suppress, for brevity, the dependence of $f_m^e$ on the decoherence channels of the quantum memories.
	
	Let us now consider specific examples of transmission channels that are relevant in practice. In what follows, we suppress the dependence of quantum states, channels, etc., on the edge $e\in E$ of the physical graph when it is not important, with the understanding that all such expressions hold for an arbitrary edge.

\subsubsection{Ground-based transmission}\label{sec-network_setup_ground_based}

	The most common medium for quantum information transmission for communication purposes is photons traveling either through either free space or fiber-optic cables. These transmission media are modeled well by a bosonic pure-loss/attenuation channel $\mathcal{L}^{\eta}$ \cite{Serafini_book}, where $\eta\in(0,1]$ is the transmittance of the medium, which for fiber-optic or free-space transmission has the form $\eta=\e^{-\frac{L}{L_0}}$ \cite{SveltoBook,KJK_book,KGMS88_book}, where $L$ is the transmission distance and $L_0$ is the attenuation length of the fiber.
	
	Before the $k$ quantum systems corresponding to the source state $\rho^S$ are transmitted through the pure-loss channel, they are each encoded into $d$ bosonic modes with $d\geq 2$. A simple encoding is the following:
	\begin{equation}\label{eq-d_rail_encoding}
		\ket{0_d}\coloneqq\ket{1,0,0,\dotsc,0},\quad \ket{1_d}\coloneqq\ket{0,1,0,\dotsc,0},\quad\dotsc,\quad\ket{(d-1)_d}\coloneqq\ket{0,0,0,\dotsc,1}.
	\end{equation}
	In other words, using $d$ bosonic modes, we form a qudit quantum system by defining the standard basis elements of the associated Hilbert space by the states corresponding to a single photon in each of the $d$ modes. We let
	\begin{equation}
		\ket{\text{vac}}\coloneqq\ket{0,0,\dotsc,0}
	\end{equation}
	denote the vacuum state of the $d$ modes, which is the state containing no photons.
	
	In the context of photonic state transmission, the source state $\rho^S$ is typically of the form $\ket{\psi^S}\bra{\psi^S}$, where
	\begin{equation}\label{eq-source_state}
		\ket{\psi^S}=\sqrt{\smash[b]{p_0^S}}\ket{\text{vac}}+\sqrt{\smash[b]{p_1^S}}\ket{\psi_1^S}+\sqrt{\smash[b]{p_2^S}}\ket{\psi_2^S}+\dotsb,
	\end{equation}
	where $\ket{\psi_n^S}$ is a state vector with $n$ photons in total for each of the $k$ parties and the numbers $p_n^S\geq 0$ are probabilities, so that $\sum_{n=0}^{\infty} p_n^S=1$. For example, in the case $k=2$ and $d=2$, the following source state is generated from a parametric down-conversion process (see, e.g., Refs.~\cite{KB00,KGD+16}):
	\begin{align}
		\ket{\psi^S}&=\sum_{n=0}^{\infty}\frac{\sqrt{n+1}r^n}{\e^q}\ket{\psi_n^S},\\
		\ket{\psi_n^S}&=\frac{1}{\sqrt{n+1}}\sum_{m=0}^n (-1)^m \ket{n-m,m;m,n-m},
	\end{align}
	where $r$ and $q$ are parameters characterizing the process. One often considers a truncated version of this state as an approximation, so that \cite{KGD+16}
	\begin{multline}\label{eq-SPDC_state_2}
		\ket{\psi^S}=\sqrt{p_0}\ket{0,0;0,0}+\sqrt{\frac{p_1}{2}}(\ket{1,0;0,1}+\ket{0,1;1,0})\\+\sqrt{\frac{p_2}{3}}(\ket{2,0;0,2}+\ket{1,1;1,1}+\ket{0,2;2,0}),
	\end{multline}
	where $p_0+p_1+p_2=1$.
	
	Typically, the encoding into bosonic modes is not perfect, which means that a source state of the form \eqref{eq-source_state} is not ideal and that the desired state is given by one of the state vectors $\ket{\psi_j^S}$, and the other terms arise due to the naturally imperfect nature of the source. For example, for the state in \eqref{eq-SPDC_state_2}, the desired bipartite state is the maximally entangled state
	\begin{equation}\label{eq-photonic_source_ideal}
		\ket{\Psi^+}=\frac{1}{\sqrt{2}}(\ket{1,0;0,1}+\ket{0,1;1,0}).
	\end{equation}
	
	Once the source state is prepared, each mode is sent through the pure-loss channel. Letting
	\begin{equation}
		\mathcal{L}^{\eta,(d)}\coloneqq \underbrace{\mathcal{L}^{\eta}\otimes\mathcal{L}^{\eta}\otimes\dotsb\otimes\mathcal{L}^{\eta}}_{d\text{ times}}
	\end{equation}
	denote the quantum channel that acts on the $d$ modes of each of the $k$ systems, the overall quantum channel through which the source state $\rho^S$ is sent is
	\begin{equation}
		\mathcal{L}^{\vec{\eta},(k;d)}\coloneqq \underbrace{\mathcal{L}^{\eta_1,(d)}\otimes\mathcal{L}^{\eta_2,(d)}\otimes\dotsb\otimes\mathcal{L}^{\eta_k,(d)}}_{k\text{ times}},
	\end{equation}
	where $\vec{\eta}=(\eta_1,\eta_2,\dotsc,\eta_k)$ and $\eta_j$ is the transmittance of the medium to the $j^{\text{th}}$ node in the edge. The quantum state shared by the $k$ nodes after transmission from the source is then $\rho^{S,\text{out}}=\mathcal{L}_{\vec{\eta}}^{(k;d)}(\rho^S)$.
	
	Now, it is well known (see, e.g., Ref.~\cite{BH14}) that the action of the bosonic pure-loss channel on any linear operator $\sigma_d$ encoded in $d$ modes according to the encoding in \eqref{eq-d_rail_encoding} is equivalent to the output of an erasure channel \cite{BDS97,GBP97}. In general, a $d$-dimensional quantum erasure channel $\mathcal{E}_p^{(d)}$, with $p\in[0,1]$, is defined as follows. Consider the vector space $\mathbb{C}^d$ with orthonormal basis elements $\{\ket{0},\ket{1},\dotsc,\ket{d-1}\}$, and the vector space $\mathbb{C}^{d+1}$ with orthonormal basis elements $\{\ket{0},\ket{1},\dotsc,\ket{d-1},\ket{d}\}$. Then, for all linear operators $X\in\Lin(\mathbb{C}^d)$, $\mathcal{E}_p^{(d)}(X)=pX+(1-p)\ket{d}\bra{d}$. Note that the output is an element of $\Lin(\mathbb{C}^{d+1})$. In particular, note that the vector $\ket{d}$ is orthogonal to the input vector space $\mathbb{C}^d$.
	
	\begin{lemma}\label{lem-pure_loss_erasure}
		Let $d\geq 2$. For all linear operators $X$ acting on a $d$-dimensional Hilbert space defined by the basis elements in \eqref{eq-d_rail_encoding}, we have that
		\begin{equation}\label{eq-pure_loss_erasure}
			\mathcal{L}^{\eta,(d)}(X)=(\mathcal{L}^{\eta}\otimes\dotsb\otimes\mathcal{L}^{\eta})(\sigma_d)=\eta X+(1-\eta)\Tr[X]\ket{\text{vac}}\bra{\text{vac}}.
		\end{equation}
	\end{lemma}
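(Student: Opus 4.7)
The plan is to reduce the statement to a direct computation on basis operators and exploit the tensor-product structure of $\mathcal{L}^{\eta,(d)}$. Specifically, I would expand an arbitrary linear operator $X$ on the single-photon subspace as $X = \sum_{j,k=0}^{d-1} c_{j,k}\ket{j_d}\bra{k_d}$, and use linearity of the channel to reduce the claim to computing $\mathcal{L}^{\eta,(d)}(\ket{j_d}\bra{k_d})$ for each $0 \leq j,k \leq d-1$.

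Next, using the definition $\mathcal{L}^{\eta,(d)} = \mathcal{L}^\eta \otimes \dotsb \otimes \mathcal{L}^\eta$ ($d$ factors) and the encoding \eqref{eq-d_rail_encoding}, the outer operator $\ket{j_d}\bra{k_d}$ factors across modes as a tensor product in which each mode contains one of the four single-mode operators $\ket{0}\bra{0}$, $\ket{0}\bra{1}$, $\ket{1}\bra{0}$, or $\ket{1}\bra{1}$. The action of the single-mode pure-loss channel on these four operators is standard and can be read off from its Kraus representation: $\mathcal{L}^\eta(\ket{0}\bra{0}) = \ket{0}\bra{0}$, $\mathcal{L}^\eta(\ket{1}\bra{1}) = \eta\ket{1}\bra{1} + (1-\eta)\ket{0}\bra{0}$, and $\mathcal{L}^\eta(\ket{1}\bra{0}) = \sqrt{\eta}\ket{1}\bra{0}$ (with the adjoint relation for $\ket{0}\bra{1}$). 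I would quote or sketch these from the Kraus operators $A_\ell$ of the pure-loss channel.

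Combining the factors across the $d$ modes then gives two cases. For $j \neq k$, exactly one mode contributes a factor of $\sqrt{\eta}$ (from $\ket{1}\bra{0}$ in mode $j$) and one contributes another factor of $\sqrt{\eta}$ (from $\ket{0}\bra{1}$ in mode $k$), while all other modes return $\ket{0}\bra{0}$, producing $\mathcal{L}^{\eta,(d)}(\ket{j_d}\bra{k_d}) = \eta \ket{j_d}\bra{k_d}$. For $j = k$, only mode $j$ contributes the nontrivial transformation, yielding $\mathcal{L}^{\eta,(d)}(\ket{j_d}\bra{j_d}) = \eta \ket{j_d}\bra{j_d} + (1-\eta)\ket{\text{vac}}\bra{\text{vac}}$, since all the other $d-1$ modes remain in vacuum.

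Finally, summing over $j,k$ with the coefficients $c_{j,k}$ and identifying $\sum_j c_{j,j} = \Tr[X]$ yields the desired identity $\mathcal{L}^{\eta,(d)}(X) = \eta X + (1-\eta)\Tr[X]\ket{\text{vac}}\bra{\text{vac}}$. There is no real obstacle here; the only thing requiring care is the bookkeeping for off-diagonal basis operators to confirm that the two modes carrying the excitation each contribute a $\sqrt{\eta}$ factor and that no vacuum term arises when $j\neq k$ (because $\bra{0}A_1^\dagger = 0$ forces the $\ell = 1$ Kraus term to annihilate the off-diagonal contribution).
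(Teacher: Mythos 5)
Your proposal is correct and follows essentially the same route as the paper's proof: expand $X$ in the basis $\{\ket{j_d}\bra{k_d}\}$, compute the single-mode pure-loss channel on the four elementary operators $\ket{i}\bra{j}$ ($i,j\in\{0,1\}$) from its Kraus representation, combine across the $d$ modes to get $\eta\ket{j_d}\bra{k_d}$ for $j\neq k$ and $\eta\ket{j_d}\bra{j_d}+(1-\eta)\ket{\text{vac}}\bra{\text{vac}}$ for $j=k$, and sum. Your side remark about the $\ell=1$ Kraus term annihilating the off-diagonal contribution is also correct.
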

	
	\begin{proof}
		To start, the bosonic pure-loss channel has the following Kraus representation \cite{FH08,SSS11}:
		\begin{equation}\label{eq-pure_loss_Kraus}
			\mathcal{L}^{\eta}(\rho)=\sum_{\ell=0}^{\infty} \frac{(1-\eta)^{\ell}}{\ell!}\sqrt{\eta}^{a^\dagger a} a^k\rho a^{k\dagger}\sqrt{\eta}^{a^\dagger a},
		\end{equation}
		where $a$ and $a^\dagger$ are the annihilation and creation operators of the bosonic mode, which are defined as $a\ket{n}=\sqrt{n}\ket{n-1}$ for all $n\geq 1$ (with $a\ket{0}=0$), and $a^\dagger\ket{n}=\sqrt{n+1}\ket{n+1}$ for all $n\geq 0$.
		
		Now, every linear operator $\sigma_d$ acting on a $d$-dimensional space that is encoded into $d$ bosonic modes as in \eqref{eq-d_rail_encoding} can be written as
		\begin{equation}
			X=\sum_{\ell,\ell'=0}^{d-1}\alpha_{\ell,\ell'}\ket{\ell_d}\bra{\ell'_d},
		\end{equation}
		for $\alpha_{\ell,\ell'}\in\mathbb{C}$. Using \eqref{eq-pure_loss_Kraus}, it is straightforward to show that
		\begin{align}
			\mathcal{L}^{\eta}(\ket{0}\bra{0})&=\ket{0}\bra{0},\\
			\mathcal{L}^{\eta}(\ket{0}\bra{1})&=\sqrt{\eta}\ket{0}\bra{1},\\
			\mathcal{L}^{\eta}(\ket{1}\bra{0})&=\sqrt{\eta}\ket{1}\bra{0},\\
			\mathcal{L}^{\eta}(\ket{1}\bra{1})&=(1-\eta)\ket{0}\bra{0}+\eta\ket{1}\bra{1}.
		\end{align}
		Using this, we find that
		\begin{align}
			(\mathcal{L}^{\eta}\otimes\dotsb\otimes\mathcal{L}^{\eta})(\ket{\ell_d}\bra{\ell'_d})=\left\{\begin{array}{l l} \eta\ket{\ell_d}\bra{\ell_d}+(1-\eta)\ket{\text{vac}}\bra{\text{vac}} & \text{if } \ell=\ell',\\ \eta\ket{\ell_d}\bra{\ell'_d} & \text{if }\ell\neq\ell'. \end{array}\right.
		\end{align}
		Therefore,
		\begin{align}
			(\mathcal{L}^{\eta}\otimes\dotsb\otimes\mathcal{L}^{\eta})(X)&=\eta\sum_{\ell,\ell'=0}^{d-1}\alpha_{\ell,\ell'}\ket{\ell_d}\bra{\ell'_d}+(1-\eta)\left(\sum_{\ell=0}^{d-1}\alpha_{\ell,\ell}\right)\ket{\text{vac}}\bra{\text{vac}}\\
			&=\eta X+(1-\eta)\Tr[X]\ket{\text{vac}}\bra{\text{vac}},
		\end{align}
		as required.
	\end{proof}
	
	After transmission from the source to the nodes, the heralding procedure typically involves doing measurements at the nodes to check whether all of the photons arrived. In the ideal case the quantum instrument $\{\mathcal{M}^0,\mathcal{M}^1\}$ for the heralding procedure corresponds simply to a measurement in the single-photon subspace defined by \eqref{eq-d_rail_encoding}. To be specific, let
	\begin{align}
		\Lambda^1&\coloneqq\Pi^{(d)}\coloneqq\ket{0_d}\bra{0_d}+\ket{1_d}\bra{1_d}+\dotsb+\ket{(d-1)_d}\bra{(d-1)_d},\label{eq-photonic_heralding_example_1}\\
		\Lambda^0&\coloneqq \mathbbm{1}_{\mathcal{H}_d}-\Lambda^0,\label{eq-photonic_heralding_example_2}
	\end{align}
	where $\Pi^{(d)}$ is the projection onto the $d$-dimensional single-photon subspace defined by \eqref{eq-d_rail_encoding}, and $\mathbbm{1}_{\mathcal{H}_d}$ is the identity operator of the full Hilbert space $\mathcal{H}_d$ of $d$ bosonic modes. Then, letting $\vec{x}\in\{0,1\}^k$ and defining
	\begin{equation}\label{eq-photonic_heralding_example_3}
		\Lambda^{\vec{x}}\coloneqq\Lambda^{x_1}\otimes\Lambda^{x_2}\otimes\dotsb\otimes\Lambda^{x_k},
	\end{equation}
	the maps $\mathcal{M}^0$ and $\mathcal{M}^1$ have the form
	\begin{align}
		\mathcal{M}^1(\cdot)&=\Lambda^{\vec{1}}(\cdot)\Lambda^{\vec{1}},\label{eq-photonic_heralding_1}\\
		\mathcal{M}^0(\cdot)&=\sum_{\substack{\vec{x}\in\{0,1\}^k\\\vec{x}\neq\vec{1}}}\Lambda^{\vec{x}}(\cdot)\Lambda^{\vec{x}}.\label{eq-photonic_heralding_0}
	\end{align}
	These maps correspond to perfect photon-number-resolving detectors. However, the detectors are typically noisy due to dark counts and other imperfections (see, e.g., Refs.~\cite{KGD+16}), so that in practice the maps $\mathcal{M}^0$ and $\mathcal{M}^1$ will not have the ideal forms presented in \eqref{eq-photonic_heralding_1} and \eqref{eq-photonic_heralding_0}.
	
	\begin{figure}
		\centering
		\includegraphics[width=0.95\textwidth]{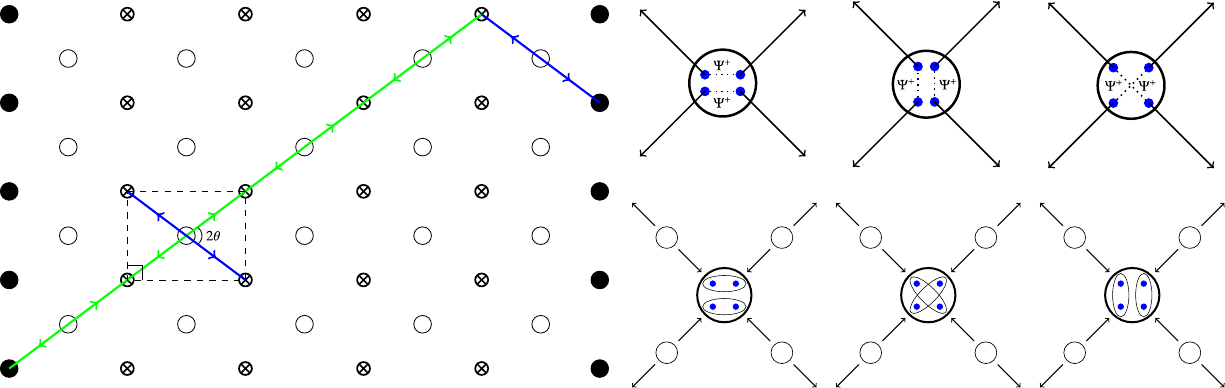}
		\caption{The ground-based quantum network architecture considered in Ref.~\cite{DKD18}. The solid black dots are the senders and the receivers, while the circular dots represent the source stations and the symbols ``\textbf{$\otimes$}'' represent measurement stations. As shown in the top three panels on the right, the source stations can send Bell states in one of three configurations. Similarly, as shown in the bottom three panels on the right, the measurement stations can perform a Bell-basis measurement in one of three configurations, for the purpose of entanglement swapping and routing.}\label{fig-grid_architecture}
	\end{figure}
	
	Also, in practice, for the photonic state transmission that we are considering here, the combination of heralding and storage of the qubit in quantum memory occurs as follows. Each node has locally an optical Bell measurement device. First, a memory-photon entangled state is generated, and then a Bell measurement is performed on the photon from the memory-photon pair and the incoming photon from the source. This strategy allows for direct knowledge about the arrival of the photon, which is then communicated to the neighboring node (see, e.g., Ref. \cite{DHR17}). At the same time, conditioned on the success of the Bell measurement, the state of the photonic qubit is transferred to the memory qubit. Linear-optical Bell measurements are limited to a success probability of 50\% \cite{LCS99,CL01,VY99}, although higher success probabilities are in principle possible using nonlinear elements or by increasing the number of photons \cite{KKS01,LSSH01,KKS02,Grice11,WHF+16}.
	
	If the source produces the ideal quantum state, such as the state in \eqref{eq-photonic_source_ideal}, and if the heralding procedure is also ideal, then using \eqref{eq-pure_loss_erasure} we obtain
	\begin{align}
		\widetilde{\sigma}(1)&=\eta_1\eta_2\Psi^+,\\
		\widetilde{\sigma}(0)&=\eta_1(1-\eta_2)\frac{\Pi^{(2)}}{2}\otimes\ket{\text{vac}}\bra{\text{vac}}+(1-\eta_1)\eta_2\ket{\text{vac}}\bra{\text{vac}}\otimes\frac{\Pi^{(2)}}{2}\nonumber\\
		&\qquad\qquad\qquad\qquad\qquad+(1-\eta_1)(1-\eta_2)\ket{\text{vac}}\bra{\text{vac}}\otimes\ket{\text{vac}}\bra{\text{vac}},
	\end{align}
	which means that the transmission-heralding success probability as defined in \eqref{eq-elem_link_success_prob} is simply $p=\Tr[\widetilde{\sigma}(1)]=\eta_1\eta_2$.
	
	In Figure~\ref{fig-grid_architecture}, we show the ground-based network architecture introduced in Ref.~\cite{DKD18}, in which the senders and receivers are at two ends of a network of source stations and measurement stations arranged in a grid-like fashion. The measurement stations perform a Bell-basis measurement as part of the entanglement swapping protocol presented in Example~\ref{ex-ent_swap}, and the measurements can be done in different configurations in order to allow for different paths to be used.
	
	\begin{remark}[Multiplexing]\label{rem-multiplexing}
		In practice, in order to increase the transmission-heralding success probability, multiplexing strategies are used. The term ``multiplexing'' here refers to the use of a single transmission channel to send multiple signals simultaneously, with the signals being encoded into distinct (i.e., orthogonal) frequency modes; see, e,g., Ref.~\cite{GKF+15}. If $M\geq 1$ distinct frequency modes are used, then the source state being transmitted is $(\rho^S)^{\otimes M}$. If $p$ denotes the probability that any single one of the signals is received and heralded successfully, then the probability that at least one of the $M$ signals is received and heralded successfully is $1-(1-p)^M$.
	\end{remark}

\subsubsection{Transmission from satellites}\label{sec-sat_architecture}

	Let us now consider the model of elementary link generation proposed in Ref.~\cite{KBD+19}, in which the entanglement sources are placed on satellites orbiting the earth.
	
	Satellites are one of the best methods for achieving global-scale quantum communication with current and near-term resources \cite{AJP+03,JH13,BAL17,Sim17,Cubesat2017,Nanobob2018}. Several proposals for satellite-based quantum networks have been made that use satellite-to-ground transmission, ground-to-satellite transmission, or both \cite{AJP+03,Bon09,ESH+12,BMH+13,BBM+15,TCT+16,Bedington2016nanosatellite,Cubesat2017,Nanobob2018,HMG18,HMG19,VLBKL19}. Recent experiments \cite{TCT+16,LYL+2017,YCL17,LCL+17,TCF+17,Ren17satteleport,LCH+18,CCD+18} (see also Ref.~\cite{LV19} for a review) between a handful of nodes has opened up the possibility of building a global-scale quantum internet using satellites.
	
	\begin{figure}
		\centering
		\includegraphics[scale=0.4]{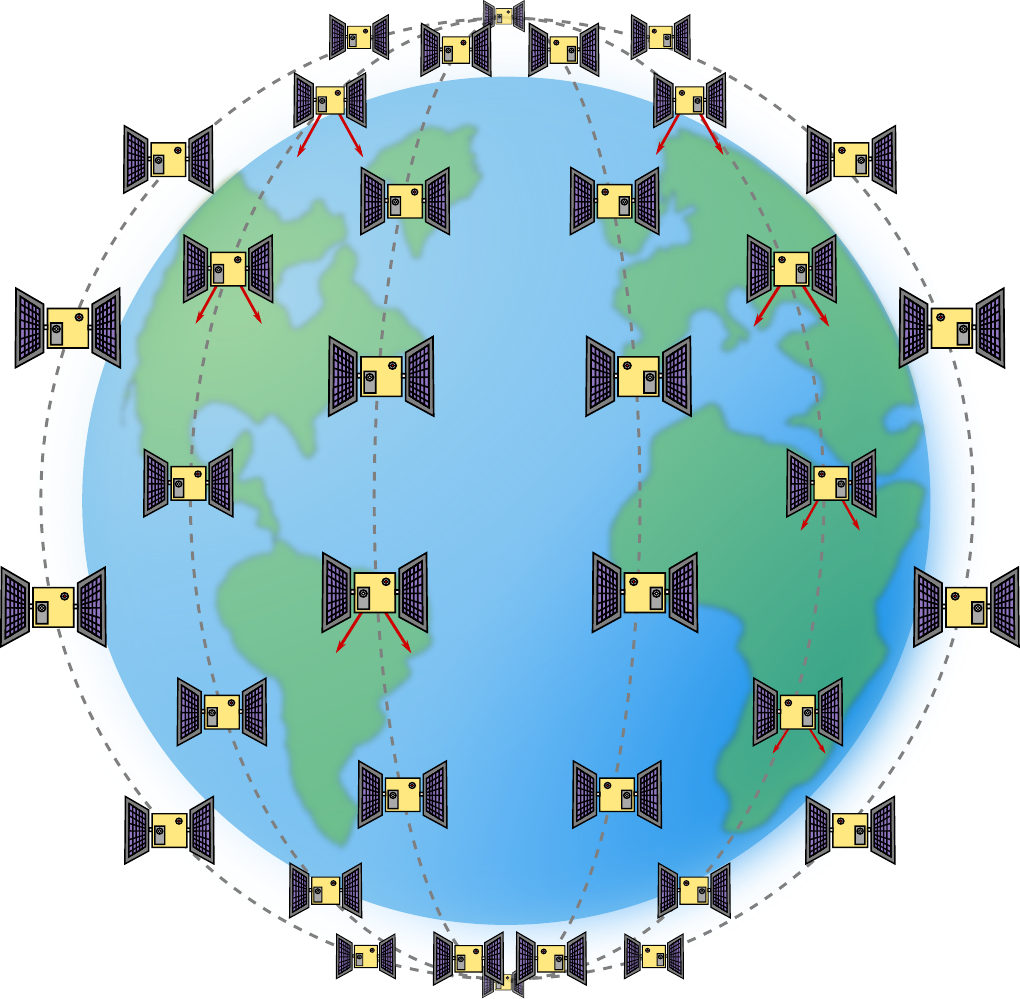}
		\caption{Depiction of the satellite constellation proposed in Ref.~\cite{KBD+19}, which consists of $N_R$ equally-spaced rings of satellites in polar orbits and $N_S$ satellites per ring. The satellites contain a source of bipartite entanglement for establishing elementary links between ground stations.}\label{fig-sat_architecture}
	\end{figure}

	The global-scale orbiting constellation of satellites proposed in Ref.~\cite{KBD+19} is illustrated in Figure~\ref{fig-sat_architecture}. There are $N_R$ equally spaced rings of satellites in polar orbits and $N_S$ equally spaced satellites in each ring, so that there are $N_RN_S$ satellites in total, all of which are at the same altitude. This type of satellite constellation falls into the general class of Walker star constellations \cite{Walker70}, and it is similar to the Iridium communications-satellite constellation \cite{Leopold92,PRFT99}. Prior works have examined various other types of satellite constellations \cite{Luders61,Walker70,LA98,LFP98}, and the recent Starlink constellation \cite{Handley18} is being used to provide a global satellite-based (classical) internet service.
	
	As mentioned, the satellites act as source stations that transmit pairs of entangled photons to ground stations for the purpose of establishing elementary links. Note that the satellites could alternatively be used as quantum repeaters \cite{LKB20,GSH+20}, which would require ground-to-satellite transmission \cite{BMH+13}. The photon sources on the satellites produce polarization-entangled photon pairs. State-of-the-art sources of entangled photons are capable of producing polarization-entangled photons on a chip with a fidelity up to 0.97~\cite{HKK+13,matsuda2012monolithically,Kang:16,KRR+17}.
	
	When modeling photon transmission from satellites to ground stations, we must take into account background photons. Here, we analyze the scenario in which a source generates an entangled photon pair and distributes the individual photons to two parties, Alice ($A$) and Bob ($B$). We allow the distributed photons to mix with background photons from an uncorrelated thermal source. Also, as before, we use the bosonic encoding defined in \eqref{eq-d_rail_encoding}, but we stick to $d=2$, i.e., qubit source states and thus bipartite elementary links. In this scenario, it is common for the two modes to represent the polarization degrees of freedom of the photons, so that
	\begin{equation}
		\ket{H}\equiv\ket{0_2}=\ket{1,0},\quad \ket{V}\equiv\ket{1_2}=\ket{0,1}
	\end{equation}
	represent the state of one horizontally and vertically polarized photon, respectively.
	
	Now, consider a tensor product of thermal states for the horizontal and vertical polarization modes:
	\begin{equation}
    	\Theta^{\overline{n}_H}\otimes\Theta^{\overline{n}_V}=\Bigg(\sum_{n=0}^\infty\left(\frac{\overline{n}_H^n}{(\overline{n}_H+1)^{n+1}}\right) \dyad{n}\Bigg)\otimes\left(\sum_{n=0}^\infty\left(\frac{\overline{n}_V^n}{(\overline{n}_V+1)^{n+1}}\right)\dyad{n}\right),
	\end{equation}
	where $\overline{n}_k$ is the average number of photons in the thermal state for the polarization mode $k$. We assume this state comes from an incoherent source with no polarization preference (e.g., the sun), so that $\overline{n}_H=\overline{n}_V=\overline{n}/2$ for some $\overline{n}\geq 0$. Furthermore, we assume some (non-polarization) filtering procedure, which reduces the number of background thermal photons, such that $\overline{n}\ll 1$. We then rewrite the above state to first order in the small parameter $\overline{n}$:
	\begin{align}
    	\Theta^{\frac{\overline{n}}{2}}\otimes\Theta^{\frac{\overline{n}}{2}}&\approx\left(\left(1-\frac{\overline{n}}{2}\right)\dyad{0} +\frac{\overline{n}}{2}\dyad{1}\right)\otimes\left(\left(1-\frac{\overline{n}}{2}\right)\dyad{0} +\frac{\overline{n}}{2}\dyad{1}\right) \\
    	&\approx(1-\overline{n})\dyad{\text{vac}}+\frac{\overline{n}}{2}\left(\dyad{H}+\dyad{V}\right).\label{eq-therm_expand}
	\end{align}
	We thus define our approximate thermal background state as
	\begin{equation}\label{eq-approx_th_state}
		\widetilde{\Theta}^{\overline{n}}\coloneqq(1-\overline{n})\dyad{\text{vac}}+\frac{\overline{n}}{2}\left(\dyad{H}+\dyad{V}\right).
	\end{equation}
	The transmission channel from the source to the ground is then approximately
	\begin{equation}\label{eq-noisy_transmission_channel}
		\mathcal{L}^{\eta_{\text{sg}},\overline{n}}(\rho_{A_1A_2})\coloneqq\Tr_{E_1E_2}\left[\left(U_{A_1E_1}^{\eta_{\text{sg}}}\otimes U^{\eta_{\text{sg}}}_{A_2E_2}\right)\left(\rho_{A_1A_2}\otimes\widetilde{\Theta}_{E_1E_2}^{\overline{n}}\right)\left(U^{\eta_{\text{sg}}}_{A_1E_1}\otimes U^{\eta_{\text{sg}}}_{A_2E_2}\right)^\dagger\right],
	\end{equation}
	where $U^{\eta_{\text{sg}}}$ is the beamsplitter unitary (see, e.g., Ref.~\cite{Serafini_book}), and $A_1$ and $A_2$ refer to the horizontal and vertical polarization modes, respectively, of the dual-rail quantum system being transmitted; similarly for $E_1$ and $E_2$. Note that for $\overline{n}=0$, the transformation in \eqref{eq-noisy_transmission_channel} reduces to the one in \eqref{eq-pure_loss_erasure} with $d=2$. 
	
	The transmittance $\eta_{\text{sg}}$ generally depends on atmospheric conditions (such as turbulence and weather conditions) and on orbital parameters (such as altitude and zenith angle) \cite{Vogel2017weather,Vogel2019atmlinks,LKB+19}. In general, we can decompose $\eta_{\text{sg}}$ as 
	\begin{equation}
		\eta_{\text{sg}}=\eta_{\text{fs}}\eta_{\text{atm}}\label{eq-eta_sg}
	\end{equation}
	where $\eta_{\text{fs}}$ is the free-space transmittance and $\eta_{\text{atm}}$ is the atmospheric transmittance. Free-space loss occurs due to diffraction (i.e., beam broadening) over the channel and due to the use of finite-sized apertures at the receiving end. These effects cause $\eta_{\text{fs}}$ to scale as the inverse squared distance in the far-field regime. Atmospheric loss occurs due to absorption and scattering in the atmosphere and scales exponentially with distance as a result of the Beer-Lambert law \cite{BH08,Andrews2005randomedia,KJK_book}. However, since atmospheric absorption is relevant only in a layer of thickness 10--20~km above the earth's surface \cite{KJK_book}, free-space diffraction is the main source of loss in satellite-to-ground transmission. In order to characterize the free-space and atmospheric transmittances with simple analytic expressions, we ignore turbulence-induced effects in the lower atmosphere, such as beam profile distortion, beam broadening (prominent for ground-to-satellite transmission \cite{KJK_book,BMH+13}), and beam wandering (see, e.g., Ref. \cite{Vogel2019atmlinks}). Note that turbulence effects can be corrected using classical adaptive optics \cite{KJK_book}. We also ignore the inhomogeneous density profile of the atmosphere, which can lead to path elongation effects at large zenith angles. A comprehensive analysis of loss can be found in Refs.~\cite{Andrews2005randomedia,Vogel2019atmlinks}.
	
	Consider the spatial mode for an optical beam traveling a distance $L$ between the sender and receiver, with a circular receiving aperture of radius $r$. Then, the free-space transmittance $\eta_{\text{fs}}$ is given by \cite{SveltoBook}
	\begin{equation}\label{eq-fs_transmittance}
		\eta_{\text{fs}}(L)=1-\exp\left(-\frac{2r^2}{w(L)^2}\right),  
	\end{equation}  
	where
	\begin{equation}
		w(L)\coloneqq w_{0}\sqrt{1+\left(\frac{L}{L_{R}}\right)^2}
	\end{equation}
	is the beam waist at a distance $L$ from the focal region ($L=0$), $L_{R}\coloneqq\pi w_{0}^2\lambda^{-1}$ is the Rayleigh range, $\lambda$ is the wavelength of the optical mode, and $w_0$ is the initial beam-waist radius.

	In order to characterize $\eta_{\text{atm}}$, we model the atmosphere as a homogeneous absorptive layer of finite thickness. Uniformity of the atmospheric layer then implies uniform absorption (at a given wavelength), such that $\eta_{\text{atm}}$ depends only on the optical path traversed through the atmosphere. Under these assumptions, and using the Beer-Lambert law \cite{BH08}, for a zenith angle of $\zeta$ we have that
	\begin{equation}\label{eq-atmospheric_transmittance_zenith}
		\eta_{\text{atm}}(L,h)=\left\{\begin{array}{l l} (\eta_{\text{atm}}^{\text{zen}})^{\sec\zeta} & \text{if } -\frac{\pi}{2}<\zeta<\frac{\pi}{2},\\[0.2cm] 0 & \text{if } |\zeta|\geq\frac{\pi}{2}, \end{array}\right.
	\end{equation}
	with $\eta_{\text{atm}}^{\text{zen}}$ the transmittance at zenith ($\zeta=0$). For $|\zeta|>\frac{\pi}{2}$, we set $\eta_{\text{atm}}=0$, because the satellite is over the horizon and thus out of sight. The zenith angle $\zeta$ is given by
	\begin{equation}\label{eq-sec_zen}
		\cos\zeta=\frac{h}{L}-\frac{1}{2}\frac{L^2-h^2}{R_{\oplus}L}
	\end{equation}
	for a circular orbit of altitude $h$, with $R_{\oplus}\approx 6378$~km being the earth's radius.
	
	Note that the model of atmospheric transmittance given by Eq.~\eqref{eq-atmospheric_transmittance_zenith} and Eq.~\eqref{eq-sec_zen} is quite accurate for small zenith angles \cite{KJK_book}. However, for satellite-to-ground transmission at or near the horizon (i.e., for $\zeta=\pm\pi/2$), more exact methods relying on the standard atmospheric model must be used \cite{Vogel2019atmlinks}. In practice, it makes sense to set $\eta_{\text{atm}}=0$ at large zenith angles, because the loss will typically be too high for the link to be practically useful.
	
	For a source state $\rho_{AB}^S$, with $A\equiv A_1A_2$ and $B\equiv B_1B_2$, the quantum state shared by Alice and Bob after transmission of the state $\rho_{AB}^S$ from the satellite to the ground stations is
	\begin{equation}\label{eq-transmission_noisy_output}
		\rho_{AB}^{S,\text{out}}=\left(\mathcal{L}_A^{\eta_{\text{sg}}^{(1)},\overline{n}_1}\otimes\mathcal{L}_B^{\eta_{\text{sg}}^{(2)},\overline{n}_2}\right)(\rho_{AB}^S),
	\end{equation}	
	where $\eta_{\text{sg}}^{(1)}$ and $\eta_{\text{sg}}^{(2)}$ are the transmittances to the ground stations and $\overline{n}_1$ and $\overline{n}_2$ are the corresponding thermal background noise parameters. In Chapter~\ref{chap-sats}, we look at a specific example of a source state $\rho_{AB}^S$, and thus provide an explicit form for the state $\rho_{AB}^{S,\text{out}}$. We also consider the heralding procedure defined by \eqref{eq-photonic_heralding_example_1}--\eqref{eq-photonic_heralding_0}, and thus provide explicit forms for the states $\rho^0$ and $\tau^{\varnothing}$ corresponding to success and failure, respectively, of the heralding procedure.

\subsection{Random subgraphs}\label{sec-random_subgraphs}

	Having established that elementary link generation is in practice probabilistic, we see that, given a graph $G=(V,E)$ corresponding to the physical (elementary) links of a quantum network, the subgraphs of active elementary links are in general random, and thus vary from time to time. In particular, then, the graph $G_{\text{in}}$ in \eqref{eq-ent_dist_protocol_graphs_trans} that is the input to an entanglement distribution protocol is in general random. For every $e\in E$, we therefore define the binary random variable $X_e$, called the \textit{elementary link status random variable}, such that
	\begin{equation}
		p_e\coloneqq\Pr[X_e=1]\quad\Longrightarrow\quad\Pr[X_e=0]=1-p_e,
	\end{equation}
	where $p_e\in[0,1]$ is the transmission-heralding success probability for the elementary link corresponding to the edge $e\in E$, as defined in \eqref{eq-elem_link_success_prob}. If $G=(V,E,c)$ is a multigraph and there are $ c(e)$ parallel edges connecting the vertices in $e$ (recall the discussion in Section~\ref{sec-graph_theory}), then we use $X_{e^j}$ to denote the random variable for the $j^{\text{th}}$ parallel edge, and we define
	\begin{equation}
		N_e\coloneqq\sum_{j=1}^{ c(e)} X_{e^j}
	\end{equation}
	to be the random variable for the total number of parallel edges connecting the vertices in $e$. We let $p_{e^j}\coloneqq\Pr[X_{e^j}=1]$ be the probability that the $j^{\text{th}}$ parallel edge is active.
	
	We define all of the random variables $X_{e^j}$, $e\in E$, $1\leq j\leq c(e)$, to be mutually independent, so that
	\begin{equation}
		\Pr[X_{e^j}=\alpha,\,X_{{e'}^{j'}}=\beta]=\Pr[X_{e^j}=\alpha]\cdot\Pr[X_{{e'}^{j'}}=\beta]
	\end{equation}
	for all distinct combinations of $(e,j)$ and $(e',j')$ and for all $\alpha,\beta\in\{0,1\}$.
	
	Let $\vec{p}\coloneqq (p_{e^j}:e\in E,\,1\leq j\leq c(e))$, and let $\Xi_G=\{0,1\}^{|G|}$ denote the set of all \textit{configurations} of the graph $G$, where we recall that $|G|$ denotes the size of the graph, which is by definition the total number of edges in $G$. By definition, every $\vec{x}=(x^{e^j}:e\in E,\,1\leq j\leq c(e))\in\Xi_G$, $x^{e^j}\in\{0,1\}$, tells us whether the elementary link corresponding to the edge $e$ is active ($x^{e^j}=1$) or inactive ($x^{e^j}=0$). Then, we can think of the random variables $X_{e^j}$ as functions $X_{e^j}:\Xi_G\to\{0,1\}$ such that
	\begin{equation}
		X_{e^j}(\vec{x})=x^{e^j}.
	\end{equation}
	
	We define $G_{\vec{p}}$ to be a random variable such that, for all $\vec{x}\in\Xi_G$, $G_{\vec{p}}(\vec{x})=(V(\vec{x}),E(\vec{x}),c)$ is the subgraph of $G$ consisting only of the edges corresponding to active elementary links, meaning that
	\begin{equation}
		\{e^j:x^{e^j}=1,\,e\in E,\,1\leq j\leq c(e)\}
	\end{equation}
	is the set of edges and $V(\vec{x})\subseteq V$ is the corresponding set of vertices. We can think of $G_{\vec{p}}$ as being a ``random graph'', in the sense that every random sample from $\Xi_G$ gives us a subgraph of $G$. Then, because all of the random variables $X_{e^j}$ are defined to be mutually independent, the probability of a particular subgraph $G_{\vec{p}}(\vec{x})$, with $\vec{x}\in\Xi_G$, is simply the product of the probabilities of the individual elementary links, i.e.,
	\begin{align}
		\Pr[G_{\vec{p}}(\vec{x})]&=\Pr[X_{e^j}=x^{e^j}:e\in E,\,1\leq j\leq c(e)]\\
		&=\prod_{e\in E}\prod_{j=1}^{ c(e)}\Pr[X_{e^j}=x^{e^j}]\\
		&=\prod_{e\in E}\prod_{j=1}^{ c(e)}p_{e^j}^{x^{e^j}}(1-p_{e^j})^{1-x^{e^j}}.\label{eq-graph_config_prob}
	\end{align}
	
	For every subset $E'\subseteq E$, we define the random variable
	\begin{equation}
		X_{E'}\coloneqq \prod_{e\in E'}\prod_{j=1}^{ c(e)}X_{e^j},
	\end{equation}
	which is simply the product of the status random variables for the elements in $E'$, so that
	\begin{equation}
		\Pr[X_{E'}=1]=\mathbb{E}[X_{E'}]=\prod_{e\in E'}\prod_{j=1}^{ c(e)}p_{e^j}
	\end{equation}
	is the probability that all of the elementary links specified by $E'$ are active. Observe then that
	\begin{equation}
		\Pr[G_{\vec{p}}(\vec{x})]=\Pr[X_{E(\vec{x})}=1]\cdot\Pr[X_{E\setminus E(\vec{x})}=0].
	\end{equation}
	
	Let
	\begin{equation}\label{eq-num_active_elem_links}
		L(G_{\vec{p}})\coloneqq\sum_{e\in E}\sum_{j=1}^{c(e)}X_{e^j}
	\end{equation}
	be the random variable for the number of active elementary links in $G$, so that for all configurations $\vec{x}\in\Xi_G$ of the graph,
	\begin{equation}\label{eq-graph_num_active_edges}
		L(G_{\vec{p}})(\vec{x})=\sum_{e\in E}\sum_{j=1}^{ c(e)}X_{e^j}(\vec{x})=\sum_{e\in E}\sum_{j=1}^{ c(e)}x^{e^j}=|G_{\vec{p}}(\vec{x})|
	\end{equation}
	is the size of the subgraph $G_{\vec{p}}(\vec{x})$. Note that
	\begin{equation}
		\mathbb{E}[L(G_{\vec{p}})]=\sum_{e\in E}\sum_{j=1}^{ c(e)}p_{e^j}.
	\end{equation}
	
	Recall the definition of the size of the largest connected component of the graph $G$ from \eqref{eq-largest_connected_component}. Now, for the random variable $G_{\vec{p}}$, we define $S^{\max}(G_{\vec{p}})$ to be the random variable for the largest connected component of $G_{\vec{p}}$, so that 
	\begin{equation}\label{eq-random_subgraph_largest_cluster_RV}
		S^{\max}(G_{\vec{p}})(\vec{x})\coloneqq S^{\max}(G_{\vec{p}}(\vec{x}))=\max\{|C|:C\in\sfrac{G_{\vec{p}}(\vec{x})}{\leftrightarrow}\}
	\end{equation}
	for all $\vec{x}\in\Xi_G$. Recalling that the connected components of a graph partition the graph, we immediately have the following simple fact.
	
	\begin{lemma}\label{lem-largest_cluster_UB}
		For all graphs $G=(V,E,c)$ and all vectors $\vec{p}=(p_{e^j}:e\in E,\,1\leq j\leq c(e))$ of probabilities, the size of the largest connected component of $G_{\vec{p}}$ is bounded from above by the size of $G_{\vec{p}}$, i.e., 
		\begin{equation}
			S^{\max}(G_{\vec{p}})\leq L(G_{\vec{p}}).
		\end{equation}
	\end{lemma}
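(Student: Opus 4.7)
The plan is to establish the inequality pointwise on the configuration space $\Xi_G$, from which the stated random-variable inequality follows immediately by the definition of the random variables $S^{\max}(G_{\vec{p}})$ and $L(G_{\vec{p}})$ in \eqref{eq-random_subgraph_largest_cluster_RV} and \eqref{eq-num_active_elem_links}, respectively. In other words, I would fix an arbitrary configuration $\vec{x}\in\Xi_G$ and argue that
\begin{equation}
    S^{\max}(G_{\vec{p}}(\vec{x}))\leq L(G_{\vec{p}})(\vec{x})=|G_{\vec{p}}(\vec{x})|,
\end{equation}
using only the fact that $L(G_{\vec{p}})(\vec{x})$ equals the size $|G_{\vec{p}}(\vec{x})|$ by \eqref{eq-graph_num_active_edges}.

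To prove this pointwise claim, I would unpack the definition of $S^{\max}(G_{\vec{p}}(\vec{x}))$ from \eqref{eq-largest_connected_component} and \eqref{eq-random_subgraph_largest_cluster_RV}: it is the maximum of $|C|$ over $C\in \sfrac{G_{\vec{p}}(\vec{x})}{\leftrightarrow}$, i.e., the maximum size over all connected components. Since each connected component $C=([v],E_{[v]})$ is, by construction, a subgraph of $G_{\vec{p}}(\vec{x})$ with $E_{[v]}\subseteq E(\vec{x})$, its size $|C|=|E_{[v]}|$ is bounded above by $|E(\vec{x})|=|G_{\vec{p}}(\vec{x})|$. Taking the maximum over all such components preserves the inequality, which gives the desired bound.

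Finally, since the inequality holds for every configuration $\vec{x}\in\Xi_G$, it holds as an inequality between the random variables $S^{\max}(G_{\vec{p}})$ and $L(G_{\vec{p}})$. There is no real obstacle here: the only subtle point to state carefully is that the partitioning property of the equivalence relation ``$\leftrightarrow$'' (noted in Section~\ref{sec-graph_theory}) guarantees $\sum_{C\in\sfrac{G_{\vec{p}}(\vec{x})}{\leftrightarrow}} |C| = |G_{\vec{p}}(\vec{x})|$, of which the maximum-versus-sum inequality is an immediate corollary. The proof is essentially a one-line observation once the definitions are written out.
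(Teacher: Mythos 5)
Your proof is correct and takes essentially the same route as the paper: fix a configuration $\vec{x}$, observe that each connected component is a subgraph of $G_{\vec{p}}(\vec{x})$ (equivalently, that the components partition the graph so that the maximum of the component sizes is at most their sum, which equals $|G_{\vec{p}}(\vec{x})|=L(G_{\vec{p}})(\vec{x})$), and then lift the pointwise inequality to the random variables. No gaps.
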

	
	\begin{proof}
		Let $\vec{x}\in\Xi_G$ be an arbitrary configuration of $G$. Then,
		\begin{equation}
			\max\{|C|:C\in\sfrac{G_{\vec{p}}(\vec{x})}{\leftrightarrow}\}\leq\sum_{C\in\sfrac{G_{\vec{p}}(\vec{x})}{\leftrightarrow}}|C|,
		\end{equation}
		because the size of a connected component is always a natural number. Then, because the connected components partition the graph, the sum of the sizes of the connected components is equal simply to the size of the graph, i.e.,
		\begin{equation}
			\sum_{C\in\sfrac{G_{\vec{p}}(\vec{x})}{\leftrightarrow}} |C|=|G_{\vec{p}}(\vec{x})|=L(G_{\vec{p}})(\vec{x}),
		\end{equation}
		where the last equality holds due to \eqref{eq-graph_num_active_edges}. We thus have that $S^{\max}(G_{\vec{p}})(\vec{x})\leq L(G_{\vec{p}})(\vec{x})$ for all $\vec{x}\in\Xi_G$, which means that $S^{\max}(G_{\vec{p}})\leq L(G_{\vec{p}})$, as required.
	\end{proof}
	
	\begin{remark}
		Although related, the notion of ``random graph'' that we consider here is different from the usual notion of a random graph (see, e.g., Ref.~\cite[Chapter~13]{BM08_book}), in which only the number of vertices (and sometimes the number of edges) is fixed, but the vertices can be connected arbitrarily. Here, by starting with the underlying physical graph $G$, the topology is fixed, so that vertices can only be connected based on the topology of $G$.~\defqed
	\end{remark}
	

\subsection{Quantum states and channels revisited}\label{sec-network_q_state_practical}

	Given the practical elements of elementary link generation that we have described so far, let us now revisit the development in Section~\ref{sec-network_q_state} on quantum states and channels in a quantum network.
	
	Let $G=(V,E,c)$ be a graph corresponding to the physical (elementary) links of a quantum network. Let $e^j$, $e\in E$, $1\leq j\leq c(e)$, be an arbitrary edge in the graph. Recall the classical-quantum state $\widehat{\sigma}_{e^j}$ of the elementary link corresponding to the edge $e^j$ defined in \eqref{eq-elem_link_initial_cq_state}. From this, we have that the overall quantum state of the network, based on the definition in \eqref{eq-network_total_state}, is defined as
	\begin{equation}\label{eq-network_initial_cq_state}
		\widehat{\sigma}_{G_{\vec{p}}}\coloneqq\bigotimes_{e\in E}\bigotimes_{j=1}^{ c(e)}\widehat{\sigma}_{e^j}=\sum_{\vec{x}\in\Xi_G}\ket{\vec{x}}\bra{\vec{x}}_{X_G}\otimes \widetilde{\sigma}_{G_{\vec{p}}}(\vec{x}).
	\end{equation}
	Here, $X_G=\{X_{e^j}:e\in E,\,1\leq j\leq c(e)\}$ is the collection of all link status random variables and
	\begin{equation}
		\widetilde{\sigma}_{G_{\vec{p}}}(\vec{x})\coloneqq\bigotimes_{e\in E}\bigotimes_{j=1}^{ c(e)}\widetilde{\sigma}_{e^j}(x^{e^j}),
	\end{equation}
	where we recall that $x^{e^j}\in\{0,1\}$ and
	\begin{align}
		\widetilde{\sigma}_{e^j}(0)&=(\mathcal{M}_{e^j}^0\circ\mathcal{S}_{e^j})(\rho_{e^j}^S),\\
		\widetilde{\sigma}_{e^j}(1)&=(\mathcal{M}_{e^j}^1\circ\mathcal{S}_{e^j})(\rho_{e^j}^S).
	\end{align}
	(Recall \eqref{eq-initial_state_tilde_0} and \eqref{eq-initial_state_tilde_1}.) Observe that
	\begin{equation}
		\Tr[\widetilde{\sigma}_{e^j}(1)]=p_{e^j}=\Pr[X_{e^j}=1],
	\end{equation}
	and thus the probability of a configuration $\vec{x}\in\Xi_G$, as in \eqref{eq-graph_config_prob}, is given by
	\begin{equation}
		\Tr[\widetilde{\sigma}_{G_{\vec{p}}}]=\Pr[G_{\vec{p}}(\vec{x})]=\prod_{e\in E}\prod_{j=1}^{ c(e)}\Tr[\widetilde{\sigma}_{e^j}(x^{e^j})].
	\end{equation}
	
	The \textit{expected quantum state of the network} is
	\begin{equation}
		\sigma_{G_{\vec{p}}}\coloneqq\Tr_{X_G}[\widehat{\sigma}_{G_{\vec{p}}}]=\sum_{\vec{x}\in\Xi_G}\widetilde{\sigma}_{G_{\vec{p}}}(\vec{x}),
	\end{equation}
	and it is simply the average of the quantum states of the configurations $\vec{x}\in\Xi_G$.
	
	In addition to probabilistic elementary link generation, we also have probabilistic joining protocols and probabilistic entanglement distillation protocols in practice. We can describe these mathematically using quantum instrument channels. In particular, following the notation in Section~\ref{sec-network_q_state}, given a path $w$ of active elementary links in the network, the joining channel $\mathcal{L}_{w\to e'}$ that forms the new virtual link $e'$ is given in the probabilistic setting by
	\begin{equation}\label{eq-q_instr_channel_swapping}
		\mathcal{L}_{w\to e'}(\cdot)=\ket{0}\bra{0}\otimes\mathcal{L}_{w\to e'}^0(\cdot)+\ket{1}\bra{1}\otimes\mathcal{L}_{w\to e'}^1(\cdot),
	\end{equation}
	where $\mathcal{L}_{w\to e'}^0$ and $\mathcal{L}_{w\to e'}^1$ are completely positive trace non-increasing LOCC maps such that $\mathcal{L}_{w\to e'}^0+\mathcal{L}_{w\to e'}^1$ is a trace-preserving map, and thus an LOCC quantum channel. Specifically, $\mathcal{L}_{w\to e'}^0$ corresponds to failure of the joining protocol and $\mathcal{L}_{w\to e'}^1$ corresponds to success of the joining protocol. Given an input state $\rho_w$ corresponding to the given path $w$, the success probability of the joining protocol is $\Tr\left[\mathcal{L}_{w\to e'}^1(\rho_w)\right]$.
	
	Probabilistic entanglement distillation protocols have an analogous mathematical description. Given an element $e\in E$ with $ c(e)=n$ parallel edges, every probabilistic entanglement distillation protocol has the form
	\begin{equation}
		\mathcal{D}_{e^1\dotsb e^n\to e^1\dotsb e^{n'}}^{e}(\cdot)=\ket{0}\bra{0}\otimes\mathcal{D}_{e^1\dotsb e^n\to e^1\dotsb e^{n'}}^{e;0}(\cdot)+\ket{1}\bra{1}\otimes\mathcal{D}_{e^1\dotsb e^n\to e^1\dotsb e^{n'}}^{e;1}(\cdot),
	\end{equation}
	where $\mathcal{D}_{e^1\dotsb e^n\to e^1\dotsb e^{n'}}^{e;0}$ and $\mathcal{D}_{e^1\dotsb e^n\to e^1\dotsb e^{n'}}^{e;1}$ are completely positive trace non-increasing LOCC maps such that $\mathcal{D}_{e^1\dotsb e^n\to e^1\dotsb e^{n'}}^{e;0}+\mathcal{D}_{e^1\dotsb e^n\to e^1\dotsb e^{n'}}^{e;1}$ is a trace-preserving map, and thus an LOCC quantum channel. Specifically, $\mathcal{D}_{e^1\dotsb e^n\to e^1\dotsb e^{n'}}^{e;0}$ corresponds to failure of the protocol and $\mathcal{D}_{e^1\dotsb e^n\to e^1\dotsb e^{n'}}^{e;1}$ corresponds to success of the protocol.
	
	We have already seen an example of a probabilistic entanglement distillation protocol in Example~\ref{ex-ent_distill}. Let us now look particular examples of probabilistic versions of the joining protocols defined in Examples~\ref{ex-ent_swap}, \ref{ex-GHZ_ent_swap}, and \ref{ex-graph_state_dist}.
	
	\begin{example}[Probabilistic joining protocols]
		The joining protocols defined in Examples~\ref{ex-ent_swap}, \ref{ex-GHZ_ent_swap}, and \ref{ex-graph_state_dist} all involve measurements. The main reason these protocols do not have unit success probability in practice is that these measurements cannot be implemented perfectly, meaning that they have a non-unit probability of executing the ideal measurement, which leads to the overall protocol having a non-unit success probability\footnote{Sometimes, the gate operations involved in the protocol also cannot be implemented perfectly with unit probability, thus contributing to the overall probabilistic nature of the protocol.}. In this example, we consider a simple way to model probabilistic measurements. The actual mathematical model of the measurement depends on the specifics of the particular implementation being considered, but ultimately the overall quantum channel for the protocol can be written as a quantum instrument channel of the form in \eqref{eq-q_instr_channel_swapping}.
		
		\begin{enumerate}
			\item Let us first consider the entanglement swapping channel defined in \eqref{eq-ent_swap_channel}:
				\begin{align}
					\mathcal{L}_{A\vec{R}_1\vec{R}_2\dotsb\vec{R}_nB\to AB}^{\text{ES};n}\left(\rho_{A\vec{R}_1\vec{R}_2\dotsb\vec{R}_nB}\right)&\coloneqq\sum_{\vec{z},\vec{x}\in[d]^n}\Tr_{\vec{R}_1\vec{R}_2\dotsb\vec{R}_n}\left[M_{\vec{R}_1\vec{R}_2\dotsb\vec{R}_n}^{\vec{z},\vec{x}}W_B^{\vec{z},\vec{x}}\left(\rho_{A\vec{R}_1\vec{R}_2\dotsb\vec{R}_nB}\right)\left(W_B^{\vec{z},\vec{x}}\right)^\dagger\right],\\[0.2cm]
					M_{\vec{R}_1\vec{R}_2\dotsb\vec{R}_n}^{\vec{z},\vec{x}}&\coloneqq \Phi_{\vec{R}_1}^{z_1,x_1}\otimes\Phi_{\vec{R}_2}^{z_2,x_2}\otimes\dotsb\otimes\Phi_{\vec{R}_n}^{z_n,x_n},\\[0.2cm]
					W_B^{\vec{z},\vec{x}}&\coloneqq Z_B^{z_1+\dotsb+z_n}X_B^{x_1+\dotsb+x_n}.
				\end{align}
				A simple way to make the protocol probabilistic is to modify the operators $M_{\vec{R}_1\dotsb\vec{R}_n}^{\vec{z},\vec{x}}$ as follows:
				\begin{equation}
					M_{\vec{R}_1\dotsb\vec{R}_n}^{\vec{z},\vec{x}}\to\widetilde{M}_{\vec{R}_1\dotsb\vec{R}_n}^{\vec{z},\vec{x};\vec{\alpha}}\coloneqq\Lambda_{\vec{R}_1}^{z_1,x_1,\alpha_1}\otimes\dotsb\otimes\Lambda_{\vec{R}_n}^{z_n,x_n,\alpha_n},
				\end{equation}
				where $\{\Lambda_{\vec{R}_j}^{z_j,x_j,\alpha_j}\}_{z_j,x_j,\alpha_j\in\{0,1\}}$, $1\leq j\leq n$, are POVMs such that
				\begin{align}
					\Lambda_{\vec{R}_j}^{z_j,x_j,1}&=q_j\Phi_{\vec{R}_j}^{z_j,x_j},\\[0.2cm]
					\sum_{z_j,x_j\in\{0,1\}}\Lambda_{\vec{R}_j}^{z_j,x_j,0}&=(1-q_j)\mathbbm{1}_{\vec{R}_j}.
				\end{align}
				The values $q_j\in[0,1]$ represent the success probability of the Bell-basis measurement at the $j^{\text{th}}$ intermediate node. We then define the LOCC instrument channel for the probabilistic entanglement swapping protocol as follows:
				\begin{multline}
					\widetilde{\mathcal{L}}_{A\vec{R}_1\dotsb\vec{R}_nB\to AB}^{\text{ES};n}\left(\rho_{A\vec{R}_1\dotsb\vec{R}_nB}\right)\coloneqq\ket{0}\bra{0}\otimes\widetilde{\mathcal{L}}_{A\vec{R}_1\dotsb\vec{R}_nB\to AB}^{\text{ES};n;0}\left(\rho_{A\vec{R}_1\dotsb\vec{R}_nB}\right)\\[0.2cm]+\ket{1}\bra{1}\otimes\widetilde{\mathcal{L}}_{A\vec{R}_1\dotsb\vec{R}_nB\to AB}^{\text{ES};n;1}\left(\rho_{A\vec{R}_1\dotsb\vec{R}_nB}\right),
				\end{multline}
				where
				\begin{align}
					\widetilde{\mathcal{L}}_{A\vec{R}_1\dotsb\vec{R}_nB\to AB}^{\text{ES};n;1}\left(\rho_{A\vec{R}_1\dotsb\vec{R}_nB}\right)&\coloneqq\sum_{\vec{z},\vec{x}\in\{0,1\}^n}\Tr_{\vec{R}_1\dotsb\vec{R}_n}\left[\widetilde{M}_{\vec{R}_1\dotsb\vec{R}_n}^{\vec{z},\vec{x};\vec{1}}W_B^{\vec{z},\vec{x}}\left(\rho_{A\vec{R}_1\dotsb\vec{R}_nB}\right)\left(W_B^{\vec{z},\vec{x}}\right)^\dagger\right]\\
					&=q_1\dotsb q_n\mathcal{L}_{A\vec{R}_1\dotsb\vec{R}_nB\to AB}^{\text{ES};n}\left(\rho_{A\vec{R}_1\dotsb\vec{R}_nB}\right)
				\end{align}
				and
				\begin{equation}
					\widetilde{\mathcal{L}}_{A\vec{R}_1\dotsb\vec{R}_n\to AB}^{\text{ES};n;0}\left(\rho_{A\vec{R}_1\dotsb\vec{R}_nB}\right)\coloneqq\sum_{\substack{\vec{z},\vec{x},\vec{\alpha}\in\{0,1\}^n\\\vec{\alpha}\neq\vec{1}}}\Tr_{\vec{R}_1\dotsb\vec{R}_n}\left[\widetilde{M}_{\vec{R}_1\dotsb\vec{R}_n}^{\vec{z},\vec{x};\vec{\alpha}}W_B^{\vec{z},\vec{x}}\left(\rho_{A\vec{R}_1\dotsb\vec{R}_nB}\right)\left(W_B^{\vec{z},\vec{x}}\right)^\dagger\right].
				\end{equation}
				Then, the success probability of the protocol is
				\begin{equation}
					\Tr\left[\widetilde{\mathcal{L}}_{A\vec{R}_1\dotsb\vec{R}_nB\to AB}^{\text{ES};n;1}\left(\rho_{A\vec{R}_1\dotsb\vec{R}_nB}\right)\right]=q_1\dotsb q_n
				\end{equation}
				for all states $\rho_{A\vec{R}_1\dotsb\vec{R}_nB}$.
				
			\item The GHZ entanglement swapping channel defined in \eqref{eq-GHZ_ent_swap_channel} can be made probabilistic in a similar manner. Recall first that
				\begin{equation}\label{eq-GHZ_ent_swap_channel_2}
					\mathcal{L}_{A\vec{R}_1\dotsb\vec{R}_nB\to AR_1^1\dotsb R_n^1B}^{\text{GHZ};n}\left(\rho_{A\vec{R}_1\dotsb\vec{R}_nB}\right)\coloneqq\sum_{\vec{x}\in\{0,1\}^n} P_{\vec{R}_1\dotsb\vec{R}_nB}^{\vec{x}}\left(\rho_{A\vec{R}_1\dotsb\vec{R}_nB}\right)P_{\vec{R}_1\dotsb\vec{R}_nB}^{\vec{x}~\dagger},
				\end{equation}
				where
				\begin{equation}
					P_{\vec{R}_1\dotsb\vec{R}_nB}^{\vec{x}}\coloneqq K_{\vec{R}_1}^{x_1}\otimes K_{\vec{R}_2}^{x_2}X_{R_2^1}^{x_1}\otimes\dotsb\otimes K_{\vec{R}_n}^{x_n}X_{R_n^1}^{x_{n-1}}\otimes X_B^{x_n}
				\end{equation}
				for all $\vec{x}\in\{0,1\}^n$ and
				\begin{equation}
					K_{\vec{R}_j}^{x_j}=\bra{x}_{R_j^2}\text{CNOT}_{\vec{R}_j}
				\end{equation}
				for all $1\leq j\leq n$. We can write \eqref{eq-GHZ_ent_swap_channel_2} as follows:
				\begin{multline}
					\mathcal{L}_{A\vec{R}_1\dotsb\vec{R}_nB\to AR_1^1\dotsb R_n^1B}^{\text{GHZ};n}\left(\rho_{A\vec{R}_1\dotsb\vec{R}_n}\right)\\=\sum_{\vec{x}\in\{0,1\}^n}\Tr_{R_1^2\dotsb R_n^2}\left[\ket{\vec{x}}\bra{\vec{x}}_{R_1^2\dotsb R_n^2}C_{\vec{R}_1\dotsb\vec{R}_n}X_{R_2^1\dotsb R_n^1B}^{\vec{x}}\left(\rho_{A\vec{R}_1\dotsb\vec{R}_nB}\right)X_{R_2^1\dotsb R_n^1B}C_{\vec{R}_1\dotsb\vec{R}_n}\right],
				\end{multline}
				where
				\begin{align}
					C_{\vec{R}_1\dotsb\vec{R}_n}&\coloneqq\text{CNOT}_{\vec{R}_1}\otimes\dotsb\otimes\text{CNOT}_{\vec{R}_n},\\
					X_{R_2^1\dotsb R_n^1B}^{\vec{x}}&\coloneqq X_{R_2^1}^{x_1}\otimes\dotsb\otimes X_{B}^{x_n}.
				\end{align}
				Then, to make the protocol probabilistic, we can make the following simple modification:
				\begin{equation}
					\ket{\vec{x}}\bra{\vec{x}}_{R_1^2\dotsb R_n^2}\to\Lambda_{R_1^2\dotsb R_n^2}^{\vec{x},\vec{\alpha}}\coloneqq\Lambda_{R_1^2}^{x_1,\alpha_1}\otimes\dotsb\otimes\Lambda_{R_n^2}^{x_n,\alpha_n},
				\end{equation}
				where $\{\Lambda_{R_j^2}^{x_j,\alpha_j}\}_{x_j,\alpha_j\in\{0,1\}}$, $1\leq j\leq n$, are POVMs such that
				\begin{align}
					\Lambda_{R_j^2}^{x_k,1}&=q_j\ket{x_j}\bra{x_j}_{R_j^2},\\[0.2cm]
					\sum_{x_j\in\{0,1\}}\Lambda_{R_j^2}^{x_j,0}&=(1-q_j)\mathbbm{1}_{R_j^2}.
				\end{align}
				The values $q_j\in[0,1]$ represent the success probability of the standard-basis measurement at the $j^{\text{th}}$ intermediate node. Then, we define the LOCC quantum instrument channel for the GHZ entanglement swapping protocol as follows:
				\begin{multline}
					\widetilde{\mathcal{L}}_{A\vec{R}_1\dotsb\vec{R}_nB\to AR_1^1\dotsb R_n^1B}^{\text{GHZ};n}\left(\rho_{A\vec{R}_1\dotsb\vec{R}_nB}\right)\coloneqq\ket{0}\bra{0}\otimes\widetilde{\mathcal{L}}_{A\vec{R}_1\dotsb\vec{R}_nB\to AR_1^1\dotsb R_n^1B}^{\text{GHZ};n;0}\left(\rho_{A\vec{R}_1\dotsb\vec{R}_nB}\right)\\[0.2cm]+\ket{1}\bra{1}\otimes\widetilde{\mathcal{L}}_{A\vec{R}_1\dotsb\vec{R}_nB\to AR_1^1\dotsb R_n^1B}^{\text{GHZ};n;1}\left(\rho_{A\vec{R}_1\dotsb\vec{R}_nB}\right),
				\end{multline}
				where
				\begin{align}
					&\widetilde{\mathcal{L}}_{A\vec{R}_1\dotsb\vec{R}_nB\to AR_1^1\dotsb R_n^1B}^{\text{GHZ};n;1}\left(\rho_{A\vec{R}_1\dotsb\vec{R}_nB}\right)\nonumber\\
					&\qquad\qquad=\sum_{\vec{x}\in\{0,1\}^n}\Tr_{R_1^2\dotsb R_n^2}\left[\Lambda_{R_1^2\dotsb R_n^2}^{\vec{x},\vec{1}}C_{\vec{R}_1\dotsb\vec{R}_n}X_{R_2^1\dotsb R_n^1B}^{\vec{x}}\left(\rho_{A\vec{R}_1\dotsb\vec{R}_nB}\right)X_{R_2^1\dotsb R_n^1B}^{\vec{x}}C_{\vec{R}_1\dotsb\vec{R}_n}\right]\\
					&\qquad\qquad=q_1\dotsb q_n \mathcal{L}_{A\vec{R}_1\dotsb\vec{R}_nB\to AR_1^1\dotsb AR_n^1B}^{\text{GHZ};n}\left(\rho_{A\vec{R}_1\dotsb\vec{R}_nB}\right)
				\end{align}
				and
				\begin{multline}
					\widetilde{\mathcal{L}}_{A\vec{R}_1\dotsb\vec{R}_nB\to AR_1^1\dotsb R_n^1B}^{\text{GHZ};n;0}\left(\rho_{A\vec{R}_1\dotsb\vec{R}_nB}\right)\\\coloneqq\sum_{\substack{\vec{x},\vec{\alpha}\in\{0,1\}^n\\\vec{\alpha}\neq\vec{1}}}\Tr_{R_1^1\dotsb R_n^1}\left[\Lambda_{R_1^1\dotsb R_n^1}^{\vec{x},\vec{\alpha}}C_{\vec{R}_1\dotsb\vec{R}_n}X_{R_2^1\dotsb R_n^1B}^{\vec{x}}\left(\rho_{A\vec{R}_1\dotsb\vec{R}_nB}\right)X_{R_2^1\dots R_n^1B}^{\vec{x}}C_{\vec{R}_1\dotsb\vec{R}_n}\right].
				\end{multline}
				Then, the success probability of the protocol is
				\begin{equation}
					\Tr\left[\widetilde{\mathcal{L}}_{A\vec{R}_1\dotsb\vec{R}_nB\to AR_1^1\dotsb R_n^1B}^{\text{GHZ};n;1}\left(\rho_{A\vec{R}_1\dotsb\vec{R}_nB}\right)\right]=q_1\dotsb q_n
				\end{equation}
				for all states $\rho_{A\vec{R}_1\dotsb\vec{R}_nB}$.
				
			\item Finally, let us consider the graph state distribution channel defined in \eqref{eq-graph_state_dist_channel}:
				\begin{align}
					\mathcal{L}_{A_1^nR_1^n\to A_1^n}^{(G)}(\rho_{A_1^nR_1^n})&=\sum_{\vec{x}\in\{0,1\}^n} \left(Z_{A_1^n}^{\vec{x}}\otimes\bra{G^{\vec{x}}}_{R_1^n}\right)\left(\rho_{A_1^nR_1^n}\right)\left(Z_{A_1^n}^{\vec{x}}\otimes\ket{G^{\vec{x}}}_{R_1^n}\right)\\
					&=\sum_{\vec{x}\in\{0,1\}^n}\Tr_{R_1^n}\left[\left(Z_{A_1^n}^{\vec{x}}\otimes\ket{G^{\vec{x}}}\bra{G^{\vec{x}}}_{R_1^n}\right)\left(\rho_{A_1^nR_1^n}\right)\left(Z_{A_1^n}^{\vec{x}}\otimes\mathbbm{1}_{R_1^n}\right)\right].
				\end{align}
				In order to make the protocol probabilistic, we can make the following modification:
				\begin{equation}
					\ket{G^{\vec{x}}}\bra{G^{\vec{x}}}_{R_1^n}\to \Lambda_{R_1^n}^{\vec{x},\alpha},
				\end{equation}
				where $\{\Lambda_{R_1^n}^{\vec{x},\alpha}\}_{\vec{x}\in\{0,1\}^n,\alpha\in\{0,1\}}$ is a POVM such that
				\begin{align}
					\Lambda_{R_1^n}^{\vec{x},1}&=q\ket{G^{\vec{x}}}\bra{G^{\vec{x}}}_{R_1^n},\\[0.2cm]
					\sum_{\vec{x}\in\{0,1\}}\Lambda_{R_1^n}^{\vec{x},0}&=(1-q)\mathbbm{1}_{R_1^n}.
				\end{align}
				The value $q\in[0,1]$ represents the success probability of the measurement given by the POVM $\{\ket{G^{\vec{x}}}\bra{G^{\vec{x}}}_{R_1^n}\}_{\vec{x}\in\{0,1\}^n}$. Then, we define the LOCC quantum instrument channel for the graph state distribution protocol as follows:
				\begin{equation}
					\widetilde{\mathcal{L}}_{A_1^nR_1^n\to A_1^n}^{(G)}(\rho_{A_1^nR_1^n})\coloneqq\ket{0}\bra{0}\otimes\widetilde{\mathcal{L}}_{A_1^nR_1^n\to A_1^n}^{(G);0}(\rho_{A_1^nR_1^n})+\ket{1}\bra{1}\otimes\widetilde{\mathcal{L}}_{A_1^nR_1^n\to A_1^n}^{(G);1}(\rho_{A_1^nR_1^n}),
				\end{equation}
				where
				\begin{align}
					\widetilde{\mathcal{L}}_{A_1^nR_1^n\to A_1^n}^{(G);1}(\rho_{A_1^nR_1^n})&=\sum_{\vec{x}\in\{0,1\}^n}\Tr_{R_1^n}\left[\left(Z_{A_1^n}^{\vec{x}}\otimes\Lambda_{R_1^n}^{\vec{x},1}\right)\left(\rho_{A_1^nR_1^n}\right)\left(Z_{A_1^n}^{\vec{x}}\otimes\mathbbm{1}_{R_1^n}\right)\right]\\
					&=q\mathcal{L}_{A_1^nR_1^n\to A_1^n}^{(G)}(\rho_{A_1^nR_1^n})
				\end{align}
				and
				\begin{equation}
					\widetilde{\mathcal{L}}_{A_1^nR_1^n\to A_1^n}^{(G);0}(\rho_{A_1^nR_1^n})=\sum_{\vec{x}\in\{0,1\}^n}\Tr_{R_1^n}\left[\left(Z_{A_1^n}^{\vec{x}}\otimes\Lambda_{R_1^n}^{\vec{x},0}\right)\left(\rho_{A_1^nR_1^n}\right)\left(Z_{A_1^n}^{\vec{x}}\otimes\mathbbm{1}_{R_1^n}\right)\right].
				\end{equation}
				Then, the success probability of the protocol is
				\begin{equation}
					\Tr\left[\widetilde{\mathcal{L}}_{A_1^nR_1^n\to A_1^n}^{(G);1}(\rho_{A_1^nR_1^n})\right]=q
				\end{equation}
				for all states $\rho_{A_1^nR_1^n}$.~\defqed
		\end{enumerate}
	\end{example}

\section{Summary}	
	
	In this chapter, we provided an introduction to quantum networks. Specifically, we provided precise definitions of what a quantum network is and how to model it mathematically (see the introduction to the chapter and Section~\ref{sec-network_q_state}). In Section~\ref{sec-ent_dist_general}, we defined the task of entanglement distribution in a quantum network. The task of entanglement distribution can be thought of as taking the graph of physical (elementary) links and transforming it into a new graph that contains the elementary links as well as a desired set of \textit{virtual links}, i.e., links that are obtained by performing so-called ``joining protocols'' on elementary links, examples of which we provide in Section~\ref{sec-LOCC_channels}. Then, in Section~\ref{sec-network_architecture}, we take our first step toward practical schemes for entanglement distribution. We start by presenting ground-based and satellite-based elementary link generation schemes in Section~\ref{sec-practical_elem_link_generation}, and from these developments it becomes clear that elementary link generation is in general probabilistic. We also consider quantum memories with finite coherence times and how they should be modeled. With these practical considerations in mind, we then refine our initial mathematical model of quantum networks and entanglement distribution in Sections~\ref{sec-random_subgraphs} and \ref{sec-network_q_state_practical}.
	
	This chapter, along with the previous chapter on quantum decision processes, sets the stage for the next chapter, in which we combine the developments of this chapter and the previous chapter in order to develop an explicit quantum network protocol using quantum decision processes.

\chapter{QUANTUM NETWORK PROTOCOLS VIA DECISION PROCESSES}\label{chap-network_QDP}

	In the previous chapter, we discussed the basic theoretical framework of quantum networks---their description as graphs, the task of entanglement distribution, and practical considerations that need to be taken into account. In this chapter, we bring together this general theory of quantum networks from the previous chapter and we combine it with the general theory of quantum decision processes from Chapter~\ref{chap-QDP} in order to develop explicit protocols for entanglement distribution in a quantum network.

	Our task of interest in quantum networks is entanglement distribution, as described in Section~\ref{sec-ent_dist_general}. Given a graph $G=(V,E,c)$ describing the physical (elementary) links of the network, the goal is to obtain a network corresponding to a target graph $G_{\text{target}}=(V,E_{\text{target}},c_{\text{target}})$, which contains not only elementary links but also virtual links, i.e., entanglement shared by nodes that are not physically connected. A simple achievable strategy for this, as described in Section~\ref{sec-figures_of_merit_general} and discussed in Refs.~\cite{AK17,BA17} (see also Ref.~\cite{KMSD19}), is to generate the appropriate target state in the elementary links, and then to apply the appropriate joining protocols to create virtual links.
	
	In Section~\ref{sec-network_architecture}, we discussed important practical aspects of entanglement distribution that need to be taken into account when developing such quantum network protocols, one of which is that elementary link generation is probabilistic. This fact led to the conclusion that the initial quantum state of every elementary link, as defined in \eqref{eq-elem_link_initial_cq_state}, is the following classical-quantum state:
	\begin{equation}\label{eq-elem_link_initial_cq_state_2}
		\widehat{\sigma}_{e^j}=\ket{0}\bra{0}\otimes\widetilde{\sigma}_{e^j}(0)+\ket{1}\bra{1}\otimes\widetilde{\sigma}_{e^j}(1),
	\end{equation}
	for all $e\in E$ and $1\leq j\leq c(e)$, where
	\begin{equation}
		\widetilde{\sigma}_{e^j}(x)=(\mathcal{M}_{e^j}^x\circ\mathcal{S}_{e^j})(\rho_{e^j}^S)\quad\forall~x\in\{0,1\},
	\end{equation}
	and $\rho_{e^j}^S$, $\mathcal{S}_{e^j}$, and $\mathcal{M}_{e^j}^x$ are the source state, transmission channel, and heralding map, respectively, for the elementary link given by the edge $e^j$. The classical register tells us whether or not the elementary link is active (i.e., whether the transmission and heralding succeeded), and the quantum register contains the corresponding quantum state of the elementary link. From this, we found that the initial quantum state of the entire network, as defined in \eqref{eq-network_initial_cq_state}, also has a classical-quantum form:
	\begin{equation}\label{eq-network_initial_cq_state_2}
		\widehat{\sigma}_{G_{\vec{p}}}\coloneqq\sum_{\vec{x}\in\Xi_G}\ket{\vec{x}}\bra{\vec{x}}_{X_G}\otimes\widetilde{\sigma}_{G_{\vec{p}}}(\vec{x}).
	\end{equation}
	The classical register now gives us the configuration $\vec{x}\in\Xi_G$ of the network, i.e., it tells us which of the elementary links are active, and the quantum register contains the quantum states after transmission from the source and heralding.
	
	Now, given the initial quantum state of the network in \eqref{eq-network_initial_cq_state_2}, how should the protocol described above for achieving the target network given by $G_{\text{target}}$ proceed? If the configuration obtained after the initial step, along with the fidelities of the links, does not correspond to the target network---for example, some of the required elementary links attempts might have failed---then it makes sense to try the source distribution again for the failed elementary links. For the ones that succeeded, it might make sense to keep the quantum states in memory rather than discard the states, request new ones from the sources, and risk some of these new attempts failing. From these considerations, some questions naturally arise:
	\begin{enumerate}
		\item What is the (optimal) sequence of actions that should be performed for every elementary link, as a function of time, in order to achieve the target network with the desired probability and fidelity?
		
		\item How long does it take to achieve the target network with the desired probability and fidelity?
	\end{enumerate}
	These questions fall under the domain of quantum decision processes, so it makes sense to try to view quantum network protocols from the lens of quantum decision processes. A basic first step in this direction is to determine the agent(s) and the environment(s).
	
	\begin{figure}
		\centering
		\includegraphics[width=0.95\textwidth]{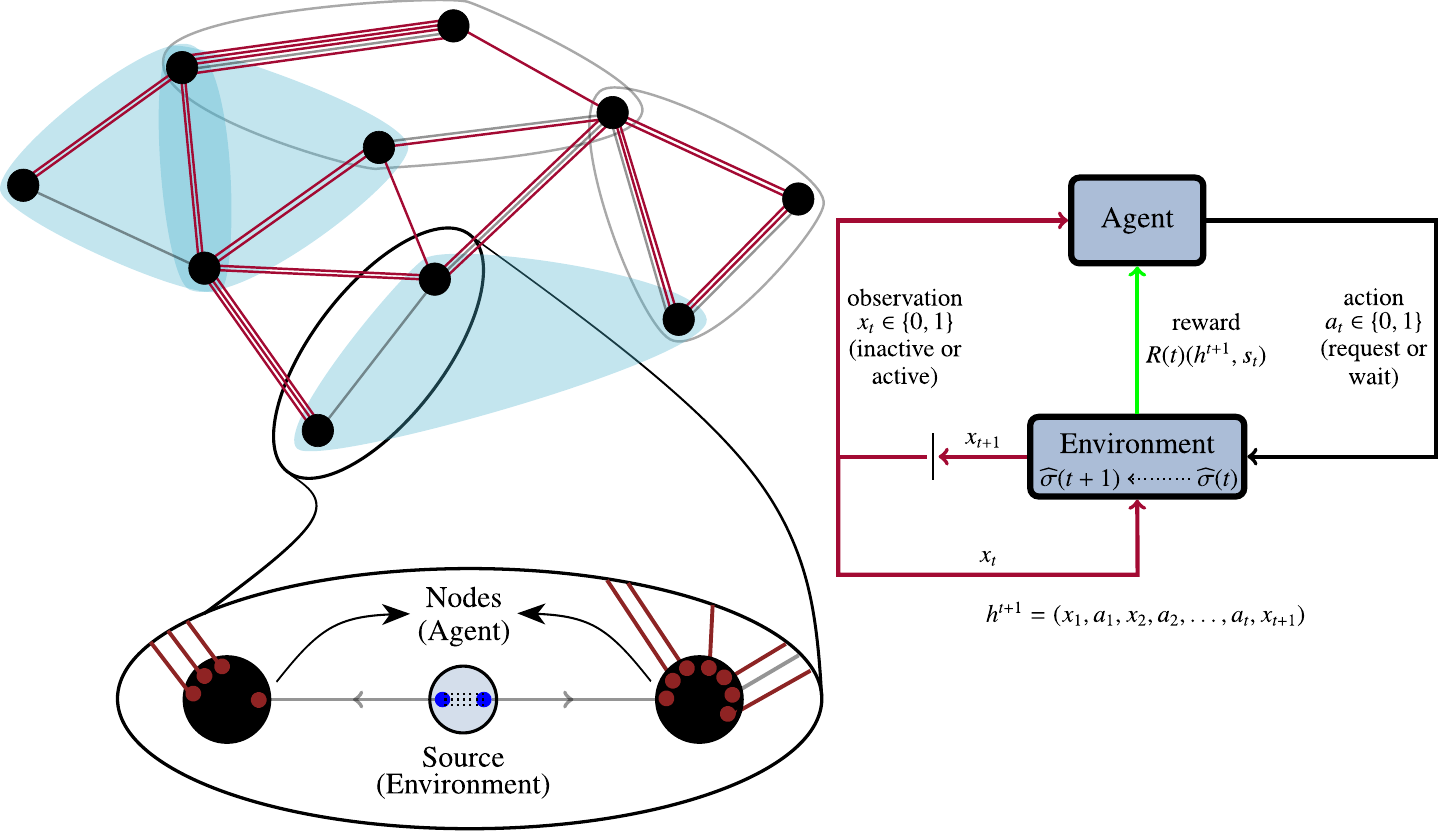}
		\caption{Our approach to quantum network protocols via quantum decision processes. Given a graph $G=(V,E,c)$ corresponding to the physical (elementary) links of a quantum network, we define an (independent) agent for every element $e\in E$, and the associated environment is defined to be the quantum systems distributed to the nodes of $e$ by the corresponding source station. The agents are classical, and they have two possible actions, either to ``wait'' (i.e., keep the entangled state currently in quantum memory) or to ``request'' (i.e., discard the entangled state currently in quantum memory and request a new one from the source). The set of observations of each agent is the status of elementary link, i.e., whether or not it is active. We can also allow the agents to perform entanglement distillation, which corresponds to a separate action; see Appendix~\ref{sec-network_QDP_distillation}.}\label{fig-net_agent_env_0}
	\end{figure}
	
	In Figure~\ref{fig-net_agent_env_0}, we illustrate our approach to quantum network protocols via quantum decision processes. We associate an (independent) agent to every elementary link (more precisely, to every element of $E$), and its environment is the collection of quantum systems distributed to the corresponding nodes by the corresponding source station. We define the agent to be classical here, such that at every time step it can choose one of two possible actions: ``wait'' (i.e., keep the entangled state currently in quantum memory), or ``request'' (i.e., discard the entangled state currently in quantum memory and request a new one from the source). Based on this action, the agent receives one of two possible observations: either the link is active or it is inactive. We also allow for the agent to execute an additional action, ``distill'' (i.e., perform entanglement distillation on a subset of parallel elementary links), with the observations again being whether or not the corresponding parallel links are active (i.e., whether or not the entanglement distillation succeeded). With these basic elements in place, the sequence of actions that should be performed as part of the network protocol is nothing more than the policy of the agent(s), meaning that the policy itself defines the protocol. Indeed, let us recall Lemma~\ref{lem-QDP_LOCC}, which tells us that every quantum decision process corresponds to an LOCC quantum channel. Furthermore, as we show in Section~\ref{sec-indep_agents}, each of the actions of the agents corresponds to an LOCC channel. This means that every policy of the elementary link quantum decision process outlined above is a particular case of the LOCC channel $\mathcal{L}_{\text{LOCC}}$ in \eqref{eq-ent_dist_protocol_graphs_trans} that defines the quantum network protocol for entanglement distribution. Furthermore, the quantum decision process provides us with an explicit decomposition of the LOCC channel into discrete time steps.
	
	The developments of the previous paragraph, summarized in Figure~\ref{fig-net_agent_env_0}, mean that our overall approach to quantum network protocols is the following. Starting with the graph $G=(V,E,c)$ of physical elementary links, all of the elementary links independently undergo $T$-step policies $\pi_T^{e^j}$, with $e\in E$, $1\leq j\leq c(e)$, and $T\in[1,\infty)$. After $T$ time steps, an algorithm finds paths for creating the virtual links specified by the target graph $G_{\text{target}}$ and the corresponding joining protocols are performed. If entire target network cannot be achieved in $T$ time steps, then a decision is made to either conclude the protocol with the current configuration or to continue for another $T$ time steps under the same policies. We summarize this approach in Figure~\ref{fig-QDP_protocol}.

	\begin{figure}
		\centering
		\begin{specbox*}{0.9\textwidth}{Quantum network protocol via quantum decision processes}
			\begin{description}[style=nextline]
				\item[Given]
					\begin{itemize}[leftmargin=*,topsep=0pt,before=\leavevmode\vspace{-1.5em}]
						\item Graph $G=(V,E,c)$ of physical (elementary) links.
						\item Source states $\rho_{e^j}^S$, transmission channels $\mathcal{S}_{e^j}$, and heralding instruments $\{\mathcal{M}_{e^j}^0,\mathcal{M}_{e^j}^1\}$ for every $e\in E$ and $1\leq j\leq c(e)$. 
						\item Time $T\in[1,\infty)$.
						\item $T$-step policies $\pi_T^{e^j}$ for every $e\in E$ and $1\leq j\leq c(e)$.
					\end{itemize}
					
				\item[Required] A target graph $G_{\text{target}}=(V,E_{\text{target}},c_{\text{target}})$, with corresponding target states $\rho_{e^j}^{\text{target}}$ and fidelity thresholds $f_{e^j}$ for every $e\in E_{\text{target}}$ and $1\leq j\leq c_{\text{target}}(e)$.
				
				\item[Protocol]
					\begin{enumerate}[leftmargin=*,topsep=0pt,before=\leavevmode\vspace{-1.5em}]
						\item Every elementary link corresponding to $e^j$, $e\in E$ and $1\leq j\leq c(e)$, follows the policy $\pi_T^{e^j}$ for $T$ time steps.
						\item Given the configuration of active elementary links at the end of $T$ time steps, an algorithm \cite{SMI+17,PKT+19,CRDW19} determines paths for forming the virtual links specified by $E_{\text{target}}$ and $c_{\text{target}}$, and the corresponding joining protocols are performed.
						\item If the required target graph and fidelity thresholds are reached, then STOP; otherwise, decide:
						\begin{enumerate}
							\item STOP; or
							\item CONTINUE: repeat Steps 1--3.
						\end{enumerate}
					\end{enumerate}
			\end{description}
		\end{specbox*}
		\vspace{-0.5cm}
		\caption{Outline of our quantum network protocol based on quantum decision processes. Every elementary link in the network undergoes a quantum decision process specified by a $T$-step policy, with $T$ finite. At the end of the $T$ time steps, the appropriate paths in the network are found and the corresponding joining protocols are performed in order to achieve the network corresponding to the target graph $G_{\text{target}}$.}\label{fig-QDP_protocol}
	\end{figure}

	Let $\vec{\pi}_T=\{\pi_T^{e_j}:e\in E,\,1\leq j\leq c(e)\}$ denote the collection of $T$-step policies for all of the elementary links. Then, the quantum state of the network at time steps $1\leq t\leq T$ is 
	\begin{equation}\label{eq-network_cq_state_QDP}
		\widehat{\sigma}_G^{\vec{\pi}_T}(t)=\bigotimes_{e\in E}\bigotimes_{j=1}^{ c(e)}\widehat{\sigma}_{e^j}^{\pi_T^{e^j}}(t)
	\end{equation}
	where $\widehat{\sigma}_{e^j}^{\pi^{e^j}}(1)=\widehat{\sigma}_{e^j}$. For $2\leq t\leq T$, every state $\widehat{\sigma}_{e^j}^{\pi^{e^j}}(t)$ is given by the general expression in \eqref{eq-QDP_cq_state}, which is defined in terms of the transition maps and measurement operators of the environment. In Section~\ref{sec-indep_agents}, we explicitly define these environment elements, along with the reward functions. Then, in Section~\ref{sec-practical_figures_merit}, we define figures of merit for evaluating policies that are important in practical settings.
	
	\begin{remark}\label{rem-QDP_network_extensions}
		The tensor product structure in \eqref{eq-network_cq_state_QDP} holds because all of the agents for the elementary links are independent. This means that all of the agents have knowledge only of the status of their own elementary link. We stick to this setting throughout this thesis. We can use the quantum decision processes for the elementary links as building blocks for quantum decision processes for groups of elementary links. Furthermore, we can use quantum decision processes to develop quantum network protocols in which the agents can cooperate, so that they have knowledge of the network in a certain local neighborhood of their elementary link.  We discuss these possibilities as a direction for future work in Appendix~\ref{sec-future_work}.
		
		Physically, the agents of the quantum decision processes should be thought of as (classical) devices, and the policies as algorithms that are executed by the devices. Cooperating agents then correspond to devices that can communicate (classically) with each other. We emphasize that the agents are not necessarily the end users of the network making decisions according to the policies in real time. The end users should be thought of simply as providing the target graph $G_{\text{target}}$ in Figure~\ref{fig-QDP_protocol} based on their desired application.
		
		The protocol outlined in Figure~\ref{fig-QDP_protocol} is a relatively simple one in which the quantum decision process is used simply for the elementary links. In principle, it is possible to incorporate routing and path-finding algorithms into the decision process framework, so that using the algorithms in Refs.~\cite{SMI+17,PKT+19,CRDW19}, as done for the algorithm described in Figure~\ref{fig-QDP_protocol}, is not required. This also leads to the possibility for performing reinforcement learning of path-finding and routing algorithms. We discuss these possibilities, and other possibilities for developing more sophisticated protocols using quantum decision processes, in Appendix~\ref{sec-future_work}.~\defqed
	\end{remark}
	
	Now, in order to achieve the target graph $G_{\text{target}}$ at time $T+1$, the physical graph must have a particular configuration by time $T$---specifically, a certain subset of elementary links must be active---and the corresponding elementary links must have a certain fidelity to the appropriate target states, so that after the joining protocols the virtual links in $G_{\text{target}}$ meet the desired fidelity thresholds. These required elementary links (which may not be unique) correspond to the following classical-quantum state:
	\begin{equation}
		\bigotimes_{e\in E'}\bigotimes_{j=1}^{c(e)}\ket{1}\bra{1}_{X_{e^j}}\otimes \psi_{e^j}^{\text{target}},
	\end{equation}
	where $E'\subseteq E$ corresponds to the elementary links that are required to be active, and $\psi_{e^j}^{\text{target}}$ are the corresponding target pure states. Then, the fidelity between this state and the one in \eqref{eq-network_cq_state_QDP} is
	\begin{align}
		F\left(\bigotimes_{e\in E'}\bigotimes_{j=1}^{c(e)}\widehat{\sigma}_{e^j}^{\pi_e^j}(T),\bigotimes_{e\in E'}\bigotimes_{j=1}^{c(e)}\ket{1}\bra{1}_{X_{e^j}}\otimes \psi_{e^j}^{\text{target}}\right)&=\prod_{e\in E'}\prod_{j=1}^{c(e)}F\left(\widehat{\sigma}_{e^j}^{\pi^{e^j}}(T),\ket{1}\bra{1}_{X_{e_j}}\otimes\psi_{e_j}^{\text{target}}\right)\\
		&=\prod_{e\in E'}\prod_{j=1}^{c(e)}\Tr\left[\left(\ket{1}\bra{1}_{X_{e_j}}\otimes\psi_{e^j}^{\text{target}}\right)\widehat{\sigma}_{e^j}^{\pi^{e^j}}(T)\right].\label{eq-network_QDP_total_fid_pol}
	\end{align}
	So the task is to optimize the quantity in \eqref{eq-network_QDP_total_fid_pol} with respect to policies $\pi^{e^j}$ for the elementary links in $G$. The quantum decision process for elementary links that we develop in Section~\ref{sec-indep_agents} is such that expected reward for every elementary link is precisely the function in each term of the product in \eqref{eq-network_QDP_total_fid_pol}. This fact allows us to use the methods from Chapter~\ref{chap-QDP} to determine optimal policies, and we discuss this in Section~\ref{sec-network_QDP_pol_opt}.

\section{Quantum decision process for elementary link generation}\label{sec-indep_agents}

	We now start with the formal development of the quantum decision process for elementary link generation outlined in Figure~\ref{fig-net_agent_env_0}, and it is based on the model for elementary link generation outlined in Section~\ref{sec-practical_elem_link_generation}. Roughly speaking, the decision process for every elementary link is such that, at each time step, the agent (which we define to be all of the nodes in the elementary link as a collective entity) either requests entanglement from a source station (which we define to be the collection of quantum systems distributed to the nodes by the source stations) or keeps the quantum state currently stored in memory. For every time step that the quantum state is held in memory, the decoherence channel defined in \eqref{eq-elem_link_decoherence_channel} is applied to each of the quantum systems comprising the quantum state of the elementary link. This process goes on for a given time $T<\infty$, after which a reward is given. This is the basic quantum decision process that we develop in this section. We defer the discussion of quantum decision processes for elementary link generation with entanglement distillation to Appendix~\ref{sec-future_work}.
	

	The quantum decision process for elementary link generation is defined as follows.
	
	\begin{definition}[QDP for elementary link generation]\label{def-network_QDP_elem_link}
		Let $G=(V,E,c)$ be the graph corresponding to the physical (elementary) links of a quantum network, and let $e^j$, $e\in E$, $1\leq j\leq c(e)$, be arbitrary. As shown in Figure~\ref{fig-net_agent_env_0}, we define a quantum decision process for $e^j$ by defining the agent for $e^j$ to be collectively the nodes belonging to $e^j$, and we define its environment to be the quantum systems distributed by the source station to the nodes of $e^j$. Then, the other elements of the quantum decision process are defined as follows.
		\begin{itemize}
			\item We denote the quantum systems of the environment collectively by $E^{e^j}$, and we let $E_t^{e^j}$ denote these quantum systems at time $t\geq 0$. The state of the environment at time $t=0$ is the source state $\rho_{E_0^{e^j}}^{S}$.
			
			\item We let $\mathcal{X}=\{0,1\}$ tell us whether or not the elementary link is active at a particular time. In particular, then, we define random variables $X_{e^j}(t)$ for all $t\geq 1$ as follows:
				\begin{itemize}
					\item $X_{e^j}(t)=0$: elementary link is inactive (transmission and heralding not successful);
					\item $X_{e^j}(t)=1$: elementary link is active (transmission and heralding successful).
				\end{itemize}
				We let $\mathcal{A}=\{0,1\}$ be the set of possible actions of the agent, and we define corresponding random variables $A_{e^j}(t)$ for all $t\geq 1$ as follows:
				\begin{itemize}
					\item $A_{e^j}(t)=0$: wait/keep the entangled state;
					\item $A_{e^j}(t)=1$: discard the entangled state and request a new entangled state.
				\end{itemize}
				We let $\Omega(t)$ denote all histories up to time $t$, where every element $h^t\in\Omega(t)$ is a sequence of the form
				\begin{equation}
					h^t=(x_1,a_1,x_2,a_2,\dotsc,a_{t-1},x_t),
				\end{equation}
				with $x_j\in\mathcal{X}$ for all $1\leq j\leq t$ and $a_j\in\mathcal{A}$ for all $1\leq j\leq t-1$. The corresponding random variable for the history is
				\begin{equation}
					H_{e^j}(t)\coloneqq (X_{e^j}(1),A_{e^j}(1),X_{e^j}(2),A_{e^j}(2),\dotsc,A_{e^j}(t-1),X_{e^j}(t)).
				\end{equation}
				
				The random variables $X_{e^j}(t)$, $A_{e^j}(t)$, and $H_{e^j}(t)$ are mutually independent by definition for all~$e^j$.
				
			\item The transition maps are defined to be time independent, and we denote them by $\mathcal{T}_{e^j}^{x_t,a_t,x_{t+1}}\equiv \mathcal{T}_{E_t^{e^j}\to E_{t+1}^{e^j}}^{x_t,a_t,x_{t+1}}$ for all $x_t,a_t,x_{t+1}\in\{0,1\}$ and all $t\geq 1$, where
				\begin{align}
					\mathcal{T}_{e^j}^{x_t,1,1}(\sigma)&\coloneqq \Tr[\sigma](\mathcal{M}_{e^j}^1\circ\mathcal{S}_{e^j})(\rho_{e^j}^S)\quad\forall~x_t\in\{0,1\},\label{eq-network_QDP_trans_1}\\
					\mathcal{T}_{e^j}^{x_t,1,0}(\sigma)&\coloneqq \Tr[\sigma](\mathcal{M}_{e^j}^0\circ\mathcal{S}_{e^j})(\rho_{e^j}^S)\quad\forall~ x_t\in\{0,1\},\label{eq-network_QDP_trans_2}\\
					\mathcal{T}_{e^j}^{1,0,1}(\sigma)&\coloneqq \mathcal{N}_{e^j}(\sigma),\label{eq-network_QDP_trans_3}\\
					\mathcal{T}_{e^j}^{0,0,0}(\sigma)&\coloneqq \sigma\label{eq-network_QDP_trans_4}
				\end{align}
				for all linear operators $\sigma$, where we recall the definitions of the source transmission channel $\mathcal{S}_{e^j}$, the heralding quantum instrument $\{\mathcal{M}_{e^j}^0,\mathcal{M}_{e^j}^1\}$, and the decoherence channel $\mathcal{N}_{e^j}$ from Section~\ref{sec-practical_elem_link_generation}. Superscript combinations not defined above are equal to the zero map by definition, i.e., $\mathcal{T}_{e^j}^{0,0,1}=0$ and $\mathcal{T}_{e^j}^{1,0,0}=0$. The maps $\mathcal{T}_{e^j}^{0;x_1}\equiv\mathcal{T}_{E_0^{e^j}\to E_1^{e^j}}$, $x_1\in\{0,1\}$, are defined to be
				\begin{align}
					\mathcal{T}_{e^j}^{0;0}&\coloneqq\mathcal{M}_{e^j}^0\circ\mathcal{S}_{e^j},\label{eq-network_QDP_trans_5}\\
					\mathcal{T}_{e^j}^{0;1}&\coloneqq\mathcal{M}_{e^j}^1\circ\mathcal{S}_{e^j}.\label{eq-network_QDP_trans_6}
				\end{align}
				
			\item Given a pure target state $\psi_{e^j}^{\text{target}}=\ket{\psi^{\text{target}}}\bra{\psi^{\text{target}}}_{e^j}$, the reward at time $t\geq 1$ is defined as follows:
				\begin{align}
					\mathcal{R}_{e^j}^{t;h^{t+1},1}(\cdot)&=\psi_{e^j}^{\text{target}}(\cdot)\psi_{e^j}^{\text{target}},\\
					\mathcal{R}_{e^j}^{t;h^{t+1},0}(\cdot)&=(\mathbbm{1}_{e^j}-\psi_{e^j}^{\text{target}})(\cdot)(\mathbbm{1}_{e^j}-\psi_{e^j}^{\text{target}}),
				\end{align}
				for all $h^{t+1}\in\Omega(t)$, and the functions $R_{e^j}(t):\Omega(t+1)\times\{0,1\}\to\mathbb{R}$ are defined as follows:
				\begin{align}
					R_{e^j}(t)(h^{t+1},0)&=0,\\
					R_{e^j}(t)(h^{t+1},1)&=\delta_{x_{t+1},1},
				\end{align}
				for all histories $h^{t+1}=(x_1,a_1,\dotsc,x_t,a_t,x_{t+1})\in\Omega(t+1)$.
				
			\item A $T$-step policy for the agent is a sequence of the form $\pi_T^{e^j}=(d_1^{e^j},d_2^{e^j},\dotsc,d_T^{e^j})$, where the decision functions $d_t^{e^j}:\Omega(t)\times\mathcal{A}\to[0,1]$ are defined to be 
				\begin{equation}
					d_t^{e^j}(h^t)(a_t)\coloneqq\Pr[A_{e^j}(t)=a_t|H_{e^j}(t)=h^t]
				\end{equation}
				for all $1\leq t\leq T$, all histories $h^t\in\Omega(t)$, and all $a_t\in\mathcal{A}$.~\defqed
		\end{itemize}
	\end{definition}
	\smallskip
	\begin{remark}\label{rem-elem_link_QDP}
		Let us make some remarks about our definition of the quantum decision process for elementary link generation.
		\begin{itemize}
			\item The quantum decision process uses discrete time steps to model the generation of elementary links in a quantum network. Physically, each time step is equivalent to the classical communication time between the nodes for the purpose of heralding. A discrete-time model for quantum networks is also used in Ref.~\cite{CRDW19}.
			
			\item Our definition of the transition maps is consistent with our description of the decision process given at the beginning of Section~\ref{sec-indep_agents}: if the action is to wait, and the elementary link is currently active, then we apply the decoherence channel $\mathcal{N}_e$ to the quantum state of the elementary link; if the action is to request, then the current quantum state of the elementary link is discarded and the source transmission and heralding is performed again. If the elementary link is currently not active and the action is to wait, then the quantum state stays as it is.
			
			\item  Note that $X_e(1)=X_e$, where $X_e$ is the random variable defined in Section~\ref{sec-random_subgraphs}. When we have a multigraph $G=(V,E,c)$ describing the physical (elementary) links in the network, then recall that $c(e)$ is the number of parallel edges between the nodes in $e\in E$. Therefore, $c(e)$ is the maximum number of entangled states that can be shared by the nodes of the edge per time step (see Figure~\ref{fig-net_agent_env_0}). We define
				\begin{equation}\label{eq-network_QDP_num_active_parallel_links}
					N_e(t)\coloneqq\sum_{j=1}^{c(e)}X_{e^j}(t)
				\end{equation}
				to be the \textit{number of active parallel elementary links} at time $t$, where $X_{e^j}(t)$ is the status of the $j^{\text{th}}$ parallel elementary link of corresponding to $e\in E$ at time $t$. In general, $N_e(t)$ is a Poisson-binomial random variable (see, e.g., Ref.~\cite{CL97}).
				
				In the context of flow problems in graphs, $c(e)$ represents the capacity of $e\in E$, and $N_e(t)$ represents the flow along $e$ at time $t$. The task of finding the maximum number of edge-disjoint paths in a network (as outlined in Section~\ref{sec-graph_theory}) can be phrased as a flow problem, in which case the expected flows $\mathbb{E}[N_e(t)]_{\pi}$ for all $e\in E$ can be used to determine the rates at which virtual links can be created in the network; see Refs.~\cite{AK17,BA17,BAKE20,TKRW19,CERW20}.
			
			\item The probability distributions of the random variables $X_{e^j}(t)$ and $A_{e^j}(t)$, and more generally the probability distribution of the histories $H_{e^j}(t)$, depend on the particular policy under consideration. We use the notation $\Pr[H_{e^j}(t)=h^t]_{\pi_t^{e^j}}$ to denote the probability of a history $h^t\in\Omega(t)$ according to a policy $\pi_t^{e^j}$.
			
				Also, we note that the set of all histories is $\Omega(t)=\{0,1\}^{2t-1}$ for all $t\geq 1$, because both the set $\mathcal{A}$ of actions and the set $\mathcal{X}$ of observations are equal to $\{0,1\}$.
			
			\item Using \eqref{eq-QDP_transition_prob_gen} and the definition of the transition maps, we have the following values for the transition probabilities for all $t\geq 1$ and for any history $h^t=(x_1,a_1,\dotsc,a_{t-1},x_t)$:
				\begin{align}
					\Pr[X_{e^j}(t+1)=0|X_{e^j}(t)=x_t,A_{e^j}(t)=1]_{\pi^{e^j}}&=1-p_{e^j},\label{eq-link_trans_prob_1}\\
					\Pr[X_{e^j}(t+1)=1|X_{e^j}(t)=x_t,A_{e^j}(t)=1]_{\pi^{e^j}}&=p_{e^j},\label{eq-link_trans_prob_2}\\
					\Pr[X_{e^j}(t+1)=x_{t+1}|X_{e^j}(t)=x_t,A_{e^j}(t)=0]_{\pi^{e^j}}&=\delta_{x_t,x_{t+1}}\quad\forall~x_{t+1}\in\{0,1\}.\label{eq-link_trans_prob_4}
				\end{align}
				Observe that the transition probabilities are time independent.
				
			\item Note that in this decision process the agent is classical. We thus denote the policy of the agent in terms of decision functions, as in the classical case, which give us the probability that a particular action is taken conditioned on the history.~\defqed
		\end{itemize}
	\end{remark}
	
	\begin{definition}[Memory time random variable]\label{def-network_QDP_mem_time_RV}
		Let $G=(V,E,c)$ be the graph corresponding to the physical (elementary) links of a quantum network, and let $e^j$, $e\in E$, $1\leq j\leq c(e)$, be arbitrary. For every policy $\pi$ for the elementary link corresponding to $e^j$, we define the random variable $M_{e^j}^{\pi}(t)$ to be the amount of time that the quantum state of the elementary link corresponding to ${e^j}$ is held in memory at time $t$ when following the policy $\pi$. It is defined by the recursion relation
		\begin{equation}\label{eq-mem_time_def1}
			M_{e^j}^{\pi}(t)=\left\{\begin{array}{l l} M_{e^j}^{\pi}(t-1)+X_{e^j}(t) & \text{if }A_{e^j}(t-1)=0,\\ X_{e^j}(t)-1 & \text{if }A_{e^j}(t-1)=1, \end{array}\right.
		\end{equation}
		where $M_{e^j}^{\pi}(0)\equiv -1$. Alternatively, an explicit expression for $M_{e^j}^{\pi}(t)$ is the following:
		\begin{align}
			M_{e^j}^{\pi}(t)&=A_{e^j}(0)(X_{e^j}(1)+X_{e^j}(2)+\dotsb+X_{e^j}(t)-1)\overline{A_{e^j}(1)}\,\overline{A_{e^j}(2)}\dotsb\overline{A_{e^j}(t-1)}\nonumber\\
			&\quad +A_{e^j}(1)(X_{e^j}(2)+X_{e^j}(3)+\dotsb+X_{e^j}(t)-1)\overline{A_{e^j}(2)}\,\overline{A_{e^j}(3)}\dotsb\overline{A_{e^j}(t-1)}\nonumber\\
			&\quad +A_{e^j}(2)(X_{e^j}(3)+X_{e^j}(4)+\dotsb+X_{e^j}(t)-1)\overline{A_{e^j}(3)}\,\overline{A_{e^j}(4)}\dotsb\overline{A_{e^j}(t-1)}\nonumber\\
			&\quad +\dotsb\nonumber\\
			&\quad +A_{e^j}(t-1)(X_{e^j}(t)-1)\\
			&=\sum_{j=1}^t A_{e^j}(j-1)\left(\sum_{\ell=j}^t X_{e^j}(\ell)-1\right)\prod_{k=j}^{t-1}\overline{A_{e^j}(k)}, \label{eq-mem_time_def2}
		\end{align}
		where $A_{e^j}(0)\equiv 1$ and $\overline{A_{e^j}(k)}\coloneqq 1-A_e(k)$ for all $k\geq 1$.~\defqed
	\end{definition}
	
	Intuitively, the quantity $M_{e^j}^{\pi}(t)$ is the number of consecutive time steps up to the $t^{\text{th}}$ time step that the action ``wait'' is performed since the most recent ``request'' action. The value $M_{e^j}^{\pi}(t)=-1$ can be thought of as the resting state of the quantum memory, when it is not loaded.

	Now, from Proposition~\ref{prop-QDP_cq_state}, we have that the classical-quantum state of an elementary link corresponding to an edge $e^j$ of a graph $G=(V,E,c)$ of a quantum network, with $e\in E$ and $1\leq j\leq c(e)$, is
	\begin{equation}\label{eq-network_QDP_cq_state}
		\widehat{\sigma}_{H_t^{e^j}E_t^{e^j}}^{\pi}(t)=\sum_{h^t\in\Omega(t)}\ket{h^t}\bra{h^t}_{H_t^{e^j}}\otimes\widetilde{\sigma}_{E_t^{e^j}}^{\pi}(t;h^t),
	\end{equation}
	where $\pi=(d_1,d_2,\dotsc,d_{t-1})$ is an arbitrary policy and
	\begin{equation}\label{eq-network_QDP_cond_state_unnormalized}
		\widetilde{\sigma}_{E_t^{e^j}}^{\pi}(t;h^t)=\left(\prod_{j=1}^{t-1}d_j(h_j^t)(a_j)\right)\left(\mathcal{T}_{E_{t-1}^{e^j}\to E_t^{e^j}}^{x_{t-1},a_{t-1},x_t}\circ\dotsb\circ\mathcal{T}_{E_2^{e^j}\to E_3^{e^j}}^{x_2,a_2,x_3}\circ\mathcal{T}_{E_1^{e^j}\to E_2^{e^j}}^{x_1,a_1,x_2}\circ\mathcal{T}_{E_0^{e^j}\to E_1^{e^j}}^{0;x_1}\right)(\rho_{E_0^{e^j}}^S).
	\end{equation}
	From Remark~\ref{rem-QDP_cond_states}, the conditional states are defined to be
	\begin{equation}\label{eq-network_QDP_cond_state}
		\sigma_{E_t^{e^j}}^{\pi}(t|h^t)=\frac{\widetilde{\sigma}_{E_t^{e^j}}^{\pi}(t;h^t)}{\Tr[\widetilde{\sigma}_{E_t^{e^j}}^{\pi}(t;h^t)]},
	\end{equation}
	and the probability of a history $h^t\in\Omega(t)$ under a policy $\pi^{e^j}$ is given by
	\begin{equation}\label{eq-network_QDP_history_prob}
		\Pr[H_{e^j}(t)=h^t]_{\pi}=\Tr[\widetilde{\sigma}_{E_t^{e^j}}^{\pi}(t;h^t)].
	\end{equation}
	From now on, for ease of notation, we make frequent use of the abbreviations
	\begin{equation}
		\widehat{\sigma}_{e^j}^{\pi}(t)\equiv\widehat{\sigma}_{H_t^{e^j}E_t^{e^j}}^{\pi}(t),\quad \widetilde{\sigma}_{e^j}^{\pi}(t;h^t)\equiv\widetilde{\sigma}_{E_t^{e^j}}^{\pi}(t;h^t),\quad \sigma_{e^j}^{\pi}(t|h^t)\equiv\sigma_{E_t^{e^j}}^{\pi}(t|h^t).
	\end{equation}
	In other words, we simply write ``$e^j$'' in the subscript instead of ``$E_t^{e^j}$'' whenever the time-dependence of the quantum system of the environment is understood or unimportant.
	
	Using the definition of the transition maps in \eqref{eq-network_QDP_trans_1}--\eqref{eq-network_QDP_trans_6}, along the memory time random variable from Definition~\ref{def-network_QDP_mem_time_RV}, we can derive an explicit expression for the conditional quantum state $\sigma_{e^j}^{\pi^e}(t|h^t)$.
	
	\begin{theorem}[Quantum state of an elementary link]\label{prop-link_quantum_state}
		Let $G=(V,E,c)$ be the graph corresponding to the physical (elementary) links of a quantum network, and let $e^j$, $e\in E$, $1\leq j\leq c(e)$, be arbitrary. For all $t\geq 1$ and histories $h^t=(x_1,a_1,\dotsc,a_{t-1},x_t)\in\Omega(t)$, and for all policies $\pi$, we have
		\begin{equation}\label{eq-link_state_unnormalized}
			\sigma_{e^j}^{\pi}(t|h^t)=x_t\,\rho_{e^j}\!\left(M_{e^j}^{\pi}(t)(h^t)\right)+(1-x_t)\tau_{e^j}^{\varnothing},
		\end{equation}
		where from \eqref{eq-elem_link_state_in_mem} we recall that
		\begin{equation}
			\rho_{e^j}(m)=\mathcal{N}_{e^j}^{\circ m}(\rho_{e^j}^0),
		\end{equation}
		and from \eqref{eq-initial_link_states} and \eqref{eq-elem_link_success_prob} we recall that
		\begin{equation}\label{eq-initial_link_states_2}
			\rho_{e^j}^0=\frac{(\mathcal{M}_{e^j}^1\circ\mathcal{S}_{e^j})(\rho_{e^j}^S)}{p_{e^j}},\quad \tau_{e^j}^{\varnothing}=\frac{(\mathcal{M}_{e^j}^0\circ\mathcal{S}_{e^j})(\rho_{e^j}^S)}{p_{e^j}},\quad p_{e^j}=\Tr[(\mathcal{M}_{e^j}^1\circ\mathcal{S}_{e^j})(\rho_{e^j}^S)].
		\end{equation}
		Furthermore,
		\begin{equation}\label{eq-hist_prob_general}
			\Pr[H_{e^j}(t)=h^t]_{\pi}=\left(\prod_{j=1}^{t-1} d_j(h_j^t)(a_j)\right) p_{e^j}^{N_{e^j}^{\text{succ}}(t)(h^t)}(1-p_{e^j})^{N_{e^j}^{\text{req}}(t)(h^t)-N_{e^j}^{\text{succ}}(t)(h^t)}
		\end{equation}
		for all histories $h^t$, where
		\begin{equation}
			N_{e^j}^{\text{req}}(t)\coloneqq\sum_{j=1}^t A_{e^j}(j-1),\quad N_{e^j}^{\text{succ}}(t)\coloneqq \sum_{j=1}^t A_{e^j}(j-1)X_{e^j}(j)
		\end{equation}
		are the number of link requests and the number of successful link requests, respectively, up to time~$t$.
	\end{theorem}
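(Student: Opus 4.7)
The plan is to prove both \eqref{eq-link_state_unnormalized} and \eqref{eq-hist_prob_general} simultaneously by induction on $t\geq 1$, working directly from the unnormalized state formula in \eqref{eq-network_QDP_cond_state_unnormalized} and the definitions of the transition maps in \eqref{eq-network_QDP_trans_1}--\eqref{eq-network_QDP_trans_6}, together with the recursive definition \eqref{eq-mem_time_def1} of $M_{e^j}^{\pi}(t)$.

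For the base case $t=1$, a history is just $h^1=x_1\in\{0,1\}$, the decision-function prefactor is empty, and $\widetilde{\sigma}_{e^j}^{\pi}(1;x_1)=\mathcal{T}_{e^j}^{0;x_1}(\rho_{e^j}^S)=(\mathcal{M}_{e^j}^{x_1}\circ\mathcal{S}_{e^j})(\rho_{e^j}^S)$ by \eqref{eq-network_QDP_trans_5}--\eqref{eq-network_QDP_trans_6}. Its trace is $p_{e^j}^{x_1}(1-p_{e^j})^{1-x_1}$, which matches \eqref{eq-hist_prob_general} since $N^{\text{req}}_{e^j}(1)=A_{e^j}(0)=1$ and $N^{\text{succ}}_{e^j}(1)=x_1$ by our convention $A_{e^j}(0)\equiv 1$. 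Normalizing using \eqref{eq-initial_link_states_2} gives $\rho_{e^j}^{0}$ when $x_1=1$ and $\tau_{e^j}^{\varnothing}$ when $x_1=0$; since $M_{e^j}^{\pi}(1)(x_1)=x_1-1$ by the convention $M_{e^j}^{\pi}(0)\equiv -1$ and \eqref{eq-mem_time_def2}, this agrees with \eqref{eq-link_state_unnormalized}.

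For the inductive step, suppose the claims hold at time $t-1$. Then
\begin{equation}
\widetilde{\sigma}_{e^j}^{\pi}(t;h^t)=d_{t-1}(h_{t-1}^t)(a_{t-1})\,\mathcal{T}_{e^j}^{x_{t-1},a_{t-1},x_t}\!\left(\widetilde{\sigma}_{e^j}^{\pi}(t-1;h^{t-1})\right),
\end{equation}
so I split into two cases according to $a_{t-1}$. If $a_{t-1}=1$, \eqref{eq-network_QDP_trans_1}--\eqref{eq-network_QDP_trans_2} show that the map sends any input of trace $q$ to $q\cdot(\mathcal{M}_{e^j}^{x_t}\circ\mathcal{S}_{e^j})(\rho_{e^j}^S)$, which has trace $q\cdot p_{e^j}^{x_t}(1-p_{e^j})^{1-x_t}$ and normalized form $x_t\rho_{e^j}^{0}+(1-x_t)\tau_{e^j}^{\varnothing}$; since \eqref{eq-mem_time_def1} gives $M_{e^j}^{\pi}(t)=x_t-1$ in this case, this equals $x_t\rho_{e^j}(M_{e^j}^{\pi}(t))+(1-x_t)\tau_{e^j}^{\varnothing}$. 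If $a_{t-1}=0$, then \eqref{eq-network_QDP_trans_3}--\eqref{eq-network_QDP_trans_4} force $x_{t-1}=x_t$ (else the contribution is zero), and the map is trace-preserving. When $x_{t-1}=x_t=1$, the inductive hypothesis and $\mathcal{N}_{e^j}\circ\mathcal{N}_{e^j}^{\circ m}=\mathcal{N}_{e^j}^{\circ m+1}$ combine with $M_{e^j}^{\pi}(t)=M_{e^j}^{\pi}(t-1)+1$ to yield $\rho_{e^j}(M_{e^j}^{\pi}(t))$; when $x_{t-1}=x_t=0$, the identity channel leaves $\tau_{e^j}^{\varnothing}$ unchanged and $M_{e^j}^{\pi}(t)=M_{e^j}^{\pi}(t-1)$.

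The history-probability formula \eqref{eq-hist_prob_general} falls out as a byproduct: at each step, multiplication by $d_{t-1}(h_{t-1}^t)(a_{t-1})$ gives the decision-function factor, and the trace picks up $p_{e^j}^{x_t}(1-p_{e^j})^{1-x_t}$ exactly when $a_{t-1}=1$ and is unchanged when $a_{t-1}=0$. Since $A_{e^j}(j-1)=a_{j-1}$ and $X_{e^j}(j)=x_j$ on the history $h^t$, the total exponent of $p_{e^j}$ accumulates to $\sum_{j=1}^{t}a_{j-1}x_j=N_{e^j}^{\text{succ}}(t)(h^t)$ and that of $(1-p_{e^j})$ to $\sum_{j=1}^{t}a_{j-1}(1-x_j)=N_{e^j}^{\text{req}}(t)(h^t)-N_{e^j}^{\text{succ}}(t)(h^t)$, as claimed. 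The only delicate point, and the one requiring the most care in the write-up, is bookkeeping the convention $A_{e^j}(0)\equiv 1$ and $M_{e^j}^{\pi}(0)\equiv-1$ consistently across the base case and the two sub-cases of the inductive step; once these are aligned, the transition maps do all of the real work and the rest of the argument is essentially tracking factors.
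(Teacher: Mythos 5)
Your proof is correct and takes essentially the same route as the paper's: both arguments work from the unnormalized-state formula \eqref{eq-network_QDP_cond_state_unnormalized}, use the case structure of the transition maps \eqref{eq-network_QDP_trans_1}--\eqref{eq-network_QDP_trans_6} to update the memory time via its recursion and to accumulate the factors $p_{e^j}^{x_t}(1-p_{e^j})^{1-x_t}$ on each ``request'' step, with the same base case at $t=1$. The paper presents the step from $t-1$ to $t$ slightly less formally, whereas you package it as an explicit induction with a clean $a_{t-1}\in\{0,1\}$ case split, but the substance is identical.
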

	
	\begin{proof}
		First, let us observe that the statement of the proposition is true for $t=1$, because by \eqref{eq-network_QDP_trans_5}, \eqref{eq-network_QDP_trans_6}, and \eqref{eq-network_QDP_cond_state_unnormalized} we can write
		\begin{equation}
			\widetilde{\sigma}_{e^j}^{\pi}(1;x_1)=x_1\widetilde{\rho}_{e^j}^0+(1-x_1)\widetilde{\tau}_{e^j}^{\varnothing},
		\end{equation}
		where $\widetilde{\rho}_{e^j}^0\coloneqq(\mathcal{M}_{e^j}^1\circ\mathcal{S}_{e^j})(\rho_{e^j}^S)$ and $\widetilde{\tau}_{e^j}^{\varnothing}\coloneqq(\mathcal{M}_{e^j}^0\circ\mathcal{S}_{e^j})(\rho_{e^j}^S)$. Then, indeed, we have $M_{e^j}^{\pi}(1)=0$ according to the definition in \eqref{eq-mem_time_def1}, as required, if $x_1=1$. Furthermore,
		\begin{equation}
			\Tr[\widetilde{\sigma}_{e^j}(1;x_1)]=x_1p_{e^j}+(1-x_1)(1-p_{e^j})=p_{e^j}^{x_1}(1-p_{e^j})^{1-x_1},
		\end{equation}
		so that
		\begin{align}
			\sigma_{e^j}^{\pi}(1|x_1)&=\frac{x_1\widetilde{\rho}_{e^j}^0+(1-x_1)\widetilde{\tau}_{e^j}^{\varnothing}}{p_{e^j}^{x_1}(1-p_{e^j})^{1-x_1}}\\
			&=\left\{\begin{array}{l l} \rho_{e^j}^0 & \text{if }x_1=1,\\ \tau_{e^j}^{\varnothing} & \text{if }x_1=0, \end{array}\right.\\
			&=x_1\rho_{e^j}^0+(1-x_1)\tau_{e^j}^{\varnothing}
		\end{align}
		where we recall the definitions of $\rho_{e^j}^0$ and $\tau_{e^j}^{\varnothing}$ from \eqref{eq-initial_link_states_2}.
		
		Now, for $t\geq 2$, we use \eqref{eq-network_QDP_cond_state_unnormalized}. Based on the definition of the transition maps, for every time step $j>1$ in which the action ``wait'' (i.e., $A_{e^j}(j)=0$) is performed and the elementary link is active (i.e., $X_{e^j}(j)=1$), the elementary link stays active at time step $j+1$, and thus by definition the memory time must be incremented by one, which is consistent with the definition of the memory time $M_{e^j}^{\pi}(t)$ given in \eqref{eq-mem_time_def1}, and the quantum state of the elementary link goes from $\rho_{e^j}(M_{e^j}^{\pi}(t))$ to $\rho_{e^j}(M_{e^j}^{\pi}(t)+1)$. If instead the elementary link is active at time $j$ and the action ``request'' is performed (i.e., $A_{e^j}(j)=1$), then the quantum state of the elementary link is discarded and is replaced either by the state $\rho_{e^j}^0$ (if $X_{e^j}(j+1)=1$) with probability $p_{e^j}$ or by the state $\tau_{e^j}^{\varnothing}$ (if $X_{e^j}(j+1)=0$) with probability $1-p_{e^j}$. In the former case, the memory time must be reset to zero, consistent with \eqref{eq-mem_time_def1}, and in the latter case, the memory time is $-1$, also consistent with \eqref{eq-mem_time_def1}.
		
		Furthermore, by definition of the transition maps, every time the action ``request'' is performed, we obtain a factor of $p_{e^j}$ (if the request succeeds) or $1-p_{e^j}$ (if the request fails). If the action ``wait'' is performed, then we obtain no additional multiplicative factors. The quantity $N_{e^j}^{\text{succ}}(t-1)$ is, by definition, equal to the number of requests that succeeded in $t-1$ time steps. Therefore, overall, we obtain a factor $p_{e^j}^{N_{e^j}^{\text{succ}}(t-1)}$ at the $(t-1)^{\text{st}}$ time step for the number of successful requests. The number of failed requests in $t-1$ time steps is given by
		\begin{align}
			\sum_{j=1}^{t-1} A_{e^j}(j-1)(1-X_{e^j}(j))&=\sum_{j=1}^{t-1} A_{e^j}(j-1)-\sum_{j=1}^{t-1} A_{e^j}(j-1)X_{e^j}(j)\\
			&=N_{e^j}^{\text{req}}(t-1)-N_{e^j}^{\text{succ}}(t-1),
		\end{align}
		so that we obtain an overall factor of $(1-p_{e^j})^{N_{e^j}^{\text{req}}(t-1)-N_{e^j}^{\text{succ}}(t-1)}$ at the $(t-1)^{\text{st}}$ time step for the failed requests. Also, the memory time at the $(t-1)^{\text{st}}$ time step is $M_{e^j}^{\pi}(t-1)(h_{t-1}^t)$, and then because the quantum state is either $\rho_{e^j}(M_{e^j}^{\pi}(t-1)(h_{t-1}^t))$ or $\tau_{e^j}^{\varnothing}$, we obtain
		\begin{align}
			\widetilde{\sigma}_{e^j}^{\pi}(t;h^t)&=\left(\prod_{j=1}^{t-1}d_j(h_j^t)(a_j)\right)p_{e^j}^{N_{e^j}^{\text{succ}}(t-1)(h_{t-1}^t)}(1-p_{e^j})^{N_{e^j}^{\text{req}}(t-1)(h_{t-1}^t)-N_{e^j}^{\text{succ}}(t-1)(h_{t-1}^t)}\nonumber\\
			&\qquad\qquad\times\left(x_{t-1}\mathcal{T}_{e^j}^{1,a_{t-1},x_t}(\rho(M_{e^j}^{\pi}(t-1)(h_{t-1}^t)))+(1-x_{t-1})\mathcal{T}_{e^j}^{0,a_{t-1},x_t}(\tau_{e^j}^{\varnothing})\right)\\
			&=\left(\prod_{j=1}^{t-1}d_j(h_j^t)(a_j)\right)p_{e^j}^{N_{e^j}^{\text{succ}}(t-1)(h_{t-1}^t)}(1-p_{e^j})^{N_{e^j}^{\text{req}}(t-1)(h_{t-1}^t)-N_{e^j}^{\text{succ}}(t-1)(h_{t-1}^t)}\nonumber\\
			&\qquad\qquad\times p_{e^j}^{a_{t-1}x_t}(1-p_{e^j})^{a_{t-1}(1-x_t)}(x_t\rho_{e^j}(M_{e^j}^{\pi}(t)(h^t))+(1-x_t)\tau_{e^j}^{\varnothing}) \\
			&=\left(\left(\prod_{j=1}^{t}d_j(h_j^t)(a_j)\right)p_{e^j}^{N_{e^j}^{\text{succ}}(t)(h^t)}(1-p_{e^j})^{N_{e^j}^{\text{req}}(t)(h^t)-N_{e^j}^{\text{succ}}(t)(h^t)}\right)(x_t\rho_{e^j}(M_{e^j}^{\pi}(t)(h^t))+(1-x_t)\tau_{e^j}^{\varnothing}).
		\end{align}
		Then, because $\Pr[H_{e^j}(t)=h^t]_{\pi}=\Tr[\widetilde{\sigma}_{e^j}^{\pi}(t;h^t)]$, we have
		\begin{equation}
			\Pr[H_{e^j}(t)=h^t]_{\pi}=\left(\prod_{j=1}^{t}d_j(h_j^t)(a_j)\right)p_{e^j}^{N_{e^j}^{\text{succ}}(t)(h^t)}(1-p_{e^j})^{N_{e^j}^{\text{req}}(t)(h^t)-N_{e^j}^{\text{succ}}(t)(h^t)},
		\end{equation}
		as required. Finally,
		\begin{equation}
			\sigma_{e^j}^{\pi}(t|h^t)=\frac{\widetilde{\sigma}_{e^j}^{\pi}(t;h^t)}{\Tr[\widetilde{\sigma}_{e^j}^{\pi}(t;h^t)]}=x_t\,\rho_{e^j}\!\left(M_{e^j}^{\pi}(t)(h^t)\right)+(1-x_t)\tau_{e^j}^{\varnothing},
		\end{equation}
		which completes the proof.
	\end{proof}
	
	Using Theorem~\ref{prop-link_quantum_state}, we immediately obtain an expression for the expected quantum state of an elementary link for all times $t\geq 1$, which is defined according to the general definition in \eqref{eq-QDP_exp_state}.
	
	\begin{corollary}[Expected quantum state of an elementary link]\label{cor-link_avg_q_state}
		Let $G=(V,E,c)$ be the graph corresponding to the physical (elementary) links of a quantum network, and let $e^j$, $e\in E$, $1\leq j\leq c(e)$, be arbitrary. For all $t\geq 1$ and for all policies $\pi$, the expected quantum state of the elementary link corresponding to $e^j$ is
		\begin{equation}\label{eq-link_avg_q_state}
			\sigma_{e^j}^{\pi}(t)=(1-\Pr[X_{e^j}(t)=1]_{\pi})\tau_{e^j}^{\varnothing}+\sum_{m}\Pr[X_{e^j}(t)=1,M_{e^j}^{\pi}(t)=m]_{\pi}\,\rho_{e^j}(m),
		\end{equation}
		where the sum is over all possible values of the memory time, which in general depends on the policy $\pi$.
	\end{corollary}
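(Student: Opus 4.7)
The plan is to start from the definition of the expected quantum state given in \eqref{eq-QDP_exp_state}, which in the present notation reads
\begin{equation}
\sigma_{e^j}^{\pi}(t) = \sum_{h^t \in \Omega(t)} \widetilde{\sigma}_{e^j}^{\pi}(t; h^t),
\end{equation}
and then to plug in the structural formula for $\widetilde{\sigma}_{e^j}^{\pi}(t; h^t)$ that we already have from Theorem~\ref{prop-link_quantum_state}. Specifically, using \eqref{eq-network_QDP_cond_state} together with \eqref{eq-network_QDP_history_prob}, every unnormalized conditional state factorizes as $\widetilde{\sigma}_{e^j}^{\pi}(t; h^t) = \Pr[H_{e^j}(t) = h^t]_{\pi}\, \sigma_{e^j}^{\pi}(t\mid h^t)$, and the latter is given by \eqref{eq-link_state_unnormalized} as the affine combination $x_t\,\rho_{e^j}(M_{e^j}^{\pi}(t)(h^t)) + (1 - x_t)\,\tau_{e^j}^{\varnothing}$ in terms of the observation $x_t$ of the history.

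Next, I would split the sum over histories into the two disjoint pieces determined by the value of $x_t \in \{0,1\}$. For the $x_t = 0$ piece, the conditional state is just $\tau_{e^j}^{\varnothing}$, which can be pulled out, leaving $\sum_{h^t:\,x_t = 0}\Pr[H_{e^j}(t) = h^t]_{\pi} = \Pr[X_{e^j}(t) = 0]_{\pi} = 1 - \Pr[X_{e^j}(t) = 1]_{\pi}$, where the first equality uses that $X_{e^j}(t)$ is the marginal of $H_{e^j}(t)$ on the last observation. This produces the first term of \eqref{eq-link_avg_q_state}. For the $x_t = 1$ piece, the conditional state is $\rho_{e^j}(M_{e^j}^{\pi}(t)(h^t))$, which depends on the history only through the value of the memory time; I would therefore reorganize the sum by grouping histories according to the value $m$ of $M_{e^j}^{\pi}(t)(h^t)$. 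Each group contributes $\rho_{e^j}(m)$ weighted by the joint probability $\sum_{h^t:\,x_t = 1,\,M_{e^j}^{\pi}(t)(h^t) = m}\Pr[H_{e^j}(t) = h^t]_{\pi} = \Pr[X_{e^j}(t) = 1,\, M_{e^j}^{\pi}(t) = m]_{\pi}$, which produces the second term of \eqref{eq-link_avg_q_state}.

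There is essentially no hard step here; the corollary is an immediate consequence of marginalizing the explicit expression from Theorem~\ref{prop-link_quantum_state}. The only point requiring mild care is to note that the set of attainable values of $M_{e^j}^{\pi}(t)$ depends on the policy (in particular, on which actions $a_j$ can have nonzero decision probability along histories consistent with $x_t = 1$), so the sum over $m$ should be understood as running over this policy-dependent support; histories that have zero probability under $\pi$ contribute nothing and can be safely included or excluded. With this minor bookkeeping point acknowledged, combining the two pieces yields \eqref{eq-link_avg_q_state} directly.
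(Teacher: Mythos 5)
Your proposal is correct and follows essentially the same route as the paper's own proof: both start from the definition $\sigma_{e^j}^{\pi}(t)=\Tr_{H_t^{e^j}}[\widehat{\sigma}_{H_t^{e^j}E_t^{e^j}}^{\pi}(t)]=\sum_{h^t}\widetilde{\sigma}_{e^j}^{\pi}(t;h^t)$, substitute the expression from Theorem~\ref{prop-link_quantum_state}, split the sum according to $x_t\in\{0,1\}$, and regroup the $x_t=1$ terms by the value of the memory time. Your remark about the policy-dependent support of $M_{e^j}^{\pi}(t)$ matches the paper's closing observation as well.
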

	
	\begin{proof}
		Using the result of Theorem~\ref{prop-link_quantum_state}, along with \eqref{eq-QDP_exp_state}, the expected quantum state of the link at time $t\geq 1$ is given by
		\begin{align}
			\sigma_{e^j}^{\pi}(t)&=\Tr_{H_t^{e^j}}[\widehat{\sigma}_{H_t^{e^j}E_t^{e^j}}^{\pi}(t)]\\
			&=\sum_{h^t\in\Omega(t)}\widetilde{\sigma}_{E_t^{e^j}}^{\pi}(t;h^t)\\
			&=\sum_{h^t\in\Omega(t)}\Pr[H_{e^j}(t)=h^t]_{\pi}\left(X_{e^j}(t)(h^t)\rho_{e^j}\!\left(M_{e^j}^{\pi}(t)(h^t)\right)+(1-X_{e^j}(t)(h^t))\tau_{e^j}^{\varnothing}\right)\\
			&=\sum_{h^t\in\Omega(t):x_t=0}\Pr[H_{e^j}(t)=h^t]_{\pi}\,\tau_{e^j}^{\varnothing}+\sum_{h^t\in\Omega(t):x_t=1}\Pr[H_{e^j}(t)=h^t]_{\pi}\,\rho_{e^j}\!\left(M_{e^j}^{\pi}(t)(h^t)\right)\\
			&=(1-\Pr[X_{e^j}(t)=1]_{\pi})\tau_{e^j}^{\varnothing}+\sum_{m}\Pr[X_{e^j}(t)=1,M_{e^j}^{\pi}(t)=m]_{\pi}\,\rho_{e^j}(m),
		\end{align}
		where to obtain the last equality we rearranged the sum over the set $\{h^t\in\Omega(t):x_t=1\}$ so that the sum is over the possible values of the memory time $m$, which in general depends on the policy $\pi$. This completes the proof.
	\end{proof}

	\begin{theorem}[Elementary link success probability]\label{thm-link_prob}
		Let $G=(V,E,c)$ be the graph corresponding to the physical (elementary) links of a quantum network, and let $e^j$, $e\in E$, $1\leq j\leq c(e)$, be arbitrary. For all $t\geq 1$ and for all policies $\pi$,
		\begin{equation}\label{eq-link_value_prob_via_state}
			\Pr[X_{e^j}(t)=1]_{\pi}=\Tr\left[\ket{1}\bra{1}_{X_t^{e^j}}\widehat{\sigma}_{H_t^{e^j}E_t^{e^j}}^{\pi}(t)\right]=\mathbb{E}[X_{e^j}(t)]_{\pi},
		\end{equation}
		where we recall the definition of the classical-quantum state $\widehat{\sigma}_{H_t^{e^j}E_t^{e^j}}^{\pi}(t)$ in \eqref{eq-network_QDP_cq_state}.
	\end{theorem}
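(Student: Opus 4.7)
The plan is to verify the identity by unpacking the right-hand side using the explicit form of the classical-quantum state given in \eqref{eq-network_QDP_cq_state} and then matching terms with the definition of the marginal distribution of $X_{e^j}(t)$.

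First I would substitute the decomposition
\begin{equation*}
\widehat{\sigma}^{\pi}_{H_t^{e^j}E_t^{e^j}}(t) = \sum_{h^t\in\Omega(t)} \ket{h^t}\bra{h^t}_{H_t^{e^j}} \otimes \widetilde{\sigma}^{\pi}_{E_t^{e^j}}(t;h^t)
\end{equation*}
into the trace. Recalling that $H_t^{e^j} = (X_1^{e^j},\overline{A}_1^{e^j},\dotsc,\overline{A}_{t-1}^{e^j},X_t^{e^j})$ and that the basis vector $\ket{h^t}_{H_t^{e^j}}$ factorizes accordingly, the projector $\ket{1}\bra{1}_{X_t^{e^j}}$ acts nontrivially only on the last register, so that $\bra{h^t}(\ket{1}\bra{1}_{X_t^{e^j}}\otimes \mathbbm{1}_{E_t^{e^j}})\ket{h^t} = \delta_{x_t,1}$ for every history $h^t=(x_1,a_1,\dotsc,a_{t-1},x_t)$. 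Taking the trace over both registers therefore collapses the sum to
\begin{equation*}
\Tr\!\left[\ket{1}\bra{1}_{X_t^{e^j}}\widehat{\sigma}^{\pi}_{H_t^{e^j}E_t^{e^j}}(t)\right] = \sum_{\substack{h^t\in\Omega(t)\\ x_t=1}} \Tr\!\left[\widetilde{\sigma}^{\pi}_{E_t^{e^j}}(t;h^t)\right].
\end{equation*}

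Next, I would invoke \eqref{eq-network_QDP_history_prob}, which identifies $\Tr[\widetilde{\sigma}^{\pi}_{E_t^{e^j}}(t;h^t)]$ with the joint probability $\Pr[H_{e^j}(t)=h^t]_{\pi}$. Summing this joint probability over all histories with $x_t = 1$ is, by the rules of marginalization, exactly $\Pr[X_{e^j}(t)=1]_{\pi}$. This establishes the first equality. The second equality, $\Pr[X_{e^j}(t)=1]_{\pi} = \mathbb{E}[X_{e^j}(t)]_{\pi}$, is an immediate consequence of the fact that $X_{e^j}(t)$ is a $\{0,1\}$-valued (Bernoulli) random variable, so its expectation coincides with the probability of taking the value~$1$.

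There is no real obstacle here: the proof is a direct bookkeeping exercise that exploits the classical-quantum structure of $\widehat{\sigma}^{\pi}_{H_t^{e^j}E_t^{e^j}}(t)$, together with the normalization identity \eqref{eq-network_QDP_history_prob} coming from Proposition~\ref{prop-QDP_cq_state}. The only step requiring mild care is confirming that the projector on the classical history register picks out precisely the histories with $x_t=1$, which follows from the tensor-product structure of the basis of $\mathcal{H}_{\Omega(t)}$ specified in Definition~\ref{def-QDP}.
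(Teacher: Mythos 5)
Your proof is correct and follows essentially the same route as the paper's: expand the classical-quantum state, note that the projector $\ket{1}\bra{1}_{X_t^{e^j}}$ selects the histories with $x_t=1$, identify $\Tr[\widetilde{\sigma}^{\pi}_{E_t^{e^j}}(t;h^t)]$ with $\Pr[H_{e^j}(t)=h^t]_{\pi}$, marginalize, and use the Bernoulli property for the second equality. Your write-up is simply a more detailed spelling-out of the paper's two-line argument.
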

	
	\begin{proof}
		To see the first equality in \eqref{eq-link_value_prob_via_state}, observe that
		\begin{equation}
			\Tr\left[\ket{1}\bra{1}_{X_t^{e^j}}\widehat{\sigma}_{H_t^{e^j}E_t^{e^j}}^{\pi}(t)\right]=\sum_{h^t\in\Omega(t):X_{e^j}(t)(h^t)=1}\Pr[H_{e^j}(t)=h^t]_{\pi}.
		\end{equation}
		The expression on the right-hand side of this equation is equal to $\Pr[X_{e^j}(t)=1]_{\pi}$ by definition of the random variable $X_{e^j}(t)$. The second equality in \eqref{eq-link_value_prob_via_state} holds because $X_{e^j}(t)$ is a binary/Bernoulli random variable.
	\end{proof}
	
	The expected quantum state of the elementary link at time $t\geq 1$, given that the elementary link is active at time $t$, is defined to be
	\begin{align}
		\sigma_{e^j}^{\pi}(t|X_{e^j}(t)=1)&\coloneqq \frac{\Tr_{H_t^{e^j}}\left[\ket{1}\bra{1}_{X_t^{e^j}}\widehat{\sigma}_{H_t^{e^j}E_t^{e^j}}^{\pi}(t)\right]}{\Pr[X_{e^j}(t)=1]_{\pi}}\label{eq-link_avg_q_state_conditional_def}\\[0.2cm]
		&=\sum_{m}\Pr[M_{e^j}^{\pi}(t)=m|X_{e^j}(t)=1]_{\pi}\,\rho_{e^j}(m).\label{eq-link_avg_q_state_conditional}
	\end{align}

	Observe that the expressions in \eqref{eq-link_avg_q_state} and \eqref{eq-link_avg_q_state_conditional} hold for any policy of the agent. Given a particular policy, determining the expected quantum state means determining the joint probability distribution of the random variables $X_{e^j}(t)$ and $M_{e^j}(t)$, i.e., determining the quantities $\Pr[X_{e^j}(t)=1,M_{e^j}^{\pi}(t)=m]_{\pi}$ for all possible values of $m$. The probability distribution of $X_{e^j}(t)$ can then be obtained via marginalization, i.e., via $\Pr[X_{e^j}(t)=1]_{\pi}=\sum_m \Pr[X_{e^j}(t)=1,M_{e^j}^{\pi}(t)=m]_{\pi}$, where the sum is over all possible values of the memory random variable $M_{e^j}^{\pi}(t)$ (which can depend on the policy), or it can be obtained using \eqref{eq-link_avg_q_state_conditional}.
	
	Let us now prove a simple expression for the expected reward after $t$ time steps of the decision process.
	
	\begin{theorem}\label{thm-network_QDP_exp_reward_no_distill}
		Let $G=(V,E,c)$ be the graph corresponding to the physical (elementary) links of a quantum network, and let $e^j$, $e\in E$, $1\leq j\leq c(e)$, be arbitrary. For all $t\geq 1$ and for all policies $\pi$,
		\begin{equation}\label{eq-opt_pol_obj_func}
			\mathbb{E}[R_{e^j}(t)]_{\pi}=\Tr\left[\left(\ket{1}\bra{1}_{X_{t+1}^{e^j}}\otimes\psi^{\text{target}}_{e^j}\right)\widehat{\sigma}_{e^j}^{\pi}(t+1)\right],
		\end{equation}
		where $\psi_{e^j}^{\text{target}}$ is a pure target state for the elementary link corresponding to ${e^j}$; see Definition~\ref{def-network_QDP_elem_link}.
	\end{theorem}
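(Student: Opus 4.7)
The plan is to specialize the general expected-reward formula \eqref{eq-QDP_exp_reward_formula_2} to the reward data defined in Definition~\ref{def-network_QDP_elem_link} and then recognize the resulting trace as the one on the right-hand side of \eqref{eq-opt_pol_obj_func}. First I would start from
\begin{equation}
\mathbb{E}[R_{e^j}(t)]_{\pi}=\Tr\!\left[\widehat{\mathcal{R}}_{H_{t+1}^{e^j}E_{t+1}^{e^j}\to E_{t+1}^{e^j}}^{(\mathsf{E})}(t)\!\left(\widehat{\sigma}_{e^j}^{\pi}(t+1)\right)\right],
\end{equation}
which is exactly \eqref{eq-QDP_exp_reward_formula_2}, and recall the definitions \eqref{eq-QDP_reward_map_tilde}--\eqref{eq-QDP_reward_map_hat}. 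This reduces the claim to unpacking $\widetilde{\mathcal{R}}_{e^j}^{t;h^{t+1}}$ for the specific reward data of the link decision process.

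Next I would substitute the values $R_{e^j}(t)(h^{t+1},1)=\delta_{x_{t+1},1}$ and $R_{e^j}(t)(h^{t+1},0)=0$ together with the reward maps $\mathcal{R}_{e^j}^{t;h^{t+1},1}(\cdot)=\psi_{e^j}^{\text{target}}(\cdot)\psi_{e^j}^{\text{target}}$ and $\mathcal{R}_{e^j}^{t;h^{t+1},0}(\cdot)=(\mathbbm{1}_{e^j}-\psi_{e^j}^{\text{target}})(\cdot)(\mathbbm{1}_{e^j}-\psi_{e^j}^{\text{target}})$ into \eqref{eq-QDP_reward_map_tilde}. The term with $s=0$ vanishes because its coefficient is zero, and because $\psi_{e^j}^{\text{target}}$ is a rank-one projection one can collapse $\psi_{e^j}^{\text{target}}(\cdot)\psi_{e^j}^{\text{target}}$ inside a trace to $\psi_{e^j}^{\text{target}}(\cdot)$ by cyclicity and idempotence. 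This yields
\begin{equation}
\mathbb{E}[R_{e^j}(t)]_{\pi}=\sum_{h^{t+1}\in\Omega(t+1)}\delta_{x_{t+1},1}\,\Tr\!\left[\psi_{e^j}^{\text{target}}\,\widetilde{\sigma}_{e^j}^{\pi}(t+1;h^{t+1})\right].
\end{equation}

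Finally, I would rewrite the right-hand side of \eqref{eq-opt_pol_obj_func} using the classical-quantum decomposition \eqref{eq-network_QDP_cq_state}. Since the registers $H_{t+1}^{e^j}$ and $E_{t+1}^{e^j}$ are in tensor product, the trace factorizes, and $\Tr[\ket{1}\bra{1}_{X_{t+1}^{e^j}}\ket{h^{t+1}}\bra{h^{t+1}}_{H_{t+1}^{e^j}}]$ equals $\delta_{x_{t+1},1}$ because $X_{t+1}^{e^j}$ is the last tensor factor of $H_{t+1}^{e^j}$ and the other factors are orthonormal. This produces exactly the sum obtained in the previous paragraph, closing the argument.

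There is no serious obstacle here: the statement is essentially a bookkeeping consequence of the general policy-evaluation machinery developed in Chapter~\ref{chap-QDP} combined with the very simple, projector-valued reward data chosen in Definition~\ref{def-network_QDP_elem_link}. The only mild subtlety is being careful with the partial trace over the classical history register so that the factor $\delta_{x_{t+1},1}$ is extracted cleanly while all other components of $H_{t+1}^{e^j}$ are summed out; stating this explicitly (rather than hiding it inside $\Tr[\cdot]$) makes the identification with $\mathbb{E}[R_{e^j}(t)]_{\pi}$ transparent.
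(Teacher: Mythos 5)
Your proposal is correct and follows essentially the same route as the paper's proof: both specialize the general expected-reward formula \eqref{eq-QDP_exp_reward_formula_1}--\eqref{eq-QDP_exp_reward_formula_2} to the projector-valued reward data of Definition~\ref{def-network_QDP_elem_link}, obtaining $\widetilde{\mathcal{R}}_{E_{t+1}^{e^j}}^{t;h^{t+1}}(\cdot)=\delta_{x_{t+1},1}\psi_{e^j}^{\text{target}}(\cdot)\psi_{e^j}^{\text{target}}$ and then identifying the resulting sum over histories with the trace against $\ket{1}\bra{1}_{X_{t+1}^{e^j}}\otimes\psi_{e^j}^{\text{target}}$. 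The only cosmetic difference is that the paper packages the last step as a sandwich by the vector $\ket{1}_{X_{t+1}^{e^j}}\otimes\psi_{e^j}^{\text{target}}$ and applies cyclicity once, whereas you expand into the classical-quantum decomposition and re-assemble; the content is identical.
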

	
	\begin{proof}
		From \eqref{eq-QDP_reward_map_tilde}, we have that
		\begin{equation}\label{eq-network_QDP_reward_map_tilde}
			\widetilde{\mathcal{R}}_{E_{t+1}^{e^j}}^{t;h^{t+1}}(\cdot)=\sum_{s\in\{0,1\}}R_{e^j}(t)(h^{t+1},s)\mathcal{R}_{E_{t+1}^{e^j}}^{t;h^{t+1}}=\delta_{x_{t+1},1}\psi_{e^j}^{\text{target}}(\cdot)\psi_{e^j}^{\text{target}}.
		\end{equation}
		Therefore, the map $\widehat{\mathcal{R}}_{H_{t+1}^{e^j}E_{t+1}^{e^j}\to E_{t+1}^{e^j}}(t)$ defined in \eqref{eq-QDP_reward_map_hat} is given by
		\begin{equation}
			\widehat{\mathcal{R}}_{H_{t+1}^{e^j}E_{t+1}^{e^j}\to E_{t+1}^{e^j}}(t)\left(\ket{h^{t+1}}\bra{h^{t+1}}_{H_{t+1}^{e^j}}\otimes\sigma_{E_{t+1}^{e^j}}\right)=\delta_{x_{t+1},1}\psi_{e^j}^{\text{target}}(\sigma_{E_{t+1}^{e^j}})\psi_{e^j}^{\text{target}}
		\end{equation}
		for all $h^{t+1}\in\Omega(t+1)$ and all states $\sigma_{E_{t+1}^{e^j}}$. We can equivalently write this as
		\begin{equation}
			\widehat{\mathcal{R}}_{H_{t+1}^{e^j}E_{t+1}^{e^j}\to E_{t+1}^{e^j}}(t)\left(\rho_{H_{t+1}^{e^j}E_{t+1}^{e^j}}\right)=\left(\bra{1}_{X_{t+1}^{e^j}}\otimes\psi_{E_{t+1}^{e^j}}^{\text{target}}\right)\left(\rho_{H_{t+1}^{e^j}E_{t+1}^{e^j}}\right)\left(\ket{1}_{X_{t+1}^{e^j}}\otimes\psi_{E_{t+1}^{e^j}}^{\text{target}}\right)
		\end{equation}
		for an arbitrary state $\rho_{H_{t+1}^{e^j}E_{t+1}^{e^j}}$. Therefore, using \eqref{eq-QDP_exp_reward_formula_2}, we obtain
		\begin{align}
			\mathbb{E}[R_{e^j}(t)]_{\pi}&=\Tr\left[\left(\bra{1}_{X_{t+1}^{e^j}}\otimes\psi_{E_{t+1}^{e^j}}^{\text{target}}\right)\left(\widehat{\sigma}_{H_{t+1}^{e^j}E_{t+1}^{e^j}}^{\pi}(t+1)\right)\left(\ket{1}_{X_{t+1}^{e^j}}\otimes\psi_{E_{t+1}^{e^j}}^{\text{target}}\right)\right]\\
			&=\Tr\left[\left(\ket{1}\bra{1}_{X_{t+1}^{e^j}}\otimes\psi_{E_{t+1}^{e^j}}^{\text{target}}\right)\widehat{\sigma}_{H_{t+1}^{e^j}E_{t+1}^{e^j}}(t+1)\right],
		\end{align}
		as required.
	\end{proof}
	
	Before moving on, we mention that the entirety of Chapter~\ref{chap-mem_cutoff} is devoted to the study of a particular policy for the quantum decision process defined in Definition~\ref{def-network_QDP_elem_link}, namely the ``memory-cutoff'' policy. For this policy, we derive explicit forms for all of the results derived above.

\section{Figures of merit}\label{sec-practical_figures_merit}
	
	Let us now consider figures of merit for evaluating policies in the context of entanglement distribution. This is in preparation for policy optimization, which we consider in Section~\ref{sec-network_QDP_pol_opt} below. The reward scheme given in Definition~\ref{def-network_QDP_elem_link} already provides us with a figure of merit (the expected reward) for evaluating policies, and in this section we justify the particular form of that reward scheme by showing that the expected reward is related to the fidelity of the quantum state of the elementary link to the target state. However, in practical settings, we might be interested in more than just the fidelities of the elementary and virtual links to the target states. Indeed, due to the probabilistic nature of elementary link generation, in the context of entanglement distribution as outlined in Section~\ref{sec-ent_dist_general}, we must regard the input and output graphs $G_{\text{in}}$ and $G_{\text{out}}$ in \eqref{eq-ent_dist_protocol_graphs_trans} as random variables, based on the formulation in Section~\ref{sec-random_subgraphs}. We must therefore take into consideration the probability that the desired subgraphs are obtained. In addition, because memory coherence times are finite, the figures of merit that we define (particularly the fidelity) should take the memory times into account. In this section, we provide such ``practical'' figures of merit.

\subsection{Link statuses}\label{sec-network_QDP_link_statuses}

	The figures of merit that we define in this section are based on obtaining the target graph of an entanglement distribution protocol with high probability. Before defining the figures of merit, let us briefly review the content of Section~\ref{sec-random_subgraphs}. 

	To start, let $G=(V,E,c)$ be the graph corresponding to the physical (elementary) links of a quantum network. Recall from Section~\ref{sec-random_subgraphs} that the set $\Xi_G$ gives us the configurations of the graph based on probabilities $\vec{p}=(p_{e^j}:e\in E,\,1\leq j\leq c(e))$ assigned to the individual elementary links. Then, we define $G_{\vec{p}}$ to be a random variable such that $G_{\vec{p}}(\vec{x})=(V(\vec{x}),E(\vec{x}),c)$, $\vec{x}\in\Xi_G$, is the subgraph of $G$ that contains only those edges corresponding to active elementary links, as specified by $\vec{x}$, so that the probability of obtaining the subgraph $G_{\vec{p}}(\vec{x})$ is
	\begin{equation}
		\Pr[G_{\vec{p}}(\vec{x})]=\prod_{e\in E}\prod_{j=1}^{c(e)}\Pr[X_{e^j}=x^{e^j}],
	\end{equation}
	where $X_{e^j}$ is the random variable for the edge $e^j$ such that $\Pr[X_{e^j}=x]=p_{e^j}^x(1-p_{e^j})^{1-x}$ for all $x\in\{0,1\}$.
	
	Now, a quantum decision process for the elementary links, as given by Definition~\ref{def-network_QDP_elem_link}, provides time-dependent elementary link probabilities via the random variables $X_{e^j}(t)$. From this, we have that the random graphs $G_{\vec{p}}$ are also time dependent. Specifically, for a collection $\vec{\pi}=\left(\pi^{e^j}:e\in E,\,1\leq j\leq c(e)\right)$ of policies for every elementary link, we define 
	\begin{equation}\label{eq-network_QDP_subgraph_RV_policy}
		G_{\vec{\pi}}(t)\coloneqq G_{\vec{p}_{\vec{\pi}}(t)},\quad\vec{p}_{\vec{\pi}}(t)\coloneqq\left(\Pr\left[X_{e^j}(t)=1\right]_{\pi^{e_j}}:e\in E,\,1\leq j\leq c(e)\right)
	\end{equation}
	to be a sequence of random variables for all $t\geq 1$ such that
	\begin{equation}\label{eq-network_QDP_subgraph_probability}
		\Pr[G_{\vec{\pi}}(t)(\vec{x})]=\prod_{e\in E}\prod_{j=1}^{c(e)}\Pr[X_{e^j}(t)=x^{e^j})]_{\pi^{e^j}}
	\end{equation}
	for all configurations $\vec{x}\in\Xi_G$. Note that the product form for this probability is due to the fact that all of the agents are independent, which means that the random variables $X_{e^j}(t)$ are mutually independent for all $e^j$.
	
	The expression in \eqref{eq-network_QDP_subgraph_probability} can be written in a simpler way by making the following definition.
	
	\begin{definition}[Collective elementary link status]\label{def-network_QDP_collective_elem_link_status}
		Let $G=(V,E,c)$ be the graph corresponding to the physical (elementary) links of a quantum network. For every subset $E'\subseteq E$, we define
		\begin{equation}\label{eq-network_QDP_collective_link_status}
			X_{E'}(t)\coloneqq \prod_{e\in E'}\prod_{j=1}^{ c(e)}X_{e^j}(t)
		\end{equation}
		to be the \textit{collective elementary link status}, so that
		\begin{equation}\label{eq-network_QDP_exp_link_status_collective}
			\Pr[X_{E'}(t)=1]_{\vec{\pi}}=\mathbb{E}[X_{E'}(t)]_{\vec{\pi}}=\prod_{e\in E'}\prod_{j=1}^{ c(e)}\Pr[X_{e^j}(t)=1]_{\pi^{e^j}}
		\end{equation}
		for every collection $\vec{\pi}=\left(\pi^{e^j}:e\in E',\,1\leq j\leq c(e)\right)$ of elementary link policies.~\defqed
	\end{definition}
	
	Using \eqref{eq-network_QDP_exp_link_status_collective}, the expression in \eqref{eq-network_QDP_subgraph_probability} can be rewritten as
	\begin{equation}\label{eq-network_QDP_subgraph_probability_alt}
		\Pr[G_{\vec{\pi}}(t)(\vec{x})]=\Pr[X_{E(\vec{x})}(t)=1]_{\vec{\pi}}\cdot\Pr[X_{E\setminus E(\vec{x})}(t)=0]_{\vec{\pi}}.
	\end{equation}
	
	Now, suppose that we have a desired target graph of active elementary links, and this target graph corresponds to the configuration $\vec{x}_{\text{target}}$. In order to have a high probability of obtaining this configuration, we thus have to maximize the quantity in \eqref{eq-network_QDP_subgraph_probability_alt} with respect to policies. On the other hand, it might be the case that we are concerned more about establishing the elementary links corresponding to the subset $E(\vec{x}_{\text{target}})$ than about obtaining the exact configuration $\vec{x}_{\text{target}}$; i.e., we might not necessarily care about the status of the elementary links that are not in $E(\vec{x}_{\text{target}})$. The relevant probability is then obtained by marginalizing the probability in \eqref{eq-network_QDP_subgraph_probability_alt} over the elements of $E\setminus E(\vec{x}_{\text{target}})$, and doing so results in $\Pr[X_{E(\vec{x}_{\text{target}})}(t)=1]_{\vec{\pi}}$.  Therefore, it makes sense in many cases to simply maximize the quantity $\Pr[X_{E'}(t)=1]_{\vec{\pi}}=\mathbb{E}[X_{E'}(t)]_{\vec{\pi}}$, for some desired subset $E'$ of elementary links, rather than the product in \eqref{eq-network_QDP_subgraph_probability_alt}.
	
	In addition to the status of elementary links, we also need to consider the status of virtual links. For simplicity, in the definition below we consider a graph $G$ consisting only of two-element edges.
	
	\begin{definition}[Virtual link status]\label{def-network_QDP_virtual_link_status}
		Let $G=(V,E,c)$ be the graph corresponding to the physical (elementary) links of a quantum network, and let $E$ be a set of two-element subsets of $V$. Given a pair $v_1,v_n\in V$ of distinct non-adjacent vertices and a path $w=(v_1,e_1,v_2,e_2,\dotsc,e_{n-1},v_n)$ between them for some $n\geq 2$, the \textit{virtual link status}, i.e., the status of the virtual link given by the edge $\{v_1,v_n\}$, is defined to be the random variable
		\begin{equation}
			X_{\{v_1,v_n\};w}(t)=X_{e_1}(t)Y_{e_1,e_2}X_{e_2}(t)Y_{e_2,e_3}\dotsb X_{e_{n-2}}(t)Y_{e_{n-2},e_{n-1}}X_{e_{n-1}}(t),
		\end{equation}
		where $Y_{e_i,e_j}$ is the binary random variable for the joining measurement that connects the elementary links given by $e_i$ and $e_j$, such that
		\begin{equation}
			\Pr[Y_{e_i,e_j}=1]=q_{e_i,e_j},
		\end{equation}
		for probabilities $q_{e_i,e_j}\in[0,1]$.~\defqed
	\end{definition}

\subsection{Fidelity}\label{sec-network_QDP_fidelity}

	We now look at the fidelity of elementary and virtual links.
	
	\begin{definition}[Elementary link fidelity]\label{def-network_QDP_elem_link_fidelity}
		Let $G=(V,E,c)$ be the graph corresponding to the physical (elementary) links of a quantum network, let $e^j$, $e\in E$, $1\leq j\leq c(e)$, be arbitrary, and consider the quantum decision process for $e^j$ given by Definition~\ref{def-network_QDP_elem_link}. For an arbitrary policy $\pi$ and a target pure state $\psi$, we define the following two random variables:
		\begin{equation}\label{eq-network_QDP_elem_link_fidelity}
			\widetilde{F}_{e^j}^{\pi}(t;\psi)\coloneqq X_{e^j}(t)f_{M_{e^j}^{\pi}(t)}^{e^j}(\rho_{e^j}^0;\psi),\quad F_{e^j}^{\pi}(t;\psi)\coloneqq\frac{\widetilde{F}_{e^j}^{\pi}(t;\psi)}{\Pr[X_{e^j}(t)=1]_{\pi}},
		\end{equation}
		where $M_{e^j}^{\pi}(t)$ is the memory time random variable defined in Definition~\ref{def-network_QDP_mem_time_RV}, $\rho_{e^j}^0=\frac{1}{p_{e^j}}(\mathcal{M}_{e^j}^1\circ\mathcal{S}_{e^j})(\rho_{e^j}^S)$, and $f_m^{e^j}(\rho_{e^j}^0;\psi)$ is defined in \eqref{eq-fidelity_decay}; see Section~\ref{sec-practical_elem_link_generation}.~\defqed
	\end{definition}
		
	We discuss the distinction between the random variables $\widetilde{F}$ and $F$ in detail in Section~\ref{sec-network_QDP_pol_opt} in the context of policy optimization. For now, it suffices to note that, intuitively, $F$ can be thought of as the fidelity of the elementary link \textit{given} that the elementary link is active. As for $\widetilde{F}$, it turns out that it is related to the reward of the quantum decision process defined in Definition~\ref{def-network_QDP_elem_link}.
	
	\begin{proposition}\label{prop-network_QDP_exp_reward_fid_tilde}
		Let $G=(V,E,c)$ be the graph corresponding to the physical (elementary) links of a quantum network, let $e^j$, $e\in E$, $1\leq j\leq c(e)$ be arbitrary, and consider the quantum decision process for $e^j$ given by Definition~\ref{def-network_QDP_elem_link}. For an arbitrary policy $\pi$ and a target pure state $\psi$,
		\begin{equation}
			\mathbb{E}[R_{e^j}(t)]_{\pi}=\mathbb{E}[\widetilde{F}_{e^j}^{\pi}(t+1;\psi)]=\Tr\left[\left(\ket{1}\bra{1}_{X_{t+1}^{e^j}}\otimes\psi_{e^j}\right)\widehat{\sigma}_{e^j}^{\pi}(t+1)\right].
		\end{equation}
	\end{proposition}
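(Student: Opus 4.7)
The plan is to show that the two equalities in the proposition chain together naturally: the rightmost equality is already supplied by Theorem~\ref{thm-network_QDP_exp_reward_no_distill} applied with the specific target state $\psi_{e^j} = \psi_{e^j}^{\text{target}}$, so no new work is needed there. The genuinely new content is the first equality, $\mathbb{E}[R_{e^j}(t)]_{\pi} = \mathbb{E}[\widetilde{F}_{e^j}^{\pi}(t+1;\psi)]$, and I would establish it by expanding the trace on the right-hand side using the explicit form of the classical-quantum state of the elementary link given in \eqref{eq-network_QDP_cq_state} together with the description of the conditional state provided by Theorem~\ref{prop-link_quantum_state}.

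Concretely, I would first write
\begin{equation*}
\Tr\!\left[\left(\ket{1}\bra{1}_{X_{t+1}^{e^j}}\otimes\psi_{e^j}\right)\widehat{\sigma}_{e^j}^{\pi}(t+1)\right] = \sum_{h^{t+1}\in\Omega(t+1)} \delta_{x_{t+1},1}\,\Tr[\psi_{e^j}\,\widetilde{\sigma}_{e^j}^{\pi}(t+1;h^{t+1})],
\end{equation*}
using the classical-quantum decomposition \eqref{eq-network_QDP_cq_state} so that the projector $\ket{1}\bra{1}_{X_{t+1}^{e^j}}$ selects histories with $x_{t+1}=1$. Next, I would use $\widetilde{\sigma}_{e^j}^{\pi}(t+1;h^{t+1}) = \Pr[H_{e^j}(t+1)=h^{t+1}]_{\pi}\,\sigma_{e^j}^{\pi}(t+1|h^{t+1})$ (from \eqref{eq-network_QDP_cond_state} and \eqref{eq-network_QDP_history_prob}) together with Theorem~\ref{prop-link_quantum_state} to substitute $\sigma_{e^j}^{\pi}(t+1|h^{t+1}) = \rho_{e^j}(M_{e^j}^{\pi}(t+1)(h^{t+1}))$ whenever $x_{t+1} = 1$.

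The key step is then to recognize $\bra{\psi}\rho_{e^j}(m)\ket{\psi} = f_m^{e^j}(\rho_{e^j}^0;\psi)$, per the definition in \eqref{eq-fidelity_decay}. Reinserting the indicator $X_{e^j}(t+1)(h^{t+1}) = x_{t+1}$ (which coincides with the Kronecker delta $\delta_{x_{t+1},1}$ on the set $\{0,1\}$) allows me to extend the sum to all of $\Omega(t+1)$, yielding
\begin{equation*}
\sum_{h^{t+1}\in\Omega(t+1)} \Pr[H_{e^j}(t+1)=h^{t+1}]_{\pi}\,X_{e^j}(t+1)(h^{t+1})\,f_{M_{e^j}^{\pi}(t+1)(h^{t+1})}^{e^j}(\rho_{e^j}^0;\psi),
\end{equation*}
which is precisely the expectation $\mathbb{E}[X_{e^j}(t+1) f_{M_{e^j}^{\pi}(t+1)}^{e^j}(\rho_{e^j}^0;\psi)] = \mathbb{E}[\widetilde{F}_{e^j}^{\pi}(t+1;\psi)]$ by Definition~\ref{def-network_QDP_elem_link_fidelity}.

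There is no real obstacle here; the proof is essentially a bookkeeping exercise of unwinding definitions. The only care needed is to keep the classical and quantum registers organized and to verify that the $\tau_{e^j}^{\varnothing}$ branch of $\sigma_{e^j}^{\pi}(t+1|h^{t+1})$ (which appears when $x_{t+1}=0$) is correctly suppressed by the projector $\ket{1}\bra{1}_{X_{t+1}^{e^j}}$, so that only the memory-time-dependent state $\rho_{e^j}(M_{e^j}^{\pi}(t+1))$ contributes to the expectation.
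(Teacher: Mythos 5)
Your proposal is correct and follows essentially the same route as the paper's proof: both invoke Theorem~\ref{thm-network_QDP_exp_reward_no_distill} for the trace identity and then unwind the classical-quantum state via Theorem~\ref{prop-link_quantum_state} and the definition of $f_m^{e^j}$ to recognize $\mathbb{E}[\widetilde{F}_{e^j}^{\pi}(t+1;\psi)]$. The only cosmetic difference is that the paper regroups the sum over histories with $x_{t+1}=1$ into a sum over memory-time values $m$, whereas you keep the sum over all histories with the indicator $X_{e^j}(t+1)(h^{t+1})$ inserted; these are the same computation.
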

	
	\begin{proof}
		First of all, by the definition of expectation, we have
		\begin{equation}\label{eq-avg_Ftilde}
			\mathbb{E}[\widetilde{F}_{e^j}^{\pi}(t+1;\psi)]=\sum_{m} f_m^{e^j}(\rho_{e^j}^0;\psi)\Pr[X_{e^j}(t+1)=1,M_{e^j}^{\pi}(t+1)=m]_{\pi},
		\end{equation}
		where the sum is over all possible values of the random variable $M_{e^j}^{\pi}(t)$, which depends on the policy $\pi$. Then, by Theorem~\ref{thm-network_QDP_exp_reward_no_distill} and Theorem~\ref{prop-link_quantum_state},
		\begin{align}
			\mathbb{E}[R_{e^j}(t)]_{\pi}&=\Tr\left[\left(\ket{1}\bra{1}_{X_{t+1}^{e^j}}\otimes\psi_{e^j}\right)\widehat{\sigma}_{e^j}^{\pi}(t+1)\right]\\
			&=\sum_{h^{t+1}:x_{t+1}=1}\Pr[H_{e^j}(t+1)=h^{t+1}]_{\pi}\bra{\psi}\rho(M_{e^j}^{\pi}(t+1)(h^{t+1}))\ket{\psi}\\
			&=\sum_{h^{t+1}:x_{t+1}=1}\Pr[H_{e^j}(t+1)=h^{t+1}]_{\pi}f_{M_{e^j}^{\pi}(t+1)(h^{t+1})}^{e^j}(\rho_{e^j}^0;\psi)\\
			&=\sum_m f_m^{e^j}(\rho_{e^j}^0;\psi)\Pr[X_{e^j}(t+1)=1,M_{e^j}^{\pi}(t+1)=m]_{\pi},
		\end{align}
		where the last equality holds because the sum with respect to the set $\{h^{t+1}:x_{t+1}=1\}$ can be rearranged into a sum with respect to the possible values of the memory time $M_{e^j}^{\pi}(t+1)$ when the elementary link is active. This completes the proof.
	\end{proof}

	Let us now consider the fidelity of virtual links. For simplicity, we consider the situation in which the graph $G$ contains only two-element edges.
	
	Let $G=(V,E)$ be the graph corresponding to the physical (elementary) links of a quantum network, such that $E$ is a set of two-element subsets of $V$. Consider a pair $v_1,v_{n+1}\in V$ of distinct non-adjacent vertices and a path $w=(v_1,e_1,v_2,e_2,\dotsc,e_{n},v_{n+1})$ between them for some $n\geq 2$, and let $\mathcal{L}_{w\to e'}$ denote the quantum channel corresponding to a joining protocol that creates the virtual link corresponding to the edge $e'\coloneqq\{v_1,v_{n+1}\}$, as in \eqref{eq-q_instr_channel_swapping}. Then, for policies $\pi^{e_i}$ for the elementary links along the path, the expected quantum state at the output of the joining protocol at time $t\geq 1$ is given by
	\begin{multline}
		\mathcal{L}_{w\to e'}\left(\bigotimes_{i=1}^{n} \sigma_{e_i}^{\pi^{e_i}}(t|X_{e_i}(t)=1)\right)=\ket{0}\bra{0}\otimes\mathcal{L}_{w\to e'}^{0}\left(\bigotimes_{i=1}^{n} \sigma_{e_i}^{\pi^{e_i}}(t|X_{e_i}(t)=1)\right)\\+\ket{1}\bra{1}\otimes\mathcal{L}_{w\to e'}^1\left(\bigotimes_{i=1}^{n} \sigma_{e_i}^{\pi^{e_i}}(t|X_{e_i}(t)=1)\right),
	\end{multline}
	where the states $\sigma_{e_i}^{\pi^{e_i}}(t|X_{e_i}=1)$ are defined in \eqref{eq-link_avg_q_state_conditional_def}. If $\ket{\psi^{\text{target}}}\bra{\psi^{\text{target}}}_{e'}$ is a target pure state for the virtual link given by the edge $e'$, then the fidelity of the output state with respect to the target state, conditioned on success of the protocol, is
	\begin{equation}\label{eq-network_QDP_virtual_link_fid_cond}
		\frac{1}{p_{\text{succ}}}\bra{\psi^{\text{target}}}_{e'}\mathcal{L}_{w\to e'}^1\left(\bigotimes_{i=1}^{n} \sigma_{e_i}^{\pi^{e_i}}(t|X_{e_i}(t)=1)\right)\ket{\psi^{\text{target}}}_{e'},
	\end{equation}
	where
	\begin{equation}
		p_{\text{succ}}=\Tr\left[\mathcal{L}_{w\to e'}^1\left(\bigotimes_{i=1}^{n} \sigma_{e_i}^{\pi^{e_i}}(t|X_{e_i}(t)=1)\right)\right].
	\end{equation}

\subsection{Waiting time}\label{sec-network_QDP_waiting_time}

	An important question when dealing with probabilistic elementary link generation is the \textit{expected waiting time}, which is a figure of merit that indicates how long it takes (on average) to establish an elementary or virtual link. This figure of merit has been considered in prior work in the context of a linear chain of quantum repeaters \cite{CJKK07,BPv11,SSv19,VK19,BCE19}. Here, we define the waiting time for elementary and virtual links in the context of quantum network protocols using quantum decision processes.
	
	When defining the waiting times, we imagine a scenario in which elementary link generation is continuously occurring in the network \cite{CRDW19} and that an end-user request for entanglement occurs at a time $t_{\text{req}}\geq 0$. The waiting time is then the number of time steps from time $t_{\text{req}}$ onward that it takes to establish the entanglement.
	
	\begin{definition}[Elementary link waiting time]\label{def-network_QDP_elem_link_waiting_time}
		Let $G=(V,E,c)$ be the graph corresponding to the physical (elementary) links of a quantum network, let $e\in E$ be arbitrary, and consider the quantum decision process for $e^j$ given by Definition~\ref{def-network_QDP_elem_link}, where $1\leq j\leq c(e)$. For all $t_{\text{req}}\geq 0$, the waiting time for the elementary link corresponding to the edge $e^j$ is defined to be
		\begin{equation}
			W_{e^j}(t_{\text{req}})\coloneqq\sum_{t=t_{\text{req}}+1}^{\infty} t X_{e^j}(t)\prod_{i=t_{\text{req}}+1}^{t-1}(1-X_{e^j}(i)).
		\end{equation}
		Then, the expected waiting time is
		\begin{equation}\label{eq-waiting_time_prob_late_request}
			\mathbb{E}[W_{e^j}(t_{\text{req}})]_{\pi}=\sum_{t=t_{\text{req}}+1}^{\infty} t\Pr[X_{e_j}(t_{\text{req}}+1)=0,X_{e_j}(t_{\text{req}}+2)=0,\dotsc,X_{e^j}(t_{\text{req}}+t)=1]_{\pi},
		\end{equation}
		where $\pi$ is an arbitrary policy for the elementary link corresponding to the edge $e^j$.~\defqed
	\end{definition}
	
	Using the collective elementary link status defined in Definition~\ref{def-network_QDP_collective_elem_link_status}, we make the following definition for the waiting time for a collection of elementary links.
	
	\begin{definition}[Collective elementary link waiting time]\label{def-collective_elem_link_waiting_time}
		Let $G=(V,E,c)$ be the graph corresponding to the physical (elementary) links of a quantum network, and let $t_{\text{req}}\geq 0$ be arbitrary. For every subset $E'\subseteq E$, the waiting time for the elementary links corresponding to the elements of $E'$ is defined to be
		\begin{equation}
			W_{E'}(t_{\text{req}})\coloneqq\sum_{t=t_{\text{req}}+1}^{\infty} tX_{E'}(t)\prod_{i=t_{\text{req}}+1}^{t-1}(1-X_{E'}(i)).~\defqedspec
		\end{equation}
	\end{definition}

	In other words, the collective elementary link waiting time is the time it takes for all of the elementary links given by $E'$ to be simultaneously active, and its expected value is
	\begin{equation}
		\mathbb{E}[W_{E'}(t_{\text{req}})]_{\pi}=\sum_{t=t_{\text{req}}+1}^{\infty} t\Pr[X_{E'}(t_{\text{req}}+1)=0,X_{E'}(t_{\text{req}}+2)=0,\dotsc,X_{E'}(t_{\text{req}}+t)=1]_{\vec{\pi}},
	\end{equation}
	where $\vec{\pi}=\left(\pi^{e^j}:e\in E',\,1\leq j\leq c(e)\right)$ is an arbitrary collection of policies for the elementary links corresponding to $E'$.
	
	Let us now consider virtual links.

	\begin{definition}[Virtual link waiting time]\label{def-network_QDP_virtual_waiting_time}
		Let $G=(V,E,c)$ be the graph corresponding to the physical (elementary) links of a quantum network, and let $t_{\text{req}}\geq 0$ be arbitrary. Given a pair $v_1,v_n\in V$ of distinct non-adjacent vertices and a path $w=(v_1,e_1,v_2,e_2,\dotsc,e_{n-1},v_n)$ between them for some $n\geq 2$, the \textit{virtual link waiting time} along this path is defined to be the amount of time it takes to establish the virtual link given by the edge $\{v_1,v_n\}$: 
		\begin{equation}
			W_{\{v_1,v_n\};w}(t_{\text{req}})\coloneqq W_{E_w}(t_{\text{req}})\sum_{t=t_{\text{req}}+1}^{\infty}t Y_{w}(1-Y_{w})^{t-1},
		\end{equation}
		where $E_w=\{e_1,e_2,\dotsc,e_{n-1}\}$ is the set of edges corresponding to the path $w$, $W_{E_w}(t_{\text{req}})$ is the collective elementary link waiting time from Definition~\ref{def-collective_elem_link_waiting_time}, and $Y_{E_w}$ is a binary random variable for the success of the joining protocol along the path $w$, so that $Y_w=1$ corresponds to success of the joining protocol and $Y_w=0$ to failure. We define $Y_w$ and $W_{E_w}$ to be independent random variables.~\defqed
	\end{definition}
	
	The formula for the virtual link waiting time in Definition~\ref{def-network_QDP_virtual_waiting_time} is based on the formula in Ref.~\cite{CJKK07}. It corresponds to the simple strategy of waiting for all of the elementary links along the path $w$ to be established, then performing the measurements for the joining protocol. Note that this strategy is consistent with our overall quantum network protocol in Figure~\ref{fig-QDP_protocol}. The expected value of the virtual link waiting time in this case is
	\begin{align}
		\mathbb{E}[W_{\{v_1,v_n\};w}(t_{\text{req}})]_{\vec{\pi}}&=\mathbb{E}[W_{E_w}(t_{\text{req}})]_{\pi}\sum_{t=t_{\text{req}}+1}^{\infty}t\mathbb{E}[Y_w(1-Y_w)^{t-1}]\\
		&=\mathbb{E}[W_{E_w}(t_{\text{req}})]_{\pi}\left(\frac{(1-q)^{t_{\text{req}}}(1+qt_{\text{req}})}{q}\right),
	\end{align}
	where $q\coloneqq\Pr[Y_w=1]$ and $\vec{\pi}=\left(\pi^{e^j}:e\in E_w,\,1\leq j\leq c(e)\right)$ is an arbitrary collection of policies for the elementary links corresponding to $E_w$ If $t_{\text{req}}=0$, then
	\begin{equation}
		\mathbb{E}[W_{\{v_1,v_n\};w}(0)]_{\vec{\pi}}=\frac{\mathbb{E}[W_{E_w}(0)]_{\pi}}{q}.
	\end{equation}

\subsection{Rates}\label{sec-network_QDP_rates}

	We now define two types of rates.
	
	\begin{definition}[Success rate of an elementary link]\label{def-network_QDP_elem_link_succ_rate}
		Let $G=(V,E,c)$ be the graph corresponding to the physical (elementary) links of a quantum network, let $e\in E$ be arbitrary, and consider the quantum decision process for $e^j$ given by Definition~\ref{def-network_QDP_elem_link}, with $1\leq j\leq c(e)$. For all $t\geq 1$, we define the \textit{success rate} up to time $t$ of $e^j$ as
		\begin{equation}
			S_{e^j}(t)\coloneqq\frac{\displaystyle\sum_{i=1}^t A_{e^j}(i-1)X_{e^j}(i)}{\displaystyle\sum_{i=1}^t A_{e^j}(i-1)},
		\end{equation}
		which is simply the ratio of the number of successful transmissions when a request is made to the total number of requests made within time $t$. We let $A(0)\equiv 1$.~\defqed
	\end{definition}
		
	The success rate can also be thought as the number of successful requests per channel use up to time $t$. Indeed, notice that the quantity $\sum_{i=1}^t A_{e^j}(i-1)$ in the denominator of $S_{e^j}(t)$ is the number of uses of the transmission channel in $t$ time steps.
	
	
	\begin{definition}[Elementary link activity rate]\label{def-network_QDP_elem_link_act_rate}
		Let $G=(V,E,c)$ be the graph corresponding to the physical (elementary) links of a quantum network, and let $e\in E$ be arbitrary. Then, the \textit{elementary link activity rate} up to time $t$ corresponding to $e$ is
		\begin{equation}
			r_e(t)\coloneqq\frac{1}{t}\sum_{i=1}^t N_e(i),
		\end{equation}
		where $N_e(i)$ is defined to be the number of parallel elementary links corresponding to $e$ at time $i$, given by \eqref{eq-network_QDP_num_active_parallel_links}.~\defqed
	\end{definition}
	
	We can define an activity rate for virtual links as well. In this case, we must look at the number of edge-disjoint paths (or Steiner trees, as the situation warrants) between the nodes given by $e$. See Section~\ref{sec-graph_theory} and Ref.~\cite[Section~5]{BA17} for more information on finding edge-disjoint paths in (hyper)graphs.
	
	\begin{definition}[Virtual link activity rate]
		Let $G=(V,E,c)$ be the graph corresponding to the physical (elementary) links of a quantum network, and let $e'\coloneqq\{v_1,\dotsc,v_k\}\notin E$ be a collection of distinct nodes corresponding to a virtual link for some $k\geq 2$. Then, the \textit{virtual link activity rate} up to time $t$ corresponding to $e'$ is
		\begin{equation}
			r_{e'}(t)\coloneqq\frac{1}{t}\sum_{i=1}^t N_{e'}(i),
		\end{equation}
		where
		\begin{equation}
			N_{e'}(i)\coloneqq\sum_{j=1}^{N_{e'}^{\max}} X_{e';w_j}(i)
		\end{equation}
		is the number of edge-disjoint paths (or Steiner trees) corresponding to $e'$, with $w_j$ denoting the paths and $X_{e';w_j}(i)$ the virtual link status as given in Definition~\ref{def-network_QDP_virtual_link_status}. $N_{e'}^{\max}$ is defined to be the maximum number of edge-disjoint paths (or Steiner trees) in the physical graph $G$.
	\end{definition}
	
	It is often of interest to determine the quantity
	\begin{equation}
		\liminf_{t\to\infty}\mathbb{E}[r_e(t)]_{\vec{\pi}}=\liminf_{t\to\infty}\frac{1}{t}\sum_{i=1}^t\mathbb{E}[N_e(i)]_{\vec{\pi}}
	\end{equation}
	(see, e.g., Ref.~\cite{DPW20}), where $\vec{\pi}$ is a collection of policies and $e$ refers to either an elementary link or virtual link. In Chapter~\ref{chap-mem_cutoff}, we provide a closed-form expression for this quantity in the case of elementary links under the memory-cutoff policy.

\subsubsection*{Key rates for QKD}

	We are also interested in secret key rates for quantum key distribution (QKD); see Section~\ref{sec-QKD} for a brief overview of QKD. In order to determine these rates, we need to keep track of the quantum state of the relevant elementary links as a function of time and the policy being used. The following discussion and formulas for the secret key rates are based on Ref.~\cite{GKF+15}.
	
	Suppose that $K$ is a function that gives the number of secret key bits per entangled state shared by the nodes of either an elementary link or virtual link. ($K$ is, for example, $K_{\text{BB84}}$, $K_{\text{6-state}}$, or $K_{\text{DI}}$, as defined in Section~\ref{sec-QKD}.) Then, suppose that $G=(V,E,c)$ is the graph corresponding to the physical (elementary) links of a quantum network.
	\begin{itemize}
		\item\textit{Elementary links.} Let $e^j$, with $e\in E$ and $1\leq j\leq c(e)$, be an arbitrary edge corresponding to an elementary link in the network. Then, at each time step, once the elementary link has been established (i.e., both transmission and heralding succeed), the nodes of the elementary link perform the required measurements for the QKD protocol being considered. Then, the secret key rate (in units of secret key bits per second) is
			\begin{equation}\label{eq-network_QKD_key_rate}
				\widetilde{K}_{e^j}= p_{e^j}\nu_{e^j}^{\text{rep}}K,
			\end{equation}
			where $p_{e^j}$ is given by \eqref{eq-elem_link_success_prob} and $\nu_{e^j}^{\text{rep}}$ is the repetition rate (in units of time steps per second). The quantity $K$ is calculated based on the quantum state $\rho_{e^j}^0$ (defined in \eqref{eq-initial_link_states}) that is shared by the nodes immediately upon successful transmission and heralding. The factor of $p_{e^j}$ is present because it is the average number of entangled states shared successfully by the nodes of the elementary link corresponding to $e^j$ per time step. The repetition rate $\nu_{e^j}^{\text{rep}}$ corresponds to the duration of each time step, and it can be thought of as being the refresh rate of the measurement devices; see, e.g., Ref.~\cite{PRML14}. In Chapter~\ref{chap-sats}, we provide an example calculation of the key rate in \eqref{eq-network_QKD_key_rate} in the context of elementary link generation with satellites.
			
		\item\textit{Virtual links.} Consider a collection $e'\coloneqq\{v_1,\dotsc,v_k\}\notin E$ of distinct nodes corresponding to a virtual link for some $k\geq 2$, and let $w$ be a path in the physical graph leading to the virtual link given by $e'$. Suppose that all of the elementary links in the path $w$, denoted by $E_w$, follow independent policies in $\vec{\pi}=\left(\pi^e:e\in E_w\right)$ up to some time $t\geq 1$. Then, a joining protocol along $w$ is performed to establish the virtual link. Conditioned on success of the joining protocol, the quantum state of the virtual link is given by \eqref{eq-network_QDP_virtual_link_fid_cond}, namely,
			\begin{equation}\label{eq-network_QDP_virtual_link_output_cond}
				\frac{1}{p_{\text{succ}}}\mathcal{L}_{w\to e'}^1\left(\bigotimes_{e\in E_w} \sigma_e^{\pi^e}(t|X_{e}(t)=1)\right),
			\end{equation}
			where
			\begin{equation}
				p_{\text{succ}}=\Tr\left[\mathcal{L}_{w\to e'}^1\left(\bigotimes_{e\in E_w}\sigma_e^{\pi^e}(t|X_e(t)=1)\right)\right]
			\end{equation}
			is the success probability of the joining protocol. Then, the secret key rate (in units of secret key bits per second) for the virtual link along the path $w$ is
			\begin{equation}
				\widetilde{K}_{e';w}(t)=p_{\text{succ}}\left(\prod_{e\in E_w}\Pr[X_{e}(t)=1]_{\pi^e}\right)\nu_{e'}^{\text{rep}}K.
			\end{equation}
			Here, $K$ is calculated using the state in \eqref{eq-network_QDP_virtual_link_output_cond}. The repetition rate $\nu_{e'}^{\text{rep}}$ in this case is a function of the time $t$ as well as the end-to-end classical communication time required for executing the joining protocol.
	\end{itemize}

\subsection{Cluster size}\label{sec-network_QDP_cluster_size}

	The last figure of merit that we consider is the cluster size \cite{DKD18,KMSD19}. In Section~\ref{sec-graph_theory}, we defined the quantity $S^{\max}(G)$, which is the size of the largest connected component, or cluster, of a graph $G$. As explained in Section~\ref{sec-random_subgraphs}, in our model of probabilistic elementary link generation, the physical graph $G=(V,E,c)$ of elementary links in a quantum network becomes a random variable $G_{\vec{p}}$. Then, as explained in Section~\ref{sec-network_QDP_link_statuses}, in the context of quantum decision processes this random variable becomes the time- and policy-dependent random variable $G_{\vec{\pi}}(t)$ defined in \eqref{eq-network_QDP_subgraph_RV_policy}. The figure of merit that we consider in this section is the expected largest cluster size of the subgraphs $G_{\vec{\pi}}(t)(\vec{x})$ of the physical graph $G$ corresponding to configurations $\vec{x}\in\Xi_G$.
	
	\begin{definition}[Expected largest cluster size of elementary links]\label{def-exp_largest_cluster_G}
		Let $G=(V,E,c)$ be the graph corresponding to the physical (elementary) links of a quantum network, and let $\vec{\pi}=\left(\pi^{e^j}:e\in E,\,1\leq j\leq c(e)\right)$ be a collection of policies for all of the elementary links in the network. Then, we define
		\begin{equation}\label{eq-exp_largest_cluster_G}
			s_G^{\max}(t;\vec{\pi})\coloneqq\frac{1}{|G|}\mathbb{E}[S^{\max}(G_{\vec{\pi}}(t))]
		\end{equation}
		to be the (normalized) expected value of the random variable $S^{\max}(G_{\vec{\pi}}(t))$ defined in \eqref{eq-random_subgraph_largest_cluster_RV}.~\defqed
	\end{definition}
	
	By definition of expectation, and using \eqref{eq-random_subgraph_largest_cluster_RV}, we have that
	\begin{align}
		s_G^{\max}(t;\vec{\pi})&=\frac{1}{|G|}\sum_{\vec{x}\in\Xi_G}S^{\max}(G_{\vec{\pi}}(t)(\vec{x}))\Pr[G_{\vec{\pi}}(t)(\vec{x})]\\
		&=\frac{1}{|G|}\sum_{\vec{x}\in\Xi_G}S^{\max}(G_{\vec{\pi}}(t)(\vec{x}))\prod_{e\in E}\prod_{j=1}^{c(e)}\Pr[X_{e^j}(t)=x^{e^j}]_{\pi^{e^j}},
	\end{align}
	where the last equality is due to \eqref{eq-network_QDP_subgraph_probability}.

	The expected largest cluster size is a figure of merit that is explicitly topology dependent, and it can be used to evaluate not only policies but also the network topology itself, as done in Ref.~\cite{DKD18}. The latter is due to the phenomenon of \textit{percolation} (see Ref.~\cite{Grim99_book} for an introduction) that occurs for graphs that have lattice topologies. For such graphs, there exist critical probabilities beyond which a large cluster is guaranteed to exist. Such critical probabilities thus quantify the robustness of the network to transmission losses. We illustrate this concept explicitly in Section~\ref{sec-mem_cutoff_cluster_size} in the context of the memory-cutoff policy.

\section{Policy optimization}\label{sec-network_QDP_pol_opt}

	In this section, we consider policy optimization for the quantum decision process for elementary link generation defined in Definition~\ref{def-network_QDP_elem_link}. Let us recall from that definition that a policy for an elementary link corresponding to an edge $e^j$ of a physical graph $G=(V,E,c)$, with $e\in E$ and $1\leq j\leq c(e)$, is a sequence of the form $\pi=(d_1^{e^j},d_2^{e^j},\dotsc)$, where the $d_t^{e^j}$ are decision functions defined as
	\begin{equation}
		d_t^{e^j}(h^t)(a_t)=\Pr[A_{e^j}(t)=a_t|H_{e^j}(t)=h^t]
	\end{equation}
	for all $t\in[1,\infty)$, $a_t\in\mathcal{A}$, and $h^t\in\Omega(t)$. In other words, the decision functions give the probability distributions over actions conditioned on histories.
	
	In the previous section, we laid out five figures of merit with respect to which one could optimize policies. Although it is possible in principle to optimize policies with respect to any one of those figures of merit, in this section we stick to the standard optimization problem in quantum decision processes (as laid out in Section~\ref{sec-QDP_pol_opt}), in which the objective function is the expected reward at the horizon time. From the discussion at the beginning of this chapter, this figure of merit corresponds precisely to the fidelity figure of merit for entanglement distribution protocols, as defined in Section~\ref{sec-figures_of_merit_general}. Specifically, let $G=(V,E,c)$ be the graph corresponding to the physical (elementary) links of a quantum network. Recall from \eqref{eq-network_initial_cq_state_2} that the overall quantum state of the network after $t\geq 1$ time steps is
	\begin{equation}\label{eq-network_initial_cq_state_3}
		\widehat{\sigma}_{G}^{\vec{\pi}}(t)=\bigotimes_{e\in E}\bigotimes_{j=1}^{c(e)}\widetilde{\sigma}_{e^j}^{\pi^{e^j}}(t),
	\end{equation}
	where the states $\widehat{\sigma}_{e^j}^{\pi^{e^j}}(t)$ have the form given by \eqref{eq-network_QDP_cq_state} and \eqref{eq-network_QDP_cond_state_unnormalized}, and $\vec{\pi}=\left(\pi^{e^j}:e\in E,\,1\leq j\leq c(e)\right)$ is a collection of policies for the elementary links. From the discussion at the beginning of this chapter, we would like the physical graph $G$ to have a subset $E'\subseteq E$ in which all of the corresponding elementary links are active, so that the classical-quantum state for this subset is
	\begin{equation}\label{eq-network_target_cq_state}
		\bigotimes_{e\in E'}\bigotimes_{j=1}^{c(e)} \ket{1}\bra{1}_{X^{e^j}}\otimes\psi_{e^j}^{\text{target}},
	\end{equation}
	where $\psi_{e^j}^{\text{target}}$ are pure target states. Then, from \eqref{eq-network_QDP_total_fid_pol}, we have that
	\begin{align}
		&F\left(\bigotimes_{e\in E'}\bigotimes_{j=1}^{c(e)}\widehat{\sigma}_{e^j}^{\pi_e^j}(t+1),\bigotimes_{e\in E'}\bigotimes_{j=1}^{c(e)}\ket{1}\bra{1}_{X_{t+1}^{e^j}}\otimes \psi_{e^j}^{\text{target}}\right)\nonumber\\
		&\qquad\qquad\qquad\qquad=\prod_{e\in E'}\prod_{j=1}^{c(e)}\Tr\left[\left(\ket{1}\bra{1}_{X_{t+1}^{e_j}}\otimes\psi_{e^j}^{\text{target}}\right)\widehat{\sigma}_{e^j}^{\pi^{e^j}}(t+1)\right]\\
		&\qquad\qquad\qquad\qquad=\prod_{e\in E'}\prod_{j=1}^{c(e)}\mathbb{E}[R_{e^j}(t)]_{\pi^{e^j}}\\
		&\qquad\qquad\qquad\qquad=\prod_{e\in E'}\prod_{j=1}^{c(e)}\mathbb{E}\left[\widetilde{F}_{e^j}^{\pi^{e^j}}(t+1;\psi_{e^j}^{\text{target}})\right],\label{eq-network_QDP_total_fid_pol_2}
	\end{align}
	where to obtain the second-last line we used Theorem~\ref{thm-network_QDP_exp_reward_no_distill} and to obtain the last line we used Proposition~\ref{prop-network_QDP_exp_reward_fid_tilde}. The task is therefore to optimize the quantity in \eqref{eq-network_QDP_total_fid_pol_2} with respect to the policies $\pi^{e^j}$ of the elementary links. We have assumed throughout this chapter that the agents act independently, which means that every policy $\pi^{e^j}$ is an independent variable. Our strategy for optimizing the objective function in \eqref{eq-network_QDP_total_fid_pol_2} is to optimize each term in the product individually, so in what follows we consider optimizing the policy for just one elementary link.
	
	\begin{remark}
		As explained in Remark~\ref{rem-QDP_network_extensions}, let us emphasize again that the policies being discussed here are not necessarily executed by the end-users themselves. The policies should be thought of as describing how the \textit{devices} located at the nodes of the end users should operate, given certain performance/resource requirements by the end users for their desired application. The \textit{training} of the devices proceeds via an optimization algorithm which could be backward or forward recursion (which we discuss below), or it could be a reinforcement learning algorithm (which we discuss in Appendix~\ref{sec-future_work}).~\defqed
	\end{remark}

\subsection{Backward recursion}\label{sec-network_QDP_backward_recursion}
	
	Recall that the agent in the definition of our quantum decision process in Definition~\ref{def-network_QDP_elem_link} is classical. Therefore, from the discussion in Section~\ref{sec-QDP_classical_v_quantum_agents}, we immediately obtain the following result.
	
	\begin{theorem}[Optimal finite-horizon policy for elementary link generation]\label{thm-opt_policy}
		Let $G=(V,E,c)$ be the graph corresponding to the physical (elementary) links of a quantum network, and let $e^j$, with $e\in E$ and $1\leq j\leq c(e)$, correspond to an arbitrary elementary link in the network. Then, for all $t\geq 1$, and for all pure states $\psi$,
		\begin{equation}
			\max_{\pi}\mathbb{E}\left[\widetilde{F}_{e^j}^{\pi}(t+1;\psi)\right]=\sum_{x_1=0}^1\max_{a_1\in\{0,1\}}w_{2}(x_1,a_1),
		\end{equation}
		where $\pi=(d_1,d_2,\dotsc,d_t)$ and
		\begin{align}
			w_{j}(h^{j-1},a_{j-1})&=\sum_{x_{j}=0}^1\max_{a_{j}\in\{0,1\}}w_{j+1}(h^{j-1},a_{j-1},x_j,a_j) \quad\forall~2\leq j\leq t,\\[0.2cm]
			w_{t+1}(h^{t},a_{t})&=\bra{\psi}\widetilde{\sigma}_{e^j}^{(\mathsf{E})}(t+1;h^t,a_t,1)\ket{\psi},\label{eq-network_QDP_pol_opt_final}
		\end{align}
		and 
		\begin{equation}
			\widetilde{\sigma}_{e^j}^{(\mathsf{E})}(t+1;h^t,a_t,1)\coloneqq\left(\mathcal{T}_{e^j}^{x_t,a_t,x_{t+1}}\circ\dotsb\circ\mathcal{T}_{e^j}^{x_2,a_2,x_3}\circ\mathcal{T}_{e^j}^{x_1,a_1,x_2}\circ\mathcal{T}_{e^j}^{0;x_1}\right)(\rho_{e^j}^S).
		\end{equation}
		Furthermore, the optimal policy is deterministic and given by
		\begin{equation}
			d_j^*(h^j)=\argmax_{a_j\in\{0,1\}}w_{j+1}(h^j,a_j)\quad \forall~1\leq j\leq t.
		\end{equation}
	\end{theorem}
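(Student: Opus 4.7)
The plan is to derive the theorem as a specialization of the general backward recursion result of Theorem~\ref{thm-QDP_opt_policy} to the elementary-link quantum decision process of Definition~\ref{def-network_QDP_elem_link}. The first step is to observe that, by Proposition~\ref{prop-network_QDP_exp_reward_fid_tilde}, maximizing $\mathbb{E}[\widetilde{F}_{e^j}^{\pi}(t+1;\psi)]$ over policies is equivalent to maximizing the expected reward $\mathbb{E}[R_{e^j}(t)]_{\pi}$ at horizon $t$ in the QDP of Definition~\ref{def-network_QDP_elem_link}. Thus the problem reduces to applying the finite-horizon policy optimization result of Theorem~\ref{thm-QDP_opt_policy} to this specific environment $\mathsf{E}$.

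Next, I would invoke the observation in Section~\ref{sec-QDP_classical_v_quantum_agents} that since the set of actions $\mathcal{A}=\{0,1\}$ here corresponds to a classical agent (the POVMs $\{M_{A_t}^{t;a_t}\}_{a_t\in\mathcal{A}}$ are rank-one projections $\ket{a_t}\bra{a_t}_{A_t}$ onto action basis elements), the generic backward recursion of Theorem~\ref{thm-QDP_opt_policy} simplifies to the form in \eqref{eq-QDP_opt_pol_classical_agent_1}--\eqref{eq-QDP_opt_pol_classical_agent_5}. This immediately yields deterministic optimal decision functions $d_j^{\ast}(h^j) = \argmax_{a_j\in\{0,1\}} w_{j+1}(h^j,a_j)$ and the stated recursion for $w_j$, with the only remaining task being to identify the terminal function $w_{t+1}$.

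To compute $w_{t+1}$, I would use \eqref{eq-QDP_opt_pol_classical_agent_3} together with the reward structure specified in Definition~\ref{def-network_QDP_elem_link}. There, $R_{e^j}(t)(h^{t+1},s) = \delta_{s,1}\delta_{x_{t+1},1}$ and $\mathcal{R}_{e^j}^{t;h^{t+1},1}(\cdot) = \psi_{e^j}^{\text{target}}(\cdot)\psi_{e^j}^{\text{target}}$, so the map $\widetilde{\mathcal{R}}_{e^j}^{T;h^{t+1}}$ of \eqref{eq-network_QDP_reward_map_tilde} collapses to $\delta_{x_{t+1},1}\,\psi(\cdot)\psi$. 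Plugging this into the sum over $x_{t+1}\in\{0,1\}$ in \eqref{eq-QDP_opt_pol_classical_agent_3} kills the $x_{t+1}=0$ term, and the surviving $x_{t+1}=1$ term gives exactly $\bra{\psi}\widetilde{\sigma}_{e^j}^{(\mathsf{E})}(t+1;h^t,a_t,1)\ket{\psi}$, matching \eqref{eq-network_QDP_pol_opt_final}.

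I do not expect a substantive obstacle here, since the argument is essentially a bookkeeping exercise that combines three ingredients already established: the general backward recursion (Theorem~\ref{thm-QDP_opt_policy}), its classical-agent simplification (Section~\ref{sec-QDP_classical_v_quantum_agents}), and the fidelity--reward identification (Proposition~\ref{prop-network_QDP_exp_reward_fid_tilde}). The only point requiring slight care is to verify that the Kronecker-delta structure of the reward collapses both sums (over the reward label $s$ and over $x_{t+1}$) correctly, and to confirm that the resulting $w_{t+1}$ depends only on the unnormalized conditional state $\widetilde{\sigma}_{e^j}^{(\mathsf{E})}(t+1;h^t,a_t,1)$ rather than the full classical-quantum state, which follows from the fact that the reward activates only on the branch $x_{t+1}=1$.
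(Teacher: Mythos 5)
Your proposal is correct and follows essentially the same route as the paper: the paper's proof likewise invokes the classical-agent simplification of the backward recursion from Section~\ref{sec-QDP_classical_v_quantum_agents} (equations \eqref{eq-QDP_opt_pol_classical_agent_1}--\eqref{eq-QDP_opt_pol_classical_agent_5}) and then uses the Kronecker-delta form of $\widetilde{\mathcal{R}}_{E_{t+1}^{e^j}}^{t;h^t,a_t,x_{t+1}}$ from \eqref{eq-network_QDP_reward_map_tilde} to identify the terminal function $w_{t+1}$ as in \eqref{eq-network_QDP_pol_opt_final}. Your additional explicit appeal to Proposition~\ref{prop-network_QDP_exp_reward_fid_tilde} for the fidelity--reward identification is sound bookkeeping that the paper handles in the surrounding discussion rather than in the proof itself.
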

	
	\begin{proof}
		The result follows immediately from the discussion in Section~\ref{sec-QDP_classical_v_quantum_agents}, because the agent is classical, specifically from \eqref{eq-QDP_opt_pol_classical_agent_1}--\eqref{eq-QDP_opt_pol_classical_agent_5}. Also, from \eqref{eq-network_QDP_reward_map_tilde}, recall that
		\begin{equation}
			\widetilde{\mathcal{R}}_{E_{t+1}^{e^j}}^{t;h^t,a_t,x_{t+1}}(\cdot)=\delta_{x_{t+1},1}\psi_{E_{t+1}^{e^j}}(\cdot)\psi_{E_{t+1}^{e^j}},
		\end{equation}
		which leads to the expression in \eqref{eq-network_QDP_pol_opt_final}.
	\end{proof}
	
	As mentioned at the beginning of Section~\ref{sec-QDP_forward_recursion}, the backward recursion algorithm is exponentially slow in the horizon time because of the fact that the number of histories grows exponentially in time. In this particular case, the number of histories up to time $t$ is $|\Omega(t)|=|\{0,1\}^{2t-1}|=2^{2t-1}$. Therefore, the functions $w_j$ in the statement of Theorem~\ref{thm-opt_policy}, which are used to determine the optimal actions, have an exponentially increasing number of values. Therefore, in practice, it is necessary to make use of other, more efficient algorithms, such as those based on forward recursion, or more generally reinforcement learning algorithms. We discuss these possibilities in Appendix~\ref{sec-future_work}. In Appendix~\ref{app-QDP_SDP}, we derive upper bounds on general quantum decision processes.

\subsection{Forward recursion}\label{sec-network_QDP_forward_recursion}

	Let us now consider the forward recursion policy, which we discussed in general terms in Section~\ref{sec-QDP_forward_recursion}. Intuitively, in this policy, the agent deterministically picks the action $a_t$ at time $t$ that maximizes the quantity $\mathbb{E}[\widetilde{F}^{\pi}(t+1)]$ at the next time step. We let $\pi^{\text{FR}}=(d_1^{\text{FR}},d_2^{\text{FR}},\dotsc)$ denote this policy.
	
	Let $G=(V,E,c)$ be the graph corresponding to the physical (elementary) links of a quantum network, and let $e^j$, with $e\in E$ and $1\leq j\leq c(e)$, correspond to an arbitrary elementary link in the network. Let $\pi=(d_1,d_2,\dotsc,d_{t-1})$ be an arbitrary policy of the elementary link up to time $t-1$. Under this policy, the classical-quantum state of the elementary link is (recall \eqref{eq-network_QDP_cq_state})
	\begin{equation}
		\widehat{\sigma}_{e^j}^{\pi}(t)=\sum_{h^t\in\Omega(t)}\ket{h^t}\bra{h^t}_{H_t^{e^j}}\otimes\widetilde{\sigma}_{E_t^{e^j}}^{\pi}(t;h^t).
	\end{equation}
	Now, for an arbitrary decision function $d_t$ corresponding to the decision at time $t$, we obtain the following classical-quantum state of the elementary link at time $t+1$:
	\begin{align}
		\widehat{\sigma}_{e^j}^{(\pi,d_t)}(t+1)&=\sum_{h^{t+1}\in\Omega(t+1)}\ket{h^{t+1}}\bra{h^{t+1}}_{H_{t+1}^{e^j}}\otimes\widetilde{\sigma}_{E_{t+1}^{e^j}}^{(\pi,d_t)}(t+1;h^{t+1})\\
		&=\sum_{\substack{h^t\in\Omega(t)\\a_t\in\mathcal{A}\\x_{t+1}\in\mathcal{X}}}\ket{h^t,a_t,x_{t+1}}\bra{h^t,a_t,x_{t+1}}_{H_{t+1}^{e^j}}\otimes d_t(h^t)(a_t)\mathcal{T}_{e^j}^{x_t,a_t,x_{t+1}}\left(\widetilde{\sigma}_{e_j}^{\pi}(t;h^t)\right).
	\end{align}
	Then,
	\begin{equation}
		\mathbb{E}\left[\widetilde{F}_{e^j}^{(\pi,d_t)}(t+1;\psi_{e^j})\right]=\Tr\left[\left(\ket{1}\bra{1}_{X_{t+1}^{e^j}}\otimes\psi_{e^j}\right)\widehat{\sigma}_{e^j}^{(\pi,d_t)}(t+1)\right].
	\end{equation}
	Using \eqref{eq-link_state_unnormalized} we have two possibilities for every history $h^t$ (assuming both actions are taken deterministically):
	\begin{align}
		a_t=0\Rightarrow \mathbb{E}\left[\widetilde{F}_{e^j}^{(\pi,d_t)}(t+1;\psi_{e^j})\right]&=\bra{\psi}\mathcal{T}_{e^j}^{X_{e^j}(t)(h^t),0,1}\left(\widetilde{\sigma}_{e^j}^{\pi}(t;h^t)\right)\ket{\psi}\nonumber\\[0.2cm]
		&=\Pr[H(t)=h^t]_{\pi}X_{e^j}(t)(h^t)\bra{\psi}\mathcal{N}_{e^j}\left(\rho_{e^j}\left(M_{e^j}^{\pi}(t)(h^t)\right)\right)\ket{\psi}\nonumber\\[0.2cm]
		&=\Pr[H(t)=h^t]_{\pi}X_{e^j}(t)(h^t)f_{M_{e^j}^{\pi}(t)(h^t)+1}^{e^j}(\rho_{e^j}^0),\\[0.3cm]
		a_t=1\Rightarrow \mathbb{E}\left[\widetilde{F}_{e^j}^{(\pi,d_t)}(t+1;\psi_{e^j})\right]&=\Pr[H(t)=h^t]_{\pi}p_{e^j} f_0^{e^j}(\rho_{e^j}^0)
	\end{align}
	So the task is to determine which of the two quantities, $X_{e^j}(t)(h^t)f_{M_{e^j}^{\pi}(t)(h^t)+1}^{e^j}(\rho_{e^j}^0)$ and $p_{e^j}f_0^{e^j}(\rho_{e^j}^0)$, is higher. If the elementary link is not active at time $t$, meaning that $X_{e^j}(t)(h^t)=0$, then requesting a link, i.e., selecting $a_t=1$, is gives a higher value than selecting $a_t=0$ (because the latter leads to a value of $\mathbb{E}[\widetilde{F}_{e^j}^{\pi}(t+1;\psi_{e^j})]=0$ for all $p_{e^j}>0$). On the other hand, if the elementary link is active at time $t$, then the task is to compare
	\begin{equation}\label{eq-forward_greedy_condition}
		f_{M_{e^j}^{\pi}(t)(h^t)+1}^{e^j}(\rho_{e^j}^0)\,(a_t=0)\quad\text{and}\quad p_{e^j}f_0^{e^j}(\rho_{e^j}^0)\,(a_t=1)
	\end{equation}
	for all histories $h^t\in\Omega(t)$. Which of these two quantities is higher (and thus which action is taken) depends on the success probability $p_{e^j}\in(0,1)$, the noise model of the quantum memory, and on the target pure state $\psi_{e^j}$. We conclude that the decision functions $d_t^{\text{FR}}$ for the forward recursion policy are given by
	\begin{equation}\label{eq-network_QDP_forward_recursion_policy}
		d_t^{\text{FR}}(h^t)=\left\{\begin{array}{l l} 1 & \text{if }x_t=0,\\[0.3cm] 0 & \text{if }x_t=1\text{ and }f_{M_{e^j}^{\pi}(t)(h^t)+1}^{e^j}(\rho_{e^j}^0)>p_{e^j} f_0^{e^j}(\rho_{e^j}^0),\\[0.3cm] 1 & \text{if } x_t=1\text{ and }f_{M_{e^j}^{\pi}(t)(h^t)+1}^{e^j}(\rho_{e^j}^0)\leq p_{e^j} f_0^{e^j}(\rho_{e^j}^0), \end{array}\right.
	\end{equation}
	for all $t\geq 1$, where $\pi=(d_1,d_2,\dotsc,d_{t-1})$ is an arbitrary policy providing the actions of the agent up to time $t-1$.

	\begin{remark}\label{rem-pol_opt_obj_func}
		We end this chapter with a remark about our choice of the objective function for policy optimization.
		
		We have justified our use of the fidelity random variable $\widetilde{F}$ as the objective function for policy optimization because it arises naturally in the context of entanglement distribution as defined in Section~\ref{sec-ent_dist_general}. However, it is possible to justify this choice in a slightly different way, which is worth pointing out. We do this by considering two alternatives to the function $\widetilde{F}$.
		
		The first alternative is to use the expectation value of the link status random variable $X$, as defined in Definition~\ref{def-network_QDP_elem_link}. If we use $\mathbb{E}[X(t)]$ as the objective function for policy optimization, then it is clear that the policy consisting of the action ``request'' at every time step before the link is established, and the action ``wait'' at every time step after the link is established, is optimal, in the sense that it achieves the highest value of $\mathbb{E}[X(t)]$ for all $t\geq 1$. In other words, the optimal policy in this case is simply the policy $\pi=(d_1,d_2,\dotsc)$ given by
		\begin{equation}
			d_t(h^t)=\left\{\begin{array}{l l} 1 & \text{if }x_t=0, \\ 0 & \text{if }x_t=1, \end{array}\right.
		\end{equation}
		for all $t\geq 1$ and all histories $h^t\in\Omega(t)$. A higher value of $\mathbb{E}[X(t)]$ comes, of course, at the cost of a lower fidelity, because each ``wait'' action decreases the fidelity of the quantum state stored in memory.
		
		The second alternative to $\widetilde{F}$ is the fidelity random variable $F$ defined alongside $\widetilde{F}$ in \eqref{eq-network_QDP_elem_link_fidelity}. Then, if we attempt to maximize $\mathbb{E}[F(t)]$ with respect to policies, then it is clear that the action ``request'' at every time step is optimal, because then the quantity $\mathbb{E}[F(t)]$ is equal to the initial fidelity $f_0$ at all time steps, which is the highest that can be obtained (without entanglement distillation). In other words, the optimal policy in this case is given by $\pi=(d_1,d_2,\dotsc)$, where $d_t(h^t)=1$ for all $t\geq 1$ and all histories $h^t\in\Omega(t)$. A higher value of the fidelity comes at the cost of a lower expected link value, because the probability that the link is active stays at $p$ for all times under this policy, i.e., $\Pr[X(t)=1]=p$ for all $t\geq 1$ if at every time step the agent requests a link.
		
		The quantity $\mathbb{E}[\widetilde{F}(t)]=\mathbb{E}[X(t)f_{M(t)}(\rho^0)]$ by definition incorporates the trade-off between the link value and the link fidelity, and can lead to non-trivial policies.~\defqed
	\end{remark}

\section{Summary}
	
	The developments of this chapter constitute one of the core contributions of this thesis: a protocol for entanglement distribution in a quantum network using quantum decision processes, summarized in Figure~\ref{fig-QDP_protocol}. The protocol is a relatively simple one in which a quantum decision process is used to model the evolution of the elementary links of the network for a certain horizon time $T$, before the required entanglement distillation and joining protocols are performed in order to transform the network of elementary links into a network of virtual links. We defined the quantum decision process for elementary links in Section~\ref{sec-indep_agents}, and then we provided an expression for the expected quantum state of an elementary link along with other important facts. In Section~\ref{sec-practical_figures_merit}, we defined figures of merit that are relevant for evaluating policies for elementary link generation. These figures of merit take into account the practical aspects of entanglement distribution as presented in Section~\ref{sec-network_architecture} of the previous chapter, such as probabilistic elementary link generation and quantum memories with finite coherence times. Then, in Section~\ref{sec-network_QDP_pol_opt}, we applied the general backward and forward recursion algorithms presented in Chapter~\ref{chap-QDP} to the case of elementary link generation. These algorithms provide us with an optimal sequence of actions that should be performed by the agents in a quantum network for the purpose of achieving entanglement distribution using the protocol in Figure~\ref{fig-QDP_protocol}.
	
	The most important aspect of the developments of this chapter is that, with quantum decision processes, we have a systematic method for understanding and developing quantum network protocols, and we are able to incorporate crucial practical aspects of entanglement distribution that are not typically taken into account in information-theoretic treatments of quantum network protocols, such as probabilistic elementary link generation and quantum memories with finite coherence times. Furthermore, such information-theoretic works are typically only concerned with optimal rates for entanglement distribution and not with explicit (practical) protocols to attain them. The work of this chapter, and of this thesis as a whole, is meant to provide a step toward achieving such optimal practical protocols.

\chapter{MEMORY-CUTOFF POLICIES}\label{chap-mem_cutoff}

	In the previous chapter, we developed a quantum network protocol using a quantum decision process to model the generation of elementary links; see Figure~\ref{fig-net_agent_env_0} and Figure~\ref{fig-QDP_protocol} for a summary. The upshot of the development of the previous chapter is that we have a systematic method for determining an optimal sequence of actions that should be performed in a quantum network for the purpose of entanglement distribution. In order to make these developments more concrete, and also to connect to prior work, in this chapter we look at a specific example of a policy.
	
	A natural policy to consider, and one that has been considered extensively previously \cite{CJKK07,SRM+07,SSM+07,BPv11,JKR+16,DHR17,RYG+18,SSv19,KMSD19,SJM19,LCE20}, is the following. An elementary link is requested at every time step until it is established, and once it is established it is held in quantum memory for some pre-specified amount $t^{\star}$ of time (usually called the ``memory cutoff'' and not necessarily equal to the memory coherence time) before the quantum state of the elementary link is discarded and requested again. The cutoff $t^{\star}$ can be any value in the set $\mathbb{N}_0\cup\{\infty\}$, where $\mathbb{N}_0=\{0,1,2,\dotsc\}$. There are two extreme cases of this policy: when $t^{\star}=0$, a request is made at \textit{every} time step regardless of whether the previous request succeeded; if $t^{\star}=\infty$, then an elementary link request is made at every time step until the elementary link is established, and once it is established the corresponding entangled state remains in memory indefinitely---no further request is ever made. In this chapter we provide a complete analysis of this policy for all values $t^{\star}\in\mathbb{N}_0\cup\{\infty\}$. The following are our primary goals in this chapter.
	\begin{itemize}
		\item Determine the expected quantum state of an elementary link under the memory-cutoff policy, as given by the general expression in \eqref{eq-link_avg_q_state}. We do this in Section~\ref{sec-mem_cutoff_exp_q_state} in the finite-horizon ($t<\infty$) and infinite-horizon ($t\to\infty$) cases.
		\item Calculate the figures of merit defined in Section~\ref{sec-practical_figures_merit} for elementary links under the memory-cutoff policy. We do this in Section~\ref{sec-mem_cutoff_figures_of_merit}.
	\end{itemize}
	
	Throughout this chapter, we deal exclusively with elementary links. Therefore, for convenience and ease of notation, throughout this chapter we suppress the dependence of functions, random variables, probability distributions, and policies on the edge $e^j$ of the physical graph corresponding to an elementary link, unless it is required, with the understanding that the expressions hold for an arbitrary elementary link.

\section{Definition and basic properties}
	
	The memory-cutoff is a deterministic policy that is defined as follows.
	
	\begin{definition}[Memory-cutoff policy]
		Given a $t^{\star}\in\mathbb{N}_0\cup\{\infty\}$, the \textit{$t^{\star}$ memory-cutoff policy} is the sequence $\pi^{t^{\star}}=(d_1^{t^{\star}},d_2^{t^{\star}},\dots)$, where
		\begin{align}
			\text{if }t^{\star}\in\mathbb{N}_0\,\,:\,\,d_t^{t^{\star}}(h^t)&\coloneqq\left\{\begin{array}{l l} 0 & \text{if }M^{t^{\star}}(t)(h^t)<t^{\star}, \\ 1 & \text{if }M^{t^{\star}}(t)(h^t)=t^{\star}, \end{array}\right.\\[0.3cm]
			\text{if }t^{\star}=\infty\,\,:\,\,d_t^{\infty}(h^t)&\coloneqq\left\{\begin{array}{l l} 0 & \text{if } M^{\infty}(t)(h^t)\geq 0, \\ 1 & \text{if }M^{\infty}(t)(h^t)=-1, \end{array}\right.
		\end{align}
		where the memory time random variable $M^{t^{\star}}(t)$ is defined as
		\begin{equation}\label{eq-mem_time_cutoff_policy}
			M^{t^{\star}}(t)=\left\{\begin{array}{l l} \displaystyle \left(\sum_{j=1}^t X(j)-1\right)\text{mod}(t^{\star}+1) & \text{if }t^{\star}\in\mathbb{N}_0, \\[0.3cm] \displaystyle \sum_{j=1}^t X(j)-1 & \text{if }t^{\star}=\infty, \end{array}\right. 
		\end{equation}
		and $X(t)$, $t\geq 1$, are the elementary link status random variables defined in Definition~\ref{def-network_QDP_elem_link}.~\defqed
	\end{definition}
	
	\begin{remark}
		We start by making the following initial remarks about the definition of the memory-cutoff policy.
		\begin{itemize}
			\item Our definition of the memory time in \eqref{eq-mem_time_cutoff_policy} for $t^{\star}\in\mathbb{N}_0$ is different from the general definition provided in \eqref{eq-mem_time_def2}, and we make this modification for convenience. With this formula, the memory time is always in $\{0,1,\dotsc,t^{\star}\}$ when $t^{\star}\in\mathbb{N}_0$. Also note that, with this formula, we get a memory value of $-1\text{mod}(t^{\star}+1)=t^{\star}$ even when the memory is not loaded. The advantage of this is that, if  $M^{t^{\star}}(t)<t^{\star}$, then $X(t)=1$. For $t^{\star}=\infty$, the definition of $M^{\infty}(t)$ is consistent with \eqref{eq-mem_time_def2}. In particular, the possible values of $M^{\infty}(t)$ are $-1,0,1,\dotsc,t-1$.
			
			\item Note that for $t^{\star}=0$, we have $M^0(t)(h^t)=0$ for all $t\geq 1$, which means that the decision function is simply
				\begin{equation}\label{eq-mem_cutoff_zero}
					d_t^0(h^t)=1\quad\forall~t\geq 1,\,h^t\in\{0,1\}^{2t-1}.
				\end{equation}
			
				Also, observe that for $t^{\star}=\infty$, we can write the decision function $d_t^{\infty}$ in the following simpler form:
				\begin{equation}\label{eq-mem_cutoff_infty_dt_simpler}
					d_t^{\infty}(h^t)=\left\{\begin{array}{l l} 0 & \text{if }X(t)(h^t)=1, \\ 1 & \text{if } X(t)(h^t)=0. \end{array}\right.
				\end{equation}
				In other words, for the $t^{\star}=\infty$ memory-cutoff policy, it suffices to look at the status of the elementary link at the current time in order to determine the next action.
			
			\item We denote probability distributions of histories $H(t)=(X(1),A(1),\dotsc,A(t-1),X(t))$ under the $t^{\star}$ memory-cutoff policy by $\Pr[H(t)=h^t]_{t^{\star}}$, and similarly for marginal distributions.
			
			\item We can use \eqref{eq-link_trans_prob_4} to conclude that
				\begin{equation}
					\Pr[X(t+1)=x_{t+1}|H(t)=h^t,A(t)=0]_{t^{\star}}=\left\{\begin{array}{l l} 0 & \text{if } x_{t+1}=0,\\ 1 & \text{if } x_{t+1}=1.\end{array}\right.
				\end{equation}
				The transition probabilities given in \eqref{eq-link_trans_prob_1}--\eqref{eq-link_trans_prob_4} therefore reduce to the following for the $t^{\star}$ memory-cutoff policy for all $h^t\in\Omega(t)$ and all $t^{\star}\in\mathbb{N}_0\cup\{\infty\}$:
				\begin{align}
					\Pr[X(t+1)=0|H(t)=h^t,A(t)=1]_{t^{\star}}&=1-p,\label{eq-cutoff_trans_gen_prob1}\\
					\Pr[X(t+1)=1|H(t)=h^t,A(t)=1]_{t^{\star}}&=p,\label{eq-cutoff_trans_gen_prob2}\\
					\Pr[X(t+1)=0|H(t)=h^t,A(t)=0]_{t^{\star}}&=0,\label{eq-cutoff_trans_gen_prob3}\\
					\Pr[X(t+1)=1|H(t)=h^t,A(t)=0]_{t^{\star}}&=1.\label{eq-cutoff_trans_gen_prob4}
				\end{align}
				The following conditional probabilities then hold for all $t^{\star}\in\mathbb{N}_0$:
				\begin{align}
					&\Pr[X(t+1)=1,M^{t^{\star}}(t+1)=0|X(t)=0,M^{t^{\star}}(t)=m]_{t^{\star}}=p,\,\, 0\leq m\leq t^{\star},\label{eq-cutoff_trans_prob1}\\
					&\Pr[X(t+1)=1,M^{t^{\star}}(t+1)=0|X(t)=1,M^{t^{\star}}(t)=t^{\star}]_{t^{\star}}=p,\label{eq-cutoff_trans_prob2}\\
					&\Pr[X(t+1)=0,M^{t^{\star}}(t+1)=t^{\star}|X(t)=0,M^{t^{\star}}(t)=m]_{t^{\star}}=1-p,\,\, 0\leq m\leq t^{\star},\label{eq-cutoff_trans_prob3}\\
					&\Pr[X(t+1)=0,M^{t^{\star}}(t+1)=t^{\star}|X(t)=1,M^{t^{\star}}(t)=t^{\star}]_{t^{\star}}=1-p,\label{eq-cutoff_trans_prob4}\\
					&\Pr[X(t+1)=1,M^{t^{\star}}(t+1)=m+1|X(t)=1,M^{t^{\star}}(t)=m]_{t^{\star}}=1,\,\, 0\leq m\leq t^{\star}-1,\label{eq-cutoff_trans_prob5}
				\end{align}
				with all other possible conditional probabilities equal to zero. Since these transition probabilities are time-independent, and since the pair $(X(t+1),M^{t^{\star}}(t+1))$ depends only on $(X(t),M^{t^{\star}}(t))$, we have that $((X(t),M^{t^{\star}}(t)):t\geq 1)$ is a stationary/time-homogeneous Markov process. As such, the conditional probabilities can be organized into the transition matrix $T(t^{\star})$, $t^{\star}\in\mathbb{N}_0$, defined as follows:
				\begin{multline}\label{eq-cutoff_trans_prob6}
					\left(T(t^{\star})\right)_{\substack{x,m\\x',m'}}\coloneqq \Pr[X(t+1)=x,M^{t^{\star}}(t+1)=m|X(t)=x',M^{t^{\star}}(t)=m']_{t^{\star}},\\x,x'\in\{0,1\},~m,m'\in\{0,1,\dotsc,t^{\star}\}.
				\end{multline}
				
				For $t^{\star}=\infty$, observe that the action at time $t\geq 1$ depends only the current value of the elementary link, not on its entire history. In particular,
				\begin{equation}
					A(t)=1-X(t) \quad\text{(when }t^{\star}=\infty).
				\end{equation}
				Indeed, if $X(t)=0$, then by definition of the $t^{\star}=\infty$ memory-cutoff policy a request is made, so that $A(t)=1$, as required. If $X(t)=1$, then the elementary link is kept in memory, meaning that $A(t)=0$. The transition probabilities in \eqref{eq-cutoff_trans_gen_prob1}--\eqref{eq-cutoff_trans_gen_prob4} can therefore be simplified to the following when $t^{\star}=\infty$:
				\begin{align}
					\Pr[X(t+1)=0|X(t)=0]_{\infty}&=1-p,\\
					\Pr[X(t+1)=1|X(t)=0]_{\infty}&=p,\\
					\Pr[X(t+1)=0|X(t)=1]_{\infty}&=0,\\
					\Pr[X(t+1)=1|X(t)=1]_{\infty}&=1.
				\end{align}
				These transition probabilities are time-independent and Markovian, so they can be organized into the transition matrix $T(\infty)$ defined as follows:
				\begin{equation}\label{eq-cutoff_trans_infty}
					\left(T(\infty)\right)_{\substack{x\\x'}}\coloneqq \Pr[X(t+1)=x|X(t)=x']_{\infty},\quad x,x'\in\{0,1\}.
				\end{equation}
				
				In Section~\ref{sec-mem_cutoff_exp_q_state_infinite_horizon} (see Remark~\ref{rem-mem_cutoff_stationary_dist} therein), we determine the stationary (i.e., $t\to\infty$) distribution of the Markov processes defined by the transition matrix $T(t^{\star})$ defined in \eqref{eq-cutoff_trans_prob6}.~\defqed
		\end{itemize}
	\end{remark}

\section{Expected quantum state of an elementary link}\label{sec-mem_cutoff_exp_q_state}
	
	Recall from \eqref{eq-link_avg_q_state} that the expected quantum state of an elementary link of a quantum network undergoing a policy $\pi$ is given by 
	\begin{equation}\label{eq-link_avg_q_state_mem_cutoff}
			\sigma^{\pi}(t)=(1-\Pr[X(t)=1]_{\pi})\tau^{\varnothing}+\sum_{m}\Pr[X(t)=1,M^{\pi}(t)=m]_{\pi}\,\rho(m),
		\end{equation}
		where $\tau^{\varnothing}$ and $\rho(m)$ are defined in \eqref{eq-initial_link_states} and \eqref{eq-elem_link_state_in_mem}, respectively. We also have the conditional state in \eqref{eq-link_avg_q_state_conditional}:
		\begin{equation}\label{eq-link_avg_q_state_conditional_mem_cutoff}
			\sigma^{\pi}(t|X(t)=1)=\sum_m \Pr[M_e^{\pi}(t)=m|X(t)=1]_{\pi}\rho(m).
		\end{equation}
		We see that these states can be found by calculating the probability distributions given by $\Pr[X(t)=1,M^{\pi}(t)=m]_{\pi}$ and $\Pr[X(t)=1]_{\pi}$. Let us now do exactly this for the memory-cutoff policy. Specifically, we provide analytic expressions for the expected quantum state for all $t^{\star}\in\mathbb{N}_0\cup\{\infty\}$ in both the finite-horizon ($t<\infty$) setting and the infinite-horizon ($t\to\infty$) setting. We defer all of the proofs to Appendix~\ref{app-mem_cutoff_details}.

\subsection{Finite-horizon setting}\label{sec-mem_cutoff_exp_q_state_finite_horizon}

	In the finite-horizon setting, we obtain the following result for the joint probability distribution of the link value $X(t)$ and memory time $M^{t^{\star}}(t)$ random variables.
		
	\begin{theorem}\label{thm-mem_status_pr}
		For all $t\geq 1$, $t^{\star}\in\mathbb{N}_0\cup\{\infty\}$, and $p\in[0,1]$,
		\begin{equation}\label{eq-mem_status_pr_a}
			\Pr[M^{t^{\star}}(t)=m,X(t)=1]_{t^{\star}}=p(1-p)^{t-(m+1)},\quad t\leq t^{\star}+1,~0\leq m\leq t-1,
		\end{equation}
		and
		\begin{multline}\label{eq-mem_status_pr}
			\Pr[M^{t^{\star}}(t)=m,X(t)=1]_{t^{\star}}=\sum_{x=0}^{\floor{\frac{t-1}{t^{\star}+1}}} \binom{t-(m+1)-xt^{\star}}{x}\boldsymbol{1}_{t-(m+1)-x(t^{\star}+1)\geq 0}p^{x+1} (1-p)^{t-(m+1)-x(t^{\star}+1)}, \\ t>t^{\star}+1,~0\leq m\leq t^{\star}.
		\end{multline}
	\end{theorem}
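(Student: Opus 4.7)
My approach is to exploit the deterministic structure of the memory-cutoff policy to decompose the agent--environment trajectory into a sequence of ``events''. Using the transition probabilities \eqref{eq-cutoff_trans_prob1}--\eqref{eq-cutoff_trans_prob5}, observe that under $\pi^{t^{\star}}$ each successful link establishment at some time $j$ is followed deterministically by exactly $t^{\star}$ ``wait'' steps (during which $X(j+1) = \cdots = X(j+t^{\star}) = 1$ almost surely), after which a request is issued; whereas a failed request at time $k$ occupies precisely one time step and is immediately followed by another request. Thus time $[1,t]$ is partitioned into successful cycles of length $t^{\star}+1$ and single-step failures, interleaved in some order determined by the random outcomes of the requests.

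For the first part, $t \leq t^{\star}+1$, I would argue directly that no previous successful cycle can yet have completed. Hence the event $\{M^{t^{\star}}(t) = m,\, X(t) = 1\}$ coincides with the event ``the first success occurred at time $t - m$'', which requires $t-m-1$ consecutive request failures followed by a success (and then $m$ deterministic waits that do not affect the probability). Independence of the requests then gives $\Pr = p(1-p)^{t-(m+1)}$, as claimed in \eqref{eq-mem_status_pr_a}.

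For the main case $t > t^{\star}+1$, I would condition on $x$, the number of previous successful establishments strictly before the current one at time $t-m$. In the prefix $[1, t-m-1]$, these $x$ successes occupy $x(t^{\star}+1)$ time steps and the remaining $t-(m+1)-x(t^{\star}+1)$ time steps must be failures. The number of interleavings of $x$ successes with $t-(m+1)-x(t^{\star}+1)$ failures equals
\begin{equation*}
\binom{x + \bigl(t-(m+1)-x(t^{\star}+1)\bigr)}{x} = \binom{t-(m+1)-xt^{\star}}{x},
\end{equation*}
and each such configuration has probability $p^{x+1}(1-p)^{t-(m+1)-x(t^{\star}+1)}$ (counting the final success at time $t-m$). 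The indicator $\boldsymbol{1}_{t-(m+1)-x(t^{\star}+1)\geq 0}$ enforces that the number of failures is nonnegative, and the summation range $0 \leq x \leq \lfloor (t-1)/(t^{\star}+1)\rfloor$ is a loose (but correct) upper bound on the number of complete previous cycles that fit into $[1,t]$; the indicator automatically kills the extra terms.

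The main technical obstacle will be the combinatorial bookkeeping: one must verify that every observation trajectory $(x_1,\ldots,x_t)$ consistent with $\{M^{t^{\star}}(t)=m,\,X(t)=1\}$ is uniquely parametrized by the pair $(x,\text{interleaving})$, and that successes in the prefix cannot straddle time $t-m$ (i.e., the current cycle is unambiguously the one at time $t-m$). Both facts follow from the determinism of $\pi^{t^{\star}}$ on the ``wait'' branch: once the request outcomes are fixed, the entire history is forced. If the combinatorial argument proves too delicate to write cleanly, an alternative is an induction on $t$ using the time-homogeneous transition matrix $T(t^{\star})$ in \eqref{eq-cutoff_trans_prob6}, expressing $\Pr[M^{t^{\star}}(t+1)=m, X(t+1)=1]_{t^{\star}}$ in terms of the probabilities at time $t$ and verifying that the claimed formula satisfies the induced recursion.
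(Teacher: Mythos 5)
Your proposal is correct and follows essentially the same route as the paper's proof: your decomposition into successful cycles of length $t^{\star}+1$ interleaved with single-step failures is exactly the paper's counting of full blocks of ones ($Y_1^{t^{\star}}(t)=x$) together with the $m+1$ trailing ones, and you arrive at the same interleaving count $\binom{t-(m+1)-xt^{\star}}{x}$ and the same per-configuration probability $p^{x+1}(1-p)^{t-(m+1)-x(t^{\star}+1)}$. The uniqueness of the parametrization by $(x,\text{interleaving})$, which you flag as the main bookkeeping obstacle, is handled in the paper by the same determinism argument (the request outcomes force the entire trajectory), so no gap remains.
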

	
	\begin{proof}
		See Appendix~\ref{sec-mem_status_pr_pf}.
	\end{proof}
	
	As an immediate corollary of Theorem~\ref{thm-mem_status_pr}, we obtain the probability distribution of the link value random variable $X(t)$.
	
	\begin{corollary}\label{cor-link_status_Pr1}
		For all $t\geq 1$, $t^{\star}\in\mathbb{N}_0\cup\{\infty\}$, and $p\in[0,1]$, the probability that an elementary link of a quantum network undergoing the $t^{\star}$ memory-cutoff policy is active at time $t$ is
		\begin{equation}\label{eq-link_status_Pr1}
			\Pr[X(t)=1]_{t^{\star}}=\left\{\begin{array}{l l} 1-(1-p)^t, & t\leq t^{\star}+1,\\[0.5cm] \displaystyle \sum_{x=0}^{\floor{\frac{t-1}{t^{\star}+1}}}\sum_{k=1}^{t^{\star}+1}\binom{t-k-xt^{\star}}{x}\boldsymbol{1}_{t-k-x(t^{\star}+1)\geq 0}p^{x+1}(1-p)^{t-k-(t^{\star}+1)x}, & t>t^{\star}+1. \end{array}\right.
		\end{equation}
	\end{corollary}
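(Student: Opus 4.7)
The plan is to derive this corollary as a direct marginalization of the joint distribution given by Theorem~\ref{thm-mem_status_pr}. Since $X(t)$ is a binary random variable and $M^{t^\star}(t)$ takes values in a known finite set, we have
\begin{equation}
\Pr[X(t)=1]_{t^\star} = \sum_{m} \Pr[X(t)=1,\,M^{t^\star}(t)=m]_{t^\star},
\end{equation}
where the summation range for $m$ depends on which regime of $t$ we are in. So the proof splits into the two cases $t\leq t^\star+1$ and $t>t^\star+1$, mirroring the structure of Theorem~\ref{thm-mem_status_pr}.

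In the first case ($t\leq t^\star+1$), the memory time when the link is active takes values $0,1,\dots,t-1$ (since no cutoff-induced reset can have occurred yet). Substituting \eqref{eq-mem_status_pr_a} and reindexing with $j=t-(m+1)$, I would compute
\begin{equation}
\sum_{m=0}^{t-1} p(1-p)^{t-(m+1)} \;=\; p\sum_{j=0}^{t-1}(1-p)^j \;=\; 1-(1-p)^t,
\end{equation}
which is exactly the first branch of \eqref{eq-link_status_Pr1}. This also has a clean direct interpretation: in this early regime the policy has requested the link at every time step, so success by time $t$ is just a geometric event.

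In the second case ($t>t^\star+1$), the memory time when the link is active takes values in $\{0,1,\dots,t^\star\}$. I would substitute \eqref{eq-mem_status_pr} and perform the change of variables $k=m+1$, so that $k$ ranges over $1,\dots,t^\star+1$. This immediately yields
\begin{equation}
\Pr[X(t)=1]_{t^\star} = \sum_{x=0}^{\lfloor (t-1)/(t^\star+1)\rfloor}\sum_{k=1}^{t^\star+1}\binom{t-k-xt^\star}{x}\boldsymbol{1}_{t-k-x(t^\star+1)\geq 0}\,p^{x+1}(1-p)^{t-k-(t^\star+1)x},
\end{equation}
which matches the second branch of \eqref{eq-link_status_Pr1}. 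The order of summation over $x$ and $k$ can be swapped freely since the range of $x$ does not depend on $m$ (and hence not on $k$).

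There is no real obstacle here: the result is purely an exercise in marginalization given Theorem~\ref{thm-mem_status_pr}. The only point requiring minor care is the handling of the indicator $\boldsymbol{1}_{t-(m+1)-x(t^\star+1)\geq 0}$ and the upper limit on $x$, but both are preserved unchanged by the reindexing $k=m+1$, so the manipulation is purely formal. The case $t^\star=\infty$ is subsumed by the first branch for all finite $t$, since the condition $t\leq t^\star+1$ is vacuously true.
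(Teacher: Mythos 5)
Your proof is correct and follows essentially the same route as the paper: the corollary is obtained by marginalizing the joint distribution of Theorem~\ref{thm-mem_status_pr} over the memory time, summing $m$ from $0$ to $t-1$ when $t\leq t^{\star}+1$ (giving the geometric sum $1-(1-p)^t$) and from $0$ to $t^{\star}$ when $t>t^{\star}+1$ (with the relabeling $k=m+1$). The handling of the indicator, the summation limits, and the $t^{\star}=\infty$ case are all consistent with the paper.
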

	
	\begin{proof}
		This follows immediately from the fact that $\Pr[X(t)=1]_{t^{\star}}=\sum_{m=0}^{t-1}\Pr[X(t)=1,M^{t^{\star}}(t)=m]_{t^{\star}}$ for $t\leq t^{\star}+1$ and that $\Pr[X(t)=1]_{t^{\star}}=\sum_{m=0}^{t^{\star}} \Pr[X(t)=1,M^{t^{\star}}(t)=m]$ for $t>t^{\star}+1$.
	\end{proof}
	
	\begin{figure}
		\centering
		\includegraphics[scale=1]{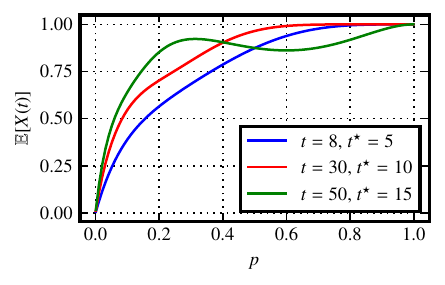}\quad
		\includegraphics[scale=1]{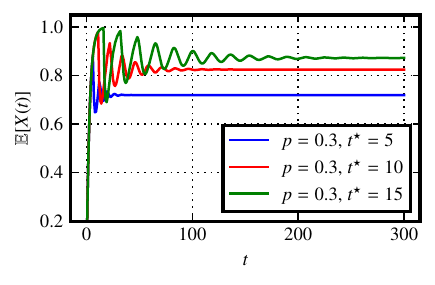}
		\caption{(Left) The expected elementary link value, given by \eqref{eq-link_status_Pr1}, as a function of the transmission-heralding probability $p$ for various values of $t$ and $t^{\star}$. (Right) The expected elementary link value, given by \eqref{eq-link_status_Pr1}, as a function of $t$ for fixed values of $p$ and $t^{\star}$.}\label{fig-avg_link_value_example}
	\end{figure}
	
	See Figure~\ref{fig-avg_link_value_example} for plots of $\mathbb{E}[X(t)]_{t^{\star}}=\Pr[X(t)=1]_{t^{\star}}$ as a function of the time steps $t$ and as a function of the transmission-heralding success probability $p$.
	
	Combining Theorem~\ref{thm-mem_status_pr} and Corollary~\ref{cor-link_status_Pr1} immediately leads to our desired expressions for the quantum states in \eqref{eq-link_avg_q_state_mem_cutoff} and \eqref{eq-link_avg_q_state_conditional_mem_cutoff}.
	
	\begin{corollary}
		For all $t\geq 1$, $t^{\star}\in\mathbb{N}_0\cup\{\infty\}$, and $p\in[0,1]$, the expected quantum state of an elementary link undergoing the $t^{\star}$ memory-cutoff policy is
		\begin{equation}\label{eq-avg_q_state_link_overall}
			\sigma^{t^{\star}}(t)=\left\{\begin{array}{l l} \displaystyle (1-p)^t\tau^{\varnothing}+\sum_{m=0}^{t-1}p(1-p)^{t-(m+1)}\rho(m), & t\leq t^{\star}+1,\\[0.5cm] \displaystyle (1-\Pr[X(t)=1]_{t^{\star}})\tau^{\varnothing}+\sum_{m=0}^{t^{\star}}\Pr[M^{t^{\star}}(t)=m,X(t)=1]_{t^{\star}}\rho(m) & t>t^{\star}+1, \end{array} \right.
		\end{equation}
		where for $t>t^{\star}+1$ the expressions for $\Pr[X(t)=1]_{t^{\star}}$ and $\Pr[M^{t^{\star}}(t)=m,X(t)=1]_{t^{\star}}$ are given in \eqref{eq-link_status_Pr1} and \eqref{eq-mem_status_pr}, respectively. From \eqref{eq-link_avg_q_state_conditional}, we also have
		\begin{equation}\label{eq-avg_q_state_link_active}
			\sigma^{t^{\star}}(t|X(t)=1)=\left\{\begin{array}{l l} \displaystyle \sum_{m=0}^{t-1}\frac{p(1-p)^{t-(m+1)}}{1-(1-p)^t}\rho(m), & t\leq t^{\star}+1,\\[0.5cm] \displaystyle \sum_{m=0}^{t^{\star}}\frac{\Pr[M^{t^{\star}}(t)=m,X(t)=1]_{t^{\star}}}{\Pr[X(t)=1]_{t^{\star}}}\rho(m), & t>t^{\star}+1, \end{array}\right.
		\end{equation}
		where again for $t>t^{\star}+1$ the expressions for $\Pr[X(t)=1]_{t^{\star}}$ and $\Pr[M^{t^{\star}}(t)=m,X(t)=1]_{t^{\star}}$ are given in \eqref{eq-link_status_Pr1} and \eqref{eq-mem_status_pr}, respectively.
	\end{corollary}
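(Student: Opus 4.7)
The plan is to derive both formulas by direct substitution into the general expressions \eqref{eq-link_avg_q_state_mem_cutoff} and \eqref{eq-link_avg_q_state_conditional_mem_cutoff}, using the joint distribution from Theorem~\ref{thm-mem_status_pr} and the marginal from Corollary~\ref{cor-link_status_Pr1}. The only real content beyond mechanical substitution is understanding (i) the correct summation range for the memory time $M^{t^\star}(t)$ in each regime and (ii) the simplification of $1 - \Pr[X(t)=1]_{t^\star}$ in the regime $t \leq t^\star + 1$.

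First I would split into two cases based on $t$ versus $t^\star + 1$. In the \emph{short-time regime} $t \leq t^\star + 1$, the cutoff has not yet been reached, so the memory time random variable $M^{t^\star}(t)$ takes values in $\{0,1,\dotsc,t-1\}$; Theorem~\ref{thm-mem_status_pr} (specifically \eqref{eq-mem_status_pr_a}) gives $\Pr[M^{t^\star}(t)=m,X(t)=1]_{t^\star} = p(1-p)^{t-(m+1)}$ for $0 \leq m \leq t-1$, and Corollary~\ref{cor-link_status_Pr1} gives $\Pr[X(t)=1]_{t^\star} = 1-(1-p)^t$, so $1-\Pr[X(t)=1]_{t^\star}=(1-p)^t$. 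Substituting these into \eqref{eq-link_avg_q_state_mem_cutoff} yields the first branch of \eqref{eq-avg_q_state_link_overall}, and dividing the joint by the marginal gives the first branch of \eqref{eq-avg_q_state_link_active}.

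In the \emph{long-time regime} $t > t^\star + 1$, the memory time saturates at $t^\star$, so $M^{t^\star}(t)$ ranges over $\{0,1,\dotsc,t^\star\}$ (this follows because of the mod $(t^\star+1)$ in \eqref{eq-mem_time_cutoff_policy}, which ensures the value resets once the cutoff is hit). Substituting the appropriate branches of \eqref{eq-mem_status_pr} and \eqref{eq-link_status_Pr1} into \eqref{eq-link_avg_q_state_mem_cutoff} and \eqref{eq-link_avg_q_state_conditional_mem_cutoff} gives the second branches of \eqref{eq-avg_q_state_link_overall} and \eqref{eq-avg_q_state_link_active} directly; here I would simply leave the probability factors symbolic rather than expanded, since the explicit formulas are already recorded in the cited results.

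There is no real obstacle: once the index ranges for $m$ are set correctly and the identity $1-(1-p)^t$ is recognized as the marginal $\Pr[X(t)=1]_{t^\star}$ in the short-time regime, every remaining step is a direct substitution. I would only need to remark that the $t=t^\star+1$ boundary is covered by either case consistently, since at that value of $t$ the summation $\sum_{m=0}^{t-1}$ and $\sum_{m=0}^{t^\star}$ agree and the two expressions for the joint probability in Theorem~\ref{thm-mem_status_pr} coincide.
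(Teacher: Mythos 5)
Your proposal is correct and follows essentially the same route as the paper: the corollary is obtained by direct substitution of the joint distribution from Theorem~\ref{thm-mem_status_pr} and the marginal from Corollary~\ref{cor-link_status_Pr1} into the general expressions \eqref{eq-link_avg_q_state} and \eqref{eq-link_avg_q_state_conditional}, with the summation range of the memory time handled separately in the two regimes. Your added remarks on the index ranges and the $t=t^{\star}+1$ boundary are consistent with, and slightly more explicit than, the paper's treatment, which presents the result as immediate.
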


	Let us also determine the probabilities $\Pr[M^{t^{\star}}(t)=m,X(t)=0]_{t^{\star}}$.
	
	\begin{proposition}\label{prop-mem_time_prob_x0}
		For all $t\geq 1$, $t^{\star}\in\mathbb{N}_0$, $p\in[0,1]$, and $m\in\{0,1,\dotsc,t^{\star}\}$,
		\begin{equation}\label{eq-mem_time_prob_x0}
			\Pr[M^{t^{\star}}(t)=m,X(t)=0]_{t^{\star}}=\left\{\begin{array}{l l} \delta_{m,t^{\star}}(1-p)^t,\quad t\leq t^{\star}+1,\\[0.5cm] \displaystyle \delta_{m,t^{\star}}\sum_{x=0}^{\floor{\frac{t-1}{t^{\star}+1}}}\binom{t-1-xt^{\star}}{x}p^x(1-p)^{t-(t^{\star}+1)x},\quad t>t^{\star}+1. \end{array}\right.
		\end{equation}
		For $t^{\star}=\infty$,
		\begin{equation}
			\Pr[M^{\infty}(t)=m,X(t)=0]_{\infty}=\delta_{m,-1}(1-p)^t
		\end{equation}
		for all $m\in\{-1,0,1,\dotsc,t-1\}$.
	\end{proposition}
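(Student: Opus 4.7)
The plan is to reduce the joint probability to a single marginal by first establishing that, under the memory-cutoff policy, the event $\{X(t)=0\}$ completely determines the memory-time value. Specifically, by the very definition of $\pi^{t^{\star}}$, the action ``request'' is taken only when $M^{t^{\star}}(s)=t^{\star}$ (with the convention $M^{t^{\star}}=t^{\star}$ whenever the memory is empty). Since $X(t)=0$ can happen only when a request was issued at time $t$ and failed, any history with $X(t)=0$ must have $M^{t^{\star}}(t-1)=t^{\star}$ and, by the transition rule \eqref{eq-cutoff_trans_prob3}--\eqref{eq-cutoff_trans_prob4}, $M^{t^{\star}}(t)=t^{\star}$ as well. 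This yields the factor $\delta_{m,t^{\star}}$ in the $t^{\star}\in\mathbb{N}_0$ formula. For $t^{\star}=\infty$, the analogous observation is that $X(t)=0$ forces the memory to have remained empty throughout $1,\dotsc,t$, so $M^{\infty}(t)=-1$, giving $\delta_{m,-1}$.

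What remains is to compute $\Pr[X(t)=0]_{t^{\star}}$. The cleanest approach is a direct combinatorial block-decomposition argument which avoids having to simplify the difference $1-\Pr[X(t)=1]_{t^{\star}}$ coming from Corollary~\ref{cor-link_status_Pr1}. Under $\pi^{t^{\star}}$ with $t^{\star}\in\mathbb{N}_0$, any trajectory decomposes uniquely into a sequence of two kinds of ``blocks'': a failed request, which occupies one time step and has probability $1-p$, and a successful cycle (one successful request followed by $t^{\star}$ waits), which occupies $t^{\star}+1$ time steps and has probability $p$. If the trajectory has $x$ successful cycles and $y$ failed requests, then $x(t^{\star}+1)+y=t$. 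The event $\{X(t)=0\}$ requires the final block to be a failed request, so $y\geq 1$ and the preceding $x+y-1=t-1-xt^{\star}$ blocks can be ordered in $\binom{t-1-xt^{\star}}{x}$ ways. Summing over admissible $x$ (with $0\leq x\leq\lfloor (t-1)/(t^{\star}+1)\rfloor$) produces
\begin{equation*}
\Pr[X(t)=0]_{t^{\star}}=\sum_{x=0}^{\floor{\frac{t-1}{t^{\star}+1}}}\binom{t-1-xt^{\star}}{x}p^{x}(1-p)^{t-(t^{\star}+1)x},
\end{equation*}
which specializes to $(1-p)^t$ when $t\leq t^{\star}+1$ (only $x=0$ contributes).

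The $t^{\star}=\infty$ case is the simplest: under $\pi^{\infty}$ the agent requests at every time step until the first success, so $\{X(t)=0\}$ is the event that all $t$ independent requests failed, which has probability $(1-p)^t$. Combined with the memory identity $M^{\infty}(t)=-1$ on this event, the claimed formula follows.

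I expect no serious obstacle in the argument; the only point requiring a bit of care is verifying the block-decomposition carefully, namely that the decomposition is truly deterministic given the trajectory, that the convention $M^{t^{\star}}=t^{\star}$ at initialization aligns with counting an initial failed request as its own length-one block, and that the final block being a failed request is both necessary and sufficient for $X(t)=0$. Once this structural observation is in place, the counting $\binom{t-1-xt^{\star}}{x}$ and the probability weight $p^{x}(1-p)^{y}$ are immediate, and consistency with Corollary~\ref{cor-link_status_Pr1} can optionally be checked as a sanity verification.
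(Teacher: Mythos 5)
Your proposal is correct and follows essentially the same route as the paper: first note that $X(t)=0$ forces $M^{t^{\star}}(t)=t^{\star}$ (or $-1$ for $t^{\star}=\infty$), then count trajectories ending in a failed request according to the number $x$ of completed success cycles, giving $\binom{t-1-xt^{\star}}{x}p^{x}(1-p)^{t-(t^{\star}+1)x}$. Your block-decomposition is just an inline re-derivation of the count the paper imports from Proposition~\ref{prop-time_seq_prob} and the proof of Lemma~\ref{prop-num_time_seq} (the arrangements of full blocks of ones with zero trailing ones), so the two arguments coincide.
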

	
	\begin{proof}
		See Appendix~\ref{sec-mem_time_prob_x0_pf}.
	\end{proof}

\subsection{Infinite-horizon setting}\label{sec-mem_cutoff_exp_q_state_infinite_horizon}
	
	Let us now consider the $t\to\infty$, or infinite-horizon behavior of an elementary link under the memory-cutoff policy.
	
	\begin{theorem}\label{thm-link_status_avg_inf}
		For all $t^{\star}\in\mathbb{N}_0\cup\{\infty\}$ and $p\in[0,1]$, the expected elementary link status of an elementary link in a quantum network undergoing the $t^{\star}$ memory-cutoff policy is
		\begin{equation}\label{eq-link_status_avg_inf}
			\lim_{t\to\infty}\mathbb{E}[X(t)]_{t^{\star}}=\frac{(t^{\star}+1)p}{1+t^{\star}p}.
		\end{equation}
	\end{theorem}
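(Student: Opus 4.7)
My plan is to treat the cases $t^{\star} \in \mathbb{N}_0$ and $t^{\star} = \infty$ separately, using the Markov structure established in the remarks following the definition of the memory-cutoff policy. For $t^{\star} \in \mathbb{N}_0$, I would rely on the fact that $((X(t), M^{t^{\star}}(t)) : t \geq 1)$ is a time-homogeneous Markov chain with transition probabilities given by \eqref{eq-cutoff_trans_prob1}--\eqref{eq-cutoff_trans_prob5}. The limit can then be extracted from the stationary distribution of this chain, provided we first verify that the chain is irreducible and aperiodic on its set of reachable states.

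The first step is to identify the recurrent class. Inspecting \eqref{eq-cutoff_trans_prob1}--\eqref{eq-cutoff_trans_prob5}, the only pairs $(x,m)$ with nonzero stationary mass are the $t^{\star} + 2$ states
\begin{equation*}
\mathcal{R} \coloneqq \{(1,0),(1,1),\dotsc,(1,t^{\star})\} \cup \{(0,t^{\star})\},
\end{equation*}
since states of the form $(0,m)$ with $m < t^{\star}$ cannot occur under the policy. Assuming $p \in (0,1)$, I would check that $\mathcal{R}$ is a single communicating class (the deterministic ``memory ladder'' $(1,0) \to (1,1) \to \dotsb \to (1,t^{\star})$ plus either a successful or failed request from $(1,t^{\star})$ makes this immediate) and aperiodic (the self-loop $(0,t^{\star}) \to (0,t^{\star})$ with probability $1-p$ suffices). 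By standard finite-state Markov chain theory, $\Pr[(X(t),M^{t^{\star}}(t)) = (x,m)]_{t^{\star}}$ then converges as $t \to \infty$ to the unique stationary distribution $\pi_{\text{stat}}$.

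The next step is to solve the balance equations for $\pi_{\text{stat}}$. Let $a_m \coloneqq \pi_{\text{stat}}(1,m)$ for $0 \le m \le t^{\star}$ and $b \coloneqq \pi_{\text{stat}}(0,t^{\star})$. The deterministic transitions $(1,m) \to (1,m+1)$ force $a_0 = a_1 = \dotsb = a_{t^{\star}} \eqqcolon a$, while the balance at $(0,t^{\star})$ yields $b = (1-p)(a+b)$, i.e., $b = \tfrac{(1-p)a}{p}$. Normalization $(t^{\star}+1)a + b = 1$ then gives
\begin{equation*}
a = \frac{p}{1 + t^{\star} p},\qquad b = \frac{1-p}{1 + t^{\star} p}.
\end{equation*}
Consequently,
\begin{equation*}
\lim_{t\to\infty}\mathbb{E}[X(t)]_{t^{\star}} = \lim_{t\to\infty}\Pr[X(t)=1]_{t^{\star}} = \sum_{m=0}^{t^{\star}} a = (t^{\star}+1)\,\frac{p}{1 + t^{\star}p},
\end{equation*}
which is exactly \eqref{eq-link_status_avg_inf}.

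For the boundary cases, I would check $p = 0$ and $p = 1$ directly from the transition probabilities (both reduce to trivial absorbing behavior agreeing with the formula). For $t^{\star} = \infty$, I would use the simplified two-state chain with transition matrix $T(\infty)$ in \eqref{eq-cutoff_trans_infty}: the state $X=1$ is absorbing, and starting from $X(1)$ with $\Pr[X(1)=1] = p$, one gets $\Pr[X(t)=1]_{\infty} = 1 - (1-p)^t \to 1$, matching $(t^{\star}+1)p/(1+t^{\star}p) \to 1$ in the $t^{\star} \to \infty$ limit. The main obstacle will be cleanly organizing the irreducibility/aperiodicity verification for all admissible $p$, but as argued above this reduces to inspection of the transition diagram; alternatively, one could bypass the Markov-chain convergence theorem entirely by summing the finite-horizon expression in \eqref{eq-link_status_Pr1} as $t \to \infty$, though this route involves a less transparent combinatorial identity.
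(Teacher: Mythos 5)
Your proposal is correct, but it takes a genuinely different route from the paper. The paper proves the theorem by brute-force asymptotics of the finite-horizon formula: it starts from the expression for $\Pr[X(t)=1]_{t^{\star}}$ in Corollary~\ref{cor-link_status_Pr1} (for $t>t^{\star}+1$), binomially expands $(1-p)^{t-k-(t^{\star}+1)x}$, and then establishes the combinatorial identity $\sum_{j=0}^{\ell}(-1)^{\ell-j}\binom{a-t^{\star}j}{j}\binom{a-t^{\star}j-j}{\ell-j}=(-1)^{\ell}(t^{\star})^{\ell}$ via Stirling numbers of the first and second kind, after which the limit reduces to summing the geometric-type series $\sum_{\ell}(-1)^{\ell}(t^{\star}p)^{\ell}$; this is exactly the ``less transparent combinatorial identity'' you anticipated in your closing remark. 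You instead exploit the time-homogeneous Markov structure of $((X(t),M^{t^{\star}}(t)):t\geq 1)$, identify the recurrent class, verify irreducibility and aperiodicity for $p\in(0,1)$, solve the balance equations, and invoke the convergence theorem for finite chains; your stationary distribution $a=\tfrac{p}{1+t^{\star}p}$, $b=\tfrac{1-p}{1+t^{\star}p}$ matches what the paper only observes \emph{a posteriori} in Remark~\ref{rem-mem_cutoff_stationary_dist}, and your argument simultaneously delivers the joint limits of Theorem~\ref{thm-avg_mem_status_inf} essentially for free. The trade-offs: your route is more conceptual and shorter, but requires the separate (trivial) treatment of the boundary values $p\in\{0,1\}$ and of $t^{\star}=\infty$ that you correctly flag (note that when $p=1$ the chain on $\{(1,0),\dotsc,(1,t^{\star})\}$ is a deterministic cycle, so aperiodicity fails and the direct check is genuinely needed), whereas the paper's computation yields the limit uniformly from the explicit finite-$t$ formula it has already established. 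Both arguments are valid proofs of \eqref{eq-link_status_avg_inf}.
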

	
	\begin{proof}
		See Appendix~\ref{app-link_status_avg_inf_pf}.
	\end{proof}
	
	Note that if $t^{\star}=\infty$, then
	\begin{equation}
		\lim_{t^{\star}\to\infty}\lim_{t\to\infty}\mathbb{E}[X(t)]_{t^{\star}}=\lim_{t^{\star}\to\infty}\frac{(t^{\star}+1)p}{1+t^{\star}p}=1,
	\end{equation}
	which is what we expect, because if $t^{\star}=\infty$, then the elementary link, once established, never has to be discarded.
	
	\begin{theorem}\label{thm-avg_mem_status_inf}
		For all $t^{\star}\in\mathbb{N}_0$, $p\in[0,1]$, and $m\in\{0,1,\dotsc,t^{\star}\}$,
		\begin{align}
			\lim_{t\to\infty}\Pr[M^{t^{\star}}(t)=m,X(t)=1]_{t^{\star}}&=\frac{p}{1+t^{\star}p},\label{eq-avg_mem_status_inf}\\
			\lim_{t\to\infty}\Pr[M^{t^{\star}}(t)=m,X(t)=0]_{t^{\star}}&=\frac{1-p}{1+t^{\star}p}\delta_{m,t^{\star}}.\label{eq-mem_time_prob_infty_x0}
		\end{align}
	\end{theorem}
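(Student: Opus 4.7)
The plan is to exploit the Markov structure established in the remarks following the definition of the memory-cutoff policy. Specifically, equations \eqref{eq-cutoff_trans_prob1}--\eqref{eq-cutoff_trans_prob5} show that $((X(t),M^{t^{\star}}(t)):t\geq 1)$ is a time-homogeneous Markov chain on the finite state space $\{0,1\}\times\{0,1,\dotsc,t^{\star}\}$ with transition matrix $T(t^{\star})$ defined in \eqref{eq-cutoff_trans_prob6}. Assuming $p\in(0,1)$ (the boundary cases $p=0,1$ follow by inspection and continuity in $p$ of the formulas), this chain is irreducible and aperiodic: from any state we reach $(1,0)$ in one or two steps with positive probability via the transitions in \eqref{eq-cutoff_trans_prob1}--\eqref{eq-cutoff_trans_prob2}, and the self-loop-like structure at $(0,t^{\star})$ supplied by \eqref{eq-cutoff_trans_prob3} ensures aperiodicity. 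Hence by the fundamental theorem for finite Markov chains, the distribution of $(X(t),M^{t^{\star}}(t))$ converges as $t\to\infty$ to the unique stationary distribution $\mu$ on $\{0,1\}\times\{0,1,\dotsc,t^{\star}\}$.

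The main work is then to solve the balance equations $\mu = T(t^{\star})\mu$. Reading off the transitions, I would argue as follows. First, the only states from which $(0,m)$ with $m<t^{\star}$ can be reached have zero incoming probability, so $\mu(0,m)=0$ for all $0\leq m<t^{\star}$. Second, \eqref{eq-cutoff_trans_prob5} gives $\mu(1,m)=\mu(1,m-1)$ for all $1\leq m\leq t^{\star}$, so $\mu(1,m)=:\alpha$ is independent of $m$. Setting $\beta\coloneqq\mu(0,t^{\star})$, the balance equation at $(1,0)$ collects the incoming flow from \eqref{eq-cutoff_trans_prob1} and \eqref{eq-cutoff_trans_prob2}, yielding
\begin{equation}
\alpha \;=\; p\,\alpha + p\,\beta,
\end{equation}
so that $\beta=\alpha(1-p)/p$. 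Normalization $(t^{\star}+1)\alpha+\beta=1$ then gives $\alpha=p/(1+t^{\star}p)$ and $\beta=(1-p)/(1+t^{\star}p)$, which are exactly the values stated in \eqref{eq-avg_mem_status_inf} and \eqref{eq-mem_time_prob_infty_x0}.

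The step I expect to demand the most care is not the algebra but the verification that the Markov chain framework applies cleanly and that the zero-probability states $(0,m)$ for $m<t^{\star}$ are handled correctly: a priori these states may be unreachable from the initial distribution, in which case they should simply be removed from the state space before applying the convergence theorem, and one must check that this does not affect the stationary values on the reachable states. An alternative route, which would also work and avoids any subtlety about aperiodicity, is to take the $t\to\infty$ limit directly inside the closed-form expressions \eqref{eq-mem_status_pr} and \eqref{eq-mem_time_prob_x0} of Theorem~\ref{thm-mem_status_pr} and Proposition~\ref{prop-mem_time_prob_x0}; the resulting sums can be recognized as evaluations of generating functions for negative-binomial-like coefficients and match $p/(1+t^{\star}p)$ and $(1-p)/(1+t^{\star}p)$ after simplification, with Theorem~\ref{thm-link_status_avg_inf} providing a consistency check via the identity $\sum_{m=0}^{t^{\star}}\mu(1,m)=(t^{\star}+1)p/(1+t^{\star}p)$.
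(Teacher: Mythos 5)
Your argument is correct for $p\in(0,1)$, and it takes a genuinely different route from the paper's. The paper proves the theorem analytically: it starts from the closed-form finite-$t$ expression of Theorem~\ref{thm-mem_status_pr} (and Proposition~\ref{prop-mem_time_prob_x0} for the $X(t)=0$ part), expands $(1-p)^{t-(m+1)-x(t^{\star}+1)}$ binomially, collapses the resulting double sum with the Stirling-number identity \eqref{eq-special_sum}, and finishes with the geometric series \eqref{eq-homographic_func} — exactly mirroring the proof of Theorem~\ref{thm-link_status_avg_inf}. You instead invoke the ergodic theorem for finite Markov chains on the transition structure \eqref{eq-cutoff_trans_prob1}--\eqref{eq-cutoff_trans_prob6} and solve the balance equations, which is shorter, avoids the combinatorial identities entirely, and as a byproduct establishes the stationarity observation that the paper only records separately (and without using it in the proof) in Remark~\ref{rem-mem_cutoff_stationary_dist}; your handling of the unreachable states $(0,m)$, $m<t^{\star}$, is also consistent with Proposition~\ref{prop-mem_time_prob_x0}, which shows they carry zero mass for every finite $t$. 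What the paper's route buys is that the limit drops out of finite-$t$ formulas it needs anyway for the finite-horizon analysis, with no appeal to convergence theory for Markov chains. One caution on your boundary remark: at $p=1$ (with $t^{\star}\geq 1$) the restricted chain is purely periodic — the memory cycles deterministically through $0,1,\dotsc,t^{\star}$ — so $\Pr[M^{t^{\star}}(t)=m,X(t)=1]_{t^{\star}}$ oscillates and the limit does not exist; "continuity in $p$" therefore cannot recover that endpoint, and the clean statement of your argument is for $p\in(0,1)$ (with $p=0$ handled by inspection). This is an endpoint defect of the statement itself rather than of your method — the paper's own final step, summing $\sum_{\ell}(-1)^{\ell}(t^{\star}p)^{\ell}$, likewise requires $t^{\star}p<1$ as written — but it is worth flagging rather than waving away.
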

	
	\begin{proof}
		See Appendix~\ref{app-avg_mem_status_inf_pf}.
	\end{proof}
	
	As an immediate consequence of Theorem~\ref{thm-link_status_avg_inf} and Theorem~\ref{thm-avg_mem_status_inf}, we obtain the following.
	
	\begin{corollary}
		For all $t^{\star}\in\mathbb{N}_0$ and $p\in[0,1]$, the expected quantum states of an elementary link undergoing the $t^{\star}$ memory-cutoff policy are, in the infinite-horizon setting,
		\begin{align}
			&\lim_{t\to\infty}\sigma^{t^{\star}}(t)=\frac{1-p}{1+t^{\star}p}\tau^{\varnothing}+\frac{p}{1+t^{\star}p}\sum_{m=0}^{t^{\star}}\rho(m),\label{eq-avg_q_state_overall_tInfty}\\
			&\lim_{t\to\infty}\sigma^{t^{\star}}(t|X(t)=1)=\frac{1}{t^{\star}+1}\sum_{m=0}^{t^{\star}}\rho(m).\label{eq-avg_cond_q_state_overall_tInfty}
		\end{align}
	\end{corollary}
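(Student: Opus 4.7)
The plan is to derive both identities by direct substitution into the general formulas for the expected quantum state of an elementary link that were established earlier in the chapter. Specifically, I will specialize \eqref{eq-link_avg_q_state_mem_cutoff} and \eqref{eq-link_avg_q_state_conditional_mem_cutoff} to the $t^{\star}$ memory-cutoff policy, observe that under this policy the memory time satisfies $M^{t^{\star}}(t)\in\{0,1,\dotsc,t^{\star}\}$ whenever $X(t)=1$ (by definition of $M^{t^{\star}}(t)$ via the modular arithmetic in \eqref{eq-mem_time_cutoff_policy}), and then pass to the $t\to\infty$ limit using the probability limits supplied by Theorem~\ref{thm-link_status_avg_inf} and Theorem~\ref{thm-avg_mem_status_inf}. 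Since the relevant sums are finite (at most $t^{\star}+1$ terms), the interchange of limit and sum is immediate.

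For the unconditional state, first I would compute the limiting coefficient of $\tau^{\varnothing}$ using Theorem~\ref{thm-link_status_avg_inf}:
\begin{equation}
1-\lim_{t\to\infty}\Pr[X(t)=1]_{t^{\star}}=1-\frac{(t^{\star}+1)p}{1+t^{\star}p}=\frac{1-p}{1+t^{\star}p},
\end{equation}
which reproduces the coefficient of $\tau^{\varnothing}$ claimed in \eqref{eq-avg_q_state_overall_tInfty}. Then, applying Theorem~\ref{thm-avg_mem_status_inf} to each of the $t^{\star}+1$ terms in the sum in \eqref{eq-link_avg_q_state_mem_cutoff}, every coefficient of $\rho(m)$ tends to the common value $p/(1+t^{\star}p)$, which can be factored out of the sum to produce $\frac{p}{1+t^{\star}p}\sum_{m=0}^{t^{\star}}\rho(m)$, exactly as required.

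For the conditional state, I would begin from Bayes' rule,
\begin{equation}
\Pr[M^{t^{\star}}(t)=m\,|\,X(t)=1]_{t^{\star}}=\frac{\Pr[M^{t^{\star}}(t)=m,X(t)=1]_{t^{\star}}}{\Pr[X(t)=1]_{t^{\star}}},
\end{equation}
and take $t\to\infty$ on both sides. Substituting the two limits,
\begin{equation}
\lim_{t\to\infty}\Pr[M^{t^{\star}}(t)=m\,|\,X(t)=1]_{t^{\star}}=\frac{p/(1+t^{\star}p)}{(t^{\star}+1)p/(1+t^{\star}p)}=\frac{1}{t^{\star}+1}
\end{equation}
for each $m\in\{0,1,\dotsc,t^{\star}\}$. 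Inserting this into \eqref{eq-link_avg_q_state_conditional_mem_cutoff} immediately yields the uniform mixture $\frac{1}{t^{\star}+1}\sum_{m=0}^{t^{\star}}\rho(m)$.

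There is no genuine obstacle here: once Theorems~\ref{thm-link_status_avg_inf} and~\ref{thm-avg_mem_status_inf} are in hand, the argument reduces to a short substitution combined with the observation that the memory-time sum contains only finitely many terms, so limits and sums commute trivially. The only bookkeeping point worth flagging explicitly is that the range of $m$ in \eqref{eq-link_avg_q_state_mem_cutoff} and \eqref{eq-link_avg_q_state_conditional_mem_cutoff} may be restricted to $\{0,1,\dotsc,t^{\star}\}$ for the memory-cutoff policy, since \eqref{eq-mem_time_cutoff_policy} forces $M^{t^{\star}}(t)\le t^{\star}$, matching the upper index of the summations in the statement of the corollary.
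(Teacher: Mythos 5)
Your proposal is correct and follows essentially the same route as the paper: both substitute the limits from Theorems~\ref{thm-link_status_avg_inf} and~\ref{thm-avg_mem_status_inf} into the finite-horizon expressions for $\sigma^{t^{\star}}(t)$ and $\sigma^{t^{\star}}(t|X(t)=1)$, and obtain the conditional weights $1/(t^{\star}+1)$ by taking the ratio of the two limiting probabilities. The explicit remarks about the finite range of $m$ and the interchange of limit and sum are harmless bookkeeping that the paper leaves implicit.
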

	
	\begin{proof}
		The result is immediate from \eqref{eq-avg_q_state_link_overall} and \eqref{eq-avg_q_state_link_active} along with Theorems~\ref{thm-link_status_avg_inf} and \ref{thm-avg_mem_status_inf}. In particular, for \eqref{eq-avg_cond_q_state_overall_tInfty}, we make use of the fact that
		\begin{align}
			\lim_{t\to\infty}\Pr[M^{t^{\star}}(t)=m|X(t)=1]&=\lim_{t\to\infty}\frac{\Pr[M^{t^{\star}}(t)=m,X(1)=1]}{\Pr[X(t)=1]}\\
			&=\frac{\lim_{t\to\infty}\Pr[M^{t^{\star}}(t)=m,X(t)=1]}{\lim_{t\to\infty}\Pr[X(t)=1]}\\
			&=\frac{1}{t^{\star}+1},\label{eq-avg_mem_status_inf_cond}
		\end{align}
		which holds for all $t^{\star}\in\mathbb{N}_0$.
	\end{proof}
	
	\begin{remark}\label{rem-mem_cutoff_stationary_dist}
		The expressions in \eqref{eq-avg_mem_status_inf} and \eqref{eq-mem_time_prob_infty_x0} constitute a stationary distribution for the Markov process $((X(t),M^{t^{\star}}(t)):t\geq 1)$ with transition matrix $T(t^{\star})$, $t^{\star}\in\mathbb{N}_0$, defined in \eqref{eq-cutoff_trans_prob6}. Indeed, let $\vec{p}(t^{\star})$ be a column vector with elements given by
		\begin{equation}
			p_{x,m}(t^{\star})\coloneqq\lim_{t\to\infty}\Pr[M^{t^{\star}}(t)=m,X(t)=x]_{t^{\star}},
		\end{equation}
		for all $m\in\{0,1,\dotsc,t^{\star}\}$ and $x\in\{0,1\}$. Then, it is straightforward to show that 
		\begin{equation}
			T(t^{\star})\vec{p}(t^{\star})=\vec{p}(t^{\star}).
		\end{equation}
		In other words, $\vec{p}(t^{\star})$ is a stationary probability distribution.~\defqed
	\end{remark}

\section{Quantum state of the network}

	Let $G=(V,E,c)$ be the graph corresponding to the physical (elementary) links of a quantum network, and let $\vec{\pi}=\left(\pi^{e^j}:e\in E,\,1\leq j\leq c(e)\right)$ be a collection of policies for the elementary links of the network. From \eqref{eq-network_QDP_cq_state}, we have that the classical-quantum state of the elementary link corresponding to $e^j$ is
	\begin{equation}
		\widehat{\sigma}_{H_t^{e^j}E_t^{e^j}}^{\pi^{e^j}}(t)=\sum_{h^t\in\Omega(t)}\ket{h^t}\bra{h^t}_{H_t^{e^j}}\otimes\sigma_{E_t^{e^j}}^{\pi^{e^j}}(t;h^t)
	\end{equation}
	for all $t\geq 1$. The overall quantum state of the network is then
	\begin{equation}
		\bigotimes_{e\in E}\bigotimes_{j=1}^{c(e)}\widehat{\sigma}_{e^j}^{\pi^{e^j}}(t),
	\end{equation}
	where the $\pi^{e^j}$ are (independent) policies for the elementary links.
	
	Now, if the elementary link corresponding to the edge $e^j$ undergoes the $t_{e^j}^{\star}$ memory-cutoff policy for some $t_{e^j}^{\star}\in\mathbb{N}_0\cup\{\infty\}$, then we denote the quantum state of the network at time $t\geq 1$ by
	\begin{equation}
		\bigotimes_{e\in E}\bigotimes_{j=1}^{c(e)}\widehat{\sigma}_{e^j}^{t_{e^j}^{\star}}(t).
	\end{equation}
	By tracing out the history registers in the classical-quantum states $\widehat{\sigma}_{e^j}^{t_{e^j}^{\star}}(t)$, we obtain the expected quantum state of the network at time $t\geq 1$:
	\begin{equation}\label{eq-avg_q_state_overall_sublinks}
		\bigotimes_{e\in E}\bigotimes_{j=1}^{c(e)} \sigma_{e^j}^{t_{e^j}^{\star}}(t),
	\end{equation}
	where each $\sigma_{e^j}^{t_{e^j}^{\star}}(t)$ is given by the expression in \eqref{eq-avg_q_state_link_overall}. In the limit $t\to\infty$, and when $t_{e^j}^{\star}\in\mathbb{N}_0$ for all $e\in E$ and $1\leq j\leq c(e)$, we use the expression in \eqref{eq-avg_q_state_overall_tInfty} to obtain
	\begin{equation}\label{eq-avg_q_state_overall_sublinks_tInfty}
		\bigotimes_{e\in E}\bigotimes_{j=1}^{c(e)} \left(\frac{1-p_{e^j}}{1+t_{e^j}^{\star}p_{e^j}}\tau_{e^j}^{\varnothing}+\frac{p_{e^j}}{1+t_{e^j}^{\star}p_{e^j}}\sum_{m=0}^{t_{e^j}^{\star}} \rho_{e^j}(m)\right)\quad (t\to\infty),
	\end{equation}
	where $\{p_{e^j}:e\in E,\,1\leq j\leq c(e)\}$ is the set of success probabilities for the elementary links, as defined in \eqref{eq-elem_link_success_prob}. Similarly, the quantum state of the network at time $t\geq 1$, conditioned on all of the elementary links being active at time $t$, is given by
	\begin{equation}\label{eq-avg_cond_q_state_overall_sublinks}
		\bigotimes_{e\in E}\bigotimes_{j=1}^{c(e)}\sigma_{e^j}^{t_{e^j}^{\star}}(t|X_{e^j}(t)=1).
	\end{equation}
	When all of the cutoffs are finite, then from \eqref{eq-avg_cond_q_state_overall_tInfty} we obtain the following in the limit $t\to\infty$:
	\begin{equation}
		\bigotimes_{e\in E}\bigotimes_{j=1}^{c(e)}\left(\frac{1}{t_{e^j}^{\star}+1}\sum_{m=0}^{t_{e^j}^{\star}}\rho_{e^j}(m)\right).
	\end{equation}

\section{Figures of merit}\label{sec-mem_cutoff_figures_of_merit}
	
	We now evaluate the figures of merit defined in Section~\ref{sec-practical_figures_merit} for the memory-cutoff policy.

\subsection{Link statuses}

	We start by recalling the collective elementary link status random variable defined in \eqref{eq-network_QDP_collective_link_status}:
	\begin{equation}
		X_{E'}(t)=\prod_{e\in E'}\prod_{j=1}^{c(e)}X_{e^j}(t),
	\end{equation}
	where $E'\subseteq E$ and $G=(V,E,c)$ is a graph corresponding to the physical (elementary) links of a quantum network. Then, because all of the random variables are independent, for an arbitrary collection $\vec{\pi}=\left(\pi^{e^j}:e\in E,\,1\leq j\leq c(e)\right)$ of elementary link polices, we have that
	\begin{equation}
		\mathbb{E}[X_{E'}(t)]_{\vec{\pi}}=\prod_{e\in E'}\prod_{j=1}^{c(e)}\mathbb{E}[X_{e^j}(t)]_{\pi^{e^j}}.
	\end{equation}
	
	\begin{figure}
		\centering
		\includegraphics[width=0.48\textwidth]{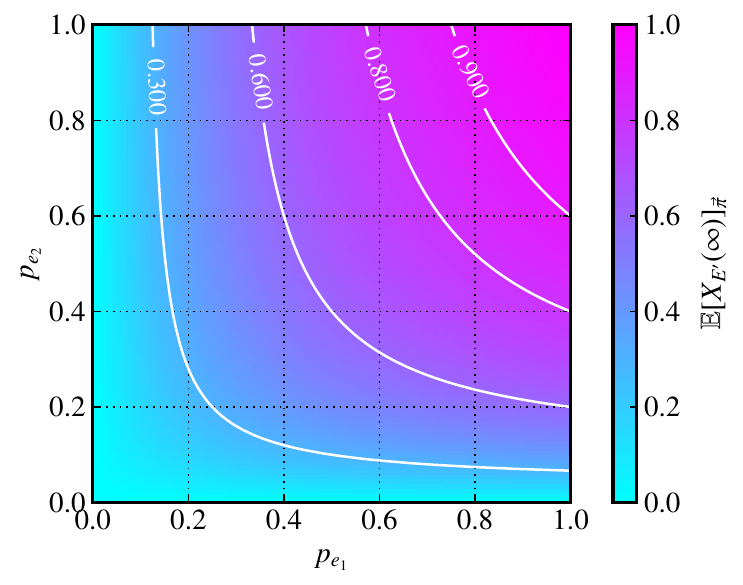}~
		\includegraphics[width=0.48\textwidth]{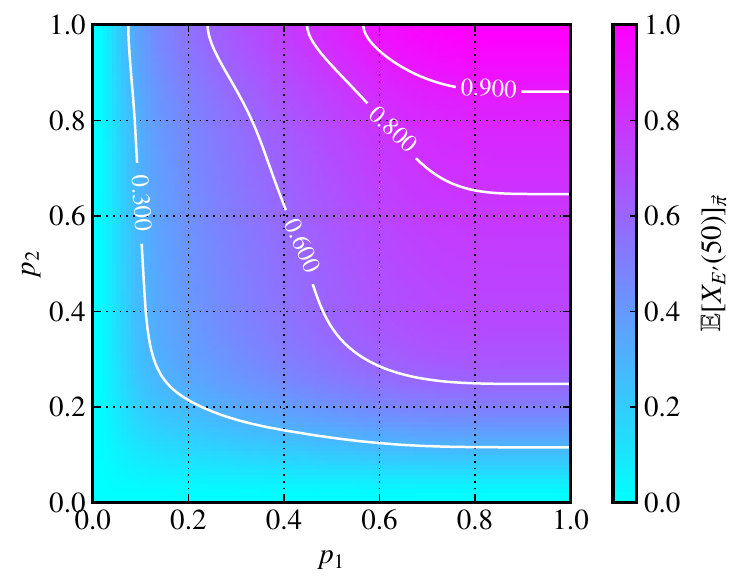}
		\caption{The expected collective elementary link status $\mathbb{E}[X_{E'}(t)]_{\vec{\pi}}$ of a collection of elementary links, all undergoing the memory-cutoff policy. (Left) When $E'=\{e_1,e_2\}$, with $c(e_1)=c(e_2)=1$, in the limit $t\to\infty$. One elementary link has success probability $p_{e^1}$ and cutoff $t_{e_1}^{\star}=5$, and the other elementary link has success probability $p_{e_2}$ and cutoff $t_{e_2}^{\star}=2$. We use the notation $\mathbb{E}[X_{E'}(\infty)]_{\vec{\pi}}\equiv\lim_{t\to\infty}\mathbb{E}[X_{E'}(t)]_{\vec{\pi}}$. (Right) When $E'=\{e_1,e_2,e_3,e_4\}$, with $c(e_1)=c(e_2)=c(e_3)=c(e_4)=1$, after $t=50$ time steps. Two of the elementary links have success probability $p_1$ with cutoffs $5,15$, and the other two links have success probability $p_2$ with cutoffs $10,20$.}\label{fig-avg_link_prod_example}
	\end{figure}
	
	Now, let us suppose that the elementary link given by the edge $e^j$ follows the $t_{e^j}^{\star}$ memory-cutoff policy, for all $e\in E'$ and $1\leq j\leq c(e)$. Then, by Corollary~\ref{cor-link_status_Pr1}, we obtain an analytic expression for expected collective elementary link status for all $t\geq 1$. In the limit $t\to\infty$, using Theorem~\ref{thm-link_status_avg_inf} we obtain the following simple expression:
	\begin{equation}
		\lim_{t\to\infty}\mathbb{E}[X_{E'}(t)]=\prod_{e\in E'}\prod_{j=1}^{c(e)} \frac{(t_{e^j}^{\star}+1)p_{e^j}}{1+t_{e^j}^{\star}p_{e^j}},
	\end{equation}
	where $\{p_{e^j}:e\in E,\,1\leq j\leq c(e)\}$ is the set of transmission-heralding success probabilities for the elementary links, as defined in \eqref{eq-elem_link_success_prob}. See Figure~\ref{fig-avg_link_prod_example} for plots of expected collective elementary link status.

	Let $G=(V,E,c)$ be the graph corresponding to the physical (elementary) links of a quantum network. Recall from \eqref{eq-network_QDP_num_active_parallel_links} that, given an $e\in E$, the number of active parallel elementary links at time $t$ corresponding to $e$ is
	\begin{equation}
		N_e(t)=\sum_{j=1}^{c(e)}X_{e^j}(t).
	\end{equation}
	If we assume that all of the parallel elementary links given by $e^j$ undergo the $t_{e^j}^{\star}$ memory-cutoff policy, then using Corollary~\ref{cor-link_status_Pr1} we obtain an analytic expression for $\mathbb{E}[N_e(t)]_{\vec{\pi}}=\sum_{j=1}^{c(e)}\mathbb{E}[X_{e^j}(t)]_{t_{e^j}^{\star}}$ for all $t\geq 1$. In the limit $t\to\infty$, we obtain the following simple expression:
	\begin{equation}\label{eq-avg_num_sublinks_infty}
		\lim_{t\to\infty}\mathbb{E}[N_e(t)]_{\vec{\pi}}=\sum_{j=1}^{c(e)}\frac{(t_{e^j}^{\star}+1)p_{e^j}}{1+t_{e^j}^{\star}p_{e^j}},
	\end{equation}
	where, as before, $\{p_{e^j}:e\in E,\,1\leq j\leq c(e)\}$ is the set of transmission-heralding success probabilities for the elementary links, as defined in \eqref{eq-elem_link_success_prob}. If $p_{e^j}=p_e$ and $t_{e^j}^{\star}=t^{\star}$ for all $1\leq j\leq c(e)$, then
	\begin{equation}
		\lim_{t\to\infty}\mathbb{E}[N_e(t)]_{\vec{\pi}}=c(e)\frac{(t_e^{\star}+1)p_e}{1+t_e^{\star}p_e}.
	\end{equation}
	As mentioned in Remark~\ref{rem-elem_link_QDP}, in the context of flow problems in graphs the quantity $c(e)$ can be thought of as the capacity of the element $e\in E$, and it represents the maximum number of entangled states that can be shared by the nodes of $e$ per unit time. Then, $N_e(t)$ can be thought of as the flow along $e$, and $\mathbb{E}[N_e(t)]$ is the expected flow. The task of finding the maximum number of edge-disjoint paths in a network (as outlined in Section~\ref{sec-graph_theory}) can be phrased as a flow problem, in which case the expected flows $\mathbb{E}[N_e(t)]_{\pi}$ for all $e\in E$ can be used to determine the rates at which virtual links can be created in the network; see Refs.~\cite{AK17,BA17,BAKE20,TKRW19,CERW20}.

\subsection{Fidelity}

	Let us now consider the elementary link fidelity random variables defined in Section~\ref{sec-network_QDP_fidelity}:
	\begin{equation}
		\widetilde{F}^{\pi}(t;\psi)=X(t)f_{M^{\pi}(t)}(\rho^0;\psi),\quad F^{\pi}(t;\psi)=\frac{\widetilde{F}^{\pi}(t;\psi)}{\Pr[X(t)=1]_{\pi}},
	\end{equation}
	where $\pi$ is an arbitrary policy, $\rho^0$ is the state of the elementary link after successful transmission and heralding, as defined in \eqref{eq-initial_link_states}, and $\psi$ is a target pure state. From \eqref{eq-avg_Ftilde}, we have that
	\begin{align}
		\mathbb{E}[\widetilde{F}^{\pi}(t;\psi)]&=\sum_m f_m(\rho^0;\psi)\Pr[M^{\pi}(t)=m,X(t)=1]_{\pi},\\
		\mathbb{E}[F^{\pi}(t;\psi)]&=\sum_m f_m(\rho^0;\psi)\Pr[M^{\pi}(t)=m|X(t)=1]_{\pi},
	\end{align}
	where $f_m(\rho^0;\psi)$ is given by \eqref{eq-fidelity_decay} and the sum is over all possible values of $M^{\pi}(t)$, which in general depends on the policy $\pi$.
	
	Now, if an elementary link undergoes the $t^{\star}$ memory-cutoff policy, then from Theorem~\ref{thm-mem_status_pr} and Corollary~\ref{cor-link_status_Pr1}, we immediately obtain the following analytic expressions for the expectation values of these quantities under the memory cutoff policy for all $t\geq 1$:
	\begin{align}
		\mathbb{E}[\widetilde{F}^{t^{\star}}(t;\psi)]&=\left\{\begin{array}{l l} \displaystyle \sum_{m=0}^{t-1}f_m(\rho^0;\psi)p(1-p)^{t-(m+1)} & t\leq t^{\star}+1,\\[0.5cm] \displaystyle \sum_{m=0}^{t^{\star}} f_m(\rho^0;\psi) \Pr[M^{t^{\star}}(t)=m,X(t)=1]_{t^{\star}} & t>t^{\star}+1, \end{array}\right.\label{eq-avg_fid_tilde}\\[0.5cm]
		\mathbb{E}[F^{t^{\star}}(t;\psi)]&=\left\{\begin{array}{l l}\displaystyle \sum_{m=0}^{t-1}f_m(\rho^0;\psi)\frac{p(1-p)^{t-(m+1)}}{1-(1-p)^t} & t\leq t^{\star}+1,\\[0.5cm] \displaystyle \sum_{m=0}^{t^{\star}} f_m(\rho^0;\psi)\frac{\Pr[M^{t^{\star}}(t)=m,X(t)=1]_{t^{\star}}}{\Pr[X(t)=1]_{t^{\star}}} & t>t^{\star}+1, \end{array}\right.\label{eq-avg_fid}
	\end{align}
	where in \eqref{eq-avg_fid_tilde} and \eqref{eq-avg_fid} the expression for $\Pr[M^{t^{\star}}(t)=m,X(t)=1]_{t^{\star}}$ for $t>t^{\star}+1$ is given in \eqref{eq-mem_status_pr}, and the expression for $\Pr[X(t)=1]_{t^{\star}}$ for $t>t^{\star}+1$ is given in \eqref{eq-link_status_Pr1}.
	
	In the limit $t\to\infty$, using Theorem~\ref{thm-link_status_avg_inf} and Theorem~\ref{thm-avg_mem_status_inf}, we obtain the following:
	\begin{align}
		\lim_{t\to\infty}\mathbb{E}[\widetilde{F}^{t^{\star}}(t;\psi)]&=\frac{p}{1+t^{\star}p}\sum_{m=0}^{t^{\star}}f_m(\rho^0;\psi),\quad t^{\star}\in\mathbb{N}_0,\label{eq-avg_fid_tilde_tInfty}\\
		\lim_{t\to\infty}\mathbb{E}[F^{t^{\star}}(t;\psi)]&=\frac{1}{t^{\star}+1}\sum_{m=0}^{t^{\star}}f_m(\rho^0;\psi),\quad t^{\star}\in\mathbb{N}_0.\label{eq-avg_fid_tInfty}
	\end{align}

\subsection{Waiting time}\label{sec-waiting_time}

	Let us now consider the waiting time for elementary links undergoing the memory-cutoff policy. We considered the waiting time for general policies in Section~\ref{sec-network_QDP_waiting_time}, and we defined the waiting time for a single elementary link and a collection of elementary links in Definition~\ref{def-network_QDP_elem_link_waiting_time} and Definition~\ref{def-collective_elem_link_waiting_time}, respectively.
	
	As described in Section~\ref{sec-network_QDP_waiting_time}, we consider the scenario in which the elementary link generation process is persistent, even if no end-user request is made. In other words, we consider an ``always-on''/continuous elementary link generation procedure that is ready to go whenever end-user entanglement is requested, rather than have the entire process begin only when end-user entanglement is requested \cite{CRDW19}. In this scenario, let us first consider a single elementary link, and let us the expression in \eqref{eq-waiting_time_prob_late_request} to find an analytic expression for the expected waiting time.
	
	\begin{theorem}\label{thm-avg_waiting_time_req}
		Let $G=(V,E,c)$ be the graph corresponding to the physical (elementary) links of a quantum network, let $e^j$, with $e\in E$ and $1\leq j\leq c(e)$, be an arbitrary edge of the graph, and let the elementary link corresponding to $e^j$ have transmission-heralding success probability $p\in[0,1]$. For all $t^{\star}\in\mathbb{N}_0$ and $t_{\text{req}}\geq 0$, the expected waiting time for the elementary link corresponding to $e^j$, when undergoing the $t^{\star}$ memory-cutoff policy, is
		\begin{equation}\label{eq-avg_waiting_time_req}
			\mathbb{E}[W_{e^j}(t_{\text{req}})]_{t^{\star}}=\frac{\Pr[M^{t^{\star}}(t_{\text{req}}+1)=t^{\star},X(t_{\text{req}}+1)=0]_{t^{\star}}}{p(1-p)}.
		\end{equation}
		For $t^{\star}=\infty$,
		\begin{equation}
			\mathbb{E}[W_{e^j}(t_{\text{req}})]_{\infty}=\frac{\Pr[X(t_{\text{req}}+1)=0]_{\infty}}{p(1-p)}=\frac{(1-p)^{t_{\text{req}}}}{p}.
		\end{equation}
	\end{theorem}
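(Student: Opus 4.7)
The plan is to exploit the Markov structure of the joint process $(X_{e^j}(t), M^{t^\star}(t))$ under the $t^\star$ memory-cutoff policy, which is encoded in the transition probabilities \eqref{eq-cutoff_trans_prob1}--\eqref{eq-cutoff_trans_prob5} for finite $t^\star$ (and \eqref{eq-cutoff_trans_infty} for $t^\star=\infty$). The structural fact I will lean on is Proposition~\ref{prop-mem_time_prob_x0}: the event $\{X_{e^j}(s)=0\}$ forces $M^{t^\star}(s)=t^\star$ (respectively $M^{\infty}(s)=-1$), so the decision functions of the memory-cutoff policy always issue a new request at the next step, and the subsequent statuses $X_{e^j}(s+1),X_{e^j}(s+2),\dots$ up to the first success form an i.i.d.\ $\mathrm{Bernoulli}(p)$ sequence.

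Starting from the expression
$$\mathbb{E}[W_{e^j}(t_{\text{req}})]_{t^\star} = \sum_{t=t_{\text{req}}+1}^{\infty} t \, \Pr\!\left[X_{e^j}(t_{\text{req}}+1)=0,\dots,X_{e^j}(t-1)=0,X_{e^j}(t)=1\right]_{t^\star}$$
in Definition~\ref{def-network_QDP_elem_link_waiting_time}, I would then condition on the state at time $t_{\text{req}}+1$. The key step is to note that for every $t\geq t_{\text{req}}+2$ the joint event in the summand is contained in $\{X_{e^j}(t_{\text{req}}+1)=0\} = \{M^{t^\star}(t_{\text{req}}+1)=t^\star,\,X_{e^j}(t_{\text{req}}+1)=0\}$, on which the remaining statuses factor as
$$\Pr\!\left[X_{e^j}(t_{\text{req}}+2)=0,\dots,X_{e^j}(t-1)=0,X_{e^j}(t)=1 \,\middle|\, X_{e^j}(t_{\text{req}}+1)=0\right]_{t^\star} = (1-p)^{\,t-t_{\text{req}}-2}\,p$$
by the Markov property and the observation above. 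Substituting and reindexing $s=t-t_{\text{req}}-1$ reduces the inner sum to a weighted geometric series, which I will evaluate using $\sum_{s\geq 0}(1-p)^s = 1/p$ and $\sum_{s\geq 0} s(1-p)^s = (1-p)/p^2$; collecting prefactors will give the claimed form $\Pr[M^{t^\star}(t_{\text{req}}+1)=t^\star,X_{e^j}(t_{\text{req}}+1)=0]_{t^\star}/[p(1-p)]$.

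For the $t^\star=\infty$ specialization, the argument collapses further: the condition $X_{e^j}(t_{\text{req}}+1)=0$ under the $t^\star=\infty$ policy is equivalent to the elementary link never having been successfully established, which (since each request is an independent $\mathrm{Bernoulli}(p)$ trial before the first success) has probability $(1-p)^{t_{\text{req}}+1}$; this can also be read off Corollary~\ref{cor-link_status_Pr1}. Dividing by $p(1-p)$ then yields the closed form $(1-p)^{t_{\text{req}}}/p$.

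The main obstacle I anticipate is the algebraic bookkeeping: one must carefully combine the $t=t_{\text{req}}+1$ term (which lives on $\{X_{e^j}(t_{\text{req}}+1)=1\}$) with the geometric series arising from $t\geq t_{\text{req}}+2$ in such a way that the polynomial-in-$t_{\text{req}}$ contributions cancel and only the normalized probability factor survives. This cancellation is what produces the $p(1-p)$ in the denominator rather than the more obvious $p$; tracing it correctly, and verifying consistency between the finite-$t^\star$ and $t^\star=\infty$ formulations via Proposition~\ref{prop-mem_time_prob_x0}, will be the delicate part of the argument.
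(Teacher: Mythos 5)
Your overall route --- exploit the Markov structure of $(X_{e^j}(t),M^{t^{\star}}(t))$, use Proposition~\ref{prop-mem_time_prob_x0} to note that $X_{e^j}(s)=0$ forces $M^{t^{\star}}(s)=t^{\star}$ so that the statuses after the first failure form an i.i.d.\ Bernoulli$(p)$ string, and then sum a geometric series --- is exactly the route of the paper's proof, which writes $\Pr[W_{e^j}(t_{\text{req}})=t]_{t^{\star}}$ as a product of entries of the transition matrix $T(t^{\star})$ applied to the distribution at time $t_{\text{req}}+1$ and obtains $\Pr[W_{e^j}(t_{\text{req}})=t]_{t^{\star}}=\Pr[M^{t^{\star}}(t_{\text{req}}+1)=t^{\star},X(t_{\text{req}}+1)=0]_{t^{\star}}\,p(1-p)^{t-2}$. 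Your factorization for the terms with at least one initial failure is correct and is the content of that computation.

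The gap is precisely at the step you flag as delicate, and your proposed resolution does not work. The paper asserts its product formula \emph{for all} $t\geq 1$, interprets $t$ as the number of steps elapsed since $t_{\text{req}}$ (so the weights in $\mathbb{E}[W]=\sum_{t\geq 1}t\Pr[W=t]$ start at $1$), and then the result follows from $\sum_{t\geq 1}t\,p(1-p)^{t-2}=1/(p(1-p))$ with no cancellation whatsoever. Your plan instead keeps the literal first term, which lives on $\{X_{e^j}(t_{\text{req}}+1)=1\}$ and has probability $\Pr[X_{e^j}(t_{\text{req}}+1)=1]_{t^{\star}}$, together with absolute-time weights starting at $t_{\text{req}}+1$, and hopes the ``polynomial-in-$t_{\text{req}}$ contributions cancel.'' They do not: the piece linear in $t_{\text{req}}$ survives unless you switch to relative weights, and even after doing so, reaching \eqref{eq-avg_waiting_time_req} requires the first term to contribute $\tfrac{p}{1-p}\Pr[X_{e^j}(t_{\text{req}}+1)=0]_{t^{\star}}$ rather than $\Pr[X_{e^j}(t_{\text{req}}+1)=1]_{t^{\star}}$; these coincide only when $\Pr[X=1]/\Pr[X=0]=p/(1-p)$ (e.g.\ $t_{\text{req}}=0$, or $t^{\star}=0$, where $\Pr[X(t)=1]_{t^\star}=p$ for all $t$), not in general. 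A concrete check: for $t^{\star}=1$, $t_{\text{req}}=1$, $p=1/2$, the literal sum $\Pr[X(2)=1]+\sum_{t\geq 2}t(1-p)^2p(1-p)^{t-2}$ evaluates to $3/2$, whereas \eqref{eq-avg_waiting_time_req} gives $1$. So you must either adopt the paper's convention of extending the product formula to the $t=1$ term --- in which case there is nothing to cancel and your closing paragraph is unnecessary --- or supply a proof of the ratio identity above, which fails for general $t_{\text{req}}$ and $t^{\star}$. As written, the proposal does not close.
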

	
	\begin{remark}
		As a check, let us first observe the following:
		\begin{itemize}
			\item If $t_{\text{req}}=0$, then because $\Pr[M^{t^{\star}}(1)=t^{\star},X(1)=0]_{t^{\star}}=1-p$ for all $t^{\star}\in\mathbb{N}_0$ (see Proposition~\ref{prop-mem_time_prob_x0}), we obtain $\mathbb{E}[W(0)]_{t^{\star}}=\frac{1}{p}$, as expected. We get the same result for $t^{\star}=\infty$.
			
			\item If $t^{\star}=0$, then we get $\Pr[M^{t^{\star}}(t_{\text{req}}+1)=0,X(t_{\text{req}}+1)=0]_{0}=1-p$ for all $t_{\text{req}}\geq 0$ (see Proposition~\ref{prop-mem_time_prob_x0}), which means that $\mathbb{E}[W_{e^j}(t_{\text{req}})]_0=\frac{1}{p}$ for all $t_{\text{req}}\geq 0$. This makes sense, because in the $t^{\star}=0$ policy the elementary link is never held in memory. \defqed
		\end{itemize}
	\end{remark}
	
	\begin{proof}[Proof of Theorem~\ref{thm-avg_waiting_time_req}]
		Using \eqref{eq-waiting_time_prob_late_request}, we have
		\begin{align}
			&\Pr[W_{e^j}(t_{\text{req}})=t]_{t^{\star}}\nonumber\\
			&\,\,=\Pr[X(t_{\text{req}}+1)=0,\dotsc,X(t_{\text{req}}+t)=1]_{t^{\star}}\\
			&\,\,=\sum_{m_1,\dotsc,m_t=0}^{t^{\star}}\Pr[X(t_{\text{req}}+1)=0,M^{t^{\star}}(t_{\text{req}}+1)=m_1,\dotsc,X(t_{\text{req}}+t)=1,M^{t^{\star}}(t_{\text{req}}+t)=m_t]_{t^{\star}}.
		\end{align}
		Using the transition matrix $T(t^{\star})$ defined in \eqref{eq-cutoff_trans_prob1}--\eqref{eq-cutoff_trans_prob6}, we obtain
		\begin{multline}
			\Pr[W_{e^j}(t_{\text{req}})=t]_{t^{\star}}\\=\sum_{m_1,\dotsc,m_t=0}^{t^{\star}} (T(t^{\star}))_{\substack{1,m_t\\0,m_{t-1}}}\dotsb (T(t^{\star}))_{\substack{0,m_3\\0,m_2}}(T(t^{\star}))_{\substack{0,m_2\\0,m_1}}\Pr[M^{t^{\star}}(t_{\text{req}}+1)=m_1,X(t_{\text{req}}+1)=0]_{t^{\star}}.
		\end{multline}
		Using \eqref{eq-mem_time_prob_x0}, along with \eqref{eq-cutoff_trans_prob1}--\eqref{eq-cutoff_trans_prob6}, we have that
		\begin{equation}
			\Pr[W_{e^j}(t_{\text{req}})=t]_{t^{\star}}=\Pr[M^{t^{\star}}(t_{\text{req}}+1)=t^{\star},X(t_{\text{req}}+1)=0]_{t^{\star}}p(1-p)^{t-2},
		\end{equation}
		for all $t\geq 1$. The result then follows.
		
		For $t^{\star}=\infty$, using the transition matrix $T(\infty)$ defined in \eqref{eq-cutoff_trans_infty} leads to
		\begin{multline}
			\Pr[X(t_{\text{req}}+1)=0,\dotsc,X(t_{\text{req}}+t)=1]_{\infty}\\ = (T(\infty))_{\substack{1\\0}}(T(\infty))_{\substack{0\\0}}\dotsb(T(\infty))_{\substack{0\\0}}\Pr[X(t_{\text{req}}+1)=0]_{\infty}.
		\end{multline}
		Then, from \eqref{eq-link_status_Pr1}, we have that $\Pr[X(t_{\text{req}}+1)=0]=(1-p)^{t_{\text{req}}+1}$, so that
		\begin{equation}
			\Pr[W_{e^j}(t_{\text{req}})=t]_{\infty}=p(1-p)^{t-2}(1-p)^{t_{\text{req}}+1}
		\end{equation}
		for all $t\geq 1$. The result then follows.
	\end{proof}
	
	\begin{figure}
		\centering
		\includegraphics[scale=1]{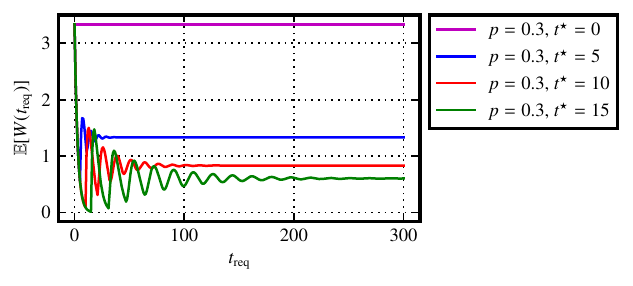}
		\caption{The expected waiting time for a single elementary link, given by \eqref{eq-avg_waiting_time_req}, as a function of the request time $t_{\text{req}}$. We let $p=0.3$, and we take various values for the cutoff $t^{\star}$.}\label{fig-avg_waiting_time_example}
	\end{figure}
	
	In the limit $t_{\text{req}}\to\infty$, we obtain using \eqref{eq-mem_time_prob_infty_x0},
	\begin{equation}\label{eq-wait_time_treqInfty}
		\lim_{t_{\text{req}}\to\infty}\mathbb{E}[W_{e^j}(t_{\text{req}})]_{t^{\star}}=\frac{1}{p(1+t^{\star}p)},\quad t^{\star}\in\mathbb{N}_0.
	\end{equation}
	See Figure~\ref{fig-avg_waiting_time_example} for plots of the expected waiting time, given by \eqref{eq-avg_waiting_time_req}, as a function of the request time $t_{\text{req}}$ for various values of $t^{\star}$. As long as $t^{\star}$ is strictly greater than zero, the waiting time is strictly less than $\frac{1}{p}$, despite the oscillatory behavior for small values of $t_{\text{req}}$. In the limit $t_{\text{req}}\to\infty$, we see that the waiting time is monotonically decreasing with increasing $t^{\star}$, which is also apparent from \eqref{eq-wait_time_treqInfty}.
	
	Let us now consider a collection of elementary links undergoing the memory-cutoff policy. Prior work \cite{Prax13,BPv11,SSv19} has established formulas for the expected waiting time in the case of a linear chain of elementary links, all of which have the same transmission-heralding success probability as well as the same cutoff. Here, let us consider an arbitrary collection of elementary links, and we let all of them undergo the $t^{\star}=\infty$ memory-cutoff policy. The proof of the expected waiting time in this scenario is provided in \cite[Appendix~A]{KMSD19}, but here we present a different proof method.
	
	\begin{theorem}\label{thm-mem_cutoff_collective_waiting_time_tstarInf}
		Let $G=(V,E,c)$ be the graph corresponding to the physical (elementary) links of a quantum network, and let $E'\subseteq E$ be arbitrary and such that $|E'|=M\geq 1$. If all of the $M$ elementary links undergo the $t^{\star}=\infty$ memory-cutoff policy, and if all of the elementary links have the same transmission-heralding success probability $p\in(0,1)$, then
		\begin{equation}\label{eq-exp_waiting_time_tInfty}
			\mathbb{E}[W_{E'}(t_{\text{req}})]_{\infty}=\sum_{k=1}^M \binom{M}{k}(-1)^{k+1}\left(1+\frac{(1-p_k)^{t_{\text{req}}+1}}{p_k}\right),\quad p_k\coloneqq 1-(1-p)^k.
		\end{equation}
	\end{theorem}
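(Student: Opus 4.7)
The plan is to exploit the distinctive feature of the $t^{\star} = \infty$ memory-cutoff policy: once an elementary link is established, it is never discarded, so it remains active at every subsequent time step. For each link $e^j \in E'$ (labeled $i = 1, \ldots, M$), let $T_i$ be the first time step at which the link becomes active. Since at every time step an inactive link is requested (an independent Bernoulli trial with success probability $p$), and since the $M$ links have independent decision processes with the same $p$, the $T_i$ are i.i.d. geometric random variables with parameter $p$ on $\{1, 2, \ldots\}$. Crucially, $X_{e_i}(t) = 1$ iff $T_i \leq t$, and hence $X_{E'}(t) = 1$ iff $T_{\max} \leq t$, where $T_{\max} \coloneqq \max_i T_i$.

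First I would recast the collective waiting time in terms of $T_{\max}$. From Definition~\ref{def-collective_elem_link_waiting_time}, $W_{E'}(t_{\text{req}})$ picks out the smallest $s \geq 1$ with $X_{E'}(t_{\text{req}} + s) = 1$, which by the previous paragraph equals $\max(T_{\max} - t_{\text{req}}, 1)$. Consequently, $\Pr[W_{E'}(t_{\text{req}}) \geq 1] = 1$, and for $s \geq 2$,
\begin{equation}
\Pr[W_{E'}(t_{\text{req}}) \geq s] = \Pr[T_{\max} \geq t_{\text{req}} + s] = 1 - \bigl[1 - (1-p)^{t_{\text{req}}+s-1}\bigr]^M,
\end{equation}
using independence of the $T_i$ and the identity $\Pr[T_i \leq n] = 1 - (1-p)^n$.

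Next, I would apply the tail-sum formula $\mathbb{E}[W] = \sum_{s \geq 1} \Pr[W \geq s]$, separate the $s = 1$ term (which contributes $1$), and reindex the remainder by $u = s - 1$:
\begin{equation}
\mathbb{E}[W_{E'}(t_{\text{req}})]_{\infty} = 1 + \sum_{u=1}^{\infty}\Bigl(1 - \bigl[1 - (1-p)^{t_{\text{req}}+u}\bigr]^M\Bigr).
\end{equation}
Expanding the bracket via the binomial theorem $1 - (1-x)^M = \sum_{k=1}^M \binom{M}{k}(-1)^{k+1} x^k$ with $x = (1-p)^{t_{\text{req}}+u}$, then interchanging the finite sum over $k$ with the geometric series over $u$, and using $(1-p)^k = 1 - p_k$ together with $\sum_{u \geq 1}(1-p_k)^u = (1-p_k)/p_k$, gives
\begin{equation}
\mathbb{E}[W_{E'}(t_{\text{req}})]_{\infty} = 1 + \sum_{k=1}^{M}\binom{M}{k}(-1)^{k+1}\,\frac{(1-p_k)^{t_{\text{req}}+1}}{p_k}.
\end{equation}
Finally, to match the stated form, I would absorb the leading $1$ into the sum using the combinatorial identity $\sum_{k=1}^{M}\binom{M}{k}(-1)^{k+1} = 1 - (1-1)^M = 1$, yielding the claimed expression.

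The hardest step is conceptual rather than computational: carefully justifying that, under $t^{\star}=\infty$, the collective waiting time really does reduce to $\max(T_{\max} - t_{\text{req}}, 1)$ directly from the indicator-product definition of $W_{E'}(t_{\text{req}})$. Once that reduction is secured, the remaining work is routine inclusion-exclusion, and Fubini applies to the double sum because every term is nonnegative. Extending the proof to finite $t^{\star}$ would be substantially harder, since links no longer stay active, the $T_i$'s are not sufficient statistics, and one would need to analyze a product of Markov chains via the transition matrix $T(t^{\star})$ from~\eqref{eq-cutoff_trans_prob6}; but for $t^{\star} = \infty$ the clean independence structure is exactly what makes the closed form possible.
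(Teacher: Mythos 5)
Your proof is correct, but it takes a genuinely different route from the one in the paper. The paper's proof works forward in time: it computes $\Pr[W_{E'}(t_{\text{req}})=t]_{\infty}$ by counting, at each time step, how many of the $M$ links are still inactive (a nested sum over $k_1\geq k_2\geq\dotsb$), establishes the closed form $(1-(1-p_{t_{\text{req}}+1})(1-p)^{t-1})^M-(1-(1-p_{t_{\text{req}}+1})(1-p)^{t-2})^M$ by induction on $t$, and then evaluates $\sum_t t\Pr[W=t]$ via the derivative-of-a-geometric-series identity $\sum_{t\geq 2}t x^{t-1}=x(2-x)/(1-x)^2$. You instead observe that under $t^{\star}=\infty$ the first activation times $T_i$ are i.i.d.\ geometric, that $X_{E'}(t)=1$ iff $\max_i T_i\leq t$, and hence $W_{E'}(t_{\text{req}})=\max(T_{\max}-t_{\text{req}},1)$; the tail-sum formula plus inclusion--exclusion then gives the answer with no induction and no series differentiation. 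Your reduction is cleaner and makes the probabilistic structure (max of i.i.d.\ geometrics) transparent, at the cost of relying on the same reading of Definition~\ref{def-collective_elem_link_waiting_time} that the paper itself uses in its computations, namely that $W$ records the elapsed time after $t_{\text{req}}$ rather than the absolute time index appearing literally in the defining sum. Two minor remarks: your justification for interchanging the sums (``every term is nonnegative'') is not quite right as stated, since after the binomial expansion the terms carry the alternating sign $(-1)^{k+1}$; the interchange is nonetheless trivially valid because it involves a finite sum over $k$ against $M$ absolutely convergent geometric series in $u$. And the identification $W_{E'}(t_{\text{req}})=\max(T_{\max}-t_{\text{req}},1)$, which you rightly flag as the conceptual crux, does follow directly from the indicator-product form of the definition once one notes that $X_{E'}$ is nondecreasing in $t$ under the $t^{\star}=\infty$ policy.
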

	
	\begin{proof}
		See Appendix~\ref{app-mem_cutoff_collective_waiting_time_tstarInf_pf}.
	\end{proof}

\subsection{Rates}

	Let us now consider the rate quantities defined in Section~\ref{sec-network_QDP_rates}, starting with the elementary link success rate in Definition~\ref{def-network_QDP_elem_link_succ_rate}.
	
	\begin{theorem}\label{thm-avg_sucess_rate}
		Let $G=(V,E,c)$ be the graph corresponding to the physical (elementary) links of a quantum network, let $e^j$, with $e\in E$ and $1\leq j\leq c(e)$, be an arbitrary edge of the graph, and let the elementary link corresponding to $e^j$ have transmission-heralding success probability $p\in[0,1]$. For all $t^{\star}\in\mathbb{N}_0\cup\{\infty\}$ and $t\geq 1$, the expected success rate for the elementary link corresponding to $e^j$, when undergoing the $t^{\star}$ memory-cutoff policy, is
		\begin{equation}
			\mathbb{E}[S_{e^j}(t)]_{t^{\star}}=\sum_{j=0}^{t-1}\frac{1}{j+1}p(1-p)^j,\quad t\leq t^{\star}+1.
		\end{equation}
		For $t>t^{\star}+1$,
		\begin{multline}
			\mathbb{E}[S_{e^j}(t)]_{t^{\star}}=\sum_{x=0}^{\floor{\frac{t-1}{t^{\star}+1}}}\left(\frac{x}{t-t^{\star}x}\binom{t-1-xt^{\star}}{x}p^x(1-p)^{t-(t^{\star}+1)x}\right.\\\left.+\sum_{k=1}^{t^{\star}+1}\frac{x+1}{t-k-t^{\star}x+1}\binom{t-k-xt^{\star}}{x}p^{x+1}(1-p)^{t-k-(t^{\star}+1)x}\boldsymbol{1}_{t-k-(t^{\star}+1)x\geq 0}\right).
		\end{multline}
	\end{theorem}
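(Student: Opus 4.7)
The plan is to write the random variable as $S_{e^j}(t) = N^{\text{succ}}(t)/N^{\text{req}}(t)$ (using the counters introduced in Theorem~\ref{prop-link_quantum_state}), and then partition histories $h^t \in \Omega(t)$ according to a natural \emph{cycle/block} decomposition induced by the $t^\star$ memory-cutoff policy. Because the policy is deterministic, every history is uniquely described by the sequence of outcomes of its requests, and the policy dictates that each such outcome initiates a \emph{block}: a failure produces a ``failure block'' of length one (one request, zero successes), while a success produces a ``success block'' of length $t^\star+1$ (one request, one success, followed by $t^\star$ forced ``wait'' actions, provided enough time remains). The final block may be truncated by the horizon $t$. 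On each complete block, both $N^{\text{succ}}$ and $N^{\text{req}}$ are constant, so the success rate $S_{e^j}(t)(h^t)$ depends only on the counts of blocks of each type, and the probability of $h^t$ follows from Theorem~\ref{prop-link_quantum_state} since $\prod_j d_j^{t^\star}(h_j^t)(a_j) = 1$ for every realizable history.

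First I would dispatch the case $t \le t^\star+1$. Here no success block can complete within the horizon, so a realizable history is either (i) $t$ consecutive failures, giving $S_{e^j}(t)=0$ with probability $(1-p)^t$, or (ii) exactly $j$ initial failures for some $0 \le j \le t-1$ followed by a success at time $j+1$ and $t-(j+1)$ forced waits, giving $N^{\text{succ}}=1$, $N^{\text{req}}=j+1$, hence $S_{e^j}(t)=1/(j+1)$, with probability $p(1-p)^j$ from \eqref{eq-hist_prob_general}. Summing yields the first line of the theorem.

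For $t > t^\star+1$, the complete-block decomposition becomes nontrivial. I would index realizable histories by the number $x$ of complete success blocks together with the type of terminating block, and produce two contributions corresponding to the two terms in the theorem. \textbf{Term 1} (history ends on a failure block): with $f$ failure blocks in total, we have $x(t^\star+1)+f = t$, so $f = t-(t^\star+1)x$, the total number of requests is $x+f = t - t^\star x$, and $S_{e^j} = x/(t - t^\star x)$. The number of orderings is obtained by fixing a failure block in the last position and arranging the remaining $x$ successes among $x+f-1 = t-1-xt^\star$ positions, giving $\binom{t-1-xt^\star}{x}$; each history occurs with probability $p^x(1-p)^f$ by \eqref{eq-hist_prob_general}. \textbf{Term 2} (history ends in the middle of a success block of elapsed length $k$, $1 \le k \le t^\star+1$): here $x(t^\star+1)+f+k = t$, the terminal success contributes one extra request and one extra success, so $N^{\text{req}} = x+f+1 = t-k-t^\star x +1$ and $N^{\text{succ}} = x+1$. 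Fixing the partial success block at the end leaves $\binom{x+f}{x} = \binom{t-k-xt^\star}{x}$ orderings for the preceding complete blocks, weighted by $p^{x+1}(1-p)^f$; the indicator $\mathbf 1_{t-k-(t^\star+1)x \ge 0}$ enforces $f \ge 0$. The range $0 \le x \le \lfloor (t-1)/(t^\star+1)\rfloor$ is forced by $f \ge 1$ in Term~1 and subsumes the requirements of Term~2 via the indicator. Summing the two contributions gives the claimed formula.

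The main obstacle is the bookkeeping: one has to verify that the set of histories partitions exactly into the two categories (ending on a failure block, or ending mid–success block), that every realizable history indeed has $\prod_j d^{t^\star}_j(h_j^t)(a_j) = 1$ so that its probability is only the $p^{\,\text{succ}}(1-p)^{\text{fail}}$ factor from Theorem~\ref{prop-link_quantum_state}, and that the binomial counts correctly enumerate the orderings of cycles without double-counting boundary configurations (in particular, the ``$k=t^\star+1$'' case of Term~2, where the partial success block is actually a complete success block at the horizon, must not be duplicated in Term~1). Once these combinatorial checks are carried out, the algebraic identification of the summands with those in the theorem is immediate.
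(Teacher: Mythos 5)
Your proposal is correct and follows essentially the same route as the paper's proof: grouping histories by the number of complete success blocks and the length of the trailing run of ones (the paper's $Y_1^{t^\star}(t)$ and $Y_2^{t^\star}(t)$), expressing $S_{e^j}(t)$ via the success/request counters, and weighting each class by the binomial count of block arrangements times $p^{N^{\text{succ}}}(1-p)^{N^{\text{req}}-N^{\text{succ}}}$. Your inline block-decomposition bookkeeping simply re-derives what the paper delegates to Proposition~\ref{prop-time_seq_prob} and the accompanying counting lemma, so no gap remains.
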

	
	\begin{proof}
		See Appendix~\ref{sec-avg_sucess_rate_pf}.
	\end{proof}
	
	Let us now consider the elementary link activity rate in Definition~\ref{def-network_QDP_elem_link_act_rate}.
	
	\begin{theorem}
		Let $G=(V,E,c)$ be the graph corresponding to the physical (elementary) links of a quantum network, and let $e\in E$ be arbitrary. Let the elementary links corresponding to the edges $e^j$, with $1\leq j\leq c(e)$, have transmission-heralding success probabilities $p_{e^j}\in[0,1]$, and suppose that every such elementary link undergoes the $t_{e^j}^{\star}$ memory-cutoff policy, with $t_{e^j}^{\star}\in\mathbb{N}_0\cup\{\infty\}$. Then, the expected rate $\mathbb{E}[r_e(t)]_{\vec{\pi}}$ of elementary link generation in the limit $t\to\infty$ is as follows:
		\begin{equation}\label{eq-link_act_rate_infty}
			\lim_{t\to\infty}\mathbb{E}[r_e(t)]_{\vec{\pi}}=\lim_{t\to\infty}\frac{1}{t}\sum_{i=1}^t\mathbb{E}[N_e(i)]_{\vec{\pi}}=\sum_{j=1}^{c(e)} \frac{(t_{e^j}^{\star}+1)p_{e^j}}{1+t_{e^j}^{\star}p_{e^j}},
		\end{equation}
		where $\vec{\pi}$ denotes the collection of memory-cutoff policies for the elementary links.
	\end{theorem}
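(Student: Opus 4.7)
The plan is to reduce the claim to Theorem~\ref{thm-link_status_avg_inf} together with the standard fact that the Cesàro mean of a convergent sequence converges to the same limit. Because the agents are independent across parallel edges, linearity of expectation immediately gives
\begin{equation}
\mathbb{E}[N_e(i)]_{\vec{\pi}} = \sum_{j=1}^{c(e)}\mathbb{E}[X_{e^j}(i)]_{t_{e^j}^{\star}},
\end{equation}
so the time-averaged rate splits as a finite sum over the $c(e)$ parallel edges.

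First I would handle the case $t_{e^j}^{\star}\in\mathbb{N}_0$. By Theorem~\ref{thm-link_status_avg_inf} (or equivalently Corollary~\ref{cor-link_status_Pr1}), for each $j$ the sequence $a_i^{(j)} \coloneqq \mathbb{E}[X_{e^j}(i)]_{t_{e^j}^{\star}}$ converges to $L_j \coloneqq (t_{e^j}^{\star}+1)p_{e^j}/(1+t_{e^j}^{\star}p_{e^j})$ as $i\to\infty$. A standard Cesàro argument then shows $\frac{1}{t}\sum_{i=1}^t a_i^{(j)} \to L_j$: given $\varepsilon>0$, choose $N$ so that $|a_i^{(j)}-L_j|<\varepsilon/2$ for $i\geq N$, split the sum at $N$, and use that $a_i^{(j)}\in[0,1]$ is bounded so the initial block contributes $O(N/t)\to 0$. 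The case $t_{e^j}^{\star}=\infty$ is handled identically, taking the limit $L_j = 1$ (consistent with $\lim_{t^\star\to\infty}(t^\star+1)p/(1+t^\star p)=1$, so the stated formula extends continuously).

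Finally, I would swap the finite sum over $j\in\{1,\dots,c(e)\}$ with the limit in $t$, which is legal for finite sums, to obtain
\begin{equation}
\lim_{t\to\infty}\frac{1}{t}\sum_{i=1}^t \mathbb{E}[N_e(i)]_{\vec{\pi}}
= \sum_{j=1}^{c(e)}\lim_{t\to\infty}\frac{1}{t}\sum_{i=1}^t a_i^{(j)}
= \sum_{j=1}^{c(e)}\frac{(t_{e^j}^{\star}+1)p_{e^j}}{1+t_{e^j}^{\star}p_{e^j}}.
\end{equation}

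There is no substantive obstacle here: the entire argument is bookkeeping on top of Theorem~\ref{thm-link_status_avg_inf}. The only mild subtlety is to justify that the $t\to\infty$ limit of the Cesàro mean agrees with the $t\to\infty$ limit of $\mathbb{E}[X_{e^j}(t)]_{t_{e^j}^{\star}}$ itself, which is immediate from the bounded convergence argument above because the summands are uniformly bounded in $[0,1]$. No assumption of monotonicity or stationarity beyond time $t=1$ is needed, so the proof works uniformly in the cutoffs $t_{e^j}^{\star}$ and the success probabilities $p_{e^j}$.
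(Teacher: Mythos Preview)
Your proposal is correct and follows essentially the same approach as the paper: both recognize the expected rate as a Ces\`aro mean and invoke the standard fact that Ces\`aro means inherit the limit of the underlying sequence, which is supplied by Theorem~\ref{thm-link_status_avg_inf}. The only cosmetic difference is that the paper applies the Ces\`aro argument directly to $\mathbb{E}[N_e(i)]_{\vec{\pi}}$ (citing \eqref{eq-avg_num_sublinks_infty} for its limit) rather than first decomposing into the individual $\mathbb{E}[X_{e^j}(i)]$ terms, but this is the same computation in a different order.
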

	
	\begin{proof}
		The expected rate $\mathbb{E}[r_e(t)]_{\vec{\pi}}$ is, by defintion, the Ces\'{a}ro mean of the sequence $(\mathbb{E}[N_e(i)]_{\vec{\pi}})_{i=1}^{t}$. It is known that the limit of Ces\'{a}ro means of a sequence is equal to the limit of the sequence itself; see, e.g., \cite{Knopp_book}. Therefore, because $\lim_{i\to\infty}\mathbb{E}[N_e(i)]_{\vec{\pi}}$ exists and is given by \eqref{eq-avg_num_sublinks_infty}, we obtain the desired result.
	\end{proof}

\subsection{Cluster size}\label{sec-mem_cutoff_cluster_size}

	The last figure of merit that we look at is the cluster size, which we defined in Section~\ref{sec-network_QDP_cluster_size}. In particular, from Definition~\ref{def-exp_largest_cluster_G}, recall that
	\begin{equation}
		s_G^{\max}(t;\vec{\pi})=\frac{1}{|G|}\mathbb{E}[S^{\max}(G_{\vec{\pi}}(t))],
	\end{equation}
	where $G=(V,E,c)$ is the graph corresponding to the physical (elementary) links of a quantum network, and $\vec{\pi}=\left(\pi^{e^j}:e\in E,\,1\leq j\leq c(e)\right)$ is a collection of policies for the elementary links in the network.
	
	\begin{figure}
		\centering
		\includegraphics[scale=1]{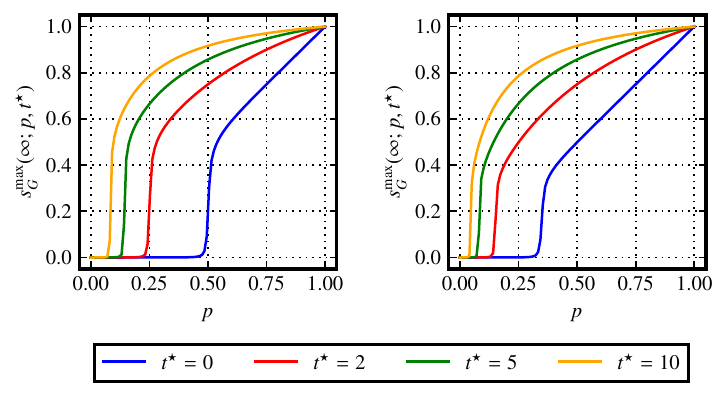}
		\caption{The (normalized) expected largest cluster size in the limit $t\to\infty$ for networks corresponding to a square lattice (left) and triangular lattice (right) with 500 edges. Every elementary link in the network has a transmission-heralding success probability of $p$ and undergoes the $t^{\star}$ memory-cutoff policy.}\label{fig-bond_perc_mem_cutoff}
	\end{figure}
	
	Now, in order to illustrate the usefulness of this figure of merit, let us consider the simple scenario considered in Ref.~\cite{KMSD19}, in which there is only a single parallel elementary link corresponding to every $e\in E$, which means that $|G|=|E|$. Also, let us suppose that all of the elementary links have the same transmission-heralding success probability $p\in[0,1]$, and that all of the elementary links undergo the same $t^{\star}$ memory-cutoff policy, with $t^{\star}\in\mathbb{N}_0\cup\{\infty\}$. We let $s_G^{\max}(t;p,t^{\star})\equiv s_G^{\max}(t;\vec{\pi})$ denote the (normalized) expected largest cluster size in this scenario. In Figure~\ref{fig-bond_perc_mem_cutoff}, we plot this quantity in the limit $t\to\infty$ when the graph $G$ is the square lattice and the triangular lattice of size $|G|=500$. In both cases, for every value of $t^{\star}$ there is a value of the transmission-heralding success probability $p$, call it $p_{\text{crit}}(G;t^{\star})$, below which the expected largest cluster size is effectively zero, and beyond which the expected largest cluster size increases monotonically with $p$. The critical probability $p_{\text{crit}}(G;t^{\star})$ is a clear dividing point between these two regimes. Furthermore, we see that the critical probability decreases with increasing $t^{\star}$.
	
	The phenomenon observed in Figure~\ref{fig-bond_perc_mem_cutoff} is called \textit{bond percolation}, and it has been deeply studied; see, e.g., Ref.~\cite{Grim99_book}. In bond percolation theory, one considers a graph $G$ in which every edge of the graph is present with \textit{bond probability} $p_{\text{bond}}\in[0,1]$ and absent with probability $1-p_{\text{bond}}$. It is then well known that for graphs $G$ that are regular lattices (such as the square and triangular lattices), there exists a probability $p_{\text{crit}}(G)$, known as the \textit{critical bond probability}, such that as the size $|G|$ of the graph increases the probability of having a large connected component vanishes for all $p_{\text{bond}}<p_{\text{crit}}$ and approaches one for all $p_{\text{bond}}\geq p_{\text{crit}}$. For the square lattice, $p_{\text{crit}}(G)=\frac{1}{2}$ \cite{Kesten80}; for the triangular lattice, $p_{\text{crit}}(G)=2\sin(\frac{\pi}{18})\approx 0.34730$ \cite{SE64,Wierman81}.
	
	The scenario that we consider in Figure~\ref{fig-bond_perc_mem_cutoff} corresponds to the standard bond percolation scenario described in the previous paragraph when $t^{\star}=0$, because in this case $p_{\text{bond}}=\Pr[X(t)=1]_0=p$ (see Corollary~\ref{cor-link_status_Pr1}). For $t^{\star}>0$, the scenario that we consider is equivalent to the bond percolation scenario with $p_{\text{bond}}=\Pr[X(t)=1]_{t^{\star}}$. In particular, then, in the limit $t\to\infty$, by Theorem~\ref{thm-link_status_avg_inf} we have that $p_{\text{bond}}=\frac{(t^{\star}+1)p}{1+t^{\star}p}$. From this, we immediately obtain an expression for the critical probability $p_{\text{crit}}(G;t^{\star})$:
	\begin{equation}\label{eq-mem_cutoff_critical_prob_tStar_tInf}
		p_{\text{crit}}(G;t^{\star})=\frac{p_{\text{crit}}(G)}{1+t^{\star}(1-p_{\text{crit}}(G))},
	\end{equation}
	for all $t^{\star}\in\mathbb{N}_0\cup\{\infty\}$.
	
	As pointed out in Ref.~\cite{DKD18}, the critical probability $p_{\text{crit}}(G;t^{\star})$ can be interpreted as measure of the robustness of a quantum network to transmission losses. In particular, it gives us the value of the transmission-heralding success probability that needs to be achieved in order to have a large cluster of active elementary links as the size of the network increases. Notably, because the critical probability is different for the square and triangular lattices, the critical probability can be used as a method for evaluating the topology of the network---networks corresponding to lower critical probabilities are more robust than networks with higher critical probabilities, because for the former a large cluster can be achieved with a lower transmission-heralding success probability. We also see in Figure~\ref{fig-bond_perc_mem_cutoff} that the critical probability decreases with the cutoff $t^{\star}$, and it is made clear by the expression in \eqref{eq-mem_cutoff_critical_prob_tStar_tInf}. Based on our interpretation of the critical probability, the advantage of using a higher cutoff becomes clear, because higher cutoffs decrease the critical probability.	
	
	One can also consider \textit{inhomogeneous bond percolation}, which is when the edges of the graph have different bond probabilities. In this scenario, instead of a single critical probability we obtain a so-called critical surface that separates a highly connected network from a disconnected network. Examples of this in the context of quantum networks have been considered in Ref.~\cite{DKD18}.
	
	In Lemma~\ref{lem-largest_cluster_UB}, we showed that the largest cluster size of a network is bounded from above by the total number of active elementary links in the network. Let $G=(V,E,c)$ be the graph corresponding to the physical (elementary) links of a quantum network. From \eqref{eq-num_active_elem_links}, we have that
	\begin{equation}
		L(G_{\vec{\pi}}(t))=\sum_{e\in E}\sum_{j=1}^{c(e)} X_{e^j}(t)
	\end{equation}
	is the random variable for the number of active elementary links in the network at time $t\geq 1$, where $\vec{\pi}=\left(\pi^{e^j}:e\in E,\,1\leq j\leq c(e)\right)$ is the collection of policies for the elementary links. Then, the expected number of active elementary links is simply
	\begin{equation}
		\mathbb{E}[L(G_{\vec{\pi}}(t))]=\sum_{e\in E}\sum_{j=1}^{c(e)}\mathbb{E}[X_{e^j}(t)]_{\pi^{e^j}}.
	\end{equation}
	Now, let the elementary links corresponding to the edges $e^j$, with $1\leq j\leq c(e)$, have transmission-heralding success probabilities $p_{e^j}\in[0,1]$, and suppose that every such elementary link undergoes the $t_{e^j}^{\star}$ memory-cutoff policy, with $t_{e^j}^{\star}\in\mathbb{N}_0\cup\{\infty\}$. Then, using Corollary~\ref{cor-link_status_Pr1}, we obtain an analytic expression for the expected number of active elementary links in the network for all $t\geq 1$. In the case $t\to\infty$, we can use Theorem~\ref{thm-link_status_avg_inf} to obtain
	\begin{equation}
		\lim_{t\to\infty}\mathbb{E}[L(G_{\vec{\pi}}(t))]=\sum_{e\in E}\sum_{j=1}^{c(e)}\frac{(t_{e^j}^{\star}+1)p_{e^j}}{1+t_{e^j}^{\star}p_{e^j}}.
	\end{equation}


\section{Summary}
	
	In this chapter, we considered an explicit example of a policy for the quantum decision process for elementary link generation presented in Chapter~\ref{chap-network_QDP}, which we called the memory-cutoff policy. This is an intuitive policy that has been considered extensively in prior work, and in this chapter we have explicitly cast it within the framework of quantum decision processes. The policy is defined as follows: if the elementary link is not currently active (meaning that the corresponding nodes do not share an entangled state), then the agent requests an entangled state from its associated source station. If the elementary is active (meaning that the nodes share an entangled state and it is stored in their respective quantum memories), then they keep the entangled state stored in the quantum memory if it has been stored for less than $t^{\star}$ time steps, and they discard the entangled state and request a new one if it has been stored for exactly $t^{\star}$ time steps. Here, $t^{\star}\in\mathbb{N}_0\cup\{\infty\}$ is the memory cutoff. For this policy, we provided analytic expressions for the expected quantum state of an elementary link in both the finite-horizon case ($t<\infty$) as well as the infinite-horizon case ($t\to\infty$). We also provided analytic expressions for the figures of merit defined in Section~\ref{sec-practical_figures_merit}. The memory-cutoff policy is useful not only because of its simple mathematical form, but also because it is practical, especially for near-term quantum memories that do not have very high coherence times.

\chapter{ELEMENTARY LINK GENERATION WITH SATELLITES}\label{chap-sats}

	In this chapter, we put together all of the pieces developed in Chapters~\ref{chap-QDP}--\ref{chap-mem_cutoff} and present an example of elementary link generation using the satellite-based network architecture from Ref.~\cite{KBD+19} that we presented in Section~\ref{sec-sat_architecture}.
	
	Although many ground-based quantum networking schemes have been developed, experimental demonstrations performed so far have been limited \cite{Hump+18,Kalb+17,Ritter+12} and do not scale to the distances needed to realize a global-scale quantum internet. On the other hand, satellites have been recognized as one of the best methods for achieving global-scale quantum communication with current and near-term resources \cite{AJP+03,JH13,BAL17,Sim17,Cubesat2017,Nanobob2018}. As we show in Section~\ref{sec-elem_link_sats}, using satellites is advantageous due to the fact that the majority of the optical path traversed by an entangled photon pair is in free space, resulting in lower loss compared to ground-based entanglement distribution over atmospheric or fiber-optic links. Satellites can also be used to implement long-distance QKD with untrusted nodes, which is missing from most current (ground-based) implementations of long-distance QKD due to the lack of a quantum repeater. A satellite-based approach also allows for the possibility to use quantum strategies for tasks such as establishing a robust and secure international time scale via a quantum network of clocks \cite{KKBJ+14}, extending the baseline of telescopes for improved astronomical imaging \cite{GJC12,1KBDL18,2KBDL18}, and exploring fundamental physics \cite{RJ+12,Bruschi+14}.

\section{Quantum state of an elementary link}\label{sec-elem_link_sats}

	In Section~\ref{sec-sat_architecture}, we determined that the (approximate) channel for satellite-to-ground transmission of a dual-rail photon is given by \eqref{eq-noisy_transmission_channel}:
	\begin{equation}
		\mathcal{L}_{A_1A_2}^{\eta_{\text{sg}},\overline{n}}(\rho_{A_1A_2})\coloneqq\Tr_{E_1E_2}\left[\left(U_{A_1E_1}^{\eta_{\text{sg}}}\otimes U^{\eta_{\text{sg}}}_{A_2E_2}\right)\left(\rho_{A_1A_2}\otimes\widetilde{\Theta}_{E_1E_2}^{\overline{n}}\right)\left(U^{\eta_{\text{sg}}}_{A_1E_1}\otimes U^{\eta_{\text{sg}}}_{A_2E_2}\right)^\dagger\right],
	\end{equation}
	where $\eta_{\text{sg}},\overline{n}\in[0,1]$, with $\overline{n}$ being the average number of background photons and $\eta_{\text{sg}}$ being the transmittance of the satellite-to-ground medium. In particular, if the satellite is at altitude $h$ and the path length from the satellite to the ground station is $L$, then
	\begin{equation}
		\eta_{\text{sg}}(L,h)=\eta_{\text{fs}}(L)\eta_{\text{atm}}(L,h),
	\end{equation}
	where
	\begin{equation}\label{eq-fs_transmittance_2}
		\eta_{\text{fs}}(L)=1-\exp\left(-\frac{2r^2}{w(L)^2}\right),\quad w(L)\coloneqq w_{0}\sqrt{1+\left(\frac{L}{L_{R}}\right)^2},\quad L_{R}\coloneqq\pi w_{0}^2\lambda^{-1},
	\end{equation}
	and
	\begin{equation}\label{eq-atmospheric_transmittance_zenith_2}
		\eta_{\text{atm}}(L,h)=\left\{\begin{array}{l l} (\eta_{\text{atm}}^{\text{zen}})^{\sec\zeta}, & \text{if } -\frac{\pi}{2}<\zeta<\frac{\pi}{2},\\[0.2cm] 0, & \text{if } |\zeta|\geq\frac{\pi}{2}, \end{array}\right.
	\end{equation}
	with $\eta_{\text{atm}}^{\text{zen}}$ the transmittance at zenith ($\zeta=0$). In general, the zenith angle $\zeta$ is given by
	\begin{equation}
		\cos\zeta=\frac{h}{L}-\frac{1}{2}\frac{L^2-h^2}{R_{\oplus}L}
	\end{equation}
	for a circular orbit of altitude $h$, with $R_{\oplus}\approx 6378$~km being the earth's radius. The following parameters thus characterize the total transmittance from satellite to ground: the initial beam waist $w_0$, the receiving aperture radius $r$, the wavelength $\lambda$ of the satellite-to-ground signals, and the atmospheric transmittance $\eta_{\text{atm}}^{\text{zen}}$ at zenith. See Table~\ref{table-parameters} for the values that we use for these parameters in this chapter.
	
	\begin{table}
		\centering
		\caption{Parameters used in the modeling of loss from satellites to ground stations; see \eqref{eq-fs_transmittance_2} and \eqref{eq-atmospheric_transmittance_zenith_2}.}\label{table-parameters}
		\begin{tabular}{|>{\centering\arraybackslash}m{1.3cm}  >{\centering\arraybackslash}m{3.5cm}  >{\centering\arraybackslash}m{1.4cm} |}
			\hline
			Parameter & Definition & Value \\\hline\hline
			$r$ & Receiving aperture radius & 0.75 m \\[0.2cm] 
			$w_0$ & Initial beam waist & 2.5 cm \\[0.2cm]
			$\lambda$ & Wavelength of satellite-to-ground signals & 810 nm \\[0.6cm]
			$\eta_{\text{atm}}^{\text{zen}}$ & Atmospheric transmittance at zenith & 0.5 at 810 nm \cite{BMH+13} \\ \hline
		\end{tabular}
	\end{table}
	
	Using the values in Table~\ref{table-parameters}, we plot in the right panel of Figure~\ref{fig-atmosphere_geometry} the total transmittance as a function of the ground distance $d$ between two ground stations with a satellite at the midpoint, as depicted in the left panel of Figure~\ref{fig-atmosphere_geometry} (top). We observe that for larger ground separations the total transmittance $\eta_{\text{sg}}^2$ is actually larger for a higher altitude than for a lower altitude; for example, beyond approximately $d=1600$~km the transmittance for $h=1000$~km is larger than for $h=500$~km. We also observe that there are altitudes at which the transmittance is maximal. Intuitively, beyond the maximum point, the atmospheric contribution to the loss is less dominant, while below the maximum (i.e., for lower altitudes) the atmosphere is the dominant source of loss. This feature is unique for optical transmission from satellite to ground. Furthermore, we can compare the transmittances in the left-hand plot of Figure~\ref{fig-atmosphere_geometry} with the transmittance $\e^{-d/L_0}$ for ground-based transmission with fiber-optics, with $L_0\approx 22~\text{km}$ \cite{KGMS88_book}. For example, with satellites, we have a transmittance of approximately $10^{-7}$ for $d=1500~\text{km}$ and $h=4000~\text{km}$, while for the same distance we have $\e^{-1500/22}\approx 10^{-30}$ using fiber-optic transmission.
	
	\begin{figure}
		\centering
		\includegraphics[width=\textwidth]{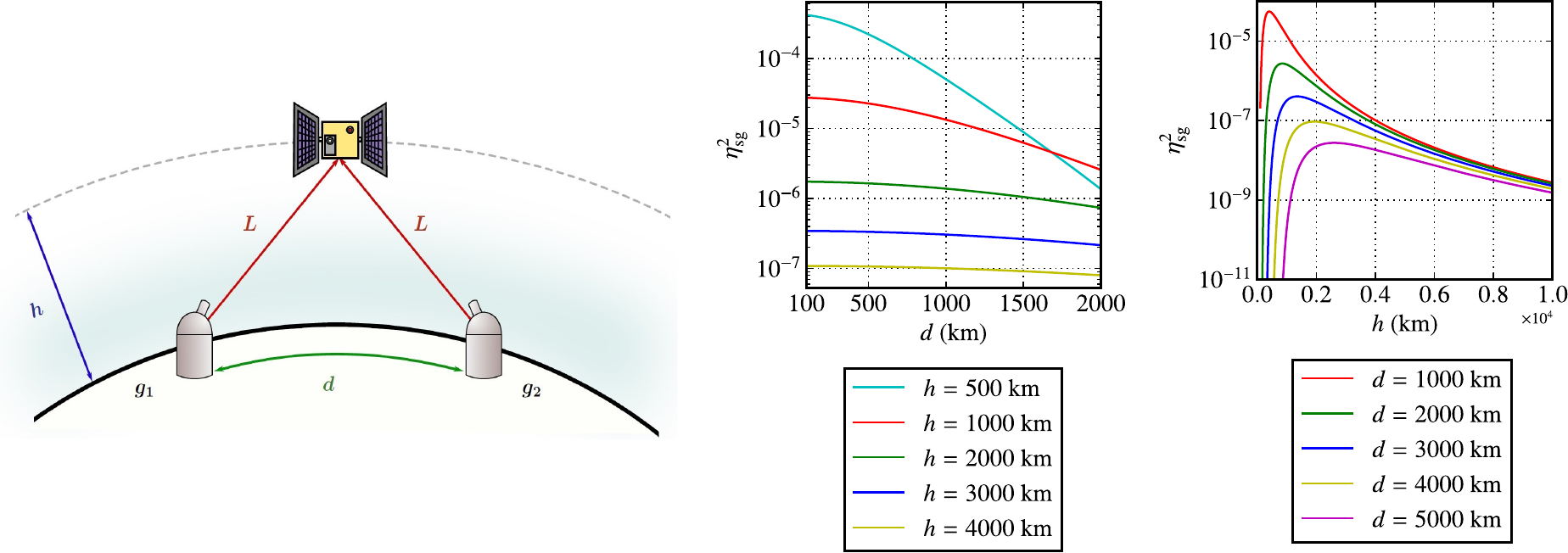}
		\caption{Optical satellite-to-ground transmission. (Left) Two ground stations $g_1$ and $g_2$ are separated by a distance $d$ with a satellite at altitude $h$ at the midpoint. Both ground stations are the same distance $L$ away from the satellite, so that the total transmittance for two-qubit entanglement transmission (one qubit to each ground station) is $\eta_{\text{sg}}^2$, where $\eta_{\text{sg}}=\eta_{\text{fs}}\eta_{\text{atm}}$, with $\eta_{\text{fs}}$ given by \eqref{eq-fs_transmittance_2} and $\eta_{\text{atm}}$ given by \eqref{eq-atmospheric_transmittance_zenith_2}. (Right) Plots of the total transmittance $\eta_{\text{sg}}^2$ as a function of $d$ and $h$.}\label{fig-atmosphere_geometry}
	\end{figure}
	
	Now, consider two ground stations, one corresponding to Alice and one corresponding to Bob. Given a state $\rho_{AB}^S$ produced by the source on the satellite, the state after transmission of the system $A$ to Alice and the system $B$ to Bob is given by \eqref{eq-transmission_noisy_output}:
	\begin{equation}
		\rho_{AB}^{S,\text{out}}=\left(\mathcal{L}_A^{\eta_{\text{sg}}^{(1)},\overline{n}_1}\otimes\mathcal{L}_B^{\eta_{\text{sg}}^{(2)},\overline{n}_2}\right)(\rho_{AB}^S),
	\end{equation}	
	where $\eta_{\text{sg}}^{(1)}$ and $\eta_{\text{sg}}^{(2)}$ are the transmittances to the ground stations and $\overline{n}_1$ and $\overline{n}_2$ are the corresponding thermal background noise parameters.
	
	After transmission, we assume a heralding procedure defined by post-selecting on coincident events using (perfect) photon-number-resolving detectors. One can justify this assumption because, in the high-loss and low-noise regimes ($\eta_{\text{sg}}^{(1)},\eta_{\text{sg}}^{(2)},\overline{n}\ll 1$), the probability of four-photon and three-photon occurrences is negligible compared to two-photon events. Therefore, upon successful heralding, the (unnormalized) quantum state shared by Alice and Bob is
	\begin{equation}\label{eq-initial_state_tilde_1_sats}
		\widetilde{\sigma}_{AB}(1)\coloneqq\Pi_{AB}\left(\mathcal{L}_A^{\eta_{\text{sg}}^{(1)},\overline{n}_1}\otimes\mathcal{L}_B^{\eta_{\text{sg}}^{(2)},\overline{n}_2}\right)(\rho_{AB}^S)\Pi_{AB},
	\end{equation}
	where
	\begin{equation}
		\Pi_{AB}\coloneqq(\dyad{H}_A+\dyad{V}_A)\otimes(\dyad{H}_B+\dyad{V}_B)
	\end{equation}
	is the projection onto the two-photon-coincidence subspace. Note that the projection $\Pi_{AB}$ is exactly the projection $\Lambda^1\otimes\Lambda^1$, with $\Lambda^1$ defined in \eqref{eq-photonic_heralding_example_1}. Then, the transmission-heralding success probability is, as per the definition in \eqref{eq-elem_link_success_prob},
	\begin{equation}
		p\coloneqq \Tr[\widetilde{\sigma}_{AB}(1)]=\Tr\left[\Pi_{AB}\left(\mathcal{L}_A^{\eta_{\text{sg}}^{(1)},\overline{n}_1}\otimes\mathcal{L}_B^{\eta_{\text{sg}}^{(2)},\overline{n}_2}\right)(\rho_{AB}^S)\right].
	\end{equation}
	We refer to the discussion in Section~\ref{sec-network_setup_ground_based} for an explanation of how the heralding procedure described here mathematically is conducted in practice.
	
	Now, let us take the source state $\rho_{AB}^S$ to be the following:
	\begin{equation}\label{eq-sats_source_state}
		\rho_{AB}^S=f_S\Phi_{AB}^{+}+\left(\frac{1-f_S}{3}\right)(\Phi_{AB}^-+\Psi_{AB}^++\Psi_{AB}^-),
	\end{equation}
	where $f_S\in[0,1]$ and 
	\begin{align}
		\Phi_{AB}^{\pm}&\coloneqq\ket{\Phi^{\pm}}\bra{\Phi^{\pm}}_{AB},\quad \ket{\Phi^{\pm}}\coloneqq\frac{1}{\sqrt{2}}(\ket{H,H}\pm\ket{V,V}),\\
		\Psi_{AB}^{\pm}&\coloneqq\ket{\Psi^{\pm}}\bra{\Psi^{\pm}}_{AB},\quad \ket{\Psi^{\pm}}\coloneqq\frac{1}{\sqrt{2}}(\ket{H,V}\pm\ket{V,H}).
	\end{align}	
	Using \eqref{eq-log_neg_Bell_diag_spec}, we have that the source state $\rho_{AB}^S$ is entangled if and only if $f_S>\frac{1}{2}$.
	
	Using \eqref{eq-sats_source_state}, we obtain an explicit form for the (unnormalized) state $\widetilde{\sigma}_{AB}(1)$ in \eqref{eq-initial_state_tilde_1_sats}.
	
	\begin{proposition}\label{prop-noisy_transmission_output_Bell}
		Let $\eta_{\text{sg}}^{(1)},\eta_{\text{sg}}^{(2)},\overline{n}_1,\overline{n}_2\in[0,1]$, and consider the source state $\rho_{AB}^S$ given by \eqref{eq-sats_source_state}. Then, after successful heralding, the (unnormalized) state $\widetilde{\sigma}_{AB}(1)$ given by \eqref{eq-initial_state_tilde_1_sats} is equal to
		\begin{align}
			\widetilde{\sigma}_{AB}(1)&=\Pi_{AB}\left(\mathcal{L}_A^{\eta_{\text{sg}}^{(1)},\overline{n}_1}\otimes\mathcal{L}_B^{\eta_{\text{sg}}^{(2)},\overline{n}_2}\right)(\rho_{AB}^S)\Pi_{AB}\nonumber\\
			&=\frac{1}{2}\left(f_S(a+b)+\left(\frac{1-f_S}{3}\right)(a+2c-b)\right)\Phi_{AB}^+\nonumber\\
			&\quad +\frac{1}{2}\left(f_S(a-b)+\left(\frac{1-f_S}{3}\right)(a+2c+b)\right)\Phi_{AB}^-\nonumber\\
			&\quad +\frac{1}{2}\left(f_Sc+\left(\frac{1-f_S}{3}\right)(2a+c)\right)\Psi_{AB}^+\nonumber\\
			&\quad +\frac{1}{2}\left(f_Sc+\left(\frac{1-f_S}{3}\right)(2a+c)\right)\Psi_{AB}^-,\label{eq-rho_0_sats}
		\end{align}
		where
		\begin{equation}\label{eq-initial_state_params}
			a\coloneqq x_1x_2+y_1y_2,\quad b\coloneqq z_1z_2,\quad c\coloneqq x_1y_2+y_1x_2,
		\end{equation}
		and
		\begin{align}
			x_i&\coloneqq (1-\overline{n}_i)\eta_{\text{sg}}^{(i)}+\frac{\overline{n}_i}{2}\left(\left(1-2\eta_{\text{sg}}^{(i)}\right)^2+\left(\eta_{\text{sg}}^{(i)}\right)^2\right),\\
			y_i&\coloneqq \frac{\overline{n}_i}{2}\left(1-\eta_{\text{sg}}^{(i)}\right)^2,\\
			z_i&\coloneqq (1-\overline{n}_i)\eta_{\text{sg}}^{(i)}-\overline{n}_i\eta_{\text{sg}}^{(i)}\left(1-2\eta_{\text{sg}}^{(i)}\right),
		\end{align}
		for $i\in\{1,2\}$.
	\end{proposition}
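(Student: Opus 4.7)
The plan is to prove the proposition by linearity and local symmetry. Since $\rho_{AB}^S$ is a convex combination of the four Bell states and the transmission channel $\mathcal{L}^{\eta_{\text{sg}},\overline{n}}_A\otimes\mathcal{L}^{\eta_{\text{sg}}^{(2)},\overline{n}_2}_B$ is linear, it suffices to evaluate the heralded output $\Pi_{AB}(\mathcal{L}^{(1)}\otimes\mathcal{L}^{(2)})(\Phi^{\pm})\Pi_{AB}$ and $\Pi_{AB}(\mathcal{L}^{(1)}\otimes\mathcal{L}^{(2)})(\Psi^{\pm})\Pi_{AB}$ individually, and then assemble the mixture with weights $f_S$ and $(1-f_S)/3$.

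First I would work out the action of a single channel $\mathcal{L}^{\eta,\overline{n}}$ on the four ``qubit'' operators $\ket{H}\bra{H}, \ket{V}\bra{V}, \ket{H}\bra{V}, \ket{V}\bra{H}$ in the dual-rail encoding, then project back onto the single-photon subspace $\mathrm{span}\{\ket{H},\ket{V}\}$. Using the definition of $\widetilde{\Theta}^{\overline{n}}$ from \eqref{eq-approx_th_state}, the beamsplitter unitary $U^{\eta}$ mixes the signal with the thermal mode; to first order in $\overline{n}$ and within the projected single-photon subspace, the computation reduces to a finite sum over (i) the signal photon surviving with probability $\eta$, (ii) a signal photon being lost but a thermal photon arriving in the corresponding polarization mode, and (iii) vacuum-to-vacuum followed by a thermal excitation. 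Tracking these three contributions leads exactly to the parameters $x_i, y_i, z_i$: $x_i$ is the diagonal (population) coefficient, $y_i$ is the thermal cross-fill contribution, and $z_i$ governs the survival of off-diagonal coherences $\ket{H}\bra{V}$. The coherence parameter $z_i$ differs from $x_i$ because the thermal mode contributes incoherently to populations but not to the signal's off-diagonal elements, which is the origin of the factor $z_i = (1-\overline{n}_i)\eta_{\text{sg}}^{(i)} - \overline{n}_i\eta_{\text{sg}}^{(i)}(1-2\eta_{\text{sg}}^{(i)})$ (up to the neglected $O(\overline{n}^2)$ terms).

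Next I would tensor these single-mode results to obtain the action of $\mathcal{L}^{(1)}\otimes\mathcal{L}^{(2)}$ on the two-qubit operators appearing in $\Phi^{\pm}$ and $\Psi^{\pm}$. Specifically, writing
\begin{equation}
\Phi^{\pm} = \tfrac{1}{2}\bigl(\ket{HH}\bra{HH}+\ket{VV}\bra{VV}\pm\ket{HH}\bra{VV}\pm\ket{VV}\bra{HH}\bigr),
\end{equation}
\begin{equation}
\Psi^{\pm} = \tfrac{1}{2}\bigl(\ket{HV}\bra{HV}+\ket{VH}\bra{VH}\pm\ket{HV}\bra{VH}\pm\ket{VH}\bra{HV}\bigr),
\end{equation}
and applying the single-mode rules, the diagonal ``$HH,HH$'' and ``$VV,VV$'' terms of $\Phi^\pm$ produce a coefficient $x_1x_2+y_1y_2 = a$, the cross ``$HH,VV$'' coherences produce $\pm z_1z_2 = \pm b$, while the $\Psi^\pm$ terms generate the polarization-flipped coefficient $x_1y_2+y_1x_2 = c$ on their diagonals and $\pm b$ on their coherences (after tracking that the H/V-symmetric thermal background makes the two orthogonal polarization-flip pathways enter with equal weight). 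This is where the main bookkeeping obstacle lies: I will need to be careful that the coherence terms $\ket{HH}\bra{VV}$ and $\ket{HV}\bra{VH}$ transform with the \emph{same} coefficient $z_1z_2$, because the thermal contributions do not carry cross-polarization coherence, so only the signal-survival path contributes to them.

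Finally I would convert the resulting four-by-four matrix back into the Bell basis. Using the identities
\begin{equation}
\tfrac{1}{2}(\ket{HH}\bra{HH}+\ket{VV}\bra{VV}) = \tfrac{1}{2}(\Phi^++\Phi^-), \qquad \ket{HH}\bra{VV}+\ket{VV}\bra{HH} = \Phi^+-\Phi^-,
\end{equation}
and the analogous identities for $\Psi^\pm$, each of the four channel images $(\mathcal{L}^{(1)}\otimes\mathcal{L}^{(2)})(\Phi^\pm), (\mathcal{L}^{(1)}\otimes\mathcal{L}^{(2)})(\Psi^\pm)$ projected onto the two-photon subspace becomes a Bell-diagonal operator whose coefficients are explicit combinations of $a$, $b$, $c$. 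Assembling the mixture with weights $f_S$ on $\Phi^+$ and $(1-f_S)/3$ on each of $\Phi^-,\Psi^\pm$ then yields the four Bell coefficients stated in \eqref{eq-rho_0_sats}. The main obstacle, as indicated above, will be the careful first-order-in-$\overline{n}$ bookkeeping of the single-mode channel on populations versus coherences to establish the three parameters $x_i, y_i, z_i$; once those are in hand, the remainder is linear algebra that I expect to reduce cleanly because of the H/V symmetry of the unpolarized thermal background.
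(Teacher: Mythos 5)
Your proposal is correct and takes essentially the same route as the paper's proof: reduce by linearity to the four Bell states, compute the local beamsplitter-plus-thermal-background action on the dual-rail polarization operators (populations yielding $x_i,y_i$, coherences yielding $z_i$), tensor the two parties, project onto the two-photon coincidence subspace, and rewrite the result in the Bell basis before reassembling the mixture with weights $f_S$ and $(1-f_S)/3$. One small remark: since the channel $\mathcal{L}^{\eta,\overline{n}}$ is defined directly with the truncated thermal state $\widetilde{\Theta}^{\overline{n}}$ of \eqref{eq-approx_th_state}, the coefficients $x_i,y_i,z_i$ are exact for that channel, so no further first-order-in-$\overline{n}$ bookkeeping or neglected $O(\overline{n}^2)$ corrections arise beyond the approximation already built into the channel definition.
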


	\begin{proof}
		See Appendix~\ref{sec-noisy_transmission_output_Bell_pf}.
	\end{proof}

	From \eqref{eq-rho_0_sats}, we have that the transmission-heralding success probability is given by
	\begin{equation}\label{eq-trans_succ_prob_sats}
		p=\Tr[\widetilde{\sigma}_{AB}(1)]=a+c=(x_1+y_1)(x_2+y_2),
	\end{equation}
	so that the quantum state shared by Alice and Bob conditioned on successful heralding is, as per the definition in \eqref{eq-initial_link_states},
	\begin{equation}\label{eq-rho_0_norm_sats}
		\rho_{AB}^0=\frac{\widetilde{\sigma}_{AB}(1)}{p}.
	\end{equation}

\subsection{Basic figures of merit}

	Let us now evaluate the quality of entanglement transmission from a satellite to two ground stations. To do this, we make use of two of the figures of merit defined in Section~\ref{sec-practical_figures_merit}. In particular, we want to determine how these figures of merit vary as a function of ground station separation distance and satellite altitude. For illustrative purposes, and for simplicity, we focus primarily on the simple scenario depicted in the left-most panel of Figure~\ref{fig-atmosphere_geometry}, in which a satellite passes over the midpoint between two ground stations, and is also in the same plane as the ground stations. In this case, the satellite is an equal distance away from both ground stations, so that $\eta_{\text{sg}}^{(1)}=\eta_{\text{sg}}^{(2)}$. We also let $\overline{n}_1=\overline{n}_2$. This means that $x_1=x_2\equiv x$, $y_1=y_2\equiv y$ and $z_1=z_2\equiv z$, so that
	\begin{equation}\label{eq-sat_transmission_symmetric}
		a=x^2+y^2,\quad b=z^2,\quad c=2xy\quad (\eta_{\text{sg}}^{(1)}=\eta_{\text{sg}}^{(2)}=\eta_{\text{sg}}\text{ and } \overline{n}_1=\overline{n}_2=\overline{n}).
	\end{equation}
	In this scenario, given a distance $d$ between the ground stations and an altitude $h$ for the satellite, by simple geometry the distance $L$ between the satellite and either ground station is given by
	\begin{equation}\label{eq-link_distance_symmetric}
		L=\sqrt{4R_{\oplus}(R_{\oplus}+h)\sin^2\left(\frac{d}{4R_{\oplus}}\right)+h^2},
	\end{equation}
	where $R_{\oplus}$ is the radius of the earth.

	Now, one basic figure of merit from Section~\ref{sec-practical_figures_merit} is the expected initial link status $\mathbb{E}[X_{AB}(1)]$, which is equal simply to the transmission-heralding success probability $p$ in \eqref{eq-trans_succ_prob_sats}. Due to the altitude of the satellites, there typically has to be multiplexing of the signals (see Remark~\ref{rem-multiplexing}) in order to maintain a high probability of both ground stations receiving the entangled state. In Figure~\ref{fig-initial_fid_prob}, we plot the success probability with multiplexing, which is given by $1-(1-p)^M$, where $M$ is the number of distinct frequency modes used for multiplexing.
	
	\begin{figure}
		\centering
		\includegraphics[width=0.9\textwidth]{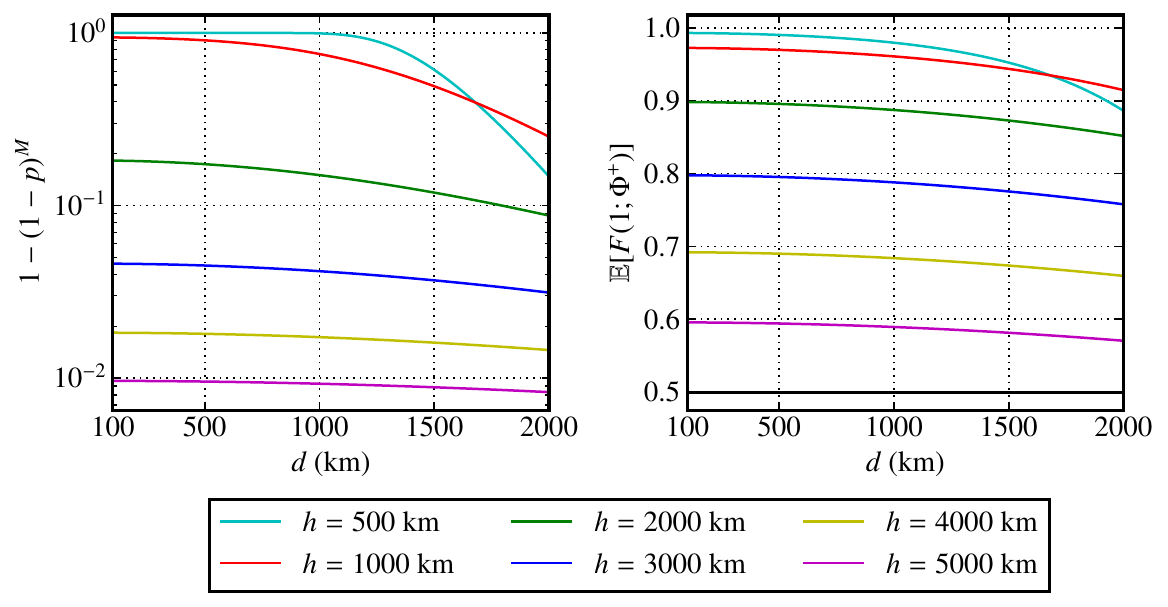}\\[1cm]
		\includegraphics[width=0.9\textwidth]{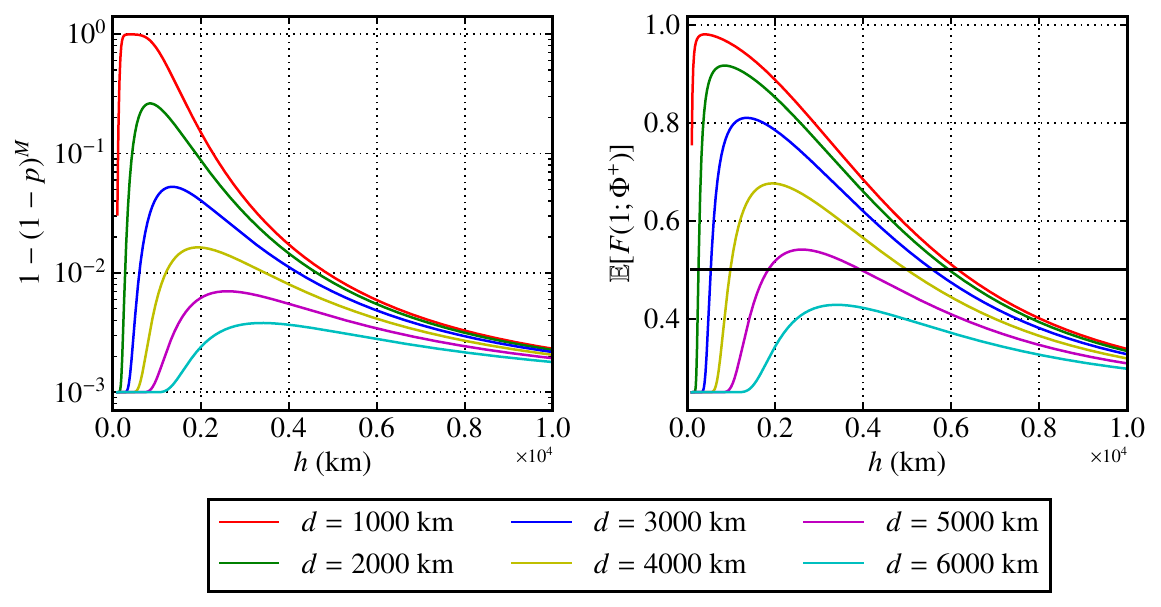}
		\caption{Plots of the transmission-heralding success probability as well as the initial fidelity of the quantum state $\rho_{AB}^0$ conditioned on successful heralding for the situation depicted in the left-most panel of Figure~\ref{fig-atmosphere_geometry}, in which $\eta_{\text{sg}}^{(1)}=\eta_{\text{sg}}^{(2)}=\eta_{\text{sg}}$ and $\overline{n}_1=\overline{n}_2=\overline{n}$. Indicated is the threshold fidelity of $\frac{1}{2}$ beyond which the state $\rho_{AB}^0$ is entangled (see Proposition~\ref{prop-initial_state_entanglement}). The success probability is shown in a multiplexing setting with $M=10^5$ (see Remark~\ref{rem-multiplexing}). Also, we have let $\overline{n}=10^{-4}$ and $f_S=1$.}\label{fig-initial_fid_prob}
	\end{figure}
	
	We also plot in Figure~\ref{fig-initial_fid_prob} the fidelity figure of merit $\mathbb{E}[F(1;\Phi^+)]$, which is given by
	\begin{align}
		\mathbb{E}[F(1;\Phi^+)]&=\bra{\Phi^+}\rho_{AB}^0\ket{\Phi^+}=\frac{1}{p}\mathbb{E}[\widetilde{F}(1;\Phi^+)],\label{eq-initial_fid_sats}\\
		\mathbb{E}[\widetilde{F}(1;\Phi^+)]&=\bra{\Phi^+}\widetilde{\sigma}_{AB}(1)\ket{\Phi^+}=\frac{1}{2}f_S(a+b)+\frac{1}{2}\left(\frac{1-f_S}{3}\right)(a+2c-b),\label{eq-initial_fid_tilde}
	\end{align}
	with $a,b,c$ given by \eqref{eq-initial_state_params} in general and by \eqref{eq-sat_transmission_symmetric} in the special case depicted in the left-most panel of Figure~\ref{fig-atmosphere_geometry}.
	
	The fidelity of $\rho_{AB}^0$ with respect to $\Phi_{AB}^+$ is related in a simple way to the entanglement of $\rho_{AB}^0$. In particular, as we now show, $\rho_{AB}^0$ is entangled if and only if its fidelity with respect to $\Phi_{AB}^+$ is strictly greater than $\frac{1}{2}$, and this leads to constraints on the loss and noise parameters of the satellite-to-ground transmission. 
	
	\begin{proposition}\label{prop-initial_state_entanglement}
		The quantum state $\rho_{AB}^0$ after successful satellite-to-ground transmission, as defined in \eqref{eq-rho_0_norm_sats}, is entangled if and only if the fidelity of the source state in \eqref{eq-sats_source_state} satisfies $f_S>\frac{1}{2}$, and
		\begin{equation}\label{eq-initial_state_ent_cond_pf}
			2(f_S-1)a+(4f_S-1)b-(1+2f_S)c>0,
		\end{equation}
		with $a,b,c$ given by \eqref{eq-initial_state_params} in general and by \eqref{eq-sat_transmission_symmetric} in the special case depicted in the left-most panel of Figure~\ref{fig-atmosphere_geometry}.
	\end{proposition}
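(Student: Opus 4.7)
The plan is to exploit the Bell-diagonal structure of $\rho_{AB}^0$ established in Proposition~\ref{prop-noisy_transmission_output_Bell}. From that proposition, after dividing $\widetilde{\sigma}_{AB}(1)$ by the normalization $p = a+c$, one sees immediately that $\rho_{AB}^0$ is a Bell-diagonal state with coefficients $p_1, p_2, p_3, p_4$ (with respect to $\Phi^+, \Phi^-, \Psi^+, \Psi^-$) satisfying $p_3 = p_4$. This is precisely the structure discussed in equations \eqref{eq-Bell_diag_state}--\eqref{eq-log_neg_Bell_diag_spec}: setting $\alpha=(p_1+p_2)/2$, $\beta=(p_1-p_2)/2$, $\gamma=p_3$, we get $p_1=\alpha+\beta$, $p_2=\alpha-\beta$, $p_3=p_4=\gamma$, so by the discussion following \eqref{eq-log_neg_Bell_diag_spec}, $\rho_{AB}^0$ is entangled if and only if $F(\rho_{AB}^0,\Phi_{AB}^+)>\tfrac{1}{2}$, equivalently $p_1>\tfrac{1}{2}$.

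Next I would reduce $p_1>\tfrac{1}{2}$ to the explicit inequality \eqref{eq-initial_state_ent_cond_pf}. Reading off $p_1$ from \eqref{eq-rho_0_sats} and using $p=a+c$ from \eqref{eq-trans_succ_prob_sats}, the condition
\begin{equation*}
\frac{1}{2(a+c)}\left(f_S(a+b)+\frac{1-f_S}{3}(a+2c-b)\right) > \frac{1}{2}
\end{equation*}
multiplied through by $6(a+c)$ and simplified by collecting coefficients of $a$, $b$, $c$ yields directly $2(f_S-1)a+(4f_S-1)b-(1+2f_S)c>0$, which is \eqref{eq-initial_state_ent_cond_pf}. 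This calculation is completely routine.

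For the necessity of $f_S>\tfrac{1}{2}$, I would give a short ``no entanglement from selective local operations'' argument: the heralded transmission map $\rho\mapsto\Pi_{AB}(\mathcal{L}_A^{\eta_{\text{sg}}^{(1)},\overline{n}_1}\otimes\mathcal{L}_B^{\eta_{\text{sg}}^{(2)},\overline{n}_2})(\rho)\Pi_{AB}/p$ is constructed by tensoring the input with the product thermal ancilla $\widetilde{\Theta}^{\overline{n}_1}\otimes\widetilde{\Theta}^{\overline{n}_2}$, applying the local beamsplitter unitaries $U_{A_1E_1}^{\eta_{\text{sg}}^{(1)}}\otimes U_{A_2E_2}^{\eta_{\text{sg}}^{(2)}}$, tracing out the two local environments, and finally performing the local two-outcome projective measurement $\{\Pi_{AB},\mathbbm{1}-\Pi_{AB}\}$. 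Each of these steps is local in the $A|B$ cut, so the (renormalized) successful branch defines an LOCC instrument element that cannot map a separable input to an entangled output. By the Bell-diagonal criterion recalled in Section~\ref{sec-states_channels} (specifically \eqref{eq-log_neg_Bell_diag_spec} applied to the source \eqref{eq-sats_source_state}), $\rho_{AB}^S$ is separable precisely when $f_S\leq\tfrac{1}{2}$; contrapositively, entanglement of $\rho_{AB}^0$ forces $f_S>\tfrac{1}{2}$. Combined with the equivalence of the previous paragraph, this establishes the ``if and only if''.

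The main obstacle is not conceptual but bookkeeping: one must carefully identify the Bell-diagonal coefficients from \eqref{eq-rho_0_sats}, verify that they do fit the $p_3=p_4$ template of \eqref{eq-Bell_diag_spec}, and perform the simplification of $p_1>\tfrac{1}{2}$ without sign errors, since $a,b,c$ are themselves quadratic combinations of the small loss and noise parameters. Separately, one must be careful to justify that the heralded satellite-to-ground branch really is an LOCC element even though it is trace-decreasing, which is precisely why the normalization by $p$ is built into the definition of $\rho_{AB}^0$ in \eqref{eq-rho_0_norm_sats}.
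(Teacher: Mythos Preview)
Your argument is correct and follows the paper's route: identify the Bell-diagonal form with equal $\Psi^\pm$ weights, invoke the criterion after \eqref{eq-log_neg_Bell_diag_spec} to reduce entanglement to $p_1>\tfrac12$, and simplify to \eqref{eq-initial_state_ent_cond_pf}. Where you go slightly beyond the paper is on the necessity of $f_S>\tfrac12$: the paper only remarks parenthetically that $\alpha,\beta,\gamma\geq0$ when $f_S>\tfrac12$ (so that the template \eqref{eq-Bell_diag_spec} applies with the intended sign of $\beta$), whereas your local-filtering argument---the heralded branch factors as a product of local CP maps across the $A|B$ cut, since $\Pi_{AB}=\Pi_A\otimes\Pi_B$---gives an independent and cleaner reason why a separable source cannot produce an entangled $\rho_{AB}^0$.
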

	
	\begin{proof}
		Observe that the state $\rho_{AB}^0$ is a Bell-diagonal state of the form
		\begin{equation}\label{eq-initial_state_entanglement_pf1}
			\rho_{AB}^0=(\alpha+\beta)\Phi_{AB}^++(\alpha-\beta)\Phi_{AB}^-+\gamma\Psi_{AB}^++\gamma\Psi_{AB}^-,
		\end{equation}
		where $\alpha,\beta,\gamma\geq 0$ (when $f_S>\frac{1}{2}$). Indeed, the coefficient of $\Phi_{AB}^+$ in \eqref{eq-rho_0_sats} can be written as
		\begin{equation}
			\frac{1}{2}f_Sa+\frac{1}{2}\left(\frac{1-f_S}{3}\right)(a+2c)+\frac{1}{2}f_Sb-\frac{1}{2}\left(\frac{1-f_S}{3}\right)b,
		\end{equation}
		and the coefficient of $\Phi_{AB}^-$ in \eqref{eq-rho_0_sats} can be written as
		\begin{equation}
			\frac{1}{2}f_Sa+\frac{1}{2}\left(\frac{1-f_S}{3}\right)(a+2c)-\left(\frac{1}{2}f_Sb-\frac{1}{2}\left(\frac{1-f_S}{3}\right)b\right).
		\end{equation}
		We can thus make the following identifications:
		\begin{align}
			\alpha&\equiv\frac{1}{a+c}\left(\frac{1}{2}f_Sa+\frac{1}{2}\left(\frac{1-f_S}{3}\right)(a+2c)\right),\label{eq-rho0_alpha}\\
			\beta&\equiv \frac{1}{a+c}\left(\frac{1}{2}f_Sb-\frac{1}{2}\left(\frac{1-f_S}{3}\right)b\right),\label{eq-rho0_beta}\\
			\gamma&\equiv \frac{1}{2}f_Sc+\frac{1}{2}\left(\frac{1-f_S}{3}\right)(2a+c).\label{eq-rho0_gamma}
		\end{align}
		Now, using \eqref{eq-log_neg_Bell_diag_spec}, we have that $\rho_{AB}^0$ is entangled if and only if $\bra{\Phi^+}\rho_{AB}^0\ket{\Phi^+}>\frac{1}{2}$. Then, from \eqref{eq-initial_fid_sats}, we have that
		\begin{equation}
			\bra{\Phi^+}\rho_0\ket{\Phi^+}=\frac{1}{2}f_S\frac{a+b}{a+c}+\frac{1}{2}\left(\frac{1-f_S}{3}\right)\frac{a+2c-b}{a+c},
		\end{equation}
		so we require
		\begin{equation}
			\frac{1}{2}f_S\frac{a+b}{a+c}+\frac{1}{2}\left(\frac{1-f_S}{3}\right)\frac{a+2c-b}{a+c}>\frac{1}{2}.
		\end{equation}
		Simplifying this leads to
		\begin{equation}
			2(f_S-1)a+(4f_S-1)b-(1+2f_S)c>0,
		\end{equation}
		as required.
	\end{proof}
	
	\begin{figure}
		\centering
		\includegraphics[width=0.48\textwidth]{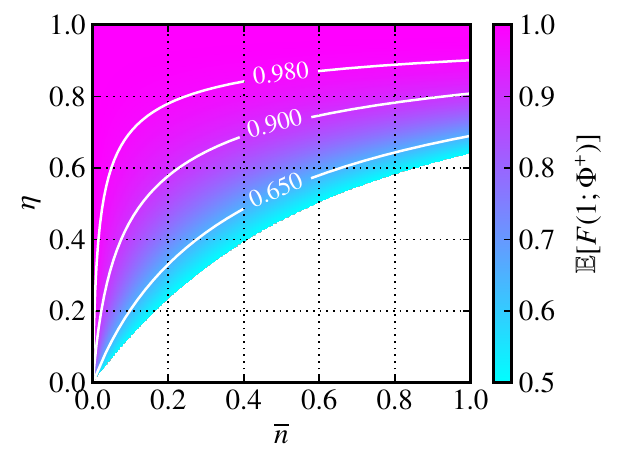}\quad
		\includegraphics[width=0.48\textwidth]{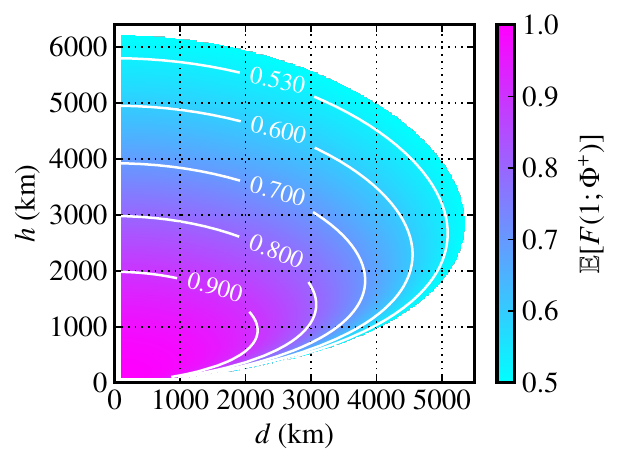}
		\caption{Plots of the entanglement region for the state $\rho_{AB}^0$ obtained after successful satellite-to-ground transmission for the scenario depicted in the left-most panel of Figure~\ref{fig-atmosphere_geometry}. For both plots, we assume $f_S=1$. For the right-hand plot, we take $\overline{n}_1=\overline{n}_2=10^{-4}$.}\label{fig-ent_sg}
	\end{figure}

	Now, for the scenario depicted in the left-most panel of Figure~\ref{fig-atmosphere_geometry}, we have that $x_1=x_2=x$, $y_1=y_2=y$, and $z_1=z_2=z$, so that from \eqref{eq-sat_transmission_symmetric} we have $a=x^2+y^2$, $b=z^2$, and $c=2xy$. Substituting this into \eqref{eq-initial_state_ent_cond_pf} leads to $2(f_S-1)(x^2+y^2)+(4f_S-1)z^2-2(1+2f_S)xy>0$ as the condition for $\rho_{AB}^0$ to be entangled. We plot this condition in Figure~\ref{fig-ent_sg}. The inequality gives us the colored regions, and the values within the regions are obtained by evaluating the fidelity according to \eqref{eq-initial_fid_sats}.

\subsection{Key rates for QKD}
	
	Let us also consider key rates for quantum key distribution (QKD) between Alice and Bob, who are at the ends of the elementary link whose quantum state is $\rho_{AB}^0$ (conditioned on successful transmission and heralding), as given by \eqref{eq-rho_0_norm_sats}. (See Section~\ref{sec-QKD} for a brief overview of QKD.) Recalling from the proof of Proposition~\ref{prop-initial_state_entanglement} that $\rho_{AB}^0$ is a quantum state of the form
	\begin{equation}
		\rho_{AB}^0=(\alpha+\beta)\Phi_{AB}^++(\alpha-\beta)\Phi_{AB}^-+\gamma\Psi_{AB}^++\gamma\Psi_{AB}^-,
	\end{equation}
	with $\alpha,\beta,\gamma$ defined in \eqref{eq-rho0_alpha}--\eqref{eq-rho0_gamma}, it is easy to show using \eqref{eq-rhoAB_Qx}--\eqref{eq-rhoAB_Qz} that
	\begin{align}
		Q_{\text{BB84}}^{(d,h)}&=\frac{1}{2}(Q_x+Q_z)=\frac{3}{4}-\frac{1}{2}\beta-\alpha,\label{eq-QBER_BB84_sats}\\
		Q_{\text{6-state}}^{(d,h)}&=\frac{1}{3}(Q_x+Q_y+Q_z)=\frac{2}{3}(1-(\alpha+\beta)).\label{eq-QBER_6state_sats}
	\end{align}
	For the device-independent protocol, we assume that the correlation is such that $Q_{\text{DI}}^{(d,h)}=Q_{\text{6-state}}^{(d,h)}$ and $S^{(d,h)}=2\sqrt{2}(1-2Q_{\text{DI}}^{(d,h)})$; see Section~\ref{sec-QKD} and the caption of Figure~\ref{fig-key_rates} for more information. Then, assuming that $M$ signals per second are transmitted from the satellite, using \eqref{eq-network_QKD_key_rate} we have
	\begin{align}
		\widetilde{K}_{\text{BB84}}(d,h)&=M(a+c)K_{\text{BB84}}(Q_{\text{BB84}}^{(d,h)})=M(a+c)(1-2h_2(Q_{\text{BB84}}^{(d,h)})),\label{eq-key_rate_BB84_sats}\\[0.3cm]
		\widetilde{K}_{\text{6-state}}(d,h)&=M(a+c)K_{\text{6-state}}(Q_{\text{6-state}}^{(d,h)})\\
		&=M(a+c)\left(1+\left(1-\frac{3Q_{\text{6-state}}^{(d,h)}}{2}\right)\log_2\left(1-\frac{3Q_{\text{6-state}}^{(d,h)}}{2}\right)+\frac{3Q_{\text{6-state}}^{(d,h)}}{2}\log_2\left(\frac{Q_{\text{6-state}}^{(d,h)}}{2}\right)\right),\label{eq-key_rate_6state_sats}\\[0.2cm]
		\widetilde{K}_{\text{DI}}(d,h)&=M(a+c)K_{\text{DI}}(Q_{\text{DI}}^{(d,h)},S^{(d,h)})\\
		&=1-h_2(Q_{\text{DI}}^{(d,h)})-h_2\left(\frac{1+\sqrt{(S^{(d,h)}/2)^2-1}}{2}\right),\label{eq-key_rate_DI_sats}
	\end{align}
	where we recall the definitions of the functions $K_{\text{BB84}}$, $K_{\text{6-state}}$, and $K_{\text{DI}}$ in \eqref{eq-key_rate_BB84}, \eqref{eq-key_rate_6state}, and \eqref{eq-key_rate_DIQKD}, respectively. Also, recall that $p=a+c$ is the transmission-heralding success probability, so that $M(a+c)$ is the average number of signals received successfully per second. We plot these secret key rates in Figure~\ref{fig-key_rates_sats}.
	
	\begin{figure}
		\centering
		\includegraphics[width=0.48\textwidth]{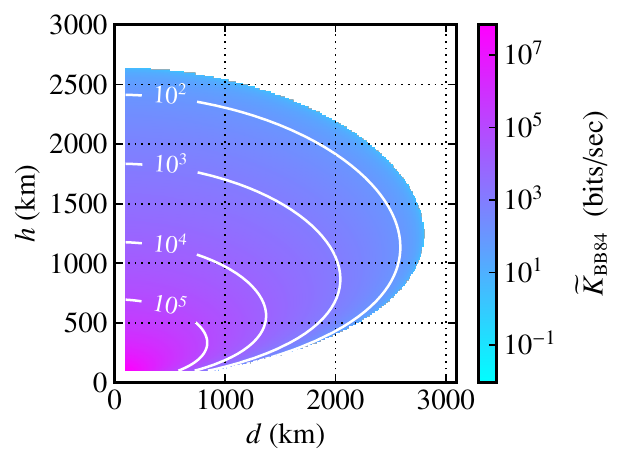}\quad
		\includegraphics[width=0.48\textwidth]{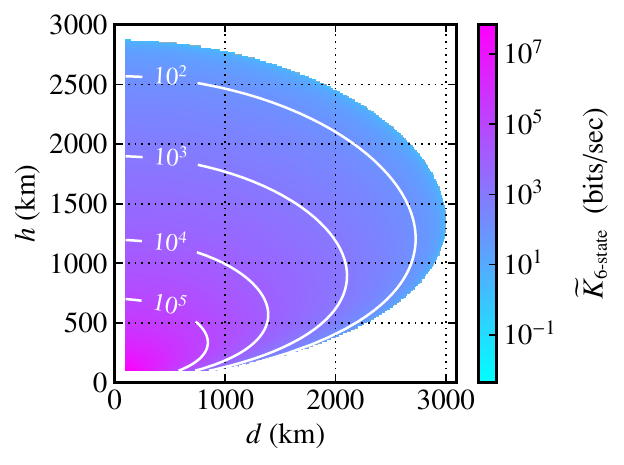}\\
		\includegraphics[width=0.48\textwidth]{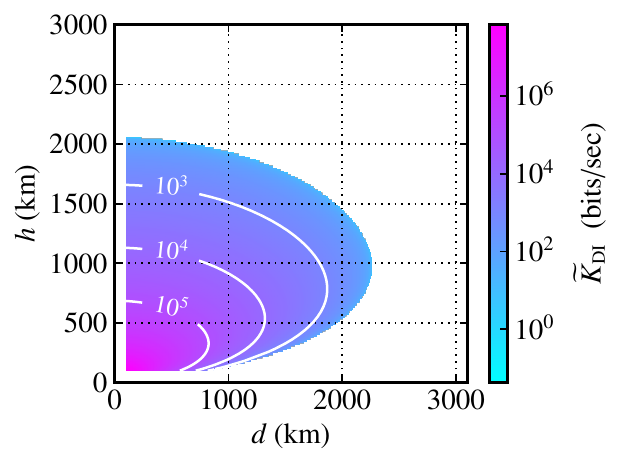}
		\caption{Asymptotic secret key rates for the BB84, six-state, and device-independent (DI) quantum key distribution protocols (see Section~\ref{sec-QKD}) for the scenario depicted in Figure~\ref{fig-atmosphere_geometry}. When calculating the error rates in \eqref{eq-QBER_BB84_sats} and \eqref{eq-QBER_6state_sats}, we take $f_S=1$. To calculate the key rates in \eqref{eq-key_rate_BB84_sats}, \eqref{eq-key_rate_6state_sats}, and \eqref{eq-key_rate_DI_sats}, we have taken $M=10^9$. See the caption of Figure~\ref{fig-key_rates} for more information on how the secret key rate for the DI protocol is calculated.}\label{fig-key_rates_sats}
	\end{figure}
	
	In Figure~\ref{fig-key_rates_sats}, notice that the region of non-zero secret key rate is largest for the six-state protocol, with the region for the BB84 protocol being smaller and the region for the DI protocol being even smaller. This is due to the fact that the error threshold for the DI protocol is the smallest among the three protocols, with the error threshold for the BB84 protocol slightly larger, and the error threshold for the 6-state protocol the largest; see Figure~\ref{fig-key_rates}.

\section{Quantum memory model}\label{sec-quantum_memory_sats}

	Having examined the quantum state immediately after successful transmission and heralding, let us now consider a particular model of decoherence for the quantum memories in which the transmitted qubits are stored. For illustrative purposes, we consider a simple amplitude damping decoherence model for the quantum memories. The amplitude damping channel $\mathcal{A}_{\gamma}$ is a qubit channel, with $\gamma\in[0,1]$, such that \cite{NC00_book}
	\begin{equation}
		\rho=\begin{pmatrix} 1-x & \alpha \\ \conj{\alpha} & x\end{pmatrix} \longmapsto \mathcal{A}_{\gamma}(\rho)=\begin{pmatrix} 1-x(1-\gamma) & \alpha\sqrt{1-\gamma} \\ \conj{\alpha}\sqrt{1-\gamma} & x(1-\gamma) \end{pmatrix}.
	\end{equation}
	Note that for $\gamma=0$ we recover the noiseless (identity) channel. We can relate $\gamma$ to the coherence time of the quantum memory, which we denote by $t_{\text{coh}}$, as follows \cite[Section~3.4.3]{Preskill20}:
	\begin{equation}\label{eq-AD_gamma_coh}
		\gamma\coloneqq 1-\e^{-\frac{1}{t_{\text{coh}}}}.
	\end{equation}
	Note that infinite coherence time corresponds to an ideal quantum memory, meaning that the quantum channel is noiseless. Indeed, by relating the noise parameter $\gamma$ to the coherence time as in \eqref{eq-AD_gamma_coh}, we have that $t_{\text{coh}}=\infty\Rightarrow\gamma=0$.
	
	For $m\geq 1$ applications of the amplitude damping channel, it is straightforward to show that
	\begin{equation}
		\mathcal{A}_{\gamma}^{\circ m}(\rho)=\begin{pmatrix} 1-(1-\gamma)^m x & \alpha(1-\gamma)^{\frac{m}{2}} \\ \conj{\alpha}(1-\gamma)^{\frac{m}{2}} & (1-\gamma)^m x \end{pmatrix} = \begin{pmatrix} 1-\e^{-\frac{m}{t_{\text{coh}}}}x & \alpha\e^{-\frac{m}{2t_{\text{coh}}}} \\ \conj{\alpha}\e^{-\frac{m}{2t_{\text{coh}}}} & \e^{-\frac{m}{t_{\text{coh}}}} x \end{pmatrix}=\begin{pmatrix} 1-\lambda_m x & \alpha\lambda_m \\ \conj{\alpha}\lambda_m & \lambda_m x\end{pmatrix},
	\end{equation}
	where in the last equality we let
	\begin{equation}
		\lambda_m\coloneqq\e^{-\frac{m}{t_{\text{coh}}}}.
	\end{equation}
	In particular, we have
	\begin{align}
		\mathcal{A}_{\gamma}^{\circ m}(\ket{0}\bra{0})=\ket{0}\bra{0},\quad \mathcal{A}_{\gamma}^{\circ m}(\ket{0}\bra{1})&=\sqrt{\lambda_m}\ket{0}\bra{1},\quad \mathcal{A}_{\gamma}^{\circ m}(\ket{1}\bra{0})=\sqrt{\lambda_m}\ket{1}\bra{0},\\
		\mathcal{A}_{\gamma}^{\circ m}(\ket{1}\bra{1})&=(1-\lambda_m)\ket{0}\bra{0}+\lambda_m\ket{1}\bra{1},
	\end{align}
	Using this, it is straightforward to show that, for all $m\geq 1$,
	\begin{align}
		\rho_{AB}(m)&\coloneqq(\mathcal{A}_{\gamma}^{\circ m}\otimes\mathcal{A}_{\gamma}^{\circ m})(\rho_{AB}^0)\\
		&=\left(\alpha\lambda_m^2+\left(\beta-\frac{1}{2}\right)\lambda_m+\frac{1}{2}\right)\Phi_{AB}^++\left(\alpha\lambda_m^2+\left(-\beta-\frac{1}{2}\right)\lambda_m+\frac{1}{2}\right)\Phi_{AB}^-\\
		&\quad +\lambda_m\left(\frac{1}{2}-\alpha\lambda_m\right)\Psi_{AB}^++\lambda_m\left(\frac{1}{2}-\alpha\lambda_m\right)\Psi_{AB}^-\\
		&\quad +\frac{1}{2}(1-\lambda_m)\left(\ket{\Phi^+}\bra{\Phi^-}_{AB}+\ket{\Phi^-}\bra{\Phi^+}_{AB}\right),\label{eq-elem_link_m_sats}
	\end{align}
	where $\alpha$ and $\beta$ are given by \eqref{eq-rho0_alpha} and \eqref{eq-rho0_beta}, respectively. Note that we have assumed that the memories corresponding to systems $A$ and $B$ have the same coherence time. It follows that
	\begin{equation}\label{eq-fid_decay_sats}
		f_m(\rho_{AB}^0;\Phi^+)\coloneqq\bra{\Phi^+}(\mathcal{A}_{\gamma}^{\circ m}\otimes\mathcal{A}_{\gamma}^{\circ m})(\rho_{AB}^0)\ket{\Phi^+}=\alpha\lambda_m^2+\left(\beta-\frac{1}{2}\right)\lambda_m+\frac{1}{2}.
	\end{equation}
	Note that $f_m(\rho_{AB}^0;\Phi^+)\leq f_0(\rho_{AB}^0;\Phi^+)$ for all $m\in\mathbb{N}_0$.

\section{Policies}
	
	We now consider policies for elementary links. Specifically, we consider the quantum decision process for elementary links given in Definition~\ref{def-network_QDP_elem_link}. Recall that the main elements needed for evaluating policies are the quantum state $\rho_{AB}^0$ immediately after successful transmission and heralding, which we determined in \eqref{eq-rho_0_norm_sats}, and the fidelity ``decay'' function $f_m(\rho_{AB}^0;\Phi^+)$, which we determined in \eqref{eq-fid_decay_sats} under the amplitude damping decoherence model. With these elements in place, let us start by evaluating the memory-cutoff policy.

\subsection{Memory-cutoff policy}

	The memory-cutoff policy, to which we devoted the entirety of Chapter~\ref{chap-mem_cutoff}, is defined by the following decision function:
	\begin{equation}
		d_t^{t^{\star}}(h^t)=\left\{\begin{array}{l l}0 & \text{if }M^{t^{\star}}(t)(h^t)<t^{\star}, \\ 1 & \text{if }M^{t^{\star}}(t)(h^t)=t^{\star}, \end{array}\right.
	\end{equation}
	for all $t\geq 1$ and all histories $h^t\in\{0,1\}^{2t-1}$, where $t^{\star}\in\mathbb{N}_0$ is the cutoff and $M^{t^{\star}}(t)$ is the memory time random variable, whose definition is in \eqref{eq-mem_time_cutoff_policy}. Intuitively, in the memory-cutoff policy, the entangled state, once successfully stored in quantum memories at the nodes, is kept there for $t^{\star}$ time steps, at which point it is discarded and a new one is requested from the source.
	
	From \eqref{eq-avg_q_state_link_active}, the expected quantum state $\sigma_{AB}^{t^{\star}}(t|X(t)=1)$ at time $t\geq 1$ of an elementary link obtained via satellite-to-ground distribution, conditioned on the elementary link being active at time $t$, is given by
	\begin{equation}
		\sigma_{AB}^{t^{\star}}(t|X(t)=1)=\left\{\begin{array}{l l} \displaystyle \sum_{m=0}^{t-1}\frac{p(1-p)^{t-(m+1)}}{1-(1-p)^t}\rho_{AB}(m), & t\leq t^{\star}+1,\\[0.5cm] \displaystyle \sum_{m=0}^{t^{\star}}\frac{\Pr[M^{t^{\star}}(t)=m,X(t)=1]_{t^{\star}}}{\Pr[X(t)=1]_{t^{\star}}}\rho_{AB}(m), & t>t^{\star}+1, \end{array}\right.
	\end{equation}
	where the expression for $\rho_{AB}(m)$ is given by \eqref{eq-elem_link_m_sats}. 
	
	For the elementary link fidelity, from \eqref{eq-avg_fid_tilde} and \eqref{eq-avg_fid}, we have that
	\begin{align}
		\mathbb{E}[\widetilde{F}^{t^{\star}}(t;\Phi^+)]&=\left\{\begin{array}{l l} \displaystyle \sum_{m=0}^{t-1}f_m(\rho_{AB}^0;\Phi^+)p(1-p)^{t-(m+1)} & t\leq t^{\star}+1,\\[0.5cm] \displaystyle \sum_{m=0}^{t^{\star}} f_m(\rho_{AB}^0;\Phi^+) \Pr[M^{t^{\star}}(t)=m,X(t)=1]_{t^{\star}} & t>t^{\star}+1, \end{array}\right.\label{eq-avg_fid_tilde_sats}\\[0.5cm]
		\mathbb{E}[F^{t^{\star}}(t;\Phi^+)]&=\left\{\begin{array}{l l}\displaystyle \sum_{m=0}^{t-1}f_m(\rho^0;\psi)\frac{p(1-p)^{t-(m+1)}}{1-(1-p)^t} & t\leq t^{\star}+1,\\[0.5cm] \displaystyle \sum_{m=0}^{t^{\star}} f_m(\rho_{AB}^0;\Phi^+)\frac{\Pr[M^{t^{\star}}(t)=m,X(t)=1]_{t^{\star}}}{\Pr[X(t)=1]_{t^{\star}}} & t>t^{\star}+1, \end{array}\right.
	\end{align}
	for all $t\geq 1$, where the expression for $f_m(\rho_{AB}^0;\Phi^+)$ is given in \eqref{eq-fid_decay_sats}. In the limit $t\to\infty$ and for $t^{\star}\in\mathbb{N}_0$, from \eqref{eq-avg_fid_tilde_tInfty} and \eqref{eq-avg_fid_tInfty}, we have 
	\begin{align}
		\lim_{t\to\infty}\mathbb{E}[\widetilde{F}^{t^{\star}}(t;\Phi^+)]&=\frac{p}{1+t^{\star}p}\sum_{m=0}^{t^{\star}}\left(\alpha\lambda_m^2+\left(\beta-\frac{1}{2}\right)\lambda_m+\frac{1}{2}\right),\\
		\lim_{t\to\infty}\mathbb{E}[F^{t^{\star}}(t;\Phi^+)]&=\frac{1}{t^{\star}+1}\sum_{m=0}^{t^{\star}}\left(\alpha\lambda_m^2+\left(\beta-\frac{1}{2}\right)\lambda_m+\frac{1}{2}\right)
	\end{align}
	Then, using the fact that $\lambda_m=\e^{-\frac{m}{t_{\text{coh}}}}$, it is straightforward to show that
	\begin{equation}
		\sum_{m=0}^{t^{\star}}\lambda_m=\e^{-\frac{t^{\star}}{2t_{\text{coh}}}}\frac{\sinh\left(\frac{1+t^{\star}}{2t_{\text{coh}}}\right)}{\sinh\left(\frac{1}{2t_{\text{coh}}}\right)},\quad \sum_{m=0}^{t^{\star}}\lambda_m^2=\e^{-\frac{t^{\star}}{t_{\text{coh}}}}\frac{\sinh\left(\frac{1+t^{\star}}{t_{\text{coh}}}\right)}{\sinh\left(\frac{1}{t_{\text{coh}}}\right)}.
	\end{equation}
	Therefore,
	\begin{align}
		\lim_{t\to\infty}\mathbb{E}[\widetilde{F}^{t^{\star}}(t;\Phi^+)]&=\frac{\alpha p \e^{-\frac{t^{\star}}{t_{\text{coh}}}}}{1+t^{\star}p}\frac{\sinh\left(\frac{1+t^{\star}}{t_{\text{coh}}}\right)}{\sinh\left(\frac{1}{t_{\text{coh}}}\right)}+\frac{p \e^{-\frac{t^{\star}}{2t_{\text{coh}}}}}{1+t^{\star}p}\left(\beta-\frac{1}{2}\right)\frac{\sinh\left(\frac{1+t^{\star}}{2t_{\text{coh}}}\right)}{\sinh\left(\frac{1}{2t_{\text{coh}}}\right)}+\frac{1}{2}\frac{(t^{\star}+1)p}{1+t^{\star}p},\\
		\lim_{t\to\infty}\mathbb{E}[F^{t^{\star}}(t;\Phi^+)]&=\frac{\alpha\e^{-\frac{t^{\star}}{t_{\text{coh}}}}}{t^{\star}+1}\frac{\sinh\left(\frac{1+t^{\star}}{t_{\text{coh}}}\right)}{\sinh\left(\frac{1}{t_{\text{coh}}}\right)}+\frac{\e^{-\frac{t^{\star}}{2t_{\text{coh}}}}}{t^{\star}+1}\left(\beta-\frac{1}{2}\right)\frac{\sinh\left(\frac{1+t^{\star}}{2t_{\text{coh}}}\right)}{\sinh\left(\frac{1}{2t_{\text{coh}}}\right)}+\frac{1}{2}.
	\end{align}

	For $t^{\star}=\infty$, from \eqref{eq-avg_fid_tilde_sats}, we obtain
	\begin{equation}
		\mathbb{E}[\widetilde{F}^{\infty}(t;\Phi^+)]=\sum_{m=0}^{t-1}\left(\alpha\lambda_m^2+\left(\beta-\frac{1}{2}\right)\lambda_m+\frac{1}{2}\right)p(1-p)^{t-1-m}
	\end{equation}
	for all $t\geq 1$. Evaluating the sums leads to
	\begin{multline}\label{eq-fid_tilde_tstarInf}
		\mathbb{E}[\widetilde{F}^{\infty}(t;\Phi^+)]=\frac{\alpha p\e^{\frac{2}{t_{\text{coh}}}}\left(\e^{-\frac{2t}{t_{\text{coh}}}}-(1-p)^t\right)}{1-\e^{\frac{2}{t_{\text{coh}}}}(1-p)}+\left(\beta-\frac{1}{2}\right)\frac{p\e^{\frac{1}{t_{\text{coh}}}}\left(\e^{-\frac{t}{t_{\text{coh}}}}-(1-p)^t\right)}{1-\e^{\frac{1}{t_{\text{coh}}}}(1-p)}\\+\frac{1}{2}\left(1-(1-p)^t\right).
	\end{multline}
	Then, for all $p\in(0,1]$, we obtain
	\begin{equation}
		\lim_{t\to\infty}\mathbb{E}[\widetilde{F}^{\infty}(t;\Phi^+)]=\frac{1}{2}.
	\end{equation}
	
	Let us now focus primarily on the $t^{\star}=\infty$ memory-cutoff policy by considering an example. Consider the situation depicted in the left-most panel of Figure~\ref{fig-atmosphere_geometry}, in which we have two ground stations separated by a distance $d$ and a satellite at altitude $h$ that passes over the midpoint between the two ground stations and is also in the same plane as the ground stations. Now, given that the ground stations are separated by a distance $d$, it takes time at least $\frac{2d}{c}$ to perform the heralding procedure, as this is the round-trip communication time between the ground stations ($c$ is the speed of light). We thus take the duration of each time step in the quantum decision process for the elementary link to be $\frac{2d}{c}$. If the coherence time of the quantum memories is $x$ seconds, then $t_{\text{coh}}=\frac{xc}{2d}$ time steps. In Figure~\ref{fig-tstarInf_sats}, we plot the quantities $\mathbb{E}[\widetilde{F}^{\infty}(t;\Phi^+)]$ (solid lines), $\mathbb{E}[F^{\infty}(t;\Phi^+)]$ (dashed lines), and $\mathbb{E}[X(t)]_{\infty}$ (dotted lines) for the $t^{\star}=\infty$ memory-cutoff policy under this scenario.
	
	\begin{figure}
		\centering
		\includegraphics[width=0.9\textwidth]{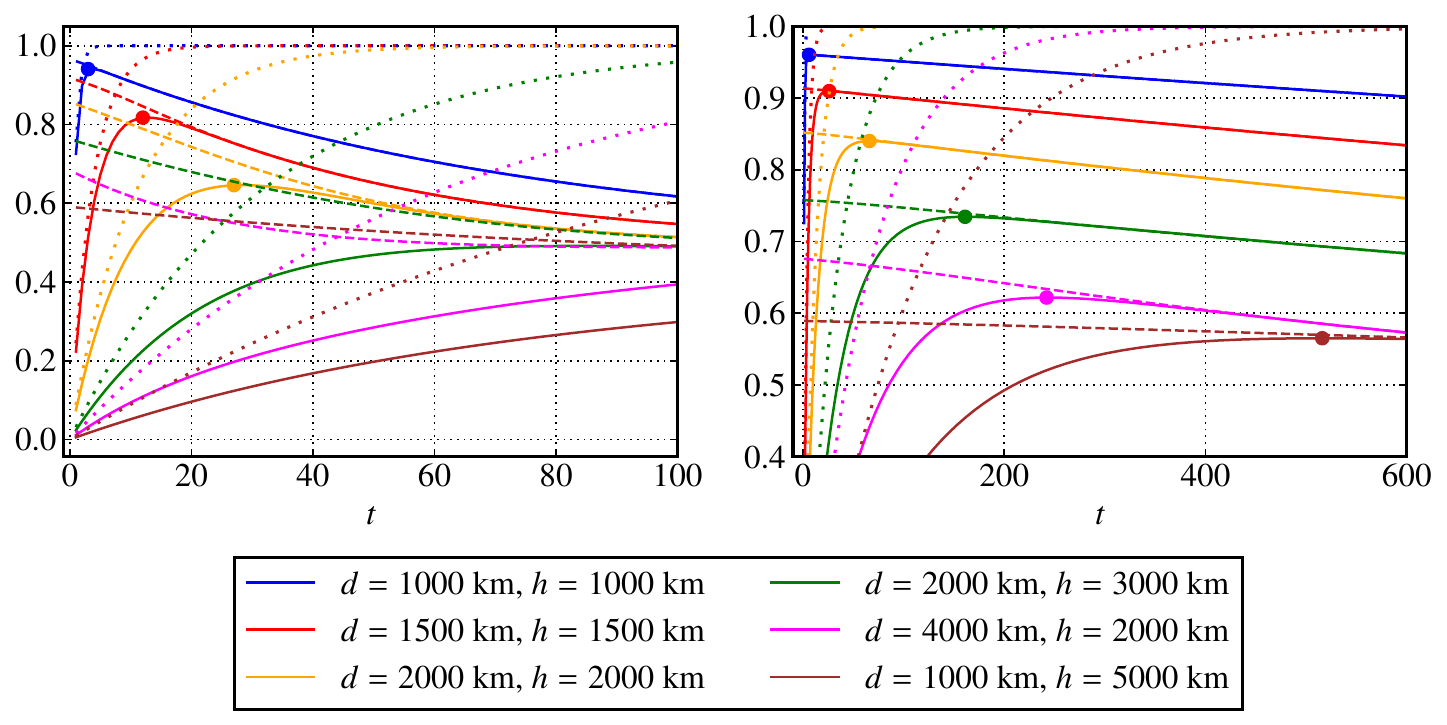}
		\caption{The $t^{\star}=\infty$ memory-cutoff policy for satellite-to-ground elementary link generation for various ground distances $d$ and satellite altitudes $h$, according to the situation depicted in the left-most panel of Figure~\ref{fig-atmosphere_geometry}. The solid lines are $\mathbb{E}[\widetilde{F}^{\infty}(t;\Phi^+)]$ (as given by \eqref{eq-fid_tilde_tstarInf}), the dashed lines are $\mathbb{E}[F^{\infty}(t;\Phi^+)]$, and the dotted lines are $\mathbb{E}[X(t)]=\Pr[X(t)=1]=1-(1-p)^t$ (see \eqref{cor-link_status_Pr1}), where $p=1-(1-(a+c))^M$, with $a$ and $c$ given by \eqref{eq-sat_transmission_symmetric} and $M=10^5$. We let $f_S=1$ be the fidelity of the source, we let $\overline{n}_1=\overline{n}_2=10^{-4}$ be the average number of background photons, and we take the memory coherence times to be 1 s (left) and 60 s (right). The dots are placed at the maxima of the curves for $\mathbb{E}[\widetilde{F}^{\infty}(t;\Phi^+)]$.}\label{fig-tstarInf_sats}
	\end{figure}
	
	We can see clearly in Figure~\ref{fig-tstarInf_sats} the advantage of using the random variable $\widetilde{F}$ as the figure of merit for evaluating policies. In particular, we can see the trade-off between the quantities $\widetilde{F}$, $F$, and $X$. On the one hand, the fidelity $\mathbb{E}[F^{\infty}(t;\Phi^+)]$ is always highest at time $t=1$, but at this point the elementary link activity probability $\mathbb{E}[X(t)]_{\infty}$ is simply $p$. Since we want not only a high fidelity for the elementary link but also a high probability that the elementary link is active, by optimizing $\widetilde{F}$ it is possible to achieve a higher link activity probability at the expense of a slightly lower fidelity. Specifically, in Figure~\ref{fig-tstarInf_sats}, we see that for every configuration of $d$ and $h$ there exists a time step $t_{\text{crit}}\geq 1$ at which $\widetilde{F}$ is maximal. At this point, the link activity probability is $1-(1-p)^{t_{\text{crit}}}$, which in many cases is dramatically greater than $p$, while the fidelity $\mathbb{E}[F^{\infty}(t_{\text{crit}};\Phi^+)]$ is only slightly lower than the fidelity at time $t=1$. Therefore, by waiting until time $t_{\text{crit}}$, it is possible to obtain an elementary link that is almost deterministically active, while incurring only a slight decrease in the fidelity. The time $t_{\text{crit}}$, obtained by optimizing the quantity $\mathbb{E}[\widetilde{F}^{\infty}(t;\Phi^+)]$ with respect to time $t$ and can be found using the formula in \eqref{eq-fid_tilde_tstarInf}, can be viewed as the optimal horizon time $T$ that should be chosen for the quantum network protocol presented in Figure~\ref{fig-QDP_protocol}. We refer to Ref.~\cite{CRDW19} for an argument similar to the one presented here, except that in Ref.~\cite{CRDW19} the time $t_{\text{crit}}$ is obtained by considering a desired value of the fidelity $\mathbb{E}[F^{\infty}(t;\Phi^+)]$ rather than by optimizing $\mathbb{E}[\widetilde{F}^{\infty}(t;\Phi^+)]$ with respect to $t$, which is what we do here.

\subsection{Forward recursion policy}

	We defined the forward recursion policy in Section~\ref{sec-network_QDP_forward_recursion} as the policy such that the action at time $t$ is equal to the one that maximizes the quantity $\mathbb{E}[\widetilde{F}^{\pi}(t+1;\psi)]$ at the next time step. Mathematically, this policy is given by the following decision function:
	\begin{equation}\label{eq-forward_greedy_decision_func}
		d_t^{\text{FR}}(h^t)=\left\{\begin{array}{l l} 1 & \text{if }x_t=0, \\ 0 & \text{if }x_t=1\text{ and } f_{M^{\pi}(t)(h^t)+1}(\rho^0)>pf_0(\rho^0), \\ 1 & \text{if }x_t=1\text{ and } f_{M^{\pi}(t)(h^t)+1}(\rho^0)\leq pf_0(\rho^0), \end{array}\right.
	\end{equation}
	for all $t\geq 1$ and all histories $h^t\in\{0,1\}^{2t-1}$. The forward recursion policy is useful because it provides an efficient method for obtaining lower bounds on the maximum expected reward for a quantum decision process; see Section~\ref{sec-QDP_forward_recursion}.
	
	Observe that if $p=1$, then the second condition in \eqref{eq-forward_greedy_decision_func} is always false, because of the fact that $f_m(\rho_{AB}^0)\leq f_0(\rho_{AB}^0)$ for all $m\in\mathbb{N}$ (see \eqref{eq-fid_decay_sats}). Therefore, when $p=1$, we have that $d_t^{\text{FR}}=d_t^0$, i.e., the forward recursion policy is equal to the $t^{\star}=0$ memory-cutoff policy; see \eqref{eq-mem_cutoff_zero}. We now show that the forward recursion policy reduces to a memory-cutoff policy even when $p<1$.
		
	\begin{proposition}\label{prop-forward_recursion_sats}
		Consider satellite-to-ground bipartite elementary link generation, as developed in Section~\ref{sec-elem_link_sats}, with $\overline{n}_1=\overline{n}_2=0$ and $f_S=1$, and let $p\in(0,1)$ be the transmission-heralding success probability, as given by \eqref{eq-trans_succ_prob_sats}. Let $t_{\text{coh}}$ be the coherence time of the quantum memories, as defined in Section~\ref{sec-quantum_memory_sats}. Then, for all $t\geq 1$,
		\begin{equation}
			d_t^{\text{FR}}=\left\{\begin{array}{l l} d_t^{\infty} & \text{if }p\leq\frac{1}{2}, \\[0.2cm] d_t^{t^{\star}} & \text{if }p>\frac{1}{2},\end{array}\right.
		\end{equation}
		where
		\begin{equation}\label{eq-forward_greedy_cutoff_sats_spec}
			t^{\star}=\left\lceil-\frac{t_{\text{coh}}}{2}\ln(2p-1)-1\right\rceil.
		\end{equation}
		In other words, if $p\leq\frac{1}{2}$, then the forward recursion policy is equal to the $t^{\star}=\infty$ memory-cutoff policy; if $p>\frac{1}{2}$, then the forward recursion policy is equal to the $t^{\star}$ memory-cutoff policy, with $t^{\star}$ given by \eqref{eq-forward_greedy_cutoff_sats_spec}.
	\end{proposition}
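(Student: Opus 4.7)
The plan is to specialize the forward recursion decision function in \eqref{eq-forward_greedy_decision_func} to the satellite-to-ground setting under the stated assumptions, and then to solve the resulting comparison inequality explicitly.

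First, I would simplify the post-heralding state $\rho_{AB}^0$ under the hypotheses $\overline{n}_1=\overline{n}_2=0$ and $f_S=1$. Setting $\overline{n}_i=0$ in the definitions preceding \eqref{eq-initial_state_params} gives $y_i=0$ and $x_i=z_i=\eta_{\text{sg}}^{(i)}$, so that $a=x_1x_2=z_1z_2=b$ and $c=0$. Substituting this together with $f_S=1$ into \eqref{eq-rho0_alpha} and \eqref{eq-rho0_beta} yields $\alpha=\beta=\tfrac12$. Plugging these values into the decoherence formula \eqref{eq-fid_decay_sats} gives the simple expression
\begin{equation}
f_m(\rho_{AB}^0;\Phi^+)=\tfrac12(1+\lambda_m^2)=\tfrac12\!\left(1+\e^{-2m/t_{\text{coh}}}\right),
\end{equation}
so in particular $f_0(\rho_{AB}^0;\Phi^+)=1$.

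Next, I would feed this into \eqref{eq-forward_greedy_decision_func}. The nontrivial comparison reduces to
\begin{equation}
f_{m+1}(\rho_{AB}^0;\Phi^+)>p\,f_0(\rho_{AB}^0;\Phi^+)\quad\Longleftrightarrow\quad \e^{-2(m+1)/t_{\text{coh}}}>2p-1,
\end{equation}
where $m=M^{\pi}(t)(h^t)$ is the memory time. The key observation is that the comparison depends on the history $h^t$ only through $m$, so the forward recursion policy necessarily collapses to a memory-cutoff policy; this is what reduces the generally history-dependent decision function to one of the form $d_t^{t^{\star}}$.

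I would then split into the two cases. If $p\le\tfrac12$, then $2p-1\le0$, the inequality $\e^{-2(m+1)/t_{\text{coh}}}>2p-1$ holds for every $m\in\mathbb{N}_0$, and the agent waits whenever $x_t=1$; combined with the first branch of \eqref{eq-forward_greedy_decision_func}, which always requests when $x_t=0$, this matches \eqref{eq-mem_cutoff_infty_dt_simpler} and so $d_t^{\text{FR}}=d_t^{\infty}$. If $p>\tfrac12$, taking logarithms gives that the agent waits precisely when $m<-\tfrac{t_{\text{coh}}}{2}\ln(2p-1)-1$, and requests at the first integer $m$ violating this strict inequality; therefore the effective cutoff is the smallest integer at or above this threshold, namely $t^{\star}=\lceil-\tfrac{t_{\text{coh}}}{2}\ln(2p-1)-1\rceil$, matching \eqref{eq-forward_greedy_cutoff_sats_spec}. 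A final check that $d_t^{\text{FR}}$ agrees with $d_t^{t^{\star}}$ on all histories (including those with $x_t=0$) completes the argument.

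The calculation is essentially mechanical once $\alpha=\beta=\tfrac12$ is established; the main conceptual point to emphasize is that the forward recursion rule, which in general can depend on the entire history $h^t$ via the quantum state $\widetilde{\sigma}^{\pi}(t;h^t)$, collapses here to a memory-cutoff policy because under the given assumptions the comparison $f_{m+1}\lessgtr pf_0$ depends only on $m=M^{\pi}(t)(h^t)$ and the threshold is monotone in $m$.
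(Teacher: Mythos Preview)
Your proposal is correct and follows essentially the same approach as the paper's proof: both reduce the forward-recursion comparison $f_{m+1}>pf_0$ to the inequality $\e^{-2(m+1)/t_{\text{coh}}}>2p-1$ via the specialization $\alpha=\beta=\tfrac12$, and then split into the cases $p\le\tfrac12$ and $p>\tfrac12$. Your write-up is slightly more explicit in deriving $\alpha=\beta=\tfrac12$ from the loss-model parameters and in articulating why the history-dependence collapses to a pure memory-time dependence, but the logic and the computation are the same.
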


	\begin{proof}
		Let $m\equiv M(t)(h^t)$. Then, for the state $\rho_{AB}^0$ given by \eqref{eq-initial_link_states}, using \eqref{eq-fid_decay_sats} the second condition in \eqref{eq-forward_greedy_decision_func} translates to 
		\begin{align}
			&\alpha\lambda_{m+1}^2+\left(\beta-\frac{1}{2}\right)\lambda_{m+1}+\frac{1}{2}>p(\alpha+\beta)\label{eq-forward_greedy_condition_sats}\\
			\Rightarrow & p<\frac{\alpha\lambda_{m+1}^2}{\alpha+\beta}+\frac{\left(\beta-\frac{1}{2}\right)\lambda_{m+1}}{\alpha+\beta}+\frac{1}{2(\alpha+\beta)}.\label{eq-forward_greedy_condition_sats2}
		\end{align}
		In the case $\overline{n}_1=\overline{n}_2=0$ and $f_S=1$, we have that $\alpha=\beta=\frac{1}{2}$, so that the inequality in \eqref{eq-forward_greedy_condition_sats2} becomes
		\begin{equation}\label{eq-forward_greedy_condition_sats3}
			p<\frac{1}{2}\left(\e^{-\frac{2(m+1)}{t_{\text{coh}}}}+1\right),
		\end{equation}
		Now, this inequality is satisfied for all $m\in\mathbb{N}_0$ if and only if $p\leq\frac{1}{2}$. In other words, if $p\leq\frac{1}{2}$, then for all possible memory times the action is to wait if the link is currently active, meaning that the decision function in \eqref{eq-forward_greedy_decision_func} becomes
		\begin{equation}
			d_t^{\text{FR}}(h^t)=\left\{\begin{array}{l l} 1 & \text{if }x_t=0, \\ 0 & \text{if }x_t=1, \end{array}\right.
		\end{equation}
		which is precisely the decision function $d_t^{\infty}$ for the $t^{\star}=\infty$ memory-cutoff policy; see \eqref{eq-mem_cutoff_infty_dt_simpler}.
		
		For $p\in\left(\frac{1}{2},1\right)$, whether the inequality in \eqref{eq-forward_greedy_condition_sats3} is satisfied or not depends on the memory time $m$. Consider the largest value of $m$ for which the inequality is satisfied, and denote that value by $m^*$. Since the action is to wait, at the next time step the memory value will be $m^*+1$, which by definition will not satisfy the inequality in \eqref{eq-forward_greedy_condition_sats}. This means that, for all memory times strictly less than $m^*+1$, the forward recursion policy dictates that the agent wait if the elementary link is currently active. As soon as the memory time is equal to $m^*+1$, then the forward recursion policy dictates that the agent request a new elementary link. This means that $m^*+1$ is a memory cutoff value. In particular, by rearranging the inequality in \eqref{eq-forward_greedy_condition_sats3}, we obtain
		\begin{equation}
			m<-\frac{t_{\text{coh}}}{2}\ln(2p-1)-1,
		\end{equation}
		which means that
		\begin{equation}
			m^*=\left\lfloor -\frac{t_{\text{coh}}}{2}\ln(2p-1)-1 \right\rfloor,
		\end{equation}
		and
		\begin{equation}
			t^{\star}=1+m^*=\left\lceil -\frac{t_{\text{coh}}}{2}\ln(2p-1)-1 \right\rceil,
		\end{equation}
		as required.
	\end{proof}
	
	Observe that the cutoff in \eqref{eq-forward_greedy_cutoff_sats_spec} is equal to zero for all $p\geq\frac{1}{2}\left(1+\e^{-\frac{2}{t_{\text{coh}}}}\right)$. This means that $p=1$ is not the only transmission-heralding success probability for which the forward recursion policy is equal to the $t^{\star}=0$ memory-cutoff policy. Intuitively, for $\frac{1}{2}\left(1+\e^{-\frac{2}{t_{\text{coh}}}}\right)\leq p\leq 1$, the transmission-heralding success probability is high enough that it is not necessary to store the quantum state in memory---for the purposes of maximizing the expected value of $\widetilde{F}$, it suffices to request a new quantum state at every time step. At the other extreme, for $0\leq p\leq\frac{1}{2}$, the probability is too low to keep requesting---for the purposes of maximizing the expected value of $\widetilde{F}$, it is better to keep the quantum state in memory indefinitely.
	
	\begin{remark}
		The result of Proposition~\ref{prop-forward_recursion_sats} goes beyond elementary link generation with satellites, because we assumed that $\overline{n}_1=\overline{n}_2=0$ and $f_S=1$. As a result of these assumptions, the result of Proposition~\ref{prop-forward_recursion_sats} applies to every elementary link generation scenario (such as ground-based elementary link generation as described in Section~\ref{sec-network_setup_ground_based}) in which the transmission channel is a pure-loss channel, the heralding procedure is described by \eqref{eq-photonic_heralding_example_1}--\eqref{eq-photonic_heralding_0}, the source state is equal to the target state, and the quantum memories are modeled as in Section~\ref{sec-quantum_memory_sats}.
	\end{remark}


\subsection{Backward recursion policy}

	Finally, to end this chapter, let us consider the backward recursion policy, which we know to be optimal from Theorem~\ref{thm-opt_policy}. However, from the discussion at the end of Section~\ref{sec-network_QDP_backward_recursion}, we know that the backward recursion algorithm presented is exponentially slow in the horizon time, which limits its usefulness in practice. The most common thing to do in practice is to use reinforcement learning algorithms to come up with lower bounds on the optimal reward. In this section, therefore, we simply provide results from the backward recursion algorithm for small horizon times, just as a proof of concept. 
	
	First, in Figure~\ref{fig-opt_pols_general}, we plot optimal values of $\mathbb{E}[\widetilde{F}^{\pi}(t+1;\Phi^+)]$ for a single elementary link for various values of the transmission-heralding success probability $p$ and memory coherence time $t_{\text{coh}}$. We assume that $\overline{n}_1=\overline{n}_2=0$ (no background photons) and $f_S=1$ (source state is $\rho^S=\Phi^+$). For large values of $p$, we find that the optimal value is reached very quickly, within a couple of time steps. For values of $p$ close to one (such as $p=0.9$), we see that for certain values of the coherence time $t_{\text{coh}}$ the initial value of $\mathbb{E}[\widetilde{F}^{\pi}(t+1);\Phi^+]$ (i.e., the value for $t=0$) is optimal. In this case, the optimal policy is to request an elementary link at every time step.
	
	\begin{figure}
		\centering
		\includegraphics[width=0.9\textwidth]{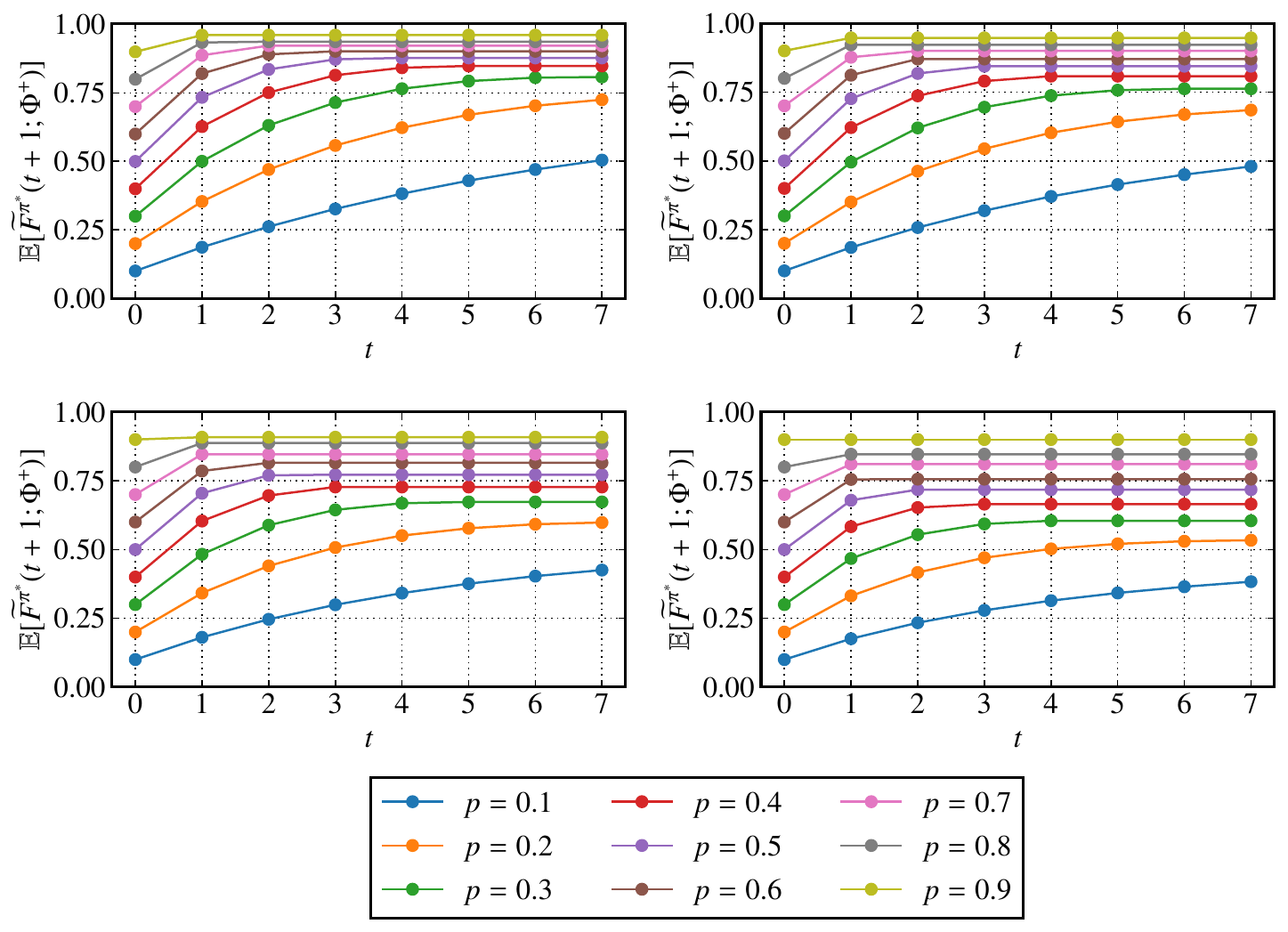}
		\caption{Optimal values of $\mathbb{E}[\widetilde{F}^{\pi}(t+1;\Phi^+)]$ for a single elementary link distributed by a satellite to two ground stations. We take various values for the transmission-heralding success probability $p$, and we assume that $\overline{n}_1=\overline{n}_2=0$ and $f_S=1$. The panels correspond to different values of the memory coherence time: $t_{\text{coh}}=30$ (top left), $t_{\text{coh}}=20$ (top right), $t_{\text{coh}}=10$ (bottom left), and $t_{\text{coh}}=6$ (bottom right).}\label{fig-opt_pols_general}
	\end{figure}
	
	In Figure~\ref{fig-opt_pol_sats}, we similarly plot optimal values of $\mathbb{E}[\widetilde{F}^{\pi}(t+1;\Phi^+)]$ for a single elementary link, except now we plot them as a function of the ground station distance $d$ and the satellite altitude $h$ as per the situation depicted in the left-most panel of Figure~\ref{fig-atmosphere_geometry}. We also plot the expected link values $\mathbb{E}[X(t+1)]_{\pi}$ and the expected fidelities $\mathbb{E}[F^{\pi}(t+1;\Phi^+)]$ associated with the optimal policies. As before, we assume that $f_S=1$, but unlike before we assume that $\overline{n}_1=\overline{n}_2=10^{-4}$, and we consider multiplexing with $M=10^5$ distinct frequency modes per transmission. We assume a coherence time of 1 s throughout. For small distance-altitude pairs, we find that the optimal value is reached within five time steps. For these cases, it is worth pointing out that the optimal value of $\mathbb{E}[\widetilde{F}^{\pi}(t+1;\Phi^+)]$ corresponds to a link activity value $\mathbb{E}[X(t+1)]_{\pi}$ of nearly one, while the fidelity (although it drops, as expected) does not drop significantly, meaning that the elementary link can still be useful for performing entanglement distillation of parallel elementary links or for creating virtual links. It is also interesting to point out that for a ground distance separation of $d=2000$ km, the optimal values for satellite altitude $h=1000$ km is higher than for $h=500$ km. This result can be traced back to the top-left panel of Figure~\ref{fig-initial_fid_prob}, in which we see that the transmission-heralding success probability curves for $h=500$ km and $h=1000$ km cross over at around 1700 km, so that $h=1000$ km has a higher probability than $h=500$ km when $d=2000$ km.
	
	\begin{figure}
		\centering
		\includegraphics[width=0.9\textwidth]{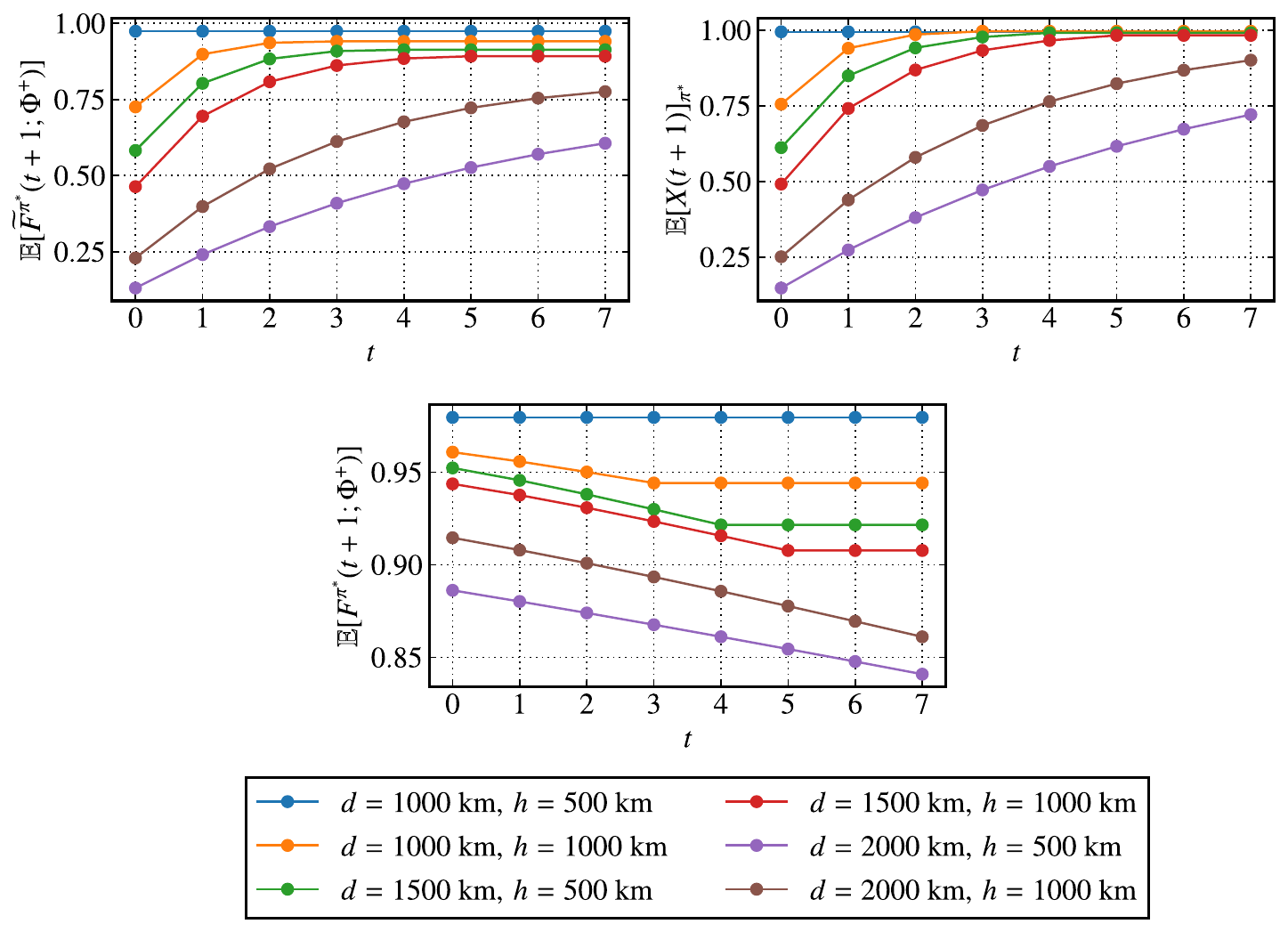}
		\caption{Optimal values of $\mathbb{E}[\widetilde{F}^{\pi}(t+1;\Phi^+)]$, along with the associated link values $\mathbb{E}[X(t+1)]_{\pi}$ and fidelities $\mathbb{E}[F^{\pi}(t+1;\Phi^+)]$, for a single elementary link distributed by a satellite to two ground stations, according to the symmetric situation depicted in the left-most panel of Figure~\ref{fig-atmosphere_geometry}. We assume that $f_S=1$ and that $\overline{n}_1=\overline{n}_2=10^{-4}$, and we assume that the quantum memories have a coherence time of 1 s. We also assume multiplexing with $M=10^5$ distinct frequency modes per transmission.}\label{fig-opt_pol_sats}
	\end{figure}


\section{Summary}

	In this chapter, we applied all of the concepts developed in this thesis to the example of satellite-to-ground elementary link generation, based on the satellite-to-ground transmission model presented in Ref.~\cite{KBD+19}. We used the explicit satellite-to-ground transmission channel developed in Section~\ref{sec-sat_architecture} to determine the quantum state of the elementary link after successful transmission and heralding. We evaluated the fidelity of this quantum state as a function of ground station separation distance and satellite altitude, and we provided estimates of secret key rates for quantum key distribution. Then, using a particular noise model for the quantum memories at the ground stations, we examined policies for the elementary link generation quantum decision process developed in Chapter~\ref{chap-network_QDP}. We looked first at the memory-cutoff policy and then at the forward recursion policy. Interestingly, for the latter, we found that the forward recursion policy is simply the $t^{\star}$ memory-cutoff policy, with the cutoff $t^{\star}$ depending explicitly on the coherence time of the quantum memories and on the transmission-heralding success probability. Finally, we applied the backward recursion algorithm to determine optimal policies for small horizon times.

\chapter{SUMMARY AND OUTLOOK}\label{chap-summary}

	The central topic of this work is the theory of quantum networks---specifically, how to describe them and how to develop protocols for entanglement distribution in practical scenarios with near-term quantum technologies. The goal in this area of research is to develop protocols that can handle multiple-user requests, work for any given network topology, and can adapt to changes in topology and attacks to the network infrastructure, with the ultimate goal being the realization of the quantum internet. The main message of this work is that reinforcement learning can be used to discover such protocols.

	Before embarking on this ambitious endeavor of using reinforcement learning to discover entanglement distribution protocols in quantum networks, several important elements need to be put into place. First, because reinforcement learning algorithms are defined for decision processes, we require a mathematical framework for general quantum decision processes, because both the agent and the environment in a quantum network protocol can, in principle, be quantum mechanical systems. We also need a systematic language for describing quantum networks and entanglement distribution protocols in general. Finally, by combining the previous two elements, we need a theoretical framework for entanglement distribution protocols in a quantum network based on quantum decision processes. This thesis provides several initial steps in this direction, and thus initial steps towards the goal of reinforcement learning-based quantum network protocols. The following are the main contributions of this thesis.
	\begin{enumerate}
		\item A theory of quantum decision processes in which both the agent and the environment are described by quantum systems (Chapter~\ref{chap-QDP}). This development extends prior work \cite{BBA14,YY18} on quantum partially observable Markov decision processes. Due to the wide applicability of quantum decision processes in quantum information processing tasks (as described in Section~\ref{sec-QDP_q_info_tasks}), the material of this chapter is expected to be of independent interest outside of the quantum network context.
		
		\item A mathematical model for describing quantum networks, as well as quantum states and channels in a quantum network; a theoretical framework for entanglement distribution in a quantum network based on graph transformations, as in Refs.~\cite{SMI+17,CRDW19}; and practical ground- and satellite-based network architectures that take into account some of the limitations of near-term quantum technologies (Chapter~\ref{chap-network_setup}).
		
		\item An explicit entanglement distribution protocol using quantum decision processes to model elementary link generation under practical settings, and corresponding figures of merit (Chapter~\ref{chap-network_QDP}).
	\end{enumerate}
	As a proof of concept for these theoretical contributions, in Chapters~\ref{chap-mem_cutoff} and \ref{chap-sats} we considered explicit examples of policies and elementary link generation scenarios. In Chapter~\ref{chap-mem_cutoff}, we considered the so-called memory-cutoff policy for elementary link generation. Then, in Chapter~\ref{chap-sats}, we applied all of the concepts developed in this thesis to the example of satellite-to-ground elementary link generation.
	
	The majority of the main results of this thesis are analytical in nature, and while some of these results might not be directly applicable to real quantum networks, the true value of the developments of this thesis is in the tools and techniques that have been provided that pave the way for developing more sophisticated quantum network protocols using quantum decision processes, and to computationally obtain optimal policies using reinforcement learning algorithms. Therefore, moving forward, here are some general directions to be explored.
	\begin{itemize}
		\item Incorporate more sophisticated operations into the quantum decision process presented in Definition~\ref{def-network_QDP_elem_link}, such as entanglement distillation and entanglement swapping protocols.
		
		\item Examine quantum decision processes with multiple cooperating agents, which would involve extending the developments of Chapter~\ref{chap-QDP} to multiple agents. Doing so would then also allow for quantum decision processes to be used for developing routing protocols for quantum networks in a manner analogous to the classical case; see, e.g., Refs.~\cite{Bell58,BL93,LB93,Mammeri19,YLX+19}. We also refer to Ref.~\cite{CRDW19} for routing strategies that are closely aligned with the techniques presented in this thesis.
		
		\item Use reinforcement learning algorithms to perform policy optimization, as opposed to the backward and forward recursion algorithms used in the examples considered in this thesis. See Ref.~\cite{Sut18_book} for an introduction to reinforcement learning algorithms.
	\end{itemize}
	We provide details about some of these directions for future work in Appendix~\ref{sec-future_work}.
	
	Quantum networks for entanglement distribution provide us with some of the best opportunities for using near-term quantum technologies, because entanglement is an important and central resource for many quantum information processing tasks, such as quantum teleportation, quantum key distribution, and distributed quantum computing. Devising efficient methods for distributing entanglement to spatially separated parties is therefore an important task. This thesis provides a general framework for devising such protocols with the practical limitations of near-term quantum technologies taken into account, and it represents a small step towards the ultimate goal of a global-scale quantum internet.


\end{mainmatter}

\cleardoublepage\phantomsection
\begin{appendices}
\cleardoublepage\phantomsection
\pagestyle{appendix}
\titleformat{\chapter}[display]{\large\bfseries}{APPENDIX~\thechapter.\vspace{-0.5cm}}{0.5cm}{}
\renewcommand{\appendixname}{APPENDIX}
\addtocontents{toc}{\protect\setcounter{tocdepth}{0}}

\chapter{UPPER BOUNDS FOR QUANTUM DECISION PROCESSES}\label{app-QDP_SDP}


	In this appendix, we derive five upper bounds on the maximum expected reward of quantum decision processes.
	
	
	\begin{theorem}[QDP Upper Bound 1]\label{thm-QDP_UB1}
		Let $\mathsf{Q}=(\mathsf{E},\mathsf{A})$ be a quantum decision process, with $\mathsf{A}=(T,\pi_T)$, $T<\infty$, and $\pi_T$ an arbitrary $T$-step policy. Then,
		\begin{equation}
			F(\mathsf{E},(T,\pi_T))\leq F_{\text{UB1}}(\mathsf{E},T),
		\end{equation}
		where
		\begin{multline}
			F_{\text{UB1}}(\mathsf{E},T)\\=|\Omega(T)|\max_{h^T\in\Omega(T)}\lambda_{\max}\left(\sum_{\substack{a_T\in\mathcal{A}\\x_{T+1}\in\mathcal{X}}}\left(\bigotimes_{t=1}^T M_{A_t}^{t;a_t}\right)\Tr\left[\widetilde{\mathcal{R}}_{E_{T+1}}^{T;h^T,a_T,x_{T+1}}\left(\widetilde{\sigma}_{E_{T+1}}^{(\mathsf{E})}(T+1;h^T,a_T,x_{T+1})\right)\right]\right).
		\end{multline}
	\end{theorem}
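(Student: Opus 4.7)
The plan is to start from the explicit formula for the expected reward derived in the proof of Proposition~\ref{prop-QDP_policy_eval}, namely the identity in~\eqref{eq-QDP_policy_eval_pf1}, and then to successively relax the structure of an admissible policy until the objective decouples across histories and the inner optimization reduces to a maximum-eigenvalue problem that can be handled via the elementary fact~\eqref{eq-max_EV_SDP}.

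First, I would rewrite the sum in~\eqref{eq-QDP_policy_eval_pf1} by splitting the full history $h^{T+1}=(h^T,a_T,x_{T+1})$ and using the tensor-product identity
\begin{equation}
\prod_{t=1}^{T}\Tr\!\left[M_{A_t}^{t;a_t}\rho_{A_t}^{h_t^{T+1}}\right]
=\Tr\!\left[\left(\bigotimes_{t=1}^{T}M_{A_t}^{t;a_t}\right)\left(\bigotimes_{t=1}^{T}\rho_{A_t}^{h_t^{T}}\right)\right],
\end{equation}
where $h_t^{T+1}=h_t^{T}$ since the partial histories only involve data up to time $t\le T$. Substituting this into $F(\mathsf{E},(T,\pi_T))$ expresses the expected reward as a sum over histories $h^T\in\Omega(T)$ of a trace against the product action state $\bigotimes_{t=1}^{T}\rho_{A_t}^{h_t^{T}}$ on the composite system $A_1A_2\cdots A_T$.

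Next comes the key relaxation step. A policy is constrained in two ways: the joint action state on $A_1\cdots A_T$ is forced to be a product across time, and the factors $\rho_{A_t}^{h_t^T}$ are required to be consistent across different full histories $h^T$ that share a common prefix $h_t^T$. I would relax both constraints simultaneously by replacing $\bigotimes_{t=1}^{T}\rho_{A_t}^{h_t^T}$ with an arbitrary density operator $\tau_{A_1\cdots A_T}^{h^T}$ on $A_1\otimes\cdots\otimes A_T$ that is allowed to depend on the full history $h^T$ independently for each $h^T\in\Omega(T)$. Since every admissible policy is in the image of this relaxation, maximizing the resulting quantity is an upper bound on $F(\mathsf{E},(T,\pi_T))$. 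Because the states $\tau^{h^T}$ for distinct $h^T$ are now independent variables, the maximization decouples across $\Omega(T)$, yielding, for each $h^T$, an instance of the problem $\max_{\tau\ge 0,\Tr\tau=1}\Tr[H^{h^T}\tau]$ with
\begin{equation}
H^{h^T}\coloneqq\sum_{\substack{a_T\in\mathcal{A}\\ x_{T+1}\in\mathcal{X}}}\!\left(\bigotimes_{t=1}^{T}M_{A_t}^{t;a_t}\right)\Tr\!\left[\widetilde{\mathcal{R}}_{E_{T+1}}^{T;h^T,a_T,x_{T+1}}\!\left(\widetilde{\sigma}_{E_{T+1}}^{(\mathsf{E})}(T+1;h^T,a_T,x_{T+1})\right)\right],
\end{equation}
which is Hermitian because the $M_{A_t}^{t;a_t}$ are positive semi-definite and the real scalars $\Tr[\widetilde{\mathcal R}(\cdot)]$ appear as weights. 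Applying~\eqref{eq-max_EV_SDP} gives $\lambda_{\max}(H^{h^T})$ as the value of each inner problem.

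Finally, I would conclude by bounding the sum $\sum_{h^T\in\Omega(T)}\lambda_{\max}(H^{h^T})\le |\Omega(T)|\max_{h^T\in\Omega(T)}\lambda_{\max}(H^{h^T})$, which produces exactly the stated $F_{\text{UB1}}(\mathsf{E},T)$. The main obstacle I expect is not a computational one but a conceptual one: verifying carefully that the simultaneous relaxation from product-across-time action states to arbitrary joint states, and from prefix-consistent action states to history-dependent ones, is indeed a relaxation (i.e.\ the original policy class is strictly contained in the relaxed class). The bookkeeping around the indices---distinguishing $h_t^{T+1}$ from $h_t^T$, and ensuring that $a_T$ is treated as a free sum variable while $a_1,\dots,a_{T-1}$ are embedded in $h^T$---is where a careless step could accidentally produce either a non-upper bound or a weaker bound than claimed.
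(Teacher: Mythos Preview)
Your proposal is correct and follows essentially the same route as the paper's proof. The only cosmetic difference is packaging: the paper bundles the sum over $h^T$ into a single classical-quantum density operator $\widehat{\rho}_{H_TA_1^T}(T)=\frac{1}{|\Omega(T)|}\sum_{h^T}\ket{h^T}\bra{h^T}\otimes\bigotimes_t\rho_{A_t}^{h_t^T}$, applies \eqref{eq-max_EV_SDP} once, and then reads off the $\max_{h^T}$ from the block-diagonal structure---whereas you decouple first and then bound the resulting sum by $|\Omega(T)|$ times the maximum, arriving at the same expression.
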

	
	\begin{proof}
		Let us start by rewriting the expression for $F(\mathsf{E},(T,\pi_T))$ in \eqref{eq-QDP_policy_eval_pf1}:
		\begin{align}
			&F(\mathsf{E},(T,\pi_T))\nonumber\\
			&\quad=\sum_{h^{T+1}\in\Omega(T+1)}\left(\prod_{t=1}^T\Tr\left[M_{A_t}^{t;a_t}\rho_{A_t}^{h_t^{T+1}}\right]\right)\Tr\left[\widetilde{\mathcal{R}}_{E_{T+1}}^{T;h^{T+1}}\left(\widetilde{\sigma}_{E_{T+1}}^{(\mathsf{E})}(T+1;h^{T+1})\right)\right]\\
			&\quad=\sum_{h^{T+1}\in\Omega(T+1)}\Tr\left[\left(M_{A_1}^{t;a_1}\otimes\dotsb\otimes M_{A_T}^{T;a_T}\right)\left(\rho_{A_1}^{h_1^{T+1}}\otimes\dotsb\otimes\rho_{A_T}^{h_T^{T+1}}\right)\otimes\widetilde{\mathcal{R}}_{E_{T+1}}^{T;h^{T+1}}\left(\widetilde{\sigma}_{E_{T+1}}^{(\mathsf{E})}(T+1;h^{T+1})\right)\right].
		\end{align}
		Now, let us define the following operators:
		\begin{align}
			\widehat{\rho}_{H_TA_1^T}(T)&\coloneqq \frac{1}{|\Omega(T)|}\sum_{h^T\in\Omega(T)}\ket{h^T}\bra{h^T}_{H_T}\otimes(\rho_{A_1}^{h_1^T}\otimes\dotsb\otimes\rho_{A_T}^{h^T})\\
			\widehat{M}_{\overline{A}_1^TA_1^T}(T)&\coloneqq \sum_{a_1,\dotsc,a_T\in\mathcal{A}}\ket{a_1,\dotsc,a_T}\bra{a_1,\dotsc,a_T}_{\overline{A}_1\dotsb\overline{A}_T}\otimes(M_{A_1}^{1;a_1}\otimes\dotsb\otimes M_{A_T}^{T:a_T}),\\
			\widehat{\sigma}_{H_{T+1}E_{T+1}}^{(\mathsf{E})}(T)&\coloneqq\sum_{h^{T+1}\in\Omega(T+1)}\ket{h^{T+1}}\bra{h^{T+1}}_{H_{T+1}}\otimes\widetilde{\sigma}_{E_{T+1}}^{(\mathsf{E})}(T+1;h^{T+1}).
		\end{align}
		Then, we have that
		\begin{align}
			&F(\mathsf{E},(T,\pi_T))\nonumber\\
			&\quad=|\Omega(T)|\Tr\left[\widehat{\rho}_{H_TA_1^T}(T)\widehat{M}_{\overline{A}_1^T A_1^T}(T)\widehat{\mathcal{R}}_{H_{T+1}E_{T+1}\to E_{T+1}}^{(\mathsf{E})}(T)\left(\widehat{\sigma}_{H_{T+1}E_{T+1}}^{(\mathsf{E})}(T+1)\right)\right]\\
			&\quad=|\Omega(T)|\Tr\left[\widehat{\rho}_{H_TA_1^T}(T)\Tr_{\overline{A}_TX_{T+1}E_{T+1}}\left[\widehat{M}_{\overline{A}_1^T A_1^T}(T)\widehat{\mathcal{R}}_{H_{T+1}E_{T+1}\to E_{T+1}}^{(\mathsf{E})}(T)\left(\widehat{\sigma}_{H_{T+1}E_{T+1}}^{(\mathsf{E})}(T+1)\right)\right]\right],
		\end{align}
		where we recall the definition of the map $\widehat{\mathcal{R}}_{H_{T+1}E_{T+1}\to E_{T+1}}^{(\mathsf{E})}$ in \eqref{eq-QDP_reward_map_hat}. Now, because $\widehat{\rho}_{H_TA_1^T}(T)$ is a density operator by definition, using \eqref{eq-max_EV_SDP} we obtain
		\begin{equation}
			F(\mathsf{E},(T,\pi_T))\leq|\Omega(T)|\lambda_{\max}\left(\Tr_{\overline{A}_TX_{T+1}E_{T+1}}\left[\widehat{M}_{\overline{A}_1^TA_1^T}(T)\widehat{\mathcal{R}}_{H_{T+1}E_{T+1}\to E_{T+1}}^{(\mathsf{E})}(T)\left(\widehat{\sigma}_{H_{T+1}E_{T+1}}^{(\mathsf{E})}(T+1)\right)\right]\right).
		\end{equation}
		Then,
		\begin{align}
			&\Tr_{\overline{A}_TX_{T+1}E_{T+1}}\left[\widehat{M}_{\overline{A}_1^TA_1^T}(T)\widehat{\mathcal{R}}_{H_{T+1}E_{T+1}\to E_{T+1}}^{(\mathsf{E})}(T)\left(\widehat{\sigma}_{H_{T+1}E_{T+1}}^{(\mathsf{E})}(T+1)\right)\right]\nonumber\\
			&\quad=\sum_{h^{T+1}\in\Omega(T+1)}\ket{h^T}\bra{h^T}_{H_T}\otimes(M_{A_1}^{1;a_1}\otimes\dotsb\otimes M_{A_T}^{T;a_T})\Tr\left[\widetilde{\mathcal{R}}_{E_{T+1}}^{T;h^{T+1}}\left(\widetilde{\sigma}_{E_{T+1}}^{(\mathsf{E})}(T+1;h^T,a_T,x_{T+1})\right)\right]\\
			&\quad=\sum_{h^T\in\Omega(T)}\ket{h^T}\bra{h^T}_{H_T}\otimes\sum_{\substack{a_T\in\mathcal{A}\\x_{T+1}\in\mathcal{X}}}\left(\bigotimes_{t=1}^T M_{A_t}^{t;a_t}\right)\Tr\left[\widetilde{\mathcal{R}}_{E_{T+1}}^{T;h^T,a_T,x_{T+1}}\left(\widetilde{\sigma}_{E_{T+1}}^{(\mathsf{E})}(T+1;h^T,a_T,x_{T+1})\right)\right].
		\end{align}
		The operator on the last line is block diagonal, which means that its largest eigenvalue can be found by searching over the largest eigenvalues of the individual blocks. In other words,
		\begin{multline}
			\lambda_{\max}\left(\Tr_{\overline{A}_TX_{T+1}E_{T+1}}\left[\widehat{M}_{\overline{A}_1^TA_1^T}(T)\widehat{\mathcal{R}}_{H_{T+1}E_{T+1}}(T)\left(\widehat{\sigma}_{H_{T+1}E_{T+1}}^{(\mathsf{E})}(T+1)\right)\right]\right)\\=\max_{h^T\in\Omega(T)}\lambda_{\max}\left(\sum_{\substack{a_T\in\mathcal{A}\\x_{T+1}\in\mathcal{X}}}\left(\bigotimes_{t=1}^T M_{A_t}^{t;a_t}\right)\Tr\left[\widetilde{\mathcal{R}}_{E_{T+1}}^{T;h^T,a_T,x_{T+1}}\left(\widetilde{\sigma}_{E_{T+1}}^{(\mathsf{E})}(T+1;h^T,a_T,x_{T+1})\right)\right]\right),
		\end{multline}
		which gives us the desired result.
	\end{proof}
	
	The upper bound derived in Theorem~\ref{thm-QDP_UB1} involves a search over all histories up to time $T$, which makes calculating it exponentially slow. Furthermore, the bound is in general very loose, particularly because of the prefactor of $|\Omega(T)|$, which we know increases exponentially in $T$. The following upper bound can in general be tighter, although it is still exponentially slow to compute in terms of the horizon time.
	
	\begin{theorem}[QDP Upper Bound 2]\label{thm-QDP_UB2}
		Let $\mathsf{Q}=(\mathsf{E},\mathsf{A})$ be a quantum decision process, with $\mathsf{A}=(T,\pi_T)$, $T<\infty$, and $\pi_T$ an arbitrary $T$-step policy. Then,
		\begin{equation}
			F(\mathsf{E},(T,\pi_T))\leq F_{\text{UB2}}(\mathsf{E},T),
		\end{equation}
		where $F_{\text{UB2}}(\mathsf{E},T)$ is the optimal solution to the following semi-definite program:
		\begin{equation}\label{eq-QDP_UB2_SDP}
			\begin{array}{l l} 
				\text{maximize} & \displaystyle \sum_{\substack{h^T\in\Omega(T)\\a_T\in\mathcal{A}\\x_{T+1}\in\mathcal{X}}} \Tr\left[\left(\widetilde{\mathcal{R}}_{E_{T+1}}^{T;h^T,a_T,x_{T+1}}\circ\widetilde{\mathcal{P}}_{E_0\to E_{T+1}}^{(\mathsf{E}),T;h^T,a_T,x_{T+1}}\right)\left(Z_{E_0}^{T;h^T,a_T}\right)\right] \\[1.5cm] \text{subject to} & \displaystyle Z_{E_0}^{t;h^t,a_t}\geq 0,\quad \sum_{a_t\in\mathcal{A}}Z_{E_0}^{t;h^t,a_t}=Z_{E_0}^{t-1;h^{t-1},a_{t-1}}\quad\forall~T\geq t\geq 2,\,h^t\in\Omega(t),\, a_t\in\mathcal{A},\\[0.7cm] & \displaystyle Z_{E_0}^{1;x_1,a_1}\geq 0,\quad \sum_{a_1\in\mathcal{A}}Z_{E_0}^{1;x_1,a_1}=\sigma_{E_0}\quad\forall x_1\in\mathcal{X},\, a_1\in\mathcal{A}.
			\end{array}
		\end{equation}
	\end{theorem}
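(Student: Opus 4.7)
The plan is to establish the bound by exhibiting an explicit feasible point for the SDP in \eqref{eq-QDP_UB2_SDP} whose objective value is exactly $F(\mathsf{E},(T,\pi_T))$. Since the SDP is a maximization, this immediately yields $F(\mathsf{E},(T,\pi_T))\leq F_{\text{UB2}}(\mathsf{E},T)$. The key observation motivating the ansatz is that Proposition~\ref{prop-QDP_cq_state} expresses $\widetilde{\sigma}_{E_{T+1}}^{(\mathsf{E},\mathsf{A})}(T+1;h^{T+1})$ as a product of the classical ``measurement weights'' $\Tr[M_{A_j}^{j;a_j}\rho_{A_j}^{h_j^T}]$ and the purely environmental state $\widetilde{\sigma}_{E_{T+1}}^{(\mathsf{E})}(T+1;h^{T+1})=\widetilde{\mathcal{P}}_{E_0\to E_{T+1}}^{(\mathsf{E}),T;h^T,a_T,x_{T+1}}(\sigma_{E_0})$, so the SDP variables $Z_{E_0}^{t;h^t,a_t}$ are naturally interpreted as carrying exactly these policy-dependent weights while leaving the environment decoupled at the input node $E_0$.

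First, I would define the ansatz
\[
Z_{E_0}^{t;h^t,a_t}\;\coloneqq\;\left(\prod_{j=1}^{t}\Tr\!\left[M_{A_j}^{j;a_j}\rho_{A_j}^{h_j^t}\right]\right)\sigma_{E_0}
\]
for all $1\leq t\leq T$, $h^t\in\Omega(t)$, $a_t\in\mathcal{A}$. Checking feasibility is then routine: each trace in the product is non-negative, $\sigma_{E_0}\geq 0$, so $Z_{E_0}^{t;h^t,a_t}\geq 0$; the completeness relation $\sum_{a_t}M_{A_t}^{t;a_t}=\mathbbm{1}_{A_t}$ together with $\Tr[\rho_{A_t}^{h^t}]=1$ yields $\sum_{a_t}\Tr[M_{A_t}^{t;a_t}\rho_{A_t}^{h^t}]=1$, and since $h_j^t=h_j^{t-1}$ for $j\leq t-1$ (the history slice is consistent under extension), the marginal conditions $\sum_{a_t}Z_{E_0}^{t;h^t,a_t}=Z_{E_0}^{t-1;h^{t-1},a_{t-1}}$ and $\sum_{a_1}Z_{E_0}^{1;x_1,a_1}=\sigma_{E_0}$ both hold. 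Crucially, the marginal at time $t$ is independent of $x_t$, which is precisely what the SDP constraint demands.

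Second, I would substitute the ansatz into the objective and use $\widetilde{\mathcal{P}}_{E_0\to E_{T+1}}^{(\mathsf{E}),T;h^T,a_T,x_{T+1}}(\sigma_{E_0})=\widetilde{\sigma}_{E_{T+1}}^{(\mathsf{E})}(T+1;h^T,a_T,x_{T+1})$ together with the linearity of $\widetilde{\mathcal{R}}_{E_{T+1}}^{T;h^T,a_T,x_{T+1}}$ to pull the scalar product of traces outside. The result is
\[
\sum_{h^T,a_T,x_{T+1}}\left(\prod_{j=1}^{T}\Tr[M_{A_j}^{j;a_j}\rho_{A_j}^{h_j^T}]\right)\Tr\!\left[\widetilde{\mathcal{R}}_{E_{T+1}}^{T;h^T,a_T,x_{T+1}}\!\left(\widetilde{\sigma}_{E_{T+1}}^{(\mathsf{E})}(T+1;h^T,a_T,x_{T+1})\right)\right],
\]
which is precisely the formula for $F(\mathsf{E},(T,\pi_T))$ derived in \eqref{eq-QDP_policy_eval_pf1}. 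Hence the feasible point attains value $F(\mathsf{E},(T,\pi_T))$, and the SDP maximum is at least this, completing the bound.

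The main obstacle I anticipate is purely notational rather than conceptual: one must verify carefully that the objective's channel $\widetilde{\mathcal{P}}_{E_0\to E_{T+1}}^{(\mathsf{E}),T;h^T,a_T,x_{T+1}}$ indeed agrees with the composition of environment transition maps that appears in Proposition~\ref{prop-QDP_cq_state}, and that the history-indexing convention for $h^T$ versus its slices $h_j^T$ lines up consistently in the product of measurement weights. Once the bookkeeping is settled, the proof is a direct feasibility-plus-value-matching argument and does not require invoking SDP duality.
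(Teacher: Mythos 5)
Your proposal is correct and follows essentially the same route as the paper's own proof: the identical ansatz $Z_{E_0}^{t;h^t,a_t}=\bigl(\prod_{j=1}^{t}\Tr[M_{A_j}^{j;a_j}\rho_{A_j}^{h_j^t}]\bigr)\sigma_{E_0}$, the same feasibility check via the POVM completeness relation and unit-trace action states, and the same value-matching with the expression in \eqref{eq-QDP_policy_eval_pf1}. No gaps; the notational point you flag (that $\widetilde{\mathcal{P}}^{(\mathsf{E}),T;h^{T+1}}$ is the composition of transition maps from Proposition~\ref{prop-QDP_cq_state}) is handled the same way in the paper.
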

	
	\begin{proof}
		We start by recalling that
		\begin{equation}
			F(\mathsf{E},(T,\pi_T))=\sum_{h^{T+1}\in\Omega(T+1)}\Tr\left[\widetilde{\mathcal{R}}_{E_{T+1}}^{T;h^{T+1}}\left(\widetilde{\sigma}_{E_{T+1}}^{(\mathsf{E},(T,\pi_T))}(T+1;h^{T+1})\right)\right],
		\end{equation}
		and that
		\begin{align}
			\widetilde{\sigma}_{E_{T+1}}^{(\mathsf{E},(T,\pi_T))}(T+1;h^{T+1})&=\left(\prod_{t=1}^T\Tr\left[M_{A_t}^{t;a_t}\rho_{A_t}^{h_t^{T+1}}\right]\right)\widetilde{\sigma}_{E_{T+1}}^{(\mathsf{E})}(T+1;h^{t+1})\\
			&=\left(\prod_{t=1}^T\Tr\left[M_{A_t}^{t;a_t}\rho_{A_t}^{h_t^{T+1}}\right]\right)\widetilde{\mathcal{P}}_{E_0\to E_{T+1}}^{(\mathsf{E}),T;h^{T+1}}(\sigma_{E_0}).
		\end{align}
		Now, let us define the following operators:
		\begin{equation}\label{eq-QDP_UB2_pf1}
			Z_{E_0}^{t;h^t,a_t}\coloneqq\left(\prod_{j=1}^t\Tr\left[M_{A_j}^{j;a_j}\rho_{A_j}^{h_j^t}\right]\right)\sigma_{E_0}\quad\forall~1\leq t\leq T.
		\end{equation}
		These operators are positive semi-definite by definition. Furthermore, because $\{M_{A_t}^{t;a_t}\}_{a_t\in\mathcal{A}}$ is POVM for all $1\leq t\leq T$, so that $\sum_{a_t\in\mathcal{A}}M_{A_t}^{t;a_t}=\mathbbm{1}_{A_t}$, we obtain
		\begin{equation}
			\sum_{a_t\in\mathcal{A}}Z_{E_0}^{t;h^t,a_t}=\left(\sum_{a_t\in\mathcal{A}}\Tr\left[M_{A_t}^{t;a_t}\rho_{A_t}^{h^t}\right]\right)\left(\prod_{j=1}^{t-1}\Tr\left[M_{A_j}^{j;a_j}\rho_{A_j}^{h^t_j}\right]\right)\sigma_{E_0}=Z_{E_0}^{t-1;h^t_{t-1},a_{t-1}},
		\end{equation}
		for all $2\leq t\leq T$. For $t=1$, we have
		\begin{equation}
			\sum_{a_1\in\mathcal{A}} Z_{E_0}^{1;x_1,a_1}=\left(\sum_{a_1\in\mathcal{A}_1}\Tr\left[M_{A_1}^{1;a_1}\rho_{A_1}^{x_1}\right]\right)\sigma_{E_0}=\sigma_{E_0}.
		\end{equation}
		The operators $Z_{E_0}^{t;h^t,a_t}$ as defined in \eqref{eq-QDP_UB2_pf1} are therefore a feasible choice for the variables in the SDP in \eqref{eq-QDP_UB2_SDP}. Furthermore, because
		\begin{equation}
			\widetilde{\sigma}_{E_{T+1}}^{(\mathsf{E},(T,\pi_T))}(T+1;h^{T+1})=\widetilde{\mathcal{P}}_{E_0\to E_{T+1}}^{(\mathsf{E}),T;h^{T+1}}\left(Z_{E_0}^{T;h_T^{T+1},a_T}\right),
		\end{equation}
		we have that
		\begin{align}
			F(\mathsf{E},(T,\pi_T))&=\sum_{h^{T+1}\in\Omega(T+1)}\Tr\left[\left(\widetilde{\mathcal{R}}_{E_{T+1}}^{T;h^{T+1}}\circ\widetilde{\mathcal{P}}_{E_0\to E_{T+1}}^{(\mathsf{E}),T;h^{T+1}}\right)\left(Z_{E_0}^{T;h^{T+1}_T,a_T}\right)\right]\\
			&=\sum_{\substack{h^T\in\Omega(T)\\a_T\in\mathcal{A}\\x_{T+1}\in\mathcal{X}}}\Tr\left[\left(\widetilde{\mathcal{R}}_{E_{T+1}}^{T;h^T,a_T,x_{T+1}}\circ\widetilde{\mathcal{P}}_{E_0\to E_{T+1}}^{(\mathsf{E}),T;h^T,a_T,x_{T+1}}\right)\left(Z_{E_0}^{T;h^T,a_T}\right)\right].
		\end{align}
		We can thus conclude that the optimal value of the SDP in \eqref{eq-QDP_UB2_SDP} is bounded from below by $F(\mathsf{E},(T,\pi_T))$, as required.
	\end{proof}
	
	Observe that the number of variables in the SDP in \eqref{eq-QDP_UB2_SDP} that defines the upper bound $F_{\text{UB2}}$ is
	\begin{equation}
		\sum_{t=1}^T |\Omega(t)||\mathcal{A}|=\sum_{t=1}^T (|\mathcal{X}||\mathcal{A}|)^t=\frac{|\mathcal{X}||\mathcal{A}|\left(1-\left(|\mathcal{X}||\mathcal{A}|\right)^T\right)}{1-|\mathcal{X}||\mathcal{A}|},
	\end{equation}
	where each term in the sum arises because there are $|\Omega(t)||\mathcal{A}|$ operators of the form $Z^{t;h^t,a_t}$. Therefore, although $F_{\text{UB2}}$ can be computed using a semi-definite program, it is not very efficient due to the extremely high number of variables. Similarly, the upper bound $F_{\text{UB1}}$ is not very efficient, because it involves a search over all histories in $\Omega(T)$. Calculating these two upper bounds is therefore essentially just as time-efficient as computing the exact reward, which is to say not very efficient.
	
	In order to obtain efficiently computable upper bounds, we now use the technique described in \cite[Section~4.3]{VW16}. This technique, along with some simplifying assumptions on the transition and reward maps of the environment, leads to simple, efficient efficient upper bounds that are computable using semi-definite programs (SDPs).
	
	\begin{figure}
		\centering
		\includegraphics[scale=1.25]{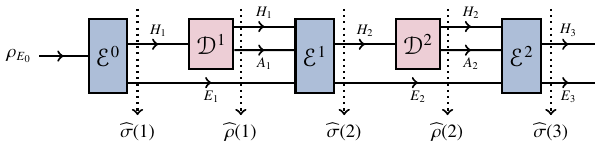}
		\caption{A quantum decision process up to horizon time $T=2$. In addition to the classical-quantum states $\widehat{\sigma}(t)$ that we have already defined in \eqref{eq-QDP_cq_state_0}, we define the classical-quantum states $\widehat{\rho}(t)$ immediately after the decision channels.}\label{fig-QDP_alt}
	\end{figure}
	
	In Figure~\ref{fig-QDP_alt}, we have shown a QDP with horizon time $T=2$. As pointed out in \cite[Section~4.3]{VW16}, the following constraints should be satisfied by the states $\widehat{\sigma}(t)$ and $\widehat{\rho}(t)$ before and after the decision channel $\mathcal{D}^t$ for all $1\leq t\leq T$:
	\begin{equation}\label{eq-QDP_intermediate_state_constraints}
		\Tr_{H_tA_t}\left[\widehat{\rho}_{H_tA_tE_t}(t)\right]=\Tr_{H_t}\left[\widehat{\sigma}_{H_tE_t}(t)\right]\quad\forall~1\leq t\leq T.
	\end{equation}
	These constraints arise from the simple fact that the reduced state of system $E_t$ obtained from $\widehat{\rho}_{H_tA_tE_t}(t)$ and $\widehat{\sigma}_{H_tE_t}(t)$ should be unchanged because the decision channel $\mathcal{D}^t$ does not act on it, as we can clearly see in Figure~\ref{fig-QDP_alt}. From these constraints, we obtain the following result.
	
	\begin{theorem}[QDP Upper Bounds 3 \& 4]
		Let $\mathsf{Q}=(\mathsf{E},\mathsf{A})$ be a quantum decision process, with $\mathsf{A}=(T,\pi_T)$, $T<\infty$, and $\pi_T$ an arbitrary $T$-step policy. Let the environment $\mathsf{E}$ be such that the reward maps and the transition maps satisfy
		\begin{align}
			\widetilde{\mathcal{R}}_{E_{t+1}}^{t;h^t,a_t,x_{t+1}}&=\widetilde{\mathcal{R}}_{E_{t+1}}^{t;a_t,x_{t+1}}\quad \forall~1\leq t\leq T,\,h^t\in\Omega(t),\label{eq-reward_map_assumption}\\
			\mathcal{T}_{E_t\to E_{t+1}}^{t;h^t,a_t,x_{t+1}}&=\mathcal{T}_{E_t\to E_{t+1}}^{t;a_t,x_{t+1}}\quad\forall~1\leq t\leq T,\,h^t\in\Omega(t).\label{eq-transition_map_assumption}
		\end{align}
		Then,
		\begin{equation}\label{eq-QDP_UB3_ineq}
			F(\mathsf{E},(T,\pi_T))\leq F_{\text{UB3}}(\mathsf{E},T),
		\end{equation}
		where $F_{\text{UB3}}(\mathsf{E},T)$ is the solution to the following optimization problem:
		\begin{equation}\label{eq-QDP_UB3_SDP}
			\begin{array}{l l}
				\text{maximize} & \displaystyle \sum_{\substack{a_T\in\mathcal{A}\\x_{T+1}\in\mathcal{X}}}\Tr\left[\left(\mathcal{M}_{A_T\to\varnothing}^{T;a_T}\otimes\widetilde{\mathcal{R}}_{E_{T+1}}^{T;a_T,x_{T+1}}\circ\mathcal{T}_{E_T\to E_{T+1}}^{T;a_T,x_{T+1}}\right)\left(Y_{A_TE_T}^T\right)\right]\\[1.2cm]
				\text{subject to} & \displaystyle Y_{A_tE_t}^t\geq 0 \quad\forall~T\geq t\geq 1,\\[0.5cm]
				& \displaystyle \Tr_{A_t}\left[Y_{A_tE_t}^t\right]=\sum_{\substack{a_{t-1}\in\mathcal{A}\\x_t\in\mathcal{X}}}\left(\mathcal{M}_{A_{t-1}\to\varnothing}^{t-1;a_{t-1}}\otimes\mathcal{T}_{E_{t-1}\to E_t}^{t-1;a_{t-1},x_t}\right)\left(Y_{A_{t-1}E_{t-1}}^{t-1}\right)\quad\forall~T\geq t\geq 2,\\[1cm]
				& \displaystyle \Tr_{A_1}\left[Y_{A_1E_1}^1\right]=\sum_{x_1\in\mathcal{X}}\mathcal{T}_{E_0\to E_1}^{1;x_1}(\sigma_{E_0}),
			\end{array}
		\end{equation}
		where we recall the definition of $\mathcal{M}_{A_j\to\varnothing}^{j;a_j}$ in \eqref{eq-QDP_agent_meas_map}. We also have
		\begin{equation}\label{eq-QDP_UB4_ineq}
			F(\mathsf{E},(T,\pi_T))\leq F_{\text{UB4}}(\mathsf{E},T),
		\end{equation}
		where $F_{\text{UB4}}(\mathsf{E},T)$ is the solution to the following optimization problem:
		\begin{equation}\label{eq-QDP_UB4_SDP}
			\begin{array}{l l}
				\text{maximize} & \displaystyle \sum_{\substack{a_T\in\mathcal{A}\\x_{T+1}\in\mathcal{X}}}\Tr\left[\left(\widetilde{\mathcal{R}}_{E_{T+1}}^{T;a_T,x_{T+1}}\circ\mathcal{T}_{E_T\to E_{T+1}}^{T;a_T,x_{T+1}}\right)\left(V_{E_T}^{T;a_T}\right)\right] \\ [1.2cm]
				\text{subject to} & \displaystyle V_{E_t}^{t;a_t}\geq 0\quad\forall~T\geq t\geq 1,\\[0.5cm]
				& \displaystyle \sum_{a_t\in\mathcal{A}_t}V_{E_t}^{t;a_t}=\sum_{\substack{a_{t-1}\in\mathcal{A}\\x_t\in\mathcal{X}}}\mathcal{T}_{E_{t-1}\to E_t}^{t-1;a_{t-1},x_t}\left(V_{E_{t-1}}^{t-1;a_{t-1}}\right)\quad\forall~T\geq t\geq 2,\\[1cm]
				& \displaystyle \sum_{a_1\in\mathcal{A}}V_{E_1}^{1;a_1}=\sum_{x_1\in\mathcal{X}}\mathcal{T}_{E_0\to E_1}^{0;x_1}(\sigma_{E_0}).
			\end{array}
		\end{equation}
	\end{theorem}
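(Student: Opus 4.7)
The plan is to follow the SDP-relaxation technique for quantum combs described in Ref.~\cite{VW16}. For each $T$-step policy $\pi_T$, I will identify natural policy-induced operators encoding the reduced classical-quantum state of the decision process at each time step, verify that they satisfy the linear constraints appearing in \eqref{eq-QDP_UB3_SDP} and \eqref{eq-QDP_UB4_SDP}, and verify that the corresponding objective functions reproduce $F(\mathsf{E},(T,\pi_T))$. Since the SDPs optimize over all positive semi-definite operators satisfying those linear constraints---a strictly larger set than the one swept out by actual policies---their optimal values will upper-bound $F(\mathsf{E},(T,\pi_T))$ uniformly in $\pi_T$.

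For UB3, I would introduce the policy-induced operators
\begin{equation*}
Y^t_{A_t E_t} \coloneqq \sum_{h^t \in \Omega(t)} \rho_{A_t}^{h^t} \otimes \widetilde{\sigma}_{E_t}^{(\mathsf{E},(T,\pi_T))}(t; h^t),
\end{equation*}
which are manifestly positive semi-definite. Using the explicit form of $\widetilde{\sigma}_{E_t}^{(\mathsf{E},(T,\pi_T))}(t;h^t)$ from Proposition~\ref{prop-QDP_cq_state} together with the Markovianity assumptions \eqref{eq-reward_map_assumption}--\eqref{eq-transition_map_assumption}, I would then verify the partial-trace recursion
\begin{equation*}
\Tr_{A_t}[Y^t_{A_t E_t}] = \sum_{\substack{a_{t-1}\in\mathcal{A}\\x_t\in\mathcal{X}}} \bigl(\mathcal{M}^{t-1;a_{t-1}}_{A_{t-1}\to\varnothing} \otimes \mathcal{T}^{t-1;a_{t-1},x_t}_{E_{t-1}\to E_t}\bigr)(Y^{t-1}_{A_{t-1}E_{t-1}})
\end{equation*}
by collapsing the history sum $\sum_{h^t} = \sum_{h^{t-1},a_{t-1},x_t}$, absorbing the multiplicative factors $\Tr[M^{t-1;a_{t-1}}_{A_{t-1}}\rho^{h^{t-1}}_{A_{t-1}}]$ into $\mathcal{M}^{t-1;a_{t-1}}_{A_{t-1}\to\varnothing}$ and using the POVM normalization $\sum_{a_t}M^{t;a_t}_{A_t} = \mathbbm{1}_{A_t}$. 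The initial constraint follows from $\widetilde{\sigma}_{E_1}(1;x_1) = \mathcal{T}^{0;x_1}_{E_0\to E_1}(\sigma_{E_0})$. Finally, the objective evaluation uses the formula \eqref{eq-QDP_policy_eval_pf1} and the Markovianity of $\widetilde{\mathcal{R}}$ and $\mathcal{T}$ to reduce $F(\mathsf{E},(T,\pi_T))$ to an expression in $Y^T_{A_T E_T}$ alone, matching the objective of \eqref{eq-QDP_UB3_SDP}.

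For UB4, I would push the reduction one step further by defining $V^{t;a_t}_{E_t} \coloneqq \mathcal{M}^{t;a_t}_{A_t\to\varnothing}(Y^t_{A_t E_t})$. Positivity is inherited from $Y^t$, and the constraint $\sum_{a_t} V^{t;a_t}_{E_t} = \sum_{a_{t-1},x_t}\mathcal{T}^{t-1;a_{t-1},x_t}_{E_{t-1}\to E_t}(V^{t-1;a_{t-1}}_{E_{t-1}})$ is immediate from $\sum_{a_t}\mathcal{M}^{t;a_t}_{A_t\to\varnothing} = \Tr_{A_t}$ applied to the recursion already established for $Y^t$. The objective of \eqref{eq-QDP_UB4_SDP} is then obtained by rewriting the UB3 objective as $\sum_{a_T,x_{T+1}}\Tr[(\widetilde{\mathcal{R}}^{T;a_T,x_{T+1}}_{E_{T+1}} \circ \mathcal{T}^{T;a_T,x_{T+1}}_{E_T\to E_{T+1}})(V^{T;a_T}_{E_T})]$. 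The main obstacle I anticipate is bookkeeping rather than conceptual: one must track carefully that the Markovianity hypotheses \eqref{eq-reward_map_assumption}--\eqref{eq-transition_map_assumption} are exactly what permits the history-sum to be absorbed into a single policy-dependent operator at each time. Without them, the transition and reward maps retain explicit $h^t$-dependence and the collapse into $Y^t$ (resp.\ $V^{t;a_t}$) fails, which explains why these assumptions appear in the statement. Once the bookkeeping is in place, the upper bounds \eqref{eq-QDP_UB3_ineq} and \eqref{eq-QDP_UB4_ineq} follow immediately from the relaxation principle.
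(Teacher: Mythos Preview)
Your proposal is correct and follows essentially the same approach as the paper: you construct the identical feasible point $Y^t_{A_tE_t}=\sum_{h^t}\rho_{A_t}^{h^t}\otimes\widetilde{\sigma}_{E_t}^{(\mathsf{E},\mathsf{A})}(t;h^t)$ for UB3, and your $V^{t;a_t}_{E_t}=\mathcal{M}^{t;a_t}_{A_t\to\varnothing}(Y^t_{A_tE_t})$ coincides with the paper's $V^{t;a_t}_{E_t}=\sum_{h^t}\Tr[M^{t;a_t}_{A_t}\rho^{h^t}_{A_t}]\,\widetilde{\sigma}_{E_t}(t;h^t)$. The only difference is stylistic: the paper derives the UB4 feasible point from scratch via the intermediate states $\widehat{\tau}_{H_t\overline{A}_tE_t}(t)$, whereas you obtain it more economically by applying $\mathcal{M}^{t;a_t}_{A_t\to\varnothing}$ directly to the UB3 feasible point---a minor streamlining of the same argument.
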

	
	\begin{proof}
		We start by proving the inequality in \eqref{eq-QDP_UB3_ineq}. First, notice that for $2\leq t\leq T$, the condition in \eqref{eq-QDP_intermediate_state_constraints} is equivalent to
		\begin{equation}\label{eq-QDP_intermediate_state_constraints_alt}
			\Tr_{H_tA_t}\left[\widehat{\rho}_{H_tA_tE_t}(t)\right]=\Tr_{H_t}\left[\mathcal{E}_{H_{t-1}A_{t-1}E_{t-1}\to H_tE_t}^{t-1}\left(\widehat{\rho}_{H_{t-1}A_{t-1}E_{t-1}}(t-1)\right)\right].
		\end{equation}
		The operators $\widehat{\rho}_{H_tA_tE_t}(t)$ have the form
		\begin{equation}\label{eq-QDP_UB34_pf2}
			\widehat{\rho}_{H_tA_tE_t}(t)=\sum_{h^t\in\Omega(t)}\ket{h^t}\bra{h^t}_{H_t}\otimes\rho_{A_t}^{h^t}\otimes\widetilde{\sigma}_{E_t}^{(\mathsf{E},\mathsf{A})}(t;h^t).
		\end{equation}
		Now, let
		\begin{equation}\label{eq-QDP_UB34_pf1}
			S_{A_tE_t}^{t;h^t}=\rho_{A_t}^{h^t}\otimes\widetilde{\sigma}_{E_t}^{(\mathsf{E},\mathsf{A})}(t;h^t).
		\end{equation}
		for all $1\leq t\leq T$ and $h^t\in\Omega(t)$. Using this, we obtain
		\begin{multline}
			\mathcal{E}_{H_{t-1}A_{t-1}E_{t-1}\to H_tE_t}^{t-1}\left(\widehat{\rho}_{H_{t-1}A_{t-1}E_{t-1}}(t-1)\right)\\=\sum_{\substack{h^{t-1}\in\Omega(t-1)\\a_{t-1}\in\mathcal{A}\\x_t\in\mathcal{X}}}\ket{h^{t-1},a_{t-1},x_t}\bra{h^{t-1},a_{t-1},x_t}_{H_t}\otimes\left(\mathcal{M}_{A_{t-1}\to\varnothing}^{t-1;a_{t-1}}\otimes\mathcal{T}_{E_{t-1}\to E_t}^{t-1;h^{t-1},a_{t-1},x_t}\right)\left(S_{A_{t-1}E_{t-1}}^{t-1;h^{t-1}}\right).
		\end{multline}
		Therefore, the condition in \eqref{eq-QDP_intermediate_state_constraints_alt} implies that
		\begin{align}
			\sum_{h^t\in\Omega(t)}\Tr_{A_t}\left[S_{A_tE_t}^{t;h^t}\right]&=\sum_{\substack{h^{t-1}\in\Omega(t-1)\\a_{t-1}\in\mathcal{A}\\x_t\in\mathcal{X}}}\left(\mathcal{M}_{A_{t-1}\to\varnothing}^{t-1;a_{t-1}}\otimes\mathcal{T}_{E_{t-1}\to E_t}^{t-1;h^{t-1},a_{t-1},x_t}\right)\left(S_{A_{t-1}E_{t-1}}^{t-1;h^{t-1}}\right)\label{eq-QDP_UB34_pf4}\\
			&=\sum_{\substack{a_{t-1}\in\mathcal{A}\\x_t\in\mathcal{X}}}\left(\mathcal{M}_{A_{t-1}\to\varnothing}^{t-1;a_{t-1}}\otimes\mathcal{T}_{E_{t-1}\to E_t}^{t-1;a_{t-1},x_t}\right)\left(\sum_{h^{t-1}\in\Omega(t-1)}S_{A_{t-1}E_{t-1}}^{t-1;h^{t-1}}\right)
		\end{align}
		for all $2\leq t\leq T$, where in the second line we have used the assumption that $\mathcal{T}_{E_{t-1}\to E_t}^{t-1;h^{t-1},a_{t-1},x_t}=\mathcal{T}_{E_{t-1}\to E_t}^{t-1;a_{t-1},x_t}$. For $t=1$, we obtain
		\begin{equation}
			\sum_{x_1\in\mathcal{X}}\Tr_{A_1}\left[S_{A_1E_1}^{1;x_1}\right]=\sum_{x_1\in\mathcal{X}}\mathcal{T}_{E_0\to E_1}^{0;x_1}(\sigma_{E_0}).
		\end{equation}
		Using the fact that
		\begin{equation}
			\widetilde{\sigma}_{E_{T+1}}^{(\mathsf{E},\mathsf{A})}(T+1;h^T,a_T,x_{T+1})=\left(\mathcal{M}_{A_T\to\varnothing}^{T;a_T}\otimes\mathcal{T}_{E_T\to E_{T+1}}^{T;a_T,x_{T+1}}\right)\left(S_{A_TE_T}^{T;h^T}\right),
		\end{equation}
		along with the assumption that $\widetilde{\mathcal{R}}_{E_{T+1}}^{T;h^T,a_T,x_{T+1}}=\widetilde{\mathcal{R}}_{E_{T+1}}^{T;a_T,x_{T+1}}$, we obtain
		\begin{equation}
			F(\mathsf{E},\mathsf{A})=\sum_{\substack{a_T\in\mathcal{A}\\x_{T+1}\in\mathcal{X}}}\Tr\left[\left(\mathcal{M}_{A_T\to\varnothing}^{T;a_T}\otimes\widetilde{\mathcal{R}}_{E_{T+1}}^{T;a_T,x_{T+1}}\circ\mathcal{T}_{E_T\to E_{T+1}}^{T;a_T,x_{T+1}}\right)\left(\sum_{h^T\in\Omega(T)}S_{A_TE_T}^{T;h^T}\right)\right].
		\end{equation}
		Now, observe that in both the objective function and the constraints, only the sums $\sum_{h^t\in\Omega(t)}S_{A_tE_t}^{t;h^t}$ are involved. This means that, for the purpose of optimization, it suffices to consider the variables
		\begin{equation}\label{eq-QDP_UB34_pf6}
			Y_{A_tE_t}^t\coloneqq\sum_{h^t\in\Omega(t)}S_{A_tE_t}^{t;h^t}.
		\end{equation}
		Therefore, we see that the choice in \eqref{eq-QDP_UB34_pf1} leads to the feasible choice in \eqref{eq-QDP_UB34_pf6} for the variables of the SDP in \eqref{eq-QDP_UB3_SDP} that defines $F_{\text{UB3}}(\mathsf{E},T)$. We can thus conclude that $F(\mathsf{E},\mathsf{A})\leq F_{\text{UB3}}(\mathsf{E},T)$, as required.
		
		The proof of \eqref{eq-QDP_UB4_ineq} is similar. For this, we consider the following classical quantum states:
		\begin{align}
			\widehat{\tau}_{H_t\overline{A}_tE_t}(t)&\coloneqq\mathcal{M}_{A_t\to\overline{A}_t}^t(\widehat{\rho}_{H_tA_tE_t}(t))\\
			&=\sum_{h^t\in\Omega(t)}\sum_{a_t\in\mathcal{A}}\ket{h^t}\bra{h^t}_{H_t}\otimes\ket{a_t}\bra{a_t}_{\overline{A}_t}\otimes\Tr\left[M_{A_t}^{t;a_t}\rho_{A_t}^{h^t}\right]\widetilde{\sigma}_{E_t}^{(\mathsf{E},\mathsf{A})}(t;h^t),
		\end{align}
		where $\mathcal{M}_{A_t\to\overline{A}_t}^t(\cdot)\coloneqq\sum_{a_t\in\mathcal{A}}\ket{a_t}\bra{a_t}_{\overline{A}_t}\Tr\left[M_{A_t}^{t;a_t}\rho_{A_t}^{h^t}\right]$, and for the second line we have used \eqref{eq-QDP_UB34_pf2}. Then, because the channel $\mathcal{M}_{A_t\to\overline{A}}^t$ does not act on the system $E_t$, we obtain the following condition analogous to the one in \eqref{eq-QDP_intermediate_state_constraints}:
		\begin{equation}\label{eq-QDP_intermediate_state_constraints_alt2}
			\Tr_{H_t}\left[\widehat{\sigma}_{H_t E_t}(t)\right]=\Tr_{H_t\overline{A}_tE_t}\left[\widehat{\tau}_{H_t\overline{A}_tE_t}(t)\right],
		\end{equation}
		for all $1\leq t\leq T$.

		Now, let
		\begin{equation}\label{eq-QDP_UB34_pf3}
			W_{E_t}^{t;h^t,a_t}\coloneqq \Tr\left[M_{A_t}^{t;a_t}\rho_{A_t}^{h^t}\right]\widetilde{\sigma}_{E_t}^{(\mathsf{E},\mathsf{A})}(t;h^t),
		\end{equation}
		for all $1\leq t\leq T$, all $h^t\in\Omega(t)$, and all $a_t\in\mathcal{A}$. Then, because $\{M_{A_t}^{t;a_t}\}_{a_t\in\mathcal{A}}$ is a POVM, so that $\sum_{a_t\in\mathcal{A}}M_{A_t}^{t;a_t}=\mathbbm{1}_{A_t}$, we have
		\begin{align}
			\sum_{a_t\in\mathcal{A}}W_{E_t}^{t;h^t,a_t}&=\widetilde{\sigma}_{E_t}^{(\mathsf{E},\mathsf{A})}(t;h^t)\\
			&=\Tr\left[M_{A_{t-1}}^{t-1;a_{t-1}}\rho_{A_{t-1}}^{h^t_{t-1}}\right]\mathcal{T}_{E_{t-1}\to E_t}^{t-1;h_{t-1}^t,a_{t-1},x_t}\left(\widetilde{\sigma}_{E_{t-1}}^{(\mathsf{E},\mathsf{A})}(t-1;h_{t-1}^t)\right)\\
			&=\mathcal{T}_{E_{t-1}\to E_t}^{t-1;h_{t-1}^t,a_{t-1},x_t}\left(W_{E_{t-1}}^{t-1;h_{t-1}^t,a_{t-1}}\right),
		\end{align}
		which holds for all $2\leq t\leq T$. This condition, along with the one in \eqref{eq-QDP_intermediate_state_constraints_alt2}, leads to
		\begin{align}
			\sum_{h^t\in\Omega(t)}\sum_{a_t\in\mathcal{A}}W_{E_t}^{t;h^t,a_t}&=\sum_{h^{t-1}\in\Omega(t-1)}\sum_{\substack{a_{t-1}\in\mathcal{A}\\x_t\in\mathcal{X}}}\mathcal{T}_{E_{t-1}\to E_t}^{t-1;a_{t-1},x_t}\left(W_{E_{t-1}}^{t-1;h^{t-1},a_{t-1}}\right)\\
			&=\sum_{\substack{a_{t-1}\in\mathcal{A}\\x_t\in\mathcal{X}}}\mathcal{T}_{E_{t-1}\to E_t}^{t-1;a_{t-1},x_t}\left(\sum_{h^{t-1}\in\Omega(t-1)}W_{E_{t-1}}^{t-1;h^{t-1},a_{t-1}}\right),
		\end{align}
		for all $2\leq t\leq T$, where we have used the assumption $\mathcal{T}_{E_{t-1}\to E_t}^{t-1;h^{t-1},a_{t-1},x_t}=\mathcal{T}_{E_{t-1}\to E_t}^{t-1;a_{t-1},x_t}$. For $t=1$, we obtain
		\begin{equation}
			\sum_{x_1\in\mathcal{X}}\sum_{a_1\in\mathcal{A}}W_{E_1}^{1;x_1,a_1}=\sum_{x_1\in\mathcal{X}}\mathcal{T}_{E_0\to E_1}^{0;x_1}(\sigma_{E_0}).
		\end{equation}
		Then, using the fact that
		\begin{equation}
			\widetilde{\sigma}_{E_{T+1}}^{(\mathsf{E},\mathsf{A})}(T+1;h^T,a_T,x_{T+1})=\mathcal{T}_{E_T\to E_{T+1}}^{T;a_T,x_{T+1}}\left(W_{E_T}^{T;h^T,a_T}\right),
		\end{equation}
		along with the assumption that $\widetilde{\mathcal{R}}_{E_{T+1}}^{T;h^T,a_T,x_{T+1}}=\widetilde{\mathcal{R}}_{E_{T+1}}^{T;a_T,x_{T+1}}$, we obtain
		\begin{align}
			F(\mathsf{E},\mathsf{A})&=\sum_{h^T\in\Omega(T)}\sum_{\substack{a_T\in\mathcal{A}\\x_{T+1}\in\mathcal{X}}}\Tr\left[\left(\widetilde{\mathcal{R}}_{E_{T+1}}^{T;a_T,x_{T+1}}\circ\mathcal{T}_{E_T\to E_{T+1}}^{T;a_T,x_{T+1}}\right)\left(W_{E_T}^{t;h^T}\right)\right]\\
			&=\sum_{\substack{a_T\in\mathcal{A}\\x_{T+1}\in\mathcal{X}}}\Tr\left[\left(\widetilde{\mathcal{R}}_{E_{T+1}}^{T;a_T,x_{T+1}}\circ\mathcal{T}_{E_T\to E_{T+1}}^{T;a_T,x_{T+1}}\right)\left(\sum_{h^T\in\Omega(T)}W_{E_T}^{T;h^T,a_T}\right)\right].
		\end{align}
		Now, observe that in both the objective function and in the constraints, only the sums $\sum_{h^t\in\Omega(t)}W_{E_t}^{t;h^t,a_t}$ are involved. This means that, for the purpose of optimization, it suffices to consider the variables
		\begin{equation}\label{eq-QDP_UB34_pf5}
			V_{E_t}^{t;a_t}\coloneqq\sum_{h^t\in\Omega(t)} W_{E_t}^{t;h^t,a_t}.
		\end{equation}
		Therefore, we see that the choice in \eqref{eq-QDP_UB34_pf3} leads to the feasible choice in \eqref{eq-QDP_UB34_pf5} for the variables of the SDP in \eqref{eq-QDP_UB4_SDP} that defines $F_{\text{UB4}}(\mathsf{E},T)$. We can thus conclude that $F(\mathsf{E},\mathsf{A})\leq F_{\text{UB4}}(\mathsf{E},T)$, as required.
	\end{proof}
	
	Observe that there are only $T$ variables in the SDP in \eqref{eq-QDP_UB3_SDP}, and there are $T|\mathcal{A}|$ variables in the SDP in \eqref{eq-QDP_UB4_SDP}. The SDPs in \eqref{eq-QDP_UB3_SDP} and \eqref{eq-QDP_UB4_SDP} thus provide upper bounds that are efficient to compute relative to the first two upper bounds considered in this section; however, we note that \eqref{eq-QDP_UB3_SDP} and \eqref{eq-QDP_UB4_SDP} are upper bounds in the case that the reward maps and transition maps satisfy \eqref{eq-reward_map_assumption} and \eqref{eq-transition_map_assumption}, respectively. 
	
	Let us derive one more upper bound that applies when the assumptions in \eqref{eq-reward_map_assumption} and \eqref{eq-transition_map_assumption} are relaxed slightly.
	
	\begin{theorem}[QDP Upper Bound 5]
		Let $\mathsf{Q}=(\mathsf{E},\mathsf{A})$ be an arbitrary quantum decision process, with $\mathsf{A}=(T,\pi_T)$, $T<\infty$, and $\pi_T$ an arbitrary $T$-step policy. Let the environment $\mathsf{E}$ be such that the reward maps and the transition maps satisfy
		\begin{align}
			\widetilde{\mathcal{R}}_{E_{t+1}}^{t;h^t,a_t,x_{t+1}}&=\widetilde{\mathcal{R}}_{E_{t+1}}^{t;x_t,a_t,x_{t+1}}\quad\forall~1\leq t\leq T,\,h^t\in\Omega(t),\\
			\mathcal{T}_{E_t\to E_{t+1}}^{t;h^t,a_t,x_{t+1}}&=\mathcal{T}_{E_t\to E_{t+1}}^{t;x_t,a_t,x_{t+1}}\quad\forall~1\leq t\leq T,\, h^t\in\Omega(t).
		\end{align}
		Then,
		\begin{equation}
			F(\mathsf{E},(T,\pi_T))\leq F_{\text{UB5}}(\mathsf{E},T),
		\end{equation}
		where $F_{\text{UB5}}(\mathsf{E},T)$ is the solution to the following semi-definite program:
		\begin{equation}\label{eq-QDP_UB5_SDP}
			\begin{array}{l l}
				\text{maximize} & \displaystyle \sum_{\substack{x_T,x_{T+1}\in\mathcal{X}\\a_T\in\mathcal{A}}}\Tr\left[\left(\mathcal{M}_{A_T\to\varnothing}^{T;a_T}\otimes\widetilde{\mathcal{R}}_{E_{T+1}}^{T;a_T,x_{T+1}}\circ\mathcal{T}_{E_T\to E_{T+1}}^{T;x_T,a_T,x_{T+1}}\right)\left(K_{A_TE_T}^{T;x_T}\right)\right]\\[1.3cm]
				\text{subject to} & \displaystyle K_{A_tE_t}^{t;x_t}\geq 0 \quad\forall~T\geq t\geq 1,\,x_t\in\mathcal{X},\\[0.5cm]
				& \displaystyle \sum_{x_t\in\mathcal{X}}\Tr_{A_t}\left[K_{A_tE_t}^{t;x_t}\right]=\sum_{\substack{x_{t-1},x_t\in\mathcal{X}\\a_{t-1}\in\mathcal{A}}}\left(\mathcal{M}_{A_{t-1}\to\varnothing}^{t-1;a_{t-1}}\otimes\mathcal{T}_{E_{t-1}\to E_t}^{t-1;x_{t-1},a_{t-1},x_t}\right)\left(K_{A_{t-1}E_{t-1}}^{t-1;x_{t-1}}\right)\quad\forall~T\geq t\geq 2,\\[1.1cm]
				& \displaystyle \sum_{x_1\in\mathcal{X}}\Tr_{A_1}\left[K_{A_1E_1}^{1;x_1}\right]=\sum_{x_1\in\mathcal{X}}\mathcal{T}_{E_0\to E_1}^{1;x_1}(\sigma_{E_0}).
			\end{array}
		\end{equation}
	\end{theorem}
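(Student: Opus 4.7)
The strategy parallels the proof of UB3 already carried out in the paper, with the key modification that each auxiliary operator must retain an explicit index for the current observation $x_t$, since now both the transition maps and the reward maps depend on $x_t$ (not only on $a_t$ and $x_{t+1}$). To begin, I would recycle the construction $S_{A_tE_t}^{t;h^t} \coloneqq \rho_{A_t}^{h^t} \otimes \widetilde{\sigma}_{E_t}^{(\mathsf{E},\mathsf{A})}(t;h^t)$ from the UB3 proof, which populates the classical-quantum state $\widehat{\rho}_{H_tA_tE_t}(t)$ of Figure~\ref{fig-QDP_alt}. The candidate feasible point for the SDP \eqref{eq-QDP_UB5_SDP} is then obtained by partially collapsing the history sum so that only the last observation is kept explicit:
\begin{equation}
	K_{A_tE_t}^{t;x_t} \;\coloneqq\; \sum_{\substack{h^t\in\Omega(t)\\X(t)(h^t)=x_t}} S_{A_tE_t}^{t;h^t}, \qquad 1\leq t\leq T.
\end{equation}
Positive semi-definiteness is immediate since each summand is positive semi-definite.

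Next, I would verify the linear constraints. Starting from the consistency relation \eqref{eq-QDP_intermediate_state_constraints_alt} of the UB3 proof, the identity \eqref{eq-QDP_UB34_pf4} can be rewritten, under the weakened assumption $\mathcal{T}_{E_{t-1}\to E_t}^{t-1;h^{t-1},a_{t-1},x_t}=\mathcal{T}_{E_{t-1}\to E_t}^{t-1;x_{t-1},a_{t-1},x_t}$, by grouping histories $h^{t-1}$ according to their final observation $x_{t-1}$. That grouping pulls the transition map out of the history sum while preserving its $x_{t-1}$ dependence, yielding
\begin{equation}
	\sum_{x_t\in\mathcal{X}}\Tr_{A_t}\!\left[K_{A_tE_t}^{t;x_t}\right] = \sum_{\substack{x_{t-1},x_t\in\mathcal{X}\\a_{t-1}\in\mathcal{A}}} \!\left(\mathcal{M}_{A_{t-1}\to\varnothing}^{t-1;a_{t-1}}\otimes\mathcal{T}_{E_{t-1}\to E_t}^{t-1;x_{t-1},a_{t-1},x_t}\right)\!\left(K_{A_{t-1}E_{t-1}}^{t-1;x_{t-1}}\right),
\end{equation}
which is exactly the second family of constraints in \eqref{eq-QDP_UB5_SDP}. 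The $t=1$ constraint follows analogously from $\Tr_{A_1}\!\left[S_{A_1E_1}^{1;x_1}\right]=\mathcal{T}_{E_0\to E_1}^{0;x_1}(\sigma_{E_0})$ after summing over $x_1$.

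For the objective, I would expand $F(\mathsf{E},(T,\pi_T))$ via the formula in \eqref{eq-QDP_policy_eval_pf1}, apply the history-independence of $\widetilde{\mathcal{R}}^T$ and $\mathcal{T}^T$ to strip the explicit $h^T$ from the reward and transition maps, and then group histories by their last observation $x_T$. Since the composition $\widetilde{\mathcal{R}}_{E_{T+1}}^{T;x_T,a_T,x_{T+1}}\circ\mathcal{T}_{E_T\to E_{T+1}}^{T;x_T,a_T,x_{T+1}}$ now depends on $h^T$ only through $x_T$, the history sum reduces to the sum defining $K_{A_TE_T}^{T;x_T}$, producing precisely the objective of \eqref{eq-QDP_UB5_SDP}. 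This shows that the choice above is feasible and attains the value $F(\mathsf{E},(T,\pi_T))$, so the SDP optimum is an upper bound.

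The main obstacle I anticipate is purely bookkeeping: carefully tracking which index (the current $x_t$ versus the previous $x_{t-1}$) is being eliminated at each step, so that the consistency relation \eqref{eq-QDP_intermediate_state_constraints_alt} produces the $x_{t-1}$-indexed right-hand side while leaving the $x_t$-indexed left-hand side intact. Apart from that, there is a minor notational subtlety in the statement of \eqref{eq-QDP_UB5_SDP}: the objective is written with $\widetilde{\mathcal{R}}_{E_{T+1}}^{T;a_T,x_{T+1}}$ (no $x_T$) while the theorem's hypothesis allows dependence on $x_T$; I would resolve this by adopting the more general reward $\widetilde{\mathcal{R}}_{E_{T+1}}^{T;x_T,a_T,x_{T+1}}$ throughout, which subsumes the displayed objective as a special case and is the form naturally produced by the construction.
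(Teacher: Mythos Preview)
Your proposal is correct and follows essentially the same approach as the paper: both define $K_{A_tE_t}^{t;x_t}$ as the partial sum of the $S_{A_tE_t}^{t;h^t}$ operators over all histories ending in $x_t$, then verify feasibility by grouping the consistency identity \eqref{eq-QDP_UB34_pf4} according to the final observation $x_{t-1}$ on the right-hand side (exploiting the Markovian assumption on the transition maps), and finally match the objective by the same grouping at time $T$. Your observation about the missing $x_T$ superscript on $\widetilde{\mathcal{R}}$ in the displayed objective of \eqref{eq-QDP_UB5_SDP} is also apt; the paper does not comment on it, but your resolution is the natural one.
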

	
	\begin{remark}
		Observe that the SDP in \eqref{eq-QDP_UB5_SDP} has $T|\mathcal{X}|$ variables.~\defqed
	\end{remark}
	
	\begin{proof}
		The proof is similar to the proof of \eqref{eq-QDP_UB3_ineq}. Let us recall \eqref{eq-QDP_UB34_pf4} from that proof:
		\begin{equation}\label{eq-QDP_UB5_pf0}
			\sum_{h^t\in\Omega(t)}\Tr_{A_t}\left[S_{A_tE_t}^{t;h^t}\right]=\sum_{\substack{h^{t-1}\in\Omega(t-1)\\a_{t-1}\in\mathcal{A}\\x_t\in\mathcal{X}}}\left(\mathcal{M}_{A_{t-1}\to\varnothing}^{t-1;a_{t-1}}\otimes\mathcal{T}_{E_{t-1}\to E_t}^{t-1;h^{t-1},a_{t-1},x_t}\right)\left(S_{A_{t-1}E_{t-1}}^{t-1;h^{t-1}}\right),
		\end{equation}
		where
		\begin{equation}\label{eq-QDP_UB5_pf1}
			S_{A_tE_t}^{t;h^t}=\rho_{A_t}^{h^t}\otimes\widetilde{\sigma}_{E_t}^{(\mathsf{E},\mathsf{A})}(t;h^t)
		\end{equation}
		for all $1\leq t\leq T$ and all $h^t\in \Omega(t)$. Now, by assumption, $\mathcal{T}_{E_{t-1}\to E_t}^{t-1;h^{t-1},a_{t-1},x_t}=\mathcal{T}_{E_{t-1}\to E_t}^{t-1;x_{t-1},a_{t-1},x_t}$, which means that we can write the \eqref{eq-QDP_UB5_pf0} as
		\begin{multline}
			\sum_{x_t\in\mathcal{X}}\Tr_{A_t}\left[\sum_{\substack{h^{t-1}\in\Omega(t-1)\\a_{t-1}\in\mathcal{A}}}S_{A_tE_t}^{t;h^{t-1},a_{t-1},x_t}\right]\\=\sum_{\substack{x_{t-1},x_t\in\mathcal{X}\\a_{t-1}\in\mathcal{A}}}\left(\mathcal{M}_{A_{t-1}\to\varnothing}^{t-1;a_{t-1}}\otimes\mathcal{T}_{E_{t-1}\to E_t}^{t-1;x_{t-1},a_{t-1},x_t}\right)\left(\sum_{\substack{h^{t-2}\in\Omega(t-1)\\a_{t-2}\in\mathcal{A}}}S_{A_{t-1}E_{t-1}}^{t-1;h^{t-2},a_{t-2},x_{t-1}}\right),
		\end{multline}
		which holds for all $2\leq t\leq T$. For $t=1$, we have
		\begin{equation}
			\sum_{x_1\in\mathcal{X}}\Tr_{A_1}\left[S_{A_1E_1}^{1;x_1}\right]=\sum_{x_1\in\mathcal{X}}\mathcal{T}_{E_0\to E_1}^{0;x_1}(\sigma_{E_0}).
		\end{equation}
		Furthermore, using the assumption that $\widetilde{\mathcal{R}}_{E_{t+1}}^{t;h^t,a_t,x_{t+1}}=\widetilde{\mathcal{R}}_{E_{t+1}}^{t;x_t,a_t,x_{t+1}}$, we have that
		\begin{equation}
			F(\mathsf{E},\mathsf{A})=\sum_{\substack{x_T,x_{T+1}\in\mathcal{X}\\a_T\in\mathcal{A}}}\Tr\left[\left(\mathcal{M}_{A_T\to\varnothing}^{T;a_T}\otimes\widetilde{\mathcal{R}}_{E_{T+1}}^{T;x_T,a_T,x_{T+1}}\circ\mathcal{T}_{E_T\to E_{T+1}}^{T;x_T,a_T,x_{T+1}}\right)\left(\sum_{\substack{h^{T-1}\in\Omega(T-1)\\a_{T-1}\in\mathcal{A}}}S_{A_TE_T}^{T;h^{T-1},a_{T-1},x_{T-1}}\right)\right].
		\end{equation}
		Now, in both the objective function and in the constraints, only the sums $\sum_{\substack{h^{t-1}\in\Omega(t-1)\\a_{t-1}\in\mathcal{A}}}S_{A_tE_t}^{t;h^{t-1},a_{t-1},x_t}$ are involved. This means that, for the purpose of optimization, it suffices to consider the variables
		\begin{equation}\label{eq-QDP_UB5_pf2}
			K_{A_tE_t}^{t;x_t}\coloneqq\sum_{\substack{h^{t-1}\in\Omega(t-1)\\a_{t-1}\in\mathcal{A}}} S_{A_tE_t}^{t;h^{t-1},a_{t-1},x_t}.
		\end{equation}
		Therefore, the choice in \eqref{eq-QDP_UB5_pf1} leads to the feasible choice in \eqref{eq-QDP_UB5_pf2} for the variables of the SDP in \eqref{eq-QDP_UB5_SDP} that defines $F_{\text{UB5}}(\mathsf{E},T)$. We can thus conclude that $F(\mathsf{E},\mathsf{A})\leq F_{\text{UB5}}(\mathsf{E},T)$, as required.
	\end{proof}
	

\chapter{DETAILS OF THE MEMORY-CUTOFF POLICY}\label{app-mem_cutoff_details}

	In this appendix, we go through the details of the memory-cutoff policy.
	
	Let us start by considering what the histories $h^t$ look like through an particular example. Consider an elementary link for which $t^{\star}=3$, and let us consider the values of the elementary link up to time $t=10$. Given that each elementary link request succeeds with probability $p$ and fails with probability $1-p$, in Table~\ref{table-link_cutoff_example} we write down the probability for each sequence of elementary link values according to the formula in \eqref{eq-hist_prob_general}. Note that we only include those histories that have non-zero probability (indeed, some sequences $h^t=(x_1,a_1,\dotsc,a_{t-1},x_t)\in\{0,1\}^{2t-1}=\Omega(t)$ will have zero probability under the memory-cutoff policy). We also include in the table the memory times $M^{t^{\star}}(t)$, which are calculated using the formula in \eqref{eq-mem_time_cutoff_policy}. Since the memory-cutoff policy is deterministic, it suffices to keep track only of the elementary link values $(x_1,\dotsc,x_t)$ and not of the action values, because the action values are given deterministically by the elementary link values. For the elementary link value sequences, we define two quantities that are helpful for obtaining analytic formulas for the figures of merit defined in Section~\ref{sec-practical_figures_merit}. The first quantity is $Y_1^{t^{\star}}(t)$, which we define to be the number of full blocks of ones (having length $t^{\star}+1$) in elementary link value sequences up to time $t-1$. The values that $Y_1^{t^{\star}}(t)$ can take are $0,1,\dotsc,\floor{\frac{t-1}{t^{\star}+1}}$ if $t^{\star}<\infty$, and 0 if $t^{\star}=\infty$. We also define the quantity $Y_2^{t^{\star}}(t)$ to be the number of trailing ones in elementary link value sequences up to time $t$. The values that $Y_2^{t^{\star}}(t)$ can take are $0,1,\dotsc,t^{\star}+1$ if $t^{\star}<\infty$, and $0,1,\dotsc,t$ if $t^{\star}=\infty$.
		
	\begin{table}
		\centering
		\caption{Elementary link value sequences $(x_1,x_2,\dotsc,x_{10})$ for an elementary link with $t^{\star}=3$ up to time $t=10$. The quantity $Y_1^{t^{\star}}(t)$ is the number of full blocks of ones in link value sequence up to time $t-1$, and $Y_2^{t^{\star}}(t)$ is the number of trailing ones in elementary link value sequence up to time $t$. $M^{t^{\star}}(t)$ is the memory time at time $t$, given by the formula in \eqref{eq-mem_time_cutoff_policy}.}\label{table-link_cutoff_example}
		\begin{tabular}{|m{0.25cm} m{0.25cm} m{0.25cm} m{0.25cm} m{0.25cm} m{0.25cm} m{0.25cm} m{0.25cm} m{0.25cm} m{0.4cm}||c|c|c|c|c}
			\hline $x_1$ & $x_2$ & $x_3$ & $x_4$ & $x_5$ & $x_6$ & $x_7$ & $x_8$ & $x_9$ & $x_{10}$ & $Y_1^{t^{\star}}(t)(h^t)$ & $Y_2^{t^{\star}}(t)(h^t)$ & $\Pr[H(t)=h^t]_{t^{\star}}$ & $M^{t^{\star}}(t)(h^t)$ \\ \hline\hline
			0 & 0 & 0 & 0 & 0 & 0 & 0 & 0 & 0 & 0 & 0 & 0 & $(1-p)^{10}$ & 3  \\ \hline
			1 & 1 & 1 & 1 & 0 & 0 & 0 & 0 & 0 & 0 & 1 & 0 & $p(1-p)^6$ & 3  \\
			0 & 1 & 1 & 1 & 1 & 0 & 0 & 0 & 0 & 0 & 1 & 0 & $p(1-p)^6$ & 3  \\
			0 & 0 & 1 & 1 & 1 & 1 & 0 & 0 & 0 & 0 & 1 & 0 & $p(1-p)^6$ & 3  \\
			0 & 0 & 0 & 1 & 1 & 1 & 1 & 0 & 0 & 0 & 1 & 0 & $p(1-p)^6$ & 3  \\
			0 & 0 & 0 & 0 & 1 & 1 & 1 & 1 & 0 & 0 & 1 & 0 & $p(1-p)^6$ & 3  \\
			0 & 0 & 0 & 0 & 0 & 1 & 1 & 1 & 1 & 0 & 1 & 0 & $p(1-p)^6$ & 3  \\ \hline
			1 & 1 & 1 & 1 & 1 & 1 & 1 & 1 & 0 & 0 & 2 & 0 & $p^2(1-p)^2$ & 3  \\
			1 & 1 & 1 & 1 & 0 & 1 & 1 & 1 & 1 & 0 & 2 & 0 & $p^2(1-p)^2$ & 3  \\
			0 & 1 & 1 & 1 & 1 & 1 & 1 & 1 & 1 & 0 & 2 & 0 & $p^2(1-p)^2$ & 3  \\ \hline
			0 & 0 & 0 & 0 & 0 & 0 & 0 & 0 & 0 & 1 & 0 & 1 & $p(1-p)^{9}$ & 0  \\
			0 & 0 & 0 & 0 & 0 & 0 & 0 & 0 & 1 & 1 & 0 & 2 & $p(1-p)^{8}$ & 1  \\
			0 & 0 & 0 & 0 & 0 & 0 & 0 & 1 & 1 & 1 & 0 & 3 & $p(1-p)^{7}$ & 2  \\
			0 & 0 & 0 & 0 & 0 & 0 & 1 & 1 & 1 & 1 & 0 & 4 & $p(1-p)^{6}$ & 3  \\ \hline
			1 & 1 & 1 & 1 & 0 & 0 & 0 & 0 & 0 & 1 & 1 & 1 & $p^2(1-p)^5$ & 0  \\
			0 & 1 & 1 & 1 & 1 & 0 & 0 & 0 & 0 & 1 & 1 & 1 & $p^2(1-p)^5$ & 0  \\
			0 & 0 & 1 & 1 & 1 & 1 & 0 & 0 & 0 & 1 & 1 & 1 & $p^2(1-p)^5$ & 0  \\
			0 & 0 & 0 & 1 & 1 & 1 & 1 & 0 & 0 & 1 & 1 & 1 & $p^2(1-p)^5$ & 0 \\
			0 & 0 & 0 & 0 & 1 & 1 & 1 & 1 & 0 & 1 & 1 & 1 & $p^2(1-p)^5$ & 0  \\
			0 & 0 & 0 & 0 & 0 & 1 & 1 & 1 & 1 & 1 & 1 & 1 & $p^2(1-p)^5$ & 0  \\ \hline
			1 & 1 & 1 & 1 & 0 & 0 & 0 & 0 & 1 & 1 & 1 & 2 & $p^2(1-p)^4$ & 1  \\
			0 & 1 & 1 & 1 & 1 & 0 & 0 & 0 & 1 & 1 & 1 & 2 & $p^2(1-p)^4$ & 1  \\
			0 & 0 & 1 & 1 & 1 & 1 & 0 & 0 & 1 & 1 & 1 & 2 & $p^2(1-p)^4$ & 1  \\
			0 & 0 & 0 & 1 & 1 & 1 & 1 & 0 & 1 & 1 & 1 & 2 & $p^2(1-p)^4$ & 1  \\
			0 & 0 & 0 & 0 & 1 & 1 & 1 & 1 & 1 & 1 & 1 & 2 & $p^2(1-p)^4$ & 1  \\ \hline
			1 & 1 & 1 & 1 & 0 & 0 & 0 & 1 & 1 & 1 & 1 & 3 & $p^2(1-p)^3$ & 2  \\
			0 & 1 & 1 & 1 & 1 & 0 & 0 & 1 & 1 & 1 & 1 & 3 & $p^2(1-p)^3$ & 2  \\
			0 & 0 & 1 & 1 & 1 & 1 & 0 & 1 & 1 & 1 & 1 & 3 & $p^2(1-p)^3$ & 2  \\
			0 & 0 & 0 & 1 & 1 & 1 & 1 & 1 & 1 & 1 & 1 & 3 & $p^2(1-p)^3$ & 2  \\ \hline
			1 & 1 & 1 & 1 & 0 & 0 & 1 & 1 & 1 & 1 & 1 & 4 & $p^2(1-p)^2$ & 3  \\
			0 & 1 & 1 & 1 & 1 & 0 & 1 & 1 & 1 & 1 & 1 & 4 & $p^2(1-p)^2$ & 3  \\
			0 & 0 & 1 & 1 & 1 & 1 & 1 & 1 & 1 & 1 & 1 & 4 & $p^2(1-p)^2$ & 3  \\ \hline
			1 & 1 & 1 & 1 & 1 & 1 & 1 & 1 & 0 & 1 & 2 & 1 & $p^3(1-p)$ & 0  \\
			1 & 1 & 1 & 1 & 0 & 1 & 1 & 1 & 1 & 1 & 2 & 1 & $p^3(1-p)$ & 0  \\
			0 & 1 & 1 & 1 & 1 & 1 & 1 & 1 & 1 & 1 & 2 & 1 & $p^3(1-p)$ & 0 \\ \hline
			1 & 1 & 1 & 1 & 1 & 1 & 1 & 1 & 1 & 1 & 2 & 2 & $p^3$ & 1 \\ \hline
		\end{tabular}
	\end{table}
	
	Using the random variables $Y_1^{t^{\star}}(t)$ and $Y_2^{t^{\star}}(t)$, along with the general formula in \eqref{eq-hist_prob_general}, we obtain the following formula for the probability of histories with non-zero probability.
	
	\begin{proposition}\label{prop-time_seq_prob}
		For all $t\geq 1$, $t^{\star}\in[0,\infty)$, $p\in[0,1]$, and histories $h^t=(x_1,a_1,x_2,\allowbreak a_2,\dotsc,a_{t-1},x_t)$ with non-zero probability,
		\begin{multline}\label{eq-time_seq_prob}
			\Pr[H(t)=h^t]_{t^{\star}}=p^{Y_1^{t^{\star}}(t)(h^t)}(1-p)^{t-(t^{\star}+1)Y_1^{t^{\star}}(t)(h^t)}\delta_{Y_2^{t^{\star}}(t)(h^t),0}\\+(1-\delta_{Y_2^{t^{\star}}(t)(h^t),0})p^{Y_1^{t^{\star}}(t)(h^t)+1}(1-p)^{t-Y_2^{t^{\star}}(t)(h^t)-(t^{\star}+1)Y_1^{t^{\star}}(t)(h^t)},
		\end{multline}
		where $Y_1^{t^{\star}}(t)(h^t)$ is defined to be the number of full blocks of ones of length $t^{\star}+1$ up to time $t-1$ in the sequence $(x_1,x_2,\dotsc,x_t)$ of elementary link values, and $Y_2^{t^{\star}}(t)(h^t)$ is defined to be the number of trailing ones in the sequence $(x_1,x_2,\dotsc,x_t)$. For $t^{\star}=\infty$,
		\begin{equation}\label{eq-time_seq_prob_infty}
			\Pr[H(t)=h^t]_{t^{\star}}=(1-p)^t\delta_{Y_2^{t^{\star}}(t)(h^t),0}+(1-\delta_{Y_2^{t^{\star}}(t)(h^t),0})p(1-p)^{t-Y_2^{t^{\star}}(t)(h^t)}.
		\end{equation}
	\end{proposition}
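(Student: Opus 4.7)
The plan is to start from the general history-probability formula~\eqref{eq-hist_prob_general} established in Theorem~\ref{prop-link_quantum_state}, namely
\begin{equation*}
\Pr[H(t)=h^t]_{\pi} = \left(\prod_{j=1}^{t-1} d_j(h_j^t)(a_j)\right) p^{N^{\text{succ}}(t)(h^t)} (1-p)^{N^{\text{req}}(t)(h^t) - N^{\text{succ}}(t)(h^t)}.
\end{equation*}
Because the memory-cutoff policy is deterministic, each factor $d_j^{t^\star}(h_j^t)(a_j)$ is either $0$ or $1$, and it equals $1$ precisely when the history is consistent with the policy, i.e.\ when the history has non-zero probability. Thus, for any such history, the proof reduces to expressing the two exponents $N^{\text{succ}}(t)(h^t)$ and $N^{\text{req}}(t)(h^t) - N^{\text{succ}}(t)(h^t)$ in terms of the block counts $Y_1^{t^\star}(t)(h^t)$ and $Y_2^{t^\star}(t)(h^t)$.

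The first structural step I would carry out is to show that every non-zero-probability history decomposes into an alternating sequence of ``fail runs'' (consecutive $X(j)=0$ entries) and ``success blocks'' in which $X(j)=1$ for exactly $t^\star+1$ consecutive steps, with the sole exception of a possible terminal success block of length $Y_2^{t^\star}(t)(h^t)\in\{1,\dotsc,t^\star+1\}$ if the sequence ends in ones. This follows by a short induction on $j$ using the deterministic rule that $A(j-1)=1$ exactly when $M^{t^\star}(j-1)=t^\star$, together with the Markovian transition probabilities \eqref{eq-cutoff_trans_prob1}--\eqref{eq-cutoff_trans_prob5}: once a successful request occurs, the policy forces $t^\star$ consecutive ``wait'' actions (each producing $X=1$ deterministically) before the next request is allowed, and any deviation from this pattern gives zero probability.

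With the decomposition in hand, the counting is routine. The number of successful requests equals the number of success blocks, which is $Y_1^{t^\star}(t)(h^t) + \mathbf{1}_{Y_2^{t^\star}(t)(h^t)>0}$, since each of the $Y_1$ completed blocks begins with a successful request, and the trailing ones (if any) contribute one more. The number of failed requests equals the number of positions not occupied by ones, namely $t - (t^\star+1)Y_1^{t^\star}(t)(h^t) - Y_2^{t^\star}(t)(h^t)$, since each $X(j)=0$ position corresponds to a failed request. Substituting these into the exponents of $p$ and $1-p$ and splitting into the cases $Y_2^{t^\star}(t)(h^t)=0$ and $Y_2^{t^\star}(t)(h^t)>0$ yields \eqref{eq-time_seq_prob}. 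The $t^\star=\infty$ case in \eqref{eq-time_seq_prob_infty} is handled in the same way but is simpler: every non-zero-probability history has at most one successful request (after which the policy waits forever), so $N^{\text{succ}}(t)=\mathbf{1}_{Y_2^\infty(t)(h^t)>0}$ and $N^{\text{req}}(t)-N^{\text{succ}}(t)=t-Y_2^\infty(t)(h^t)$.

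The main obstacle will be carrying out the structural decomposition with full rigor, in particular justifying that the only admissible patterns are those described above. The block-length argument is intuitively clear from Table~\ref{table-link_cutoff_example}, but a clean induction requires keeping careful track of the joint evolution of $X(j)$ and $M^{t^\star}(j)$ under~\eqref{eq-mem_time_cutoff_policy}, and verifying that any other pattern (e.g.\ a ``success block'' of length strictly between $1$ and $t^\star+1$ not at the end of the sequence, or two successful requests separated by fewer than $t^\star+1$ time steps) is assigned probability zero by the deterministic decision rule combined with the transition probabilities.
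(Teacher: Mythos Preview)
Your proposal is correct and follows essentially the same approach as the paper: both start from the general formula~\eqref{eq-hist_prob_general}, note that the deterministic decision-function factors equal $1$ on non-zero-probability histories, and then identify $N^{\text{succ}}(t)=Y_1^{t^\star}(t)+\mathbf{1}_{Y_2^{t^\star}(t)>0}$ and the number of failed requests as $t-(t^\star+1)Y_1^{t^\star}(t)-Y_2^{t^\star}(t)$. The paper's proof is a two-line sketch that leaves the block-structure of admissible histories implicit (relying on the preceding discussion and Table~\ref{table-link_cutoff_example}), whereas you spell out the inductive justification for why every non-zero-probability history decomposes into full $(t^\star+1)$-blocks of ones, zero runs, and an optional trailing block---which is exactly the structural claim the paper takes for granted.
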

	
	\begin{proof}
		The result in \eqref{eq-time_seq_prob} follows immediately from the formula in \eqref{eq-hist_prob_general} by observing that $N_{\text{succ}}(t)=Y_1^{t^{\star}}(t)+1-\delta_{Y_2^{t^{\star}}(t),0}$ and $N_{\text{req}}(t)=t-(t^{\star}+1)Y_1^{t^{\star}}(t)-Y_2^{t^{\star}}(t)$. For $t^{\star}=\infty$, we only ever have trailing ones in the elementary link value sequences, so that $Y_1(t)^{\infty}(h^t)=0$ for all $t\geq 1$ and all histories $h^t$. The result in \eqref{eq-time_seq_prob_infty} then follows.
	\end{proof}
	
	Next, let us count the number of elementary link value sequences with non-zero probability. Using Table~\ref{table-link_cutoff_example} as a guide, we obtain the following.
	
	\begin{lemma}\label{prop-num_time_seq}
		For all $t\geq 1$ and all $t^{\star}\in[0,\infty]$, let $\Xi(t;t^{\star})$ denote the set of elementary link value sequences for the $t^{\star}$ memory cutoff policy that have non-zero probability. Then, the number of elements in the set $\Xi(t;t^{\star})$ is
		\begin{equation}\label{eq-num_time_seq}
			\abs{\Xi(t;t^{\star})}=\sum_{x=0}^{\floor{\frac{t-1}{t^{\star}+1}}}\sum_{k=0}^{t^{\star}+1}\left(\binom{t-1-xt^{\star}}{x}\delta_{k,0}+(1-\delta_{k,0})\binom{t-k-xt^{\star}}{x}\boldsymbol{1}_{t-k-x(t^{\star}+1)\geq 0}\right).
		\end{equation}
		For $t^{\star}=\infty$, $\abs{\Xi(t;\infty)}=1+t$.
	\end{lemma}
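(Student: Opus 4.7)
The plan is to first characterize combinatorially which elementary link value sequences $(x_1,\ldots,x_t)\in\{0,1\}^t$ have non-zero probability under $\pi^{t^{\star}}$, and then count them by grouping according to the pair $(Y_1^{t^{\star}}(t),Y_2^{t^{\star}}(t))$. From the deterministic transition structure \eqref{eq-cutoff_trans_prob1}--\eqref{eq-cutoff_trans_prob5} (equivalently from Proposition~\ref{prop-time_seq_prob}), a sequence has non-zero probability under the $t^{\star}$ memory-cutoff policy if and only if the maximal runs of $1$s that end strictly before position $t$ have length exactly $t^{\star}+1$, and there is possibly one trailing run of $1$s ending at position $t$ of some length $k\in\{0,1,\ldots,t^{\star}+1\}$ (with $k=0$ signifying that the sequence ends in a $0$). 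This is because once $M^{t^{\star}}$ hits $t^{\star}$ the agent must request, so a run of $1$s that is followed by at least one more symbol necessarily has length $t^{\star}+1$; whereas a run abutting the boundary $t$ can have any length from $1$ to $t^{\star}+1$.

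The counting then proceeds by treating each full block of $t^{\star}+1$ consecutive $1$s as a single ``super-symbol'' B and each $0$ as a Z, so that a valid sequence with $Y_1^{t^{\star}}(t)=x$ full blocks, trailing-$1$-count $Y_2^{t^{\star}}(t)=k$, is in bijection with a certain word in B's and Z's. If $k=0$, then the last position is a $0$, and the first $t$ positions decompose into a word containing $x$ copies of B and $z\coloneqq t-x(t^{\star}+1)$ copies of Z whose final letter is Z; the number of such words is $\binom{x+z-1}{x}=\binom{t-1-xt^{\star}}{x}$. If $k\geq 1$, then positions $t-k+1,\ldots,t$ are fixed to be $1$'s, and the first $t-k$ positions form a word in $x$ copies of B and $z'\coloneqq t-k-x(t^{\star}+1)$ copies of Z with no restriction on the final letter, giving $\binom{x+z'}{x}=\binom{t-k-xt^{\star}}{x}$ words whenever $z'\geq 0$, and zero otherwise (hence the indicator $\boldsymbol{1}_{t-k-x(t^{\star}+1)\geq 0}$). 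Summing over $k\in\{0,1,\ldots,t^{\star}+1\}$ and over $x\in\{0,1,\ldots,\lfloor(t-1)/(t^{\star}+1)\rfloor\}$ (the upper bound being forced by the requirement that the $x$ full blocks fit into the first $t-1$ positions) yields the claimed formula~\eqref{eq-num_time_seq}.

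The $t^{\star}=\infty$ case is a degenerate limit of the same argument: no full block can ever be formed (since the agent never requests once a link is active), so $Y_1^{\infty}(t)=0$ identically and the only freedom is the length $k\in\{0,1,\ldots,t\}$ of the trailing run of $1$s, each value of $k$ corresponding to a unique sequence $0^{t-k}1^{k}$; this gives $|\Xi(t;\infty)|=t+1$. The main subtlety I expect in writing this out rigorously is establishing the bijective correspondence between valid sequences and (B,Z)-words in a single clean sentence, since one has to verify that two adjacent B's in the word (corresponding to back-to-back full blocks as in the all-ones sequence from Table~\ref{table-link_cutoff_example}) do give a legitimate history under the policy. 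This is immediate from \eqref{eq-cutoff_trans_prob2}, which allows a successful request immediately following a block of $t^{\star}+1$ ones, so the bijection is indeed well defined and the count is correct.
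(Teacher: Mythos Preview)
Your proof is correct and follows essentially the same approach as the paper's own proof. Both arguments partition valid sequences by the pair $(Y_1^{t^{\star}}(t),Y_2^{t^{\star}}(t))=(x,k)$, treat each full block of $t^{\star}+1$ ones as a single super-symbol, and reduce the count to arranging $x$ blocks among the appropriate number of zeros via a multinomial/stars-and-bars count; your $(\text{B},\text{Z})$-word bijection is just a slightly more formalized phrasing of the paper's ``number of permutations of $t-k-xt^{\star}$ objects with $x$ of them identical'' language, and your explicit remark about adjacent B's being legitimate (via \eqref{eq-cutoff_trans_prob2}) fills in a point the paper leaves implicit.
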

	
	\begin{proof}
		We start by counting the number of elementary link value sequences when the number of trailing ones is equal to zero, i.e., when $k\equiv Y_2^{t^{\star}}(t)(h^t)=0$. If we also let the number $x\equiv Y_1^{t^{\star}}(t)(h^t)$ of full blocks of ones in time $t-1$ be equal to one, then there are $t^{\star}+1$ ones and $t-t^{\star}-2$ zeros up to time $t-1$. The total number of elementary link value sequences is then equal to the number of ways that the single block of ones can be moved around in the elementary link value sequence up to time $t-1$. This quantity is equivalent to the number of permutations of $t-1-t^{\star}$ objects with $t-t^{\star}-2$ of them being identical (these are the zeros), which is given by
		\begin{equation}
			\frac{(t-1-t^{\star})!}{(t-2-t^{\star})!(t-1-t^{\star}-t+t^{\star}+2)!}=\frac{(t-1-t^{\star})!}{(t-t^{\star}-2)!(1)!}=\binom{t-1-t^{\star}}{1}.
		\end{equation}
		We thus have the $x=0$ and $k=0$ term in the sum in \eqref{eq-num_time_seq}. If we stick to $k=0$ but now consider more than one full block of ones in time $t-1$ (i.e., let $x\equiv Y_1^{t^{\star}}(t)(h^t)\geq 1$), then the number of elementary link value sequences is given by a similar argument as before: it is equal to the number of ways of permuting $t-1-xt^{\star}$ objects, with $x$ of them being identical (the blocks of ones) and the remaining $t-1-x(t^{\star}+1)$ objects also identical (the number of zeros), i.e., $\binom{t-1-xt^{\star}}{x}$. The total number of elementary link value sequences with zero trailing ones is therefore
		\begin{equation}\label{eq-num_time_seq_pf1}
			\sum_{x=0}^{\floor{\frac{t-1}{t^{\star}+1}}}\binom{t-1-xt^{\star}}{x}.
		\end{equation}
		
		Let us now consider the case $k\equiv Y_2(t)(h^t)>0$. Then, the number of time slots in which full blocks of ones can be shuffled around is $t-k$. If there are $x$ blocks of ones in time $t-k$, then by the same arguments as before, the number of such elementary link value sequences is given by the number of ways of permuting $t-k-xt^{\star}$ objects, with $x$ of them being identical (the full blocks of ones) and the remaining $t-k-x(t^{\star}+1)$ of them also identical (these are the zeros up to time $t-k$). In other words, the number of link value sequences with $k>0$ and $x\geq 0$ is
		\begin{equation}\label{eq-num_time_seq_pf2}
			\binom{t-k-xt^{\star}}{x}\boldsymbol{1}_{t-k-x(t^{\star}+1)\geq 0}.
		\end{equation}
		We must put the indicator function $\boldsymbol{1}_{t-k-x(t^{\star}+1)\geq 0}$ in order to ensure that the binomial coefficient makes sense. This also means that, depending on the time $t$, not all values of $k$ between 0 and $t^{\star}+1$ can be considered in the total number of elementary link value sequences (simply because it might not be possible to fit all possible values of trailing ones and full blocks of ones within that amount of time). By combining \eqref{eq-num_time_seq_pf1} and \eqref{eq-num_time_seq_pf2}, we obtain the desired result.
		
		In the case $t^{\star}=\infty$, because there are never any full blocks of ones and only trailing ones, we have $t$ link value sequences, each containing $k$ trailing ones, where $1\leq k\leq t$. We also have an elementary link value sequence consisting of all zeros, giving a total of $t+1$ link value sequences.
	\end{proof}
	\smallskip
	\begin{remark}
		Note that when $t^{\star}=0$, we get
		\begin{align}
			\abs{\Xi(t;0)}&=\sum_{x=0}^{t-1}\sum_{k=0}^1\left(\binom{t-1}{x}\delta_{k,0}+(1-\delta_{k,0})\binom{t-k}{x}\boldsymbol{1}_{t-k-x\geq 0}\right)\\
			&=\sum_{x=0}^{t-1}\binom{t-1}{x}+\sum_{x=0}^{t-1}\binom{t-1}{x}\underbrace{\boldsymbol{1}_{t-1-x\geq 0}}_{1~\forall x}\\
			&=2^{t-1}+2^{t-1}\\
			&=2^t.
		\end{align}
		In other words, when $t^{\star}=0$, \textit{all} $t$-bit strings are valid link value sequences. 
		
		For $t\leq t^{\star}+1$, no full blocks of ones in time $t-1$ are possible, so we get
		\begin{align}
			\abs{\Xi(t;t^{\star})}&=\sum_{k=0}^{t^{\star}+1}\left(\binom{t-1}{0}\delta_{k,0}+(1-\delta_{k,0})\binom{t-k}{0}\boldsymbol{1}_{t-k\geq 0}\right)\\
			&=\binom{t-1}{0}+\sum_{k=1}^t \binom{t-k}{0}\\
			&=1+t.
		\end{align}
		This coincides with the result for $t^{\star}=\infty$, because when $t^{\star}=\infty$ the condition $t\leq t^{\star}+1$ is satisfied for all $t\geq 1$.~\defqed
	\end{remark}

\section{Proof of Theorem~\ref{thm-mem_status_pr}}\label{sec-mem_status_pr_pf}

	For $t\leq t^{\star}+1$, because no full blocks of ones up to time $t-1$ are possible, the possible values for the memory time are $0,1,\dotsc,t-1$. Furthermore, for each value of $m\in\{0,1,\dotsc,t-1\}$, there is only one elementary link value sequence for which $M^{t^{\star}}(t)=m$, and this sequence has $Y_2^{t^{\star}}(t)=m+1$ trailing ones and thus probability $p(1-p)^{t-1-m}$ by Proposition~\ref{prop-time_seq_prob}.
		
	For $t>t^{\star}+1$, we proceed similarly by considering the number $Y_1^{t^{\star}}(t)$ of full blocks of ones in time $t-1$ and the number $Y_2^{t^{\star}}(t)$ of trailing ones in elementary link value sequences $(x_1,x_2,\dotsc,x_t)$ such that $x_t=1$. Since we must have $x_t=1$, we require $Y_2(t)\geq 1$. Now, in order to have a memory time of $M^{t^{\star}}(t)=m$, we can have elementary link value sequences consisting of any number $x=Y_1^{t^{\star}}(t)$ of full blocks of ones ranging from 0 to $\floor{\frac{t-1}{t^{\star}+1}}$ as long as $Y_2^{t^{\star}}(t)=m+1$. (Note that at the end of each full block of ones the memory time is equal to $t^{\star}$.) The number of such elementary link value sequences is
	\begin{equation}
		\binom{t-(m+1)-xt^{\star}}{x}\boldsymbol{1}_{t-(m+1)-x(t^{\star}+1)\geq 0},
	\end{equation}
	as given by \eqref{eq-num_time_seq_pf2}, and the probability of each such link value sequence is $p^{x+1}(1-p)^{t-(m+1)-x(t^{\star}+1)}$. By summing over all $0\leq x\leq\floor{\frac{t-1}{t^{\star}+1}}$, we obtain the desired result.

\section{Proof of Proposition~\ref{prop-mem_time_prob_x0}}\label{sec-mem_time_prob_x0_pf}

	For finite $t^{\star}$, when $t\leq t^{\star}+1$, there is only one elementary link value sequence ending with a zero, and that is the sequence consisting of all zeros, which has probability $(1-p)^t$. Furthermore, since the value of the memory for this sequence is equal to $t^{\star}$, only the case $M^{t^{\star}}(t)=t^{\star}$ has non-zero probability. When $t>t^{\star}+1$, we can again have non-zero probability only for $M^{t^{\star}}(t)=t^{\star}$. In this case, because every link value sequence has to end with a zero, we must have $Y_2^{t^{\star}}(t)=0$. Therefore, using \eqref{eq-time_seq_prob}, along with \eqref{eq-num_time_seq_pf1}, we obtain the desired result.
		
	For $t^{\star}=\infty$, only the link value sequence consisting of all zeros ends with a zero, and in this case we have $M^{\infty}(t)=-1$. The result then follows.

\section{Proof of Theorem~\ref{thm-link_status_avg_inf}}\label{app-link_status_avg_inf_pf}

	Since we consider the limit $t\to\infty$, it suffices to consider the expression for $\Pr[X(t)=1]_{t^{\star}}$ in \eqref{eq-link_status_Pr1} for $t>t^{\star}+1$. Also due to the $t\to\infty$ limit, we can disregard the indicator function in \eqref{eq-link_status_Pr1}, so that
	\begin{equation}\label{eq-link_status_avg_inf_pf0}
		\lim_{t\to\infty}\mathbb{E}[X(t)]_{t^{\star}}=\lim_{t\to\infty}\sum_{x=0}^{\floor{\frac{t-1}{t^{\star}+1}}}\sum_{k=1}^{t^{\star}+1}\binom{t-k-xt^{\star}}{x}p^{x+1}(1-p)^{t-k-(t^{\star}+1)x}.
	\end{equation}
	Next, consider the binomial expansion of $(1-p)^{t-k-(t^{\star}+1)x}$:
	\begin{equation}
		(1-p)^{t-k-(t^{\star}+1)x}=\sum_{j=0}^{\infty}\binom{t-k-(t^{\star}+1)x}{j}(-1)^j p^j.
	\end{equation}
	Substituting this into \eqref{eq-link_status_avg_inf_pf0} gives us
	\begin{align}
		\lim_{t\to\infty}\mathbb{E}[X(t)]_{t^{\star}}&=p\lim_{t\to\infty}\sum_{x,j=0}^{\infty}\sum_{k=1}^{t^{\star}+1}\binom{t-k-t^{\star}x}{x}\binom{t-k-(t^{\star}+1)x}{j}(-1)^j p^{x+j}\\
		&=p\lim_{t\to\infty}\sum_{\ell=0}^{\infty}\sum_{j=0}^{\ell}\sum_{k=1}^{t^{\star}+1}\binom{t-k-t^{\star}j}{j}\binom{t-k-(t^{\star}+1)j}{\ell-j}(-1)^{\ell-j} p^{\ell}.\label{eq-link_status_avg_inf_pf4}
	\end{align}
	Now, for brevity, let $a\equiv t-k$, and let us focus on the sum
	\begin{equation}\label{eq-link_status_avg_inf_pf1}
		\sum_{j=0}^{\ell}(-1)^{\ell-j}\binom{a-t^{\star}j}{j}\binom{a-t^{\star}j-j}{\ell-j}.
	\end{equation}
	We start by expanding the binomial coefficients to get
	\begin{align}
		\binom{a-t^{\star}j}{j}\binom{a-t^{\star}j-j}{\ell-j}&=\frac{(a-t^{\star}j)!}{j!(\ell-j)!(a-t^{\star}j-\ell)!}\\
		&=\frac{1}{j!(\ell-j)!}\prod_{s=0}^{\ell-1}(a-t^{\star}j-s)\\
		&=\frac{1}{\ell!}\binom{\ell}{j}\prod_{s=0}^{\ell-1}(a-t^{\star}j-s).
	\end{align}
	Next, we have
	\begin{equation}
		\prod_{s=0}^{\ell-1}(a-t^{\star}j-s)=\sum_{n=0}^{\ell}(-1)^{\ell-n}\begin{bmatrix}\ell\\n\end{bmatrix}(a-t^{\star}j)^n,
	\end{equation}
	where $\begin{bmatrix}\ell\\n\end{bmatrix}$ is the (unsigned) Stirling number of the first kind\footnote{This number is defined to be the number of permutations of $\ell$ elements with $n$ disjoint cycles.}. Performing the binomial expansion of $(a-t^{\star}j)^n$, the sum in \eqref{eq-link_status_avg_inf_pf1} becomes
	\begin{equation}\label{eq-link_status_avg_inf_pf3}
		\sum_{j=0}^{\ell}\sum_{n=0}^{\ell}\sum_{i=0}^n (-1)^{\ell-j}\frac{1}{\ell!}\binom{\ell}{j}\begin{bmatrix}\ell\\n\end{bmatrix}\binom{n}{i}(-1)^i(t^{\star})^i j^i a^{n-i}.
	\end{equation}
	Now, it holds that
	\begin{equation}\label{eq-link_status_avg_inf_pf2}
		\sum_{j=0}^{\ell}(-1)^{\ell-j}\frac{1}{\ell!}\binom{\ell}{j}j^i=(-1)^{2\ell}\begin{Bmatrix}i\\\ell\end{Bmatrix},
	\end{equation}
	where $\begin{Bmatrix}i\\\ell\end{Bmatrix}$ is the Stirling number of the second kind\footnote{This number is defined to be the number of ways to partition a set of $i$ objects into $\ell$ non-empty subsets.}. For $i<\ell$, it holds that $\begin{Bmatrix}i\\\ell\end{Bmatrix}=0$, and $\begin{Bmatrix}\ell\\\ell\end{Bmatrix}=1$. Since $i$ ranges from 0 to $n$, and $n$ itself ranges from 0 to $\ell$, the sum in \eqref{eq-link_status_avg_inf_pf2} is zero except for when $i=\ell$. The sum in \eqref{eq-link_status_avg_inf_pf2} is therefore effectively equal to $(-1)^{2\ell}\delta_{i,\ell}$. Substituting this into \eqref{eq-link_status_avg_inf_pf3} leads to
	\begin{equation}
		\sum_{n=0}^{\ell}\sum_{i=0}^n(-1)^{2\ell}\delta_{i,\ell}\begin{bmatrix}\ell\\n\end{bmatrix}\binom{n}{i}(-1)^i(t^{\star})^i a^{n-i}=(-1)^{\ell}(t^{\star})^{\ell},
	\end{equation}
	where we have used the fact that $\begin{bmatrix}\ell\\\ell\end{bmatrix}=1$. Altogether, we have shown that
	\begin{equation}\label{eq-special_sum}
		\sum_{j=0}^{\ell}(-1)^{\ell-j}\binom{a-t^{\star}j}{j}\binom{a-t^{\star}j-j}{\ell-j}=(-1)^{\ell}(t^{\star})^{\ell}
	\end{equation}
	for all $\ell\geq 0$. The sum is independent of $a=t-k$. Substituting this result into \eqref{eq-link_status_avg_inf_pf4}, and using the fact that
	\begin{equation}\label{eq-homographic_func}
		\sum_{\ell=0}^{\infty} (-1)^{\ell} x^{\ell}=\frac{1}{1+x},\quad x\neq -1,
	\end{equation}
	we get
	\begin{equation}
		\lim_{t\to\infty}\mathbb{E}[X(t)]_{t^{\star}}=p\sum_{\ell=0}^{\infty}\sum_{k=1}^{t^{\star}+1}(-1)^{\ell}(t^{\star}p)^{\ell}=p(t^{\star}+1)\sum_{\ell=0}^{\infty} (-1)^{\ell} (t^{\star}p)^{\ell}=\frac{(t^{\star}+1)p}{1+t^{\star}p},
	\end{equation}
	as required.

\section{Proof of Theorem~\ref{thm-avg_mem_status_inf}}\label{app-avg_mem_status_inf_pf}

	The proof is very similar to the proof of Theorem~\ref{thm-link_status_avg_inf}. Using the result of Theorem~\ref{thm-mem_status_pr}, in the limit $t\to\infty$ we have
	\begin{equation}
		\lim_{t\to\infty}\Pr[M^{t^{\star}}(t)=m,X(t)=1]_{t^{\star}}=\lim_{t\to\infty}\sum_{x=0}^{\infty} \binom{t-(m+1)-xt^{\star}}{x}p^{x+1}(1-p)^{t-(m+1)-x(t^{\star}+1)}.
	\end{equation}
	Using the binomial expansion of $(1-p)^{t-(m+1)-x(t^{\star}+1)}$, exactly as in the proof of Theorem~\ref{thm-link_status_avg_inf}, we can write
	\begin{align}
		&\lim_{t\to\infty}\Pr[M^{t^{\star}}(t)=m,X(t)=1]_{t^{\star}}\nonumber\\
		&\qquad=\lim_{t\to\infty}\sum_{x=0}^{\infty}\sum_{j=0}^{\infty} p\binom{t-(m+1)-xt^{\star}}{x}\binom{t-(m+1)-(t^{\star}+1)x}{j}(-1)^jp^{x+j}\\
		&\qquad=\lim_{t\to\infty}\sum_{\ell=0}^{\infty}\sum_{j=0}^{\ell}p\binom{t-(m+1)-jt^{\star}}{j}\binom{t-(m+1)-(t^{\star}+1)j}{\ell-j}(-1)^{\ell-j}p^{\ell}.
	\end{align}
	Then, using \eqref{eq-special_sum}, we have that
	\begin{equation}
		\sum_{j=0}^{\ell}(-1)^{\ell-j}\binom{t-(m+1)-jt^{\star}}{j}\binom{t-(m+1)-(t^{\star}+1)j}{\ell-j} =(-1)^{\ell}(t^{\star})^{\ell}
	\end{equation}
	for all $t\geq 1$ and all $m\in\{0,1,\dotsc,t^{\star}\}$. Finally, using \eqref{eq-homographic_func}, we obtain
	\begin{equation}
		\lim_{t\to\infty}\Pr[M^{t^{\star}}(t)=m,X(t)=1]_{t^{\star}}=p\sum_{\ell=0}^{\infty} (-1)^{\ell}(t^{\star}p)^{\ell}=\frac{p}{1+t^{\star}p},
	\end{equation}
	as required.
	
	The proof of \eqref{eq-mem_time_prob_infty_x0} is similar, and it involves making use of the result of Proposition~\ref{prop-mem_time_prob_x0}.

\section{Proof of Theorem~\ref{thm-mem_cutoff_collective_waiting_time_tstarInf}}\label{app-mem_cutoff_collective_waiting_time_tstarInf_pf}

	By definition,
	\begin{equation}
		\Pr[W_{E'}(t_{\text{req}})=t]_{\infty}=\Pr[X_{E'}(t_{\text{req}}+1)=0,\dotsc,X_{E'}(t_{\text{req}}+t)=1]_{\infty}.
	\end{equation}
	Note that
	\begin{equation}\label{eq-waiting_time_tstarInf_pf1}
		\Pr[W_{E'}(t_{\text{req}})=1]_{\infty}=\Pr[X_{E'}(t_{\text{req}}+1)=1]_{\infty}=(1-(1-p)^{t_{\text{req}}+1})^M=p_{t_{\text{req}}+1}^M,
	\end{equation}
	which holds because all of the elementary links are generated independently and because they all have the same success probability.
	
	Now, for $t\geq 2$, our first goal is to prove that
	\begin{equation}\label{eq-waiting_time_tstarInf_pf4}
		\Pr[W_{E'}(t_{\text{req}})=t]_{\infty}=(1-(1-p_{t_{\text{req}}+1})(1-p)^{t-1})^M-(1-(1-p_{t_{\text{req}}+1})(1-p)^{t-2})^M.
	\end{equation}
	In order to prove this, let us for the moment take $t_{\text{req}}=0$. Then, $X_{E'}(1)=0$ means that at least one of the $M$ elementary links is not active in the first time step, and the same for all subsequent time steps except for the $t^{\text{th}}$ time step, in which all of the $M$ elementary links are active. Then, because $t^{\star}=\infty$, the links that are active in the first time step always remain active. This means that we can evaluate $\Pr[W_{E'}(0)=t]_{\infty}$ by counting the number of elementary links that are inactive at each time step. For example, for $t=2$, we obtain
	\begin{align}
		\Pr[X_{E'}(1)=0,X_{E'}(2)=1]_{\infty}&=\sum_{k_1=1}^M\binom{M}{k_1}\underbrace{(1-p)^{k_1}}_{\substack{k_1\text{inactive links}\\\text{in the first}\\\text{time step}}} \underbrace{p^{M-k_1}}_{\substack{M-k_1\text{ active}\\\text{links in the}\\\text{first time step}}} \underbrace{p^{k_1}}_{\substack{\text{remaining }k_1\\\text{ inactive links}\\\text{succeed in the}\\\text{second time step}}}\\
		&=p^M\sum_{k_1=1}^M\binom{M}{k_1}(1-p)^{k_1}.
	\end{align}
	Similarly, for $t=3$, we find that
	\begin{align}
		&\Pr[X_{E'}(1)=0,X_{E'}(2)=0,X_{E'}(3)=1]_{\infty}\nonumber\\
		&\qquad\qquad\qquad\qquad=\sum_{k_1=1}^M\binom{M}{k_1}(1-p)^{k_1}p^{M-k_1}\sum_{k_2=1}^{k_1}\binom{k_1}{k_2}(1-p)^{k_2}p^{k_1-k_2}p^{k_2}\\
		&\qquad\qquad\qquad\qquad=p^M\sum_{k_1=1}^M\sum_{k_2=1}^{k_1}\binom{M}{k_1}\binom{k_1}{k_2}(1-p)^{k_1}(1-p)^{k_2}.
	\end{align}
	In general, then, for all $t\geq 2$,
	\begin{align}
		&\Pr[W_{E'}(0)=t]_{\infty}\nonumber=\Pr[X_{E'}(1)=0,\dotsc,X_{E'}(t)=1]_{\infty}\\
		&\quad=p^M\sum_{k_1=1}^M\sum_{k_2=1}^{k_1}\sum_{k_3=1}^{k_2}\dotsb\sum_{k_{t-1}=1}^{k_{t-2}}\binom{M}{k_1}\binom{k_1}{k_2}\binom{k_2}{k_3}\dotsb\binom{k_{t-2}}{k_{t-1}}(1-p)^{k_1}(1-p)^{k_2}\dotsb(1-p)^{k_{t-1}}\\
		&\quad=\sum_{k_1=1}^M\binom{M}{k_1}(1-p)^{k_1}p^{M-k_1}\underbrace{p^{k_1}\sum_{k_2=1}^{k_1}\sum_{k_3=1}^{k_2}\dotsb\sum_{k_{t-1}=1}^{k_{t-2}}\binom{k_1}{k_2}\binom{k_2}{k_3}\dotsb\binom{k_{t-2}}{k_{t-1}}(1-p)^{k_2}\dotsb(1-p)^{k_{t-1}}}_{\Pr[W_{k_1}^{(\infty)}(0)=t-1]}\\
		&\quad=\sum_{k_1=1}^M \binom{M}{k_1}(1-p)^{k_1}p^{M-k_1}\Pr[W_{k_1}(0)=t-1]_{\infty} \label{eq-waiting_time_tstarInf_pf3}
	\end{align}
	Using this, we can immediately prove the following result by induction on $t$:
	\begin{equation}\label{eq-waiting_time_tstarInf_pf2}
		\Pr[W_{E'}(0)=t]_{\infty}=(1-(1-p)^t)^M-(1-(1-p)^{t-1})^M.
	\end{equation}
	Indeed, from \eqref{eq-waiting_time_tstarInf_pf1}, we immediately have that this result holds for $t=1$. Similarly, using the fact that
	\begin{equation}
		\sum_{k_1=1}^M\binom{M}{k_1}(1-p)^{k_1}=-1+(2-p)^M=\frac{1}{p^M}\left((1-(1-p)^2)^M-(1-(1-p))^M\right),
	\end{equation}
	we see that \eqref{eq-waiting_time_tstarInf_pf2} holds for $t=2$ as well. Now, assuming that \eqref{eq-waiting_time_tstarInf_pf2} holds for all $t\geq 2$, using \eqref{eq-waiting_time_tstarInf_pf3} we find that
	\begin{align}
		\Pr[W_{E'}(0)=t+1]_{\infty}&=\sum_{k_1=1}^M \binom{M}{k_1}(1-p)^{k_1}p^{M-k_1}\Pr[W_{k_1}(0)=t]_{\infty}\\
		&=\sum_{k_1=1}^M\binom{M}{k_1}(1-p)^{k_1}p^{M-k_1}\left((1-(1-p)^t)^{k_1}-(1-(1-p)^{t-1})^{k_1}\right)\\
		&=(1-(1-p)^{t+1})^M-(1-(1-p)^t)^M,
	\end{align}
	as required. Therefore, \eqref{eq-waiting_time_tstarInf_pf2} holds for all $t\geq 1$.
	
	We are now in a position to prove \eqref{eq-waiting_time_tstarInf_pf4}. Recall that for the $t^{\star}=\infty$ policy, $\Pr[X(t)=1]_{\infty}=1-(1-p)^t=p_t$. Therefore, at time step $t_{\text{req}}+1$, the probability that $k_1\geq 1$ elementary links are inactive is $(1-p_{t_{\text{req}}+1})^{k_1}$ and the probability that $M-k_1$ elementary links are active is $p_{t_{\text{req}}+1}^{M-k_1}$. In the subsequent time steps, each inactive elementary link from the previous time step is active with probability $p$ and inactive with probability $1-p$. Therefore,
	\begin{align}
		\Pr[W_{E'}(t_{\text{req}})=t]_{\infty}&=\sum_{k_1=1}^M\binom{M}{k_1}(1-p_{t_{\text{req}}+1})^{k_1}p_{t_{\text{req}}+1}^{M-k_1}\sum_{k_2=1}^{k_1}\binom{k_1}{k_2}(1-p)^{k_2}p^{k_2-k_1}\dotsb\nonumber\\
		&\qquad\qquad\qquad\qquad\qquad\qquad\dotsb\sum_{k_{t-1}=1}^{k_{t-2}}\binom{k_{t-2}}{k_{t-1}}(1-p)^{k_{t-1}}p^{k_{t-2}-k_{t-1}}p^{k_{t-1}}\\
		&=\sum_{k_1=1}^M\binom{M}{k_1}(1-p_{t_{\text{req}}+1})^{k_1}p_{t_{\text{req}}+1}^{M-k_1}p^{k_1}\sum_{k_2=1}^{k_1}\dotsb\nonumber\\
		&\qquad\qquad\qquad\qquad\qquad\dotsb\sum_{k_{t-1}=1}^{k_{t-2}}\binom{k_1}{k_2}\dotsb\binom{k_{t-2}}{k_{t-1}}(1-p)^{k_2}\dotsb(1-p)^{k_{t-1}}\\
		&=\sum_{k_1=1}^M\binom{M}{k_1}(1-p_{t_{\text{req}}+1})^{k_1}p_{t_{\text{req}}+1}^{M-k_1}\Pr[W_{k_1}(0)=t-1]_{\infty}\\
		&=(1-(1-p_{t_{\text{req}}+1})(1-p)^{t-1})^M-(1-(1-p_{t_{\text{req}}+1})(1-p)^{t-2})^M,
	\end{align}
	which is precisely \eqref{eq-waiting_time_tstarInf_pf4}.
	
	Now, for brevity, let 
	\begin{equation}
		\widetilde{q}\equiv 1-p_{t_{\text{req}}+1},\quad q\equiv 1-p.
	\end{equation}
	Then,
	\begin{align}
		\Pr[W_{E'}(t_{\text{req}})=t]_{\infty}&=(1-\widetilde{q}q^{t-1})^M-(1-\widetilde{q}q^{t-2})^M\\
		&=\sum_{k=0}^M\binom{M}{k}(-1)^k(\widetilde{q}q^{t-1})^k-\sum_{k=0}^M\binom{M}{k}(-1)^k(\widetilde{q}q^{t-2})^k\\
		&=\sum_{k=1}^M\binom{M}{k}(-1)^k\widetilde{q}^k(q^{t-1})^k(1-q^{-k}).
	\end{align}
	Then, using the fact that
	\begin{equation}
		\sum_{t=2}^\infty t (q^k)^{t-1}=\frac{q^k(2-q^k)}{(1-q^k)^2},
	\end{equation}
	we obtain
	\begin{align}
		\mathbb{E}[W_{E'}(t_{\text{req}})]_{\infty}&=\sum_{t=1}^\infty t \Pr[W_{E'}(t_{\text{req}})=t]_{\infty}\\
		&=(1-\widetilde{q})^M+\sum_{k=1}^M\binom{M}{k}(-1)^k\widetilde{q}^k\left(\frac{q^k(2-q^k)}{(1-q^k)^2}\right)(1-q^{-k})\\
		&=(1-\widetilde{q})^M+\sum_{k=1}^M\binom{M}{k}(-1)^{k+1}\widetilde{q}^k\left(1+\frac{1}{1-q^k}\right)\\
		&=(1-\widetilde{q})^M+\sum_{k=1}^M\binom{M}{k}(-1)^{k+1}\widetilde{q}^k\left(1+\frac{1}{p_k}\right)\\
		&=(1-\widetilde{q})^M+\sum_{k=1}^M\binom{M}{k}(-1)^{k+1}\widetilde{q}^k+\sum_{k=1}^M\binom{M}{k}(-1)^{k+1}\frac{(1-p_k)^{t_{\text{req}}+1}}{p_k}\\
		&=1+\sum_{k=1}^M\binom{M}{k}(-1)^{k+1}\frac{(1-p_k)^{t_{\text{req}}+1}}{p_k},
	\end{align}
	where in the second-last line we used the fact that $\widetilde{q}^k=(1-p_k)^{t_{\text{req}}+1}$. Finally, using the fact that $1=\sum_{k=1}^M\binom{M}{k}(-1)^{k+1}$, we obtain
	\begin{equation}
		\mathbb{E}[W_{E'}(t_{\text{req}})]_{\infty}=\sum_{k=1}^M\binom{M}{k}(-1)^{k+1}\left(1+\frac{(1-p_k)^{t_{\text{req}}+1}}{p_k}\right),
	\end{equation}
	as required.

\section{Proof of Theorem~\ref{thm-avg_sucess_rate}}\label{sec-avg_sucess_rate_pf}
	
	We start with the observation that, for all histories $h^t$, the number of successful requests can be written in terms of the number $Y_1^{t^{\star}}(t)(h^t)$ of blocks of ones of length $t^{\star}+1$ and the number $Y_2^{t^{\star}}(t)(h^t)$ of trailing ones in the link value sequence corresponding to $h^t$ as
	\begin{equation}
		Y_1^{t^{\star}}(t)(h^t)+1-\delta_{Y_2^{t^{\star}}(t)(h^t),0}.
	\end{equation}
	Similarly, the total number of failed requests is
	\begin{equation}
		t-Y_2^{t^{\star}}(t)(h^t)-(t^{\star}+1)Y_1^{t^{\star}}(t)(h^t).
	\end{equation}
	Therefore,
	\begin{align}
		S_{e^j}(t)(h^t)&=\frac{Y_1^{t^{\star}}(t)(h^t)+1-\delta_{Y_2^{t^{\star}}(t)(h^t),0}}{t-Y_2^{t^{\star}}(t)(h^t)-(t^{\star}+1)Y_1^{t^{\star}}(t)(h^t)+Y_1^{t^{\star}}(t)(h^t)+1-\delta_{Y_2^{t^{\star}}(t)(h^t),0}}\\
		&=\frac{Y_1^{t^{\star}}(t)(h^t)+1-\delta_{Y_2^{t^{\star}}(t)(h^t),0}}{t-Y_2^{t^{\star}}(t)(h^t)-t^{\star}Y_1^{t^{\star}}(t)(h^t)+1-\delta_{Y_2^{t^{\star}}(t)(h^t),0}}.\label{eq-avg_success_rate_pf1}
	\end{align}
	Now, for $t\leq t^{\star}+1$, we always have $Y_1^{t^{\star}}(t)(h^t)=0$ for all histories $h^t$, and the link value sequence can consist only of a positive number of trailing ones not exceeding $t$. Thus, from Proposition~\ref{prop-time_seq_prob}, the probability of any such history is $p(1-p)^{t-Y_2^{t^{\star}}(t)(h^t)}$. Using \eqref{eq-avg_success_rate_pf1} then leads to
	\begin{align}
		\mathbb{E}[S_{e^j}(t)]_{t^{\star}}&=\sum_{h^t\in\Omega(t)}S_{e^j}(t)(h^t)\Pr[H(t)=h^t]_{t^{\star}}\\
		&=\sum_{k=1}^t \frac{1}{t-k+1}p(1-p)^{t-k}\\
		&=\sum_{j=0}^{t-1}\frac{1}{j+1}p(1-p)^j
	\end{align}
	for $t\leq t^{\star}+1$, as required, where the last equality follows by a change of summation variable.
	
	For $t>t^{\star}+1$, we use \eqref{eq-avg_success_rate_pf1} again, keeping in mind this time that the number of trailing ones can be equal to zero, to get
	\begin{align}
		\mathbb{E}[S_{e^j}(t)]_{t^{\star}}&=\sum_{h^t\in\Omega(t)}S_{e^j}(t)(h^t)\Pr[H(t)=h^t]_{t^{\star}}\\
		&=\sum_{h^t:Y_2^{t^{\star}}(t)(h^t)=0}S_{e^j}(t)(h^t)\Pr[H(t)=h^t]_{t^{\star}}+\sum_{h^t:Y_2^{t^{\star}}(t)(h^t)\geq 1}S_{e^j}(t)(h^t)\Pr[H(t)=h^t]_{t^{\star}}\\
		&=\sum_{x=0}^{\floor{\frac{t-1}{t^{\star}+1}}}\left(\frac{x}{t-t^{\star}x}\Pr[H(t)=h^t:Y_1^{t^{\star}}(t)(h^t)=x,Y_2^{t^{\star}}(t)(h^t)=0]_{t^{\star}}\right.\nonumber\\
		&\qquad\quad\left.+\sum_{k=1}^{t^{\star}+1}\frac{x+1}{t-k-t^{\star}x+1}\Pr[H(t)=h^t:Y_1^{t^{\star}}(t)(h^t)=x,Y_2^{t^{\star}}(t)(h^t)=k]_{t^{\star}}\right).
	\end{align}
	Using Proposition~\ref{prop-time_seq_prob}, we arrive at the desired result.
	
	\begin{figure}
		\centering
		\includegraphics[scale=1]{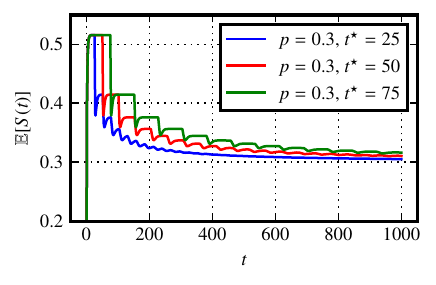}
		\caption{The expected success rate, as given by the expressions in Theorem~\ref{thm-avg_sucess_rate}, for an elementary link with $p=0.3$ and various cutoffs.}\label{fig-avg_success_rate_example}
	\end{figure}
	
	See Figure~\ref{fig-avg_success_rate_example} for a plot of the expected rate $\mathbb{E}[S(t)]$ as a function of time for various values of the cutoff. We find that the rate has essentially the shape of a decaying square wave, which is clearer for larger values of the cutoff. In particular, the ``plateaus'' in the curves have a period of $t^{\star}+1$ time steps. Let us now consider the values of these pleateaus. The largest plateau can be found by considering the case $t^{\star}=\infty$, because in this case the condition $t\leq t^{\star}+1$ is satisfied for all $t\geq 1$, and it is when this condition is true that the largest plateau occurs. Using Theorem~\ref{thm-avg_sucess_rate} with $t^{\star}=\infty$, we find that the value of the largest plateau approaches
	\begin{equation}\label{eq-avg_succ_rate_tInfty_1}
		\lim_{t\to\infty}\mathbb{E}[S_{e^j}(t)]_{\infty}=\lim_{t\to\infty}\sum_{j=0}^{t-1}\frac{1}{j+1}p(1-p)^j=-\frac{p\ln p}{1-p},
	\end{equation}
	for all $p\in(0,1)$. In the case $t^{\star}\in[0,\infty)$, as we see in Figure~\ref{fig-avg_success_rate_example}, there are multiple plateaus, with each plateau lasting for a period of $t^{\star}+1$ time steps, as mentioned earlier. The values of these pleateaus depend on the number $x\geq 0$ of full blocks of ones in the link value sequence. Specifically, the values of the plateaus approach
	\begin{multline}
		\lim_{t\to\infty}\sum_{k=1}^{t-(t^{\star}+1)x}\frac{x+1}{t-k-t^{\star}x+1}\binom{t-k-t^{\star}x}{x}p^{x+1}(1-p)^{t-k-(t^{\star}+1)x}\\=\lim_{t\to\infty}\sum_{j=(t^{\star}+1)x}^{t-1}\frac{x+1}{j-t^{\star}x+1}\binom{j-t^{\star}x}{x}p^{x+1}(1-p)^{j-(t^{\star}+1)x}=p\cdot{~}_2F_1(1,1,2+x,1-p),
	\end{multline}
	for all $x\geq 0$, where $_2F_1(a,b,c,z)$ is the hypergeometric function. Then, using the fact that $\lim_{x\to\infty} {~}_2F_1(1,1,2+x,1-p)=1$ \cite{CAJ17}, we conclude that the plateaus approach the value of $p$, i.e.,
	\begin{equation}\label{eq-avg_succ_rate_tInfty_2}
		\lim_{t\to\infty}\mathbb{E}[S_{e^j}(t)]_{t^{\star}}=p,\quad t^{\star}\in[0,\infty).
	\end{equation}

\chapter{PROOFS}\label{app-proofs}

	In this appendix, we present some of the longer proofs of the facts that are presented in this thesis.

\section{Proof of Proposition~\ref{lem-ent_swap_post_fidelity}}\label{app-ent_swap_post_fidelity_pf}

	For simplicity, and because we are mainly interested in qubits in this thesis, we show the proof only for $d=2$. The steps for $d>2$ are analogous but cumbersome.
		
	Let $\rho_{A\vec{R}_1\dotsb\vec{R}_nB}$ be an arbitrary state. Then, we have that
	\begin{multline}
		\bra{\Phi^+}_{AB}\mathcal{L}_{A\vec{R}_1\dotsb\vec{R}_nB\to AB}^{\text{ES}_n}\left(\rho_{A\vec{R}_1\dotsb\vec{R}_nB}\right)\ket{\Phi^+}_{AB}\\=\sum_{\vec{x},\vec{z}\in\{0,1\}^n}\left(\bra{\Phi_{a,b}}_{AB}\otimes\bra{\Phi_{z_1,x_1}}_{R_1^1R_1^2}\otimes\dotsb\otimes\bra{\Phi_{z_n,x_n}}_{R_n^1R_n^2}\right)\left(\rho_{A\vec{R}_1\dotsb\vec{R}_nB}\right)\\\left(\ket{\Phi_{a,b}}_{AB}\otimes\ket{\Phi_{z_1,x_1}}_{R_1^1R_1^2}\otimes\dotsb\otimes\ket{\Phi_{z_n,x_n}}_{R_n^1R_n^2}\right),\label{eq-ent_swap_output_fid_pf1}
	\end{multline}
	where
	\begin{equation}
		a\coloneqq z_1+\dotsb+z_1,\quad b\coloneqq x_1+\dotsb+x_n.
	\end{equation}
	Now, we use the fact that
	\begin{align}
		\ket{\Phi_{z,x}}&=\frac{1}{\sqrt{2}}\sum_{i=0}^1 (-1)^{iz}\ket{i,i+x},\\
		\ket{i,j}&=\frac{1}{\sqrt{2}}\sum_{x,z=0}^1 (-1)^{iz}\delta_{j,i+x}\ket{\Phi_{z,x}},\label{eq-comp_to_Bell}
	\end{align}
	for all $i,j,x,z\in\{0,1\}$. Then,
	\begin{align}
		& \ket{\Phi_{a,b}}_{AB}\otimes\ket{\Phi_{z_1,x_1}}_{R_1^1R_1^2}\otimes\ket{\Phi_{z_2,x_2}}_{R_2^1R_2^2}\otimes\dotsb\otimes\ket{\Phi_{z_n,x_n}}_{R_n^1R_n^2}\\
		&\quad =\frac{1}{\sqrt{2^{n+1}}}\sum_{j,i_1,i_2,\dotsc,i_n=0}^1(-1)^{jb+i_1z_1+i_2z_2+\dotsb+i_nz_n}\ket{j,j+a}_{AB}\ket{i_1,i_1+x_1}_{R_1^1R_1^2}\dotsb\ket{i_n,i_n+x_n}_{R_n^1R_n^2}\\
		&\quad =\frac{1}{\sqrt{2^{n+1}}}\sum_{j,i_1,i_2,\dotsc,i_n=0}^1(-1)^{jb+i_1z_1+i_2z_2+\dotsb+i_nz_n}\ket{j,i_1}_{AR_1^1}\ket{i_1+x_1,i_2}_{R_1^2R_2^1}\dotsb\ket{i_n+x_n,j+a}_{R_n^2B}\\
		&\quad=\frac{1}{2^{n+1}}\sum_{j,i_1,\dotsc,i_n,a',b'=0}^1\sum_{\substack{x_1',\dotsc,x_n'=0\\z_1',\dotsc,z_n'=0}}^1(-1)^{jb+i_1z_1+\dotsb i_nz_n}(-1)^{jb'}\delta_{i_1,j+a'}\ket{\Phi_{b',a'}}_{AR_1^1}\nonumber\\
		&\qquad\qquad\qquad\qquad\qquad\qquad\qquad\qquad\qquad\qquad\qquad(-1)^{(i_1+x_1)z_1'}\delta_{i_2,i_1+x_1+x_1'}\ket{\Phi_{z_1',x_1'}}_{R_1^2R_2^1}\dotsb\nonumber\\
		&\qquad\qquad\qquad\qquad\qquad\qquad\qquad\qquad\qquad\qquad\qquad(-1)^{(i_n+x_n)z_n'}\delta_{j+a,i_n+x_n+x_n'}\ket{\Phi_{z_n',x_n'}}_{R_n^2B}
	\end{align}
	After some simplification, and using the fact that
	\begin{equation}\label{eq-bit_string_sum_spec}
		\sum_{\vec{\gamma}\in\{0,1\}^n}(-1)^{\vec{\gamma}^{\t}\vec{x}}=2^n\delta_{\vec{x},\vec{0}}
	\end{equation}
	for all $\vec{x}\in\{0,1\}^n$, we obtain
	\begin{multline}
		\ket{\Phi_{a,b}}_{AB}\otimes\ket{\Phi_{z_1,x_1}}_{R_1^1R_1^2}\otimes\ket{\Phi_{z_2,x_2}}_{R_2^1R_2^2}\otimes\dotsb\otimes\ket{\Phi_{z_n,x_n}}_{R_n^1R_n^2}\\=\frac{1}{2^n}\sum_{\vec{x}',\vec{z}'\in\{0,1\}^n}(-1)^{(x_1'+\dotsb+x_n')(z_1+z_1')+(x_1+x_2'+\dotsb+x_n')(z_2+z_2')+\dotsb+(x_1+\dotsb+x_{n-1}+x_n')(z_n+z_n')}(-1)^{x_1z_1'+x_2z_2'+\dotsb+x_nz_n'}\\\ket{\Phi_{z_1'+\dotsb+z_n',x_1'+\dotsb+x_n'}}_{AR_1}\otimes\ket{\Phi_{z_1',x_1'}}_{R_1^2R_2^1}\otimes\dotsb\otimes\ket{\Phi_{z_n',x_n'}}_{R_n^2B}.
	\end{multline}
	Substituting this into \eqref{eq-ent_swap_output_fid_pf1}, and after much simplifying and repeated use of \eqref{eq-bit_string_sum_spec}, we obtain
	\begin{multline}
		\bra{\Phi^+}_{AB}\mathcal{L}_{A\vec{R}_1\dotsb\vec{R}_nB\to AB}^{\text{ES}_n}\left(\rho_{A\vec{R}_1\dotsb\vec{R}_nB}\right)\ket{\Phi^+}_{AB}\\=\sum_{\vec{x},\vec{z}\in\{0,1\}^n}\left(\bra{\Phi_{z_1+\dotsb z_n,x_1+\dotsb+x_n}}_{AR_1^1}\otimes\bra{\Phi_{z_1,x_1}}_{R_1^2R_2^1}\otimes\dotsb\otimes\bra{\Phi_{z_n,x_n}}_{R_n^2B}\right)\left(\rho_{A\vec{R}_1\dotsb\vec{R}_nB}\right)\\\left(\ket{\Phi_{z_1+\dotsb z_n,x_1+\dotsb+x_n}}_{AR_1^1}\otimes\ket{\Phi_{z_1,x_1}}_{R_1^2R_2^1}\otimes\dotsb\otimes\ket{\Phi_{z_n,x_n}}_{R_n^2B}\right),
	\end{multline}
	which holds for all states $\rho_{A\vec{R}_1\dotsb\vec{R}_nB}$. It therefore holds for the tensor product state in the statement of the proposition. This concludes the proof.

\section{Proof of Proposition~\ref{lem-ent_swap_GHZ_post_fidelity}}\label{app-ent_swap_GHZ_post_fidelity_pf}
	
	Let $\rho_{A\vec{R}_1\dotsb\vec{R}_nB}$ be an arbitrary state. We then have
	\begin{multline}\label{eq-ent_swap_GHZ_post_fid_pf1}
		\bra{\text{GHZ}_{n+2}}\mathcal{L}_{A\vec{R}_1\dotsb\vec{R}_nB\to AR_1^1\dotsb R_n^1B}^{\text{GHZ};n}\left(\rho_{A\vec{R}_1\dotsb\vec{R}_nB}\right)\ket{\text{GHZ}_{n+2}}\\=\frac{1}{2}\sum_{x,x'=0}^1\sum_{\vec{x}\in\{0,1\}^n}\bra{x,x,\dotsc,x}L^{x_n}_n\dotsb L^{x_2}_2 L^{x_1}_1\left(\rho_{A\vec{R}_1\dotsb\vec{R}_nB}\right)L^{x_1\dagger}_{1}L^{x_2\dagger}_{2}\dotsb L^{x_n\dagger}_{n}\ket{x',x',\dotsc,x'},
	\end{multline}
	where
	\begin{align}
		L^{x_j}_{j}&\coloneqq\bra{x_j}_{R_j^2}\text{CNOT}_{\vec{R}_j}X_{R_{j+1}^1}^{x_j}\\
		&=\bra{x_j}_{R_j^2}\left(\sum_{x'=0}^1\ket{x'}\bra{x'}_{R_j^1}\otimes X_{R_j^2}^{x'}\right)(\mathbbm{1}_{\vec{R}_j}\otimes X_{R_{j+1}^1}^{x_j})\\
		&=\sum_{x'=0}^1\ket{x'}\bra{x'}_{R_j^1}\otimes\bra{x_j+x'}_{R_j^2}\otimes X_{R_{j+1}^1}^{x_j}.
	\end{align}
	Then,
	\begin{multline}
		L^{x_n}_{n}\dotsb L^{x_2}_2 K^{x_1}_1=\sum_{x_1',\dotsc,x_n'=0}^1\ket{x_1',\dotsc,x_n'}\bra{x_1',x_2'+x_1,x_3'+x_2,\dotsc,x_n'+x_{n-1}}_{R_1^1R_2^1\dotsb R_n^1}\\\otimes\bra{x_1+x_1',x_2+x_2',\dotsc,x_n+x_n'}_{R_1^2R_2^2\dotsb R_n^2}\otimes X_B^{x_n},
	\end{multline}
	so that, using \eqref{eq-comp_to_Bell},
	\begin{align}
		&\bra{x,x,\dotsc,x}_{AR_1^1R_2^1\dotsb R_n^1B}L^{x_n}_n\dotsb L^{x_2}_2 L^{x_1}_1\nonumber\\
		&\quad=\bra{x}_A\bra{x,x+x_1,x+x_2,\dotsc,x+x_{n-1}}_{R_1^1R_2^1\dotsb R_n^1}\bra{x_1+x,x_2+x,\dotsc,x_n+x}_{R_1^2R_2^2\dotsb R_n^2}\bra{x+x_n}_B\\
		&\quad=\bra{x,x}_{AR_1^1}\bra{x+x_1,x+x_1}_{R_1^2R_2^1}\bra{x+x_2,x+x_2}_{R_2^2R_3^1}\dotsb\bra{x+x_n,x+x_n}_{R_n^2B}\\
		&\quad=\frac{1}{\sqrt{2^{n+1}}}\sum_{\vec{z}\in\{0,1\}^n}^1(-1)^{z_1x}(-1)^{z_2(x+x_1)}\dotsb(-1)^{z_{n+1}(x+x_n)}\bra{\Phi_{z_1,0}}_{AR_1^1}\bra{\Phi_{z_2,0}}_{R_1^2R_2^1}\dotsb\bra{\Phi_{z_{n+1},0}}_{R_n^2B}.
	\end{align}
	Substituting this into \eqref{eq-ent_swap_GHZ_post_fid_pf1}, simplifying, and making use of \eqref{eq-bit_string_sum_spec}, we obtain
	\begin{multline}
		\bra{\text{GHZ}_{n+2}}\mathcal{L}_{A\vec{R}_1\dotsb\vec{R}_nB\to AR_1^1\dotsb R_n^1B}^{\text{GHZ};n}\left(\rho_{A\vec{R}_1\dotsb\vec{R}_nB}\right)\ket{\text{GHZ}_{n+2}}\\=\sum_{z_2,\dotsc,z_{n+1}=0}^1\bra{\Phi_{z_2+\dotsb+z_{n+1},0}}\bra{\Phi_{z_2,0}}\dotsb\bra{\Phi_{z_{n+1},0}}\left(\rho_{A\vec{R}_1\dotsb\vec{R}_nB}\right)\ket{\Phi_{z_2+\dotsb+z_{n+1},0}}\ket{\Phi_{z_2,0}}\dotsb\ket{\Phi_{z_{n+1},0}}.
	\end{multline}
	This holds for all states $\rho_{A\vec{R}_1\dotsb\vec{R}_nB}$, so it holds for the tensor product state in the statement of the proposition, thus completing the proof.

\section{Proof of Proposition~\ref{prop-graph_state_dist_post_fidelity}}\label{app-graph_state_dist_post_fidelity_pf}

	Let $\rho_{A_1^nB_1^n}$ be an arbitrary $2n$-qubit state. Then, by definition of the channel $\mathcal{L}^{(G)}$, we have that
	\begin{equation}
		\bra{G}\mathcal{L}^{(G)}\left(\rho_{A_1^nR_1^n}\right)\ket{G}=\sum_{\vec{\gamma}\in\{0,1\}^n}\left(\bra{G^{\vec{\gamma}}}_{A_1^n}\otimes\bra{G^{\vec{\gamma}}}_{R_1^n}\right)\left(\rho_{A_1^nR_1^n}\right)\left(\ket{G^{\vec{\gamma}}}_{A_1^n}\otimes\ket{G^{\vec{\gamma}}}_{R_1^n}\right),
	\end{equation}
	where we recall the definition of $\ket{G^{\vec{\gamma}}}$ in \eqref{eq-graph_state_x}. Now,
	\begin{equation}
		\ket{G^{\vec{\gamma}}}_{A_1^n}\otimes\ket{G^{\vec{\gamma}}}_{R_1^n}=\frac{1}{2^n}\sum_{\vec{\alpha},\vec{\beta}\in\{0,1\}^n}(-1)^{\gamma_1(\alpha_1+\beta_1)+\dotsb\gamma_n(\alpha_n+\beta_n)}(-1)^{\frac{1}{2}\vec{\alpha}^{\t}A(G)\vec{\alpha}+\frac{1}{2}\vec{\beta}^{\t}A(G)\vec{\beta}}\ket{\vec{\alpha}}_{A_1^n}\otimes\ket{\vec{\beta}}_{R_1^n},
	\end{equation}
	and, for all $\vec{\alpha},\vec{\beta}\in\{0,1\}^n$,
	\begin{equation}
		\ket{\vec{\alpha}}_{A_1^n}\otimes\ket{\vec{\beta}}_{R_1^n}=\frac{1}{\sqrt{2^n}}\sum_{\vec{x},\vec{z}\in\{0,1\}^n}(-1)^{\alpha_1z_1+\dotsb+\alpha_nz_n}\delta_{\beta_1,\alpha_1+x_1}\dotsb\delta_{\beta_n,\alpha_n+x_n}\ket{\Phi_{z_1,x_1}}_{A_1R_1}\otimes\dotsb\otimes\ket{\Phi_{z_n,x_n}}_{A_nR^n},
	\end{equation}
	where we have used \eqref{eq-comp_to_Bell}. Then,
	\begin{multline}
		\ket{G^{\vec{\gamma}}}_{A_1^n}\otimes\ket{G^{\vec{\gamma}}}_{R_1^n}\\=\frac{1}{(2^n)^{\frac{3}{2}}}\sum_{\vec{\alpha},\vec{x},\vec{z}\in\{0,1\}^n}(-1)^{\vec{\gamma}^{\t}\vec{x}+\vec{\alpha}^{\t}\vec{z}}(-1)^{\frac{1}{2}\vec{\alpha}^{\t}A(G)\vec{\alpha}+\frac{1}{2}(\vec{\alpha}+\vec{x})^{\t}A(G)(\vec{\alpha}+\vec{x})}\ket{\Phi_{x_1,z_1}}_{A_1R_1}\otimes\dotsb\otimes\ket{\Phi_{z_n,x_n}}_{A_nR_n}.
	\end{multline}
	Now, because $A(G)$ is a symmetric matrix, we have that $\vec{\alpha}^{\t}A(G)\vec{x}=\vec{x}^{\t}A(G)\vec{\alpha}$. We thus obtain
	\begin{equation}
		(-1)^{\frac{1}{2}\vec{\alpha}^{\t}A(G)\vec{\alpha}+\frac{1}{2}(\vec{\alpha}+\vec{x})^{\t}A(G)(\vec{\alpha}+\vec{x})}=(-1)^{\vec{\alpha}^{\t}A(G)\vec{x}+\frac{1}{2}\vec{x}^{\t}A(G)\vec{x}},
	\end{equation}
	so that
	\begin{equation}
		\ket{G^{\vec{\gamma}}}_{A_1^n}\otimes\ket{G^{\vec{\gamma}}}_{R_1^n}=\frac{1}{(2^n)^{\frac{3}{2}}}\sum_{\vec{\alpha},\vec{x},\vec{z}\in\{0,1\}^n}(-1)^{\vec{\gamma}^{\t}\vec{x}+\vec{\alpha}^{\t}\vec{z}}(-1)^{\frac{1}{2}\vec{x}^{\t}A(G)\vec{x}+\vec{\alpha}^{\t}A(G)\vec{x}}\ket{\Phi_{z_1,x_1}}_{A_1R_1}\otimes\dotsb\otimes\ket{\Phi_{z_n,x_n}}_{A_nR_n}.
	\end{equation}
	Therefore, using \eqref{eq-bit_string_sum_spec}, we find that
	\begin{multline}
		\sum_{\vec{\gamma}\in\{0,1\}^n}\left(\bra{G^{\vec{\gamma}}}_{A_1^n}\otimes\bra{G^{\vec{\gamma}}}_{R_1^n}\right)\left(\rho_{A_1^nR_1^n}\right)\left(\ket{G^{\vec{\gamma}}}_{A_1^n}\otimes\ket{G^{\vec{\gamma}}}_{R_1^n}\right)\\=\frac{1}{(2^n)^2}\sum_{\vec{\alpha},\vec{\alpha}',\vec{z},\vec{z}',\vec{x}\in\{0,1\}^n}(-1)^{\vec{\alpha}^{\t}(A(G)\vec{x}+\vec{z})+\vec{\alpha}^{\prime\t}(A(G)\vec{x}+\vec{z}')}\left(\bra{\Phi_{z_1,x_1}}_{A_1R_1}\otimes\dotsb\otimes\bra{\Phi_{z_n,x_n}}_{A_nR_n}\right)\left(\rho_{A_1^nR_1^n}\right)\\\left(\ket{\Phi_{z_1',x_1'}}_{A_1R_1}\otimes\dotsb\otimes\ket{\Phi_{z_n',x_n}}_{A_nR_n}\right).
	\end{multline}
	Using \eqref{eq-bit_string_sum_spec} two more times in the summation with respect to $\vec{\alpha}$ and $\vec{\alpha}'$ finally leads to 
	\begin{multline}
		\sum_{\vec{\gamma}\in\{0,1\}^n}\left(\bra{G^{\vec{\gamma}}}_{A_1^n}\otimes\bra{G^{\vec{\gamma}}}_{R_1^n}\right)\left(\rho_{A_1^nR_1^n}\right)\left(\ket{G^{\vec{\gamma}}}_{A_1^n}\otimes\ket{G^{\vec{\gamma}}}_{R_1^n}\right)\\=\sum_{\vec{x}\in\{0,1\}^n}\left(\bra{\Phi_{z_1,x_1}}_{A_1R_1}\otimes\dotsb\otimes\bra{\Phi_{z_n,x_n}}_{A_nR_n}\right)\left(\rho_{A_1^nR_1^n}\right)\left(\ket{\Phi_{z_1,x_1}}_{A_1R_1}\otimes\dotsb\otimes\ket{\Phi_{z_n,x_n}}_{A_nR_n}\right),
	\end{multline}
	where $\vec{z}=A(G)\vec{x}$. Since this holds for all states $\rho_{A_1^nR_1^n}$, it holds for the tensor product state in the statement of the proposition, which completes the proof.

\section{Proof of Proposition~\ref{prop-noisy_transmission_output_Bell}}\label{sec-noisy_transmission_output_Bell_pf}

	The statement of Proposition~\ref{prop-noisy_transmission_output_Bell} follows from the following expressions, which we prove here:
	\begin{align}
		\Pi_{AB}(\mathcal{L}_A^{\eta_1,\overline{n}_1}\otimes\mathcal{L}_B^{\eta_2,\overline{n}_2})(\Phi_{AB}^{\pm})\Pi_{AB}&=\frac{1}{2}(x_1x_2+y_1y_2\pm z_1z_2)\Phi_{AB}^++\frac{1}{2}(x_1x_2+y_1y_2\mp z_1z_2)\Phi_{AB}^-\nonumber\\
		&\quad +\frac{1}{2}(x_1y_2+y_1x_2)\Psi_{AB}^++\frac{1}{2}(x_1y_2+y_1x_2)\Psi_{AB}^-,\label{eq-noisy_transmission_output_Bell12}\\
		\Pi_{AB}(\mathcal{L}_A^{\eta_1,\overline{n}_1}\otimes\mathcal{L}_B^{\eta_2,\overline{n}_2})(\Psi_{AB}^{\pm})\Pi_{AB}&=\frac{1}{2}(x_1y_2+y_1x_2)\Phi_{AB}^++\frac{1}{2}(x_1y_2+y_1x_2)\Phi_{AB}^-\nonumber\\
		&\quad +\frac{1}{2}(x_1x_2+y_1y_2\pm z_1z_2)\Psi_{AB}^+\nonumber\\
		&\quad +\frac{1}{2}(x_1x_2+y_1y_2\mp z_1z_2)\Psi_{AB}^-, \label{eq-noisy_transmission_output_Bell34}
	\end{align}
	for all $\eta_1,\eta_2,\overline{n}_1,\overline{n}_2\in[0,1]$.
	
	First, consider the system $A\equiv A_1A_2$, which is acted upon by the channel $\mathcal{L}^{\eta_1,\overline{n}_1}$. The beamsplitter unitary $U^{\eta_1}$ acts on the creation and annihilation operators of the system and environment modes as follows \cite{Serafini_book}:
	\begin{align}
		\hat{a}&\mapsto U^{\eta_1}\hat{a}U^{\eta_1\dagger}=\sqrt{\eta_1}\hat{a}+\sqrt{1-\eta_1}\hat{e},\\
		\hat{e}&\mapsto U^{\eta_1}\hat{e}U^{\eta_1\dagger}=\sqrt{1-\eta_1}\hat{a}-\sqrt{\eta_1}\hat{e}.
	\end{align}
	Then, it is straightforward to show that
	\begin{align}
		U_{A_1E_1}^{\eta_1}\ket{0,1}_{A_1E_1}&=\sqrt{1-\eta_1}\ket{1,0}_{A_1E_1}-\sqrt{\eta_1}\ket{0,1}_{A_1E_1},\\
		U_{A_1E_1}^{\eta_1}\ket{1,0}_{A_1E_1}&=\sqrt{\eta_1}\ket{1,0}_{A_1E_1}+\sqrt{1-\eta_1}\ket{0,1}_{A_1E_1},\\
		U_{A_1E_1}^{\eta_1}\ket{1,1}_{A_1E_1}&=\sqrt{2\eta_1(1-\eta_1)}\ket{2,0}_{A_1E_1}+(1-2\eta_1)\ket{1,1}_{A_1E_1}-\sqrt{2\eta_1(1-\eta_1)}\ket{0,2}_{A_1E_1}.
	\end{align}
	From this, we readily obtain
	\begin{align}
		\Tr_{E_1}[U_{A_1E_1}^{\eta_1}\ket{1,1}\bra{1,1}_{A_1E_1}U_{A_1E_1}^{\eta_1\dagger}]&=2\eta_1(1-\eta_1)\ket{2}\bra{2}_{A_1}+(1-2\eta_1)^2\ket{1}\bra{1}_{A_1}\nonumber\\
		&\qquad\qquad\qquad\qquad\qquad\qquad+2\eta_1(1-\eta_1)\ket{0}\bra{0}_{A_1},\label{eq-pf1}\\
		\Tr_{E_1}[U_{A_1E_1}^{\eta_1}\ket{0,1}\bra{1,1}_{A_1E_1}U_{A_1E_1}^{\eta_1\dagger}]&=\sqrt{2\eta_1}(1-\eta_1)\ket{1}\bra{2}_{A_1}-\sqrt{\eta_1}(1-2\eta_1)\ket{0}\bra{1}_{A_1},\label{eq-pf2}\\
		\Tr_{E_1}[U_{A_1E_1}^{\eta_1}\ket{1,0}\bra{1,1}_{A_1E_1}U_{A_1E_1}^{\eta_1\dagger}]&=\sqrt{2(1-\eta_1)}\eta_1\ket{1}\bra{2}_{A_1}+\sqrt{1-\eta_1}(1-2\eta_1)\ket{0}\bra{1}_{A_1},\label{eq-pf3}\\
		\Tr_{E_1}[U_{A_1E_1}^{\eta_1}\ket{0,0}\bra{1,1}_{A_1E_1}U_{A_1E_1}^{\eta_1\dagger}]&=\sqrt{2\eta_1(1-\eta_1)}\ket{0}\bra{2}_{A_1},\label{eq-pf4}\\
		\Tr_{E_1}[U_{A_1E_1}^{\eta_1}\ket{0,0}\bra{1,0}_{A_1E_1}U_{A_1E_1}^{\eta_1\dagger}]&=\sqrt{\eta_1}\ket{0}\bra{1}_{A_1},\label{eq-pf5}\\
		\Tr_{E_1}[U_{A_1E_1}^{\eta_1}\ket{1,0}\bra{0,0}_{A_1E_1}U_{A_1E_1}^{\eta_1\dagger}]&=\sqrt{\eta_1}\ket{1}\bra{0}_{A_1},\label{eq-pf6}\\
		\Tr_{E_1}[U_{A_1E_1}^{\eta_1}\ket{1,0}\bra{1,0}_{A_1E_1}U_{A_1E_1}^{\eta_1\dagger}]&=\eta_1\ket{1}\bra{1}_{A_1}+(1-\eta_1)\ket{0}\bra{0}_{A_1},\label{eq-pf7}\\
		\Tr_{E_1}[U_{A_1E_1}^{\eta_1}\ket{0,1}\bra{0,1}_{A_1E_1}U_{A_1E_1}^{\eta_1\dagger}]&=(1-\eta_1)\ket{1}\bra{1}_{A_1}+\eta_1\ket{0}\bra{0}_{A_1}.\label{eq-pf8}
	\end{align}
	Analogous expressions hold for $\Tr_{E_2}[U_{A_2E_2}^{\eta_1}\ket{i,j}\bra{k,\ell}_{A_2E_2}U_{A_2E_2}^{\eta_1\dagger}]$ for $i,j,k,\ell\in\{0,1\}$.
	
	Now, writing an arbitrary linear operator $X_{A_1A_2}$ as
	\begin{equation}
		X_{A_1A_2}=\alpha\ket{0,1}\bra{0,1}+\beta\ket{0,1}\bra{1,0}+\beta'\ket{1,0}\bra{0,1}+\gamma\ket{1,0}\bra{1,0},
	\end{equation}
	with $\alpha,\beta,\beta',\gamma\in\mathbb{C}$, we have that
	\begin{align}
		X_{A_1A_2}\otimes\widetilde{\Theta}_{E_1E_2}^{\overline{n}_1}&=\alpha(1-\overline{n}_1)\ket{0,0,1,0}\bra{0,0,1,0}_{A_1E_1A_2E_2}+\alpha\frac{\overline{n}_1}{2}\ket{0,0,1,1}\bra{0,0,1,1}_{A_1E_1A_2E_2}\\
		&\quad+\alpha\frac{\overline{n}_1}{2}\ket{0,1,1,0}\bra{0,1,1,0}_{A_1E_1A_2E_2}+\beta(1-\overline{n}_1)\ket{0,0,1,0}\bra{1,0,0,0}_{A_1E_1A_2E_2}\\
		&\quad+\beta\frac{\overline{n}_1}{2}\ket{0,0,1,1}\bra{1,0,0,1}_{A_1E_1A_2E_2}+\beta\frac{\overline{n}_1}{2}\ket{0,1,1,0}\bra{1,1,0,0}_{A_1E_1A_2E_2}\\
		&\quad+\beta'(1-\overline{n}_1)\ket{1,0,0,0}\bra{0,0,1,0}_{A_1E_1A_2E_2}+\beta'\frac{\overline{n}_1}{2}\ket{1,0,0,1}\bra{0,0,1,1}_{A_1E_1A_2E_2}\\
		&\quad+\beta'\frac{\overline{n}_1}{2}\ket{1,1,0,0}\bra{0,1,1,0}_{A_1E_1A_2E_2}+\gamma(1-\overline{n}_1)\ket{1,0,0,0}\bra{1,0,0,0}_{A_1E_1A_2E_2}\\
		&\quad +\gamma\frac{\overline{n}_1}{2}\ket{1,0,0,1}\bra{1,0,0,1}_{A_1E_1A_2E_2}+\gamma\frac{\overline{n}_1}{2}\ket{1,1,0,0}\bra{1,1,0,0}_{A_1E_1A_2E_2},
	\end{align}
	so that, using \eqref{eq-pf1}--\eqref{eq-pf8}, we obtain
	\begin{align}
		\mathcal{L}_{A_1A_2}^{\eta_1,\overline{n}_1}(X_{A_1A_2})&=\Tr_{E_1E_2}[(U_{A_1E_1}^{\eta_1}\otimes U^{\eta_1}_{A_2E_2})(X_{A_1A_2}\otimes\widetilde{\Theta}_{E_1E_2}^{\overline{n}_1})(U^{\eta_1}_{A_1E_1}\otimes U^{\eta_1}_{A_2E_2})^\dagger]\\[0.2cm]
		&=\alpha(1-\overline{n}_1)\ket{0}\bra{0}_{A_1}\otimes(\eta_1\ket{1}\bra{1}_{A_2}+(1-\eta_1)\ket{0}\bra{0}_{A_2})\\
		&\quad+\alpha\frac{\overline{n}_1}{2}\ket{0}\bra{0}_{A_1}\otimes(2\eta_1(1-\eta_1)\ket{2}\bra{2}_{A_2}+(1-2\eta_1)^2\ket{1}\bra{1}_{A_2}\nonumber\\
		&\qquad\qquad\qquad\qquad\qquad\qquad\qquad\qquad\qquad\qquad+2\eta_1(1-\eta_1)\ket{0}\bra{0}_{A_2})\\
		&\quad+\alpha\frac{\overline{n}_1}{2}((1-\eta_1)\ket{1}\bra{1}_{A_1}+\eta_1\ket{0}\bra{0}_{A_1})\otimes(\eta_1\ket{1}\bra{1}_{A_2}+(1-\eta_1)\ket{0}\bra{0}_{A_2})\\
		&\quad+\beta(1-\overline{n}_1)\sqrt{\eta_1}\ket{0}\bra{1}_{A_1}\otimes\sqrt{\eta_1}\ket{1}\bra{0}_{A_2}\\
		&\quad+\beta\frac{\overline{n}_1}{2}\sqrt{\eta_1}\ket{0}\bra{1}_{A_1}\otimes(\sqrt{2\eta_1}(1-\eta_1)\ket{2}\bra{1}_{A_2}-\sqrt{\eta_1}(1-2\eta_1)\ket{1}\bra{0}_{A_2})\\
		&\quad+\beta\frac{\overline{n}_1}{2}(\sqrt{2\eta_1}(1-\eta_1)\ket{1}\bra{2}_{A_1}-\sqrt{\eta_1}(1-2\eta_1)\ket{0}\bra{1}_{A_1})\otimes\sqrt{\eta_1}\ket{1}\bra{0}_{A_2}\\
		&\quad +\beta'(1-\overline{n}_1)\sqrt{\eta_1}\ket{1}\bra{0}_{A_1}\otimes\sqrt{\eta_1}\ket{0}\bra{1}_{A_2}\\
		&\quad+\beta'\frac{\overline{n}_1}{2}\sqrt{\eta_1}\ket{1}\bra{0}_{A_1}\otimes(\sqrt{2\eta_1}(1-\eta_1)\ket{1}\bra{2}_{A_2}-\sqrt{\eta_1}(1-2\eta_1)\ket{0}\bra{1}_{A_2})\\
		&\quad+\beta'\frac{\overline{n}_1}{2}(\sqrt{2\eta_1}(1-\eta_1)\ket{2}\bra{1}_{A_1}-\sqrt{\eta_1}(1-2\eta_1)\ket{1}\bra{0}_{A_1})\otimes\sqrt{\eta_1}\ket{0}\bra{1}_{A_2}\\
		&\quad+\gamma(1-\overline{n}_1)(\eta_1\ket{1}\bra{1}_{A_1}+(1-\eta_1)\ket{0}\bra{0}_{A_1})\otimes\ket{0}\bra{0}_{A_2}\\
		&\quad+\gamma\frac{\overline{n}_1}{2}(\eta_1\ket{1}\bra{1}_{A_1}+(1-\eta_1)\ket{0}\bra{0}_{A_1})\otimes((1-\eta_1)\ket{1}\bra{1}_{A_2}+\eta_1\ket{0}\bra{0}_{A_2})\\
		&\quad+\gamma\frac{\overline{n}_1}{2}(2\eta_1(1-\eta_1)\ket{2}\bra{2}_{A_1}+(1-2\eta_1)^2\ket{1}\bra{1}_{A_1}\nonumber\\
		&\qquad\qquad\qquad\qquad\qquad\qquad\qquad\qquad +2\eta_1(1-\eta_1)\ket{0}\bra{0}_{A_1})\otimes\ket{0}\bra{0}_{A_2}.
	\end{align}
	Then, with the appropriate choices of $\alpha,\beta,\beta',\gamma$, we obtain
	\begin{align}
		\mathcal{L}_{A_1A_2}^{\eta_1,\overline{n}_1}(\ket{0,1}\bra{0,1}_{A_1A_2})&=w_1\ket{0,0}\bra{0,0}_{A_1A_2}+x_1\ket{0,1}\bra{0,1}_{A_1A_2}+y_1\ket{1,0}\bra{1,0}_{A_1A_2}\nonumber\\
		&\qquad\qquad\qquad\qquad\qquad+u_1\ket{1,1}\bra{1,1}_{A_1A_2}+2u_1\ket{0,2}\bra{0,2}_{A_1A_2},\label{eq-pf9}\\
		\mathcal{L}_{A_1A_2}^{\eta_1,\overline{n}_1}(\ket{0,1}\bra{1,0}_{A_1A_2})&=z_1\ket{0,1}\bra{1,0}_{A_1A_2}+\sqrt{2}u_1\ket{0,2}\bra{1,1}_{A_1A_2}+\sqrt{2}u_1\ket{1,1}\bra{2,0}_{A_1A_2},\label{eq-pf10}\\
		\mathcal{L}_{A_1A_2}^{\eta_1,\overline{n}_1}(\ket{1,0}\bra{0,1}_{A_1A_2})&=z_1\ket{1,0}\bra{0,1}_{A_1A_2}+\sqrt{2}u_1\ket{1,1}\bra{0,2}_{A_1A_2}+\sqrt{2}u_1\ket{2,0}\bra{1,1}_{A_1A_2},\label{eq-pf11}\\
		\mathcal{L}_{A_1A_2}^{\eta_1,\overline{n}_1}(\ket{1,0}\bra{1,0}_{A_1A_2})&=w_1\ket{0,0}\bra{0,0}_{A_1A_2}+y_1\ket{0,1}\bra{0,1}_{A_1A_2}+x_1\ket{1,0}\bra{1,0}_{A_1A_2}\nonumber\\
		&\qquad\qquad\qquad\qquad\qquad+u_1\ket{1,1}\bra{1,1}_{A_1A_2}+2u_1\ket{2,0}\bra{2,0}_{A_1A_2},\label{eq-pf12}
	\end{align}
	where
	\begin{align}
		w_1&\coloneqq (1-\overline{n}_1)(1-\eta_1)+\frac{3}{2}\overline{n}_1\eta_1(1-\eta_1),\\
		x_1&\coloneqq (1-\overline{n}_1)\eta_1+\frac{\overline{n}_1}{2}((1-2\eta_1)^2+\eta_1^2),\\
		y_1&\coloneqq \frac{\overline{n}_1}{2}(1-\eta_1)^2,\\
		u_1&\coloneqq \frac{\overline{n}_1}{2}\eta_1(1-\eta_1),\\
		z_1&\coloneqq (1-\overline{n}_1)\eta_1-\overline{n}_1\eta_1(1-2\eta_1).
	\end{align}
	Analogous expressions hold for $\mathcal{L}_{B_1B_2}^{\eta_2,\overline{n}_2}(\ket{i,j}\bra{k,\ell}_{B_1B_2})$ for $i,j,k,\ell\in\{0,1\}$.

	Finally, since
	\begin{equation}
		\ket{\Phi^{\pm}}_{A_1A_2B_1B_2}\coloneqq\frac{1}{\sqrt{2}}(\ket{0,1,0,1}_{A_1A_2B_1B_2}\pm\ket{1,0,1,0}_{A_1A_2B_1B_2}),
	\end{equation}
	we find that
	\begin{align}
		(\mathcal{L}_A^{\eta_1,\overline{n}_1}\otimes\mathcal{L}_B^{\eta_2,\overline{n}_2})(\Phi_{AB}^{\pm})&=\frac{1}{2}\left(\mathcal{L}_{A_1A_2}^{\eta_1,\overline{n}_1}(\ket{0,1}\bra{0,1}_{A_1A_2})\otimes\mathcal{L}_{B_1B_2}^{\eta_2,\overline{n}_2}(\ket{0,1}\bra{0,1}_{B_1B_2})\right.\\
		&\qquad\left. \pm\mathcal{L}_{A_1A_2}^{\eta_1,\overline{n}_1}(\ket{0,1}\bra{1,0}_{A_1A_2})\otimes\mathcal{L}_{B_1B_2}^{\eta_2,\overline{n}_2}(\ket{0,1}\bra{1,0}_{B_1B_2})\right.\\
		&\qquad\left. \pm\mathcal{L}_{A_1A_2}^{\eta_1,\overline{n}_1}(\ket{1,0}\bra{0,1}_{A_1A_2})\otimes\mathcal{L}_{B_1B_2}^{\eta_2,\overline{n}_2}(\ket{1,0}\bra{0,1}_{B_1B_2})\right.\\
		&\qquad\left. +\mathcal{L}_{A_1A_2}^{\eta_1,\overline{n}_1}(\ket{1,0}\bra{1,0}_{A_1A_2})\otimes\mathcal{L}_{B_1B_2}^{\eta_2,\overline{n}_2}(\ket{1,0}\bra{1,0}_{B_1B_2})\right).
	\end{align}
	
	Then, using \eqref{eq-pf9}-\eqref{eq-pf12}, we obtain
	\begin{align}
		\Pi_{AB}(\mathcal{L}_A^{\eta_1,\overline{n}_1}\otimes\mathcal{L}_B^{\eta_2,\overline{n}_2})(\Phi_{AB}^{\pm})\Pi_{AB}&=\frac{1}{2}\left((x_1\ket{H}\bra{H}_A+y_1\ket{V}\bra{V}_A)\otimes(x_2\ket{H}\bra{H}_B+y_2\ket{V}\bra{V}_B)\right.\\
		&\quad\left. \pm z_1\ket{H}\bra{V}_A\otimes z_2\ket{H}\bra{V}_B\right.\\
		&\quad\left. \pm z_1\ket{V}\bra{H}_A\otimes z_2\ket{V}\bra{H}_B\right.\\
		&\quad\left. +(y_1\ket{H}\bra{H}_A+x_1\ket{V}\bra{V}_B)\otimes(y_2\ket{H}\bra{H}_B+x_2\ket{V}\bra{V}_B)\right)\\
		&=\frac{1}{2}(x_1x_2+y_1y_2)(\ket{H,H}\bra{H,H}_{AB}+\ket{V,V}\bra{V,V}_{AB})\\
		&\quad\pm\frac{1}{2}z_1z_2(\ket{H,H}\bra{V,V}_{AB}+\ket{V,V}\bra{H,H}_{AB})\\
		&\quad+\frac{1}{2}(x_1y_2+y_1x_2)(\ket{H,V}\bra{H,V}_{AB}+\ket{V,H}\bra{V,H}_{AB})\\
		&=\frac{1}{2}(x_1x_2+y_1y_2\pm z_1z_2)\Phi_{AB}^++\frac{1}{2}(x_1x_2+y_1y_2\mp z_1z_2)\Phi_{AB}^-\\
		&\quad + \frac{1}{2}(x_1y_2+y_1x_2)\Psi_{AB}^++\frac{1}{2}(x_1y_2+y_1x_2)\Psi_{AB}^-,
	\end{align}
	where the last equality is precisely \eqref{eq-noisy_transmission_output_Bell12}, as required. The calculation to obtain \eqref{eq-noisy_transmission_output_Bell34} is analogous.

\chapter{DIRECTIONS FOR FUTURE WORK}\label{sec-future_work}

	In this appendix, we outline some directions for future work. We state ideas and also outline some concrete steps that can be taken.

\section{Elementary link QDP with distillation}\label{sec-network_QDP_distillation}

	In Chapter~\ref{chap-network_QDP}, we defined a quantum decision process for elementary link generation without including entanglement distillation protocols. In order to develop more sophisticated entanglement distribution protocols using quantum decision processes, it is crucial that entanglement distillation be included in the quantum decision process for elementary link generation. Let us describe in detail how this can be done.
	
	The setting of elementary link generation with entanglement distillation applies specifically to the case that the network is described by a multigraph $G=(V,E,c)$, where the function $c:E\to\mathbb{N}$ tells us how many parallel edges correspond to the elements of $E$. We denote the parallel edges corresponding to $e\in E$ by $e^1,e^2,\dotsc,e^{ c(e)}$ (see Section~\ref{sec-graph_theory}). This means that up to $ c(e)$ of the parallel elementary links can be distilled at once. As before, we associate an agent to every $e\in E$, but now the environment of each agent consists of all of the quantum systems for all of the $ c(e)$ parallel elementary links associated with $e$. The agents are allowed to perform the same actions as before on each of their parallel elementary links, but now they can also distill some number $j\leq c(e)$ of the parallel elementary links.
	
	Throughout this development, we leave the actual distillation protocol arbitrary and simply assume, as described in Section~\ref{sec-network_q_state_practical}, that it can be described by some LOCC quantum instrument channel. See Refs.~\cite{BBP96,DAR96,BDSW96} for examples of bipartite entanglement distillation protocols, and Refs.~\cite{DCT99,DC00,DAB03,ADB05,HDB05,MB05} for examples of multipartite entanglement distillation protocols. See also Refs.~\cite{ZPZ01,YKI01,PSBZ01,DVDV03,BM05,MMO+07,CB08,NFB14,RST+18,KAJ19}. Upper bounds on the fidelity that can be achieved after an entanglement distillation protocol, in the non-asymptotic setting, can be calculated using a semi-definite program (SDP), as shown in Ref.~\cite{Rains01}. For practical entanglement distillation schemes, which typically only consist of one round of local operations and classical communication and also have non-unit success probability, SDP upper bounds have been provided in Ref.~\cite{RST+18}. In Ref.~\cite{WMDB19}, the authors use reinforcement learning to discover protocols for entanglement distillation. See Refs.~\cite{DBC99,HKBD07,RPL09,SJM19} for an analysis of quantum repeater protocols with entanglement distillation.
	
	\begin{definition}[QDP for elementary link generation with distillation]\label{def-QDP_elem_link_distill}
		Let $G=(V,E,c)$ be the multigraph corresponding to the physical (elementary) links of a quantum network, and let $e\in E$ consists of $k$ nodes, $k\geq 2$. We define a quantum decision process for $e$ by defining the agent for $e$ to be collectively the nodes belonging to $e$, and we define its environment to be the collection of all $k c(e)$ quantum systems distributed by the source station to the nodes of $e$. Then, the other elements of the quantum decision process are defined as follows.
		\begin{itemize}
			\item We denote the quantum systems of the environment collectively by $E^{e}$, and we let $E_t^e$ denote these quantum systems at time $t\geq 0$. We use $E_t^{e^j}$, $1\leq j\leq c(e)$, to refer to the quantum systems of the $j^{\text{th}}$ parallel elementary link of $e$ at time $t\geq 0$. The state of the environment at time $t=0$ is the source state $\rho_{E_0^e}^S\coloneqq\bigotimes_{j=1}^{ c(e)}\rho_{E_0^{e^j}}^{S}$, where $\rho_{E_0^{e^j}}^S$ is the source state for the $j^{\text{th}}$ parallel elementary link. We use the abbreviation $\rho_{e^j}^S\equiv\rho_{E_0^{e^j}}^S$ throughout.
			
			\item We let $\mathcal{X}_e=\{0,1\}^{ c(e)}$ tell us whether or not the parallel elementary links are active. In other words, every element $\vec{x}=(x^1,x^2,\dotsc,x^{ c(e)})\in\mathcal{X}_e$ is such that $x^j=0$ indicates that the $j^{\text{th}}$ parallel elementary link is inactive and $x^j=1$ indicates that the $j^{\text{th}}$ parallel elementary link is active. We then define random variables $X_{e^j}(t)$, $t\geq 1$ taking values in $\{0,1\}$ as follows:
				\begin{itemize}
					\item $X_{e^j}(t)=0$: the $j^{\text{th}}$ parallel elementary link is inactive;
					\item $X_{e^j}(t)=1$: the $j^{\text{th}}$ parallel elementary link is active.
				\end{itemize}
				We let
				\begin{multline}
					\mathcal{A}_e=\left\{(0;\vec{a}):\vec{a}\in\{0,1\}^{ c(e)}\right\}\\\cup\left\{(1;j,j',s_j,\vec{a}):2\leq j\leq c(e),\, j'<j,\, s_j\subseteq[ c(e)],\, \vec{a}\in\{0,1\}^{ c(e)-j}\right\}
				\end{multline}
				be the set of actions of the agent. An action of the form $(0;\vec{a})$ indicates that no distillation is to be performed, and the values in $\vec{a}=(a^1,a^2,\dotsc,a^{ c(e)})$ indicate whether the parallel elementary link should be kept (``wait'') or discarded and reattempted (``request''); i.e., $a^j=0$ indicates that the $j^{\text{th}}$ parallel elementary link should be kept (``wait'') and $a^j=1$ indicates that the $j^{\text{th}}$ parallel elementary link should be discarded and reattempted (``request''). An action of the form $(1;j,j',s_j,\vec{a})$ indicates that a distillation protocol from $j$ to $j'$ parallel elementary links should be performed on the $j$ parallel elementary links specified by the set $s_j=\{i_1,i_2,\dotsc,i_j\}\subseteq[ c(e)]$, and on the remaining $ c(e)-j$ parallel elementary links the action is given by the elements of $\vec{a}=(a^i:i\in[ c(e)]\setminus s_j)\in\{0,1\}^{ c(e)-j}$ should be performed. Based on these definitions, we define random variables $A(t)$, $t\geq 1$ that take values in $\mathcal{A}_e$.
				
				A history $H_e(t)$ is of the form
				\begin{equation}
					H_e(t)=(\vec{X}_e(1),A_e(1),\vec{X}_e(2),A_e(2),\dotsc,A_e(t-1),\vec{X}_e(t-1)),
				\end{equation}
				where $\vec{X}_e(t)=(X_{e^1}(t),\dotsc,X_{e^{c(e)}}(t))$. The set of all histories is $\Omega_e(t)$ and every $h^t\in\Omega_e(t)$ is of the form $h^t=(\vec{x}_1,a_1,\vec{x}_2,a_2,\dotsc,a_{t-1},\vec{x}_t)$ for all $t\geq 1$, with $h^1=\vec{x}_1$.
			
			\item The transition maps $\mathcal{T}_e^{0;\vec{x}_1}$ are defined as follows:
				\begin{equation}
					\mathcal{T}_e^{0;\vec{x}_1}\coloneqq\bigotimes_{j=1}^{ c(e)}\left(\mathcal{M}_{e^j}^{x_1^j}\circ\mathcal{S}_{e^j}\right)
				\end{equation}
				for all $\vec{x}_1=(x_1^1,x_1^2,\dotsc,x_1^{ c(e)})\in\{0,1\}^{ c(e)}$, where $\mathcal{S}_{e^j}$ is the source transmission channel for the $j^{\text{th}}$ parallel elementary link and $\{\mathcal{M}_{e^j}^0,\mathcal{M}_{e^j}^1\}$ is the heralding quantum instrument for the $j^{\text{th}}$ parallel elementary link.
			
				For $t\geq 1$, we denote the transition maps by $\mathcal{T}_{e;e^1,\dotsc,e^{ c(e)}}^{\vec{x}_t,a_t,\vec{x}_{t+1}}$, with $\vec{x}_t,\vec{x}_{t+1}\in\mathcal{X}_e$ and $a_t\in\mathcal{A}_e$. Using the transition maps for individual parallel elementary links as defined in Definition~\ref{def-network_QDP_elem_link}, for actions of the form $(0,\vec{a}_t)$ we let
				\begin{equation}\label{eq-network_QDP_w_distill_trans1}
					\mathcal{T}_{e;e^1,\dotsc,e^{ c(e)}}^{\vec{x}_t,(0;\vec{a}_t),\vec{x}_{t+1}}\coloneqq\bigotimes_{i=1}^{ c(e)}\mathcal{T}_{e^i}^{x_t^i,a_t^i,x_{t+1}^i}\quad \forall~\vec{a}_t\in\{0,1\}^{ c(e)},
				\end{equation}
				where the maps $\mathcal{T}_{e^i}^{x_t^i,a_t^i,x_{t+1}^i}$ are defined in Definition~\ref{def-network_QDP_elem_link}. In other words, if no distillation is to be performed, then the individual parallel elementary links evolve independently, exactly as they do in the quantum decision process without distillation. For actions of the form $(1;j,j',s_j,\vec{a}_t)$, with $s_j=\{i_1,\dotsc,i_j:1\leq i_1<\dotsb<i_j\leq c(e)\}$, we let
				\begin{equation}\label{eq-network_QDP_w_distill_trans2}
					\begin{aligned}
					&\mathcal{T}_{e;e^1,\dotsc,e^{ c(e)}}^{\vec{x}_t,(1;j,j',s_j,\vec{a}_t),\vec{x}_{t+1}}\\&\quad\coloneqq \left\{\begin{array}{l l} \displaystyle \bigotimes_{i\in s_j}\mathcal{T}_{e^i}^{x_t^i,0,x_{t+1}^i}\bigotimes_{i\in[ c(e)]\setminus s_j}\mathcal{T}_{e^i}^{x_t^i,a_t^i,x_{t+1}^i} & \text{if }x_t^{i_1}\dotsb x_t^{i_j}=0, \\[1cm]
					
					\displaystyle \mathcal{D}^{e;1}_{e^{i_1}\dotsb e^{i_j}\to e^{i_1}\dotsb e^{i_{j'}}}\bigotimes_{\ell=j'+1}^j \widetilde{\mathcal{S}}_{e^{i_{\ell}}}^{x_{t+1}^{i_{\ell}}}(\rho_{e^{i_{\ell}}}^S)\bigotimes_{i\in[ c(e)]\setminus s_j}\mathcal{T}_{e^i}^{x_t^i,a_t^i,x_{t+1}^i} & \text{if } x_t^{i_1}\dotsb x_t^{i_j}=1,\\ & x_{t+1}^{i_1}\dotsb x_{t+1}^{i_{j'}}=1, \\[1cm]
					
					\displaystyle \mathcal{D}^{e;0}_{e^{i_1}\dotsb e^{i_j}\to e^{i_1}\dotsb e^{i_{j'}}}\bigotimes_{\ell=j'+1}^j \widetilde{\mathcal{S}}_{e^{i_{\ell}}}^{x_{t+1}^{i_{\ell}}}(\rho_{e^{i_{\ell}}}^S)\bigotimes_{i\in[ c(e)]\setminus s_j}\mathcal{T}_{e^i}^{x_t^i,a_t^i,x_{t+1}^i} & \text{if } x_t^{i_1}\dotsb x_t^{i_j}=1,\\ & x_{t+1}^{i_\ell}=0 \\ & \forall~1\leq\ell\leq j', \end{array}\right.
					\end{aligned}
				\end{equation}
				where
				\begin{equation}
					\widetilde{\mathcal{S}}_{e^j}^{x^j}\coloneqq\mathcal{M}_{e^j}^{x^j}\circ\mathcal{S}_{e^j}
				\end{equation}
				for all $1\leq j\leq c(e)$. In words, if at least one of the parallel links in the set $s_j$ is inactive (so the first condition is satisfied), then no distillation is performed and all of the $j$ links in $s_j$ are evolved according to the ``wait'' transition map for the individual links. If if all of the $j$ links in the set $s_j$ are active and the distillation succeeds (so the second condition is satisfied), then the distillation map corresponding to success is applied to the $j$ links in $s_j$, on the remaining $j-j'$ links we request a new link, and on the links not in $s_j$ we apply the actions specified by $\vec{a}_t$ individually. Finally, if all of the links in $s_j$ are active but the distillation fails (so the third condition is satisfied), then the distillation map corresponding to failure is applied to the $s_j$ links, on the remaining $j-j'$ links we request a new link, and on the links not in $s_j$ we apply the actions specified by $\vec{a}_t$ individually.
		
			\item Given a target $k c(e)$-partite pure quantum state $\psi_e^{\text{target}}=\ket{\psi^{\text{target}}}\bra{\psi^{\text{target}}}_e$, the reward at time $t\geq 1$ is defined as follows:
				\begin{align}
					\mathcal{R}_e^{t;h^{t+1},1}(\cdot)&=\psi_e^{\text{target}}(\cdot)\psi_e^{\text{target}},\\
					\mathcal{R}_e^{t;h^{t+1},0}(\cdot)&=(\mathbbm{1}_e-\psi_e^{\text{target}})(\cdot)(\mathbbm{1}_e-\psi_e^{\text{target}}),
				\end{align}
				for all $h^{t+1}\in\Omega_e(t+1)$, and the functions $R_e(t):\Omega_e(t+1)\times\{0,1\}\to\mathbb{R}$ are defined as follows:
				\begin{align}
					R_e(t)(h^{t+1},0)&=0,\\
					R_e(t)(h^{t+1},1)&=\delta_{\vec{x}_{t+1},\vec{1}},
				\end{align}
				for all $h^{t+1}\in\Omega_e(t+1)$.
			
			\item A $T$-step policy for the agent is a sequence of the form $\pi_T^e=(d_1^e,d_2^e,\dotsc,d_T^e)$, where the decision functions $d_t^e:\Omega_e(t)\times\mathcal{A}_e\to[0,1]$ are defined to be
				\begin{equation}
					d_t^e(h^t)(a_t)\coloneqq\Pr[A_e(t)=a_t|H(t)=h^t]
				\end{equation}
				for all $1\leq t\leq T$, all histories $h^t\in\Omega_e(t)$, and all $a_t\in\mathcal{A}_e$.~\defqed
		\end{itemize}
	\end{definition}
	\smallskip
	\begin{remark}
		A couple of remarks about Definition~\ref{def-QDP_elem_link_distill} are in order.
		\begin{itemize}
			\item First, let us count the number of actions in the set $\mathcal{A}_e$. When no distillation is performed, there are $2^{ c(e)}$ actions, corresponding to waiting or requesting for every individual parallel elementary link individually. When distillation is performed, the agent can choose the number $2\leq j\leq c(e)$ of parallel elementary links to be distilled, and the for remaining $ c(e)-j$ parallel links it chooses an element of $\{0,1\}^{ c(e)-j}$. Therefore, the total number of actions is
				\begin{equation}
					|\mathcal{A}_e|=\underbrace{2^{ c(e)}}_{\substack{\text{no}\\\text{distillation}}}+\sum_{j=2}^{ c(e)}\underbrace{\binom{ c(e)}{j}2^{ c(e)-j}}_{\substack{\text{distillation of}\\2\leq j\leq c(e)\\\text{parallel}\\\text{elementary}\\\text{links}}}(j-1) = 2^{ c(e)+1}+( c(e)-3)3^{ c(e)-1}.
				\end{equation}
				
			\item The reward scheme given in Definition~\ref{def-QDP_elem_link_distill} provides a non-zero reward at time $t$ if and only if all of the parallel elementary links at time $t+1$ are active. One simple alternative to this is to define the reward scheme by looking at the statuses of particular subsets of parallel elementary links. To do this, let $\psi_{e^j}^{j;\text{target}}$ be a $k$-partite target state for the $j^{\text{th}}$ parallel elementary link, for $1\leq j\leq c(e)$. Also, let
				\begin{equation}\label{eq-network_QDP_w_distill_reward_alt1}
					\Lambda_{e^j}^{1}\coloneqq\psi_{e^j}^{j;\text{target}},\quad \Lambda_{e^j}^{0}\coloneqq\mathbbm{1}_{e^j}-\psi_{e^j}^{j;\text{target}},
				\end{equation}
				and for $\vec{s}\coloneqq(s^1,s^2,\dotsc,s^{ c(e)})\in\{0,1\}^{ c(e)}$, let
				\begin{equation}\label{eq-network_QDP_w_distill_reward_alt2}
					\Lambda_e^{\vec{s}}\coloneqq\Lambda_{e^1}^{s^1}\otimes\Lambda_{e^2}^{s^2}\otimes\dotsb\otimes\Lambda_{e^{ c(e)}}^{s^{ c(e)}}.
				\end{equation}
				Then, we define the reward maps as
				\begin{equation}\label{eq-network_QDP_w_distill_reward_alt3}
					\mathcal{R}_e^{t;h^{t+1},\vec{s}_t}(\cdot)\coloneqq\Lambda_e^{\vec{s}_t}(\cdot)\Lambda_e^{\vec{s}_t}
				\end{equation}
				for all $t\geq 1$, all $h^{t+1}\in\Omega_e(t+1)$, and all $\vec{s}_t\in\{0,1\}^{ c(e)}$. We define the associated functions $R_e(t):\Omega_e(t+1)\times\{0,1\}^{ c(e)}\to\mathbb{R}$ as follows:
				\begin{equation}\label{eq-network_QDP_w_distill_reward_alt4}
					R_e(t)(h^{t+1},\vec{s}_t)\coloneqq \frac{1}{ c(e)}\sum_{j=1}^{ c(e)} \delta_{s_t^j,1}\delta_{x_{t+1}^j,1}.
				\end{equation}			
				The reward is thus the average of the rewards for the individual parallel elementary links.~\defqed
		\end{itemize}
	\end{remark}
	
	In order to make Definition~\ref{def-QDP_elem_link_distill} more concrete, particularly in terms of the transition maps, let us look at examples with two and three parallel elementary links.
	
	\begin{example}
		In the case of two parallel elementary links, the elements in Definition~\ref{def-QDP_elem_link_distill} are given as follows.
		\begin{align}
			\mathcal{X}_e&=\{(0,0),(0,1),(1,0),(1,1)\},\\
			\mathcal{A}_e&=\{(0;(0,0)),(0;(0,1)),(0;(1,0)),(0;(1,1)),(1;2,1,\{1,2\},\varnothing)\}.
		\end{align}
		We see that with two parallel elementary links there are there are five possible actions. Then, for all $\vec{x}_t,\vec{x}_{t+1}\in\mathcal{X}_e$, the transition maps in \eqref{eq-network_QDP_w_distill_trans1} corresponding to no distillation are:
		\begin{align}
			\mathcal{T}_{e;e^1,e^2}^{\vec{x}_t,(0;(0,0)),\vec{x}_{t+1}}&=\mathcal{T}_{e^1}^{x_t^1,0,x_{t+1}^1}\otimes\mathcal{T}_{e^2}^{x_t^2,0,x_{t+1}^2},\\
			\mathcal{T}_{e;e^1,e^2}^{\vec{x}_t,(0;(0,1)),\vec{x}_{t+1}}&=\mathcal{T}_{e^1}^{x_t^1,0,x_{t+1}^1}\otimes\mathcal{T}_{e^2}^{x_t^2,1,x_{t+1}^2},\\
			\mathcal{T}_{e;e^1,e^2}^{\vec{x}_t,(0;(1,0)),\vec{x}_{t+1}}&=\mathcal{T}_{e^1}^{x_t^1,1,x_{t+1}^1}\otimes\mathcal{T}_{e^2}^{x_t^2,0,x_{t+1}^2},\\
			\mathcal{T}_{e;e^1,e^2}^{\vec{x}_t,(0;(1,1)),\vec{x}_{t+1}}&=\mathcal{T}_{e^1}^{x_t^1,1,x_{t+1}^1}\otimes\mathcal{T}_{e^2}^{x_t^2,1,x_{t+1}^2}.
		\end{align}
		For the action corresponding to distillation of the two parallel elementary links, the transition maps in \eqref{eq-network_QDP_w_distill_trans2} are:
		\begin{equation}
			\mathcal{T}_{e;e^1,e^2}^{\vec{x}_t,(1;2,1,\{1,2\},\varnothing),\vec{x}_{t+1}}=\mathcal{T}_{e^1}^{x_t^1,0,x_{t+1}^1}\otimes\mathcal{T}_{e^2}^{x_t^2,0,x_{t+1}^2}
		\end{equation}
		for all $\vec{x}_t\neq(1,1)$ and all $\vec{x}_{t+1}\in\mathcal{X}_e$. When $\vec{x}_t=(1,1)$, we have
		\begin{align}
			\mathcal{T}_{e;e^1,e^2}^{(1,1),(1;2,1,\{1,2\},\varnothing),(0,0)}&=\mathcal{D}_{e^1e^2\to e^1}^{e;0}\otimes \widetilde{\mathcal{S}}_{e^2}^{0}(\rho_{e^2}^S),\\
			\mathcal{T}_{e;e^1,e^2}^{(1,1),(1;2,1,\{1,2\},\varnothing),(0,1)}&=\mathcal{D}_{e^1e^2\to e^1}^{e;0}\otimes\widetilde{\mathcal{S}}_{e^2}^{1}(\rho_{e^2}^S),\\
			\mathcal{T}_{e;e^1,e^2}^{(1,1),(1;2,1,\{1,2\},\varnothing),(1,0)}&=\mathcal{D}_{e^1e^2\to e^1}^{e;1}\otimes\widetilde{\mathcal{S}}_{e^2}^{0}(\rho_{e^2}^S),\\
			\mathcal{T}_{e;e^1,e^2}^{(1,1),(1;2,1,\{1,2\},\varnothing),(1,1)}&=\mathcal{D}_{e^1e^2\to e^1}^{e;1}\otimes\widetilde{\mathcal{S}}_{e^2}^{1}(\rho_{e^2}^S).~\defqedspec
		\end{align}
	\end{example}
	
	\begin{example}
		In the case of three parallel elementary links, the elements in Definition~\ref{def-QDP_elem_link_distill} are given as follows.
		\begin{align}
			\mathcal{X}_e&=\{0,1\}^3,\\
			\mathcal{A}_e&=\{(0;\vec{a}):\vec{a}\in\{0,1\}^3\}\cup\{(1;2,1,\{1,2\},0),(1;2,1,\{1,2\},1),\\
			&\qquad\qquad\qquad\qquad\qquad\,\, (1;2,1,\{1,3\},0),(1;2,1,\{1,3\},1),\\
			&\qquad\qquad\qquad\qquad\qquad\,\, (1;2,1,\{2,3\},0),(1;2,1,\{2,3\},1),\\
			&\qquad\qquad\qquad\qquad\qquad\,\, (1;3,1,\{1,2,3\},\varnothing),(1;3,2,\{1,2,3\},\varnothing)\}.
		\end{align}
		Then, the transition maps for the distillation actions are
		\begin{align}
			\mathcal{T}_{e;e^1,e^2,e^3}^{\vec{x}_t,(1;2,1,\{1,2\},a),\vec{x}_{t+1}}&=\mathcal{T}_{e^1}^{x_t^1,0,x_{t+1}^1}\otimes\mathcal{T}_{e^2}^{x_t^2,0,x_{t+1}^2}\otimes\mathcal{T}_{e^3}^{x_t^3,a,x_{t+1}^3}\quad\forall~\vec{x}_t\neq(1,1,x_t^3),\,\,x_t^3,a\in\{0,1\},\\[0.5cm]
			\mathcal{T}_{e;e^1,e^2,e^3}^{\vec{x}_t,(1;2,1,\{1,3\},a),\vec{x}_{t+1}}&=\mathcal{T}_{e^1}^{x_t^1,0,x_{t+1}^1}\otimes\mathcal{T}_{e^2}^{x_t^2,a,x_{t+1}^2}\otimes\mathcal{T}_{e^3}^{x_t^3,0,x_{t+1}^3}\quad\forall~\vec{x}_t\neq(1,x_t^2,1),\,\,x_t^2,a\in\{0,1\},\\[0.5cm]
			\mathcal{T}_{e;e^1,e^2,e^3}^{\vec{x}_t,(1;2,1,\{2,3\},a),\vec{x}_{t+1}}&=\mathcal{T}_{e^1}^{x_t^1,a,x_{t+1}^1}\otimes\mathcal{T}_{e^2}^{x_t^2,0,x_{t+1}^2}\otimes\mathcal{T}_{e^3}^{x_t^3,0,x_{t+1}^3}\quad\forall~\vec{x}_t\neq(x_t^1,1,1),\,\,x_t^1,a\in\{0,1\},\\[0.5cm]
			\mathcal{T}_{e;e^1,e^2,e^3}^{\vec{x}_t,(1;3,1,\{1,2,3\},a),\vec{x}_{t+1}}&=\mathcal{T}_{e^1}^{x_t^1,0,x_{t+1}^1}\otimes\mathcal{T}_{e^2}^{x_t^2,0,x_{t+1}^2}\otimes\mathcal{T}_{e^3}^{x_t^3,0,x_{t+1}^3}\nonumber\\
			&=\mathcal{T}_{e;e^1,e^2,e^3}^{\vec{x}_t,(1;3,2,\{1,2,3\},a),\vec{x}_{t+1}}\quad\forall~\vec{x}_t\neq(1,1,1),
		\end{align}
		and
		\begin{align}
			\mathcal{T}_{e;e^1,e^2,e^3}^{(1,1,x_t^3),(1;2,1,\{1,2\},a),\vec{x}_{t+1}}&=\mathcal{D}_{e^1e^2\to e^1}^{e;x_{t+1}^1}\otimes\widetilde{\mathcal{S}}_{e^2}^{x_{t+1}^2}(\rho_{e^2}^S)\otimes\mathcal{T}_{e^3}^{x_t^3,a,x_{t+1}^3}\quad\forall~\vec{x}_{t+1}\in\{0,1\}^3,\,\,x_t^3,a\in\{0,1\},\\[0.5cm]
			\mathcal{T}_{e;e^1,e^2,e^3}^{(1,x_t^2,1),(1;2,1,\{1,3\},a),\vec{x}_{t+1}}&=\mathcal{D}_{e^1e^3\to e^1}^{e;x_{t+1}^1}\otimes\mathcal{T}_{e^2}^{x_t^2,a,x_{t+1}^2}\otimes\widetilde{\mathcal{S}}_{e^3}^{x_{t+1}^3}(\rho_{e_3}^S)\quad\forall~\vec{x}_{t+1}\in\{0,1\}^3,\,\,x_t^2,a\in\{0,1\},\\[0.5cm]
			\mathcal{T}_{e;e^1,e^2,e^3}^{(x_t^1,1,1),(1;2,1,\{2,3\},a),\vec{x}_{t+1}}&=\mathcal{T}_{e^1}^{x_t^1,a,x_{t+1}^1}\otimes\mathcal{D}_{e^2e^3\to e^2}^{e;x_{t+1}^2}\otimes\widetilde{\mathcal{S}}_{e^3}^{x_{t+1}^3}(\rho_{e^3}^S)\quad\forall~\vec{x}_{t+1}\in\{0,1\}^3,\,\,x_t^1,a\in\{0,1\},\\[0.5cm]
			\mathcal{T}_{e;e^1,e^2,e^3}^{(1,1,1),(1;3,1,\{1,2,3\},\varnothing),\vec{x}_{t+1}}&=\mathcal{D}_{e^1e^2e^3\to e^1}^{e;x_{t+1}^1}\otimes\widetilde{\mathcal{S}}_{e^2}^{x_{t+1}^2}(\rho_{e^2}^S)\otimes\widetilde{\mathcal{S}}_{e^3}^{x_{t+1}^3}(\rho_{e^3}^S)\quad\forall~\vec{x}_{t+1}\in\{0,1\}^3,\\[0.5cm]
			\mathcal{T}_{e;e^1,e^2,e^3}^{(1,1,1),(1;3,2,\{1,2,3\},\varnothing),\vec{x}_{t+1}}&=\left\{\begin{array}{l l}\displaystyle \mathcal{D}^{e;1}_{e^1e^2e^3\to e^1e^2}\otimes\widetilde{\mathcal{S}}_{e^3}^{x_{t+1}^3}(\rho_{e^3}^S) & \text{if } x_{t+1}^1=x_{t+1}^2=1, \\[0.35cm] \displaystyle \mathcal{D}_{e^1e^2e^3\to e^1e^2}^{e;0}\otimes\widetilde{\mathcal{S}}_{e^3}^{x_{t+1}^3}(\rho_{e^3}^S) & \text{if } x_{t+1}^1=x_{t+1}^2=0. ~\defqedspec \end{array}\right.
		\end{align}
	\end{example}
	
	The expressions in \eqref{eq-network_QDP_cq_state}--\eqref{eq-network_QDP_history_prob} for the quantum decision process without entanglement distillation have analogous forms here when the quantum decision process incorporates entanglement distillation protocols. Specifically, for an arbitrary policy $\pi$
	\begin{equation}
		\widehat{\sigma}_{H_t^eE_t^e}(t)=\sum_{h^t\in\Omega_e(t)}\ket{h^t}\bra{h^t}_{H_t^e}\otimes\widetilde{\sigma}_{E_t^e}^{\pi}(t;h^t),
	\end{equation}
	and for an arbitrary history $h^t=(\vec{x}_1,a_1,\vec{x}_2,a_2,\dotsc,a_{t-1},\vec{x}_t)\in\Omega_e(t)$,
	\begin{align}
		\widetilde{\sigma}_{E_t^e}^{\pi}(t;h^t)&=\left(\prod_{j=1}^{t-1}d_j(h_j^t)(a_j)\right)\left(\mathcal{T}_{E_{t-1}^e\to E_t^e}^{\vec{x}_{t-1},a_{t-1},\vec{x}_1}\circ\dotsb\circ\mathcal{T}_{E_2^e\to E_3^e}^{\vec{x}_2,a_2,\vec{x}_3}\circ\mathcal{T}_{E_1^e\to E_2^e}^{\vec{x}_1,a_1,\vec{x}_2}\circ\mathcal{T}_{E_0^e\to E_1^e}^{0;\vec{x}_1}\right)(\rho_{E_0^e}^S),\\
		\sigma_{E_t^e}^{\pi}(t|h^t)&=\frac{\widetilde{\sigma}_{E_t^e}^{\pi}(t;h^t)}{\Tr[\widetilde{\sigma}_{E_t^e}(t;h^t)]},\\
		\Pr[H_e(t)=h^t]&=\Tr[\widetilde{\sigma}_{E_t^e}^{\pi}(t;h^t)].
	\end{align}

	Also, following the steps of the proof of Theorem~\ref{thm-network_QDP_exp_reward_no_distill}, we find the expected reward of the quantum decision process defined in Definition~\ref{def-QDP_elem_link_distill} is
	\begin{equation}
		\mathbb{E}[R_e(t)]_{\pi}=\Tr\left[\left(\ket{\vec{1}}\bra{\vec{1}}_{X_{t+1}^e}\otimes\psi_e^{\text{target}}\right)\widehat{\sigma}_e^{\pi}(t+1)\right]
	\end{equation}
	for all $e\in E$, $t\geq 1$, and all policies $\pi$, where $\psi_e^{\text{target}}$ is a $k c(e)$-partite target state for all of the parallel elementary links corresponding to $e$.
	
	For the alternative reward scheme defined in \eqref{eq-network_QDP_w_distill_reward_alt1}--\eqref{eq-network_QDP_w_distill_reward_alt4}, we have the following result.
	
	\begin{proposition}
		Let $G=(V,E,c)$ be the graph corresponding to the physical (elementary) links of a quantum network, and let $e\in E$ be arbitrary. Consider the reward scheme defined in \eqref{eq-network_QDP_w_distill_reward_alt1}--\eqref{eq-network_QDP_w_distill_reward_alt4} for the quantum decision process for $e$ as defined in Definition~\ref{def-QDP_elem_link_distill}. Then, for all $t\geq 1$ and for all policies $\pi$,
		\begin{equation}
			\mathbb{E}[R_e(t)]_{\pi}=\frac{1}{ c(e)}\sum_{j=1}^{ c(e)}\Tr\left[\left(\ket{1}\bra{1}_{X_{t+1}^{e^j}}\otimes\psi_{e^j}^{j;\text{target}}\right)\widehat{\sigma}_e^{\pi}(t+1)\right].
		\end{equation}
	\end{proposition}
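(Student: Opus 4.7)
The approach would be to apply the general expected-reward formula from \eqref{eq-QDP_exp_reward_formula_1} directly to the specific reward maps and reward function given in \eqref{eq-network_QDP_w_distill_reward_alt1}--\eqref{eq-network_QDP_w_distill_reward_alt4}, and to simplify by exploiting the projective structure of the operators $\Lambda_{e^i}^{s_t^i}$. First I would form the effective reward map
\begin{equation}
\widetilde{\mathcal{R}}_e^{t;h^{t+1}}(\cdot)=\sum_{\vec{s}_t\in\{0,1\}^{c(e)}}R_e(t)(h^{t+1},\vec{s}_t)\,\Lambda_e^{\vec{s}_t}(\cdot)\Lambda_e^{\vec{s}_t},
\end{equation}
substitute $R_e(t)(h^{t+1},\vec{s}_t)=\frac{1}{c(e)}\sum_{j=1}^{c(e)}\delta_{s_t^j,1}\delta_{x_{t+1}^j,1}$, and interchange the sums over $j$ and $\vec{s}_t$ to obtain
\begin{equation}
\widetilde{\mathcal{R}}_e^{t;h^{t+1}}(\cdot)=\frac{1}{c(e)}\sum_{j=1}^{c(e)}\delta_{x_{t+1}^j,1}\sum_{\vec{s}_t:s_t^j=1}\Lambda_e^{\vec{s}_t}(\cdot)\Lambda_e^{\vec{s}_t}.
\end{equation}

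The key observation is that each $\Lambda_{e^i}^{s_t^i}$ is a projector: $\Lambda_{e^i}^1=\psi_{e^i}^{i;\text{target}}$ is rank one because the target state is pure, and $\Lambda_{e^i}^0=\mathbbm{1}_{e^i}-\psi_{e^i}^{i;\text{target}}$ is then its orthogonal complement. Hence $\Lambda_e^{\vec{s}_t}$ is a tensor product of projectors and is itself idempotent, so that under the trace $\Tr[\Lambda_e^{\vec{s}_t}\sigma\Lambda_e^{\vec{s}_t}]=\Tr[\Lambda_e^{\vec{s}_t}\sigma]$ for any operator $\sigma$. I would then perform the partial sum over $\vec{s}_t$ with $s_t^j$ fixed to $1$: the variables $s_t^i$ for $i\neq j$ are unconstrained, and using completeness of the two-outcome measurement $\{\Lambda_{e^i}^0,\Lambda_{e^i}^1\}$ I get
\begin{equation}
\sum_{\vec{s}_t:s_t^j=1}\Lambda_e^{\vec{s}_t}=\psi_{e^j}^{j;\text{target}}\otimes\bigotimes_{i\neq j}(\Lambda_{e^i}^0+\Lambda_{e^i}^1)=\psi_{e^j}^{j;\text{target}}\otimes\bigotimes_{i\neq j}\mathbbm{1}_{e^i}.
\end{equation}

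Assembling these simplifications and invoking \eqref{eq-QDP_exp_reward_formula_1}, the expected reward becomes
\begin{equation}
\mathbb{E}[R_e(t)]_\pi=\frac{1}{c(e)}\sum_{j=1}^{c(e)}\sum_{\substack{h^{t+1}\in\Omega_e(t+1)\\x_{t+1}^j=1}}\Tr\!\left[\left(\psi_{e^j}^{j;\text{target}}\otimes\mathbbm{1}_{\overline{e^j}}\right)\widetilde{\sigma}_e^\pi(t+1;h^{t+1})\right],
\end{equation}
where $\overline{e^j}$ denotes the remaining parallel links. Finally I would rewrite the indicator $\delta_{x_{t+1}^j,1}$ as the projector $\ket{1}\bra{1}_{X_{t+1}^{e^j}}$ acting on the history register, and absorb the sum over $h^{t+1}$ back into the classical-quantum state $\widehat{\sigma}_e^\pi(t+1)=\sum_{h^{t+1}}\ket{h^{t+1}}\bra{h^{t+1}}\otimes\widetilde{\sigma}_e^\pi(t+1;h^{t+1})$, which yields the claimed formula.

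I do not expect any serious obstacle in this proof: the entire argument reduces to linearity of the trace, idempotency of tensor products of projectors, and completeness of the two-outcome projective measurement $\{\psi_{e^i}^{i;\text{target}},\mathbbm{1}_{e^i}-\psi_{e^i}^{i;\text{target}}\}$. The only mild subtlety will be careful bookkeeping of the classical history register when converting the per-history expressions into the compact form involving $\widehat{\sigma}_e^\pi(t+1)$, and being explicit that $\psi_{e^j}^{j;\text{target}}$ on the right-hand side is implicitly tensored with the identity on the remaining environment subsystems under the trace.
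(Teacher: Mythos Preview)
Your proposal is correct and follows essentially the same approach as the paper's proof: form $\widetilde{\mathcal{R}}_e^{t;h^{t+1}}$, substitute the reward function, use idempotency of the projectors $\Lambda_e^{\vec{s}_t}$ under the trace, collapse the unconstrained sums over $s_t^i$ for $i\neq j$ via completeness $\Lambda_{e^i}^0+\Lambda_{e^i}^1=\mathbbm{1}_{e^i}$, and then repackage the history sum into $\widehat{\sigma}_e^\pi(t+1)$ using the projector $\ket{1}\bra{1}_{X_{t+1}^{e^j}}$. The only cosmetic difference is that the paper keeps the conjugation $\Lambda_e^{\vec{s}_t}(\cdot)\Lambda_e^{\vec{s}_t}$ one step longer before invoking cyclicity and idempotency, whereas you apply it immediately; the logic is identical.
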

	
	\begin{proof}
		Using \eqref{eq-network_QDP_w_distill_reward_alt1}--\eqref{eq-network_QDP_w_distill_reward_alt4}, we have
		\begin{align}
			\widetilde{\mathcal{R}}_e^{t;h^{t+1}}(\cdot)&\coloneqq\sum_{\vec{s}_t\in\{0,1\}^{ c(e)}}R_e(t)(h^{t+1},\vec{s}_t)\mathcal{R}_e^{t;h^{t+1},\vec{s}_t}(\cdot)\\
			&=\frac{1}{ c(e)}\sum_{\vec{s}_t\in\{0,1\}^{ c(e)}}\sum_{j=1}^{ c(e)}\delta_{s_t^j,1}\delta_{x_{t+1}^j,1}\Lambda_e^{\vec{s}^t}(\cdot)\Lambda_{e}^{\vec{s}_t}\\
			&=\frac{1}{ c(e)}\sum_{j=1}^{ c(e)}\sum_{s_t^1,\dotsc,s_t^{ c(e)}=0}^1\delta_{s_t^j,1}\delta_{x_{t+1}^j,1}\left(\Lambda_{e^1}^{s_t^1}\otimes\dotsb\otimes\Lambda_{e^{ c(e)}}^{s_t^{ c(e)}}\right)(\cdot)\left(\Lambda_{e^1}^{s_t^1}\otimes\dotsb\otimes\Lambda_{e_{ c(e)}}^{s_t^e}\right).
		\end{align}
		Then, using the general expression in \eqref{eq-QDP_exp_reward_formula_1}, we obtain
		\begin{align}
			\mathbb{E}[R_e(t)]_{\pi}&=\sum_{h^{t+1}\in\Omega_e(t+1)}\Tr\left[\widetilde{\mathcal{R}}_e^{t;h^{t+1}}\left(\widetilde{\sigma}_e(t+1;h^{t+1})\right)\right]\\
			&=\sum_{h^{t+1}\in\Omega_e(t+1)}\frac{1}{ c(e)}\sum_{j=1}^{ c(e)}\sum_{s_t^1,\dotsc,s_t^{ c(e)}=0}^1\delta_{s_t^j,1}\delta_{x_{t+1}^j,1}\Tr\left[\left(\Lambda_{e^1}^{s_t^1}\otimes\dotsb\otimes\Lambda_{e^{ c(e)}}^{s_t^{ c(e)}}\right)\widetilde{\sigma}_e(t+1;h^{t+1})\right.\nonumber\\
			&\qquad\qquad\qquad\qquad\qquad\qquad\qquad\qquad\qquad\qquad\qquad\left.\left(\Lambda_{e^1}^{s_t^1}\otimes\dotsb\otimes\Lambda_{e^{ c(e)}}^{s_t^{ c(e)}}\right)\right]\\
			&=\sum_{h^{t+1}\in\Omega_e(t+1)}\frac{1}{ c(e)}\sum_{j=1}^{ c(e)}\sum_{s_t^1,\dotsc,s_t^{ c(e)}=0}^1\delta_{s_t^j,1}\delta_{x_{t+1}^j,1}\Tr\left[\left(\Lambda_{e^1}^{s_t^1}\otimes\dotsb\otimes\Lambda_{e^{ c(e)}}^{s_t^{ c(e)}}\right)\widetilde{\sigma}_e(t+1;h^{t+1})\right]
		\end{align}
		where the last line holds because of the cyclicity of the trace and because every $\Lambda_{e^j}^{s_t^j}$ is a projection, meaning that $\Lambda_{e^j}^{s_t^j}\Lambda_{e_j}^{s_t^j}=\Lambda_{e^j}^{s_j^j}$. Then, because
		\begin{equation}
			\sum_{s_t^j=0}^1\Lambda_{e^j}^{s_t^j}=\mathbbm{1}_{e^j},\quad\Lambda_{e^j}^1=\psi_{e^j}^{j;\text{target}},
		\end{equation}
		we obtain
		\begin{equation}
			\mathbb{E}[R_e(t)]_{\pi}=\frac{1}{ c(e)}\sum_{j=1}^{ c(e)}\sum_{h^{t+1}\in\Omega_e(t+1)}\delta_{x_{t+1}^j,1}\Tr\left[\psi_{e^j}^{j;\text{target}}\widetilde{\sigma}_e(t+1;h^{t+1})\right].
		\end{equation}
		Finally, because
		\begin{equation}
			\sum_{h^{t+1}\in\Omega_e(t+1)}\delta_{x_{t+1}^j,1}\Tr\left[\psi_{e^j}^{j;\text{target}}\widetilde{\sigma}_e(t+1;h^{t+1})\right]=\Tr\left[\left(\ket{1}\bra{1}_{X_{t+1}^{e^j}}\otimes\psi_{e^j}^{j;\text{target}}\right)\widehat{\sigma}_e(t+1)\right]
		\end{equation}
		for all $1\leq j\leq c(e)$, we obtain the desired result.
	\end{proof}

	\begin{remark}
		As we can see, by including entanglement distillation in the quantum decision process, analytical expressions and derivations quickly become cumbersome and unmanageable. However, it is straightforward to deal with the quantum decision process computationally by explicitly programming the behavior of the transition maps and then applying reinforcement learning algorithms to discover optimal policies.
	\end{remark}

\section{QDP with entanglement swapping}

	In addition to entanglement distillation, we can define a quantum decision process that incorporates entanglement swapping. The following is a definition of a simple such quantum decision process, in which we consider only one parallel elementary link per element $e\in E$ of the physical graph $G=(V,E,c)$.

	\begin{definition}[QDP for two elementary links with entanglement swapping]\label{def-QDP_ent_swap}
		Let $G=(V,E,c)$ be the multigraph corresponding to the physical (elementary) links of a quantum network, and let $e_1,e_2\in E$ be such that $c(e_1)=c(e_2)=1$, and both $e_1$ and $e_2$ correspond to bipartite elementary links. We define a quantum decision process for $e_1,e_2$ by defining the agent to be collectively the nodes belonging to $e_1$ and $e_2$, and we define its environment to be the four quantum systems (two systems for $e_1$ and two for $e_2$) that are distributed to the nodes by source stations. Then, the other elements of the quantum decision process are defined as follows.
		
		\begin{itemize}
			\item We denote the quantum systems corresponding to $e_1$ by $E_t^{e_1}$ for all times $t\geq 0$, the quantum systems corresponding to $e_2$ by $E_t^{e_2}$ for all times $t\geq 0$, and we let $E_t^{e_1,e_2}\equiv E_t^{e_1}E_t^{e_2}$ denote both collections of quantum systems at time $t\geq 0$. The state of the environment at time $t=0$ is $\rho_{E_0^{e_1,e_2}}^0=\rho_{E_0^{e_1}}^S\otimes\rho_{E_0^{e_2}}^S$, where $\rho_{E_0^{e_j}}^S$ is the source state for the $j^{\text{th}}$ elementary link, $j\in\{1,2\}$.
			
			\item We let $\mathcal{X}_{e_1,e_2}=\{0,1\}^3$ tell us whether or not the two elementary links corresponding to $e_1$ and $e_2$ are active. In other words, every string $\vec{x}=(x^1,x^2,x^3)\in\mathcal{X}_{e_1,e_2}$ is such that $x^j=0$ indicates that the $j^{\text{th}}$ elementary link is inactive and $x^j=1$ indicates that the $j^{\text{th}}$ elementary link is active, with $j\in\{1,2\}$. For $j=3$, $x^3=0$ indicates that the virtual link arising from joining the elementary links corresponding to $e_1$ and $e_2$ is inactive, while $x^3=1$ indicates that the virtual link is active. We then define random variables $X_{e_j}(t)$, $j\in\{1,2\}$ and $t\geq 1$, taking values in $\{0,1\}$ as follows:
				\begin{itemize}
					\item $X_{e_j}(t)=0$: the $j^{\text{th}}$ elementary link is inactive;
					\item $X_{e_j}(t)=1$: the $j^{\text{th}}$ elementary link is active.
				\end{itemize}
				Letting $e'$ denote the edge corresponding to the virtual link obtained by joining the elementary links corresponding to $e_1$ and $e_2$, we define $X_{e'}(t)$, $t\geq 1$, such that
				\begin{itemize}
					\item $X_{e'}(t)=0$: the virtual link $e'$ is inactive;
					\item $X_{e'}(t)=1$: the virtual link $e'$ is active.
				\end{itemize}

				We let 
				\begin{equation}
					\mathcal{A}_{e_1,e_2}=\{0,1\}^2\cup\{\Join\},
				\end{equation}
				where the symbol ``$\Join$'' indicates that the joining operation corresponding to the quantum channel $\mathcal{L}_{v_1e_1v_2e_2v_3\to e'}$ as defined in \eqref{eq-q_instr_channel_swapping} should be performed, where $e'$ is the new edge arising from the joining operation. We then define random variables $A_{e_1,e_2}(t)$, $t\geq 1$, that take values in $\mathcal{A}_{e_1,e_2}$.
				
				A history $H_{e_1,e_2}(t)$ is of the form
				\begin{equation}
					H_{e_1,e_2}(t)=(\vec{X}_{e_1,e_2}(1),A_{e_1,e_2}(1),\vec{X}_{e_1,e_2}(2),A_{e_1,e_2}(2),\dotsc A_{e_1,e_2}(t-1),\vec{X}_{e_1,e_2}(t)),
				\end{equation}
				where $\vec{X}_{e_1,e_2}(t)=(X_{e_1}(t),X_{e_2}(t),X_{e'}(t))$. The set of all histories is $\Omega_{e_1,e_2}(t)$, and every $h^t\in\Omega_{e_1,e_2}(t)$ is of the form $h^t=(\vec{x}_1,a_1,\vec{x}_2,a_2,\dotsc,a_{t-1},\vec{x}_t)$ for all $t\geq 1$, with $h^1=\vec{x}_1$.
			
			\item For $t=0$, the transition maps are defined as follows:
				\begin{equation}
					\mathcal{T}_{e_1,e_2}^{0;\vec{x}_1}\coloneqq\left(\mathcal{M}_{e_1}^{x_t^1}\circ\mathcal{S}_{e_1}\right)\otimes\left(\mathcal{M}_{e_2}^{x_1^2}\circ\mathcal{S}_{e_2}\right)
				\end{equation}
				for all $\vec{x}_1=(x_1^1,x_1^2,x_1^3)\in\{0,1\}^3$, where $\mathcal{S}_{e_j}$ is the source transmission channel for the $j^{\text{th}}$ elementary link and $\{\mathcal{M}_{e_j}^0,\mathcal{M}_{e_j}^1\}$ is the heralding quantum instrument for the $j^{\text{th}}$ elementary link, $j\in\{1,2\}$.
			
				For $t\geq 1$, we denote the transition maps by $\mathcal{T}_{e_1,e_2}^{\vec{x}_t,a_t,\vec{x}_{t+1}}$, with $\vec{x}_t,\vec{x}_{t+1}\in\mathcal{X}_{e_1,e_2}$ and $a_t\in\mathcal{A}_{e_1,e_2}$, and we define them as follows when $a_t\neq\Join$ and $a_t=(a_t^1,a_t^2)\in\{0,1\}^2$:
				\begin{align}
					\mathcal{T}_{e_1,e_2}^{\vec{x}_t,a_t,\vec{x}_{t+1}}=\left\{\begin{array}{l l}\mathcal{T}_{e_1}^{x_t^1,a_t^1,x_{t+1}^1}\otimes\mathcal{T}_{e_2}^{x_t^2,a_t^2,x_{t+1}^2}\otimes\mathcal{T}_{e'}^{x_t^3,0,x_{t+1}^3} & \text{if }x_{t}^3=x_{t+1}^3=1,\\[0.5cm]
					\mathcal{T}_{e_1}^{x_t^1,a_t^1,x_{t+1}^1}\otimes\mathcal{T}_{e_2}^{x_t^2,a_t^2,x_{t+1}^2} & \text{otherwise}, \end{array}\right. 
				\end{align}
				where the maps $\mathcal{T}_{e^i}^{x_t^i,a_t^i,x_{t+1}^i}$ are defined in Definition~\ref{def-network_QDP_elem_link}. In other words, if the action is not to join the elementary links, and if the virtual link is active at the $t^{\text{th}}$ time step, then the individual links evolve independently, exactly as they do in the quantum decision process for elementary links in Definition~\ref{def-network_QDP_elem_link}. If the action is to join the elementary links, then
				\begin{align}
					\mathcal{T}_{e_1,e_2}^{\vec{x}_t,\Join,\vec{x}_{t+1}}&=\left\{\begin{array}{l l} \mathcal{T}_{e_1}^{x_t^1,0,x_{t+1}^1}\otimes\mathcal{T}_{e_2}^{x_t^2,0,x_{t+1}^2}\otimes\mathcal{T}_{e'}^{x_t^3,0,x_{t+1}^3} & \text{if }x_t^1x_t^2=0\text{ and }x_t^3=x_{t+1}^3=1,\\[0.5cm] \mathcal{T}_{e_1}^{x_t^1,0,x_{t+1}^1}\otimes\mathcal{T}_{e_2}^{x_t^2,0,x_{t+1}^2} & \text{if }x_t^1x_t^2=0 \text{ and }x_t^3=x_{t+1}^3=0,\\[0.5cm]
					\widetilde{\mathcal{S}}^{x_{t+1}^1}(\rho_{e_1}^S)\otimes\widetilde{\mathcal{S}}^{x_{t+1}^2}(\rho_{e_2}^S)\otimes\mathcal{L}_{v_1e_1v_2e_2v_3\to e'}^{x_{t+1}^3} & \text{if }x_t^1=x_t^2=1, \end{array}\right.
				\end{align}
				where $\widetilde{\mathcal{S}}_{e_j}^x=\mathcal{M}_{e_j}^x\circ\mathcal{S}_{e_j}$, $j\in\{1,2\}$.
				
			\item Given a target bipartite state $\psi_{e'}^{\text{target}}=\ket{\psi^{\text{target}}}\bra{\psi^{\text{target}}}_{e'}$ for the virtual link $e'$, the reward at time $t\geq 1$ is defined as follows:
				\begin{align}
					\mathcal{R}_{e_1,e_2}^{t;h^{t+1},1}(\cdot)&=\left\{\begin{array}{l l} \id_{e_1}\otimes\id_{e_2} & \text{if } x_{t+1}^3=0, \\[0.5cm] \psi_{e'}^{\text{target}}(\cdot)\psi_{e'}^{\text{target}} & \text{if }x_{t+1}^3=1,\end{array}\right.\\[1cm]
					\mathcal{R}_{e_1,e_2}^{t;h^{t+1},0}(\cdot)&=\left\{\begin{array}{l l} 0 & \text{if }x_{t+1}^3=0, \\[0.5cm] (\mathbbm{1}_{e'}-\psi_{e'}^{\text{target}})(\cdot)(\mathbbm{1}_{e'}-\psi_{e'}^{\text{target}}) & \text{if }x_{t+1}^3=1, \end{array}\right.
				\end{align}
				for all $h^{t+1}\in\Omega_{e_1,e_2}(t+1)$, and the functions $R_{e_1,e_2}(t):\Omega_{e_1,e_2}(t)\times\{0,1\}\to\mathbb{R}$ are defined as follows:
				\begin{align}
					R_{e_1,e_2}(t)(h^{t+1},0)&=0,\\
					R_{e_1,e_2}(t)(h^{t+1},1)&=\delta_{x_{t+1}^3,1},
				\end{align}
				for all $h^{t+1}\in\Omega_{e_1,e_2}(t+1)$. In other words, the reward is based on the fidelity of the quantum state of the virtual link to the target state $\psi_{e'}^{\text{target}}$. Note that if the virtual link is not active at time $t+1$, then the reward is zero and the quantum systems of the elementary links evolve according to the identity channel.
			
			\item  A $T$-step policy for the agent is a sequence of the form $\pi_T^{e_1,e_2}=(d_1^{e_1,e_2},d_2^{e_1,e_2},\dotsc,d_T^{e_1,e_2})$, where the decision functions $d_t^{e_1,e_2}:\Omega_{e_1,e_2}\times\mathcal{A}_{e_1,e_2}\to[0,1]$ are defined to be
				\begin{equation}
					d_t^{e_1,e_2}(h^t)(a_t)=\Pr[A_{e_1,e_2}(t)=a_t|H(t)=h^t]
				\end{equation}
				for all $1\leq t\leq T$, all histories $h^t\in\Omega_{e_1,e_2}(t)$, and all $a_t\in\mathcal{A}_{e_1,e_2}$.~\defqed
		\end{itemize}
	\end{definition}
	
	As with the quantum decision process for entanglement distillation in Definition~\ref{def-QDP_elem_link_distill}, there are other ways of defining the reward scheme. We leave this investigation, as well as investigations of the properties of this quantum decision process, to future work. Also, as with the quantum decision process defined in Definition~\ref{def-QDP_elem_link_distill}, analytical expressions and derivations can be cumbersome and unmanageable for the quantum decision process defined in Definition~\ref{def-QDP_ent_swap}; however, it is straightforward to deal with the quantum decision process computationally by explicitly programming the behavior of the transition maps and then applying reinforcement learning algorithms to discover optimal policies.

\section{Cooperating agents}

	The quantum decision process with entanglement swapping in Definition~\ref{def-QDP_ent_swap} is a simple example of a quantum decision process with multiple cooperating agents. When we say that agents ``cooperate'', we mean that the agents are allowed to communicate with each other. When the agents are purely classical, this means that they can communicate arbitrary classical information to each other, which then means that the agents have access to each other's histories. Consequently, multiple cooperating agents can be combined into a single ``superagent'', which is essentially what was done in Definition~\ref{def-QDP_ent_swap}. In the context of quantum networks, agents who cooperate have more than the knowledge of their own nodes. If every agent cooperates with an agent corresponding to a neighboring elementary link, then we can extend the definitions of the quantum decision processes in Definitions~\ref{def-network_QDP_elem_link}, \ref{def-QDP_elem_link_distill}, and \ref{def-QDP_ent_swap} to take multiple agents into account, and with these definitions we can define an entanglement distribution protocol that is more sophisticated than the one we consider in Chapter~\ref{chap-network_QDP}, which is summarized in Figure~\ref{fig-QDP_protocol}. In particular, in such a protocol, the agents would have knowledge of the network in their local vicinity, and this would in principle improve waiting times and rates for entanglement distribution. Furthermore, the quantum state of the network would not be a simple tensor product of the quantum states corresponding to the individual agents, as we have in \eqref{eq-network_cq_state_QDP} when all the agents are independent. See Refs.~\cite{PKT+19,CRDW19} for a discussion of nodes with local and global knowledge of a quantum network in the context of routing.
	
	Let us now briefly discuss cooperating agents in quantum decision processes more generally, i.e., outside the quantum network context. Recall from Definition~\ref{def-QDP} that in general the agents of a quantum decision process can be quantum systems. It is possible to extend the definition to include multiple agents, which may or may not cooperate. Such an extension would constitute a quantum generalization of stochastic games \cite{Shap53,SV15,LS2017}, which themselves are multi-agent generalizations of (classical) Markov decision processes \cite{Tan93,HW98,BBS08,FV97_book}. When the agents are described quantum mechanically, there are different ways in which we can define the word ``cooperate''. For example, ``cooperate'' could mean that the agents start with some prior shared entanglement, and thereafter communicate only classically. Another possibility is that the agents are allowed to communicate quantum information to each other in addition to classical information. Exploring these possibilities is an interesting avenue for future work and is of interest even outside the context of quantum networks.

\section{Reinforcement learning algorithms for optimal policies}

	We have mentioned that the backward recursion algorithm for finding an optimal policy for finite-horizon quantum decision processes is exponentially slow in the horizon time; see Section~\ref{sec-QDP_pol_opt}. One common solution to this is to use reinforcement learning algorithms, which are often more efficient and can find optimal (generally sub-optimal) policies. Since the agents in the quantum decision processes defined in Definitions~\ref{def-network_QDP_elem_link}, \ref{def-QDP_elem_link_distill}, and \ref{def-QDP_ent_swap} are classical, it is possible to use standard reinforcement learning algorithms directly to find optimal policies; see Ref.~\cite{Sut18_book} for an introduction to reinforcement learning algorithms.

	Let us also point out that, in the backward recursion algorithm, we assume complete knowledge of the environment, in the sense that the transition maps and other elements of the environment are known to the agent. The problem of finding an optimal policy thus reduces essentially to a \textit{dynamic programming} problem, also sometimes called a \textit{planning} problem. However, in general, a reinforcement learning algorithm is based on the agent acting in the environment in real time, without necessarily having prior knowledge of the environment. As such, the agent must first learn about the environment (hence the ``learning'' in reinforcement learning), and reinforcement learning algorithms typically involve an exploration stage in which the agent attempts to first learn about the environment by employing different combinations of actions. The agent then uses its accumulated knowledge of the environment to develop an optimal policy.

\section{Quantum algorithms for decision processes and reinforcement learning}

	Instead of using standard (classical) reinforcement learning algorithms, as described in the previous section, another possibility that has gained attention recently is to develop a \textit{quantum algorithm}, i.e., an algorithm that would run on a quantum computer, for finding an optimal policy. In this section, we provide a brief review of work that has been done in this direction.
	
	An initial proposal for solving (classical) partially observable Markov decision processes with a quantum algorithm was made in Ref.~\cite{RMS04}. Then, in Ref.~\cite{DCLT08}, quantum algorithms analogous to the well-known temporal difference (TD) learning and Q-learning (classical) reinforcement learning algorithms were provided, which essentially make use of Grover's algorithm as a way of searching through the space of all possible actions of the agent in order to find the action sequence with the highest reward. Similar ideas have been used in subsequent work to develop quantum algorithms for reinforcement learning problems \cite{PDM+14,DTB16,DTB15,DLWT17,Corne18,WYLC20,HDW20} (see also Refs.~\cite{DB17,DTB17} for reviews). Finding an optimal policy for a given decision process can be thought of as special kind of dynamic programming problem. Quantum algorithms for solving such problems have been considered in Refs.~\cite{Ron19,SNSW20}.




\end{appendices}

{\pagestyle{plain}
\bibliographystyle{unsrtnat}
\bibliography{refs}
\cleardoublepage\phantomsection}

\newpage 

\pagestyle{plain}
\chapter*{VITA}
\addcontentsline{toc}{chapter}{VITA}

	Sumeet Khatri was born in May 1991. He attended high school in Toronto, Canada at Weston Collegiate Institute, where he graduated from the International Baccalaureate Programme in 2009. He then attended the University of Waterloo in Waterloo, Canada for undergraduate studies. In 2014, he earned his Bachelor of Science degree (with honors) in Mathematical Physics, with a specialization in Astrophysics and a minor in Pure Mathematics. In 2016, he earned his Master of Science degree in Physics, also at the University of Waterloo, under the supervision of Norbert L\"{u}tkenhaus. In January 2017, he started is Ph.D. in Physics at Louisiana State University under the supervision of Mark M. Wilde and Jonathan P. Dowling. He anticipates graduating in May 2021.

\end{document}